\newcommand{\lsp}{\hspace{0.1em}}
\let\newfloat\newfloat@ltx
\def\identity{\leavevmode\hbox{\small1\kern-3.8pt\normalsize1}}
\newtheorem{theorem}{Theorem}
\newtheorem{lemma}{Lemma}
\newtheorem{conjecture}{Conjecture}
\newtheorem{proposition}{Proposition}
\newtheorem{corollary}{Corollary}
\newcommand{\Cl}{\mathrm{Cl}}
\newcommand{\oCl}{\overline{\mathrm{Cl}}}
\newcommand{\Ob}{\mathfrak{O}}
\newcommand{\defe}{\mathrm{def}}
\newcommand{\even}{\mathrm{even}}
\newcommand{\odd}{\mathrm{odd}}
\newcommand{\sym}{\mathrm{sym}}
\newcommand{\Sym}{\mathrm{Sym}}
\newcommand{\iso}{\mathrm{iso}}
\newcommand{\LD}{\mathrm{LD}}
\newcommand{\RD}{\mathrm{RD}}
\newcommand{\Sp}{\mathrm{Sp}}
\newcommand{\ASp}{\mathrm{ASp}}
\newcommand{\imply}{\Rightarrow}
\newcommand{\ind}{\operatorname{ind}}
\newcommand{\mmod}{\!\mod}
\newcommand{\rk}{\operatorname{rank}}
\newcommand{\spa}{\operatorname{span}}
\newcommand{\id}{\mathrm{id}}
\newcommand{\supp}{\operatorname{supp}}
\newcommand{\Stab}{\mathrm{Stab}}
\newcommand{\totimes}{\tilde{\otimes}}
\newcommand{\sh}{\mathrm{sh}}
\newcommand{\qb}{\mathrm{QB}}
\newcommand{\QB}{\mathrm{QB}}
\newcommand{\ns}{\mathrm{ns}}
\newcommand{\orb}{\mathrm{orb}}
\newcommand{\can}{\mathrm{can}}
\newcommand{\bbE}{\mathbb{E}}
\newcommand{\bbF}{\mathbb{F}}
\newcommand{\bbI}{\mathbb{I}}
\newcommand{\bbN}{\mathbb{N}}
\newcommand{\bbR}{\mathbb{R}}
\newcommand{\bbT}{\mathbb{T}}
\newcommand{\bbZ}{\mathbb{Z}}
\newcommand{\bfg}{\mathbf{g}}
\newcommand{\bfp}{\mathbf{p}}
\newcommand{\bfq}{\mathbf{q}}
\newcommand{\bfu}{\mathbf{u}}
\newcommand{\bfv}{\mathbf{v}}
\newcommand{\bfx}{\mathbf{x}}
\newcommand{\bfy}{\mathbf{y}}
\newcommand{\bfz}{\mathbf{z}}
\newcommand{\bQ}{\bar{Q}}
\newcommand{\bPhi}{\bar{\Phi}}
\newcommand{\bcaQ}{\bar{\mathcal{Q}}}
\newcommand{\caC}{\mathcal{C}}
\newcommand{\caD}{\mathcal{D}}
\newcommand{\caH}{\mathcal{H}}
\newcommand{\caL}{\mathcal{L}}
\newcommand{\caM}{\mathcal{M}}
\newcommand{\caN}{\mathcal{N}}
\newcommand{\caO}{\mathcal{O}}
\newcommand{\caQ}{\mathcal{Q}}
\newcommand{\caR}{\mathcal{R}}
\newcommand{\caS}{\mathcal{S}}
\newcommand{\caT}{\mathcal{T}}
\newcommand{\caV}{\mathcal{V}}
\newcommand{\caW}{\mathcal{W}}
\newcommand{\hka}{\hat{\kappa}}
\newcommand{\rmd}{\mathrm{d}}
\newcommand{\rme}{\operatorname{e}}
\newcommand{\rmi}{\mathrm{i}}
\newcommand{\rmB}{\mathrm{B}}
\newcommand{\rmH}{\mathrm{H}}
\newcommand{\rmU}{\mathrm{U}}
\newcommand{\scrA}{\mathscr{A}}
\newcommand{\scrC}{\mathscr{C}}
\newcommand{\scrE}{\mathscr{E}}
\newcommand{\scrM}{\mathscr{M}}
\newcommand{\scrS}{\mathscr{S}}
\newcommand{\scrT}{\mathscr{T}}
\newcommand{\scrV}{\mathscr{V}}
\newcommand{\tf}{\tilde{f}}
\newcommand{\tbfx}{\tilde{\mathbf{x}}}
\newcommand{\tbfy}{\tilde{\mathbf{y}}}
\newcommand{\tx}{\tilde{x}}
\newcommand{\ty}{\tilde{y}}
\newcommand{\tg}{\tilde{g}}
\newcommand{\tR}{\tilde{R}}
\newcommand{\tka}{\tilde{\kappa}}
\newcommand{\tomega}{\tilde{\omega}}
\newcommand{\tmu}{\tilde{\mu}}
\newcommand{\tDelta}{{\tilde{\Delta}}}
\newcommand{\tPsi}{\tilde{\Psi}}
\newcommand{\tcaN}{\tilde{\mathcal{N}}}
\newcommand{\tcaT}{\tilde{\mathcal{T}}}
\newcommand{\tscrM}{\tilde{\mathscr{M}}}
\newcommand{\tscrP}{\tilde{\mathscr{P}}}
\renewcommand{\epsilon}{\varepsilon}
\newcommand{\vertiii}[1]{{\left\vert\kern-0.25ex\left\vert\kern-0.25ex\left\vert #1 
		\right\vert\kern-0.25ex\right\vert\kern-0.25ex\right\vert}}
\newcommand{\lref}[1]{Lemma~\ref{#1}}
\newcommand{\lsref}[1]{Lemmas~\ref{#1}}
\newcommand{\Lref}[1]{Lemma~\ref{#1}}
\newcommand{\Lsref}[1]{Lemmas~\ref{#1}}
\newcommand{\thref}[1]{Theorem~\ref{#1}}
\newcommand{\thsref}[1]{Theorems~\ref{#1}}
\newcommand{\Thref}[1]{Theorem~\ref{#1}}
\newcommand{\Thsref}[1]{Theorems~\ref{#1}}
\newcommand{\pref}[1]{Proposition~\ref{#1}}
\newcommand{\psref}[1]{Propositions~\ref{#1}}
\newcommand{\Pref}[1]{Proposition~\ref{#1}}
\newcommand{\coref}[1]{Corollary~\ref{#1}}
\newcommand{\cosref}[1]{Corollaries~\ref{#1}}
\newcommand{\cref}[1]{Conjecture~\ref{#1}}
\newcommand{\Cref}[1]{Conjecture~\ref{#1}}
\def\eqref#1{\textup{(\ref{#1})}}
\newcommand{\eref}[1]{Eq.~\textup{(\ref{#1})}}
\newcommand{\eqsref}[2]{Eqs.~(\ref{#1}) and (\ref{#2})}
\newcommand{\eqssref}[3]{Eqs.~(\ref{#1}),  (\ref{#2}), and (\ref{#3})}
\newcommand{\Eref}[1]{Equation~\textup{(\ref{#1})}}
\newcommand{\Eqsref}[2]{Equations~(\ref{#1}) and (\ref{#2})}
\newcommand{\tref}[1]{Table~\ref{#1}}
\newcommand{\tsref}[1]{Tables~\ref{#1}}
\newcommand{\sref}[1]{Sec.~\ref{#1}}
\newcommand{\ssref}[1]{Secs.~\ref{#1}}
\newcommand{\fref}[1]{Fig.~\ref{#1}}
\newcommand{\Fref}[1]{Figure~\ref{#1}}
\newcommand{\fsref}[1]{Figs.~\ref{#1}}
\newcommand{\Fsref}[1]{Figures~\ref{#1}}
\newcommand{\aref}[1]{Appendix~\ref{#1}}
\def\<{\langle}  
\def\>{\rangle}  
\newcommand{\rcite}[1]{Ref.~\cite{#1}}
\newcommand{\rscite}[1]{Refs.~\cite{#1}}
\begin{document}
	\title{Third moments of qudit Clifford orbits and 3-designs based on magic orbits}

\author{Huangjun Zhu}

\email{zhuhuangjun@fudan.edu.cn}

\author{Chengsi Mao}

\author{Changhao Yi}

\affiliation{State Key Laboratory of Surface Physics, Department of Physics, and Center for Field Theory and Particle Physics, Fudan University, Shanghai 200433, China}

\affiliation{Institute for Nanoelectronic Devices and Quantum Computing, Fudan University, Shanghai 200433, China}

\affiliation{Shanghai Research Center for Quantum Sciences, Shanghai 201315, China}

\date{\today}
	
\begin{abstract}
When the local dimension $d$ is an odd prime, the qudit Clifford group is only a 2-design, but not a 3-design, unlike the qubit counterpart. This distinction and its extension to Clifford orbits have profound implications for many applications in quantum information processing. In this work we systematically delve into  general qudit Clifford orbits with a focus on  the third moments and potential applications in shadow estimation. 
First,  we introduce the shadow norm to quantify the deviations of Clifford orbits from 3-designs and clarify its properties. Then, we show that  the third normalized frame potential and shadow norm are both $\caO(d)$ for any Clifford orbit, including the orbit of stabilizer states, although the operator norm of the third normalized moment operator may increase exponentially with the number $n$ of qudits when $d\neq 2\mmod 3$.  Moreover, we prove that the shadow norm of any magic orbit is upper bounded by the constant $15/2$, so a single magic gate can already eliminate the $\caO(d)$ overhead in qudit shadow estimation and bridge the gap between qudit systems and qubit systems. Furthermore, we propose simple recipes for constructing approximate and exact 3-designs (with respect to three figures of merit simultaneously) from  one or a few Clifford orbits. Notably, accurate approximate 3-designs 
can be constructed from only two Clifford orbits. For an infinite family of local dimensions, exact 3-designs can be constructed from  two or four Clifford orbits. In the course of study, we clarify the key properties of the commutant of the third Clifford tensor power and the underlying mathematical structures. 
\end{abstract}

\maketitle
\tableofcontents

\section{Introduction}
Random quantum states  are ubiquitous tools in quantum information processing. However, in many applications, it is unrealistic to produce Haar random quantum states and partial derandomization is indispensable. Fortunately, 
quite often it suffices to use finite ensembles of quantum states that can match Haar random states up to  the first $t$ moments for some positive integer $t$, which leads to the concept of  \emph{complex projective $t$-designs}   \cite{DelsGS77,Hogg82,Zaun11,ReneBSC04, Scot06,AmbaE07,ZhuKGG16}. Intuitively, a $t$-design is a (weighted) set of vectors on the complex unit sphere in a finite-dimensional Hilbert space $\caH$ that are  "evenly distributed", and a large value of $t$ means a high degree of uniformity. \emph{Unitary $t$-designs} are a natural generalization of complex projective $t$-designs to the unitary group \cite{Dank05the, GrosAE07,DankCEL09,RoyS09}. They can replace Haar random unitaries in numerous applications and are equally useful in various research areas.

In the past two decades, complex projective designs and unitary design have found increasing applications in quantum information processing,  including randomized benchmarking \cite{KnilLRB08,MageGE11}, quantum  tomography \cite{HayaHH05,Scot06,Scot08, ZhuE11,Zhu12the,Zhu14IOC}, quantum verification \cite{ZhuH19O,LiHZ19,LiHZ20,ZhuZ20}, quantum data hiding and cryptography \cite{DiViLT02,AmbaBW09,MattWW09},  decoupling \cite{AbeyDHW09,SzehDTR13}, circuit complexity \cite{HarrL09,BranH13,BranHH16L,HafeFKE22}, and scrambling \cite{HaydP07,RobeY17,LiuLZZ18L}. One of the most important applications that motivates the current study is   shadow estimation  \cite{HuanKP20}, which is a powerful tool in quantum characterization and verification \cite{EiseHWR20,CarrEKK21,KlieR21,YuSG22,MorrSGD22,ElbeFHK23}. Shadow estimation is particularly efficient for estimating the fidelity between the actual state and the target state when the underlying measurement ensemble is constructed from  a 3-design, in which case the sample complexity is independent of the system size. 

The Clifford group is one of the most important groups in quantum information processing \cite{BoltRW61I, BoltRW61II,Gott97the,BengZ17book}, with applications in numerous research areas, including quantum computation \cite{Gott97the,Gott99,GottC99,BravK05,NielC10book}, quantum error correction \cite{Gott97the,Terh15,NielC10book}, quantum simulation \cite{BravG16,BravBCC19,Hein21the,PashRKB22},
randomized benchmarking \cite{KnilLRB08,MageGE11,KlieR21,JafaWSS20},  shadow estimation \cite{HuanKP20,HelsW23,HuCY23,IppoLRK23,BertHHI24,LevyLC24,SchuHH24,KlieR21,ElbeFHK23}, quantum entanglement \cite{NestDM04,HeinEB04,NezaW20}, discrete Wigner functions \cite{CormGGP06,Gros06,Zhu16P,GrosNW21,RausOZF23}, mutually unbiased bases \cite{DurtEBZ10,Appl09P,Zhu15M}, symmetric informationally complete measurements \cite{Appl05,ScotG10,Zhu10,Zhu12the,FuchHS17}, and phase transition \cite{LuntSP21,SierT23,PiroLVN23}. For a qubit system (a system composed of one or more qubits), it is by now well known that the Clifford group forms a unitary 3-design \cite{Webb16,Zhu17MC,ZhuKGG16} (see also \rcite{GuraT05}), which means every Clifford orbit is a 3-design, including the orbit of stabilizer states in particular \cite{KuenG13}. Therefore, stabilizer measurements, which can be realized by random Clifford transformations followed by the computational basis measurement, are the ideal choice for shadow estimation. Moreover, designs with higher strengths can be constructed from Clifford circuits interleaved by a few non-Clifford gates \cite{HafeMHE23}. 

Unfortunately, when the local dimension $d$ is an odd prime,  the Clifford group is only a unitary 2-design \cite{DiViLT02,Webb16,Zhu17MC}, and the set of stabilizer states is only a 2-design too. Little is known about the efficiency of shadow estimation based on qudit stabilizer measurements, although a qudit system can also realize universal quantum computation \cite{Gott99}. Moreover, most 3-designs known for a qudit system are difficult to realize in practice because they lack simple structures. Can we overcome the gap between qudit systems and qubit systems in quantum information processing? This fundamental question is of interest not only to foundational studies, but also to many practical applications.

In this work we systematically delve into the third moments of general qudit Clifford orbits, where the local dimension $d$ is an odd prime. Inspired by shadow estimation \cite{HuanKP20}, we introduce a new figure of merit, the shadow norm, for quantifying the deviation of an ensemble of quantum states from a 3-design. Intuitively, the shadow norm of an ensemble is defined as the maximum shadow norm of a traceless observable normalized with respect to the Hilbert-Schmidt norm. This figure of merit has a clear operational meaning, in contrast with other common figures of merit, such as the third normalized frame potential and the operator norm of the third normalized moment operator. 

We  show that the third normalized frame potential and shadow norm are both $\caO(d)$ irrespective of the number $n$ of qudits and the Clifford orbit under consideration, although the operator norm of the third normalized moment operator may increase exponentially with $n$ when $d\neq 2\mmod 3$. Notably, this is the case for the orbit of stabilizer states. Therefore, the overhead of qudit stabilizer measurements in shadow estimation compared with qubit stabilizer measurements is only $\caO(d)$, which is independent of $n$. Surprisingly, an ensemble that is far from a 3-design with respect to one of the most common figures of merit can achieve a similar performance as a 3-design ensemble. 

Moreover,  by virtue of magic gates we can generate  magic orbits based on magic states, which can better approximate 3-designs and are thus more appealing to many practical applications. We prove that the shadow norm of any magic orbit is upper bounded by the constant $15/2$, so a single magic gate can already eliminate the $\caO(d)$ overhead in qudit shadow estimation and bridge the gap between qudit systems and qubit systems. Furthermore, we propose simple recipes for constructing approximate and exact 3-designs with respect to all three figures of merit mentioned above. Notably,
accurate approximate 3-designs can be constructed from  only two Clifford orbits. For an infinite family of local dimensions, exact 3-designs can be constructed from  two or four Clifford orbits. 

Our work shows that the distinction between the qudit Clifford group and qubit Clifford group is not so dramatic as might be expected. For important applications, such as shadow estimation, this distinction can be circumvented with only a few magic gates; see the companion paper \cite{MaoYZ24} for more discussions. This result may have  profound implications for  quantum information processing based on qudits.

\subsection{Outline of the main results and technical innovation}

In \sref{sec:design} we review basic results about complex projective $t$-designs and unitary $t$-designs. Motivated by the task of shadow estimation \cite{HuanKP20}, we then introduce the shadow norm  as a figure of merit for quantifying the deviation of an ensemble of quantum states  from a 3-design. Not surprisingly, the shadow norm attains its minimum if and only if (iff) the ensemble forms a 3-design. Furthermore, we  derive a lower bound for the shadow norm based on the third frame potential  and thereby endow the third frame potential with a simple operational interpretation (see \thref{thm:ShNormPhi3LB}). It turns out this lower bound is almost tight for the ensemble of stabilizer states (see \thref{thm:ShNormStab}). To derive this lower bound we establish a simple connection between the shadow norm and a modified moment operator and then generalize a typical method for proving the familiar  lower bound for the third frame potential. 

In \sref{sec:QuditStab} we first review basic concepts in the qudit stabilizer formalism, including the Heisenberg-Weyl group, Clifford Group, stabilizer codes, and stabilizer states. 
Then we review basic results about  the commutant of the $t$th Clifford tensor power as developed in  \rscite{NezaW20,GrosNW21} and introduce the key concepts in this framework, such as the stochastic orthogonal group $O_t(d)$, the set $\Sigma_{t,t}(d)$ of stochastic Lagrangian subspaces, and the operators  $R(\caT)$ for $\caT\in \Sigma_{t,t}(d)$, which span the commutant under certainty assumption. 

In \sref{sec:SLS} we clarify the basic properties of the stochastic orthogonal group $O_3(d)$ and the set $\Sigma(d):=\Sigma_{3,3}(d)$ of stochastic Lagrangian subspaces,  in preparation for studying the commutant of the third Clifford tensor power. Furthermore, we introduce an alternative method for constructing stochastic Lagrangian subspaces and cubic characters for distinguishing three types of stochastic Lagrangian subspaces. Our study unravels many interesting properties of stochastic Lagrangian subspaces that are ignored or unexpected in the literature \cite{NezaW20,GrosNW21}. These results are instrumental to studying the third moments of Clifford orbits, including  magic orbits in particular. 

In \sref{sec:Commutant} we clarify the basic properties of the operators $R(\caT)$ for $\caT\in  \Sigma(d)$ and the commutant of the third Clifford tensor power. By virtue of the Gram matrix we  show that $\{R(\caT)\}_{\caT\in \Sigma(d)}$ spans the commutant of the third Clifford tensor power whenever $n\geq 1$ (see \thref{thm:spanR}). Then we construct 
a dual operator frame, which enables us to decompose any operator in the commutant in a simple way. Furthermore, we introduce shadow maps associated with  the operators $R(\caT)$ and clarify their basic properties  (see \lsref{lem:RTObStabProj}-\ref{lem:RTOdiag}) by virtue of the insight on  stochastic Lagrangian subspaces gained in the previous section. These results  are crucial to understanding the third moments of Clifford orbits and their applications in quantum information processing, such as shadow estimation. 

In \sref{sec:3rdmomentStab} we clarify the basic properties of the third moment of the ensemble of $n$-qudit stabilizer states, denoted by $\Stab(n,d)$ henceforth. Notably,  we determine the spectrum and operator norm of the third normalized moment operator  (see \pref{pro:Qspectra} and \tref{tab:bQnd3}). It turns out the operator norm is upper bounded by $(d+2)/3$ when $d=2\mmod 3$, but increases exponentially with $n$ when $d\neq 2\mmod 3$. The distinction between the two cases is tied to the different structures of stochastic Lagrangian subspaces as clarified in \sref{sec:SLS} and has a number theoretic origin.
Furthermore, we determine the shadow norms of stabilizer projectors with respect to  $\Stab(n,d)$ and the shadow norm of  $\Stab(n,d)$ itself (see \thsref{thm:StabShNormStab} and \ref{thm:ShNormStab}). It turns out the shadow norm of $\Stab(n,d)$ is upper bounded by $2d-1$, which is independent of $n$, although the deviation of $\Stab(n,d)$ from a 3-design grows exponentially with $n$ according to the operator norm of the third normalized moment operator when  $d\neq 2\mmod 3$. 

In \sref{sec:ThirdMomentGen} we clarify the basic properties of a general Clifford orbit. To this end, we  expand the third normalized moment operator in terms of the operator frame  $\{R(\caT)\}_{\caT\in \Sigma(d)}$ and a dual  frame and clarify the basic properties of the expansion coefficients. On this basis we derive  analytical formulas for the third normalized frame potential of the Clifford orbit
and the shadow norm of any stabilizer projector. We also derive nearly tight  bounds for the operator norm of the third normalized moment operator and the shadow norm of the Clifford orbit (see \thsref{thm:Phi3LUB}-\ref{thm:ShNormGen}). It turns out the third normalized frame potential and shadow norm are both upper bounded by $\caO(d)$. 
The operator norm is upper bounded by $\caO(d)$ when $d=2\mmod3$, but may grow exponentially with $n$ when $d\neq 2\mmod3$. The distinction between the two cases has the same origin as for the orbit of stabilizer states. In \sref{sec:Balance} we further clarify the properties of Clifford orbits based on balanced ensembles (see \thsref{thm:Phi3Balance}-\ref{thm:ShNormBalance}).

In \sref{sec:MagicState} we first review basic results about  qudit diagonal magic gates, referred to as  $T$ gates henceforth. Then we turn to qudit magic states, which can be generated by applying the Fourier gate and $T$ gates on the computational basis state. 
Both $T$ gates and magic states are tied to cubic polynomials on the finite field $\bbF_d$ (when $d=3$, the situation is slightly different). Furthermore, we introduce cubic characters to distinguish three types of $T$ gates and three types of single-qudit magic states in the case $d=1\mmod 3$. We also classify $n$-qudit magic states and highlight certain magic states that are of special interest. The existence of different types of $T$ gates and magic states unravels a rich story yet to be fully understood.

In \ssref{sec:MagicOrbitd23} and \ref{sec:MagicOrbitd1} we clarify the basic properties of magic orbits. 
Notably, we show that the third normalized frame potential, operator norm of the third normalized moment operator, and shadow norm all decrease exponentially with the number of $T$ gates (see \thsref{thm:Phi3Magicd23}-\ref{thm:ShNormMagicd1}). Therefore,  good approximate 3-designs   with respect to 
the third normalized frame potential and shadow norm can be constructed using  only a few $T$ gates. Actually, a single $T$ gate can already bridge the gap between qudit systems and qubit systems with respect to the third normalized frame potential and shadow norm. The situation for the operator norm is more complicated because it may grow exponentially with $n$ when $d\neq 2\mmod 3$. Nevertheless, good approximate 3-designs   with respect to all three figures of merit can be constructed  from only two Clifford orbits (see  \pref{pro:AccurateDesign} and \thsref{thm:Moment3MagicLUBd1}, \ref{thm:AccurateDesignMagic}; see \sref{sec:3designConstructd23} for the special case $d=3$). Furthermore, for an infinite family of local dimensions, exact 3-designs can be constructed from only two or four Clifford orbits (see \lref{lem:BalMagicExactd1}, \thref{thm:BalMagicExactd1}, and  \tref{tab:Exact3design}).

To establish the main results on magic orbits we first 
determine the expansion coefficients associated with 
magic states of a single qudit (see \lsref{lem:kappaMagicOned23} and \ref{lem:kappaMagicOned1}). Technically, a key  challenge is to count the number of solutions of a general cubic equation on each stochastic Lagrangian subspace, and one of our main contributions is to tackle this challenge after simplifying the counting problem. Our insight on  stochastic Lagrangian subspaces plays a crucial role in this endeavor. Incidentally, most sophisticated analysis is required to clarify the properties of the operator norm of the third normalized moment operator compared with the third normalized frame potential and shadow norm. Although this work was originally motivated by  shadow estimation, the scope and implications of this work have gone far beyond the original motivation.

In addition, results on Gauss sums and Jacobi sums are pretty useful to establishing our main results. Although our proofs only use basic facts on Gauss sums and Jacobi sums, 
the existence of an infinite family of local dimensions for which exact 3-designs can be constructed using two or four Clifford orbits is tied to a deep result on the phase distribution of cubic Gauss sums \cite{HeatP79}. This distribution problem was originally studied by Kummer in the 19th century and remains an active research topic in number theory until today~\cite{DunnR21}.

For the convenience of many practical applications to which  this work is relevant, we have tried to clarify not only the scaling behaviors of the figures of merit under consideration, but also the constants involved; constants without explicit values are avoided whenever it is possible. For the convenience of the readers, the main mathematical symbols used in this work are summarized in \tref{tab:Symbol}. 
To streamline the presentation, most technical proofs are relegated to the Appendix, which also includes some auxiliary results. In addition,   the Appendix includes a brief introduction on multiplicative characters,  Gauss sums, Jacobi sums, and cubic equations over finite fields.

\begin{center}
	\setlength{\tabcolsep}{2mm}
	\renewcommand{\arraystretch}{1.5}
	\begin{longtable}[tb]{c|c}
		\caption{\label{tab:Symbol}Main mathematical symbols used in this work. }\\
		\hline\hline
		Symbol & Meaning \\
		\hline
		\endfirsthead
		\hline\hline
		Symbol & Meaning \\
		\hline
		\endhead
		$\mathbf{1}_t=(1,1,\ldots, 1)^\top$ & the vector in $\bbF_d^t$ all of whose entries are equal to 1 \\
		
		$\mathds{1}=\mathds{1}_t$ & identity operator on $\bbF_d^t$\\

		$\|\cdot\|_\ell$,\; $\|\cdot\|= \|\cdot\|_\infty$ & Schatten $\ell$-norm and operator (spectral) norm \\
		
		$\Cl(n,d)$ & the $n$-qudit Clifford group\\

$\Cl(n,d)^{\totimes t}$ & $t$th diagonal tensor power of the Clifford group: $\{U^{\otimes t}\,|\, U\in \Cl(n,d)\}$\\
		
		$d$ & local dimension, which is an odd prime unless stated otherwise\\
		
		$D$ &total dimension, $D=d^n$ except in \sref{sec:design} and \aref{app:Shadow3design} \\
		
		$\Delta$ &   the diagonal stochastic Lagrangian subspace\\ 
		
		$\scrE$ & an ensemble of pure states\\
		
		$\|\scrE\|_\sh$  & shadow norm of the ensemble $\scrE$  \\
		
		$\eta_2(\cdot)$,\; $\eta_3(\cdot)$ & the quadratic character and a cubic character of $\bbF_d$  \\

		$\eta_3(f)$ & cubic character of the cubic polynomial $f$ (see Sec.~\ref{sec:MagicState}) \\

$\eta_3(\caT)$ & cubic character of the stochastic Lagrangian subspace $\caT$ [see \eref{eq:CharIndex}] \\

		$\bbF_d$,\; $\bbF_d^{\times}$ & the finite field  of $d$ elements and the subset of nonzero elements\\
		
		$\bfg_0$,\; $\bfg_1$ &   defect vectors defined in \eref{eq:TwoDefectSpaces} \\ 
		
		$g(m,a)$,\; $g(m)$,\; $G(\eta,a)$,\; $G(\eta)$ &Gauss sums defined in \eref{eq:GaussSumDef} \\
		
		$\tg(d)$ & $g(3,1)g(3,\nu)g(3,\nu^2)/d$\\
		
		$\gamma_{d,k}$,\; $\gamma_k$,\; $\gamma^\rmB_{d,k}$ & see \eqsref{eq:gammadk}{eq:kappatgammadk} \\
		
		$\caH_d$, \; $\caH$ &  local and total Hilbert spaces of dimensions $d$ and $D$, respectively  \\
		
		
		$\bbI$ & identity operator on  $\caH$ \\
		
		$\ind(\caT)$  & index of $\caT$ defined in \eref{eq:TindexDef}\\
		
		$J(\theta_1,\theta_2,\ldots, \theta_k)$ & Jacobi sum\\
		
		$\kappa(\Psi,\caT)$,\; $\tka(\Psi,\caT)$,\; $\hka(\Psi,\caT)$
		& see \eqsref{eq:kappapsiT}{eq:hkapsiT}\\

		$\kappa(\Psi,\scrT)$,\; $\kappa(\Psi,\scrT,j)$,\; $\hka(\Psi,\scrT)$ 
		& see  \eref{eq:kappascrT}\\
		
		$\kappa_{d,k}$ & see \eref{eq:kappatgammadk}\\

		$\kappa(f,\caT)$ &$\kappa(\psi_f,\caT)$\\
		
		$\scrM_{n,k}$,\; $\scrM_{n,k}^\can$,  $\scrM_{n,k}^\id$,\; $\scrM_{n,k}^\QB$,\; $\scrM_{n,k}^\rmB$ & five sets of magic states defined in \sref{sec:MagicState}\\

		$\mu_j=\mu_j(d)$ & see \eqsref{eq:muj}{eq:mujFormula}\\
		
		$n$ &  the number of qudits\\
		
		$\bbN$, \; $\bbN_0$ & the set of positive integers  and the set of  nonnegative integers  \\
		
		$\caN$ &  a generic defect subspace\\ 
		
		$\tcaN$,\; $\tcaN_0$\; $\tcaN_1$ &  special defect subspaces defined in \eqsref{eq:DefectSpaced3}{eq:TwoDefectSpaces} \\

		$N_c(f)=N(f=c)$ & the number of solutions to the equation $f=c$ \\
		
		$N_\alpha(f,\caT)$ & $|\{ (\bfx;\bfy)\in \caT \ | \  f(\bfy)-f(\bfx)=\alpha\}|$\\
		
		$\nu$ &  a primitive element in $\bbF_d$ \\
		
		$O_t(d)$ & stochastic orthogonal group\\
		
		$O_3^\even(d)$, \; $O_3^\odd(d)$ & the set of even (odd) stochastic isometries \\
		
		$\orb(\Psi)$ & Clifford orbit generated from the pure state $|\Psi\>$ \\
		
		$\orb(\scrE)$ & ensemble of pure states composed of Clifford orbits of states in $\scrE$\\
		
		$\Ob$,\; $\Ob_0$  & a linear operator on $\caH$ and its traceless part  \\

		$\|\Ob\|_\sh$  & shadow norm of $\Ob$ with respect to a given ensemble \\
		
		$\|\Ob\|_\Stab$,\;	$\|\Ob\|_{\orb(\Psi)}$,\; $\|\Ob\|_{\orb(\scrE)}$  & shadow norms of $\Ob$ with respect to $\Stab(n,d)$, $\orb(\Psi)$, and $\orb(\scrE)$ \\
		
		$\|\orb(\Psi)\|_\sh$,\; $\|\orb(\scrE)\|_\sh$  & shadow norms of $\orb(\Psi)$ and $\orb(\scrE)$, respectively \\

		$\omega=\omega_d$ &  $\rme^{ \frac{2\pi \rmi}{d}}$ \\
		
		$P_{[t]}$,\; $P_\sym$ &the projector  onto the $t$-partite symmetric subspace $\Sym_t(\caH)$,\; $P_{[3]}$ \\
		
		$\pi_{[t]}$ & $\tr P_{[t]}$\\
		
		$\tscrP_3(\bbF_d)$  & the set of cubic functions defined in \sref{sec:MagicState}\\
		
		$\Phi_t(\cdot)$, \; $\bar{\Phi}_t(\cdot)$ & the $t$th frame potential and  normalized frame potential\\

		$|\psi_f\>$ & magic state associated with the cubic function $f$\\
		
		$Q(n,d,t)$,\; $\bQ(n,d,t)$ & $t$th moment operator and normalized moment operator of $\Stab(n,d)$ \\
		
		$Q_t(\scrE)$,\; $\bQ_t(\scrE)$ & $t$th moment operator and normalized moment operator of $\scrE$ \\
		
		$Q(\cdot)=Q_3(\cdot)$,\; $\bQ(\cdot)=\bQ_3(\cdot)$
		& third moment operator and normalized moment operator\\

		$\caQ(\Ob)$ & shadow map defined in \eref{eq:ShadowMap}: $\tr_{BC}\bigl[Q\bigl(\mathbb{I} \otimes\Ob\otimes\Ob^{\dag}\bigr)\bigr]$ \\

		$\bcaQ_{n,d}(\cdot)$,\; 
		$\bcaQ_{\orb(\Psi)}(\cdot)$ & shadow maps associated with $\bQ(n,d,3)$ and $\bQ(\orb(\Psi))$\\
		
		$r(O)$, \; $R(O)$ & $r(\caT_O)$,\; $R(\caT_O)$\\
		
		$r(\caT)$,\; $R(\caT)=r(\caT)^{\otimes n}$ & operators defined in \eref{eq:rRT}\\

		$r(\scrT)$,\; $R(\scrT)$ & operators defined in \eref{eq:RscrT}\\

		$\tR(\caT)$ &dual frame operator defined in \eqsref{eq:DualBasis}{eq:DualBasisn1}\\

		$\caR_\caT(\Ob)$ & $\tr_{BC}\bigr[R(\caT)\bigl(\mathbb{I} \otimes\Ob\otimes\Ob^{\dag}\bigr)\bigr]$ \\
		
		$\Stab(n,d)$ &the set of $n$-qudit stabilizer states \\

		$S_3$ & the symmetric group generated by  $\zeta$ and $\tau_{12}$ in \eref{eq:zetatau} \\
		
		$\Sym_t(\caH)$ &  $t$-partite symmetric subspace in $\caH^{\otimes t}$ \\
		
		$\Sigma_{t,t}(d)$, \; $\Sigma(d)$ & the set of stochastic Lagrangian subspaces in $\bbF_d^{2t}$, \; $\Sigma_{3,3}(d)$	\\

		$T_f$ & $T$-gate associated with the cubic function $f$\\
		
		$\caT$ &  a generic stochastic Lagrangian subspace\\ 
		
		$\caT_\Delta$,\;  $\caT_\tDelta$ & $\caT\cap \Delta$,\; $\{\bfx\ | \ (\bfx;\bfx)\in \caT_\Delta\}$ \\
		
		$\caT_O$ & $\left\{ (O\bfx;\bfx)\, | \, \bfx \in \bbF_d^t \right\}$\\
		
		$\caT_\LD$,\; $\caT_\RD$ & left and right defect subspaces of $\caT$\\

		$\tcaT_i$,\; $\tcaT_{ij}$ for $i,j=0,1$ &  stochastic Lagrangian subspaces in $\scrT_\defe$ defined in \lref{lem:defectT} \\ 
		
		$\caT_\bfv$ & stochastic  Lagrangian subspace determined by  $\bfv$ as defined in \eref{eq:Tv}\\
		
		$\scrT_\sym$,\; $\scrT_\ns$,\;$\scrT_\iso$,\; $\scrT_\defe$,\; $\scrT_\even$,\; $\scrT_\odd$ & subsets of $\Sigma(d)$ defined in \eref{eq:scrTSIEO}\\

		$\scrT_0$,\; $\scrT_1$  & subsets of $\Sigma(d)$ defined in \eref{eq:scrT01}\\
		
		$\tau_{12}$,\; $\tau_{13}$,\; $\tau_{23}$,\; $\zeta$ & three transpositions and a cyclic permutation in $S_3$; see \eref{eq:zetatau} \\
		
		\hline\hline
	\end{longtable}
\end{center}

\section{\label{sec:design}Complex projective and unitary $t$-designs}

In this section we review basic results on (weighted complex projective) $t$-designs and unitary $t$-designs in preparation for the later study. In addition, motivated by shadow estimation \cite{HuanKP20},  we  introduce an alternative figure merit for quantifying the deviation of an ensemble from a 3-design and clarify its connection with the third frame potential. 

\subsection{Complex projective $t$-designs}
Let $\caH$ be a $D$-dimensional Hilbert, $\caL(\caH)$ the space of linear operators on $\caH$, $\caL_0(\caH)$ the space of traceless linear operators on $\caH$, $\caL^\rmH(\caH)$ the space of Hermitian operators on $\caH$, and $\rmU(\caH)$ the unitary group on $\caH$. Let $t$ be a positive integer. Denote by $P_{[t]}$  the projector  onto the $t$-partite symmetric subspace $\Sym_t(\caH)$ in $\caH^{\otimes t}$ and
\begin{equation}\label{eq:dimsym}
\pi_{[t]}=\tr(P_{[t]})=\begin{pmatrix}
D+t-1\\
t
\end{pmatrix}.
\end{equation}
Let $\scrE=\{|\psi_j\>, w_j \}_j$  be a weighted set of pure states in $\caH$. In this work we assume that all pure states are normalized, that is, $|\<\psi_j|\psi_j\>|^2=1$ for all $j$; the weights are nonnegative and normalized, that is, $w_j \geq 0$ for all $j$ and $\sum_j w_j=1$, so they form a probability distribution. Then  the weighted set $\scrE$ represents an ensemble of pure states. In addition, the weights may be omitted for a uniform distribution. The $t$th moment operator and normalized moment operator associated with the ensemble $\scrE$ are defined as
\begin{align}
Q_t(\scrE):=\sum_{j}w_j  \left(|\psi_j\> \<\psi_j|\right)^{\otimes t},\quad   \bQ_t(\scrE):=\pi_{[t]} Q_t(\scrE). 
\end{align}
By definition $Q_t(\scrE)$ and $\bQ_t(\scrE)$ are supported in $\Sym_t(\caH)$. 
The ensemble $\scrE$ is a (weighted complex projective) \emph{$t$-design} if $\bQ_t(\scrE)=P_{[t]}$ \cite{Zaun11,ReneBSC04,Scot06,AmbaE07,ZhuKGG16}. 
By definition a $t$-design is also a $t'$-design for $t'\leq t$.

The $t$th frame potential and normalized frame potential of the ensemble $\scrE$ are defined as
\begin{equation}\label{eq:PhitState}
\Phi_t(\scrE):=\sum_{j,k}w_jw_k |\langle\psi_j|\psi_k\rangle|^{2t}=\tr(Q_t^2),\quad \bar{\Phi}_t(\scrE)=\pi_{[t]}\Phi_t(\scrE)=\pi_{[t]}\tr(Q_t^2)=\frac{\tr(\bQ_t^2)}{\pi_{[t]}}. 
\end{equation}
It is well known that $\bar{\Phi}_t(\scrE)\geq1$, and the lower bound is saturated iff $\scrE$ is a $t$-design. The twirling superoperator $\bbT$ on $\caH^{\otimes t}$ is defined as follows,
\begin{align}
\bbT(M)=\int U^{\otimes t}M\bigl(U^{\otimes t}\bigr)^\dag \rmd U\quad \forall M\in \caL(\caH^{\otimes t}),
\end{align}
where the integration is over the normalized Haar measure on the unitary group $\rmU(\caH)$.
The operator $M$ is invariant under twirling if $\bbT(M)=M$.
Note that $\bbT(\bQ_t(\scrE))=P_{[t]}$ for any state ensemble $\scrE$.

\begin{proposition}\label{pro:tdesignEqui}
	The following four statements about the state ensemble $\scrE$ are equivalent:
	\begin{enumerate}
		\item $\scrE$ is a $t$-design.
		
		\item 	$\bQ_t(\scrE)=P_{[t]}$.
		
		\item $\bQ_t(\scrE)$ is invariant under twirling. 
		
		\item $\bar{\Phi}_t(\scrE)=1$. 
		
	\end{enumerate}
\end{proposition}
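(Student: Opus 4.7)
The plan is to prove the four equivalences by closing the cycles $(1)\Leftrightarrow(2) \Rightarrow (3) \Rightarrow (2)$ and $(2) \Rightarrow (4) \Rightarrow (2)$. Three of these arrows are essentially immediate from material already set up in the excerpt. The equivalence of (1) and (2) is nothing but the definition of a $t$-design recalled just before the proposition. For $(2) \Rightarrow (3)$, I would note that $U^{\otimes t}$ commutes with $P_{[t]}$ for every $U \in \rmU(\caH)$, so $P_{[t]}$ is trivially fixed by the twirling superoperator $\bbT$. For the reverse implication $(3)\Rightarrow(2)$, I would invoke the identity $\bbT(\bQ_t(\scrE)) = P_{[t]}$ already recorded in the text, which forces $\bQ_t(\scrE) = P_{[t]}$ whenever the left-hand side is already twirl-invariant. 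Finally, $(2)\Rightarrow(4)$ reduces to the one-line computation $\tr(P_{[t]}^2) = \tr(P_{[t]}) = \pi_{[t]}$ inserted into the defining formula $\bar{\Phi}_t(\scrE)=\tr(\bQ_t^2)/\pi_{[t]}$.

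The only direction requiring genuine work is $(4)\Rightarrow(2)$. Here my plan is to decompose $\bQ_t(\scrE) = P_{[t]} + \Delta$ with $\Delta := \bQ_t(\scrE) - P_{[t]}$ and exploit three properties of $\Delta$: it is Hermitian (both summands are), it is supported on $\Sym_t(\caH)$ so $P_{[t]}\Delta = \Delta$, and it is traceless because $\tr \bQ_t(\scrE) = \pi_{[t]} = \tr P_{[t]}$. Expanding the square then yields
\begin{equation}
\tr\bigl(\bQ_t(\scrE)^2\bigr) = \tr(P_{[t]}^2) + 2\tr(P_{[t]}\Delta) + \tr(\Delta^2) = \pi_{[t]} + \tr(\Delta^2),
\end{equation}
since the cross term collapses to $\tr \Delta = 0$. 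Thus $\bar{\Phi}_t(\scrE) = 1 + \tr(\Delta^2)/\pi_{[t]}$, and Hermiticity of $\Delta$ makes $\tr(\Delta^2) = \|\Delta\|_2^2 \geq 0$ with equality iff $\Delta = 0$. Hence $\bar{\Phi}_t(\scrE) = 1$ forces $\bQ_t(\scrE) = P_{[t]}$.

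There is no serious obstacle anywhere in the argument. The mildly delicate point is recognising that the cross term $\tr(P_{[t]}\Delta)$ vanishes, which requires combining support in the symmetric subspace with tracelessness of $\Delta$; everything else is bookkeeping that rests on identities already established in the preceding paragraphs of \sref{sec:design}.
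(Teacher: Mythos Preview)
Your proof is correct. The paper does not supply an explicit proof of this proposition, treating it as standard background (it remarks just before the statement that ``it is well known that $\bar{\Phi}_t(\scrE)\geq1$, and the lower bound is saturated iff $\scrE$ is a $t$-design''); your argument is exactly the elementary verification one would expect, and all the ingredients you use---the definition of a $t$-design, the identity $\bbT(\bQ_t(\scrE))=P_{[t]}$, and the support of $\bQ_t(\scrE)$ in $\Sym_t(\caH)$---are recorded in the paragraphs preceding the proposition.
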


Next, we turn to approximate $t$-designs. 
The ensemble $\scrE$ is an $\epsilon$ approximate $t$-design if 
\begin{align}
\left\|\bQ_t(\scrE)-P_{[t]}\right\|\leq \epsilon. \label{eq:tdesignApprox}
\end{align}
Alternatively,  $\bar{\Phi}_t(\scrE)-1$ can also be used to quantify the deviation from an ideal $t$-design. Nevertheless, the criterion based on \eref{eq:tdesignApprox} is usually much more stringent because the norm $\|\bQ_t(\scrE)-P_{[t]}\|$ is much more
sensitive to low rank deviation in $\bQ_t(\scrE)$ compared with $\bar{\Phi}_t(\scrE)$. Notably, $\left\|\bQ_t(\scrE)-P_{[t]}\right\|$ might be very large even if $\bar{\Phi}_t(\scrE)-1$ is very small, which is the case for certain Clifford orbits, as we shall see in \ssref{sec:MagicOrbitd23} and \ref{sec:MagicOrbitd1}.

In this work, our main focus is the third moment. To simplify the notation, the third moment operator $Q_3(\scrE)$ and normalized moment operator $\bQ_3(\scrE)$ will be abbreviated as $Q(\scrE)$ and  $\bQ(\scrE)$, respectively, if there is no danger of confusion.

\subsection{The shadow norm and third frame potential}
Motivated by shadow estimation \cite{HuanKP20}, here we introduce an alternative figure merit for quantifying the deviation of an ensemble from a 3-design and clarify its connection with the third frame potential.

Let $Q$ be an operator acting on $\caH^{\otimes 3}$.  The \emph{shadow map} $\caQ$ associated with $Q$ is a map from $\caL(\caH)$ to $\caL(\caH)$ defined as follows,
\begin{align}\label{eq:ShadowMap}
\caQ(\Ob):=\tr_{BC}\bigl[Q \bigl(\bbI\otimes   \Ob\otimes \Ob^\dag\bigr)\bigr], \quad \Ob\in \caL(\caH). 
\end{align}
The squared \emph{shadow norm} of $\Ob$ with respect to $Q$, denoted by $\|\Ob\|^2_Q$,  is defined as the spectral norm of $\caQ(\Ob)$, that is,
\begin{align}\label{eq:ObShNormQ}
\|\Ob\|_Q^2:=\|\caQ(\Ob)\|.
\end{align}
The \emph{shadow norm} of $Q$ itself is defined as
\begin{align}\label{eq:ShadowNormDef}
\|Q\|_\sh&:=\max\{\left\|\caQ(\Ob)\right\|\ :\ \Ob\in \caL_0(\caH),\;\|\Ob\|_2=1\}, 
\end{align}
where the maximization is over all traceless operators in $\caL_0(\caH)$  that are normalized with respect to the Hilbert-Schmidt norm. The restriction on traceless operators is motivated by the fact that in shadow estimation, the variance in estimating the expectation value $\tr(\rho \Ob)$ with respect to the state $\rho$ is determined by the traceless part of $\Ob$, that is, $\Ob_0:=\Ob-\tr(\Ob)\bbI/D$.

We are mainly interested in the case in which $Q$ is permutation invariant, which means $Q$ commutes with all permutations among the three parties. Then it suffices to consider Hermitian operators
in  the above maximization, that is, 
\begin{align}\label{eq:ShadowNormDef2}
\|Q\|_\sh=\max\bigl\{\|\caQ(\Ob)\| \ : \ \Ob\in \caL_0(\caH),\; \Ob=\Ob^\dag,\; \|\Ob\|_2=1\bigr\}, 
\end{align}
given that $\caQ(\Ob_1+\rmi \Ob_2)= \caQ(\Ob_1)+\caQ(\Ob_2)$ whenever $\Ob_1$ and $\Ob_2$ are Hermitian operators on $\caH$.

Let $P_\sym=P_{[3]}$ be the projector onto the tripartite symmetric subspace in $\caH^{\otimes 3}$ and $\Ob\in \caL(\caH)$.
Then 
\begin{align}\label{eq:PsymOb}
6\tr_{BC}\bigl[P_\sym \bigl(\bbI\otimes \Ob\otimes \Ob^\dag\bigr)\bigr]&=\Ob\Ob^\dag+\Ob^\dag\Ob +\tr(\Ob^\dag\Ob)\bbI +|\tr\Ob|^2\bbI +\tr(\Ob)\Ob^\dag+\tr\bigl(\Ob^\dag\bigr)\Ob.  
\end{align}
If $\Ob$ is traceless, then 
\begin{equation}\label{eq:PsymObTraceless}
\begin{aligned}
6\tr_{BC}\bigl[P_\sym \bigl(\bbI\otimes \Ob\otimes \Ob^\dag\bigr)\bigr]&=\Ob\Ob^\dag+\Ob^\dag\Ob +\tr\bigl(\Ob^\dag\Ob\bigr)\bbI,\\  6\bigl\|\tr_{BC}\bigl[P_\sym \bigl(\bbI\otimes \Ob\otimes \Ob^\dag\bigr)\bigr]\bigr\|&=\|\Ob\Ob^\dag+\Ob^\dag\Ob\| +\|\Ob\|_2^2. 
\end{aligned}
\end{equation}
If  $\Ob$ is Hermitian and traceless, then 
\begin{align}
\|\Ob\|^2\leq \frac{D-1}{D}\|\Ob\|_2^2, \quad 6\|\tr_{BC}[P_\sym (\bbI\otimes   \Ob\otimes \Ob)]\|&=2\|\Ob\|^2+\|\Ob\|_2^2\leq \frac{3D-2}{D}\|\Ob\|_2^2, \label{eq:PsymShNorm}
\end{align}
and both inequalities are saturated when $\Ob=D|0\>\<0|-\bbI$. Based on this observation we can deduce that
\begin{align}\label{eq:ShadowNormSym}
\|P_\sym \|_\sh= \frac{3D-2}{6D}. 
\end{align}

Given any ensemble $\scrE=\{|\psi_j\>, w_j \}_j$  of pure states in $\caH$, the third moment operator $Q(\scrE)=Q_3(\scrE)$ and normalized moment operator $\bQ(\scrE)=\bQ_3(\scrE)$ 
are automatically permutation invariant. The squared \emph{shadow norm} of $\Ob\in \caL(\caH)$ with respect to $\scrE$ is defined as 
\begin{align}\label{eq:ObShNormDef}
\|\Ob\|^2_\scrE :=D(D+1)^2\|\Ob\|^2_{Q(\scrE)}=\frac{6(D+1)}{D+2}\|\Ob\|^2_{\bQ(\scrE)}=\frac{6(D+1)}{D+2}\left\|\tr_{BC}\bigl[\bQ(\scrE) \bigl(\bbI\otimes   \Ob\otimes \Ob^\dag\bigr)\bigr]\right\|,
\end{align}
 When the ensemble $\scrE$ is clear from the context, $\|\Ob\|^2_\scrE$ can also be written as $\|\Ob\|^2_\sh$. In shadow estimation based on the ensemble $\scrE$,  the variance in estimating the expectation value $\tr(\rho \Ob)$ of the observable $\Ob$  is upper bounded by  $\|\Ob\|^2_\scrE$ \cite{HuanKP20}, assuming that $\scrE$ is a 2-design, which is the case of interest here. This fact highlights the significance of the  squared shadow norm $\|\Ob\|^2_\scrE$. The shadow norm of the ensemble $\scrE$ is defined as
\begin{align}
\|\scrE\|_\sh:=D(D+1)^2 \|Q(\scrE)\|_\sh=\frac{6(D+1)}{D+2}\|\bQ(\scrE)\|_\sh. \label{eq:EnsembleShNormDef}
\end{align}
It turns out the shadow norm $\|\scrE\|_\sh$ is minimized iff the underlying ensemble forms a 3-design as shown in the following theorem and proved in \aref{app:Shadow3design}. 
\begin{theorem}\label{thm:ShNormPhi3LB}
	Suppose $\scrE=\{|\psi_j\>, w_j \}_j$ is an ensemble of pure states in $\caH$ and $\bQ=\bQ(\scrE)$. Then 
	\begin{gather}\label{eq:QbarShNormPhi3LB}
	\|\bQ\|_\sh\geq\frac{(D+1)(D+2)}{6(D-1)}(D^2\Phi_3-2D\Phi_2+\Phi_1)	\geq \frac{3D-2}{6D}= \|P_\sym\|_\sh,\\
	\|\scrE\|_\sh\geq\frac{(D+1)^2}{(D-1)}(D^2\Phi_3-2D\Phi_2+\Phi_1)	\geq \frac{(3D-2)(D+1)}{D(D+2)};  \label{eq:ShNormPhi3LB}	
	\end{gather}
	the final   bound for $\|\bQ\|_\sh$ (and similarly for $\|\scrE\|_\sh$)
	is saturated iff $\scrE$ is a 3-design. If   $\scrE$ is a 2-design, then 
	\begin{gather}\label{eq:QbarShNormPhi3LB2}
	\|\bQ\|_\sh\geq \frac{D}{D-1}\bar{\Phi}_3-\frac{(D+2)(3D-1)}{6D(D-1)}\geq \bar{\Phi}_3-\frac{3D+2}{6D},\\
	\|\scrE\|_\sh\geq \frac{6D(D+1)}{(D-1)(D+2)}\bar{\Phi}_3-\frac{(D+1)(3D-1)}{D(D-1)}> 6\bar{\Phi}_3-3-\frac{5}{D}.  \label{eq:ShNormPhi3LB2}
	\end{gather}
\end{theorem}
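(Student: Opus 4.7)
Plan: For any pure state $|\phi\>\in\caH$, introduce $\Ob_\phi := \sqrt{D/(D-1)}\,(|\phi\>\<\phi|-\bbI/D)$, which is traceless Hermitian with $\|\Ob_\phi\|_2 = 1$ and hence admissible in \eqref{eq:ShadowNormDef2}. Since $\bQ(\scrE) = \pi_{[3]}\sum_j w_j |\psi_j\>\<\psi_j|^{\otimes 3}$, a direct computation yields $\caQ(\Ob_\phi) = \frac{\pi_{[3]}}{D(D-1)}\sum_j w_j (Dp_{j\phi}-1)^2 |\psi_j\>\<\psi_j|$, with $p_{j\phi} := |\<\psi_j|\phi\>|^2$. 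This operator is manifestly positive semidefinite, so $\|\caQ(\Ob_\phi)\| \geq \<\phi|\caQ(\Ob_\phi)|\phi\> = \frac{\pi_{[3]}}{D(D-1)}\sum_j w_j (Dp_{j\phi}-1)^2 p_{j\phi}$. Viewing the right-hand side as a function of $|\phi\>$, it is the expectation in $|\phi^{\otimes 3}\>$ of the \emph{modified moment operator} $X := \sum_j w_j |\psi_j\>\<\psi_j| \otimes (D|\psi_j\>\<\psi_j|-\bbI)^{\otimes 2}$, thereby establishing the advertised link between the shadow norm and a modified moment operator.

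For the first bound in \eqref{eq:QbarShNormPhi3LB}, I choose $|\phi\> = |\psi_{k_0}\>$ and average the resulting inequality over $k_0$ with weights $w_{k_0}$. Because $\|\bQ\|_\sh$ dominates any specific value and therefore any weighted mean, this gives $\|\bQ\|_\sh \geq \frac{\pi_{[3]}}{D(D-1)} \sum_{j,k} w_j w_k (Dp_{jk}-1)^2 p_{jk}$. Expanding $(Dp-1)^2 p = D^2 p^3 - 2Dp^2 + p$ and recognizing $\Phi_t = \sum_{j,k} w_j w_k p_{jk}^t$ collapses the sum to $D^2\Phi_3 - 2D\Phi_2 + \Phi_1$, while the prefactor simplifies to $(D+1)(D+2)/[6(D-1)]$ via $\pi_{[3]} = D(D+1)(D+2)/6$, yielding the first claimed lower bound.

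For the universal bound $(3D-2)/(6D) = \|P_\sym\|_\sh$ of \eqref{eq:ShadowNormSym}, I instead average the same starting inequality over Haar-random $|\phi\>$. The Haar moments $\bbE_\phi[p_{j\phi}^t] = t!/[D(D+1)\cdots(D+t-1)]$ for $t=1,2,3$ do not depend on the specific $|\psi_j\>$, so $\bbE_\phi[(Dp_{j\phi}-1)^2 p_{j\phi}] = (3D-2)(D-1)/[D(D+1)(D+2)]$; summing against $\sum_j w_j = 1$ and multiplying by $\pi_{[3]}/[D(D-1)]$ yields $\|\bQ\|_\sh \geq (3D-2)/(6D)$. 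Saturation is transparent: for a 3-design $\bQ = P_{[3]}$ so $\|\bQ\|_\sh = \|P_\sym\|_\sh$; conversely, equality in the chain forces $\Phi_1 = 1/D$, $\Phi_2 = 2/[D(D+1)]$, $\Phi_3 = 6/[D(D+1)(D+2)]$, which characterizes 3-designs. The 2-design reformulations \eqref{eq:QbarShNormPhi3LB2}--\eqref{eq:ShNormPhi3LB2} follow by substituting $\Phi_1 = 1/D$ and $\Phi_2 = 2/[D(D+1)]$ into the first bound, expressing $\Phi_3$ through $\bar{\Phi}_3 = \pi_{[3]}\Phi_3$, and relaxing via $\bar{\Phi}_3 \geq 1$; the $\|\scrE\|_\sh$ statements follow from $\|\scrE\|_\sh = 6(D+1)\|\bQ\|_\sh/(D+2)$ as in \eqref{eq:EnsembleShNormDef}.

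The main technical obstacle is the middle inequality $D^2\Phi_3 - 2D\Phi_2 + \Phi_1 \geq (3D-2)(D-1)/[D(D+1)(D+2)]$, which is not supplied by either averaging step on its own. Writing it as $\tr[X Q_3] \geq \tr[X P_{[3]}/\pi_{[3]}]$ with $X$ as above, it becomes a bilinear ensemble inequality that nontrivially couples $\Phi_1, \Phi_2, \Phi_3$. To close it I would symmetrize $X$ over $S_3$ so that the left-hand side becomes $D^2 \tr Q_3^2 - (2D/3)\tr[(Q_2\otimes\bbI + \text{perms})Q_3] + (1/3)\tr[(Q_1\otimes\bbI^{\otimes 2}+\text{perms})Q_3]$, then invoke the Schur-Weyl decomposition of $\caH^{\otimes 3}$ together with the partial-trace relations $Q_2 = \tr_3 Q_3$, $Q_1 = \tr_{23} Q_3$ and the Welch-type bounds $\Phi_t \geq 1/\pi_{[t]}$; verifying nonnegativity of the resulting quadratic form and its vanishing exactly at 3-designs is the delicate part of the argument.
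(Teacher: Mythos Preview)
Your two averaging steps correctly establish the first inequality in \eqref{eq:QbarShNormPhi3LB} and, independently, the bound $\|\bQ\|_\sh \geq (3D-2)/(6D)$; the operator you call $X$ is exactly the paper's $M$ in \lref{lem:MQ}, and your ensemble-averaging step is essentially the paper's argument for the first inequality. The Haar-averaging is a clean alternative to the paper's twirling argument for the final bound.

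The gap is the middle inequality in \eqref{eq:QbarShNormPhi3LB}, namely $D^2\Phi_3 - 2D\Phi_2 + \Phi_1 \geq (3D-2)(D-1)/[D(D+1)(D+2)]$, which is part of the theorem and is not supplied by either averaging. You correctly flag it as the obstacle, but the Schur--Weyl/Welch route you sketch is vague and faces a sign problem from the $-2D\Phi_2$ term. The missing ingredient is the identity $\tr(M^2) = D^2\tr(MQ_3)$, which follows from $\tr(A_jA_k) = D(Dp_{jk}-1)$ with $A_j := D|\psi_j\>\<\psi_j|-\bbI$. This recasts the middle inequality as $\tr(M^2) \geq \tr[\bbT(M)^2]$, automatic because the unitary twirl $\bbT$ is an orthogonal projection for the Hilbert--Schmidt inner product; the right-hand side evaluates to the claimed constant using $\bbT(Q_t\otimes\bbI^{\otimes(3-t)}) = (P_{[t]}/\pi_{[t]})\otimes\bbI^{\otimes(3-t)}$.

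This same device also closes the saturation argument, which is unjustified in your proposal: your claim that ``equality forces $\Phi_t$ to their Welch values'' is the desired conclusion, not an argument for it, since one scalar equality cannot by itself pin down three moments. In the paper, equality in the middle step forces $M = \bbT(M)$, and a sequence of partial traces is then needed to unwind this: $Q_1$ is read off from $\tr_B[\Pi(\tr_A M)\Pi]$ with $\Pi$ the antisymmetric projector on $\caH^{\otimes 2}$, then $Q_2$ from $\tr_A M$, and finally $Q_3$ from the expansion of $M$ itself; each is shown to be twirl-invariant in turn, whence $\scrE$ is a 3-design.
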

Here $\Phi_t$ and $\bar{\Phi}_t$ for $t=1,2,3$ are shorthands for $\Phi_t(\scrE)$ and $\bar{\Phi}_t(\scrE)$, respectively.

\subsection{Unitary $t$-designs}
Let $\{U_j, w_j\}_j$ be a weighted set of unitary operators on $\caH$, where  the weights are nonnegative and normalized as before. 
The weighted set $\{U_j, w_j\}_j$ is  a (weighted) unitary $t$-design if it satisfies \cite{DankCEL09, GrosAE07, RoyS09}
\begin{equation}
\sum_{j} w_j U_j^{\otimes t} M \bigl(U_j^{\otimes t}\bigr)^\dag = \int \rmd U U^{\otimes t} M\bigl(U^{\otimes t}\bigr)^\dag  \quad \forall M\in \caL(\caH^{\otimes t}),
\end{equation}
where the integration is over the normalized Haar measure on the unitary group $\rmU(\caH)$. Such a set can reproduce the first $t$ moments of Haar random unitaries. In analogy to \eref{eq:PhitState}, the $t$th frame potential of $\{U_j, w_j\}_j$ is defined as
\begin{equation}
\Phi_t(\{U_j, w_j\}_j):=\sum_{j,k}w_j w_k \bigl|\tr\bigl(U_jU_k^\dag\bigr)\bigr|^{2t}.
\end{equation}
It is known that 
\begin{equation}
\Phi_t(\{U_j, w_j\}_j)\geq \int \rmd U |\tr(U)|^{2t},
\end{equation}
and the lower bound is saturated iff  $\{U_j, w_j\}_j$ forms a unitary $t$-design \cite{DankCEL09, GrosAE07, RoyS09}. 
Here the integral is equal to the sum of squared multiplicities of irreducible components of the $t$th tensor representation of the unitary group. 
It is also equal to the number of permutations of $\{1,2,\ldots, t\}$ that have no increasing subsequence of length larger than $D$  \cite{DiacS94,Rain98}.
Explicit formulas for the integral are known in the following two cases \cite{Scot08},
\begin{equation}\label{eq:FramePmin}
 \int \rmd U |\tr(U)|^{2t}=\begin{cases}
\frac{(2t)!}{t! (t+1)!} &D=2,\\
t!& D\geq t.
\end{cases}
\end{equation}

\section{\label{sec:QuditStab}Qudit stabilizer formalism}\label{sec:stabform}
In preparation for the later study, in this section we review basic concepts and results on the qudit  stabilizer formalism. Here we assume that the Hilbert space $\caH$ is a tensor power of the form  $\caH=\caH_d^{\otimes n}$,  where the local dimension $d$ is a prime; accordingly, the total dimension reads $D=d^n$.  The computational basis of $\caH_d^{\otimes n}$ can be labeled by elements in $\bbF_d^n $, where $\bbF_d$ is the finite field composed of $d$ elements. The special case $d=2$ will be mentioned only briefly.

\subsection{Heisenberg-Weyl group and Clifford Group}

To start with we consider a $d$-dimensional Hilbert space $\caH_d$  with computational basis $\{|j\>\}_{j\in \bbF_d}$, where $d$ is an odd prime.  The phase operator $Z$ and cyclic-shift operator $X$ for a qudit are defined as follows,
\begin{equation}\label{eq:ZX}
Z|j\rangle = \omega^j|j\rangle, \quad X|j\rangle = |j+1\rangle,
\end{equation}
where  $\omega=\omega_d:=\rme^{\frac{2\pi \rmi}{d}}$ and the addition $j+1$ is modulo $d$.  The qudit \emph{Heisenberg-Weyl} (HW) group $\caW(d)$ is generated by $Z$ and $X$,
\begin{align}
\caW(d):=\<X,Z\>=\left\{\omega^j Z^kX^l | j,k,l\in \bbF_d\right\}. 
\end{align}
The elements in  $\caW(d)$ are called Weyl operators; up to phase factors they  can be labeled by vectors in the phase space $\bbF_d^2$. Let $u=(p,q)^\top\in \bbF_d^2$ and define
\begin{align}
W_u=W(p,q):=\omega^{-pq/2}Z^p X^q;  
\end{align}
Then $\caW(d)=\{\omega^jW_u\,|\, j\in \bbF_d, u\in \bbF_d^{2}\}$.

Next, we turn to an $n$-qudit system with Hilbert space $\caH=\caH_d^{\otimes n}$. The HW group $\caW(n,d)$ is the tensor product of $n$ copies of the single-qudit HW group $\caW(d)$;  up to phase factors Weyl operators in $\caW(n,d)$ can be labeled by vectors in the phase space $\bbF_d^{2n}$. Let $\bfu=(p_1, p_2, \ldots, p_n, q_1, q_2,\ldots, q_n)^\top\in \bbF_d^{2n}$ and define
\begin{align}
W_\bfu:=W(p_1, q_1)\otimes W(p_2, q_2)\otimes \cdots \otimes  W(p_n, q_n). 
\end{align}
Then $\caW(n,d)=\{\omega^jW_\bfu\,|\, j\in \bbF_d, \bfu\in \bbF_d^{2n}\}$. 
In addition, the Weyl operators  satisfy the following composition and commutation relations,
\begin{align}
W_\bfu W_\bfv=\omega^{[\bfu,\bfv]/2} W_{\bfu+\bfv},,\quad W_\bfu W_\bfv= \omega^{[\bfu,\bfv]} W_\bfv W_\bfu,\quad \bfu,\bfv\in \bbF_d^{2n},
\end{align}
where  $[\bfu,\bfv]$ denotes the symplectic product defined as follows,
\begin{equation}\label{symprod}
[\bfu,\bfv]:=\bfu^\top \Omega \bfv, \quad \Omega := \begin{pmatrix}
0_{n} & \mathds{1}_{n} \\
-\mathds{1}_{n} & 0_{n}
\end{pmatrix},
\end{equation}
where $\mathds{1}_n$  is the identity operator on $\bbF_d^n$. 
The symplectic group $\Sp(2n,d)$ on $\bbF_d^{2n}$ is the group of matrices that preserve the symplectic product, that is,
\begin{equation}
\Sp(2n,d):=\left\{S\in \bbF_d^{2n\times 2n} \ | \ S^\top \Omega S=\Omega\right\}. 
\end{equation}
By definition we have $[S\bfu,S\bfv]=[\bfu,\bfv]$ for all $S\in \Sp(2n,d)$ and $\bfu,\bfv\in \bbF_d^{2n}$.

The  \emph{Clifford group}  $\Cl(n,d)$ is the normalizer of the HW group $\caW(n,d)$, that is,
\begin{equation}
\Cl(n,d):=\left\{U \in \rmU\bigl(\caH_d^{\otimes n}\bigr)\, | \, U\caW(n,d) U^\dag = \caW(n,d) \right\}.
\end{equation}
By definition the commutation relations between Weyl operators are invariant under Clifford transformations. So every Clifford unitary induces a symplectic transformation on the phase space $\bbF_d^{2n}$. In this way we can establish a map from the Clifford group $\Cl(n,d)$ to the symplectic group $\Sp(2n,d)$, which turns out to be surjective. In the current study, overall phase factors of Clifford unitaries are irrelevant. The Clifford group modulo overall phase factors is also called the Clifford
group (also known as the projective Clifford
group) and is denoted by $\oCl(n,d)$ to distinguish it from $\Cl(n,d)$. When $d$ is an odd prime, the Clifford group  $\oCl(n,d)$ is isomorphic to the affine symplectic group
$\ASp(2n,d)=\Sp(2n,d) \ltimes \bbF_d^{2n}$ \cite{Hein21the}. This result is quite instructive to understanding the structure of the qudit Clifford group.

Incidentally, when $d=2$, the single-qudit HW group $\caW(d)$ reduces to the Pauli group, which is generated by the three Pauli operators. Alternatively, $\caW(d)$ is generated by the two Pauli operators $Z$ and $X$ as defined in \eref{eq:ZX} together with $\rmi \bbI$. The $n$-qudit HW group $\caW(n,d)$ is  the tensor product of $n$ copies of $\caW(d)$, and the  Clifford group  $\Cl(n,d)$ is the normalizer of $\caW(n,d)$ as before.

The second and third frame potential of the Clifford group $\Cl(n,d)$ read \cite{Zhu17MC}
\begin{equation}\label{eq:CliffordFP3}
\Phi_2(\Cl(n,d)) =2,\quad \Phi_3(\Cl(n,d)) = \begin{cases}
2d+1, \quad n=1,\\
2d+2, \quad n \geq 2, 
\end{cases}
\end{equation}
which are applicable whenever $d$ is a prime.
When $d$ is an odd prime, the Clifford group $\Cl(n,d)$ is a unitary 2-design, but not a 3-design.  When $d=2$, by contrast,  $\Cl(n,d)$ is not only a 2-design, but also a 3-design~\cite{Zhu17MC,Webb16}.

\subsection{Clifford orbits, stabilizer codes, and stabilizer states}

Given $|\Psi\>\in \caH_d^{\otimes n}$, denote by $\orb(\Psi)$ the Clifford orbit generated from $|\Psi\>$, that is, 
\begin{align}
\orb(\Psi):=\{U|\Psi\> \, |\, U\in \oCl(n,d) \}. 
\end{align}
Note that overall phase factors are irrelevant to the current study. Here $\orb(\Psi)$ is also regarded as a normalized ensemble of pure states: all states in the orbit have the same weight by default. Every Clifford orbit is a  2-design because the Clifford group is a  2-design; when $d=2$, every Clifford orbit is also  a 3-design because the Clifford group is a  3-design.

A stabilizer group $\caS$ is  an Abelian subgroup  of $\caW(n,d)$ that does not contain the operator $\omega \bbI$, where  $\omega=\rme^{\frac{2\pi \rmi}{d}}$. By definition the order of $\caS$ is a power of $d$ and is upper bounded by $d^n$. The \emph{stabilizer code} $\caC_\caS$ associated with  $\mathcal{S}$ is the common eigenspace of operators in $\mathcal{S}$ with eigenvalue 1 \cite{NielC10book,Gros06}. The corresponding eigenprojector is called a \emph{stabilizer projector} and can be expressed as follows,
\begin{equation}
P_{\mathcal{S}}=\frac{1}{|\mathcal{S}|} \sum_{U \in \mathcal{S}} U.
\end{equation}
Note that $\rk P_{\mathcal{S}}=\tr P_{\mathcal{S}}=d^n/|\caS|$.

If  the stabilizer group $\caS$ has the maximum order of $d^n$, then the stabilizer code $\caC_\caS$ is one-dimensional and can be represented by a normalized state, which is known as  a \emph{stabilizer state}. In addition, the common eigenbasis of $\caS$ is called a \emph{stabilizer basis}. Note that every basis state in the stabilizer basis is also a stabilizer state associated with some stabilizer group.  The set of all $n$-qudit stabilizer states is denoted by $\Stab(n,d)$ henceforth, which is also regarded as a normalized ensemble of pure states. It is known that all stabilizer codes of the same dimension form one orbit under the action of the Clifford group $\oCl(n,d)$. Notably, all stabilizer states in $\Stab(n,d)$ form one orbit \cite{Gros06}.  When $d=2$, the ensemble $\Stab(n,d)$ is a 3-design; when $d$ is an odd prime, by contrast,  $\Stab(n,d)$ is only a 2-design, but not a 3-design  according to \rcite{KuenG13}, which derived a recursion formula for the $t$th frame potential; see \eref{eq:Phi3Stab} in \sref{sec:3rdmomentStab} for the third frame potential.

\subsection{Commutant of the $t$th Clifford tensor power}
Schur-Weyl duality states that any operator on $\caH^{\otimes t}$ that commutes with $U^{\otimes t}$ for all unitaries $U$ on $\caH$ is necessarily a linear combination of permutation operators. Recently, Gross,  Nezami, and Walter (GNW) proved a theorem similar in flavor for the Clifford group \cite{GrosNW21}. Their work is 
the basis for understanding the third moments of
Clifford orbits. Here we recapitulate their main results that are relevant to the current study, assuming that the local dimension $d$ is an odd prime.

\subsubsection{\label{sec:SLSspanningSet}Stochastic Lagrangian subspaces and a spanning set of the commutant}

Let $\mathbf{1}_{2t} = (1,1,\dots,1)^\top$ be the column vector in $\bbF_d^{2t}$ whose entries are all equal to 1. A subspace $\caT \leq  \bbF_d^{2t}$ (the inequality means $\caT$ is a subspace of $\bbF_d^{2t}$) is a \emph{stochastic Lagrangian subspace} if it satisfies the following three conditions \cite{GrosNW21}:
\begin{enumerate}
	\item $\bfx \cdot \bfx - \bfy \cdot \bfy = 0 $ for any $(\bfx;\bfy) \in \caT$.
	\item $\caT$ has dimension $t$.
	\item $\mathbf{1}_{2t} \in \caT$.
\end{enumerate}
Here $\bfx$ and $\bfy$ are regarded as column vectors in  $\bbF_d^{t}$, and $(\bfx;\bfy)$ is regarded as a column vector in $\bbF_d^{2t}$ obtained by concatenating $\bfx$ and $\bfy$ vertically [in contrast with the notation $(\bfx,\bfy)$, which means horizontal concatenation]; all arithmetic are modulo $d$. Condition~1 means the subspace $\caT$ is totally isotropic with respect to the quadratic form $\mathfrak{q}$: $\bbF_d^{2t} \to \bbF_d$ defined as $\mathfrak{q}(\bfx,\bfy):=\bfx \cdot \bfx - \bfy \cdot \bfy$. 
Condition~2 requires $\caT$ to have the maximal possible dimension. A subspace satisfying the first two conditions
is called a  Lagrangian. Condition~3 means $\caT$ is stochastic.
Denote by $\Sigma_{t,t}(d)$ the set of stochastic Lagrangian subspaces in  $\bbF_d^{2t}$; its
cardinality reads \cite{GrosNW21}
\begin{equation}\label{eq:SigmattOrder}
|\Sigma_{t,t}(d)|=\prod_{k=0}^{t-2}(d^k+1).
\end{equation}

Given any subspace $\caT$ in  $\bbF_d^{2t}$, define the operators
\begin{align}\label{eq:rRT}
r(\caT):= \sum_{(\bfx;\bfy) \in \caT} |\bfx\> \<\bfy|, \quad 
R(\caT):= r(\caT)^{\otimes n},
\end{align}
which act on $\caH_d^{\otimes t}$ and $\bigl(\caH_d^{\otimes t}\bigr)^{\otimes n}$, respectively. Here $R(\caT)$ can also be regarded as an operator acting on $\bigl(\caH_d^{\otimes n}\bigr)^{\otimes t}$ given the isomorphism between $(\caH_d^{\otimes t})^{\otimes n}$ and  $\bigl(\caH_d^{\otimes n}\bigr)^{\otimes t}$.  This operator  is particularly important  when $\caT$ is a stochastic Lagrangian subspace, that is, $\caT\in \Sigma_{t,t}(d)$. In this case,
according to GNW \cite{GrosNW21}, $R(\caT)$  belongs to the commutant of
\begin{align}
\Cl(n,d)^{\totimes t}:=\{U^{\otimes t}\,|\, U\in \Cl(n,d)\},
\end{align}
 where the notation $\totimes$ means diagonal tensor power; moreover, 
the set  $\{R(\caT)\}_{\caT\in \Sigma_{t,t}(d)}$ spans the commutant of $\Cl(n,d)^{\totimes t}$ when  $n\geq t-1$. GNW also conjectured that the condition $n\geq t-1$ is not necessary. We shall prove their speculation for the special case of $t=3$ in \sref{sec:Spanning} by comparing the rank of the Gram matrix of $\{R(\caT)\}_{\caT\in \Sigma_{t,t}(d)}$  with the dimension of the commutant, which is equal to the $t$th frame potential of the Clifford group as reproduced in \eref{eq:CliffordFP3}. The definitions in \eref{eq:rRT} admit straightforward generalization to  any subset in $\Sigma_{t,t}(d)$, which is very convenient to later discussion,
\begin{align}\label{eq:RscrT}
r(\scrT):=\sum_{\caT\in \scrT} r(\caT),\quad  R(\scrT):=\sum_{\caT\in \scrT} R(\caT),\quad \scrT\subseteq \Sigma_{t,t}(d).
\end{align}

The \emph{diagonal stochastic Lagrangian subspace} in  $\bbF_d^{2t}$ is defined as \cite{GrosNW21}
\begin{equation}\label{eq:Delta}
\Delta :=\left\{(\bfx;\bfx) \ | \  \bfx \in \bbF_d^t \right\}. 
\end{equation}
It is of special interest because $r(\Delta)$ and $R(\Delta)$ are the identity operators on $\caH_d$ and $\caH_d^{\otimes n}$, respectively. In addition, it is useful to computing the trace of $R(\caT)$ for $\caT\in \Sigma_{t,t}(d)$. Define
\begin{align}\label{eq:TDeltaDef}
\caT_\Delta:=\caT\cap \Delta, \quad 	
\caT_{\tDelta}:=\{\bfx\ | \ (\bfx;\bfx)\in \caT_\Delta\};
\end{align}
then  $\mathbf{1}_{2t}\in \caT_\Delta$ and  $\mathbf{1}_t\in \caT_\tDelta$ by definition. In addition,
\begin{align}\label{eq:RTtr}
\tr R(\caT)&=[\tr r(\caT)]^n=d^{n\dim \caT_\Delta}=d^{n\dim \caT_{\tDelta}}\quad \forall \caT\in \Sigma_{t,t}(d). 
\end{align} 
This formula admits the following generalization,
\begin{align}\label{eq:RT1T2tr}
&\tr\bigl[R(\caT_1)^\dag R(\caT_2)\bigr]=\bigl\{\tr\bigl[r(\caT_1)^\dag r(\caT_2)\bigr]\bigr\}^n=d^{n\dim(\caT_1 \cap \caT_2)}\quad \forall \caT_1, \caT_2\in \Sigma_{t,t}(d). 
\end{align}

\subsubsection{\label{sec:Otd}Stochastic orthogonal group}

A  $t \times t$ matrix $O$ over $\bbF_d$ is a stochastic isometry if it satisfies the following two conditions:
\begin{enumerate}
	\item $O\bfx\cdot O\bfx=\bfx\cdot \bfx $ for any $\bfx \in \bbF_d^t$.
	\item $O \cdot \mathbf{1}_t = \mathbf{1}_t $, where $\mathbf{1}_t=(1,1,\ldots, 1)^\top$.
\end{enumerate}
The two conditions imply that $O^\top O=OO^\top=\mathds{1}$, where $\mathds{1}$ denotes the identity operator on $\bbF_d^t$.
The stochastic orthogonal group $O_t(d)$ is the group  of all stochastic isometries on $\bbF_d^t$ \cite{GrosNW21}. Note that $O_t(d)$ contains the group $S_t$ of permutation matrices on $\bbF_d^t$. 
The significance of this group is two-fold. First, given any $O \in O_t(d)$, the subspace 
\begin{equation}\label{eq:TOdef}
\caT_O:=\left\{ (O\bfx;\bfx)\, | \, \bfx \in \bbF_d^t \right\}
\end{equation}
is a stochastic Lagrangian subspace in $ \bbF_d^{2t}$, that is,  $\caT_O\in \Sigma_{t,t}(d)$. For simplicity of notation, we shall write 
\begin{equation}
r(O):=r(\caT_O), \quad
R(O):=R(\caT_O).
\end{equation}
Note that $r(O)$ and $R(O)$ are permutation matrices for any $O \in O_t(d)$.

Second, $O_t(d)$ is a natural symmetry group of $\Sigma_{t,t}(d)$ given the following left and right actions  \cite{GrosNW21},
\begin{equation}\label{eq:OTTO}
\begin{aligned}
O\caT:=\{(O\bfx;\bfy)\ | \ (\bfx;\bfy) \in \caT\},\quad
\caT O:=\bigl\{\bigl(\bfx; O^\top\bfy\bigr)\ | \ (\bfx;\bfy) \in \caT\bigr\}.
\end{aligned}
\end{equation}
By definition we have $R(O)R(\caT)R(O')=R(OTO')$ for $\caT\in \Sigma_{t,t}(d)$ and $O, O'\in O_t(d)$, so the actions are compatible with the composition of the operators $R(O)$, $R(\caT)$, and $R(O')$. These actions can be generalized to any subset $\scrT$ of $\Sigma_{t,t}(d)$ in a straightforward way: for example, $O\scrT=\{O\caT\,|\,\caT\in \scrT\}$. Based on the above actions, we can decompose $\Sigma_{t,t}(d)$ into a disjoint union of double cosets:
\begin{equation}\label{eq:doublecosets}
\Sigma_{t,t}(d) = O_t(d)\caT_1 O_t(d) \cup \cdots \cup O_t(d)\caT_k O_t(d),
\end{equation}
where $\caT_1, \caT_2,\dots, \caT_k\in \Sigma_{t,t}(d)$ are  representatives of the double cosets. Without loss of generality, we can choose $\caT_1=\Delta$, 
which is tied  to the special double coset $O_t(d)\Delta O_t(d)=\left\{\caT_O \ | \ O\in O_t(d)\right\}$.

When $t=2$,  $O_2(d)$ coincides with the symmetric group $S_2$, and $\Sigma_{t,t}(d)=\{\caT_O \ | \ O \in O_2(d)\}$ has only one double coset,
so all stochastic Lagrangian subspaces 
correspond to permutations. This result is consistent with the fact that the Clifford group with a prime local dimension is a unitary 2-design, and it is even a 3-design in the case of qubits \cite{Zhu17MC,Webb16}. 

\subsubsection{\label{sec:Defect}Defect subspaces}
To characterize general double cosets in the decomposition in \eref{eq:doublecosets}, we need to introduce some additional concepts. A nonzero vector $\bfx$ in  $\bbF_d^t$ is a \emph{defect vector} if $\bfx\cdot \bfx =\bfx\cdot \mathbf{1}_t=0$. A  subspace $\caN \leq \bbF_d^t$ is a \emph{defect subspace} if every nonzero vector in $\caN$ is a defect vector. 
The special defect subspace $\left\{\mathbf{0}_t\right\}$ associated with the zero vector  $\mathbf{0}_t$
in $\bbF_d^t$ is called trivial, while other defect subspaces are nontrivial.
By definition it is easy to verify that any two vectors $\bfx,\bfy$ in a defect subspace $\caN$ are orthogonal in the sense that  $\bfx\cdot \bfy=0$, given that  $d$ is an odd prime by assumption. 
Let  $\caN^{\perp}$ denote the orthogonal complement of $\caN$, that is, $\caN^{\perp}:=\left\{\bfy\in \bbF_d^t \,|\, \bfy\cdot \bfx = 0 \,\, \forall \bfx \in \caN\right\}$; then $\caN \leq \caN^{\perp}$. 
So the dimension of any defect subspace in $\bbF_d^t$ cannot surpass $t/2$. 
In analogy to \eref{eq:OTTO}, the left and right actions of $O\in O_t(d)$ on $\caN$ are defined as follows,
\begin{align}
O\caN := \{O\bfx\, | \, \bfx \in \caN \},\;\;
\caN O := \bigl\{O^\top \bfx \, | \, \bfx \in \caN \bigr\}.
\end{align}

The left and right defect subspaces of a stochastic Lagrangian subspace $\caT$ in $\Sigma_{t,t}(d)$ are defined as  \cite{GrosNW21}
\begin{align}
\caT_{\LD}:=\left\{ \bfx\,|\, (\bfx;\mathbf{0}_t) \in \caT\right\}, \quad \caT_{\RD}:=\left\{ \bfy \,|\, (\mathbf{0}_t;\bfy) \in \caT\right\},
\end{align}
which satisfy
\begin{align}
\dim \caT_{\LD}=\dim \caT_{\RD},\quad  \mathbf{1}_t\in \caT_{\LD}  \; \mbox{iff}\; \mathbf{1}_t\in \caT_{\RD}. 
\end{align}
Conversely, suppose $\caN,\caM \leq \bbF_d^t$ are two defect subspaces of the same dimension and $\mathbf{1}_t \in \caN$ iff $\mathbf{1}_t \in \caM$. Then  there  exists $O \in O_t(d)$ such that $O\caN=\caN O^\top=\caM$. In addition, there exists $\caT \in \Sigma_{t,t}(d)$ such that $\caT_{\LD}=\caN$ and $\caT_{\RD}=\caM$. Moreover, 
according to  Corollary 4.21 in \rcite{GrosNW21}, $\caT,\caT'\in \Sigma_{t,t}(d)$ belong to the same double coset, that is, $\caT' \in O_t(d)\caT O_t(d)$, iff the following conditions hold,
\begin{equation}\label{eq:SameCosetCon}
\dim \caT_{\LD}=\dim \caT'_{\LD}, \quad \mathbf{1}_t\in \caT_{\LD}  \; \mbox{iff}\; \mathbf{1}_t\in \caT'_{\LD}.
\end{equation}
Notably,  all stochastic Lagrangian subspaces whose defect subspaces are trivial belong to the same double coset, namely, $O_t(d)\Delta O_t(d)$.

\subsubsection{Connection with Calderbank-Shor-Steane codes}

Given any defect subspace $\caN$ in $\bbF_d^t$, a stochastic Lagrangian subspace $\caT$ with $\caT_{\LD}=\caT_{\RD}=\caN$ can be constructed as follows \cite{GrosNW21},
\begin{align}\label{eq:Tcss}
\caT = \left\{ (\bfx+\bfz;\bfx+\mathbf{w})\,|\, [\bfx] \in \caN^{\perp}/\caN, \bfz, \mathbf{w} \in \caN\right\}= \left\{(\bfx;\bfy)\,|\, \bfy \in \caN^{\perp}, \bfx\in [\bfy]\right\} =\left\{(\bfx;\bfy)\,|\, \bfx \in \caN^{\perp}, \bfy\in [\bfx]\right\},
\end{align}
where $[\bfx]$ denotes the image of $\bfx$ in the quotient space $\caN^{\perp}/\caN$. This stochastic Lagrangian subspace 
is of special interest because the operator $r(\caT)$ is tied to a Calderbank-Shor-Steane (CSS) code  \cite{CaldS96,Stea96}. Since  $\bfx \cdot \bfx =0 $ for all $\bfx \in \caN$ by definition,  we can construct a CSS stabilizer group as follows,
\begin{equation}
\mathrm{CSS}(\caN):=\left\{W_{(\bfp;\mathbf{0}_t)} W_{(\mathbf{0}_t;\bfq)} \,|\, \bfp,\bfq \in \caN\right\}.
\end{equation}
The corresponding stabilizer  projector reads
\begin{equation}\label{eq:CSSprojCosetS}
P_{\mathrm{CSS}(\caN)}=\sum_{[\bfx] \in \caN^\perp/\caN} |\caN,[\bfx]\> \<\caN,[\bfx]|,\quad |\caN,[\bfx]\>:= |\caN|^{-1/2} \sum_{\bfz \in \caN} |\bfx+\bfz\>,
\end{equation}
where $|\caN,[\bfx]\>$ is called a \emph{coset state}.  The relation between $r(\caT)$ and $P_{\mathrm{CSS}(\caN)}$ is clarified below \cite{GrosNW21}, 
\begin{equation}\label{eq:rTCSS}
r(\caT)=|\caN|P_{\mathrm{CSS}(\caN)}.
\end{equation}
In view of this relation,   subspaces of the form  in \eref{eq:Tcss} are called  stochastic Lagrangian subspaces of CSS type. 
If $\caN$ in \eref{eq:Tcss} is the trivial defect subspace, then $\caT=\Delta$, so  $\Delta$ is also of CSS type.

According to the criterion in \eref{eq:SameCosetCon},  any double coset in \eref{eq:doublecosets} contains a stochastic Lagrangian subspace of CSS type.  In addition, any stochastic Lagrangian subspace in $\bbF_d^{2t}$ can be expressed as $O\caT$, where $O\in O_t(d)$  and $\caT\in \Sigma_{t,t}(d)$ is a stochastic Lagrangian subspace of CSS type; the same conclusion holds if $O\caT$ is replaced by $\caT O$, although  usually $O\caT\neq \caT O$.

\subsection{Moments of stabilizer states}
The $t$th moment operator and normalized moment operator of stabilizer states are defined as  
\begin{align}\label{eq:MomentOtthDef}
Q(n,d,t)&: = Q_t(\Stab(n,d))= \mathbb{E}_{\Stab(n,d)}\bigl[(|S\> \<S|)^{\otimes t}\bigr], \quad \bQ(n,d,t):=\pi_{[t]}Q(n,d,t). 
\end{align}
Since the set of stabilizer states forms one orbit of the Clifford group,  $Q(n,d,t)$ and  $\bQ(n,d,t)$ commute with $U^{\otimes t}$ for all $U \in \Cl(n,d)$ and thus can be expressed 
as linear combinations of $R(\caT)$ for $\caT\in \Sigma_{t,t}(d)$ when $n\geq t-1$, where  $\Sigma_{t,t}(d)$ is the set of stochastic Lagrangian subspaces in $\bbF_d^{2t}$. It turns out the assumption $n\geq t-1$ is not necessary. 
According to  \rcite{GrosNW21}, $Q(n,d,t)$ and $\bQ(n,d,t)$ can be expressed as follows,
\begin{align}\label{eq:MomentOtth}
Q(n,d,t)&= \frac{1}{Z_{n,d,t}} \sum_{\caT \in \Sigma_{t,t}(d)} R(\caT),\quad \bQ(n,d,t)= \frac{\pi_{[t]}}{Z_{n,d,t}} \sum_{\caT \in \Sigma_{t,t}(d)} R(\caT),
\quad Z_{n,d,t} = D \prod_{k=0}^{t-2}(d^k+D). 
\end{align}
This result is applicable whenever $d$ is a prime, although we have not discussed the structure of $\Sigma_{t,t}(d)$ for the special case $d=2$. It is the basis for understanding the  operators $Q(n,d,t)$ and  $\bQ(n,d,t)$.

\section{\label{sec:SLS}Stochastic Lagrangian subspaces in  $\bbF_d^6$}
In preparation for studying the commutant of the third Clifford tensor power and the third moments of Clifford orbits here we clarify the basic properties of the stochastic orthogonal group $O_3(d)$ and the set $\Sigma(d):=\Sigma_{3,3}(d)$ of stochastic Lagrangian subspaces, assuming that $d$ is an odd prime.

\subsection{\label{sec:O3d}Stochastic orthogonal group $O_3(d)$}
When $d=3$, $O_3(d)$ coincides with the symmetric group $S_3$ generated by 
\begin{align}\label{eq:zetatau}
\zeta:=\begin{pmatrix}
0&0&1\\
1&0&0\\
0&1&0
\end{pmatrix},\quad \tau_{12}:=\begin{pmatrix}
0&1&0\\
1&0&0\\
0&0&1
\end{pmatrix}, 
\end{align}
which are tied to  the cyclic permutation  $1 \mapsto 2 \mapsto 3$ and  transposition $1 \leftrightarrow 2$, respectively.
Let $\tau_{13} =\zeta \tau_{12}$ and $\tau_{23}=\zeta^2 \tau_{12}$; 
then  $S_3=\{\mathds{1}, \zeta,\zeta^2, \tau_{12},\tau_{23},\tau_{13}\}$. 
In general, $O_3(d)$ contains $S_3$ as a subgroup.  Note that $\zeta$ and $\tau_{12}$ can be regarded as matrices over $\bbF_d$ for any prime $d$.

Denote by $\caD_h$ the dihedral group of order $2h$, which is the symmetry group of an $h$-sided regular polygon with $h\geq3$.  The following lemma is proved in \aref{appendix:O3Dh}. 
\begin{lemma}\label{lem:O3Dh}
	Suppose $d$ is an odd prime. Then $O_3(d)$ is isomorphic to $\caD_h$ and has order $2h$, where 
	\begin{equation}\label{eq:hvalue}
	h=
	\begin{cases}
	d+1 \quad & \text{if} \ d = 2 \mmod 3,\\
	d-1 \quad & \text{if} \ d = 1 \mmod 3,\\
	3 \quad & \text{if} \ d = 3.
	\end{cases}
	\end{equation}
\end{lemma}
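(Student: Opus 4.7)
The plan is to identify $O_3(d)$ with the isometry group of a 2-dimensional quadratic space and invoke the standard classification of 2-dimensional orthogonal groups over $\bbF_d$. The starting observation is that conditions~1 and~2 in the definition of $O_3(d)$ say precisely that $O$ is orthogonal with respect to the standard dot product on $\bbF_d^3$ and fixes the vector $\mathbf{1}_3$. Thus $O_3(d)$ is the stabilizer of $\mathbf{1}_3$ inside the full orthogonal group of $(\bbF_d^3,\bfx\cdot\bfx)$.

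Assume first that $d\neq 3$, so that $\mathbf{1}_3\cdot\mathbf{1}_3=3\neq 0$. Then $\bbF_d^3$ decomposes as an orthogonal direct sum $\bbF_d\mathbf{1}_3\oplus V$ with $V:=\mathbf{1}_3^{\perp}$ a nondegenerate $2$-dimensional quadratic space, and restriction to $V$ yields a group isomorphism $O_3(d)\cong O(V)$, since any $O\in O_3(d)$ automatically preserves $V$. A generic element of $V$ has the form $(a,b,-a-b)$ with squared norm $2(a^2+ab+b^2)$, so $V$ contains an isotropic vector iff $a^2+ab+b^2=0$ has a nontrivial solution; completing the square as $(2a+b)^2=-3b^2$ reduces this to the condition that $-3$ be a nonzero square in $\bbF_d$. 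The standard quadratic-reciprocity identity $\bigl(\tfrac{-3}{d}\bigr)=\bigl(\tfrac{d}{3}\bigr)$ shows this happens iff $d\equiv 1\pmod 3$. Consequently $V$ is a hyperbolic plane when $d\equiv 1\pmod 3$ and anisotropic when $d\equiv 2\pmod 3$.

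The classification of $2$-dimensional orthogonal groups over $\bbF_d$ then finishes the main case: over a hyperbolic plane $O(V)$ is generated by the diagonal torus $\bbF_d^{\times}$ and an involution swapping the two isotropic lines, giving $O(V)\cong \caD_{d-1}$; over an anisotropic plane the special orthogonal subgroup is canonically the norm-one subgroup of $\bbF_{d^2}^{\times}$ (cyclic of order $d+1$), and the conjugation involution doubles the order, giving $O(V)\cong \caD_{d+1}$. This yields $h=d-1$ and $h=d+1$ respectively, matching the claim.

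Finally, the case $d=3$ must be handled separately, since $\mathbf{1}_3\cdot\mathbf{1}_3=0$ makes the above splitting collapse. Here I would argue directly: the vectors in $\bbF_3^3$ of squared norm $1$ are exactly the six signed standard basis vectors $\pm e_1,\pm e_2,\pm e_3$, so each column of any $O\in O_3(3)$ is one of these; column-orthogonality forces the underlying positions to form a permutation, and the constraint $O\mathbf{1}_3=\mathbf{1}_3$ then forces every sign to be $+1$. Hence $O_3(3)=S_3\cong \caD_3$, matching $h=3$. The main obstacle in this plan is simply the careful bookkeeping of the classification step, namely correctly identifying the two isometry classes of planes with their dihedral orthogonal groups, together with the separate direct verification required in the degenerate case $d=3$.
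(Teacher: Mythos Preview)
Your proof is correct and follows essentially the same architecture as the paper's: reduce $O_3(d)$ to the orthogonal group of the $2$-plane $V=\mathbf{1}_3^{\perp}$ and then invoke the dichotomy $O(V)\cong\caD_{d\pm 1}$, with the degenerate case $d=3$ handled by a direct enumeration. The only genuine difference is in how the sign in $d\pm 1$ is resolved: you determine the isometry class of $V$ directly by checking for isotropic vectors and appealing to the quadratic-reciprocity identity $\bigl(\tfrac{-3}{d}\bigr)=\bigl(\tfrac{d}{3}\bigr)$, whereas the paper's main proof observes that $S_3\subseteq O_3(d)$ forces $6\mid |O_3(d)|$, which by itself singles out the correct case. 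Your route is slightly more informative (it identifies the plane type, which matches the paper's independent analysis of defect subspaces), while the paper's divisibility trick is a bit shorter; the paper also records an alternative order count (Appendix~\ref{app:O3element}) via quadratic reciprocity, closer in spirit to your argument.
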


Suppose $O\in O_3(d)$ and $a,b,c$ are the three entries in  a  column of $O$; then $a^2+b^2+c^2=a+b+c=1$.
Moreover, any row or column of $O$ is a permutation of $a,b,c$. 
If in addition $a,b,c$ are the three entries in the first column, then $O$ is equal to either $O_\even$ or $O_\odd$ (see \rcite{NezaW20} for the case  $d=2\mmod 3$), where
\begin{equation}\label{eq:Oevenodd}
O_\even=aI + b \zeta + c \zeta^{-1}=
\begin{pmatrix}
a & c & b \\
b & a & c \\
c & b & a 
\end{pmatrix},
\quad O_\odd=a\tau_{23} + b \tau_{12} + c \tau_{13}=
\begin{pmatrix}
a & b & c \\
b &c & a \\
c & a & b    
\end{pmatrix}. 
\end{equation}
The two stochastic isometries satisfy 
\begin{equation}\label{eq:Odet}
\det(O_\even)=-\det(O_\odd)=1, \quad O_\even\zeta=\zeta O_\even,
\quad
O_\odd^2 =\mathds{1},\quad  O_{\odd}\zeta O_{\odd}=\zeta^{-1}=\zeta^2. 
\end{equation}
Henceforth $O_\even$ ($O_\odd$) is  called an even (odd) stochastic isometry. Denote by $O_3^\even(d)$ ($O_3^\odd(d)$) the set of even (odd) stochastic isometries,
\begin{align}
O_3^\even(d):=\{O\in O_3(d)\ |\  \det(O)=1\},\quad
O_3^\odd(d):=\{O\in O_3(d)\ |\ \det(O)=-1\};
\end{align} 
then  $O_3(d)=O_3^\even(d)\sqcup O_3^\odd(d)$, where the symbol "$\sqcup$" denotes the union of disjoint sets. Moreover,  $O_3^\even(d)$ is the unique cyclic  subgroup of $O_3(d)$ of index 2. Additional results about the group $O_3(d)$ can be found in \aref{app:sec:O3dAux}.

\subsection{Nontrivial defect subspaces in   $\bbF_d^3$}
To determine the stochastic Lagrangian subspaces in $\Sigma(d)$, we first need to determine nontrivial defect subspaces in  $\bbF_d^3$, which are necessarily  one dimensional according to \sref{sec:Defect}. 
When $d=3$, there is only one  nontrivial defect subspace, namely, 
\begin{align}\label{eq:DefectSpaced3}
\tcaN:=\{(0;0;0),(1;1;1),(2;2;2)\}. 
\end{align} 
In the case $d = 1\mmod  3$, let $\xi$ be an order-3 element in $\bbF_d^\times:=\bbF_d\setminus \{0\}$ and define 
\begin{equation}\label{eq:TwoDefectSpaces}
\bfg_0 : =(1,\xi,\xi^2)^\top, \quad   \tcaN_0:=\spa(\bfg_0), \quad 
\bfg_1 : =(1,\xi^2,\xi)^\top,\quad \tcaN_1:=\spa(\bfg_1). 
\end{equation}
Then it is easy to verify that  
\begin{equation}\label{eq:g12Properties}
\zeta \bfg_0=\xi^2 \bfg_0, \quad \zeta \bfg_1=\xi \bfg_1, \quad 
\mathbf{1}_3\cdot \bfg_j=\bfg_j\cdot \bfg_j=0 \quad \forall j=0,1, 
\end{equation}
where $\zeta\in S_3$ is the cyclic permutation defined in \eref{eq:zetatau}.
So the three vectors $\mathbf{1}_3, \bfg_0, \bfg_1$ are linearly independent and form a basis of $\bbF_d^3$. In addition,  $\bfg_0, \bfg_1$ are defect vectors  and $\tcaN_0,\tcaN_1$ are two distinct nontrivial defect subspaces.
Actually, they are the only nontrivial defect subspaces as shown in \lref{lem:defect} below and proved in \aref{app:DefectProof}. By virtue of \eref{eq:g12Properties} we can further deduce that
\begin{align}\label{eq:NbotN}
\tcaN_j^{\perp} &=\spa(\{\bfg_i, \mathbf{1}_3\}), \quad
\tcaN_j^{\perp}/\tcaN_j = \{ [\bfx]\ |\ \bfx = x\cdot \mathbf{1}_3,\ x \in \bbF_d\} \quad   \forall j=0,1.
\end{align}

\begin{lemma}\label{lem:defect}
	Suppose $d$ is an odd prime. If $d=3$, then $\tcaN$ defined in \eref{eq:DefectSpaced3} is the only nontrivial defect subspace in 
	$\bbF_d^3$. If $d=2\mmod 3$, then there are no nontrivial defect subspaces in $\bbF_d^3$. If 
	$d = 1\mmod  3$, then  $\tcaN_0$ and $\tcaN_1$ defined in \eref{eq:TwoDefectSpaces} are the only two nontrivial defect subspaces in $\bbF_d^3$.   
\end{lemma}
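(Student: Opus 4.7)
The plan is to reduce the problem to counting one-dimensional defect subspaces, since by the dimension bound in \sref{sec:Defect} any defect subspace of $\bbF_d^3$ has dimension at most $t/2=3/2$, and hence every nontrivial defect subspace is spanned by a single defect vector (unique up to a nonzero scalar). So the task becomes: classify the nonzero solutions of
\[
x_1+x_2+x_3=0,\qquad x_1^2+x_2^2+x_3^2=0
\]
in $\bbF_d^3$, modulo the action of $\bbF_d^\times$, and check in each case that the subspaces exhibited in the statement exhaust these.

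The first step is to eliminate $x_3=-(x_1+x_2)$ from the linear constraint and plug into the quadratic one. After a short calculation (and dividing by $2$, which is invertible since $d$ is odd) this reduces to the single binary quadratic equation
\[
x_1^2+x_1x_2+x_2^2=0.
\]
Treating this as a quadratic in $x_1$ with $x_2\ne0$ fixed, its discriminant is $-3x_2^2$, so nonzero solutions exist iff $-3$ is a square in $\bbF_d$, i.e., iff $\bbF_d$ contains a primitive cube root of unity $\xi$. When $x_2=0$ one forces $x_1=0$ (and then $x_3=0$) so no extra solutions arise from this boundary case.

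The main arithmetic input, which I expect to be the least routine step, is the classical fact that $-3$ is a square in $\bbF_d$ iff $d=1\mmod 3$ (with $d=3$ handled separately). This is equivalent to $\bbF_d$ containing a primitive cube root of unity, and can be cited directly from the appendix material on multiplicative characters; the three congruence classes $d=3$, $d=2\mmod 3$, $d=1\mmod 3$ naturally structure the proof.

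Finally, I would read off the classification in each case. For $d=2\mmod 3$, $-3$ is a nonsquare, so the only solution is $(0,0,0)$ and no nontrivial defect subspace exists. For $d=1\mmod 3$, the two roots of $t^2+t+1=0$ in $\bbF_d$ are $\xi$ and $\xi^2$, giving exactly the two rank-one solution lines spanned by $\bfg_0=(1,\xi,\xi^2)^\top$ and $\bfg_1=(1,\xi^2,\xi)^\top$ (up to relabeling the roles of $\xi,\xi^2$), which are precisely $\tcaN_0$ and $\tcaN_1$ from \eref{eq:TwoDefectSpaces}; one checks they are distinct because $\bfg_0$ and $\bfg_1$ are linearly independent over $\bbF_d$. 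For $d=3$, the equation $x_1^2+x_1x_2+x_2^2\equiv 0\pmod 3$ reduces to $(x_1-x_2)^2\equiv 0$, forcing $x_1=x_2$ and hence $x_3=x_1$, so the only nontrivial defect line is $\tcaN=\spa(\mathbf{1}_3)$ in \eref{eq:DefectSpaced3}. This completes the case analysis and matches the statement of the lemma.
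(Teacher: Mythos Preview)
Your proof is correct and follows essentially the same approach as the paper: reduce to one-dimensional defect subspaces, eliminate $x_3$ via the linear constraint to obtain $x_1^2+x_1x_2+x_2^2=0$, and then analyze this binary quadratic case by case. The only cosmetic difference is that the paper solves the quadratic by multiplying through by $(x_1-x_2)$ to get $x_1^3=x_2^3$ and then looks for primitive cube roots of unity directly, whereas you compute the discriminant $-3x_2^2$ and invoke the equivalent criterion that $-3$ is a square in $\bbF_d$; these are two phrasings of the same arithmetic fact.
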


Next, we clarify the action of $O_3(d)$ on nontrivial defect subspaces. The following lemma is also proved in \aref{app:DefectProof}. 
\begin{lemma}\label{lem:ONNO}
	Suppose $d$ is an odd prime. If $d=3$ and $\tcaN$ is the defect subspace defined in  \eref{eq:DefectSpaced3}, then	 
	\begin{align}\label{eq:ONNOd3}
	O \tcaN=\tcaN O=\tcaN  \quad \forall O\in O_3(d). 
	\end{align}
	If 
	$d = 1 \mmod  3$ and $\tcaN_0$, $\tcaN_1$ are the two defect subspaces defined in \eref{eq:TwoDefectSpaces}, then
	\begin{align}
	O \tcaN_j &=\tcaN_j O =\tcaN_j \quad \forall O \in O_3^\even(d), \quad 
	O \tcaN_j =\tcaN_j O =\tcaN_{\bar{j}} \quad \forall O \in O_3^\odd(d),\quad  j=0,1, \label{eq:ONNO}
	\end{align}	
	where $\bar{j}=1-j$. 
\end{lemma}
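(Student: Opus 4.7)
The plan is to pin down the action of $O_3(d)$ on nontrivial defect subspaces by combining the classification in \lref{lem:defect} with the commutation relations recorded in \eqref{eq:Odet} and the eigenvector identities \eqref{eq:g12Properties}. The starting observation I would make is that $O_3(d)$ preserves the set of defect vectors: for any $O\in O_3(d)$ and any defect vector $\bfx\in\bbF_d^3$, one has $O\bfx\cdot O\bfx=\bfx\cdot\bfx=0$ and $O\bfx\cdot\mathbf{1}_3=O\bfx\cdot O\mathbf{1}_3=\bfx\cdot\mathbf{1}_3=0$, using that $O$ is a stochastic isometry with $O\mathbf{1}_3=\mathbf{1}_3$. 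Hence $O$, and similarly $O^\top\in O_3(d)$, permute the set of nontrivial defect subspaces, which by \lref{lem:defect} contains exactly one element when $d=3$ and exactly the two elements $\tcaN_0,\tcaN_1$ when $d=1\mmod 3$.

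The case $d=3$ is then immediate, since $\tcaN=\spa(\mathbf{1}_3)$ and both $O$ and $O^\top$ fix $\mathbf{1}_3$. For $d=1\mmod 3$, only two possibilities remain for each $O$---either it fixes each $\tcaN_j$ or swaps the two---so I would decide which happens by using that $\bfg_0,\bfg_1$ are, by \eqref{eq:g12Properties}, eigenvectors of $\zeta$ with the distinct eigenvalues $\xi^2,\xi\in\bbF_d^\times$; thus $\tcaN_0$ and $\tcaN_1$ are precisely the two nontrivial $\zeta$-eigenspaces in $\bbF_d^3$. For $O\in O_3^\even(d)$, \eqref{eq:Odet} gives $O\zeta=\zeta O$, so $O$ preserves each $\zeta$-eigenspace and hence fixes each $\tcaN_j$. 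For $O\in O_3^\odd(d)$, \eqref{eq:Odet} gives $O\zeta O^{-1}=\zeta^{-1}$, equivalently $\zeta(O\bfg_j)=\xi^{-(2-j)}(O\bfg_j)$, which swaps the $\xi^2$- and $\xi$-eigenspaces and so forces $O\tcaN_j=\tcaN_{\bar j}$.

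The right-action statements will follow from the identity $\caN O=O^\top\caN$. For $O\in O_3^\even(d)$, $O^\top=O^{-1}$ again lies in $O_3^\even(d)$ (the unique index-two subgroup), so $O^\top$ fixes each $\tcaN_j$ by what was just shown. For $O\in O_3^\odd(d)$, \eqref{eq:Odet} gives $O^2=\mathds{1}$, hence $O^\top=O^{-1}=O$, so the right action collapses to the left action and swaps $\tcaN_0$ with $\tcaN_1$. I do not expect a serious obstacle; the heart of the argument is the identification of $\tcaN_0,\tcaN_1$ with the two nontrivial $\zeta$-eigenspaces, and the rest is bookkeeping with the relation $O\zeta O^{-1}=\zeta^{\pm 1}$ already at hand.
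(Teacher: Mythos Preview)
Your proposal is correct and follows essentially the same approach as the paper: both arguments use \lref{lem:defect} to reduce to the relevant defect subspaces, identify $\tcaN_0,\tcaN_1$ as the $\zeta$-eigenspaces with eigenvalues $\xi^2,\xi$ via \eqref{eq:g12Properties}, and then invoke the commutation relations $O\zeta=\zeta O$ (even case) and $O\zeta O=\zeta^{-1}$ (odd case) from \eqref{eq:Odet} to determine how $O$ and $O^\top$ permute these eigenspaces. Your treatment of the right action via $\caN O=O^\top\caN$ together with $O^\top\in O_3^\even(d)$ or $O^\top=O$ is a clean repackaging of the same computation the paper does directly with $O^\top\bfg_j$.
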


\subsection{\label{sec:SLSconstruct1}Construction of stochastic Lagrangian subspaces}
Here we determine the set $\Sigma(d)$  of stochastic Lagrangian subspaces in $\bbF_d^6$. 
For the convenience of the following discussions we define
\begin{equation}\label{eq:scrTSIEO}
\begin{aligned}
\scrT_\sym&:=\{\caT_O\ |\ O\in S_3\}, & 	\scrT_\ns&:=\Sigma(d)\setminus\scrT_\sym,\\
\scrT_\iso&:= \{T_O \; | \; O \in O_3(d)\}=O_3(d)\Delta O_3(d), &	\scrT_\defe&:=\Sigma(d)\setminus \scrT_\iso,
\\
\scrT_{\even}&:=\{T_O\ | \ O\in O_3^\even(d) \}, &
\scrT_{\odd}&:=\{T_O\ | \ O\in O_3^\odd(d) \}=\scrT_\iso\setminus \scrT_\even.
\end{aligned}
\end{equation}
Note that $\Delta\in \scrS_\sym$ by definition.
By virtue of \eref{eq:SigmattOrder} and \lref{lem:O3Dh} we can deduce that
\begin{align}\label{eq:Sigma33DefectNum}
|\Sigma(d)|=2d+2, \quad |\scrT_\defe|=|\Sigma(d)|-|O_3(d)|=
\begin{cases}
2   & d=3,\\
4   & d = 1\mmod 3,\\
0 & d=2\mmod 3.
\end{cases}
\end{align}
When $d = 2\mmod 3$,  $\scrT_\defe$ is empty and  $\Sigma(d)=\scrT_\iso$ contains only one double coset.
When  $d \neq 2 \mmod 3$, according to 
\lsref{lem:defect} and \ref{lem:ONNO} as well as the discussion in \sref{sec:Defect}, $\Sigma(d)$ contains two double cosets: all stochastic Lagrangian subspaces  in $\scrT_\iso$ (with trivial defect subspaces) form one double coset, and those in $\scrT_\defe$ (with nontrivial defect subspaces)
form another double coset.
By virtue of \lsref{lem:defect} and \eref{eq:Sigma33DefectNum} we can further determine all  stochastic Lagrangian subspaces in  $\scrT_\defe$ as shown in the following lemma and proved in \aref{app:SLagrangTproof}. Here we use a generating matrix to represent a  stochastic Lagrangian  subspace, which corresponds to the column span of the generating matrix. Note that the choice of the generating matrix is not unique.

\begin{lemma}\label{lem:defectT}
	When  $d=3$, $\scrT_\defe$ is composed of  the  two  stochastic Lagrangian subspaces:
	\begin{equation}\label{eq:SLStwo}
\tcaT_0=\left(\begin{array}{ccc}
	1 & 0 & 1 \\
	2 & 0 & 1 \\
	0 & 0 & 1 \\
	\hline 
	1 & 1 & 0 \\
	2 & 1 & 0 \\
	0 & 1 & 0 
	\end{array}
	\right),\quad
\tcaT_1=\left(\begin{array}{ccc}
	1 & 0 & 1 \\
	2 & 0 & 1 \\
	0 & 0 & 1 \\
	\hline 
	2 & 1 & 0 \\
	1 & 1 & 0 \\
	0 & 1 & 0
	\end{array}
	\right),
	\end{equation}
	in which $\tcaT_0$ is of CSS type. 	
	When $d = 1 \mmod 3$, $\scrT_\defe$ is composed of  the four  stochastic Lagrangian subspaces:
	\begin{equation}\label{eq:SLSfour}
	\tcaT_{ij} = \left(\begin{array}{ccc}
	\bfg_i & \mathbf{0}_3 & \mathbf{1}_3 \\
	\mathbf{0}_3 & \bfg_j & \mathbf{1}_3 \\
	\end{array}
	\right), \quad i,j=0,1,
	\end{equation}
	in which $\tcaT_{00}$ and $\tcaT_{11}$ are of CSS type.
\end{lemma}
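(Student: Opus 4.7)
The plan is to exhibit explicit stochastic Lagrangian subspaces matching the displayed generating matrices, verify that they lie in $\scrT_\defe$, confirm they are of CSS type where claimed, and close the count against \eref{eq:Sigma33DefectNum}.

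First, for each displayed subspace $\tcaT$ (whether for $d=3$ or $d=1\bmod 3$), I would verify the three defining conditions of a stochastic Lagrangian subspace listed in \sref{sec:SLSspanningSet}. Condition~1 reduces to checking $\bfx\cdot\bfx=\bfy\cdot\bfy$ on each generator $(\bfx;\bfy)$ (immediate when a half is $\mathbf{0}_3$ or equal to $\mathbf{1}_3$, and direct when it is a defect vector $\bfg_i$ because $\bfg_i\cdot\bfg_i=0$ by \eref{eq:g12Properties}), together with $\bfx_a\cdot\bfx_b=\bfy_a\cdot\bfy_b$ for all pairs $a\neq b$ (again immediate using $\bfg_i\cdot\mathbf{1}_3=0$ and pairings with $\mathbf{0}_3$). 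Condition~2 is ensured by the linear independence of the three generators, which is evident from the support pattern of each generating matrix. Condition~3 is the inclusion of $\mathbf{1}_6$: for $d=1\bmod 3$ this is the third generator itself, and for $d=3$ it is the sum of the second and third generators. Hence every displayed subspace lies in $\Sigma(d)$.

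Second, I would read off the defect subspaces. Solving $(\bfx;\mathbf{0}_3)\in\tcaT$ from a generic linear combination of the generators, one finds $\tcaT_\LD=\tcaN$ for each $\tcaT_i$ when $d=3$ and $\tcaT_\LD=\tcaN_i$ for $\tcaT_{ij}$ when $d=1\bmod 3$; the symmetric argument on the lower halves yields $\tcaT_\RD=\tcaN$ and $\tcaT_\RD=\tcaN_j$, respectively. In particular each subspace lies in $\scrT_\defe$. Distinctness is immediate: $\tcaT_0$ and $\tcaT_1$ differ in the first column, while the $\tcaT_{ij}$ carry pairwise distinct defect pairs $(\tcaN_i,\tcaN_j)$. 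Comparison with \eref{eq:Sigma33DefectNum}, which gives $|\scrT_\defe|=2$ when $d=3$ and $|\scrT_\defe|=4$ when $d=1\bmod 3$, then forces the listed subspaces to exhaust $\scrT_\defe$.

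Finally, I would verify the CSS-type claims by matching against \eref{eq:Tcss}. For $\tcaT_{ii}$ with $\caN=\tcaN_i$, the first two generators supply the $\caN$-parts on the left and right respectively, and by \eref{eq:NbotN} the quotient $\tcaN_i^\perp/\tcaN_i$ is generated by $[\mathbf{1}_3]$, so the third generator $(\mathbf{1}_3;\mathbf{1}_3)$ provides the required coset representative; together the three generators span the CSS subspace. A parallel argument handles $\tcaT_0$ with $\caN=\tcaN$, using $\tcaN^\perp=\spa(\mathbf{1}_3,(1,2,0)^\top)$ and $\tcaN^\perp/\tcaN=\spa([(1,2,0)^\top])$. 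The main obstacle is the bookkeeping when identifying the defect subspaces and reconciling the abstract CSS recipe of \eref{eq:Tcss} with the displayed generating matrices; all other steps are routine verifications resting on \eref{eq:g12Properties} and \eref{eq:NbotN} together with \lref{lem:defect}.
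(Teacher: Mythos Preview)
Your proposal is correct and follows essentially the same approach as the paper. The paper packages your verification of the three stochastic Lagrangian conditions into an auxiliary lemma (\lref{lem:GenMatrix}), then invokes the count \eref{eq:Sigma33DefectNum} and the CSS recipe \eref{eq:Tcss} exactly as you do; your write-up simply unpacks the content of \lref{lem:GenMatrix} by hand.
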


When $d=3$, the left and right defect subspaces of $\tcaT_0$ and $\tcaT_1$ are all equal to $\tcaN$  defined in \eref{eq:DefectSpaced3}. 
When $d = 1 \mmod 3$, the left and right defect subspaces   of $\tcaT_{ij}$ read
\begin{equation}\label{eq:tTijDefect}
(\tcaT_{ij})_\LD=\tcaN_i, \quad (\tcaT_{ij})_\RD=\tcaN_j, \quad i,j=0,1,
\end{equation}
where $\tcaN_0$,  $\tcaN_1$ are defined in \eref{eq:TwoDefectSpaces}. In this case, each stochastic Lagrangian subspace is completely determined by its left and right defect subspaces.

For the convenience of the following discussions, define
\begin{subequations}\label{eq:scrT01}
	\begin{align}
	\scrT_0&:=\begin{cases}
	\scrT_{\even}\cup \{\tcaT_1\} &\! \mbox{if $d= 3$,}\\
	\scrT_{\even}\cup \{\tcaT_{01},\tcaT_{10}\} &\! \mbox{if $d=1\mmod 3$},\\
	\scrT_{\even} &\! \mbox{if $d=2\mmod 3$},
	\end{cases}\\
	\scrT_1&:=\Sigma(d)\setminus \scrT_0=\begin{cases}
	\scrT_{\odd}\cup \{\tcaT_0\} &\! \mbox{if $d=3$,}\\
	\scrT_{\odd}\cup \{\tcaT_{00},\tcaT_{11}\} &\! \mbox{if $d=1\mmod 3$},\\
	\scrT_{\odd} &\! \mbox{if $d=2\mmod 3$},\\
	\end{cases}
	\end{align}
\end{subequations}
where $\scrT_{\even}$ and $\scrT_{\odd}$ are defined in \eref{eq:scrTSIEO}. By definition we have
\begin{align}
|\scrT_0|=|\scrT_1|=d+1,\quad
\scrT_0\cap\scrT_\iso =\scrT_\even,\quad \scrT_1\cap\scrT_\iso =\scrT_\odd. 
\end{align}
In the rest of this section, we clarify the 
action of $O_3(d)$ on  $\scrT_\defe$, $\Sigma(d)$ 
and the intersections of stochastic Lagrangian subspaces. \Lsref{lem:SemigroupT}-\ref{lem:TDelta2} below are also proved in \aref{app:SLagrangTproof}.
Additional results can be found in \aref{app:SigmadAux}.

\begin{lemma}\label{lem:SemigroupT}
	Suppose $d$ is an odd prime,  $i,j=0,1$, $\bar{i}=1-i$, and $\bar{j}=1-j$. If  $d=3$, then 
	\begin{align} \label{eq:OTTO3}
	O\tcaT_i=\tcaT_i O=\tcaT_i \quad \forall O\in 	O_3^\even(d),\quad 
	O\tcaT_i=\tcaT_i O=\tcaT_{\bar{i}} \quad \forall O\in 	O_3^\odd(d).
	\end{align}	
	If $d = 1 \mmod  3$, then
	\begin{align}\label{eq:OTTOd}
	O\tcaT_{ij}=\tcaT_{ij} O=\tcaT_{ij} \quad \forall O\in 	O_3^\even(d);\quad 
	O\tcaT_{ij}=\tcaT_{\bar{i}j},\quad  \tcaT_{ij} O=\tcaT_{i\bar{j}} \quad \forall O\in 	O_3^\odd(d). 
	\end{align}
\end{lemma}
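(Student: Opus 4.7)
My plan is to reduce the claim to the action of $O_3(d)$ on nontrivial defect subspaces already recorded in \lref{lem:ONNO}, supplemented by a direct computation on two $S_3$-generators in the special case $d=3$. As a preliminary, I would verify the elementary compatibility identities
\[
(O\caT)_\LD=O\caT_\LD,\qquad (O\caT)_\RD=\caT_\RD,\qquad (\caT O)_\LD=\caT_\LD,\qquad (\caT O)_\RD=\caT_\RD O,
\]
for any $\caT\in\Sigma(d)$ and $O\in O_3(d)$; each is immediate from the definitions in Eq.~\eqref{eq:OTTO} and the invertibility of $O$. In particular both actions preserve $\scrT_\defe$.

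For $d=1\mmod 3$, the argument is then essentially bookkeeping: by Eq.~\eqref{eq:tTijDefect} the defect pair of $\tcaT_{ij}$ is $(\tcaN_i,\tcaN_j)$, and the remark after Eq.~\eqref{eq:tTijDefect} guarantees that each element of $\scrT_\defe$ is determined by its defect pair. For $O\in O_3^\even(d)$, \lref{lem:ONNO} preserves every $\tcaN_j$ under both actions, so $O\tcaT_{ij}=\tcaT_{ij}O=\tcaT_{ij}$; for $O\in O_3^\odd(d)$ the swap $\tcaN_0\leftrightarrow\tcaN_1$ occurs on whichever side $O$ acts, giving $O\tcaT_{ij}=\tcaT_{\bar i j}$ and $\tcaT_{ij}O=\tcaT_{i\bar j}$, which proves Eq.~\eqref{eq:OTTOd}.

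For $d=3$, \lref{lem:ONNO} gives only $O\tcaN=\tcaN$ for every $O\in S_3$, so the defect pair no longer distinguishes $\tcaT_0$ from $\tcaT_1$. My plan is to refine the invariant to the scalar action induced by $O$ on the one-dimensional quotient $\tcaN^{\perp}/\tcaN\cong\bbF_3$. Since $\tcaT_0$ is of CSS type by \lref{lem:defectT}, Eq.~\eqref{eq:Tcss} identifies
\[
\tcaT_0=\{(\bfx;\bfy)\ :\ \bfx,\bfy\in\tcaN^{\perp},\ [\bfx]=[\bfy]\ \text{in}\ \tcaN^{\perp}/\tcaN\},
\]
while a one-line check with the generators in Eq.~\eqref{eq:SLStwo}, using $(2,1,0)^\top=-(1,2,0)^\top$, gives $\tcaT_1=\{(\bfx;\bfy)\ :\ \bfx,\bfy\in\tcaN^{\perp},\ [\bfx]=-[\bfy]\}$. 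Writing $\chi(O)\in\bbF_3^\times$ for the scalar action of $O$ on $\tcaN^{\perp}/\tcaN$, these descriptions force $O\tcaT_i=\tcaT_i$ when $\chi(O)=1$ and $O\tcaT_i=\tcaT_{\bar i}$ when $\chi(O)=-1$; the right action behaves the same way because $O^\top$ is stochastic and $\chi(O^\top)=\chi(O)^{-1}=\chi(O)$ in $\bbF_3^\times$. To finish, I would check $\chi=\det$ on the generators of $S_3$: $\zeta(1,2,0)^\top-(1,2,0)^\top=2\cdot\mathbf{1}_3\in\tcaN$ gives $\chi(\zeta)=1=\det\zeta$, and $\tau_{12}(1,2,0)^\top=-(1,2,0)^\top$ gives $\chi(\tau_{12})=-1=\det\tau_{12}$, yielding Eq.~\eqref{eq:OTTO3}.

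The only real obstacle is the $d=3$ case: \lref{lem:ONNO} is not by itself strong enough to separate $\tcaT_0$ from $\tcaT_1$, and one must replace the defect-subspace invariant by a sign character on $\tcaN^{\perp}/\tcaN$. Once that refinement is in hand, the whole lemma collapses to two $3\times 3$ matrix-vector products modulo $3$.
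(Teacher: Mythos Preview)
Your proposal is correct. For $d=1\mmod 3$ your argument is essentially identical to the paper's: both invoke the compatibility identities between the $O_3(d)$-action and defect subspaces, then combine Eq.~\eqref{eq:tTijDefect} with \lref{lem:ONNO} and the fact that each element of $\scrT_\defe$ is uniquely determined by its defect pair.

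For $d=3$ the two arguments diverge. The paper simply says Eq.~\eqref{eq:OTTO3} ``can be verified by straightforward calculation based on \eref{eq:SLStwo}'' and adds a transitivity/index-$2$-stabilizer remark; in effect it checks the six elements of $S_3$ directly. Your approach is more structural: you identify $\tcaT_0$ and $\tcaT_1$ with the two eigenspaces of the sign character $\chi\colon O_3(3)\to\bbF_3^\times$ induced on the one-dimensional quotient $\tcaN^\perp/\tcaN$, and then verify $\chi=\det$ on the two generators $\zeta,\tau_{12}$. This buys you a conceptual explanation of \emph{why} the even/odd split matches the parity of $O$, and reduces the case-check to two matrix-vector products rather than six. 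The paper's route is shorter to write down but less illuminating; yours makes the parallel with the $d=1\mmod 3$ case (where the defect pair plays the role of your sign character) more transparent.
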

\Lref{lem:SemigroupT} means all stochastic Lagrangian subspaces  in $\scrT_\defe$ form one orbit under the action of  $O_3(d)$ or~$S_3$. 

\begin{lemma}\label{lem:SemigroupscrT}
	Suppose $d$ is an odd prime, $i=0,1$, and $\bar{i}=1-i$. Then
	\begin{align}	
	O\scrT_i=\scrT_i O=\scrT_i \quad \forall O\in O_3^\even(d), \quad  
	O\scrT_i=\scrT_i O=\scrT_{\bar{i}} \quad \forall O\in O_3^\odd(d).
	\end{align}	
\end{lemma}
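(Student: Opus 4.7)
The plan is to reduce Lemma \ref{lem:SemigroupscrT} to two independent computations by exploiting the decomposition $\scrT_i = (\scrT_i \cap \scrT_\iso) \sqcup (\scrT_i \cap \scrT_\defe)$ implied by \eqref{eq:scrT01}. Since left and right multiplication by $O_3(d)$ preserve the disjoint partition $\Sigma(d) = \scrT_\iso \sqcup \scrT_\defe$ (as double cosets are preserved), it suffices to verify the orbit identities separately on the isometric piece and the defect piece.

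For the isometric piece, the key observation is the composition rule $O \caT_{O'} = \caT_{OO'}$ and $\caT_{O'} O = \caT_{O'O}$ for all $O, O' \in O_3(d)$. The first follows immediately from $\caT_{O'} = \{(O'\bfx;\bfx)\}$ and the definition of the left action in \eqref{eq:OTTO}; the second follows by the substitution $\bfz = O^\top \bfx$ together with $O^\top O = \mathds{1}$. Because $\det(OO') = \det(O)\det(O')$, composition with an even isometry preserves the parity of $O'$, and composition with an odd isometry reverses it. Therefore even $O$ fixes both $\scrT_\even$ and $\scrT_\odd$ setwise, while odd $O$ exchanges them, establishing the claim on the $\scrT_\iso$ part.

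For the defect piece I would just invoke Lemma \ref{lem:SemigroupT}. When $d = 2 \mmod 3$ the set $\scrT_\defe$ is empty by \eqref{eq:Sigma33DefectNum}, so there is nothing further to check. When $d = 3$, the lemma says odd $O$ swap $\tcaT_0$ and $\tcaT_1$, which are precisely the unique defect elements of $\scrT_1$ and $\scrT_0$, respectively. When $d = 1 \mmod 3$, the lemma gives $O\tcaT_{ij} = \tcaT_{\bar{i}j}$ and $\tcaT_{ij}O = \tcaT_{i\bar{j}}$ for odd $O$; substituting $ij \in \{01,10\}$ shows that the defect part $\{\tcaT_{01},\tcaT_{10}\}$ of $\scrT_0$ is mapped to $\{\tcaT_{11},\tcaT_{00}\}$, which is exactly the defect part of $\scrT_1$, under both left and right actions. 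Even isometries fix each $\tcaT_{ij}$ individually and hence fix each subset. Combining the two pieces yields the full lemma.

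The only bookkeeping point that warrants care — and the main potential source of error — is verifying that the partition $\scrT_0 \sqcup \scrT_1$ chosen in \eqref{eq:scrT01} aligns correctly with the $O_3^\odd(d)$-pairing of defect subspaces. Specifically, one must check that $\scrT_\even$ is grouped with $\{\tcaT_{01},\tcaT_{10}\}$ rather than $\{\tcaT_{00},\tcaT_{11}\}$ in the case $d = 1\mmod 3$ (and analogously for $d=3$), so that odd isometries send $\scrT_0$ entirely into $\scrT_1$. Once this alignment is confirmed against Lemma \ref{lem:SemigroupT}, no further computation is needed.
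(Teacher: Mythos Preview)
Your proof is correct and follows essentially the same approach as the paper: both argue by invoking Lemma~\ref{lem:SemigroupT} for the defect piece and the coset identities $O\,O_3^{\even}(d) = O_3^{\even}(d)\,O = O_3^{\even}(d)$, $O\,O_3^{\even}(d) = O_3^{\odd}(d)$ for even/odd $O$ (equivalently, your determinant-multiplicativity observation) for the isometric piece. The paper's proof is stated in a single line, so your version is simply a more explicit unpacking of the same argument.
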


\begin{lemma}\label{lem:TDelta}
	Suppose $d$ is  an odd prime and $\caT\in \Sigma(d)$. Then $\spa(\mathbf{1}_6)\leq \caT_\Delta$. If  $\caT\in \scrT_0$ and $\caT\neq \Delta$, then 
	$\caT_\Delta=\spa(\mathbf{1}_6)$. If  $\caT\in \scrT_1$, then $\dim \caT_\Delta=2$. 
\end{lemma}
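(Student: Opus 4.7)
The plan is to decompose $\Sigma(d) = \scrT_\iso \sqcup \scrT_\defe$ and analyze each part separately. The containment $\spa(\mathbf{1}_6) \leq \caT_\Delta$ is immediate, since $\mathbf{1}_6 \in \caT$ by Condition~3 of the definition of stochastic Lagrangian subspaces and $\mathbf{1}_6 \in \Delta$ trivially.

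For $\caT = \caT_O \in \scrT_\iso$, the bijection $(O\bfx;\bfx) \mapsto \bfx$ identifies $\caT_\Delta$ with $\ker(O - \mathds{1}) \leq \bbF_d^3$, so it suffices to determine the fixed space of $O$. For $O \in O_3^\odd(d)$, \eref{eq:Odet} gives $O^2 = \mathds{1}$, so $O$ is diagonalizable with eigenvalues in $\{\pm 1\}$; combined with $\mathbf{1}_3$ being fixed and $\det O = -1$, the spectrum must be $(1,1,-1)$, yielding $\dim \caT_\Delta = 2$. For $O \in O_3^\even(d) \setminus \{\mathds{1}\}$ with $d \neq 3$, I would exploit that $O_3^\even(d)$ is cyclic of order $h \in \{d-1, d+1\}$ coprime to $d$ by \lref{lem:O3Dh}, so $O$ satisfies the separable polynomial $\lambda^h - 1$ and is diagonalizable over $\bar{\bbF}_d$; with $\det O = 1$ and $1$ among the eigenvalues, the remaining pair multiplies to $1$ and cannot both equal $1$ without forcing $O = \mathds{1}$, leaving $\dim \caT_\Delta = 1$. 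The case $d = 3$ needs a brief separate check since $O_3(3) = S_3$ and the cyclic order $3$ is not coprime to the characteristic; I would verify directly by row reduction that $\ker(\zeta - \mathds{1}) = \ker(\zeta^2 - \mathds{1}) = \spa(\mathbf{1}_3)$.

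For $\caT \in \scrT_\defe$, I plan to use the explicit generators from \lref{lem:defectT} to compute $\caT \cap \Delta$ directly. When $d = 1 \mmod 3$, a generic element of $\tcaT_{ij}$ has the form $a(\bfg_i;\mathbf{0}_3) + b(\mathbf{0}_3;\bfg_j) + c(\mathbf{1}_3;\mathbf{1}_3)$, and equality of upper and lower halves reduces to $a\bfg_i = b\bfg_j$; by linear independence of $\bfg_0$ and $\bfg_1$ (cf.\ \eref{eq:g12Properties}), this yields $\dim \caT_\Delta = 2$ when $i = j$ (parametrized by $a = b$ together with $c$) and $\dim \caT_\Delta = 1$ when $i \neq j$ (forcing $a = b = 0$). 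An analogous short calculation on the generators in \eref{eq:SLStwo} for $d = 3$ gives $\dim(\tcaT_0)_\Delta = 2$ and $\dim(\tcaT_1)_\Delta = 1$. Cross-checking against the definition \eref{eq:scrT01} of $\scrT_0, \scrT_1$ shows that in all three residue classes of $d$ modulo $3$, the dim-$1$ (resp.\ dim-$2$) elements of $\scrT_\defe$ lie in $\scrT_0$ (resp.\ $\scrT_1$), completing the proof.

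The main obstacle is obtaining the dimension count for $O \in O_3^\even(d)$ uniformly in $d$; the separable-polynomial argument via cyclic order coprime to $d$ handles $d \neq 3$ cleanly but breaks in characteristic~$3$, where one must fall back on finite enumeration since $O_3(3) = S_3$. Everything else is linear algebra over $\bbF_d$ combined with the explicit generators already tabulated in \lref{lem:defectT}.
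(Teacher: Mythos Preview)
Your proposal is correct and follows essentially the same architecture as the paper's proof: decompose $\Sigma(d) = \scrT_\iso \sqcup \scrT_\defe$, reduce the isometric case to computing $\ker(O-\mathds{1})$ via the identification $(\caT_O)_\Delta \cong \ker(O-\mathds{1})$, and handle the defect case by direct computation on the explicit generators from \lref{lem:defectT}. The paper packages the kernel computation into the separate \lref{lem:kerOI} (whose proof cites \eref{eq:Odet} but is otherwise terse), whereas you supply a self-contained diagonalizability argument for the even case using that $|O_3^\even(d)|$ is coprime to $d$ when $d \neq 3$; this is a minor presentational difference rather than a genuinely different route.
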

\begin{lemma}\label{lem:TDelta2}
	Suppose  $d$ is an odd prime and $\caT_1\in \scrT_i$, $\caT_2\in \scrT_j$ with $i,j=0,1$. Then $\spa(\mathbf{1}_6)\leq \caT_1\cap \caT_2$ and
	\begin{align}\label{eq:T1capT2}
	\dim (\caT_1\cap \caT_2)=\begin{cases}
	3 & \mbox{if}\;\; \caT_1=\caT_2, \\
	1 & \mbox{if}\;\; \caT_1\neq \caT_2, i=j, \\
	2 &  \mbox{if}\;\; i\neq j. 
	\end{cases}
	\end{align}
\end{lemma}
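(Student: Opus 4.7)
The plan is to combine the $O_3(d)$-action invariance of the intersection dimension with the already-established case $\caT_1=\Delta$ given by \lref{lem:TDelta}, and then handle the remaining cases (both arguments in $\scrT_\defe$) by direct inspection of the explicit generators in \lref{lem:defectT}.

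The containment $\spa(\mathbf{1}_6)\leq \caT_1\cap \caT_2$ is immediate from condition~3 in the definition of a stochastic Lagrangian subspace, and $\caT_1=\caT_2$ trivially gives $\dim=3$, so we restrict attention to $\caT_1\neq \caT_2$. The key observation is that for any $O\in O_3(d)$ the left action $\caT\mapsto O\caT$ is induced by the invertible linear map $(\bfx;\bfy)\mapsto (O\bfx;\bfy)$ on $\bbF_d^6$, hence
\begin{equation*}
\dim(O\caT_1\cap O\caT_2)=\dim(\caT_1\cap \caT_2).
\end{equation*}

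\textbf{Case A} (at least one of $\caT_1,\caT_2$ lies in $\scrT_\iso$). After possibly relabelling, take $\caT_1=\caT_{O_1}\in \scrT_\iso$ and apply $O_1^{-1}$ from the left to obtain
\begin{equation*}
\dim(\caT_1\cap \caT_2)=\dim\bigl(\Delta\cap O_1^{-1}\caT_2\bigr)=\dim\bigl(O_1^{-1}\caT_2\bigr)_\Delta.
\end{equation*}
Because $\caT_{O_1}\in \scrT_0\iff \det O_1=1$ (from \eqref{eq:scrT01}), \lref{lem:SemigroupscrT} shows that $O_1^{-1}\caT_2\in \scrT_0$ iff $i=j$ and $O_1^{-1}\caT_2\in \scrT_1$ iff $i\neq j$; moreover $O_1^{-1}\caT_2=\Delta$ iff $\caT_2=\caT_1$, which is excluded. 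Invoking \lref{lem:TDelta} then delivers $\dim=1$ when $i=j$ and $\dim=2$ when $i\neq j$, as claimed.

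\textbf{Case B} (both $\caT_1,\caT_2\in \scrT_\defe$). By \eqref{eq:Sigma33DefectNum} this occurs only when $d=3$ or $d=1\mmod 3$. When $d=3$ the only nondiagonal pair is $\{\tcaT_0,\tcaT_1\}$, and reading the generators in \eqref{eq:SLStwo} one sees that the columns $(\mathbf{0}_3;\mathbf{1}_3)$ and $(\mathbf{1}_3;\mathbf{0}_3)$ lie in both while the first columns differ, giving $\dim=2$; this is consistent with $\tcaT_0\in \scrT_1$ and $\tcaT_1\in \scrT_0$, hence $i\neq j$. When $d=1\mmod 3$, a vector in $\tcaT_{ij}\cap \tcaT_{i'j'}$ must decompose on the top as $\alpha\bfg_i+\gamma\mathbf{1}_3$ and on the bottom as $\beta\bfg_j+\gamma\mathbf{1}_3$, and analogously with primes. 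Since $\{\mathbf{1}_3,\bfg_0,\bfg_1\}$ is a basis of $\bbF_d^3$ by \eqref{eq:g12Properties}, equating the two descriptions yields $\dim=3,2,1$ according to whether both, exactly one, or neither of the equalities $i=i'$ and $j=j'$ hold. Combining with $\tcaT_{ij}\in \scrT_0\iff i\neq j$ from \eqref{eq:scrT01}, the two pairs in $\scrT_0$ are $\{(0,1),(1,0)\}$ and the two pairs in $\scrT_1$ are $\{(0,0),(1,1)\}$: in either case distinct defect subspaces in the same $\scrT_k$ disagree in both indices (so $\dim=1$), while any defect subspace in $\scrT_0$ paired with one in $\scrT_1$ agrees in exactly one index (so $\dim=2$).

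The main obstacle is the bookkeeping in Case B for $d=1\mmod 3$, where one has to track how the four defect subspaces $\tcaT_{ij}$ split between $\scrT_0$ and $\scrT_1$ and correlate this with the three possible patterns of index agreement. Case A is a clean reduction once the parity-to-$\scrT$-index dictionary supplied by \eqref{eq:scrT01} and \lref{lem:SemigroupscrT} is in place.
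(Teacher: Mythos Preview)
Your proof is correct and follows essentially the same route as the paper's own argument: reduce the case where at least one $\caT_\ell$ lies in $\scrT_\iso$ to \lref{lem:TDelta} via the $O_3(d)$-action (the paper writes $O^\top$ where you write $O_1^{-1}$, which coincide), and handle the remaining $\scrT_\defe$-pairs by direct inspection of the generators in \lref{lem:defectT}. Your Case~B is simply more explicit than the paper's one-line appeal to \lref{lem:defectT}.
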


\subsection{\label{sec:SLSconstruct2}Alternative construction of stochastic Lagrangian subspaces}

In this section we introduce an alternative construction of stochastic Lagrangian subspaces  in $\Sigma(d)$, which will be very useful to understanding the third moments of  Clifford orbits based on magic states and their application in shadow estimation.

Given any nonzero vector $\bfv$ in $\mathbf{1}_3^\perp=\{\bfx\in \bbF_d^3\,|\, \bfx \cdot \mathbf{1}_3=0 \}$, we can construct a stochastic Lagrangian subspace in $\bbF_d^6$ as follows, 
\begin{align}\label{eq:Tv}
\caT_\bfv:=\{(\bfx+a\bfv; \bfx-a\bfv)\ |\  \bfx\in \bfv^\perp,  a\in \bbF_d \}.
\end{align}
By definition we have 
\begin{align}\label{eq:TvDeltaDefect}
(\caT_\bfv)_\tDelta=\bfv^\perp,\quad  \caT_\bfv\in \scrT_1, \quad (\caT_\bfv)_\LD=(\caT_\bfv)_\RD=\begin{cases}
\spa(\bfv) & \mbox{if}\; \bfv\cdot \bfv= 0,\\
\mathbf{0}_3 &\mbox{if}\;  \bfv\cdot \bfv\neq 0,
\end{cases}
\end{align}
where the relation $\caT_\bfv\in \scrT_1$ follows from \lref{lem:TDelta},  given that $\dim(\caT_\bfv)_\Delta=\dim(\caT_\bfv)_\tDelta=2$. In addition, $\bfv$ is a defect vector iff $\bfv\cdot\bfv=0$. If $\bfv\cdot \bfv= 0$, then $\caT_\bfv$ is a stochastic Lagrangian subspace of CSS type [cf. \eref{eq:Tcss}]. If $\bfv\cdot \bfv\neq 0$, then  we can define a stochastic isometry $O_\bfv\in O_3^\odd(d)$ via the action
\begin{align}
O_\bfv (\bfx-a\bfv)=\bfx+a\bfv,\quad   \bfx\in \bfv^\perp,  a\in \bbF_d,
\end{align}
which satisfies $\caT_{O_\bfv}=\caT_\bfv$.
Note that $\bfv$ is an eigenvector of $O_\bfv$ with eigenvalue $-1$, while any nonzero vector in $\bfv^\perp$ is an eigenvector with eigenvalue 1. The following lemma is proved in \aref{app:lem:tauTvProof}

\begin{lemma}\label{lem:Tv}
	Suppose	$d$ is an odd prime and $\bfv=(v_1, v_2, v_3)^\top\in \mathbf{1}_3^\perp$. 
	Then $3v_1v_2v_3=v_1^3+v_2^3+v_3^3$. In addition, the following three conditions are equivalent,
	\begin{align}\label{eq:TvdefEqui}
	\caT_\bfv\in \scrT_\defe, \quad 
	\caT_\bfv\in \scrT_\defe\cap\scrT_1,\quad  \bfv\cdot\bfv=0.  
	\end{align}
	The following three conditions are also equivalent,
	\begin{align}\label{eq:TvsymEqui}
	\caT_\bfv\in \scrT_\sym, \;\; \caT_\bfv\in \scrT_\sym\cap\scrT_\odd, \;\;  v_1v_2v_3=0.
	\end{align}
\end{lemma}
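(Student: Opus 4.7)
The plan is to handle the three statements in turn, using the explicit formulas for the defect subspaces of $\caT_\bfv$ in \eref{eq:TvDeltaDefect} together with a direct identification of certain $\caT_\bfv$'s with the ``transposition'' subspaces $\caT_{\tau_{ij}}$.

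The identity $3v_1v_2v_3=v_1^3+v_2^3+v_3^3$ follows immediately from the familiar factorization
\[
a^3+b^3+c^3-3abc=(a+b+c)(a^2+b^2+c^2-ab-bc-ca)
\]
applied to $a=v_1, b=v_2, c=v_3$, since $\bfv\in\mathbf{1}_3^\perp$ means $v_1+v_2+v_3=0$, and the factorization is valid over $\bbF_d$ for any odd prime $d$.

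For the equivalences in \eref{eq:TvdefEqui}, the relation $\caT_\bfv\in\scrT_1$ from \eref{eq:TvDeltaDefect} is unconditional, so $\caT_\bfv\in\scrT_\defe$ is automatically the same as $\caT_\bfv\in\scrT_\defe\cap\scrT_1$. Moreover, by definition $\scrT_\defe=\Sigma(d)\setminus\scrT_\iso$, which just means that the defect subspaces of $\caT_\bfv$ are nontrivial, and by the second identity in \eref{eq:TvDeltaDefect} this is precisely the condition $\bfv\cdot\bfv=0$.

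For \eref{eq:TvsymEqui}, I would first argue that any $\caT_O$ with $O\in S_3$ that equals some $\caT_\bfv$ must have $O$ an odd permutation. Indeed, $(\caT_O)_\tDelta=\{\bfx : O\bfx=\bfx\}$ equals $\bbF_d^3$ for $O=\id$ and $\spa(\mathbf{1}_3)$ for $O\in\{\zeta,\zeta^2\}$, whereas $(\caT_\bfv)_\tDelta=\bfv^\perp$ is two-dimensional. Hence $\caT_\bfv\in\scrT_\sym$ forces $\caT_\bfv\in\scrT_\sym\cap\scrT_\odd$, and the first two conditions in \eref{eq:TvsymEqui} coincide. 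A direct change of variables $u=x+a$, $w=x-a$ in the definition \eref{eq:Tv} then shows $\caT_{\tau_{12}}=\caT_{\bfv_{12}}$ for $\bfv_{12}=(1,-1,0)^\top$, and analogously $\caT_{\tau_{23}}=\caT_{(0,1,-1)^\top}$ and $\caT_{\tau_{13}}=\caT_{(1,0,-1)^\top}$. Since rescaling $\bfv$ by a nonzero element of $\bbF_d$ leaves both $\bfv^\perp$ and the definition in \eref{eq:Tv} invariant, the map $\spa(\bfv)\mapsto\caT_\bfv$ is well defined; conversely, $\caT_\bfv$ recovers $\bfv^\perp=(\caT_\bfv)_\tDelta$ and hence $\spa(\bfv)$. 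Therefore $\caT_\bfv\in\scrT_\sym$ iff $\bfv$ is a nonzero scalar multiple of one of the three vectors above, which in view of the constraint $v_1+v_2+v_3=0$ is exactly the condition $v_1v_2v_3=0$.

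The only mildly delicate step is the bookkeeping in the last argument: verifying the explicit identification $\caT_{\tau_{12}}=\caT_{\bfv_{12}}$ via the change of variables and checking that $\caT_\bfv$ determines $\bfv$ up to scalar. Both reduce to one-line linear algebra over $\bbF_d$, so I do not expect any genuine obstacle beyond careful case analysis on the three transpositions.
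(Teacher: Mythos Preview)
Your proposal is correct and follows essentially the same approach as the paper. The only cosmetic differences are that the paper packages the cubic identity as an auxiliary lemma and, for \eref{eq:TvsymEqui}, argues via the stochastic isometry $O_\bfv$ (using that $\bfv$ is its $(-1)$-eigenvector) rather than via the injectivity of $\spa(\bfv)\mapsto\caT_\bfv$; both routes amount to the same dimension count on $(\caT_\bfv)_{\tDelta}$ and the same identification of the three transposition subspaces.
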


Let $\tau_{12}$ be the transposition defined in \eref{eq:zetatau}, then $\tau_{12}\caT_\bfv$ is also a stochastic Lagrangian subspace. The following lemma 
is a simple corollary of \eref{eq:TvDeltaDefect} and \lref{lem:SemigroupscrT}. See \aref{app:lem:tauTvProof}
for more details. 

\begin{lemma}\label{lem:tauTv}
	Suppose  $\bfv\in \mathbf{1}_3^\perp$ is a nonzero vector. Then $\caT_\bfv\in \scrT_1$ and $\tau_{12} \caT_\bfv\in \scrT_0$. Let $\bfu\in \mathbf{1}_3^\perp$ be another nonzero vector; then  $\caT_\bfu=\caT_\bfv$ iff $\bfu$ and $\bfv$ are proportional to each other.
\end{lemma}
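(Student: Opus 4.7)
The plan is to derive both assertions directly from the machinery already set up. For the claim $\caT_\bfv\in\scrT_1$, I would reiterate the reasoning sketched in \eref{eq:TvDeltaDefect}: the map $(\bfx,a)\mapsto(\bfx+a\bfv;\bfx-a\bfv)$ from $\bfv^\perp\times\bbF_d$ to $\bbF_d^6$ is injective because the sum and difference of the two three-component blocks recover $2\bfx$ and $2a\bfv$, so $\bfx=0$ and (since $\bfv\neq0$) $a=0$. Hence $\dim\caT_\bfv=3$, and the same computation shows $(\bfz;\bfz)\in\caT_\bfv$ forces $a=0$, giving $(\caT_\bfv)_{\tDelta}=\bfv^\perp$ with $\dim=2$. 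By \lref{lem:TDelta}, $\dim\caT_\Delta=2$ forces $\caT_\bfv\in\scrT_1$. For $\tau_{12}\caT_\bfv\in\scrT_0$, I would just note that $\tau_{12}$ is a transposition, hence $\tau_{12}\in O_3^\odd(d)$, and apply \lref{lem:SemigroupscrT}, which gives $\tau_{12}\scrT_1=\scrT_0$.

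For the proportionality statement, one direction is bookkeeping: if $\bfu=c\bfv$ for some $c\in\bbF_d^\times$, then $\bfu^\perp=\bfv^\perp$, and reparametrizing $a\mapsto a/c$ in \eref{eq:Tv} shows that $\caT_\bfu$ and $\caT_\bfv$ are swept out by the same set of points. For the converse, I would single out the distinguished element $(\bfv;-\bfv)\in\caT_\bfv$ obtained by taking $\bfx=0$ and $a=1$ in \eref{eq:Tv}. Assuming $\caT_\bfu=\caT_\bfv$, I can write $(\bfv;-\bfv)=(\bfx+a\bfu;\bfx-a\bfu)$ for some $\bfx\in\bfu^\perp$ and $a\in\bbF_d$. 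Adding the two three-component blocks yields $2\bfx=\mathbf{0}_3$, hence $\bfx=\mathbf{0}_3$ since $d$ is odd; subtracting yields $2a\bfu=2\bfv$, so $a\bfu=\bfv$ with $a\neq0$ (as $\bfv\neq0$), which is exactly the desired proportionality.

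I do not anticipate any real obstacle here: the entire argument amounts to unfolding the definition \eref{eq:Tv} and invoking \lsref{lem:TDelta} and \ref{lem:SemigroupscrT}, which is presumably why the authors classify this as a simple corollary. The only minor care point is to make explicit use of $d$ odd when dividing by $2$, and to track the nonzero-scalar subtlety ($a\neq0$) that prevents $\bfv$ from collapsing to zero in the concluding step.
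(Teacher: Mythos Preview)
Your proposal is correct and follows essentially the same route as the paper for the membership claims: you use $(\caT_\bfv)_{\tDelta}=\bfv^\perp$ together with \lref{lem:TDelta} to place $\caT_\bfv$ in $\scrT_1$, and then \lref{lem:SemigroupscrT} with $\tau_{12}\in O_3^\odd(d)$ to move it to $\scrT_0$, exactly as the paper does. For the converse of the proportionality statement there is a minor tactical difference: the paper argues via the invariant $(\caT_\bfv)_{\tDelta}=\bfv^\perp$ (distinct one-dimensional spans give distinct two-dimensional orthogonal complements, hence distinct $\caT$'s), whereas you extract the single element $(\bfv;-\bfv)$ and solve for it inside $\caT_\bfu$. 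Both arguments are equally short; yours is perhaps marginally more self-contained, while the paper's reuses the $\tDelta$ computation already in hand.
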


Thanks to \lref{lem:tauTv}, $\caT_\bfv$ is completely determined by  $\spa(\bfv)$. In addition, $\mathbf{1}_3^\perp$ contains $d+1$ one-dimensional subspaces, which can be expressed as $\spa(\bfv_y)$  for $y=0,1,\ldots, d$ with
\begin{align}\label{eq:vy}
\bfv_y&:=(1,y,-1-y)^\top, \quad y=0,1,\ldots, d-1;\quad  \bfv_d:=(0,1,-1)^\top, 
\end{align}
where $-1-y$ is determined modulo $d$.
By virtue of these subspaces we can construct
$d+1$ stochastic Lagrangian subspaces of the form $\caT_{\bfv_y}$ and $d+1$ stochastic Lagrangian subspaces of the form $\tau_{12}\caT_{\bfv_y}$, which exhaust all elements in
$\scrT_1$ and $\scrT_0$, given that 
$|\scrT_1|=|\scrT_0|=d+1$. In this way, we can construct all stochastic Lagrangian subspaces in $\Sigma(d)$ using the alternative approach proposed above. To be specific, we have
\begin{equation}\label{eq:TvSigma}
\scrT_1=\{\caT_{\bfv_y}\}_{y=0}^d,\quad 
\scrT_0=\{\tau_{12}\caT_{\bfv_y}\}_{y=0}^d,\quad 
\Sigma(d)=\{\caT_{\bfv_y}\}_{y=0}^d\sqcup\{\tau_{12}\caT_{\bfv_y}\}_{y=0}^d,
\end{equation}
where the symbol "$\sqcup$" denotes the union of disjoint sets. Note that the product of the three entries of $\bfv_y$ is equal to 0 iff $y=0$, $y=d-1$, or $y=d$. In conjunction with \lref{lem:Tv} we can deduce that
\begin{align}
\scrT_1\cap\scrT_\ns=\{\caT_{\bfv_y}\}_{y=1}^{d-2},\quad  \scrT_\ns=\{\caT_{\bfv_y}\}_{y=1}^{d-2}\sqcup\{\tau_{12}\caT_{\bfv_y}\}_{y=1}^{d-2}. \label{eq:TvNS}
\end{align}

\subsection{\label{sec:CharIndex} Indices  and cubic characters of stochastic Lagrangian subspaces}
According to \lref{lem:tauTv} and \eref{eq:TvSigma},
each stochastic Lagrangian subspace $\caT$ in $\scrT_1$ can be expressed as $\caT=\caT_\bfv $ for some $\bfv$ in $\mathbf{1}_3^\perp$. The vector $\bfv$, which is unique up to a nonzero scalar multiple, is called a \emph{characteristic vector} of $\caT$.  If $\caT\in\scrT_0$, then a vector in $\bbF_d^3$ is called a characteristic vector of $\caT$ if it is a characteristic vector of $\tau_{12}\caT$, where $\tau_{12}$ is the transposition defined in \eref{eq:zetatau}; note that  $\tau_{12}\caT\in \scrT_1$ whenever  $\caT\in\scrT_0$. 
If $\caT=\caT_O$ for $O\in O_3(d)$, then any characteristic vector of $\caT$ is also called a characteristic vector of $O$. When $O\in O_3^\odd$, 
any characteristic vector of $O$ is an eigenvector with eigenvalue $-1$, and vice versa; when $O\in O_3^\even$, however, there is no simple connection between a characteristic vector  and an eigenvector.

Let $\nu$ be a given primitive element of $\bbF_d$; see \tref{tab:nu} for a specific choice. 
The \emph{index} of $a\in \bbF_d$, denoted by $\ind(a)$, is defined as the smallest nonnegative  integer $i$ such that
$a/\nu^i$ is a cubic residue, that is, a cube of another element in $\bbF_d$. By definition $\ind(a)$ is equal to one of the three integers $0, 1, 2$. In addition, $\ind(a)=0$ iff $a$ is a cubic residue. The index may depend on the choice of the primitive element, but the condition $\ind(a)=0$ is independent of this choice. 

\begin{table*}[bt]
	\renewcommand{\arraystretch}{1.8}
	\caption{\label{tab:nu} Specific choice of the primitive element $\nu$ of the finite field $\bbF_d$ with $d<100$.}	
	\begin{math}
	\begin{array}{c|cccccccccccccc}
	\hline\hline
	d & \quad 3\quad & \quad 5 \quad & \quad 7  \quad & \quad 11  \quad & \quad 13  \quad & \quad 17  \quad & \quad 19  \quad & \quad 23  \quad & \quad 29  \quad & \quad 31  \quad & \quad 37  \quad & \quad 41  \quad \\
	\hline		
	\nu & \quad 2 \quad & \quad 2 \quad & \quad 3  \quad & \quad 2  \quad & \quad 2  \quad & \quad 3  \quad & \quad 2  \quad & \quad 5  \quad & \quad 2  \quad & \quad 3  \quad & \quad 2  \quad & \quad 6  \quad \\
	\hline\hline
	d & \quad 43\quad & \quad 47 \quad & \quad 53  \quad & \quad 59  \quad & \quad 61  \quad & \quad 67  \quad & \quad 71  \quad & \quad 73  \quad & \quad 79  \quad & \quad 83  \quad & \quad 89  \quad & \quad 97  \quad \\
	\hline		
	\nu & \quad 3 \quad & \quad 5 \quad & \quad 2  \quad & \quad 2  \quad & \quad 2  \quad & \quad 2  \quad & \quad 7  \quad & \quad 5  \quad & \quad 3  \quad & \quad 2  \quad & \quad 3  \quad & \quad 5  \quad \\
	\hline\hline
	\end{array}	
	\end{math}	
\end{table*}

Now suppose $d\geq 5$, $\caT\in \scrT_\ns$, and $\bfv=(v_1,v_2,v_3)^\top$ is a characteristic vector of $\caT$. 
Then $v_1^3+v_2^3+v_3^3\neq 0$ by \lref{lem:Tv} and is  thus  a power of $\nu$. The \emph{index} of $\caT$, denoted by $\ind(\caT)$, is defined as the index of $v_1^3+v_2^3+v_3^3$,
\begin{align}\label{eq:TindexDef}
\ind(\caT):=\ind(v_1^3+v_2^3+v_3^3). 
\end{align}
This definition is independent of the choice of the characteristic vector~$\bfv$. The index of $O\in O_3(d)\setminus S_3$ is defined as the index of $\caT_O$. By definition we have $\ind(\tau_{12} \caT)=\ind(\caT)$ and $\ind(\tau_{12} O)=\ind(O)$.
When  $d=2\mmod 3$,  any element in $\bbF_d$ is a cubic residue,  so we have $\ind(\caT)=0$ for all $\caT\in \scrT_\ns$.

When $d=1\mmod 3$, the situation is more complicated and interesting.  Let $\eta_3$ be a given cubic character of $\bbF_d$ (see \aref{app:MultiChar} for a brief introduction). The cubic character of $\caT$ is defined as the cubic character of $v_1^3+v_2^3+v_3^3$, that is,
\begin{align}\label{eq:CharIndex}
\eta_3(\caT):=\eta_3(v_1^3+v_2^3+v_3^3)=\eta_3\bigl(\nu^{\ind(\caT)}\bigr),
\end{align}
where the second equality follows from the definition of $\ind(\caT)$ and the fact that $\eta_3(a)=1$ whenever $a$ is a cubic residue. Let $\caT'\in \Sigma(d)$ be another stochastic Lagrangian subspace; then $\eta_3(\caT')=\eta_3(\caT)$ iff $\ind(\caT')=\ind(\caT)$. As we shall see in \sref{sec:MagicOrbitd1}, the existence of stochastic Lagrangian subspaces with different cubic characters has a profound implication for understanding the third moments of magic orbits.  The following lemma proved in \aref{app:TindexProof} clarifies the symmetry of $\ind(\caT)$ and $\eta_3(\caT)$.
\begin{lemma}\label{lem:Tindex}
	Suppose $d\geq 7$ is an odd prime satisfying $d=1\mmod 3$; then 
	\begin{gather}
	\ind(O\caT)=\ind(\caT O)=\ind(\caT)\quad \forall \caT\in \scrT_\ns,  O\in S_3;   \quad 
	\ind(\caT)=\ind(3)\quad \forall \caT\in \scrT_\defe; \label{eq:Tindex}
	\end{gather}
	and the same result still holds if $\ind$ is replaced by $\eta_3$. 	
\end{lemma}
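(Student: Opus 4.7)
The plan is to track how the characteristic vector of a subspace in $\scrT_\ns$ transforms under left and right multiplication by elements of $S_3$. Since the expression $v_1^3+v_2^3+v_3^3$ in \eref{eq:TindexDef} is symmetric in the coordinates and scales by a cube under $\bfv\mapsto \lambda\bfv$, it will suffice to show that for every $O\in S_3$, left and right multiplication by $O$ replaces the characteristic vector by an $S_3$-permutation of itself (possibly after composing with $\tau_{12}$ to pass between $\scrT_0$ and $\scrT_1$ as required by the definition of the characteristic vector).

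The key technical claim is that for every nonzero $\bfv\in \mathbf{1}_3^\perp$ one has
\begin{equation*}
\zeta\caT_\bfv = \caT_{\zeta^{-1}\bfv}, \quad \caT_\bfv\zeta = \caT_{\zeta\bfv}, \quad \tau_{12}\caT_\bfv\tau_{12} = \caT_{\tau_{12}\bfv}.
\end{equation*}
The third identity drops out of direct substitution into the parametrization \eref{eq:Tv}, since $\tau_{12}^\top=\tau_{12}$. The second follows from the first: $\caT_\bfv$ is invariant under the swap $s:(\bfx;\bfy)\mapsto (\bfy;\bfx)$, and a short definition-chase yields $s(O\caT)=s(\caT)\,O^{-1}$ for any $O\in O_3(d)$, whence $\caT_\bfv\,\zeta = s(\zeta^{-1}\caT_\bfv)=s(\caT_{\zeta\bfv})=\caT_{\zeta\bfv}$. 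For the first identity, both sides lie in $\scrT_1$ by \lref{lem:SemigroupscrT}, and an element of $\scrT_1$ is determined by its $\tDelta$-part, so it suffices to show $(\zeta\caT_\bfv)_\tDelta=(\zeta^{-1}\bfv)^\perp$. Unfolding $(\bfy;\bfy)\in\zeta\caT_\bfv$ produces two requirements on $\bfy$: an orthogonality $\bfy\in(\zeta^{-1}\bfv)^\perp$ and a membership $(\zeta^{-1}-I)\bfy\in\spa(\bfv)$; the crux of the argument is showing that the second is automatic given the first. To verify this I would decompose $\bfv=\alpha\bfg_0+\beta\bfg_1$ in the $\zeta$-eigenbasis of $\mathbf{1}_3^\perp$ (available since $d=1\mmod 3$) and use the relations $\zeta\bfg_0=\xi^2\bfg_0$, $\zeta\bfg_1=\xi\bfg_1$, $\bfg_i\cdot\bfg_i=0$, and $\bfg_0\cdot\bfg_1=3$ to show that any admissible $\bfy$ is proportional to $\alpha\bfg_0-\xi\beta\bfg_1$; the telescoping identity $\xi(\xi^2-1)=1-\xi$ then collapses $(\zeta^{-1}-I)\bfy$ to a scalar multiple of $\bfv$. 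The degenerate case $\alpha\beta=0$, i.e., $\bfv$ a defect vector, must be checked separately but leads to the same conclusion since then $\bfy\propto \bfv$ automatically.

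With the three transformation rules in hand, every $O\in S_3$ is a word in $\zeta$ and $\tau_{12}$ and acts on characteristic vectors by a permutation, which proves the first assertion. For the second assertion, \lref{lem:SemigroupT} tells us that the four members of $\scrT_\defe$ are permuted by the $\tau_{12}$-actions from either side, and the first assertion then forces all four to share a common index. To identify that index, I would use the generating matrix of $\tcaT_{00}$ in \lref{lem:defectT} to compute $(\tcaT_{00})_\tDelta=\spa(\mathbf{1}_3,\bfg_0)$, so that $\tcaT_{00}=\caT_{\bfg_0}$; the characteristic vector $\bfg_0=(1,\xi,\xi^2)^\top$ then yields $1+\xi^3+\xi^6=3$ and hence $\ind(\tcaT_{00})=\ind(3)$. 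The parallel statements for $\eta_3$ follow immediately, since by \eref{eq:CharIndex} the cubic character factors through the index.
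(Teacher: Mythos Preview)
Your proof is correct and takes a genuinely different route from the paper.

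The paper splits the argument into two cases. For $\caT\in\scrT_\defe$ it observes directly that the characteristic vector is proportional to $\bfg_0$ or $\bfg_1$, whose cube-sum is $3$, and then invokes \lref{lem:SemigroupT} to see that $S_3$ permutes $\scrT_\defe$. For $\caT\in\scrT_\iso\cap\scrT_\ns$ it passes to the associated stochastic isometry $O\in O_3(d)\setminus S_3$ and proves a separate auxiliary result (\lref{lem:Oindex}) that $\ind(O'O)=\ind(OO')=\ind(O)$ for $O'\in S_3$; the proof of that lemma uses the dihedral relation $\zeta^{2j}O_1\zeta^{-2j}=\zeta^j O_1=O_1\zeta^{-j}$ for $O_1\in O_3^\odd(d)$ to reduce one-sided multiplication to conjugation, together with the observation that conjugation by $O'\in S_3$ permutes the $-1$-eigenvector. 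By contrast, you work uniformly at the level of characteristic vectors and establish the explicit transformation rules $\zeta\caT_\bfv=\caT_{\zeta^{-1}\bfv}$, $\caT_\bfv\zeta=\caT_{\zeta\bfv}$, $\tau_{12}\caT_\bfv\tau_{12}=\caT_{\tau_{12}\bfv}$ directly, via the $\zeta$-eigenbasis $(\bfg_0,\bfg_1)$ of $\mathbf{1}_3^\perp$ and the fact that an element of $\scrT_1$ is determined by its $\tDelta$-part. Your use of the swap $s$ to deduce the right-action rule from the left-action rule is a neat shortcut. The payoff of your route is a single argument that handles both the iso and defect pieces of $\scrT_\ns$ at once and makes the action on characteristic vectors completely explicit; the paper's route avoids the eigenbasis computation by leaning on the group structure of $O_3(d)$ that was already developed.
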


Next, we determine the number of stochastic Lagrangian subspaces that have a given cubic character or index. 
Define
\begin{align}\label{eq:muj} 
\mu_j=\mu_j(d)&:=\left|\left\{\caT\in \scrT_\ns\,:\, \eta_3(\caT)= \eta_3(3\nu^j)\right\}\right|=\left|\left\{\caT\in \scrT_\ns\,:\, \ind(\caT)= \ind(3\nu^j)\right\}\right|, \quad j=0,1,2.  
\end{align}
By virtue of \eref{eq:TvNS} we can deduce the following result (see \lref{lem:muj} in \aref{app:SigmadAux}),
\begin{align}\label{eq:mujFormula}
\mu_j=	\frac{2d(d-2)+4\Re\bigl[\eta_3^2(\nu^j)G^3(\eta_3)\bigr] }{3d},\quad \mu_0+\mu_1+\mu_2=|\scrT_\ns|=2(d-2), 
\end{align}
where $G(\eta_3)$ is a  Gauss sum reviewed in \aref{app:GaussJacobi}, which satisfies $|G(\eta_3)|=\sqrt{d}$. When $d$ is large, we have $\mu_j=2d/3+\caO(\sqrt{d}\lsp)$ and $\mu_0\approx\mu_1\approx\mu_2$. These observations will be instructive to understanding the third moments of magic orbits.

\section{\label{sec:Commutant}Commutant of the third Clifford  tensor power}
In this section we first clarify the basic properties of the operators  $r(\caT)$ and $R(\caT)=r(\caT)^{\otimes n}$ for $\caT\in \Sigma(d)$. Then we show that $\{R(\caT)\}_{\caT\in \Sigma(d)}$ spans the commutant of the third Clifford  tensor power and construct a dual operator frame. Next, we determine the Schatten norms of $R(\caT)$ and related operators. Finally, we introduce shadow  maps based on $R(\caT)$ and clarify their properties, which are instrumental to studying the shadow norms of Clifford orbits. 

\subsection{\label{app:rRTbasic}Basic properties of $r(\caT)$ and $R(\caT)$}
When $\caT\in \scrT_\iso$, both $r(\caT)$ and $R(\caT)$ are unitary operators, and their properties are relatively simple. When  $\caT\in \scrT_\defe$, which is relevant only if $d\neq 2\mmod 3$, the situation is more complicated. According to \lref{lem:defectT}, when  $d=3$, we have $\scrT_\defe=\{\tcaT_0, \tcaT_1\}$  and
\begin{equation}\label{eq:rT01}
\begin{aligned}
\!\!r(\tcaT_0) = &\ 3 \bigl(|\tcaN,[000]\> \<\tcaN,[000]|+ |\tcaN,[120]\> \<\tcaN,[120]| + |\tcaN,[210]\> \<\tcaN,[210]| \bigr)=3P_{\mathrm{CSS}(\tcaN)},  \\
\!\! r(\tcaT_1) = &\ 3 \bigl(|\tcaN,[000]\> \<\tcaN,[000]|+ |\tcaN,[120]\> \<\tcaN,[210]| + |\tcaN,[210]\> \<\tcaN,[120]| \bigr),   
\end{aligned}
\end{equation}
where the defect subspace $\tcaN$ is defined in \eref{eq:DefectSpaced3}, while the projector $P_{\mathrm{CSS}(\tcaN)}$ and corresponding coset states are defined in \eref{eq:CSSprojCosetS}. Note that both $r(\tcaT_0)$ and $r(\tcaT_1)$ are Hermitian. When $d = 1 \mmod 3$,  we have $\scrT_\defe=\{\tcaT_{ij}\}_{i,j=0,1}$ and
\begin{equation}\label{eq:lambda4}
\!\! r(\tcaT_{ij})=d \sum_{x\in \bbF_d} |\tcaN_i,[x\mathbf{1}_3]\> \<\tcaN_j, [x\mathbf{1}_3]|,\\
\end{equation}
where the defect subspaces $\tcaN_0, \tcaN_1$ are defined in \eref{eq:TwoDefectSpaces}. According to \eref{eq:rTCSS} we have
\begin{equation}
r(\tcaT_{00})= d P_{\mathrm{CSS}(\tcaN_0)}, \quad r(\tcaT_{11})= d P_{\mathrm{CSS}(\tcaN_1)}.
\end{equation}

Next, we clarify the basic properties of the operators $r(\caT)$ and $R(\caT)$ for $\caT\in \Sigma(d)$. 
\Lsref{lem:SemigroupR}-\ref{lem:RTpt} below are proved in \aref{app:rRTbasicProofs}. 
\begin{lemma}\label{lem:SemigroupR}
	Suppose $d$ is an odd prime,  $n\in \bbN$,  and $D=d^n$. If $d=3$, then
	\begin{gather}
	R(\tcaT_i)^\dag =R(\tcaT_i), \quad 
	R(\tcaT_i)R(\tcaT_j)=3^{n}R(\tcaT_{i+j}),  \quad i,j=0,1, \label{eq:RTiTjd3}
	\end{gather}	
	where the addition $i+j$ is modulo $2$. If $d = 1 \mmod  3$, then
	\begin{gather}
	R(\tcaT_{ij})^\dag=R(\tcaT_{ji}),\quad	R(\tcaT_{ij})R(\tcaT_{kl})=D^{1-|j-k|}R(\tcaT_{il}), \quad i,j,k,l=0,1. \label{eq:RTiTjd}
	\end{gather} 	
\end{lemma}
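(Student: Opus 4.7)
The plan is to reduce everything to the single-qudit operators $r(\tcaT)$ and then take $n$th tensor powers, using the identity $R(\caT)=r(\caT)^{\otimes n}$ from \eref{eq:rRT}. All factors of $d$ in the single-qudit identities become factors of $D=d^n$, so the main task is to prove the single-qudit analogues:
\begin{equation*}
r(\tcaT_i)^\dag=r(\tcaT_i),\quad r(\tcaT_i)r(\tcaT_j)=3\lsp r(\tcaT_{i+j}) \;\; (d=3),\qquad r(\tcaT_{ij})^\dag=r(\tcaT_{ji}),\quad r(\tcaT_{ij})r(\tcaT_{kl})=d^{1-|j-k|}r(\tcaT_{il})\;\; (d=1\mmod 3).
\end{equation*}
I would take the explicit formulas \eqref{eq:rT01} and \eqref{eq:lambda4} as the starting point, so that everything boils down to manipulating coset states.

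For $d=3$, I would first note that both $r(\tcaT_0)$ and $r(\tcaT_1)$ are supported on the three-dimensional subspace spanned by the orthonormal coset states $|\tcaN,[000]\>, |\tcaN,[120]\>,|\tcaN,[210]\>$; on this basis $r(\tcaT_0)/3$ is the identity and $r(\tcaT_1)/3$ is the permutation that fixes $[000]$ and swaps $[120]\leftrightarrow[210]$. Hermiticity is then immediate (one is a projector, the other is self-inverse under swapping the two off-diagonal kets/bras). The four products are then a short matrix calculation in this basis: $r(\tcaT_0)^2=9P_{\mathrm{CSS}(\tcaN)}=3r(\tcaT_0)$, $r(\tcaT_0)r(\tcaT_1)=r(\tcaT_1)r(\tcaT_0)=3r(\tcaT_1)$, and $r(\tcaT_1)^2=3r(\tcaT_0)$, matching $3r(\tcaT_{i+j\mod 2})$ in every case.

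For $d=1\mmod 3$, Hermiticity of $r(\tcaT_{ij})$ up to the swap $i\leftrightarrow j$ is visible by inspection of \eqref{eq:lambda4}. The heart of the argument is the computation of the inner product $\<\tcaN_j,[x\mathbf{1}_3]|\tcaN_k,[y\mathbf{1}_3]\>$, which I would split into two cases. When $j=k$ the coset states are orthonormal, giving $\delta_{xy}$. When $j\neq k$, writing the coset states as $d^{-1/2}\sum_a |x\mathbf{1}_3+a\bfg_j\>$ and similarly for $k$, the overlap vanishes unless $(x-y)\mathbf{1}_3=b\bfg_k-a\bfg_j$ for some $a,b\in\bbF_d$; the key input is that $\mathbf{1}_3,\bfg_0,\bfg_1$ form a basis of $\bbF_d^3$ (from \eref{eq:g12Properties}), which forces $x=y$ and $a=b=0$ and yields an overlap of $d^{-1}\delta_{xy}$. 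Assembling $r(\tcaT_{ij})r(\tcaT_{kl})=d^2\sum_{x,y}|\tcaN_i,[x\mathbf{1}_3]\>\<\tcaN_j,[x\mathbf{1}_3]|\tcaN_k,[y\mathbf{1}_3]\>\<\tcaN_l,[y\mathbf{1}_3]|$ then produces the extra factor $d$ exactly when $j=k$, giving $d^{1-|j-k|}r(\tcaT_{il})$. Raising to the $n$th tensor power converts each single-qudit factor of $d$ into $D$, yielding the stated identities. The only mild obstacle is bookkeeping the coset inner products in the $j\neq k$ case; everything else is essentially linear algebra driven by the explicit formulas \eqref{eq:rT01} and \eqref{eq:lambda4} together with the basis property of $\mathbf{1}_3,\bfg_0,\bfg_1$.
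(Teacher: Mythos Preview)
Your proposal is correct. For Hermiticity you and the paper proceed the same way, reading it off the explicit coset-state formulas \eqref{eq:rT01} and \eqref{eq:lambda4}. For the product identities you take a more direct computational route than the paper: you evaluate the coset-state overlaps $\<\tcaN_j,[x\mathbf{1}_3]\mid\tcaN_k,[y\mathbf{1}_3]\>$ explicitly (using the linear independence of $\mathbf{1}_3,\bfg_0,\bfg_1$) and multiply through, whereas the paper instead invokes the group action of $O_3(d)$ on $\scrT_\defe$ from \lref{lem:SemigroupT} together with the CSS projector property $R(\tcaT_0)=3^nP_{\mathrm{CSS}(\tcaN)}$ (for $d=3$) and the fact that each $\tcaT_{ij}$ is determined by its left/right defect subspaces (for $d=1\mmod 3$). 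Your approach is more self-contained in that it does not rely on \lref{lem:SemigroupT} being established first and makes the scalar $d^{1-|j-k|}$ fall out transparently from the overlap calculation; the paper's approach is more structural but terser, leaning on the semigroup structure of the defective Lagrangians under $O_3(d)$.
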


\begin{lemma}\label{lem:TDeltaSum}
	Suppose  $d$ is an odd prime. Then 
	\begin{gather}
	\sum_{\caT\in \Sigma(d)} r(\caT_\Delta)=2\sum_{\caT\in \scrT_1} r(\caT_\Delta)=
	2\sum_{\caT\in \scrT_0} r(\caT_\Delta)=2\left(\bbI+d\sum_{x\in \bbF_d} |xxx\>\<xxx|\right). \label{eq:TDeltaSum}
	\end{gather}
\end{lemma}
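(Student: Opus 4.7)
The plan is to reduce both sides to operators that are diagonal in the computational basis of $\caH_d^{\otimes 3}$ and compare them entry by entry. Since $\caT_\Delta\leq \Delta$ for every $\caT\in\Sigma(d)$, the definition of $r(\cdot)$ gives $r(\caT_\Delta)=\sum_{\bfx\in \caT_\tDelta} |\bfx\>\<\bfx|$, so setting $c_i(\bfx):=|\{\caT\in \scrT_i : \bfx\in \caT_\tDelta\}|$ for $i=0,1$ yields $\sum_{\caT\in \scrT_i} r(\caT_\Delta)=\sum_{\bfx\in \bbF_d^3} c_i(\bfx)\,|\bfx\>\<\bfx|$. Writing the right-hand side of \eref{eq:TDeltaSum} as $2(\bbI+dP)$ with $P:=\sum_{x\in \bbF_d}|xxx\>\<xxx|$, the lemma reduces to the pointwise claim $c_0(\bfx)=c_1(\bfx)=d+1$ for $\bfx\in \spa(\mathbf{1}_3)$ and $c_0(\bfx)=c_1(\bfx)=1$ otherwise.

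For the $\scrT_0$ contribution I would simply invoke \lref{lem:TDelta}: the diagonal subspace $\Delta\in \scrT_0$ satisfies $\Delta_\tDelta=\bbF_d^3$, while every other $\caT\in \scrT_0$ satisfies $\caT_\tDelta=\spa(\mathbf{1}_3)$. Since $|\scrT_0|=d+1$, each $\bfx\in \spa(\mathbf{1}_3)$ lies in $\caT_\tDelta$ for every $\caT\in \scrT_0$ and contributes $c_0(\bfx)=d+1$, whereas each $\bfx\notin \spa(\mathbf{1}_3)$ is captured only by $\caT=\Delta$, giving $c_0(\bfx)=1$. This already produces the partial sum $\bbI+dP$.

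For the $\scrT_1$ contribution I would use the alternative construction of \sref{sec:SLSconstruct2}: by \eref{eq:TvSigma} one has $\scrT_1=\{\caT_{\bfv_y}\}_{y=0}^{d}$, and by \eref{eq:TvDeltaDefect} one has $(\caT_\bfv)_\tDelta=\bfv^\perp$. Hence $c_1(\bfx)$ equals the number of lines $\spa(\bfv_y)\leq \mathbf{1}_3^\perp$ with $\bfv_y\cdot \bfx=0$, i.e.\ the number of one-dimensional subspaces of $\mathbf{1}_3^\perp\cap \bfx^\perp$. The subspace $\mathbf{1}_3^\perp$ is two-dimensional (the linear functional $\bfx\mapsto \bfx\cdot\mathbf{1}_3$ on $\bbF_d^3$ is surjective in every arithmetic case of $d\mmod 3$) and therefore hosts exactly $d+1$ lines. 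If $\bfx\in \spa(\mathbf{1}_3)$ then $\mathbf{1}_3^\perp\subseteq \bfx^\perp$, so all $d+1$ lines contribute and $c_1(\bfx)=d+1$. Otherwise $\spa(\mathbf{1}_3,\bfx)$ is two-dimensional, its orthogonal complement is a single line, and $c_1(\bfx)=1$. The two partial sums coincide and add up to the claimed expression, proving all three equalities in \eref{eq:TDeltaSum}.

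The only point requiring a bit of care, rather than a genuine obstacle, is confirming that the alternative parametrization $\scrT_1=\{\caT_{\bfv_y}\}_{y=0}^d$ is exhaustive and injective in each of the cases $d=3$, $d=1\mmod 3$, and $d=2\mmod 3$; this follows from \lref{lem:tauTv} combined with the cardinality count $|\scrT_1|=d+1$ from \eref{eq:Sigma33DefectNum} and the fact that $\mathbf{1}_3^\perp$ hosts exactly $d+1$ one-dimensional subspaces. Everything else is straightforward bookkeeping.
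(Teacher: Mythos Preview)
Your proof is correct and follows essentially the same route as the paper: both treat $\scrT_0$ via \lref{lem:TDelta} and both handle $\scrT_1$ by identifying $\{\caT_\tDelta:\caT\in\scrT_1\}$ with the $d+1$ two-dimensional subspaces of $\bbF_d^3$ containing $\mathbf{1}_3$ (the paper calls this set $\scrV$ and argues via the bijection $\caT\mapsto\caT_\tDelta$, while you use the dual parametrization $\caT_{\bfv_y}\mapsto\bfv_y^\perp$ from \eref{eq:TvSigma} and \eref{eq:TvDeltaDefect}). Your pointwise multiplicity count $c_i(\bfx)$ is just a reorganization of the paper's sum $\sum_{V\in\scrV}\sum_{\bfx\in V}|\bfx\>\<\bfx|$.
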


The following proposition  determines the traces of $R(\caT)$ for $\caT\in \Sigma(d)$ and Hilbert-Schmidt inner products between these operators. It is a simple 	 corollary of \eqsref{eq:RTtr}{eq:RT1T2tr} as well as  \lsref{lem:TDelta} and \ref{lem:TDelta2}. 

\begin{proposition}\label{pro:RTT1T2tr}
	Suppose $d$ is an odd   prime, $n\in \bbN$,  $D=d^n$, and $\caT\in \Sigma(d)$.   Then	
	\begin{align}\label{eq:RscrTtr}
	\tr[R(\caT)]&=
	\begin{cases}
	D^3 &\mbox{if}\;\;  \caT=\Delta , \\
	D &\mbox{if}\;\; \caT\neq \Delta, \caT\in \scrT_0, \\
	D^2 &\mbox{if}\;\;  \caT\in \scrT_1.
	\end{cases}
	\end{align}	
	If $\caT_1\in \scrT_i$ and  $\caT_2\in \scrT_j$ with $i,j=0,1$, then 
	\begin{align} \label{eq:RscrT0T1tr}
	\tr[R(\caT_1)^\dag R(\caT_2)]&=
	\begin{cases}
	D^3 &\mbox{if}\;\;  \caT_1=\caT_2, \\
	D &\mbox{if}\;\;  \caT_1\neq \caT_2, i=j, \\
	D^2 &\mbox{if}\;\;  i\neq j. 
	\end{cases}
	\end{align}
\end{proposition}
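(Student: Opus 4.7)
The plan is to observe that this proposition is a direct computational corollary of Eqs.~(eq:RTtr), (eq:RT1T2tr), combined with the dimension formulas from \Lsref{lem:TDelta} and \ref{lem:TDelta2}. Concretely, Eq.~(eq:RTtr) states that $\tr R(\caT) = d^{n\dim \caT_\Delta}$, and Eq.~(eq:RT1T2tr) states that $\tr[R(\caT_1)^\dag R(\caT_2)] = d^{n\dim(\caT_1 \cap \caT_2)}$. The only task is to substitute the appropriate dimensions.

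For the first formula for $\tr[R(\caT)]$, I would split into three cases based on $\scrT_0,\scrT_1$. If $\caT = \Delta$, then $\caT_\Delta = \Delta$, which has dimension~$3$, yielding $d^{3n} = D^3$. If $\caT \in \scrT_0$ with $\caT \neq \Delta$ (note that $\Delta = \caT_\mathds{1} \in \scrT_\even \subseteq \scrT_0$, so $\Delta$ indeed belongs to $\scrT_0$), then \lref{lem:TDelta} gives $\caT_\Delta = \spa(\mathbf{1}_6)$, so $\dim \caT_\Delta = 1$ and $\tr[R(\caT)] = d^n = D$. Finally, if $\caT \in \scrT_1$, then \lref{lem:TDelta} gives $\dim \caT_\Delta = 2$ and $\tr[R(\caT)] = d^{2n} = D^2$.

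For the Hilbert--Schmidt inner product $\tr[R(\caT_1)^\dag R(\caT_2)]$, I would similarly enumerate the three cases of \lref{lem:TDelta2}. If $\caT_1 = \caT_2$, then $\dim(\caT_1\cap \caT_2) = 3$ and the trace equals $D^3$; if $\caT_1 \neq \caT_2$ but $i = j$, then $\dim(\caT_1 \cap \caT_2) = 1$ and the trace equals $D$; if $i \neq j$, then $\dim(\caT_1 \cap \caT_2) = 2$ and the trace equals $D^2$.

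Since every step reduces to matching a dimension count to an exponent of $d$, there is no real obstacle here; the only subtlety worth flagging explicitly is the placement of $\Delta$ itself inside $\scrT_0$ (to verify that the case split in the statement of the proposition is exhaustive and consistent with \lref{lem:TDelta}). The proof can therefore be written as a compact two-sentence verification, citing Eqs.~(eq:RTtr), (eq:RT1T2tr) and \lsref{lem:TDelta},~\ref{lem:TDelta2} without any additional machinery.
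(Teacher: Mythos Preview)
Your proposal is correct and matches the paper's approach exactly: the paper states that this proposition ``is a simple corollary of \eqsref{eq:RTtr}{eq:RT1T2tr} as well as \lsref{lem:TDelta} and \ref{lem:TDelta2}'' and gives no further detail. Your explicit enumeration of the three cases in each part is precisely the intended verification.
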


Next,  we regard $R(\caT)$ as an operator on $\bigl(\caH_d^{\otimes n}\bigr)^{\otimes 3}$ and label the three copies of $\caH_d^{\otimes n}$ by $A, B, C$, respectively. The partial traces of $R(\caT)$ with respect to one or two  subsystems are clarified in the  following lemma.
\begin{lemma}\label{lem:RTpt}
	Suppose $d$ is an odd   prime, $n\in \bbN$,  $D=d^n$, and $\caT\in \Sigma(d)$.  Then 
	\begin{align}
	\tr_A R(\caT)&=\begin{cases}
	D \bbI & \mbox{if}\;\; \caT=\Delta, \\ 	
	\mathrm{SWAP} & \mbox{if}\;\; \caT\in \scrT_0\setminus \{\Delta\},\\
	\;\;\; \bbI & \mbox{if}\;\; \caT\in \scrT_1\cap \scrT_\ns,
	\end{cases}\label{eq:RTpt} \\
	\tr_{AB} R(\caT)&=\begin{cases}
	D^2 \bbI & \mbox{if}\;\; \caT=\Delta, \\
	\;\;\;\bbI  & \mbox{if}\;\; \caT\in \scrT_0\setminus \{\Delta\},\\
	\; D\bbI & \mbox{if}\;\; \caT\in \scrT_1. 
	\end{cases} \label{eq:RTpt2}
	\end{align}
	\Eref{eq:RTpt} still holds  if the partial trace is taken with respect to party $B$ or $C$; \eref{eq:RTpt2} still holds  if the partial trace is taken with respect to any other two parties.
\end{lemma}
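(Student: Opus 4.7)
The plan is to reduce the $n$-qudit partial trace to a local computation via the factorization
\[
\tr_A R(\caT) = \bigl(\tr_a r(\caT)\bigr)^{\otimes n},
\]
which follows from $R(\caT) = r(\caT)^{\otimes n}$ and the fact that each factor $r(\caT)$ acts on its own triple of qudits $(a_i,b_i,c_i)$, while party $A$ comprises precisely the first qudit of each triple. Under the standard regrouping $(\caH_d^{\otimes 2})^{\otimes n} \cong \caH_d^{\otimes n}\otimes\caH_d^{\otimes n} = B\otimes C$, the local operators $d\bbI$, $\mathrm{SWAP}_{bc}$, and $\bbI$ lift to $D\bbI$, $\mathrm{SWAP}_{BC}$, and $\bbI$ respectively, so \eref{eq:RTpt} reduces to establishing $\tr_a r(\caT)=d\bbI$, $\mathrm{SWAP}_{bc}$, or $\bbI$ in the three respective cases.

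The case $\caT=\Delta$ is trivial since $r(\Delta)=\bbI$. For $\caT=\caT_O$ with $O\in O_3(d)$, $r(\caT_O)=\sum_\bfx|O\bfx\>\<\bfx|$ is a basis permutation, whence $\tr_a r(\caT_O)=\sum_{\bfx\in\caS}|(O\bfx)_{23}\>\<\bfx_{23}|$ with $\caS:=\{\bfx\in\bbF_d^3:(O\bfx)_1=x_1\}$. The heart of the argument is identifying $\caS$: writing $\bfy=O\bfx$ and using orthogonality and stochasticity of $O$ together with $y_1=x_1$, one forces $y_2+y_3=x_2+x_3$ and $y_2^2+y_3^2=x_2^2+x_3^2$, which admit only $(y_2,y_3)\in\{(x_2,x_3),(x_3,x_2)\}$; equivalently $\caS=\ker(O-\mathds{1})\cup\ker(O-\tau_{23})$. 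Since $\mathbf{1}_3$ lies in both kernels, this union is actually nested, and a dimension count finishes the analysis: (i) for $O\in O_3^\even\setminus\{\mathds{1}\}$, $\ker(O-\mathds{1})=\spa(\mathbf{1}_3)$ is one-dimensional, while $\tau_{23}O\in O_3^\odd$ yields $\dim\ker(O-\tau_{23})=2$ (odd isometries satisfy $O_\odd^2=\mathds{1}$ and $\tr(O_\odd)=1$, so eigenvalue $+1$ has multiplicity $2$), giving $\caS=\ker(O-\tau_{23})$ and $\bfy=\tau_{23}\bfx$ on $\caS$; (ii) symmetrically, for $O\in O_3^\odd\setminus\{\tau_{23}\}$, $\caS=\ker(O-\mathds{1})$ is two-dimensional and $\bfy=\bfx$ on $\caS$. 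A quick check that $\bfx\mapsto\bfx_{23}$ restricts to a bijection $\caS\to\bbF_d^2$ in each case (its kernel is forced to lie in $\ker(O-\mathds{1})\cap\{x_2=x_3=0\}$, which is trivial) then yields $\tr_a r(\caT_O)=\mathrm{SWAP}_{bc}$ in case~(i) and $\bbI$ in case~(ii).

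The defect cases $\caT\in\scrT_\defe$ are handled by direct computation using the explicit coset-state formulas \eref{eq:rT01} and \eref{eq:lambda4}: since the first component of each nontrivial defect vector $\bfg_i$ equals $1$, taking $\tr_a$ enforces the coefficient condition $z=w$, after which the map $(x,z)\mapsto(x+zg_{i,2},x+zg_{i,3})$ is an invertible change of variables (because $g_{i,2}\neq g_{i,3}$), yielding $\bbI$ when $i=j$ and $\mathrm{SWAP}_{bc}$ when $i\neq j$. Combined with the classification \eref{eq:scrT01} these align precisely with the claimed values on $\scrT_0\setminus\{\Delta\}$ and $\scrT_1\cap\scrT_\ns$. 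The partial traces over $B$ and $C$ follow by replacing $\tau_{23}$ with $\tau_{13}$ or $\tau_{12}$ throughout. Finally, \eref{eq:RTpt2} follows from iterating \eref{eq:RTpt} using $\tr_B\bbI_{BC}=D\bbI_C$ and $\tr_B\mathrm{SWAP}_{BC}=\bbI_C$, with the three transpositions in $\scrT_1\cap\scrT_\sym$ not directly covered by \eref{eq:RTpt} checked by hand (each single trace is $\bbI$ or $D\lsp\mathrm{SWAP}$ depending on which party is fixed, but the double trace uniformly equals $D\bbI$). The main obstacle is the parity-dependent case analysis for $\caT_O$ — namely obtaining $\caS=\ker(O-\mathds{1})\cup\ker(O-\tau_{23})$ from the bilinear constraints and then getting the dimension counts right across both parities of $O$.
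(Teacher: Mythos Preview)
Your proof is correct and takes a genuinely different route from the paper for the isometry cases. The paper handles $O\in O_3^\odd\setminus S_3$ by invoking \lref{lem:kerOI2} (the rows of $O-\mathds{1}$ are proportional, so the single constraint $[(O-\mathds{1})\bfx]_1=0$ already forces $\bfx\in\ker(O-\mathds{1})$), and then reduces the even case to the odd case by writing $\caT=\tau_{23}\caT'$ with $\caT'\in\scrT_1\cap\scrT_\ns$. Your approach instead derives the quadratic constraints $y_2+y_3=x_2+x_3$ and $y_2^2+y_3^2=x_2^2+x_3^2$ directly from orthogonality and stochasticity, giving $\caS=\ker(O-\mathds{1})\cup\ker(O-\tau_{23})$ uniformly, and then uses the parity-dependent dimension formula from \lref{lem:kerOI} to identify the dominant kernel. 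This treats both parities on equal footing and avoids \lref{lem:kerOI2} entirely, at the modest cost of checking that $(1,0,0)\notin\ker(O-\mathds{1})$ for $O\neq\tau_{23}$ (which follows since the first column of such $O$ cannot be $(1,0,0)^\top$). Both proofs handle the defect subspaces by direct computation with the explicit coset-state formulas; your observation that the first component of each $\bfg_i$ is nonzero cleanly isolates the $z=w$ condition in the $d=1\bmod 3$ case, though the $d=3$ computation with \eref{eq:rT01} needs a slightly different bookkeeping (the coset representatives have different first components, so the matching condition is $w=z+\mathrm{const}$ rather than $w=z$).
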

According to \lref{lem:RTpt}, if $\caT$ does not correspond to a transposition, then the result of the partial trace is independent of the party or parties with respect to which the partial trace is taken.

\subsection{\label{sec:Spanning}A spanning set and dual frame}
By virtue of \pref{pro:RTT1T2tr}, here we show that the set $\{R(\caT)\}_{\caT \in \Sigma(d)}$ spans the commutant of the third Clifford tensor power. When $n\geq 2$, this conclusion  was already established by GNW \cite{GrosNW21}. Our analysis shows that the restriction $n\geq 2$ is not necessary. The following theorem is proved in \aref{app:SpanProof}.

\begin{theorem}\label{thm:spanR}
	Suppose $d$ is an odd prime and $n\in \bbN$. Then $\{R(\caT)\}_{\caT \in \Sigma(d)}$ spans the commutant of $\Cl(n,d)^{\totimes 3}$. If in addition  $n \ge 2$, then $\{R(\caT)\}_{\caT \in \Sigma(d)}$ forms a basis of the commutant; in the case $n=1$, any subset of $\{R(\caT)\}_{\caT \in \Sigma(d)}$ with cardinality $2d+1$ forms a basis of the commutant. 
\end{theorem}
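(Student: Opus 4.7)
The plan is to reduce the theorem to a rank computation for the Gram matrix of the family. Since \rcite{GrosNW21} already establishes that each $R(\caT)$ lies in the commutant of $\Cl(n,d)^{\totimes 3}$, it suffices to compare $\dim\spa\{R(\caT)\}_{\caT\in \Sigma(d)}$ with the commutant dimension. By standard representation theory the latter equals $\Phi_3(\Cl(n,d))$, which is $2d+1$ for $n=1$ and $2d+2$ for $n\geq 2$ by \eref{eq:CliffordFP3}. So the theorem reduces to computing the rank of the Gram matrix $G_{\caT_1,\caT_2}:=\tr[R(\caT_1)^\dag R(\caT_2)]$ and, in the $n=1$ case, identifying its null space.

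By \pref{pro:RTT1T2tr}, the partition $\Sigma(d)=\scrT_0\sqcup\scrT_1$ with $|\scrT_0|=|\scrT_1|=d+1$ gives $G$ the block form
\begin{equation*}
G=\begin{pmatrix} A & B \\ B & A \end{pmatrix},
\end{equation*}
where $A$ is $(d+1)\times(d+1)$ with $D^3$ on the diagonal and $D$ off-diagonal, and $B=D^2 J$ for $J$ the all-ones matrix. I would diagonalize $G$ by splitting $\bbR^{2(d+1)}$ into two invariant pieces: the $2d$-dimensional subspace on which each block component is orthogonal to $\mathbf{1}_{d+1}$, where $B$ annihilates and $A$ acts as $(D^3-D)\mathds{1}$; and the complementary $2$-dimensional subspace spanned by $(\mathbf{1}_{d+1};0)$ and $(0;\mathbf{1}_{d+1})$, on which $G$ restricts to a $2\times 2$ matrix whose eigenvalues are readily checked to be $D(D+1)(D+d)$ and $D(D-1)(D-d)$.

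For $n\geq 2$ one has $D=d^n>d$, so all three eigenvalue families are nonzero and $G$ has full rank $2d+2$; hence $\{R(\caT)\}_{\caT\in \Sigma(d)}$ is linearly independent and forms a basis of the commutant. For $n=1$ the eigenvalue $D(D-1)(D-d)$ vanishes while the others survive, giving $\rk G=2d+1$, which matches the commutant dimension and yields the spanning property; the unique (up to scalar) linear relation among the $R(\caT)$ is read off from the null eigenvector $(\mathbf{1}_{d+1};-\mathbf{1}_{d+1})$, namely $\sum_{\caT\in\scrT_0}R(\caT)=\sum_{\caT\in\scrT_1}R(\caT)$.

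The only additional point, in the $n=1$ case, is the claim about arbitrary $(2d+1)$-element subsets. This follows from the fact that every entry of the null eigenvector is nonzero: removing any single $R(\caT)$ breaks the unique linear dependence, so the remaining $2d+1$ operators are independent and continue to span the $(2d+1)$-dimensional commutant. I do not anticipate a serious obstacle, since all entries of $G$ are handed to us directly by \pref{pro:RTT1T2tr} and the block symmetry makes the diagonalization essentially immediate; the hardest step is bookkeeping the two $2\times 2$ eigenvalues and noting that only $D(D-1)(D-d)$ is sensitive to the borderline case $n=1$.
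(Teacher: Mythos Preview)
Your proposal is correct and follows essentially the same approach as the paper: reduce to the Gram matrix via \pref{pro:RTT1T2tr}, exploit the $\scrT_0/\scrT_1$ block structure, and read off the three eigenvalue families $D^3-D$, $D(D+1)(D+d)$, $D(D-1)(D-d)$. The one small difference is in the $n=1$ subset claim: the paper deletes a row and column and invokes Cauchy interlacing (or direct computation) to show the principal submatrix has full rank $2d+1$, whereas you argue directly that the unique null vector $(\mathbf{1}_{d+1};-\mathbf{1}_{d+1})$ has no zero entries, so deleting any single operator destroys the only relation. Your argument is slightly more direct and avoids the interlacing theorem.
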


When  $n\geq 2$, $\{R(\caT)\}_{\caT \in \Sigma(d)}$  forms a basis of the commutant of $\Cl(n,d)^{\totimes 3}$ thanks to \thref{thm:spanR}. By virtue of \pref{pro:RTT1T2tr}, we can further determine the dual basis. Define
\begin{align}\label{eq:DualBasis}
\tR(\caT):=\frac{1}{D^3-D}\left[R(\caT)+\frac{d}{D^2-d^2}R(\scrT_j) -\frac{D}{D^2-d^2}R(\scrT_{\bar{j}})\right]\quad \forall \caT\in \scrT_j,
\end{align}
where $\bar{j}=1-j$. 
Then $\{\tR(\caT)\}_{\caT \in \Sigma(d)}$ is the dual basis of $\{R(\caT)\}_{\caT \in \Sigma(d)}$ and satisfies the duality relation,
\begin{align}\label{eq:DualOrtho}
\tr[R(\caT_1)^\dag \tR(\caT_2)]=\delta_{\caT_1, \caT_2}\quad \forall \caT_1, \caT_2\in \Sigma(d).
\end{align}
Thanks to \pref{pro:RTT1T2tr} and \thref{thm:spanR}, 
as $n$ gets large, $\{R(\caT)/D^{3/2}\}_{\caT\in \Sigma(d)}$ becomes a better and better approximation to an orthonormal basis in the commutant of the third Clifford tensor power; consequently, we have $\tR(\caT)\approx R(\caT)/D^3$.

When $n=1$, the operators in $\{R(\caT)\}_{\caT \in \Sigma(d)}$ are not linearly independent, although they still span the commutant of $\Cl(n,d)^{\totimes 3}$. In this case,  $D=d$ and $\tR(\scrT_0)=\tR(\scrT_1)$, so $\tR(\caT)$ in  \eref{eq:DualBasis} is ill defined. Nevertheless, we can still construct a dual frame as follows,
\begin{align}\label{eq:DualBasisn1}
\tR(\caT):=\frac{1}{d^3-d}\left[R(\caT)-\frac{1}{2d}R(\scrT_0) \right], 
\end{align}
which can be understood as the limit of $\tR(\caT)$ in \eref{eq:DualBasis} as $D$ approaches $d$.  

In both cases, by virtue of \Pref{pro:RTT1T2tr} it is straightforward to verify that
\begin{equation}
\sum_{\caT'\in \Sigma(d)}\tr[\tR(\caT')^\dag R(\caT)]R(\caT')=\sum_{\caT'\in \Sigma(d)}\tr[R(\caT')^\dag R(\caT)]\tR(\caT')=R(\caT)\quad \forall \caT\in \Sigma(d),
\end{equation}
so $\{\tR(\caT)\}_{\caT\in \Sigma(d)}$ indeed forms a dual frame of $\{R(\caT)\}_{\caT\in \Sigma(d)}$. Consequently, any operator $\Ob$ in the commutant of  $\Cl(n,d)^{\totimes 3}$ can be expressed as follows,
\begin{align}
\Ob=\sum_{\caT\in \Sigma(d)}\tr[\tR(\caT)^\dag \Ob]R(\caT)=\sum_{\caT\in \Sigma(d)}\tr[R(\caT)^\dag \Ob]\tR(\caT). 
\end{align}

\subsection{Schatten norms of  $R(\caT)$ and related operators}

Here we clarify the Schatten norms of $R(\caT)$  for $\caT\in \Sigma(d)$ and related operators.  If $\caT\in \scrT_\iso$, that is, $\caT=\caT_O$ for $O\in O_3(d)$, then $R(\caT)$ is a permutation matrix, and all singular values of $R(\caT)$ are equal to~1. If $\caT\in \scrT_\defe$, then $|\caT_\LD|=|\caT_\RD|=d$; in addition, either $\caT$ or $\caT\tau$  is a stochastic Lagrangian subspace of CSS type according to \lsref{lem:defectT} and~\ref{lem:SemigroupT}, where $\tau$ is a transposition in $S_3$. Consequently, either $R(\caT)/D$ or $R(\caT)R(\tau)/D$ is a stabilizer projector of rank $D$ by \eref{eq:rTCSS}. Therefore,  $R(\caT)$ has $D$ nonzero singular values, all of which are equal to $D$, and
the Schatten $\ell$-norm of $R(\caT)$ for $\ell\geq 1$ (including $\ell=\infty$) reads
\begin{align}
\!\!\| R(\caT) \|_\ell=\begin{cases} 
D^{3/\ell} &\! \caT\in \scrT_\iso,\\
D^{(l+1)/l} &\! \caT\in \scrT_\defe. 
\end{cases}
\end{align}

\begin{table*}
	\renewcommand{\arraystretch}{1.5}
	\caption{\label{tab:RisoRdefe}Basic properties of the operators $R(\scrT_\iso)$, $ R(\scrT_\defe)$, and $R(\Sigma(d))$. }	
	\begin{math}
	\begin{array}{c|ccc}
	\hline\hline
	& d=3 &d=1 \mmod 3& d=2\mmod 3 \\
	\hline
	\tr R(\scrT_\iso) &	D(D+1)(D+d-1) & D(D+1)(D+d-2) &	D(D+1)(D+d)\\[1ex]	
	\|R(\scrT_\iso)\| & 2d  & 2d-2 & 2d+2\\	 		
	\rk R(\Sigma(d))=\rk R(\scrT_\iso) &\frac{D(D+1)(D+d-1)}{2d}	&\frac{D(D+1)(D+d-2)}{2d-2} &\frac{D(D+1)(D+d)}{2d+2}	 \\
	\tr R(\scrT_\defe)&D^2+D &2D^2+2D &0\\	
	\|R(\scrT_\defe)\|  & 2D  & 2D+2  & 0\\		
	\rk  R(\scrT_\defe) & \frac{D+1}{2} & D &0\\
	\tr R(\Sigma(d)) & D(D+1)(D+d) & D(D+1)(D+d) & D(D+1)(D+d)\\
	\|R(\Sigma(d))\|  & 2D+2d &2D+2d&2d+2\\
	\hline\hline
	\end{array}	
	\end{math}	
\end{table*}

Next, we clarify the basic properties of $R(\scrT_\iso)$,  $R(\scrT_\defe)$, and $R(\Sigma(d))=R(\scrT_\iso)+R(\scrT_\defe)$ defined according to \eref{eq:RscrT}. 
Note that $\scrT_\defe$ is empty when $d=2\mmod 3$, in which case $R(\scrT_\defe)=0$. The following lemma proved in \aref{app:RTisodefeProof} will be very useful to understanding the third moments of Clifford orbits, including the orbit of stabilizer states in particular.

\begin{lemma}\label{lem:RTisodefe}
	Suppose $d$ is an odd prime and $n\in \bbN$.  Then $R(\scrT_\sym)=6P_\sym$,  $R(\scrT_\iso)$, and $R(\scrT_\defe)$ are proportional to commuting projectors, and $R(\Sigma(d))$ is a positive operator. In addition,
	\begin{align}\label{eq:RTisodefeSupp}
	\supp R(\scrT_\defe)\leq \supp R(\scrT_\iso)=\supp R(\Sigma(d))\leq\Sym_3\bigl(\caH_d^{\otimes n}\bigr).
	\end{align}	
	The basic properties of $R(\scrT_\iso)$,  $R(\scrT_\defe)$, and   $R(\Sigma(d))$ are  summarized in \tref{tab:RisoRdefe}.
\end{lemma}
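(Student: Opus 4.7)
The plan is to exploit the (semi)group structure developed in \lsref{lem:defectT}, \ref{lem:SemigroupT}, and \ref{lem:SemigroupR}. The identity $R(\scrT_\sym)=6P_\sym$ is immediate since each $R(\caT_\sigma)$ with $\sigma\in S_3$ is the permutation operator representing $\sigma$ on $\caH^{\otimes 3}$, and $P_\sym$ is by definition the normalized sum of all six. For $R(\scrT_\iso)=\sum_{O\in O_3(d)}R(O)$, I would observe that $O\mapsto R(O)$ is a unitary representation of $O_3(d)$, so $R(\scrT_\iso)=|O_3(d)|\,\Pi_0=2h\,\Pi_0$, where $\Pi_0$ is the orthogonal projector onto the $O_3(d)$-invariant subspace of $\caH^{\otimes 3}$. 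Since $S_3\leq O_3(d)$, that invariant subspace lies inside $\Sym_3(\caH_d^{\otimes n})$, which gives the rightmost inclusion of \eref{eq:RTisodefeSupp}.

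Next I would handle $R(\scrT_\defe)$ case by case using \lref{lem:SemigroupR}. When $d=3$, expanding $[R(\tcaT_0)+R(\tcaT_1)]^2$ and applying \eref{eq:RTiTjd3} gives $R(\scrT_\defe)^2=2D\,R(\scrT_\defe)$. When $d=1\mmod 3$, the sixteen products $R(\tcaT_{ij})R(\tcaT_{kl})$ split according to whether $j=k$ (contributing $D$) or $j\neq k$ (contributing $1$), yielding $R(\scrT_\defe)^2=2(D+1)\,R(\scrT_\defe)$. Combined with Hermiticity, which follows from $R(\tcaT_i)^\dag=R(\tcaT_i)$ and $R(\tcaT_{ij})^\dag=R(\tcaT_{ji})$, this shows $R(\scrT_\defe)$ equals $2D$ or $2(D+1)$ times an orthogonal projector; when $d=2\mmod 3$, $\scrT_\defe=\emptyset$ and the claim is trivial. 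The commutation and support assertions then drop out of \lref{lem:SemigroupT}: since $O\scrT_\defe=\scrT_\defe O=\scrT_\defe$ for every $O\in O_3(d)$, one has $R(O)R(\scrT_\defe)=R(\scrT_\defe)R(O)=R(\scrT_\defe)$, and summing over $O$ gives both $R(\scrT_\iso)R(\scrT_\defe)=R(\scrT_\defe)R(\scrT_\iso)=2h\,R(\scrT_\defe)$ and $\supp R(\scrT_\defe)\leq\supp\Pi_0=\supp R(\scrT_\iso)$. Positivity of $R(\Sigma(d))$ and $\supp R(\scrT_\iso)=\supp R(\Sigma(d))$ are then immediate because both summands are positive with compatible supports.

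The entries of \tref{tab:RisoRdefe} are routine bookkeeping. For $O\in O_3(d)$, $\tr R(O)=D^{\dim\ker(O-I)}$, and the dihedral description from \lref{lem:O3Dh} partitions $O_3(d)$ into the identity, $h-1$ nontrivial rotations with fixed subspace $\spa(\mathbf{1}_3)$, and $h$ reflections with a 2D fixed subspace; hence $\tr R(\scrT_\iso)=D^3+hD^2+(h-1)D=D(D+1)(D+h-1)$, which specializes to the three tabulated formulas via \eref{eq:hvalue}. The trace $\tr R(\scrT_\defe)$ is obtained by splitting $\scrT_\defe$ between $\scrT_0$ and $\scrT_1$ according to \eref{eq:scrT01} and applying \pref{pro:RTT1T2tr}, while operator norms and ranks are read off from the scaled-projector structure ($\|c\Pi\|=c$ and $\rk(c\Pi)=\tr/c$). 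The only genuinely nontrivial step is the quadratic identity $R(\scrT_\defe)^2\propto R(\scrT_\defe)$, and even there the bookkeeping collapses because the exponent $1-|j-k|$ in \lref{lem:SemigroupR} takes only two values.
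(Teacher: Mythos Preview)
Your proposal is correct and follows essentially the same route as the paper: both use the group-representation structure for $R(\scrT_\iso)$, the quadratic identities from \lref{lem:SemigroupR} for $R(\scrT_\defe)$, the absorption $R(O)R(\scrT_\defe)=R(\scrT_\defe)$ from \lref{lem:SemigroupT} for commutation and support, and then read off the table from traces and scaled-projector norms. The only cosmetic difference is that you compute $\tr R(\scrT_\iso)$ by counting fixed points via the dihedral structure, whereas the paper invokes \pref{pro:RTT1T2tr} together with the partition $\scrT_\iso=(\scrT_0\cap\scrT_\iso)\sqcup(\scrT_1\cap\scrT_\iso)$; these are the same computation in different packaging.
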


\subsection{\label{sec:ShadowMapR}Shadow maps associated with $R(\caT)$}

As in \lref{lem:RTpt}, here we regard $R(\caT)$ for $\caT\in \Sigma(d)$ as  operators on $\bigl(\caH_d^{\otimes n}\bigr)^{\otimes 3}$ and label the three copies of $\caH_d^{\otimes n}$ by $A, B, C$, respectively.
In analogy  to \eref{eq:ShadowMap}, the shadow maps $\caR_\caT(\cdot)$ tied to $R(\caT)$ are defined as
\begin{gather}
\caR_\caT(\Ob)  :=\tr_{BC}\bigl[R(\caT)\bigl(\bbI\otimes \Ob\otimes \Ob^\dag\bigr)\bigr],\quad \Ob\in \caL\bigl(\caH_d^{\otimes n}\bigr). \label{eq:ShadowMapR}
\end{gather}
As we shall see later, properties of $\caR_\caT(\Ob)$, including its operator norm in particular, will be instrumental to studying the shadow norms of Clifford orbits.

When $\Ob$ is a stabilizer projector on $\caH_d^{\otimes n}$, we can derive an  explicit formula for $\caR_\caT(\Ob)$  as shown in the following lemma and proved in \aref{app:RTOproof}. 
\begin{lemma}\label{lem:RTObStabProj}
	Suppose $d$ is an odd prime and $\Ob$ is a rank-$K$ stabilizer projector on  $\caH_d^{\otimes n}$. Then 
	\begin{align}
	\caR_\caT(\Ob)&=\begin{cases}
	K^2\bbI &\mbox{if}\quad  \caT=\Delta,\\
	K \bbI &\mbox{if}\quad \caT=\caT_{\tau_{23}}, \\
	\Ob &\mbox{if}\quad \caT\in \scrT_0\setminus \{\Delta\},\\
	K\Ob &	\mbox{if}\quad \caT\in \scrT_1\setminus\{\caT_{\tau_{23}}\}. 
	\end{cases} \label{eq:RTOstabProj}
	\end{align}
\end{lemma}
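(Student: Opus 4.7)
The plan is to reduce the lemma to a computation on a single canonical rank-$K$ stabilizer projector via Clifford covariance, and then to evaluate that computation by exploiting the tensor-factorized form of $R(\caT) = r(\caT)^{\otimes n}$.

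First, I will check that $\caR_\caT(U\Ob U^\dag) = U\caR_\caT(\Ob) U^\dag$ for every $U \in \Cl(n,d)$: this follows by writing $\bbI \otimes U\Ob U^\dag \otimes U\Ob^\dag U^\dag = (\bbI\otimes U\otimes U)(\bbI\otimes \Ob\otimes \Ob^\dag)(\bbI\otimes U^\dag\otimes U^\dag)$, inserting $U^\dag\cdot U$ on the $A$-factor to complete a $U^{\otimes 3}$, commuting $U^{\otimes 3}$ through $R(\caT)$ (which lies in the commutant of the third Clifford tensor power), and using cyclicity of $\tr_{BC}$ on operators supported in $BC$. Because all rank-$K = d^k$ stabilizer codes form a single Clifford orbit, it then suffices to verify the formula for the canonical projector $P_k := \bbI^{\otimes k}\otimes |0\>\<0|^{\otimes(n-k)}$, whose stabilizer group is generated by $Z_{k+1},\ldots,Z_n$.

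Since both $R(\caT)=r(\caT)^{\otimes n}$ and $P_k$ factorize as tensor products over the $n$ qudit sites, $\caR_\caT(P_k)$ decomposes as a tensor product of $n$ single-qudit partial traces. For each of the first $k$ sites (where $P_k$ acts as $\bbI$), \eqref{eq:RTpt2} of \lref{lem:RTpt} applied with $n=1$ gives $\tr_{B_jC_j} r(\caT) = d^2\bbI$, $\bbI$, or $d\bbI$ according as $\caT=\Delta$, $\caT\in\scrT_0\setminus\{\Delta\}$, or $\caT\in\scrT_1$. For each of the remaining $n-k$ sites, the factor equals $\sum_{(x,x')\in\caT|_A}|x\>\<x'|$, where $\caT|_A := \{(x,x')\in\bbF_d^2 : (x,0,0;x',0,0)\in\caT\}$; the key sub-claim is that $\caT|_A = \{(x,x): x\in\bbF_d\}$ when $\caT\in\{\Delta,\caT_{\tau_{23}}\}$ and $\caT|_A = \{(0,0)\}$ otherwise. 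Taking the tensor product over sites and matching against the four cases of the lemma yields $K^2\bbI$, $K\bbI$, $\Ob$, or $K\Ob$ respectively.

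The main obstacle is the sub-claim on $\caT|_A$. If $(x,0,0;x',0,0)\in\caT$ has exactly one of $x,x'$ nonzero, then the corresponding triple lies in $\caT_\LD$ or $\caT_\RD$, and \lref{lem:defect} forces it to be a defect vector; but $(x,0,0)\cdot(x,0,0) = x^2\neq 0$, a contradiction. If both are nonzero, polarizing the totally isotropic form $\mathfrak{q}$ against $(x,0,0;x',0,0)$ and $\mathbf{1}_6\in\caT$ forces $x = x'$, and after rescaling we have $(1,0,0;1,0,0)\in\caT$. Taking a third generator $(u,v,w;u',v',w')\in\caT$ and imposing isotropy against the two existing generators and itself yields $u=u'$, $v+w=v'+w'$, and $v^2+w^2 = v'^2+w'^2$; since $d$ is an odd prime, these equalities force $\{v,w\}=\{v',w'\}$ as multisets in $\bbF_d$, leaving only the diagonal case $\caT=\Delta$ (from $(v,w)=(v',w')$) and the $BC$-swap case $\caT=\caT_{\tau_{23}}$ (from $(v,w)=(w',v')$). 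This case analysis is where the care is needed, and it crucially uses both \lref{lem:defect} and that $d$ is odd.
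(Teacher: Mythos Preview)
Your proof is correct and follows essentially the same route as the paper: Clifford covariance reduces to a canonical tensor-product projector, the map factorizes over sites, \lref{lem:RTpt} handles the identity factors, and a direct computation handles the rank-one factors. The paper packages that last step as a separate auxiliary lemma (\lref{lem:RTObStab}, proved via the zero-entry count of \lref{lem:TzeroEntry}), whereas your polarization argument for $\caT|_A$ reaches the same conclusion directly from the Lagrangian condition.
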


Next, we turn to the general situation in which $\Ob$ is an arbitrary linear operator  on $\caH_d^{\otimes n}$. If $\caT\in \scrT_\sym$, then $\caT=\caT_O$ with $O\in S_3$, and the operator norm of $\caR_{\caT}(\Ob)$ can be determined by straightforward calculation,
\begin{align}\label{eq:RTOS3norm}
\|\caR_{\caT}(\Ob)\|=
\begin{cases}
|\tr(\Ob)|^2 &\mbox{if} \; O=\mathds{1},\\
|\tr(\Ob)|\|\Ob\| &\mbox{if} \; O=\tau_{12}, \tau_{13},\\
\|\Ob\|_2^2  & \mbox{if}\; O=\tau_{23},\\
\|\Ob\|^2& \mbox{if}\; O=\zeta, \zeta^2.
\end{cases}
\end{align}
In addition, $\sum_{\caT\in \scrT_\sym}\caR_\caT(\Ob)$ is determined by \eref{eq:PsymOb} given that $\sum_{\caT\in \scrT_\sym}R(\caT)=6P_\sym$.

If $\caT\in \scrT_\ns$, then the situation is more complicated. Nevertheless, we can derive  universal upper bounds for the operator norms of  $\caR_\caT(\Ob)$, which are crucial to studying the shadow norms of Clifford orbits. \Lsref{lem:RTO} and \ref{lem:RTOdiag} below are proved in  \aref{app:RTOproof}. 
\begin{lemma}\label{lem:RTO}
	Suppose $d$ is an odd prime and $\Ob\in \caL\bigl(\caH_d^{\otimes n}\bigr)$. Then 
	\begin{align}\label{eq:RTOUB1}
	\|\caR_\caT(\Ob)\|\leq \|\Ob\|_2^2\quad \forall  \caT\in \scrT_\ns. 
	\end{align}
	If in addition $\Ob$ is traceless, then 
	\begin{align}\label{eq:RTOUB2}
	\|\caR_\caT(\Ob)\|\leq \|\Ob\|_2^2\quad \forall  \caT\in \Sigma(d). 
	\end{align}	
\end{lemma}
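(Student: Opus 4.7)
The plan is to reduce the second claim to the first: for $\caT\in\scrT_\sym$ the explicit formula \eref{eq:RTOS3norm} immediately gives $\|\caR_\caT(\Ob)\|\leq \|\Ob\|_2^2$ when $\Ob$ is traceless (the cases $O=\mathds{1},\tau_{12},\tau_{13}$ vanish, $O=\tau_{23}$ saturates, and $O=\zeta,\zeta^2$ use $\|\Ob\|\leq \|\Ob\|_2$), so everything comes down to the first claim for $\caT\in\scrT_\ns$. The key idea there is to exploit the alternative construction of \sref{sec:SLSconstruct2}: by \eref{eq:TvSigma} every $\caT\in \Sigma(d)$ is either $\caT_\bfv$ or $\tau_{12}\caT_\bfv$ for some nonzero $\bfv\in \mathbf{1}_3^\perp$, and by \lref{lem:Tv} the hypothesis $\caT\in \scrT_\ns$ is equivalent to $v_1 v_2 v_3\neq 0$.

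Using the duality $\|\caR_\caT(\Ob)\|=\sup_{\|\phi\|=\|\psi\|=1}|\tr[R(\caT)(|\psi\>\<\phi|\otimes \Ob\otimes \Ob^\dag)]|$ and expanding $R(\caT)=r(\caT)^{\otimes n}$ with $r(\caT)=\sum_{(\bfx;\bfy)\in\caT}|\bfx\>\<\bfy|$ under the isomorphism $(\caH_d^{\otimes 3})^{\otimes n}\cong \caH^{\otimes 3}$, the matrix element takes the quadrilinear form
\begin{equation*}
\<\phi|\caR_\caT(\Ob)|\psi\>=\sum \overline{\phi_{\bfX_A}}\,\psi_{\bfY_A}\,\Ob_{\bfY_B,\bfX_B}\,\overline{\Ob_{\bfX_C,\bfY_C}},
\end{equation*}
where the sum ranges over $n$-tuples $\{(\bfx^{(i)};\bfy^{(i)})\}_{i=1}^n$ of single-site elements of $\caT$ and $\bfX_a,\bfY_a\in \bbF_d^n$ are the site-wise concatenations of the $a$-th coordinates. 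Cauchy--Schwarz grouping $(\overline{\phi_{\bfX_A}}\Ob_{\bfY_B,\bfX_B})$ against $(\psi_{\bfY_A}\overline{\Ob_{\bfX_C,\bfY_C}})$ then gives $|\<\phi|\caR_\caT(\Ob)|\psi\>|^2\leq S_1 S_2$ with $S_1=\sum |\phi_{\bfX_A}|^2 |\Ob_{\bfY_B,\bfX_B}|^2$ and $S_2=\sum |\psi_{\bfY_A}|^2 |\Ob_{\bfX_C,\bfY_C}|^2$.

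The final step is to parameterize each single-site element by $(\bfx'^{(i)},c^{(i)})\in \bfv^\perp\times \bbF_d$ via \eref{eq:Tv} (with $\tau_{12}$ applied to the first three coordinates in the $\scrT_0$ case) and show via a short linear-algebra check that the maps $(\bfx'^{(i)},c^{(i)})\mapsto (x^{(i)}_1,x^{(i)}_2,y^{(i)}_2)$ and $(\bfx'^{(i)},c^{(i)})\mapsto (y^{(i)}_1,x^{(i)}_3,y^{(i)}_3)$ are both bijections onto $\bbF_d^3$ whenever $v_2 v_3\neq 0$: one uses $v_2\neq 0$ to solve for $c^{(i)}$ and $x'^{(i)}_2$ from the relevant sum and difference of coordinates, and $v_3\neq 0$ to recover $x'^{(i)}_3$ from the orthogonality constraint $\bfv\cdot\bfx'^{(i)}=0$. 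Site-wise bijectivity turns $S_1$ into the factorized sum $\sum_{\bfX_A,\bfX_B,\bfY_B\in \bbF_d^n}|\phi_{\bfX_A}|^2 |\Ob_{\bfY_B,\bfX_B}|^2=\|\phi\|^2\|\Ob\|_2^2=\|\Ob\|_2^2$, and similarly $S_2=\|\Ob\|_2^2$, which combined with Cauchy--Schwarz yields the claim.

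The main obstacle is ensuring the bijectivity uniformly over $\caT\in\scrT_\ns$, and this is exactly where the structural dichotomy from \lref{lem:Tv} enters: the bijections require only $v_2 v_3\neq 0$, strictly weaker than the $\scrT_\ns$ hypothesis $v_1 v_2 v_3\neq 0$. This slack explains precisely why the bound is unconditional on $\Ob$ for $\scrT_\ns$ but fails for $\scrT_\sym$, where for example $\caT_{\tau_{12}}$ corresponds to $\bfv=(1,-1,0)$ with $v_3=0$ and the Cauchy--Schwarz route is blocked.
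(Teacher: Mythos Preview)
Your proof is correct. Both your argument and the paper's hinge on the same combinatorial fact---that certain three-coordinate projections from $\caT$ to $\bbF_d^3$ are bijective whenever $\caT\in\scrT_\ns$---but they establish and deploy it differently. The paper proves bijectivity abstractly via \lref{lem:TzeroEntry} (no nonzero vector in $\caT\in\scrT_\ns$ can have $x_1=x_i=y_i=0$, so the linear maps $f_i\colon(\bfx;\bfy)\mapsto(x_1,x_i,y_i)$ have trivial kernel), then bounds the maximum row and column sums of the matrix $\caR_\caT(\Ob)$ and invokes the norm inequality $\|M\|\leq\sqrt{\vertiii{M}_1\vertiii{M}_\infty}$ (\lref{lem:Matrix3norms}). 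You instead verify bijectivity by hand from the explicit $\caT_\bfv$ parameterization and package the estimate as a single Cauchy--Schwarz on the duality pairing $\<\phi|\caR_\caT(\Ob)|\psi\>$. Your route is more computational but self-contained, avoiding both auxiliary lemmas; the paper's is more structural and treats $\scrT_\iso\setminus\scrT_\sym$ and $\scrT_\defe$ uniformly through the zero-entry classification rather than splitting into the $\caT_\bfv$ and $\tau_{12}\caT_\bfv$ cases. Your observation that only $v_2v_3\neq 0$ is needed---strictly weaker than the $\scrT_\ns$ condition $v_1v_2v_3\neq 0$---is a nice diagnostic of exactly where the argument breaks for $\scrT_\sym$.
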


\begin{lemma}\label{lem:RTOdiag}
	Suppose $d$ is an odd prime
	and $\Ob\in \caL\bigl(\caH_d^{\otimes n}\bigr)$ is diagonal in a stabilizer basis. Then 
	\begin{align}
	\|\caR_\caT(\Ob)\|= \|\Ob\|^2\quad \forall \caT\in \scrT_0\setminus\{\Delta\}. 	
	\label{eq:RTOeven}
	\end{align}
\end{lemma}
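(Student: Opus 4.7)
The plan is to exploit Clifford covariance of the shadow map $\caR_\caT$ to reduce to $\Ob$ diagonal in the computational basis, and then to analyze the subspace $\caT_\ast \leq \caT$ that governs the resulting expression using the alternative construction of stochastic Lagrangian subspaces from \sref{sec:SLSconstruct2}.

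First I would establish that $\caR_\caT(U\Ob U^\dag) = U\caR_\caT(\Ob) U^\dag$ for every $U \in \Cl(n,d)$. This follows from the commutation $U^{\otimes 3} R(\caT) = R(\caT) U^{\otimes 3}$ granted by \thref{thm:spanR}, which after conjugating by $U_A$ (and using that $U_A$ commutes with $U_B U_C$) yields the intertwining identity $U_A R(\caT) U_A^\dag = U_B^\dag U_C^\dag R(\caT) U_B U_C$; combining this with trace cyclicity on the partially traced subsystems $B$ and $C$ and pulling $U_A$ outside the partial trace gives the covariance. Since every stabilizer basis is the image of the computational basis under some Clifford unitary, it suffices to treat $\Ob = \sum_\bfc o_\bfc |\bfc\>\<\bfc|$.

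For such $\Ob$, unpacking $R(\caT) = r(\caT)^{\otimes n}$ and tracing out $BC$ (with the three components of vectors in $\bbF_d^3$ labeled $A, B, C$ matching the three tensor factors) yields
\begin{equation*}
\caR_\caT(\Ob) = \sum_{(\bfx;\bfy)\in \caT_\ast^n} o_{\bfy_B}\, \bar{o}_{\bfy_C}\, |\bfx_A\>\<\bfy_A|,\qquad \caT_\ast := \{(\bfx;\bfy)\in\caT : x_B = y_B,\; x_C = y_C\}.
\end{equation*}
The heart of the proof is the structural claim $\caT_\ast = \caT_\Delta$ for every $\caT \in \scrT_0 \setminus\{\Delta\}$; granting this, \lref{lem:TDelta} gives $\caT_\Delta = \spa(\mathbf{1}_6)$, so after regrouping the three tensor factors $\caT_\ast^n = \{(\bfc,\bfc,\bfc;\bfc,\bfc,\bfc) : \bfc \in \bbF_d^n\}$ and the sum collapses to $\caR_\caT(\Ob) = \sum_\bfc |o_\bfc|^2 |\bfc\>\<\bfc| = \Ob\Ob^\dag$, yielding $\|\caR_\caT(\Ob)\| = \|\Ob\|^2$.

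To verify $\caT_\ast \leq \caT_\Delta$ (the reverse inclusion being trivial), I would use the alternative construction from \sref{sec:SLSconstruct2}, writing $\caT = \tau_{12}\caT_\bfv$ with a characteristic vector $\bfv = (v_1,v_2,v_3) \in \mathbf{1}_3^\perp$, and parametrizing a generic element of $\caT$ as $\bigl(\tau_{12}(\bfx + a\bfv);\, \bfx - a\bfv\bigr)$ with $\bfx \in \bfv^\perp$ and $a \in \bbF_d$. The condition $x_C = y_C$ then becomes $2av_3 = 0$ and forces $a = 0$ when $v_3 \neq 0$; the condition $x_B = y_B$ combined with $\bfx\cdot\bfv = 0$ subsequently forces $\bfx \in \spa(\mathbf{1}_3)$, so the element lies in $\spa(\mathbf{1}_6) = \caT_\Delta$. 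The hypothesis $\caT \neq \Delta$ is exactly $v_3 \neq 0$, because any $\bfv \in \mathbf{1}_3^\perp$ with $v_3 = 0$ is proportional to $(1,-1,0)$, which is the characteristic vector of $\tau_{12}\Delta$. The main obstacle is this last step: a crude isotropy argument only yields $x_A = \pm y_A$, and ruling out the "$-$" branch uniformly across all $\caT \in \scrT_0\setminus\{\Delta\}$—including the defect elements $\tcaT_1$ for $d = 3$ and $\tcaT_{01}, \tcaT_{10}$ for $d = 1\mmod 3$—genuinely requires the finer information carried by the characteristic vector.
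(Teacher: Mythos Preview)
Your proposal is correct and follows the same overall strategy as the paper: reduce to $\Ob$ diagonal in the computational basis by Clifford covariance, identify the effective summation set $\caT_\ast$ as $\caT_\Delta$, and then invoke \lref{lem:TDelta}. The one substantive difference lies in how you establish $\caT_\ast = \caT_\Delta$. You parametrize $\caT = \tau_{12}\caT_\bfv$ via a characteristic vector and work out the constraints on $(a,\bfx)$; this is correct but more elaborate than necessary. The paper instead records (as \lref{lem:TDeltaij}) that $\caT \cap \Delta_{ij} = \caT_\Delta$ for \emph{every} $\caT \in \Sigma(d)$ and every $i \neq j$, as an immediate consequence of the row-sum identity $\bfx \cdot \mathbf{1}_3 = \bfy \cdot \mathbf{1}_3$ for $(\bfx;\bfy) \in \caT$, obtained by polarizing the isotropy condition against the stochasticity vector $\mathbf{1}_6 \in \caT$. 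Thus your concern that ``a crude isotropy argument only yields $x_A = \pm y_A$'' and that the characteristic vector is ``genuinely required'' is unwarranted: isotropy \emph{plus} stochasticity pin down the sign in one line, uniformly across all of $\Sigma(d)$, with no separate handling of the defect elements.
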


\section{\label{sec:3rdmomentStab}Third moment of  stabilizer states}
In this section we first clarify the basic properties of the third normalized moment operator  $\bQ(n,d,3)$ of the ensemble $\Stab(n,d)$ of stabilizer states, including its spectrum, rank, and operator norm, assuming that $d$ is a prime. Then we  determine the shadow norms of stabilizer projectors with respect to $\Stab(n,d)$ and the shadow norm  of $\Stab(n,d)$ itself.


\subsection{Basic properties}

According to \eref{eq:MomentOtth}, the third moment operator $Q(n,d,3)$ and normalized moment operator $\bQ(n,d,3)$ of $\Stab(n,d)$ read
\begin{align}
Q(n,d,3)&=\frac{\sum_{\caT\in \Sigma(d)} R(\caT)}{D(D+1)(D+d)}, \quad 
\bQ(n,d,3)=\frac{D+2}{6(D+d)}\sum_{\caT\in \Sigma(d)} R(\caT)=\frac{(D+2)[R(\scrT_\iso) +R(\scrT_\defe)]}{6(D+d)},   \label{eq:bQnd3}
\end{align}
where the last equality holds because $\Sigma(d)=\scrT_\iso\sqcup\scrT_\defe$. Note that $\scrT_\defe$ is empty when $d=2\mmod 3$. The above equation holds even if $d=2$, in which case $\Sigma(d)=\scrT_\iso=\scrT_\sym$ and $\bQ(n,d,3)$ coincides with the projector $P_\sym$ onto the tripartite symmetric subspace $\Sym_3\bigl(\caH_d^{\otimes n}\bigr)$; these conclusions are consistent with the fact that  $\Stab(n,d)$ is a 3-design when $d=2$ \cite{KuenG13,Zhu17MC,Webb16}. 
In general, $\bQ(n,d,3)$ and $Q(n,d,3)$ are supported in  $\Sym_3\bigl(\caH_d^{\otimes n}\bigr)$
and they are proportional to each other by definition, so their properties are tied to each other. Here we shall focus on $\bQ(n,d,3)$ for simplicity.  The spectrum of $\bQ(n,d,3)$ within $\Sym_3\bigl(\caH_d^{\otimes n}\bigr)$ is clarified in \pref{pro:Qspectra}  and \tref{tab:bQnd3}, which follow from \eref{eq:bQnd3}  and \lref{lem:RTisodefe}.

\begin{table*}
	\renewcommand{\arraystretch}{1.8}
	\caption{\label{tab:bQnd3} Eigenvalues of $\bQ(n,d,3)$ within $\Sym_3\bigl(\caH_d^{\otimes n}\bigr)$, where $d$ is a prime. Three potential eigenvalues are denoted by  $\lambda_1, \lambda_2, \lambda_3$, which are arranged in decreasing order; their multiplicities are denoted by $m_1, m_2, m_3$. The rank of $\bQ(n,d,3)$ and operator norms of $\bQ(n,d,3)$	and $\bQ(n,d,3)-P_\sym$ are also listed for completeness.}	
	\begin{math}
	\begin{array}{c|ccc}
	\hline\hline
	& d=3 &d=1 \mmod 3& d=2\mmod 3 \\
	\hline		
	\lambda_1 & \frac{D+2}{3} & \frac{D+2}{3}&  \frac{(d+1)(D+2)}{3(D+d)}\\	
	m_1  & \frac{D+1}{2} & D &\frac{D(D+1)(D+d)}{2d+2}\\
	\lambda_2 & \frac{D+2}{D+3} &\frac{(d-1)(D+2)}{3(D+d)} &0\\
	m_2 & \frac{D(D+1)(D+2)}{6}-\frac{D+1}{2} & \frac{D(D+1)(D+d-2)}{2d-2}-D &   
	\frac{D(D+1)(D+2)}{6}-\frac{D(D+1)(D+d)}{2d+2}
	\\
	\lambda_3 &/ & 0 &/  \\
	m_3  &/ & \frac{D(D+1)(D+2)}{6}-\frac{D(D+1)(D+d-2)}{2d-2} &/  \\
	\rk(\bQ(n,d,3))	& \frac{D(D+1)(D+2)}{6}& \frac{D(D+1)(D+d-2)}{2d-2} & \frac{D(D+1)(D+d)}{2d+2}\\	
	\|\bQ(n,d,3)\| & \frac{D+2}{3}	& \frac{D+2}{3} & \frac{(d+1)(D+2)}{3(D+d)}\\
	\|\bQ(n,d,3)-P_\sym\| &\frac{D-1}{3}	& \frac{D-1}{3} & \begin{cases} 1 &\mbox{if  } d=5\\
	\frac{(d+1)(D+2)}{3(D+d)}-1 &\mbox{if  } d\geq 11
	\end{cases} \\[1.8ex]
	\hline\hline
	\end{array}	
	\end{math}	
\end{table*}

\begin{proposition}\label{pro:Qspectra}
	Suppose $d$ is a prime, then the spectrum of $\bQ(n,d,3)$ within the symmetric subspace $\Sym_3\bigl(\caH_d^{\otimes n}\bigr)$  is given in \tref{tab:bQnd3}.
\end{proposition}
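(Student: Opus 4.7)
The plan is to leverage \Lref{lem:RTisodefe} together with \Tref{tab:RisoRdefe}: according to the lemma, $R(\scrT_\iso)$ and $R(\scrT_\defe)$ are each proportional to a projector, these two projectors commute, and the support nesting
\[
\supp R(\scrT_\defe)\leq \supp R(\scrT_\iso)\leq \Sym_3\bigl(\caH_d^{\otimes n}\bigr)
\]
holds. Writing $R(\scrT_\iso)=c_\iso P_\iso$ and $R(\scrT_\defe)=c_\defe P_\defe$ where $c_\iso=\|R(\scrT_\iso)\|$, $c_\defe=\|R(\scrT_\defe)\|$, and $P_\iso, P_\defe$ are commuting (orthogonal) projectors with $P_\iso P_\defe=P_\defe$, the support nesting gives the orthogonal decomposition $P_\iso=P_\defe+(P_\iso-P_\defe)$. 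Thus the operator $R(\scrT_\iso)+R(\scrT_\defe)$ is diagonalized with three possible eigenvalues: $c_\iso+c_\defe$ on $\supp P_\defe$, $c_\iso$ on $\supp P_\iso\ominus\supp P_\defe$, and $0$ on the orthocomplement within $\Sym_3\bigl(\caH_d^{\otimes n}\bigr)$.

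Multiplying by the prefactor $(D+2)/[6(D+d)]$ from \eref{eq:bQnd3} converts these into the three eigenvalues $\lambda_1,\lambda_2,\lambda_3$ of $\bQ(n,d,3)$. The corresponding multiplicities are
\[
m_1=\rk R(\scrT_\defe),\quad m_2=\rk R(\scrT_\iso)-\rk R(\scrT_\defe),\quad m_3=\pi_{[3]}-\rk R(\scrT_\iso),
\]
with $\pi_{[3]}=\binom{D+2}{3}=D(D+1)(D+2)/6$. The next step is to insert the norms and ranks recorded in \Tref{tab:RisoRdefe} and separate by the three cases $d=3$, $d=1\mmod 3$, $d=2\mmod 3$. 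For example, when $d=1\mmod 3$ one gets $c_\iso+c_\defe=(2d-2)+(2D+2)=2(D+d)$, so $\lambda_1=(D+2)/3$, while $\lambda_2=(D+2)(2d-2)/[6(D+d)]=(d-1)(D+2)/[3(D+d)]$; the case $d=3$ is analogous (and in fact $\supp R(\scrT_\iso)$ coincides with $\Sym_3$ so $m_3=0$ and only two eigenvalues arise), while when $d=2\mmod 3$ the set $\scrT_\defe$ is empty so $R(\scrT_\defe)=0$ and only the eigenvalues $\lambda_1$ and $0$ survive. This reproduces all entries of the first five rows of \Tref{tab:bQnd3}.

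Finally, the last three rows follow directly from the spectrum just obtained. The rank equals $m_1+m_2$ in all three cases. Since $\lambda_1>\lambda_2>0$ (when nonempty), $\|\bQ(n,d,3)\|=\lambda_1$. For $\|\bQ(n,d,3)-P_\sym\|$, note that $\bQ(n,d,3)-P_\sym$ is supported in $\Sym_3\bigl(\caH_d^{\otimes n}\bigr)$ and has eigenvalues $\lambda_i-1$ on each $\lambda_i$-eigenspace of $\bQ(n,d,3)$ (with $\lambda_i=0$ on the part of $\Sym_3$ killed by $\bQ$). A short numerical comparison of $|\lambda_1-1|$, $|\lambda_2-1|$ and (where applicable) $1$ gives the stated value; the only delicate case is $d=5$, where $(d-2)(D-1)/[3(D+d)]<1$ for all $D=5^n$, making the kernel contribution dominate.

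The bookkeeping is the only non-trivial part of the argument, and essentially no new estimate is needed beyond what \Lref{lem:RTisodefe} and \Tref{tab:RisoRdefe} already provide. The main conceptual input — that $R(\scrT_\iso)$ and $R(\scrT_\defe)$ are commuting scalar multiples of projectors with nested supports — has been deposited into the preceding lemma, so there is no real obstacle here; the proof reduces to reading off eigenvalues of a sum of two such operators and then carrying out a case distinction on $d\bmod 3$.
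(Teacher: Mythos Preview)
Your proposal is correct and follows essentially the same approach as the paper, which simply states that the proposition follows from \eref{eq:bQnd3} and \lref{lem:RTisodefe} without giving further details. You have spelled out the bookkeeping more explicitly than the paper does, and your case-by-case verification (including the $d=5$ subtlety for $\|\bQ(n,d,3)-P_\sym\|$) is accurate.
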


When $d=2\mmod 3$, the operator $\bQ(n,d,3)$ is proportional to a projector. When $d=3$ or $d=1\mmod 3$, by contrast, $\bQ(n,d,3)$ has two distinct nonzero eigenvalues. 
When  $d=2$ or $d=3$,  $\bQ(n,d,3)$ has full rank in $\Sym_3\bigl(\caH_d^{\otimes n}\bigr)$.  When $d\geq 5$, by contrast, $\bQ(n,d,3)$ is rank deficient; moreover, the ratio
$\rk(\bQ(n,d,3))/\pi_{[3]}$ decreases monotonically with $n$ and approaches the following limit when  $n\to \infty$, 
\begin{equation}
\lim_{n\to \infty} \frac{\rk(\bQ(n,d,3))}{\pi_{[3]}}=
\begin{cases}
\frac{6}{2d+2} \quad & \text{if} \ d= 2\mmod 3,\\
\frac{6}{2d-2} \quad & \text{if} \ d= 1\mmod 3.
\end{cases}
\end{equation}

By virtue of \pref{pro:Qspectra}, it is easy to compute the Schatten $\ell$-norm
of $\bQ(n,d,3)$ for $\ell\geq 1$, in which the operator norm
is presented in \tref{tab:bQnd3}.
In the case $d= 2\mmod 3$, the general formula is also quite simple, 
\begin{align}\label{eq:Qnd3HSnorm}
&\|\bQ(n,d,3)\|_\ell =\frac{D(D+1)(D+2)}{6} \left(\frac{2d+2}{D(D+1)(D+d)}\right)^{1-1/\ell},\quad d= 2\mmod 3.
\end{align}
This formula is still applicable in the case $d=3$ or $d= 1\mmod 3$ if $\ell=1$ or $\ell=2$. Based on this observation, we can determine the third normalized frame potential, 
\begin{align}\label{eq:Phi3Stab}
\bPhi_3(n,d,3):=\bPhi_3(\Stab(n,d))=\frac{6\|\bQ(n,d,3)\|_2^2}{D(D+1)(D+2)}=\frac{(d+1)(D+2)}{3(D+d)},
\end{align}
which agrees with the result derived in \rcite{KuenG13}.

The dependence of  $\bPhi_3(n,d,3)$  on the local dimension $d$ and qudit number $n$ is illustrated in \fref{fig:framepot_stab}. Note  that $\bPhi_3(n,d,3)$ increases linearly with the local dimension $d$, but is almost independent of the qudit number $n$ when $n \geq 5$. The behavior of  $\|\bQ(n,d,3)\|$ illustrated in  \fref{fig:Qnorm_stab} is similar to  $\bPhi_3(n,d,3)$ when $d =2 \mmod 3$. When  $d \neq 2 \mmod 3$, however, $\|\bQ(n,d,3)\|$ grows linearly with the total dimension $D$ and exponentially with $n$, which means $\Stab(n,d)$ deviates exponentially from a 3-design with respect to $\|\bQ(n,d,3)\|$ as $n$ increases.

\begin{figure}[tb]
	\centering 
	\includegraphics[width=0.8\textwidth]{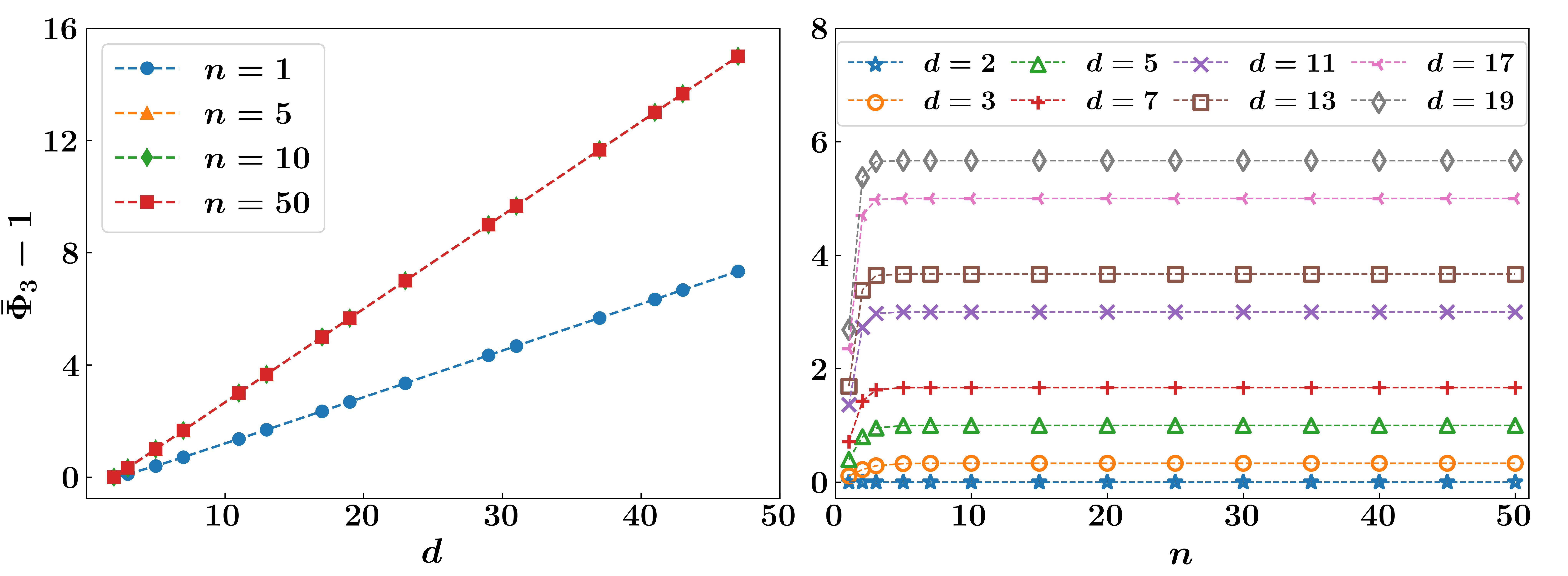}
	\caption{The deviation of the third normalized frame potential $\bPhi_3(n,d,3)$ of the ensemble $\Stab(n,d)$ of stabilizer states  as a function of the local dimension $d$  and qudit number $n$. Note that $d$ is a prime in this figure and all other figures in this paper. The dashed lines/curves are guides for the eye, which is similar for other figures. In the left plot, the results on $n=5,10,50$ almost coincide.} 
	\label{fig:framepot_stab}
\end{figure}

\begin{figure}[bt]
	\centering 
	\includegraphics[width=0.8\textwidth]{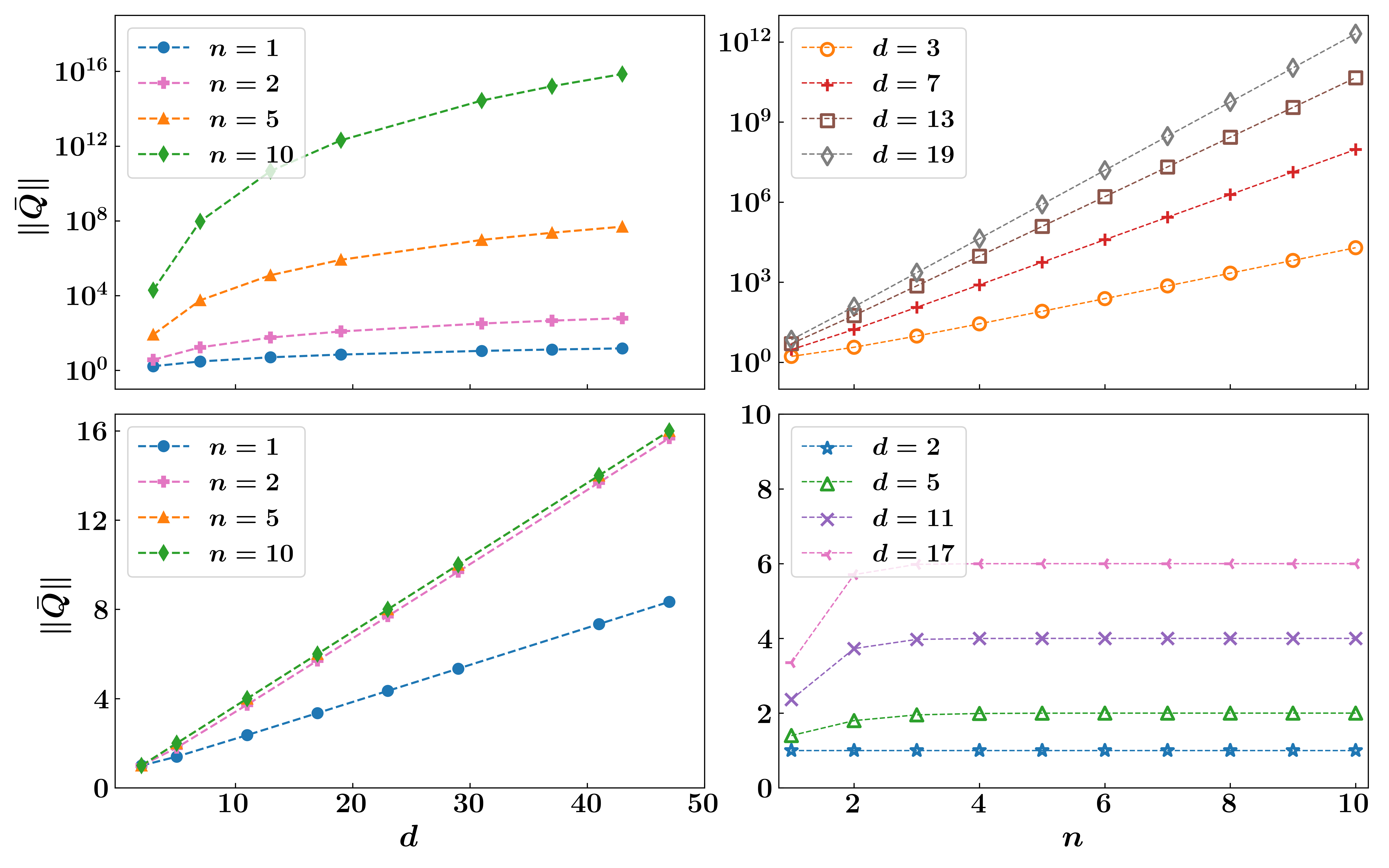}
	\caption{The operator norm of the third normalized moment operator $\bQ(n,d,3)$ of $\Stab(n,d)$  as a function of  $d$ and $n$. Here $d$ is a prime with $d \neq 2 \mmod 3$  in the first row, while it is a prime with 
		$d=2 \mmod 3$  in the second row because the properties of  $\|\bQ(n,d,3)\|$ in the two cases are dramatically different. } 
	\label{fig:Qnorm_stab}
\end{figure}

\subsection{Shadow norm}
Let  $\bcaQ_{n,d}(\cdot)$ be the shadow map associated with $\bQ(n,d,3)$ as defined in \eref{eq:ShadowMap}. Given  $\Ob\in \caL\bigl(\caH_d^{\otimes n}\bigr)$, let  $\|\Ob\|_\Stab$ be the shadow norm of $\Ob$ with respect to   $\Stab(n,d)$.
By virtue of \eqsref{eq:ObShNormDef}{eq:bQnd3} we can deduce that
\begin{align}
\bcaQ_{n,d}(\Ob)&=\frac{D+2}{6(D+d)}\sum_{\caT\in \Sigma(d)}\caR_\caT(\Ob), \quad \|\Ob\|^2_\Stab= \frac{6(D+1)}{D+2}\|\bcaQ_{n,d}(\Ob)\|. 
\label{eq:QndObShNormStab}
\end{align}
Here we are mainly interested in the shadow norms of traceless operators because of their connection with variances in shadow estimation \cite{HuanKP20}. Recall that the traceless part of $\Ob$ reads $\Ob_0=\Ob-\tr(\Ob)\bbI/D$. 
When $\Ob$ is a rank-$K$ stabilizer projector, we can derive  explicit formulas for $6\bcaQ_{n,d}(\Ob)$ and $6\bcaQ_{n,d}(\Ob_0)$ and thereby determine the shadow norm $\|\Ob_0\|_{\Stab}$ as long as $d$ is a prime (including $d=2$). The following lemma is proved in \aref{app:ThirdMomentStab}. 
\begin{lemma}\label{lem:QObStabProj}
	Suppose $d$ is a prime and $\Ob$ is a rank-$K$ stabilizer projector on  $\caH_d^{\otimes n}$. Then 
	\begin{align}
	6\bcaQ_{n,d}(\Ob)&=\frac{(D+2)[(K^2+K)\bbI+d(K+1)\Ob]}{(D+d)}, \label{eq:QObStabProj}\\
	6\bcaQ_{n,d}(\Ob_0)
	&=\frac{K(D+2)[D^2-(dD-D-d)K]\bbI}{D^2(D+d)}+\frac{(D+2)[dD+(dD-2D-2d)K]\Ob}{D(D+d)}.
	\label{eq:QO0stabProj}
	\end{align}
\end{lemma}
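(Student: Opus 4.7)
The plan is to build on \lref{lem:RTObStabProj} together with the expansion in \eref{eq:QndObShNormStab}, which reads $6\bcaQ_{n,d}(\Ob)=\tfrac{D+2}{D+d}\sum_{\caT\in \Sigma(d)}\caR_\caT(\Ob)$. For \eref{eq:QObStabProj}, I will partition $\Sigma(d)=\{\Delta\}\sqcup(\scrT_0\setminus\{\Delta\})\sqcup\{\caT_{\tau_{23}}\}\sqcup(\scrT_1\setminus\{\caT_{\tau_{23}}\})$; this is a valid disjoint decomposition since $\tau_{23}$ is odd and hence $\caT_{\tau_{23}}\in \scrT_\odd\subseteq \scrT_1$. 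Substituting the four cases of \lref{lem:RTObStabProj} and using $|\scrT_0|=|\scrT_1|=d+1$ from \eref{eq:scrT01}, the four classes contribute $K^2\bbI$, $d\Ob$, $K\bbI$, and $dK\Ob$, respectively, summing to $(K^2+K)\bbI+d(K+1)\Ob$. Multiplying by $(D+2)/(D+d)$ yields \eref{eq:QObStabProj}.

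For \eref{eq:QO0stabProj}, since $\Ob$ is Hermitian, so is $\Ob_0=\Ob-(K/D)\bbI$. I will extend the shadow map to the symmetric bilinear form $\widetilde{\bcaQ}(X,Y):=\tr_{BC}[\bQ(n,d,3)(\bbI\otimes X\otimes Y)]$ on Hermitian operators; the symmetry follows from the permutation invariance of $\bQ(n,d,3)$. Expanding $\bcaQ_{n,d}(\Ob_0)=\widetilde{\bcaQ}(\Ob_0,\Ob_0)$ gives
\begin{equation*}
\bcaQ_{n,d}(\Ob_0)=\bcaQ_{n,d}(\Ob)-\tfrac{2K}{D}\widetilde{\bcaQ}(\Ob,\bbI)+\tfrac{K^2}{D^2}\bcaQ_{n,d}(\bbI).
\end{equation*}
The last piece follows by specializing \eref{eq:QObStabProj} to the rank-$D$ stabilizer projector $\bbI$, which gives $6\bcaQ_{n,d}(\bbI)=(D+1)(D+2)\bbI$.

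The crux is computing $\widetilde{\bcaQ}(\Ob,\bbI)$. Writing $\bQ(n,d,3)=\pi_{[3]}\bbE_\Stab\bigl[(|S\>\<S|)^{\otimes 3}\bigr]$ and performing the partial trace over $B$ and $C$ directly, the third-moment expression collapses to $\pi_{[3]}\bbE_\Stab[\<S|\Ob|S\>|S\>\<S|]$. I then invoke the 2-design property $\bbE_\Stab[|S\>\<S|^{\otimes 2}]=(\bbI+\swap)/[D(D+1)]$ with $\swap$ the swap operator, together with the standard identities $\tr_2[(\bbI\otimes\Ob)\swap]=\Ob$ and $\tr_2(\bbI\otimes\Ob)=K\bbI$, to obtain $\bbE_\Stab[\<S|\Ob|S\>|S\>\<S|]=(\Ob+K\bbI)/[D(D+1)]$, hence $6\widetilde{\bcaQ}(\Ob,\bbI)=(D+2)(\Ob+K\bbI)$. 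Substituting all three pieces into the expansion and combining fractions over the common denominator $D^2(D+d)$ produces the claimed coefficients of $\bbI$ and $\Ob$ after elementary algebra. The main obstacle is conceptual rather than technical: \lref{lem:RTObStabProj} alone only handles the diagonal case where the same projector appears in both tensor slots, so the cross term $\widetilde{\bcaQ}(\Ob,\bbI)$ requires the additional input of the 2-design property for $\Stab(n,d)$, which is what makes the scalar bookkeeping go through cleanly.
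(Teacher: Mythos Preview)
Your proposal is correct and takes essentially the same approach as the paper: for \eref{eq:QObStabProj} both partition $\Sigma(d)$ according to the four cases of \lref{lem:RTObStabProj} and count $|\scrT_0\setminus\{\Delta\}|=|\scrT_1\setminus\{\caT_{\tau_{23}}\}|=d$; for \eref{eq:QO0stabProj} the paper packages your bilinear expansion and 2-design cross-term computation into the auxiliary \lref{lem:QndObOb0}, which yields exactly $6\bcaQ_{n,d}(\Ob_0)=6\bcaQ_{n,d}(\Ob)-\tfrac{(D-1)(D+2)K^2}{D^2}\bbI-\tfrac{2(D+2)K}{D}\Ob$. The one omission is the case $d=2$: \lref{lem:RTObStabProj} is stated only for odd $d$, and the paper handles $d=2$ separately via $\bQ(n,2,3)=P_\sym$ and \eref{eq:PsymOb}.
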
   
The following theorem is a simple corollary of \eref{eq:QndObShNormStab} and  \lref{lem:QObStabProj}  given that   $\|\Ob_0\|_2^2=K-(K^2/D)$. 
\begin{theorem}\label{thm:StabShNormStab}
	Suppose $d$ is a prime and $\Ob$ is a rank-$K$ stabilizer projector on  $\caH_d^{\otimes n}$ with $1\leq K\leq D/d$. Then 
	\begin{align}
	\|\Ob_0\|^2_\Stab
	&=\frac{(D+1)(D-K)(dDK+dD-DK -dK)}{D^2(D+d)},  \label{eq:StabProjShNorm}\\
	\frac{\|\Ob_0\|^2_\Stab}{\|\Ob_0\|_2^2}&
	=\frac{D+1}{D+d}\left(d-1-\-\frac{d}{D}+\frac{d}{K}\right)\leq d-1-\frac{d}{D}+\frac{d}{K}\leq 2d-1. \label{eq:StabProjShNormRatio}
	\end{align}
\end{theorem}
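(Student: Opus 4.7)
The plan is to deduce both the equality and the chain of inequalities as a direct computation from \lref{lem:QObStabProj} combined with \eref{eq:QndObShNormStab}. By \eref{eq:QO0stabProj}, the operator $6\bcaQ_{n,d}(\Ob_0)$ has the form $a\bbI+b\Ob$, where
\begin{equation*}
a=\frac{K(D+2)[D^2-(dD-D-d)K]}{D^2(D+d)},\qquad b=\frac{(D+2)[dD+(dD-2D-2d)K]}{D(D+d)}.
\end{equation*}
Since $\Ob$ is a rank-$K$ projector, the spectrum of $a\bbI+b\Ob$ consists of $a$ (multiplicity $D-K$) and $a+b$ (multiplicity $K$), so $\|6\bcaQ_{n,d}(\Ob_0)\|=\max\{|a|,|a+b|\}$. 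Then I would apply \eref{eq:QndObShNormStab} in the form $\|\Ob_0\|^2_\Stab=\frac{D+1}{D+2}\|6\bcaQ_{n,d}(\Ob_0)\|$.

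The next step is to verify that $a,b\geq 0$ on the range $1\leq K\leq D/d$, which then reduces the operator norm to $a+b$. For $a$, the bracket $D^2-(dD-D-d)K$ is linear and decreasing in $K$; at the endpoint $K=D/d$ it equals $D(D+d)/d>0$. For $b$, if $dD-2D-2d\geq 0$ the expression is manifestly nonnegative; otherwise the bracket $dD+(dD-2D-2d)K$ is decreasing in $K$, and at $K=D/d$ it simplifies to $D(d-2)(D+d)/d\geq 0$ (with equality only at $d=2$). Hence $\|6\bcaQ_{n,d}(\Ob_0)\|=a+b$ throughout the stated range.

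Then I would carry out the algebra showing
\begin{equation*}
a+b=\frac{(D+2)\bigl[K(D^2-(dD-D-d)K)+D(dD+(dD-2D-2d)K)\bigr]}{D^2(D+d)}=\frac{(D+2)(D-K)(dDK+dD-DK-dK)}{D^2(D+d)}.
\end{equation*}
The factorization of the bracket as $(D-K)(dDK+dD-DK-dK)$ is the only slightly non-routine step; it can be checked by expanding both sides and matching coefficients of $1, K, K^2$. Multiplying by $\frac{D+1}{D+2}$ gives the first displayed equation of the theorem.

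For the ratio, I would use $\|\Ob_0\|_2^2=\tr(\Ob^2)-K^2/D=K(D-K)/D$; dividing and cancelling the factor $(D-K)/D$ yields
\begin{equation*}
\frac{\|\Ob_0\|^2_\Stab}{\|\Ob_0\|_2^2}=\frac{(D+1)(dDK+dD-DK-dK)}{DK(D+d)}=\frac{D+1}{D+d}\Bigl(d-1-\tfrac{d}{D}+\tfrac{d}{K}\Bigr),
\end{equation*}
where the second equality rewrites the numerator via $dDK+dD-DK-dK=DK\bigl(d-1-d/D+d/K\bigr)$. The two inequalities are then elementary: the factor $(D+1)/(D+d)\leq 1$ because $d\geq 2$, and the residual quantity satisfies $d-1-d/D+d/K\leq d-1+d=2d-1$ since $d/K\leq d$ and $-d/D\leq 0$. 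The main (minor) obstacle is the sign verification for $b$ in the regime $dD-2D-2d<0$, but it dissolves after checking the single endpoint $K=D/d$.
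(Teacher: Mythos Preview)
Your proposal is correct and follows exactly the route the paper indicates: the paper states the theorem as ``a simple corollary of \eref{eq:QndObShNormStab} and \lref{lem:QObStabProj} given that $\|\Ob_0\|_2^2=K-(K^2/D)$'' without supplying any details, and you have filled those details in accurately, including the sign checks on $a,b$, the factorization of $a+b$, and the elementary estimates at the end.
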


\begin{figure}[tb]
	\centering 
	\includegraphics[width=0.8\textwidth]{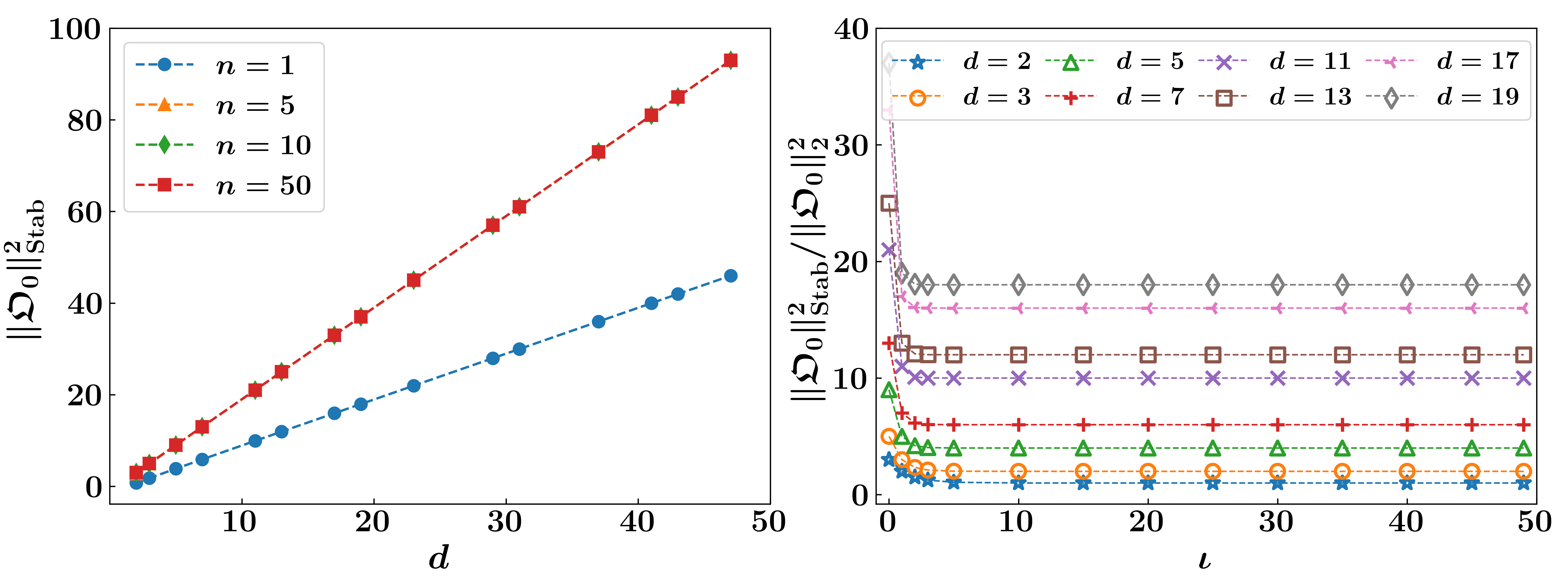}
	\caption{The squared shadow norm of $\Ob_0=\Ob-\tr(\Ob)\bbI/D$  with respect to $\Stab(n,d)$. In the left plot,  $\Ob$ is the projector onto a stabilizer state; in the right plot,   $n=50$ and  $\Ob$ is a  stabilizer projector of rank $K=d^\iota$.} 
	\label{fig:sn_stab}
\end{figure}

The dependence of the squared shadow norm $\|\Ob_0\|^2_\Stab$ on $d$ and $n$ is illustrated in  \fref{fig:sn_stab} (see the companion paper \cite{MaoYZ24}  for more figures). 
When $\Ob$ is the projector onto a stabilizer state ($K=1$), $\|\Ob_0\|^2_\Stab$ increases linearly with  $d$, but is almost independent of $n$ when $n\geq 5$. For a general rank-$K$ stabilizer projector $\Ob$, 
the ratio $\|\Ob_0\|^2_\Stab/\|\Ob_0\|^2$ decreases monotonically with the rank, but levels off quickly.

Next, we derive nearly tight  bounds for the shadow norm of a general  observable $\Ob\in \caL_0\bigl(\caH_d^{\otimes n}\bigr)$ and the shadow norm of the ensemble $\Stab(n,d)$ itself. The following theorem is also proved in \aref{app:ThirdMomentStab}.
\begin{theorem}\label{thm:ShNormStab}
	Suppose $d$ is a prime and  $\Ob\in \caL_0\bigl(\caH_d^{\otimes n}\bigr)$. Then 
	\begin{gather}
	\!\!\!	\frac{(D+1)\| \Ob \|_2^2}{D}\leq 	\|\Ob\|^2_\Stab \leq \frac{(D+1)[(2d-3) \| \Ob \|_2^2 +2\| \Ob \|^2]}{D+d}\leq (2d-3) \| \Ob \|_2^2 +2\| \Ob \|^2\leq (2d-1) \| \Ob \|_2^2, \label{eq:StabObShadowLUB} \\
	\frac{D+1}{D+d}\left(2d-1-\frac{d}{D}\right)\leq\|\Stab(n,d)\|_\sh\leq \frac{(D+1)(2d-1)}{D+d}\leq 2d-1. \label{eq:StabShadowLUB}
	\end{gather}	
	If  $\Ob$ is diagonal in a stabilizer basis, then 
	\begin{align}\label{eq:StabObShadowDiagn}
	\|\Ob\|^2_\Stab 
	&\leq \frac{(D+1)[(d-1) \| \Ob \|_2^2 +d\| \Ob \|^2]}{(D+d)}\leq (d-1) \| \Ob \|_2^2 +d\| \Ob \|^2. 	
	\end{align}	
	If  $n=1$ and $\Ob$ is diagonal in a stabilizer basis, then 
	\begin{align}\label{eq:StabObShadowDiag1}
	\|\Ob\|^2_\Stab =(d+1)\| \Ob \|^2.
	\end{align}		
\end{theorem}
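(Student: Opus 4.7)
The plan is to compute the shadow norm through the identity $\|\Ob\|^2_\Stab = \frac{D+1}{D+d}\bigl\|\sum_{\caT \in \Sigma(d)} \caR_\caT(\Ob)\bigr\|$ that follows by combining the two pieces of \eref{eq:QndObShNormStab}. By \eref{eq:ShadowNormDef2} I would restrict to Hermitian $\Ob$ without loss of generality, and then split $\Sigma(d) = \scrT_\sym \sqcup \scrT_\ns$. For the symmetric part, the identity $\sum_{\caT \in \scrT_\sym} R(\caT) = 6 P_\sym$ (from \lref{lem:RTisodefe}) converts \eref{eq:PsymObTraceless} into $\sum_{\caT \in \scrT_\sym} \caR_\caT(\Ob) = \Ob\Ob^\dag + \Ob^\dag \Ob + \|\Ob\|_2^2 \bbI$, which is positive with operator norm at most $2\|\Ob\|^2 + \|\Ob\|_2^2$. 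For the non-symmetric part, \lref{lem:RTO} bounds each $\|\caR_\caT(\Ob)\|$ by $\|\Ob\|_2^2$, and $|\scrT_\ns| = 2d - 4$ by \eref{eq:Sigma33DefectNum}, so the triangle inequality yields the upper bound in \eref{eq:StabObShadowLUB}. The matching lower bound follows from positivity of $\bcaQ_{n,d}(\Ob) = \pi_{[3]} \mathbb{E}_{|\psi\> \in \Stab(n,d)}\bigl[|\<\psi|\Ob|\psi\>|^2 |\psi\>\<\psi|\bigr]$, which gives $\|\bcaQ_{n,d}(\Ob)\| \geq \tr[\bcaQ_{n,d}(\Ob)]/D$; the 2-design property of $\Stab(n,d)$ evaluates the trace as $(D+2)\|\Ob\|_2^2/6$ for traceless $\Ob$.

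The ensemble-level bounds in \eref{eq:StabShadowLUB} then follow by maximization over $\Ob$ with $\|\Ob\|_2 = 1$: the upper bound uses $\|\Ob\|^2 \leq 1$, and the lower bound is witnessed by taking $\Ob$ proportional to the traceless part of a rank-one stabilizer projector via \eref{eq:StabProjShNormRatio} with $K = 1$. For the diagonal refinement \eref{eq:StabObShadowDiagn}, the plan is to strengthen the estimate on $\scrT_0 \setminus \{\Delta\}$ (which has $d$ elements) using \lref{lem:RTOdiag} together with \eref{eq:RTOS3norm}: each such $\caT$ contributes exactly $\|\Ob\|^2$, while $\caR_\Delta(\Ob) = 0$ for traceless $\Ob$. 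Within $\scrT_1$, the two subspaces corresponding to $\tau_{12}, \tau_{13}$ drop out by \eref{eq:RTOS3norm}, and the remaining $d - 1$ subspaces contribute at most $\|\Ob\|_2^2$ each by \lref{lem:RTO} and \eref{eq:RTOS3norm}; summing yields $d\|\Ob\|^2 + (d-1)\|\Ob\|_2^2$, which converts via \eref{eq:QndObShNormStab} to the stated bound.

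The equality \eref{eq:StabObShadowDiag1} requires a direct computation since \eref{eq:StabObShadowDiagn} is not tight for $n = 1$. The plan is to exploit the fact that $\Stab(1,d)$ is the union of the $d + 1$ stabilizer mutually unbiased bases of $\caH_d$: writing $\Ob = \sum_j a_j |j\>\<j|$ in the stabilizer basis in which it is diagonal, the MUB condition $|\<\psi|j\>|^2 = 1/d$ for any $|\psi\>$ in any other basis forces $\<\psi|\Ob|\psi\> = \sum_j a_j/d = 0$, so only the distinguished basis contributes to $\bcaQ_{1,d}(\Ob)$. This yields $\bcaQ_{1,d}(\Ob) = \frac{d+2}{6} \sum_j a_j^2 |j\>\<j|$, whose operator norm is $(d+2)\|\Ob\|^2/6$, and \eref{eq:QndObShNormStab} converts this into $\|\Ob\|^2_\Stab = (d+1)\|\Ob\|^2$. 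The main obstacle is this last MUB-vanishing identity, which is specific to a single odd-prime-dimensional qudit and does not generalize directly; the earlier steps reduce to careful bookkeeping across the $2d + 2$ stochastic Lagrangian subspaces using the shadow-map lemmas from the previous section.
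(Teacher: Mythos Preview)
Your approach for \eref{eq:StabObShadowLUB}, \eref{eq:StabShadowLUB}, and \eref{eq:StabObShadowDiagn} is essentially identical to the paper's: split $\Sigma(d) = \scrT_\sym \sqcup \scrT_\ns$, use \eref{eq:PsymObTraceless} or \eref{eq:RTOS3norm} for the symmetric part, \lref{lem:RTO} (and \lref{lem:RTOdiag} in the diagonal case) for the non-symmetric part, and the positivity/2-design argument for the lower bound. One minor point: the restriction to Hermitian $\Ob$ via \eref{eq:ShadowNormDef2} is not quite justified here, since the theorem asserts the bounds for every $\Ob \in \caL_0$, not just for the maximizer; fortunately the argument works verbatim for general $\Ob$, because \lref{lem:RTO} and \eref{eq:RTOS3norm} do not require Hermiticity and $\bcaQ_{n,d}(\Ob) = \pi_{[3]}\,\mathbb{E}\bigl[|\<\psi|\Ob|\psi\>|^2 |\psi\>\<\psi|\bigr]$ is positive for any $\Ob$.

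For \eref{eq:StabObShadowDiag1} you take a genuinely different and more elementary route. The paper proves it via \lref{lem:QndObn1}, which in turn rests on the identity $\sum_{\caT \in \Sigma(d)} r(\caT_\Delta) = 2\bigl(\bbI + d\sum_x |xxx\>\<xxx|\bigr)$ from \lref{lem:TDeltaSum}, obtained by combinatorial analysis of the intersections $\caT_\Delta$ over all stochastic Lagrangian subspaces. Your MUB argument bypasses this machinery entirely: since $\Stab(1,d)$ is a complete set of $d+1$ mutually unbiased bases, a traceless $\Ob$ diagonal in one of them has zero expectation in every state of every other basis, collapsing the average to a single basis. This is shorter and more transparent, though it is specific to $n=1$; the paper's route yields a formula for $\bcaQ_{1,d}(\Ob)$ valid for all diagonal $\Ob$ (not just traceless), but the payoff for this theorem is the same. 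Note also that your MUB argument covers $d=2$ as well (the three Pauli eigenbases), so no separate treatment is needed there.
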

As a simple corollary of \eref{eq:StabShadowLUB} we have
\begin{align}
\|\Stab(n,d)\|_\sh=\frac{(D+1)(2d-1)}{D+d}+\caO(d/D)=2d-1+\caO(d^2/D). 
\end{align}
The bounds in \eref{eq:StabShadowLUB} are almost tight when $n$ is large. In addition,  the lower bounds for the shadow norm based on the third normalized frame potential in \eref{eq:ShNormPhi3LB2} [see also \eref{eq:Phi3Stab}] are almost tight for the ensemble of stabilizer states. Actually, 
the lower bound in \eref{eq:StabShadowLUB} is saturated when $d=2$, in which case the set of stabilizer states forms a 3-design \cite{KuenG13,Zhu17MC,Webb16}. We believe that this lower bound is  saturated whenever $d$ is a prime. 
\begin{conjecture}\label{con:StabShadow}
	Suppose $d$ is a prime.  Then 
	\begin{gather}
	\|\Stab(n,d)\|_\sh=	\frac{D+1}{D+d}\left(2d-1-\frac{d}{D}\right).  \label{eq:StabShadowConj}
	\end{gather}	
\end{conjecture}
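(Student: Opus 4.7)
The lower bound in \cref{con:StabShadow} is already delivered by \thref{thm:StabShNormStab} at $K=1$: the Hermitian traceless operator $\Ob_0=|s\>\<s|-\bbI/D$ arising from any stabilizer state $|s\>$ saturates the claimed value $\|\Ob_0\|^2_\Stab/\|\Ob_0\|_2^2=\frac{D+1}{D+d}(2d-1-d/D)$. The task is therefore to establish the matching upper bound: every Hermitian traceless $\Ob$ with $\|\Ob\|_2=1$ satisfies $\|\Ob\|^2_\Stab\le\frac{D+1}{D+d}(2d-1-d/D)$. The upper bound of \thref{thm:ShNormStab} falls short of this target by exactly $(d-2)/D$ in units of $\|\Ob\|_2^2$, so only a small but precise refinement is required.

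\textbf{Setup and gap diagnosis.} For Hermitian $\Ob$, the shadow operator $\bcaQ_{n,d}(\Ob)=\pi_{[3]}\bbE_{s\sim\Stab(n,d)}[\<s|\Ob|s\>^2|s\>\<s|]$ is manifestly positive semidefinite, so its operator norm equals its top eigenvalue; combined with \eref{eq:bQnd3}, this gives
\[
\|\Ob\|^2_\Stab=\frac{D+1}{D+d}\max_{|\psi\>}S(|\psi\>,\Ob),\qquad S(|\psi\>,\Ob):=\sum_{\caT\in \Sigma(d)}\<\psi|\caR_\caT(\Ob)|\psi\>.
\]
Splitting $\Sigma(d)=\scrT_\sym\sqcup\scrT_\ns$, the symmetric part is given in closed form by \eref{eq:PsymObTraceless},
\[
S_\sym(|\psi\>,\Ob)=\|\Ob\|_2^2+2\<\psi|\Ob^2|\psi\>,
\]
whereas \lref{lem:RTO} only provides the na\"ive bound $S_\ns(|\psi\>,\Ob)\le (2d-4)\|\Ob\|_2^2$. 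Evaluating at the extremal pair $|\psi\>=|s\>$, $\Ob=(|s\>\<s|-\bbI/D)/\sqrt{(D-1)/D}$ gives $S_\sym=(3D-2)/D$ (essentially saturated) but $S_\ns=(2d-4)-(d-2)/D$: the slack in the non-symmetric term precisely at the alignment of $|\psi\>$ with $\Ob$ is exactly the deficit we must capture.

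\textbf{Strategy.} The plan is to upgrade the bound on $S_\ns$ into a quadratic trade-off with $\<\psi|\Ob^2|\psi\>$ of the schematic form $S_\ns(|\psi\>,\Ob)+c\,\<\psi|\Ob^2|\psi\>\le(2d-4)\|\Ob\|_2^2+c\,\|\Ob\|_2^2/D$ for a suitable constant $c>0$, calibrated so that adding $S_\sym$ reproduces the conjectured bound with saturation at the extremal pair. To derive such a trade-off, use the aggregate identity $R(\scrT_\ns)=R(\Sigma(d))-6P_\sym$ and the explicit projector structure of $R(\Sigma(d))=R(\scrT_\iso)+R(\scrT_\defe)$ furnished by \lref{lem:RTisodefe} and \tref{tab:RisoRdefe}. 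Rewriting $S(|\psi\>,\Ob)=\tr[R(\Sigma(d))(|\psi\>\<\psi|\otimes\Ob\otimes\Ob)]$ and inserting the spectral decomposition of the positive operator $R(\Sigma(d))$ reduces the problem to bounding overlaps of $|\psi\>\<\psi|\otimes\Ob\otimes\Ob$ with a small family of commuting projectors whose ranks and eigenvalues are explicitly recorded in \tref{tab:RisoRdefe}. Clifford covariance of $\bcaQ_{n,d}$ can then be used to restrict $|\psi\>$ to a stabilizer state, after which the Clifford stabilizer subgroup of $|\psi\>$ constrains the optimizing $\Ob$ to a low-dimensional invariant subspace of $\caL_0(\caH_d^{\otimes n})$ on which the extremal problem can be verified directly and saturation at $\Ob\propto|\psi\>\<\psi|-\bbI/D$ checked.

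\textbf{Main obstacle.} The positivity of $R(\Sigma(d))$ cannot be exploited via a one-step H\"older-type inequality, because $\Ob\otimes\Ob$ is not positive semidefinite when $\Ob$ has eigenvalues of both signs. The hard step is therefore to isolate the exact cancellations between the sign-indefinite parts of $\Ob\otimes\Ob$ inside the CSS-code projector structure of $R(\scrT_\defe)$ and $R(\scrT_\iso)-6P_\sym$ described in \sref{sec:SLSspanningSet}. A complementary route is to expand $\Ob=\sum_{\bfu\neq\mathbf{0}}c_\bfu W_\bfu$ in the Weyl basis, use the closed-form values of $\<s|W_\bfu|s\>$ over stabilizer states together with the $S_3$ symmetry of the three tensor copies, and reduce the bound to a quadratic optimization over the Pauli coefficients that is manifestly saturated by the stabilizer-projector configuration.
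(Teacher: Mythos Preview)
The statement you are attempting to prove is \emph{Conjecture}~\ref{con:StabShadow}: the paper does not prove it and explicitly leaves it open, stating only that the lower bound in \eref{eq:StabShadowLUB} is believed to be saturated. There is therefore no ``paper's own proof'' to compare against.

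As for your proposal itself, it is a strategy sketch rather than a proof, and it has a substantive gap in addition to the unresolved ``main obstacle'' you already flag. Your claim that Clifford covariance lets you restrict the maximizing $|\psi\>$ to a stabilizer state is not justified: covariance gives $\bcaQ_{n,d}(U\Ob U^\dag)=U\,\bcaQ_{n,d}(\Ob)\,U^\dag$ for Clifford $U$, so a maximizing pair $(|\psi\>,\Ob)$ can be moved to $(U|\psi\>,U\Ob U^\dag)$, but this only lets you transport the pair along a single Clifford orbit. The top eigenvector $|\psi\>$ of the positive operator $\bcaQ_{n,d}(\Ob)$ need not lie in the Clifford orbit of any stabilizer state, and nothing in the setup forces it to. The subsequent step---constraining $\Ob$ via the Clifford stabilizer of a fixed stabilizer state $|\psi\>$---therefore rests on an unproved reduction. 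Absent that reduction, the problem remains a genuine joint optimization over a continuum of $|\psi\>$ and $\Ob$, and the projector structure of $R(\Sigma(d))$ from \lref{lem:RTisodefe} alone does not close the $(d-2)/D$ gap you correctly diagnose. Your final paragraph is an honest acknowledgment that the sign-indefiniteness of $\Ob\otimes\Ob$ blocks the natural positivity argument; neither the CSS-projector route nor the Weyl-expansion route you sketch is carried far enough to constitute a proof.
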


When $n=1$, the set of stabilizer states forms a complete set of mutually unbiased bases (MUB) \cite{DurtEBZ10}. If  $\Ob$ is the projector onto a basis state in the MUB, then \thref{thm:StabShNormStab} implies that 
\begin{align}
\|\Ob_0\|^2_\Stab
&=\frac{(d^2-1)(d-1)}{d^2}\leq d-1, \quad 
\frac{\|\Ob_0\|^2_\Stab}{\|\Ob_0\|_2^2}
=\frac{d^2-1}{d}\leq d. \label{eq:MUBstateShNorm}
\end{align}
If instead $\Ob\in \caL_0(\caH_d)$, then
\thref{thm:ShNormStab} implies that
\begin{gather}
\!\!\!	\frac{(d+1)\| \Ob \|_2^2}{d}\leq 	\|\Ob\|^2_\Stab \leq \frac{(d+1)[(2d-3) \| \Ob \|_2^2 +2\| \Ob \|^2]}{2d}\leq \frac{(d+1)(2d-1)}{2d}\|\Ob\|_2^2\leq (d+1)\|\Ob\|_2^2, \label{eq:MUBObShadowLUB} \\
\frac{d^2-1}{d}\leq\|\Stab(n,d)\|_\sh\leq \frac{(d+1)(2d-1)}{2d}\leq d+1. \label{eq:MUBShadowLUB}
\end{gather}

\section{\label{sec:ThirdMomentGen}Third moment of a general Clifford orbit}

In this section we clarify the basic properties of the third moment of a general Clifford orbit, assuming that the local dimension $d$ is an odd prime. We show that the third normalized frame potential and shadow norm of any Clifford orbit are $\caO(d)$ irrespective of the number $n$ of qudits. If $d=2\mmod 3$, then the same conclusion applies to the operator norm of the third normalized moment operator; if instead $d\neq 2 \mod 3$, however, then this norm may grow exponentially with $n$. To this end, we derive exact formulas or nearly tight bounds for the three figures of merit.

\subsection{Setting the stage}
Let $|\Psi\>$ be a pure state in $\caH_d^{\otimes n}$ and $\orb(\Psi)$ the Clifford orbit generated from $|\Psi\>$, which determines an ensemble of pure states (with uniform distribution on the orbit by default). The third moment operator of $\orb(\Psi)$ reads 
\begin{align}
Q(\orb(\Psi))=Q_3(\orb(\Psi))=\mathbb{E}_{U \sim \oCl(n,d)}\bigl(U|\Psi\>\<\Psi|U^\dag\bigr)^{\otimes 3}. 
\end{align}
By definition the third moment operator $Q(\orb(\Psi))$  belongs to the commutant of the third Clifford tensor power and thus can be expanded in terms of the operator frame $\{R(\caT)\}_{\caT\in \Sigma(d)}$ or its dual frame $\{\tilde{R}(\caT)\}_{\caT\in \Sigma(d)}$ studied in \sref{sec:Commutant},
\begin{align}\label{eq:MomentQpsi}
Q(\orb(\Psi))&=\sum_{\caT\in \Sigma(d)}\tka(\Psi,\caT)R(\caT)=\sum_{\caT\in \Sigma(d)}\kappa(\Psi,\caT)\tR(\caT),
\end{align}
where
\begin{equation}\label{eq:kappapsiT}
\begin{aligned}
\kappa(\Psi,\caT)&:=\tr[R(\caT)(|\Psi\>\<\Psi|)^{\otimes 3}] =\tr[R(\caT)Q(\orb(\Psi))],\\
\tka(\Psi,\caT)&:=\tr\bigl[\tR(\caT)(|\Psi\>\<\Psi|)^{\otimes 3}\bigr]=\tr\bigl[\tR(\caT)Q(\orb(\Psi))\bigr]. 
\end{aligned}
\end{equation}
Accordingly, the third normalized moment operator of $\orb(\Psi)$ reads
\begin{align}\label{eq:MomentQpsiBar}
\bQ(\orb(\Psi))=\bQ_3(\orb(\Psi))=\frac{1}{6}\sum_{\caT\in \Sigma(d)}\hka(\Psi,\caT)R(\caT),
\end{align}
where
\begin{align}
\hka(\Psi,\caT):=6\tr\bigl[\tR(\caT)\bQ(\orb(\Psi))\bigr]=D(D+1)(D+2)\tka(\Psi,\caT)= D(D+1)(D+2) \tr\bigl[\tR(\caT)(|\Psi\>\<\Psi|)^{\otimes 3}\bigr]. \label{eq:hkapsiT}
\end{align}
To simplify the notation, the argument $\Psi$ in $\kappa(\Psi,\caT)$ and $\hka(\Psi,\caT)$ may  be omitted if there is no danger of confusion. 

Let $\scrT$ be a subset of  $\Sigma(d)$;  define
\begin{align}\label{eq:kappascrT}
\kappa(\Psi,\scrT,j)&:=\sum_{\caT\in \scrT}\kappa^j(\Psi,\caT),\quad
|\kappa|(\Psi,\scrT,j):=\sum_{\caT\in \scrT}|\kappa(\Psi,\caT)|^j,\quad j\in \bbN,
\end{align}
which can be abbreviated as $\kappa(\Psi,\scrT)$ and  $|\kappa|(\Psi,\scrT)$ when $j=1$. The functions
$\hka(\Psi,\scrT)$ and $|\hka|(\Psi,\scrT)$ can be defined in a similar way.  Here the argument $\Psi$ may  be omitted if there is no danger of confusion. 
Since $\Sigma(d)=\scrT_\iso\sqcup \scrT_\defe$, it follows that 
\begin{align}
\kappa(\Psi,\Sigma(d))=\kappa(\Psi,\scrT_\iso)+\kappa(\Psi,\scrT_\defe),\label{eq:kappaSigIsoDef}
\end{align}
and this equality still holds if $\kappa$ is replaced by $|\kappa|$, $\hka$, or $|\hka|$.

The following proposition clarifies the symmetry of  $\kappa(\Psi,\caT)$ and $\hka(\Psi,\caT)$; it is a simple corollary of the fact that $R(O)$ for each $O\in S_3$ is a permutation.
\begin{proposition}\label{pro:kappaSym}
	Suppose $d$ is an odd prime, $|\Psi\>\in \caH_d^{\otimes n}$, and $\caT\in \Sigma(d)$. Then 
	\begin{align}\label{eq:kappaSym}
	\kappa(\Psi,\caT)&=\kappa(\Psi,O\caT)=\kappa(\Psi,\caT O)\quad \forall O\in S_3,
	\end{align}
	and this equation still holds if $\kappa$ is replaced by $|\kappa|$, $\hka$, or $|\hka|$ or if $\caT$ is replaced by any subset $\scrT$ of $\Sigma(d)$. If in addition $|\Psi\>\in \Stab(n,d)$  or $\caT\in \scrT_\sym$, then $\kappa(\Psi,\caT)=1$.
\end{proposition}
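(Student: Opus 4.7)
The plan is to treat this as a pure symmetry statement driven by the identity $R(O)|\Psi\>^{\otimes 3}=|\Psi\>^{\otimes 3}$. Recall from Sec.~\ref{sec:Otd} that for each $O\in S_3$ the operator $R(O)=r(O)^{\otimes n}$ is a genuine permutation of the three tensor copies of $\caH_d^{\otimes n}$. Since any product vector $|\Psi\>^{\otimes 3}$ is invariant under the permutations of its three factors, $R(O)$ fixes $|\Psi\>^{\otimes 3}$ on the left and, by unitarity, $\<\Psi|^{\otimes 3} R(O)=\<\Psi|^{\otimes 3}$ as well; equivalently, $(|\Psi\>\<\Psi|)^{\otimes 3}$ commutes with $R(O)$.

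First I would prove the invariance for $\kappa$. Using $R(O\caT)=R(O)R(\caT)$ and $R(\caT O)=R(\caT)R(O)$ (which follow from the compatibility $R(O)R(\caT)R(O')=R(O\caT O')$ recorded in Sec.~\ref{sec:Otd}) together with the cyclicity of the trace,
\begin{equation*}
\kappa(\Psi,O\caT)=\tr\bigl[R(O)R(\caT)|\Psi\>^{\otimes 3}\<\Psi|^{\otimes 3}\bigr]=\tr\bigl[R(\caT)|\Psi\>^{\otimes 3}\<\Psi|^{\otimes 3}R(O)\bigr]=\kappa(\Psi,\caT),
\end{equation*}
and analogously for $\caT O$. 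Taking absolute values gives the $|\kappa|$ statement, and summing over $\caT\in\scrT$ gives the $\scrT$ statement. For $\hka$ I would do the same computation after first verifying  $\tR(O\caT)=R(O)\tR(\caT)$ and $\tR(\caT O)=\tR(\caT)R(O)$; this is a one-line check from the definitions \eqref{eq:DualBasis} and \eqref{eq:DualBasisn1}, using \lref{lem:SemigroupscrT} which ensures that $O\scrT_j$ equals $\scrT_j$ when $O\in O_3^\even$ and $\scrT_{\bar j}$ when $O\in O_3^\odd$, consistent with the $\scrT_j,\scrT_{\bar j}$ bookkeeping in the definition of $\tR$.

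For the ``in addition'' clause, the case $\caT\in\scrT_\sym$ is immediate: $\caT=\caT_O$ for some $O\in S_3$, so $R(\caT)=R(O)$ fixes $|\Psi\>^{\otimes 3}$ and $\kappa(\Psi,\caT)=\tr\bigl[(|\Psi\>\<\Psi|)^{\otimes 3}\bigr]=1$. For $|\Psi\>\in \Stab(n,d)$ I would use that all stabilizer states form one Clifford orbit, so $|\Psi\>=U|0\>^{\otimes n}$ for some $U\in\Cl(n,d)$. Since $R(\caT)$ lies in the commutant of $\Cl(n,d)^{\totimes 3}$, cyclicity yields $\kappa(\Psi,\caT)=\kappa(|0\>^{\otimes n},\caT)$, and this factorizes as $\{\tr[r(\caT)(|0\>\<0|)^{\otimes 3}]\}^n$. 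Only the term $(\mathbf{0}_3;\mathbf{0}_3)\in\caT$ in the defining sum for $r(\caT)$ survives, and it contributes $1$; hence $\kappa(\Psi,\caT)=1$. The argument has no serious obstacle — the only non-automatic point is the $S_3$-covariance of $\tR$, which is a short verification, so the proposition is indeed a corollary of the permutation nature of $R(O)$ for $O\in S_3$.
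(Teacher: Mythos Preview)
Your proposal is correct and follows exactly the approach the paper has in mind: the paper states only that the proposition ``is a simple corollary of the fact that $R(O)$ for each $O\in S_3$ is a permutation,'' and your write-up fleshes out precisely that argument, including the $S_3$-covariance of $\tR$ via \lref{lem:SemigroupscrT} and the reduction of the stabilizer case to $|0\>^{\otimes n}$ via Clifford invariance of $R(\caT)$.
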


As a simple corollary of \pref{pro:kappaSym}, we can derive the following relations,
\begin{align}\label{eq:kappaSigNSscrT}
6+\kappa(\Psi,\scrT_\ns,j)=\kappa(\Psi,\Sigma(d),j)=2\kappa(\Psi,\scrT_0,j)=2\kappa(\Psi,\scrT_1,j),
\end{align}
given that $\Sigma(d)=\scrT_\sym\sqcup\scrT_\ns =\scrT_0 \sqcup  \scrT_1$, $|\scrT_\sym|=6$, and $\scrT_1=\tau_{12}\scrT_0$. Next, we establish informative lower and upper bounds for $\kappa(\Psi,\caT)$ and related functions. The following lemma is proved in  \aref{app:hkakaTLUBproof}. 
\begin{lemma}\label{lem:kappaTLUB}
	Suppose $d$ is an odd prime and $|\Psi\>
	\in\caH_d^{\otimes n}$. Then  $\kappa(\Psi,\caT)$ is real for  $\caT\in \Sigma(d)$ and
	\begin{gather}
	-1\leq \kappa(\Psi,\caT)\leq 1 \quad \forall \caT\in \Sigma(d), \quad 
	0\leq \kappa(\Psi,\caT)\leq 1\quad  \forall \caT\in \scrT_\defe,\quad  \kappa(\Psi,\caT)= 1\quad  \forall \caT\in \scrT_\sym, \label{eq:kappaTLUB}\\
	\frac{4(D+d)}{D+1}\leq \kappa(\Psi,\Sigma(d))\leq |\kappa|(\Psi,\Sigma(d))\leq 2d+2,  \label{eq:kappaSigLUB}\\	
	\frac{4d-2D-6}{D+1}\leq \kappa(\Psi,\scrT_\ns)\leq |\kappa|(\Psi,\scrT_\ns)\leq 2d-4,
	\label{eq:kappaNsLUB}\\
	0\leq \kappa(\Psi, \scrT_\defe)\leq |\scrT_\defe|\leq 4,\quad 	\frac{4(d-1)}{D+1}\leq \kappa(\Psi, \scrT_\iso)\leq |\kappa|(\Psi, \scrT_\iso)\leq |\kappa|(\Psi,\Sigma(d))\leq 2d+2. \label{eq:kappaDefIsoLUB}
	\end{gather} 
\end{lemma}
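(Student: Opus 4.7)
The plan is to first establish the per-subspace estimate $|\kappa(\Psi,\caT)|\le 1$ for every $\caT\in\Sigma(d)$, which is the main technical content of the lemma, and then aggregate to obtain all the sum bounds by elementary bookkeeping together with one application of Cauchy-Schwarz. Reality of each $\kappa(\Psi,\caT)$ is handled via the identity $\overline{\kappa(\Psi,\caT)}=\kappa(\Psi,\caT^*)$, where $\caT^*:=\{(\bfy;\bfx):(\bfx;\bfy)\in\caT\}$ is again a stochastic Lagrangian subspace (the defining conditions are symmetric in $\bfx\leftrightarrow\bfy$) and satisfies $R(\caT)^\dag=R(\caT^*)$ because $r(\caT)$ is a $0/1$ matrix. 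The alternative construction in \sref{sec:SLSconstruct2} then makes the symmetry manifest: for $\caT_\bfv\in\scrT_1$ one reads off $\caT_\bfv^*=\caT_\bfv$ by sending $a\mapsto -a$ in \eref{eq:Tv}, while for $\tau_{12}\caT_\bfv\in\scrT_0$ one computes $(\tau_{12}\caT_\bfv)^*=\caT_\bfv\tau_{12}$, so \pref{pro:kappaSym} gives $\kappa(\Psi,\caT^*)=\kappa(\Psi,\caT)$ in both cases and hence reality.

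For the per-subspace upper bound, $\caT\in\scrT_\sym$ yields $\kappa(\Psi,\caT)=1$ directly from $R(\caT)(|\Psi\>\<\Psi|)^{\otimes 3}=(|\Psi\>\<\Psi|)^{\otimes 3}$, and $\caT\in\scrT_\iso\setminus\scrT_\sym$ gives $|\kappa(\Psi,\caT)|\le 1$ because $R(\caT)$ is a unitary permutation matrix. The main obstacle is $\caT\in\scrT_\defe$: here $\|R(\caT)\|=D$ by \tref{tab:RisoRdefe} grows exponentially with $n$, so the operator-norm estimate alone would yield only $|\kappa|\le D$. My plan is to first use \pref{pro:kappaSym} together with \lref{lem:SemigroupT} to reduce to a CSS-type representative (namely $\tcaT_0$ for $d=3$ or $\tcaT_{00}$ for $d=1\mmod 3$, since all of $\scrT_\defe$ forms a single $S_3\times S_3$-double-coset), and then expand the CSS projector as the normalized sum of its stabilizer Weyl operators. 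For $\tcaT_{00}$ this yields
\[
R(\tcaT_{00})=\frac{1}{D}\sum_{\bfu\in\bbF_d^{2n}}W_\bfu\otimes W_{\xi\bfu}\otimes W_{\xi^2\bfu},\qquad \kappa(\Psi,\tcaT_{00})=\frac{1}{D}\sum_{\bfu\in\bbF_d^{2n}}\phi(\bfu)\phi(\xi\bfu)\phi(\xi^2\bfu),
\]
with $\phi(\bfu):=\<\Psi|W_\bfu|\Psi\>$, $|\phi(\bfu)|\le 1$, and $\sum_\bfu|\phi(\bfu)|^2=D$ coming from $\sum_\bfu W_\bfu A W_\bfu^\dag=D\tr(A)\bbI$ applied to $A=|\Psi\>\<\Psi|$. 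H\"older's inequality on the triple $|\phi(\bfu)|,|\phi(\xi\bfu)|,|\phi(\xi^2\bfu)|$, combined with the fact that the three cube-sums coincide under the bijections $\bfu\mapsto\xi\bfu$ and $\bfu\mapsto\xi^2\bfu$, then gives
\[
\bigl|\kappa(\Psi,\tcaT_{00})\bigr|\le\frac{1}{D}\sum_{\bfu}|\phi(\bfu)|^3\le\frac{1}{D}\sum_\bfu|\phi(\bfu)|^2=1;
\]
the case $d=3$ is identical with $\xi$ replaced by $1$, so all three factors reduce to $\phi(\bfu)$. Non-negativity $\kappa(\Psi,\caT)\ge 0$ on $\scrT_\defe$ is then immediate since the CSS-type representative satisfies $R\ge 0$, and extends to all of $\scrT_\defe$ by \pref{pro:kappaSym}.

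Aggregating gives the remaining bounds. The estimates $|\kappa|(\Psi,\Sigma(d))\le 2d+2$, $|\kappa|(\Psi,\scrT_\ns)\le 2d-4$, $|\kappa|(\Psi,\scrT_\iso)\le 2d+2$, and $\kappa(\Psi,\scrT_\defe)\le|\scrT_\defe|\le 4$ follow by summing $|\kappa(\Psi,\caT)|\le 1$ over the relevant subset and reading off the cardinalities from \eref{eq:Sigma33DefectNum}. For the lower bound $\kappa(\Psi,\Sigma(d))\ge 4(D+d)/(D+1)$, my plan is to combine \eref{eq:bQnd3} with the identification
\[
\kappa(\Psi,\Sigma(d))=D(D+1)(D+d)\,\mathbb{E}_{S\in\Stab(n,d)}|\<S|\Psi\>|^{6},
\]
and then apply Cauchy-Schwarz $(\mathbb{E} X^{4})^{2}\le\mathbb{E} X^{2}\cdot\mathbb{E} X^{6}$ to $X:=|\<S|\Psi\>|$, using $\mathbb{E} X^{2}=1/D$ (1-design) and $\mathbb{E} X^{4}=2/[D(D+1)]$ (2-design, since $|\Psi\>^{\otimes 2}\in\Sym_2(\caH)$), which delivers $\mathbb{E} X^{6}\ge 4/[D(D+1)^2]$. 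The lower bounds on $\kappa(\Psi,\scrT_\ns)$ and $\kappa(\Psi,\scrT_\iso)$ finally come from the two decompositions $\Sigma(d)=\scrT_\sym\sqcup\scrT_\ns=\scrT_\iso\sqcup\scrT_\defe$ by subtracting $\kappa(\Psi,\scrT_\sym)=6$ and $\kappa(\Psi,\scrT_\defe)\le 4$, respectively.
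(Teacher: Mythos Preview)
Your proof is correct and follows the paper's route for most steps: the reality argument via Hermiticity of $R(\caT)$ on $\scrT_1$ plus $S_3$-symmetry, the Cauchy--Schwarz lower bound from the 2-design property of stabilizer states, and the bookkeeping for the sum estimates are all exactly what the paper does.

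The one genuine difference is how you establish $|\kappa(\Psi,\caT)|\le 1$ for $\caT\in\scrT_\defe$. The paper cites \lref{lem:RTO}, which bounds $\|\caR_\caT(\Ob)\|\le\|\Ob\|_2^2$ for all $\caT\in\scrT_\ns$ via a maximum row/column sum argument on the matrix $\caR_\caT(\Ob)$; applying it with $\Ob=|\Psi\>\<\Psi|$ and then $|\kappa(\Psi,\caT)|=|\<\Psi|\caR_\caT(|\Psi\>\<\Psi|)|\Psi\>|\le\|\caR_\caT(|\Psi\>\<\Psi|)\|$ gives the bound uniformly over $\scrT_\ns$. Your Weyl-operator expansion of the CSS projector together with H\"older (or AM--GM) is a more elementary and self-contained argument for this particular lemma, and it nicely exhibits the mechanism by which the naive operator-norm overcount $\|R(\caT)\|=D$ is defeated: the three factors $\phi(\bfu),\phi(\xi\bfu),\phi(\xi^2\bfu)$ share the same $\ell^3$-norm under the bijections $\bfu\mapsto\xi^j\bfu$. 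On the other hand, \lref{lem:RTO} is a more general tool that the paper reuses throughout \sref{sec:ShadowMapR} and in the shadow-norm bounds (\thsref{thm:ShNormStab} and \ref{thm:ShNormGen}), so citing it here avoids a case split and keeps the argument uniform across $\scrT_\iso\setminus\scrT_\sym$ and $\scrT_\defe$.
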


If $|\Psi\>$ can be expressed as a tensor product of the form $|\Psi\>=|\Psi_1\>\otimes |\Psi_2\>$,  then 
\begin{align}\label{eq:kappaPsiProd}
\kappa(\Psi,\caT)=\kappa(\Psi_1,\caT)\kappa(\Psi_2,\caT),\quad |\kappa(\Psi,\caT)|\leq \min\{|\kappa(\Psi_1,\caT)|,\; |\kappa(\Psi_2,\caT)|\}\quad \forall \caT\in \Sigma(d). 
\end{align}
Here the equality follows from the definition in \eref{eq:kappapsiT} given that $R(\caT)=r(\caT)^{\otimes n}$, and the inequality follows from \lref{lem:kappaTLUB}.

\begin{conjecture}\label{con:kappaTLUB}
	Suppose $d$ is an odd prime and $|\Psi\>\in\caH_d^{\otimes n}$. Then
	\begin{gather}
	0\leq \kappa(\Psi,\caT)\leq 1 \quad \forall \caT\in \Sigma(d), \quad  
	6\leq \kappa(\Psi,\Sigma(d))\leq 2d+2, \quad
	0\leq \kappa(\Psi,\scrT_\ns)\leq 2d-4,
	\quad
	\kappa(\Psi, \scrT_\iso)\geq 6. \label{eq:kappaTLUBcon}
	\end{gather} 
\end{conjecture}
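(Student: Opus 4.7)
The plan is to reduce everything to the pointwise inequality $\kappa(\Psi,\caT)\geq 0$ for every $\caT\in\Sigma(d)$ and every $|\Psi\>\in\caH_d^{\otimes n}$. All asserted upper bounds are already contained in \lref{lem:kappaTLUB}, and all the asserted sum lower bounds follow from this pointwise inequality: combining \pref{pro:kappaSym} with \eref{eq:kappaSigNSscrT}, summation over $\scrT_\ns$ gives $\kappa(\Psi,\Sigma(d))=6+\kappa(\Psi,\scrT_\ns)\geq 6$, and likewise $\kappa(\Psi,\scrT_\iso)=6+\sum_{\caT\in\scrT_\iso\setminus\scrT_\sym}\kappa(\Psi,\caT)\geq 6$ since $\scrT_\sym\subset\scrT_\iso$.

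Two of the three $S_3$-orbit types are already under control. For $\caT\in\scrT_\sym$, \pref{pro:kappaSym} gives $\kappa(\Psi,\caT)=1$. For $\caT\in\scrT_\defe$, \lsref{lem:defectT} and \ref{lem:SemigroupT} let one write $\caT$ as an $S_3$-translate of a CSS-type stochastic Lagrangian subspace, so that $R(\caT)=R(\sigma)\cdot D\, P^{\otimes n}$ for some $\sigma\in S_3$ and a single-position CSS projector $P$ on $\caH_d^{\otimes 3}$. Since $\<\Psi|^{\otimes 3}R(\sigma)=\<\Psi|^{\otimes 3}$, one obtains $\kappa(\Psi,\caT)=D\|P^{\otimes n}|\Psi\>^{\otimes 3}\|^2\geq 0$. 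The genuinely hard case is $\caT=\caT_O$ with $O\in O_3(d)\setminus S_3$. Using $\kappa(\Psi,\caT_O\tau_{12})=\kappa(\Psi,\caT_O)$ together with $O_3^\odd(d)=O_3^\even(d)\,\tau_{12}$, one may further assume $O\in O_3^\even(d)\setminus A_3$, where $A_3:=\{\mathds{1},\zeta,\zeta^2\}=S_3\cap O_3^\even(d)$.

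For this remaining case I would go spectral. The operator $r(O)$ is the permutation of the $\caH_d^{\otimes 3}$ basis induced by $\bfx\mapsto O\bfx$, hence a finite-order unitary whose eigenvalues are $h_O$-th roots of unity, where $h_O$ is the order of $O$ inside the cyclic group $O_3^\even(d)$ of \lref{lem:O3Dh}. Denoting the spectral projectors of $R(\caT_O)=r(O)^{\otimes n}$ by $\Pi_\omega$,
\begin{equation*}
\kappa(\Psi,\caT_O)=\sum_\omega \omega\,\bigl\|\Pi_\omega|\Psi\>^{\otimes 3}\bigr\|^2,
\end{equation*}
where each eigenvalue $\omega$ is a product of $n$ of the $h_O$-th roots of unity. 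Pairing $O$ with $O^{-1}$ (using $R(\caT_O)^\dag=R(\caT_{O^{-1}})$ and the reality of $\kappa$ from \lref{lem:kappaTLUB}) groups the eigenvalues into complex-conjugate pairs and yields a Hermitian expression, so the conjecture becomes the statement that a certain signed sum of spectral-projection norms of $|\Psi\>^{\otimes 3}$ is non-negative.

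The main obstacle is exactly this signed-sum positivity. Unlike the CSS case there is no canonical PSD factorisation of $R(\caT_O)$ on $\caH_d^{\otimes 3n}$, and $R(\caT_O)+R(\caT_{O^{-1}})$ has both positive and negative eigenvalues on the full space; even on $\Sym_3(\caH_d^{\otimes n})$ this Hermitian combination is generally not positive semidefinite. The crucial feature is that $\kappa(\Psi,\caT_O)$ only ever sees diagonal tensor cubes $|\Psi\>^{\otimes 3}$, which form a proper (non-convex) subset of $\Sym_3(\caH_d^{\otimes n})$, so non-negativity on this subset is a strictly weaker condition than PSD-ness of the Hermitian combination. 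I expect the proof to proceed by a case analysis in $d\bmod 3$, leveraging the Gauss- and Jacobi-sum machinery from the Appendix and the detailed structure of $O_3(d)$ and $\Sigma(d)$ from \sref{sec:SLS}, and exploiting Clifford covariance via \thref{thm:spanR} to decompose $R(\caT_O)$ across isotypic components of $\Sym_3(\caH_d^{\otimes n})$. Small-$(d,n)$ numerics can guide the search for an explicit diagonal-tensor-positive decomposition, but a uniform, coordinate-free argument has so far eluded us.
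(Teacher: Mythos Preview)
The statement you are attempting to prove is labeled a \emph{conjecture} in the paper, and the paper does not provide a proof. Immediately after stating it, the paper only observes two things: (i) if the pointwise inequality $\kappa(\Psi,\caT)\geq 0$ holds for all $\caT\in\Sigma(d)$, then all the remaining inequalities follow from \lref{lem:kappaTLUB}; and (ii) the special case $d=3$ holds trivially because $\Sigma(3)=\scrT_\sym\sqcup\scrT_\defe$, so only the symmetric and defect cases arise, both of which are covered by \pref{pro:kappaSym} and \lref{lem:kappaTLUB}.

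Your reduction to the pointwise inequality is exactly the paper's observation~(i), and your treatment of $\scrT_\sym$ and $\scrT_\defe$ is correct and matches the paper's reasoning for~(ii). You then correctly isolate the genuine obstruction: the case $\caT=\caT_O$ with $O\in O_3(d)\setminus S_3$, which only occurs when $d\geq 5$. Your spectral reformulation is sound, and your diagnosis that the Hermitian combination $R(\caT_O)+R(\caT_{O^{-1}})$ is not positive semidefinite even on $\Sym_3(\caH_d^{\otimes n})$, so that positivity must exploit the restriction to diagonal tensor cubes $|\Psi\>^{\otimes 3}$, is precisely why this remains open. Your closing admission that a uniform argument ``has so far eluded us'' is consistent with the paper's stance: this is an unresolved conjecture, not a theorem awaiting a proof you missed.
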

If the first inequality in \eref{eq:kappaTLUBcon} holds,  then all inequalities  hold thanks to \lref{lem:kappaTLUB}. 
When  $d=3$, \cref{con:kappaTLUB} holds by \pref{pro:kappaSym} and \lref{lem:kappaTLUB} given that $\Sigma(d)=\scrT_\sym\sqcup \scrT_\defe$.

By virtue of \eref{eq:DualBasis} and the above results
we can clarify the basic properties of $\hka(\Psi,\caT)$, $\hka(\Psi,\Sigma(d))$, $\hka(\Psi,\scrT_\ns)$ and their relations with $\kappa(\Psi,\caT)$, $\kappa(\Psi,\Sigma(d))$, $\kappa(\Psi,\scrT_\ns)$ as summarized in the following proposition and proved in  \aref{app:hkakaTLUBproof}.
\begin{proposition}\label{pro:hkaka}
	Suppose $d$ is an odd prime and $|\Psi\>\in \caH_d^{\otimes n}$. Then $\hka(\Psi,\caT)$ is real for $\caT\in \Sigma(d)$ and
	\begin{gather}
	\hka(\Psi,\caT)
	=\frac{D+2}{D-1}\left[\kappa(\Psi,\caT) -\frac{\kappa(\Psi,\Sigma(d))}{2D+2d}\right]\leq \frac{D+2}{D+1}\quad \forall \caT\in \Sigma(d), 
	\quad  \hka(\Psi,\caT)\geq 	\frac{D+2}{D+d} \quad \forall \caT\in \scrT_\sym, \label{eq:hkakaTLUB}\\
	\hka(\Psi,\Sigma(d))=\frac{D+2}{D+d}\kappa(\Psi,\Sigma(d))=\frac{D+2}{D+d}[\kappa(\Psi,\scrT_\ns)+6]=6+\frac{D-1}{D+2}\hka(\Psi,\scrT_\ns),\label{eq:hkakaSig}\\	
	|\hka|(\Psi,\Sigma(d))=\frac{D+2}{D-1}\left[6-\frac{3\kappa(\Psi,\Sigma(d))}{D+d}\right]+|\hka|(\Psi,\scrT_\ns). \label{eq:hkaSigNsAbs}
	\end{gather}
\end{proposition}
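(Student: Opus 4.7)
The plan is to derive all four displayed relations by substituting the explicit dual-frame formula \eqref{eq:DualBasis} into the definition \eqref{eq:hkapsiT} and then simplifying via the key symmetry $\kappa(\Psi,\scrT_0)=\kappa(\Psi,\scrT_1)=\kappa(\Psi,\Sigma(d))/2$, which is read off directly from \eqref{eq:kappaSigNSscrT}. First I would establish a master formula. For $\caT\in\scrT_j$, inserting \eqref{eq:DualBasis} into \eqref{eq:hkapsiT} and using $D^3-D=D(D-1)(D+1)$ together with $D^2-d^2=(D-d)(D+d)$ gives
\begin{equation*}
\hka(\Psi,\caT)=\frac{D+2}{D-1}\left[\kappa(\Psi,\caT)+\frac{d\,\kappa(\Psi,\scrT_j)-D\,\kappa(\Psi,\scrT_{\bar j})}{(D-d)(D+d)}\right].
\end{equation*}
Applying the symmetry reduces the second summand to $-\kappa(\Psi,\Sigma(d))/[2(D+d)]$, which is the first equality in \eqref{eq:hkakaTLUB}. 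The realness of $\hka$ then follows from the realness of $\kappa$ established in \lref{lem:kappaTLUB}. The degenerate case $n=1$, where $D=d$ and one must use the limiting dual frame \eqref{eq:DualBasisn1}, is handled by a direct substitution that produces the same master formula.

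Next, the two sharp inequalities in \eqref{eq:hkakaTLUB} are obtained by feeding the bounds from \lref{lem:kappaTLUB} into this master formula. For the upper bound I would combine $\kappa(\Psi,\caT)\le 1$ with $\kappa(\Psi,\Sigma(d))\ge 4(D+d)/(D+1)$; the resulting factor $1-2/(D+1)=(D-1)/(D+1)$ cancels the leading $(D-1)^{-1}$ to produce $(D+2)/(D+1)$. For the lower bound on $\caT\in\scrT_\sym$ I would use $\kappa(\Psi,\caT)=1$ from \pref{pro:kappaSym} and $\kappa(\Psi,\Sigma(d))\le 2d+2$; the identity $1-(d+1)/(D+d)=(D-1)/(D+d)$ again cancels the $(D-1)^{-1}$ factor and leaves $(D+2)/(D+d)$. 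For \eqref{eq:hkakaSig}, I would sum the master formula over $\caT\in\Sigma(d)$, noting $|\Sigma(d)|=2d+2$: the second term contributes $-(2d+2)\kappa(\Psi,\Sigma(d))/[2(D+d)]$, and the combination telescopes to $(D+2)\kappa(\Psi,\Sigma(d))/(D+d)$. The second equality follows from $\kappa(\Psi,\Sigma(d))=\kappa(\Psi,\scrT_\ns)+6$, another immediate consequence of \pref{pro:kappaSym}. The third equality is obtained by summing the master formula instead over $\caT\in\scrT_\ns$ only (cardinality $2d-4$) and simplifying to $\hka(\Psi,\scrT_\ns)=\frac{D+2}{D-1}[\hka(\Psi,\Sigma(d))-6]$, which rearranges to the claim.

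Finally, for \eqref{eq:hkaSigNsAbs} I would note that the master formula assigns the common value $\frac{D+2}{D-1}[1-\kappa(\Psi,\Sigma(d))/(2(D+d))]$ to every $\caT\in\scrT_\sym$, and this value is nonnegative by the lower bound just proved. Consequently $|\hka|(\Psi,\scrT_\sym)=\hka(\Psi,\scrT_\sym)=\frac{D+2}{D-1}[6-3\kappa(\Psi,\Sigma(d))/(D+d)]$, and adding $|\hka|(\Psi,\scrT_\ns)$ yields the displayed identity. No step here constitutes a genuine obstacle: once the symmetry $\kappa(\Psi,\scrT_0)=\kappa(\Psi,\scrT_1)$ has been used to collapse the dual-frame expression, everything reduces to rational-function bookkeeping. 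The only mild subtlety is to ensure consistent treatment of the $n=1$ case via \eqref{eq:DualBasisn1}, which I would handle as a remark at the very start of the argument.
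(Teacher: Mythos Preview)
Your proposal is correct and follows essentially the same approach as the paper: both derive the master formula for $\hka(\Psi,\caT)$ by substituting the dual-frame expression \eqref{eq:DualBasis} into \eqref{eq:hkapsiT} and collapsing via the symmetry in \eqref{eq:kappaSigNSscrT}, then obtain the inequalities from \lref{lem:kappaTLUB}, sum over $\Sigma(d)$ and $\scrT_\ns$ for \eqref{eq:hkakaSig}, and split $|\hka|(\Psi,\Sigma(d))$ as $\hka(\Psi,\scrT_\sym)+|\hka|(\Psi,\scrT_\ns)$ for \eqref{eq:hkaSigNsAbs}. Your write-up is more explicit (the paper's proof compresses the derivation of the master formula into a single citation of three equation numbers), and you are slightly more careful in treating the $n=1$ case via \eqref{eq:DualBasisn1} separately, whereas the paper implicitly relies on that formula being the $D\to d$ limit of \eqref{eq:DualBasis}.
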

By virtue of \eref{eq:hkakaTLUB} we can also express $\kappa(\Psi,\caT)$  in terms of $\hka(\Psi,\caT)$ and $\hka(\Psi,\Sigma(d))$, 
\begin{align}
\kappa(\Psi,\caT)&=\frac{D-1}{D+2}\hka(\Psi,\caT)+\frac{\hka(\Psi,\Sigma(d))}{2(D+2)}. 
\end{align}
\Lref{lem:kappaTLUB} and \pref{pro:hkaka}  together imply that
\begin{gather}
\hka(\Psi,\caT)=\kappa(\Psi,\caT)+\caO(d/D),\quad \hka(\Psi,\Sigma(d))=\kappa(\Psi,\Sigma(d))+\caO(d^2/D). \label{eq:hkakabigO}
\end{gather}
When $n$ is large, $\hka(\Psi,\caT)$  is approximately equal to $\kappa(\Psi,\caT)$, and $\hka(\Psi,\Sigma(d))$ is approximately equal to $\kappa(\Psi,\Sigma(d))$.
If in addition $\kappa(\Psi,\caT)\geq 0$ for all $\caT\in \Sigma(d)$, then $\kappa(\Psi,\Sigma(d))\geq0$ and 
\begin{equation}\label{eq:hkakabigOpos}
\begin{aligned}
|\hka(\Psi,\caT)|&=\hka(\Psi,\caT)+\caO(d/D)=
\kappa(\Psi,\caT)+\caO(d/D),\\ |\hka|(\Psi,\Sigma(d))&=\hka(\Psi,\Sigma(d))+\caO(d^2/D)=\kappa(\Psi,\Sigma(d))+\caO(d^2/D).
\end{aligned}
\end{equation}
These results are not surprising given that $\{R(\caT)/D^{3/2}\}_{\caT\in \Sigma(d)}$ becomes a better and better approximation to an orthonormal basis in the commutant of the third Clifford tensor power as $n$ gets large (see \sref{sec:Commutant}). More precise relations can be found in \aref{app:ThirdMomentAux}.

\subsection{Third normalized frame potential and normalized moment operator}
Here we derive necessary and sufficient conditions on when the Clifford orbit $\orb(\Psi)$ forms a 3-design. In addition, we clarify the deviation of  $\orb(\Psi)$ from a 3-design with respect to the third normalized frame potential and the operator norm of the third normalized moment operator.

First, we clarify the conditions on $\kappa(\Psi, \caT)$ and $\hka(\Psi,\caT)$ under which  $\orb(\Psi)$ forms a 3-design. The following proposition is proved in \aref{app:3designPhi3LUB}. 
\begin{proposition}\label{pro:orbit3designCon}
	Suppose $d$ is an odd prime and $|\Psi\>\in \caH_d^{\otimes n}$. Then the  three statements are equivalent: 
	\begin{enumerate}
		\item $\orb(\Psi)$ is a 3-design. 
		
		\item $\kappa(\Psi, \caT)=3/(D+2)$ for all $\caT\in \scrT_\ns$. 
		
		\item $\hka(\Psi,\caT)=0$ for all $\caT\in \scrT_\ns$.

	\end{enumerate}
\end{proposition}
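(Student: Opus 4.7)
The plan is to run the short cycle $(2)\Leftrightarrow(3)\Rightarrow(1)\Rightarrow(2)$, using as the main tools the closed-form relation
\begin{equation*}
\hka(\Psi,\caT) = \frac{D+2}{D-1}\left[\kappa(\Psi,\caT)-\frac{\kappa(\Psi,\Sigma(d))}{2(D+d)}\right]
\end{equation*}
from \pref{pro:hkaka} together with the frame expansion $\bQ(\orb(\Psi))=\tfrac{1}{6}\sum_{\caT\in\Sigma(d)}\hka(\Psi,\caT)R(\caT)$ from \eref{eq:MomentQpsiBar}.

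The equivalence $(2)\Leftrightarrow(3)$ is purely algebraic. Assuming (2), combining $\kappa(\Psi,\cdot)=1$ on $\scrT_\sym$ (\pref{pro:kappaSym}) with $|\scrT_\sym|=6$, $|\scrT_\ns|=2d-4$, and (2) itself yields $\kappa(\Psi,\Sigma(d))=6(D+d)/(D+2)$; then $\kappa(\Psi,\Sigma(d))/(2(D+d))=3/(D+2)$ agrees with $\kappa(\Psi,\caT)$ on $\scrT_\ns$, forcing $\hka(\Psi,\caT)=0$ there. Conversely, (3) forces $\kappa(\Psi,\caT)$ to equal some constant $c$ on $\scrT_\ns$, and the self-consistency equation $c=\bigl[6+(2d-4)c\bigr]/(2(D+d))$ solves to $c=3/(D+2)$. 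For $(3)\Rightarrow(1)$, the expansion collapses under (3) to $\bQ(\orb(\Psi))=\tfrac{1}{6}\sum_{\caT\in\scrT_\sym}\hka(\Psi,\caT)R(\caT)$; \pref{pro:kappaSym} fed through the displayed formula above makes $\hka(\Psi,\cdot)$ constant on $\scrT_\sym$, so the sum equals $cP_\sym$ via $R(\scrT_\sym)=6P_\sym$ (\lref{lem:RTisodefe}), and the trace normalization $\tr\bQ(\orb(\Psi))=\pi_{[3]}=\tr P_\sym$ pins $c=1$.

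For $(1)\Rightarrow(2)$ a short computation is needed. Assuming $\orb(\Psi)$ is a $3$-design, $Q(\orb(\Psi))=P_\sym/\pi_{[3]}$, hence $\kappa(\Psi,\caT)=\tr[R(\caT)P_\sym]/\pi_{[3]}$. Writing $P_\sym=\tfrac{1}{6}\sum_{\sigma\in S_3}R(\sigma)$ and applying $R(\caT)R(\sigma)=R(\caT\sigma)$ together with \eref{eq:RTtr} give $\tr[R(\caT)P_\sym]=\tfrac{1}{6}\sum_{\sigma\in S_3}D^{\dim(\caT\sigma)_\tDelta}$. For $\caT\in\scrT_\ns$, \lref{lem:SemigroupscrT} shows that the right $S_3$-action on $\scrT_\ns$ preserves $\scrT_0$ and $\scrT_1$ under even permutations and swaps them under odd ones, so the six subspaces $\{\caT\sigma\}_{\sigma\in S_3}$ split three-and-three between $\scrT_\ns\cap\scrT_0$ and $\scrT_\ns\cap\scrT_1$; \lref{lem:TDelta} then assigns $\dim(\cdot)_\tDelta=1$ to the three images in $\scrT_0\setminus\{\Delta\}$ and $\dim(\cdot)_\tDelta=2$ to the three images in $\scrT_1$, collapsing the sum to $3D+3D^2=3D(D+1)$ and giving $\kappa(\Psi,\caT)=3/(D+2)$.

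The main obstacle is this last step, where one wants to avoid a case analysis over the double cosets making up $\scrT_\ns$ (which would otherwise split further by $d\mmod 3$ into the isotropic piece $\scrT_\iso\cap\scrT_\ns$ and the defect piece $\scrT_\defe$). The uniform $3/3$ split under the $S_3$-action on $\scrT_0$ versus $\scrT_1$ supplies the orbit-theoretic shortcut that bypasses this; without it the proof would break into separate subcases corresponding to $\scrT_\iso\cap\scrT_\ns$, $\scrT_\defe$ at $d=3$, and $\scrT_\defe$ at $d=1\mmod 3$, each requiring the explicit descriptions of defect subspaces provided by \lref{lem:defectT} and \lref{lem:SemigroupT}.
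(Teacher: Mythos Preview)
Your proof is correct and follows essentially the same route as the paper's: the equivalence $(2)\Leftrightarrow(3)$ via the formula in \pref{pro:hkaka}, and the key step $(1)\Rightarrow(2)$ by expanding $P_\sym=\tfrac16\sum_{\sigma\in S_3}R(\sigma)$ and using the $S_3$-action on $\scrT_0,\scrT_1$ together with the trace formula for $R(\caT)$. The only cosmetic differences are that the paper uses the left action $O\caT$ and cites \pref{pro:RTT1T2tr} rather than \lref{lem:TDelta} directly, and for $(3)\Rightarrow(1)$ it computes $\hka(\Psi,\Delta)=1$ explicitly from \pref{pro:hkaka} rather than using your trace-normalization shortcut.
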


Next, we determine the third normalized frame potential of $\orb(\Psi)$ and show that $\bar{\Phi}_3(\orb(\Psi))=\caO(d)$ irrespective of the  number $n$ of qudits.  \Thref{thm:Phi3LUB} and \coref{cor:Phi3UB} below are also proved in \aref{app:3designPhi3LUB}.
\begin{theorem}\label{thm:Phi3LUB}
	Suppose $d$ is an odd prime and  $|\Psi\>\in\caH_d^{\otimes n}$. Then 
	\begin{align}
	\bar{\Phi}_3(\orb(\Psi))&=
	\frac{1}{6}\sum_{\caT\in \Sigma(d)}\kappa(\Psi,\caT)\hka(\Psi,\caT)
	=1+\frac{1}{6}\sum_{\caT\in \scrT_\ns}\left[\kappa(\Psi,\caT)-\frac{3}{D+2}\right]\hka(\Psi,\caT)\nonumber\\
	&=\frac{D+2}{6(D-1)}\left[\kappa(\Psi,\Sigma(d),2) -\frac{\kappa^2(\Psi,\Sigma(d))}{2(D+d)}\right]=\frac{2(D-1)\hka(\Psi,\Sigma(d),2)+\hka^2(\Psi,\Sigma(d))}{12(D+2)}\nonumber\\
	&=1+\frac{D-1}{6(D+2)}\hka(\Psi,\scrT_\ns,2)+\frac{D-1}{12(D+2)^2}\hka^2(\Psi,\scrT_\ns) \leq\frac{d+2}{3}. \label{eq:Phi3psi}
	\end{align}	
\end{theorem}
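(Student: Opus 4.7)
The plan is to derive the five equal expressions by expanding the third moment operator $Q:=Q(\orb(\Psi))$ in the operator frame $\{R(\caT)\}_{\caT\in \Sigma(d)}$ and its dual frame $\{\tR(\caT)\}_{\caT\in \Sigma(d)}$ constructed in \sref{sec:Commutant}, and then to obtain the upper bound using the per-$\caT$ estimates in \lref{lem:kappaTLUB}.

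Starting from $\bar{\Phi}_3(\orb(\Psi))=\pi_{[3]}\tr(Q^2)$ and the expansion $Q=\sum_\caT \tka(\caT)R(\caT)=\sum_\caT \kappa(\caT)\tR(\caT)$ in \eref{eq:MomentQpsi}, the Hermiticity of $Q$ together with the realness of $\kappa(\Psi,\caT)$ (\lref{lem:kappaTLUB}) and the duality relation \eqref{eq:DualOrtho} yield
\begin{equation*}
\tr(Q^2)=\sum_{\caT\in\Sigma(d)} \tka(\caT)\kappa(\caT)=\frac{1}{D(D+1)(D+2)}\sum_{\caT\in\Sigma(d)}\hka(\caT)\kappa(\caT);
\end{equation*}
multiplying by $\pi_{[3]}=D(D+1)(D+2)/6$ produces the first equality. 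For the second, I would split $\Sigma(d)=\scrT_\sym\sqcup\scrT_\ns$, use $\kappa(\caT)=1$ on $\scrT_\sym$ from \pref{pro:kappaSym} (together with the fact that Clifford symmetry forces $\hka(\caT)$ to be constant on $\scrT_\sym$), and invoke the identity $\hka(\scrT_\sym)=\hka(\Sigma(d))-\hka(\scrT_\ns)=6-\tfrac{3}{D+2}\hka(\scrT_\ns)$ implied by \eref{eq:hkakaSig}; regrouping produces both the additive $1$ and the indicated factor $\kappa(\caT)-3/(D+2)$ inside the $\scrT_\ns$ sum.

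The third equality follows by substituting the linear relation $\hka(\caT)=\tfrac{D+2}{D-1}\bigl[\kappa(\caT)-\kappa(\Psi,\Sigma(d))/(2(D+d))\bigr]$ from \eref{eq:hkakaTLUB} into the first formula, whereupon $\sum_\caT\kappa(\caT)^2$ and $\sum_\caT\kappa(\caT)$ condense into $\kappa(\Psi,\Sigma(d),2)$ and $\kappa(\Psi,\Sigma(d))^2$. The fourth equality is obtained by inverting this linear relation into $\kappa(\caT)=\tfrac{D-1}{D+2}\hka(\caT)+\hka(\Psi,\Sigma(d))/(2(D+2))$ and re-substituting, while the fifth equality follows from the second by expanding $\kappa(\caT)-3/(D+2)$ via the same linear relation and collecting terms using $\hka(\Sigma(d))=6+\tfrac{D-1}{D+2}\hka(\scrT_\ns)$.

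For the closing inequality $\bar{\Phi}_3\leq(d+2)/3$, I would start from the third expression. The bound $|\kappa(\caT)|\leq 1$ from \lref{lem:kappaTLUB} yields $\kappa(\Psi,\Sigma(d),2)\leq|\kappa|(\Psi,\Sigma(d))\leq 2d+2$, while $\kappa(\Psi,\Sigma(d))\geq 4(D+d)/(D+1)$ supplies a positive lower estimate for the subtracted term; these inputs alone suffice once $D$ is comfortably larger than $d$. The main obstacle, which I expect to dominate the work, is the small-$n$ regime (in particular $n=1$) where these crude inputs leave a residual gap; closing it requires extra structural information, for example the Cauchy-Schwarz inequality $(\kappa(\Psi,\Sigma(d))-6)^2\leq (2d-4)(\kappa(\Psi,\Sigma(d),2)-6)$ obtained by restricting to $\scrT_\ns$ (using that $\scrT_\sym$ contributes $6$ to both $\kappa(\Psi,\Sigma(d))$ and $\kappa(\Psi,\Sigma(d),2)$), or further constraints on the coefficient vector $(\kappa(\Psi,\caT))_{\caT\in\Sigma(d)}$ coming from the positive semidefiniteness of $Q(\orb(\Psi))$.
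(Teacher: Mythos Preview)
Your derivation of the five equalities is essentially the paper's: expand one copy of $Q$ in the $R(\caT)$ frame, trace against the other copy using $\tr[R(\caT)Q]=\kappa(\caT)$, and then substitute the linear relation \eqref{eq:hkakaTLUB} back and forth. (A small caveat: you invoke the biorthogonality \eqref{eq:DualOrtho}, which strictly holds only for $n\geq 2$; but you do not actually need it, since the identity $\tr(Q^2)=\sum_\caT \tka(\caT)\tr[R(\caT)Q]=\sum_\caT\tka(\caT)\kappa(\caT)$ follows from the expansion of a single factor together with the definition of $\kappa$.)

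Where you diverge from the paper is the final inequality, and there your route is harder than necessary. The paper does not go through the third expression at all; it bounds the \emph{first} expression termwise. From \lref{lem:kappaTLUB} one has $|\kappa(\caT)|\leq 1$, and the two-sided bound $|\hka(\caT)|\leq (D+2)/(D+1)$ (upper side from \eqref{eq:hkakaTLUB}, lower side from \lref{lem:hkaLB}) gives immediately
\[
\bar{\Phi}_3(\orb(\Psi))=\frac{1}{6}\sum_{\caT\in\Sigma(d)}\kappa(\caT)\hka(\caT)
\leq \frac{1}{6}\,|\Sigma(d)|\cdot\frac{D+2}{D+1}
=\frac{(d+1)(D+2)}{3(D+1)}\leq\frac{d+2}{3},
\]
the last step being equivalent to $D\geq d$. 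This is uniform in $n$, so no separate small-$n$ analysis is required. Your proposed route through $\kappa(\Psi,\Sigma(d),2)$ and Cauchy--Schwarz can presumably be pushed through, but it is fighting the wrong expression: the nontrivial input is the bound $\hka(\caT)\geq -(D+2)/(D+1)$ from \lref{lem:hkaLB} (which itself uses the positivity of $R(\scrT_\iso)$), and once you have it the first formula finishes the job in one line.
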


\begin{corollary}\label{cor:Phi3UB}
	Suppose $\kappa(\Psi,\caT)\geq 0$ for all
	$\caT\in \Sigma(d)$ in \thref{thm:Phi3LUB}. Then 
	\begin{align}\label{eq:Phi3PosUB1}
	\bar{\Phi}_3(\orb(\Psi))\leq \frac{D^2+3(d-2)}{6D^2}\kappa(\Psi,\Sigma(d),2)=\frac{D^2+3(d-2)}{6D^2}\sum_{\caT\in \Sigma(d)}\kappa^2(\Psi,\caT).
	\end{align}
	If in addition  $\kappa(\Psi,\Sigma(d))\geq 6(D+d)/(D+2)$, then  $\bar{\Phi}_3(\orb(\Psi))\leq\kappa(\Psi,\Sigma(d),2)/6$. 
\end{corollary}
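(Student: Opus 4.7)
The plan is to start from the explicit identity
\[
\bar{\Phi}_3(\orb(\Psi))=\frac{D+2}{6(D-1)}\left[\kappa(\Psi,\Sigma(d),2)-\frac{\kappa^2(\Psi,\Sigma(d))}{2(D+d)}\right]
\]
(the third form of $\bar{\Phi}_3$ recorded in \thref{thm:Phi3LUB}) and show that, under the non-negativity hypothesis on $\kappa(\Psi,\caT)$, one can extract a lower bound on the ratio $\kappa^2(\Psi,\Sigma(d))/\kappa(\Psi,\Sigma(d),2)$ large enough to reduce the corollary to an elementary polynomial inequality.

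First I would observe that \pref{pro:kappaSym} gives $\kappa(\Psi,\caT)=1$ for each of the six $\caT\in\scrT_\sym$, so non-negativity of the remaining terms yields $\kappa(\Psi,\Sigma(d))\geq 6$. Second, \lref{lem:kappaTLUB} gives $\kappa(\Psi,\caT)\leq 1$ pointwise, hence $\kappa^2(\Psi,\caT)\leq\kappa(\Psi,\caT)$ and consequently $\kappa(\Psi,\Sigma(d),2)\leq\kappa(\Psi,\Sigma(d))$. Multiplying the two estimates produces the pivotal bound $\kappa^2(\Psi,\Sigma(d))\geq 6\kappa(\Psi,\Sigma(d),2)$. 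Substituting this into the bracket and factoring $\kappa(\Psi,\Sigma(d),2)$ out gives the intermediate estimate
\[
\bar{\Phi}_3(\orb(\Psi))\leq\frac{(D+2)(D+d-3)}{6(D-1)(D+d)}\,\kappa(\Psi,\Sigma(d),2).
\]
The first claim then follows from the clean algebraic inequality $(D+2)(D+d-3)D^2\leq(D-1)(D+d)[D^2+3(d-2)]$; after expansion the difference simplifies to $3(d-2)[(d-1)D-d]$, which is visibly non-negative for every odd prime $d\geq 3$ and every $D=d^n\geq d$.

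For the sharpened second claim, the stronger hypothesis $\kappa(\Psi,\Sigma(d))\geq 6(D+d)/(D+2)$ combined with $\kappa(\Psi,\Sigma(d),2)\leq\kappa(\Psi,\Sigma(d))$ upgrades the pivotal bound to $\kappa^2(\Psi,\Sigma(d))\geq[6(D+d)/(D+2)]\kappa(\Psi,\Sigma(d),2)$. Inserting this into the bracket produces the factor $1-3/(D+2)=(D-1)/(D+2)$, which cancels exactly against the prefactor $(D+2)/[6(D-1)]$ and leaves $\kappa(\Psi,\Sigma(d),2)/6$, as claimed. The only delicate point, and hence the main place to exercise care, is the final polynomial comparison in the first claim; conceptually the entire argument is driven by the simple observation that termwise non-negativity together with the pointwise ceiling $\kappa\leq 1$ and the guaranteed contribution of $\scrT_\sym$ force the universal relation $\kappa^2\geq 6\kappa_2$ between the two sums that \thref{thm:Phi3LUB} pairs against each other.
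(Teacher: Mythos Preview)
Your proof is correct and follows essentially the same route as the paper's. Both arguments combine $\kappa(\Psi,\Sigma(d))\geq 6$ with $\kappa(\Psi,\Sigma(d))\geq\kappa(\Psi,\Sigma(d),2)$ to obtain $\kappa^2(\Psi,\Sigma(d))\geq 6\,\kappa(\Psi,\Sigma(d),2)$, feed this into the third formula of \thref{thm:Phi3LUB} to extract the factor $1-3/(D+d)$, and then compare $\tfrac{(D+2)(D+d-3)}{6(D-1)(D+d)}$ with $\tfrac{D^2+3(d-2)}{6D^2}$; your explicit computation of the difference $3(d-2)[(d-1)D-d]$ is the only elaboration beyond what the paper writes, and the second claim is handled identically.
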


Next, we derive nearly tight upper bounds for the operator norms of  $\bQ(\orb(\Psi))$ and  $\bQ(\orb(\Psi))-P_\sym$. 
\Thsref{thm:MomentNormd2} and \ref{thm:MomentNormd13} below are proved in \aref{app:thm:MomentNormProof}.  To simplify the notation, the argument $\Psi$ in $\kappa(\Psi,\cdot)$ and $\hka(\Psi,\cdot)$ will be omitted.

\begin{theorem}\label{thm:MomentNormd2}
	Suppose $d$ is an odd prime satisfying $d=2\mmod 3$ and $|\Psi\>\in\caH_d^{\otimes n}$. Then $\hka(\Sigma(d))/6$ is an eigenvalue of $\bQ(\orb(\Psi))$ and
	\begin{gather}
	\frac{\hka(\Sigma(d))}{6}\leq \|\bQ(\orb(\Psi))\|\leq \frac{|\hka|(\Sigma(d))}{6}\leq  \frac{d+2}{3}, \label{eq:MomentNormd2a} \\
	\|\bQ(\orb(\Psi))-P_\sym\|\leq \frac{D+5}{6(D+2)}|\hka|(\scrT_\ns)< \frac{d+2}{3}. \label{eq:MomentNormd2b}
	\end{gather}	
\end{theorem}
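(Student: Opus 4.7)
The plan is to use the expansion $\bQ(\orb(\Psi))=\frac{1}{6}\sum_{\caT\in\Sigma(d)}\hka(\Psi,\caT)R(\caT)$ from (eq:MomentQpsiBar) together with the structural simplification that, when $d=2\mmod 3$, (eq:Sigma33DefectNum) forces $\scrT_\defe=\emptyset$, hence $\Sigma(d)=\scrT_\iso$ and every $R(\caT)$ appearing in the sum is a permutation operator with $\|R(\caT)\|=1$. The triangle inequality then gives the middle inequality $\|\bQ(\orb(\Psi))\|\leq|\hka|(\Psi,\Sigma(d))/6$ directly.

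For the eigenvalue lower bound, I would invoke Lemma \ref{lem:RTisodefe}: when $d=2\mmod 3$, $R(\scrT_\iso)$ is proportional to a nonzero orthogonal projector $P_\iso$. Because $\scrT_\iso=O_3(d)\Delta O_3(d)$ is closed under the left--right $O_3(d)$-action, multiplication by any $R(\caT)$, $\caT\in\scrT_\iso$, merely permutes $\{R(\caT')\}_{\caT'\in\scrT_\iso}$, so $R(\caT)R(\scrT_\iso)=R(\scrT_\iso)$ and consequently $R(\caT)P_\iso=P_\iso$ for every $\caT\in\Sigma(d)$. Substituting into the expansion yields $\bQ(\orb(\Psi))P_\iso=\tfrac{1}{6}\hka(\Psi,\Sigma(d))P_\iso$, identifying $\hka(\Psi,\Sigma(d))/6$ as an eigenvalue of $\bQ(\orb(\Psi))$ and hence as a lower bound on $\|\bQ(\orb(\Psi))\|$.

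For the bound (eq:MomentNormd2b) on $\|\bQ(\orb(\Psi))-P_\sym\|$, I would split $\Sigma(d)=\scrT_\sym\sqcup\scrT_\ns$ and combine Proposition \ref{pro:kappaSym} with (eq:hkakaTLUB) to observe that all six coefficients $\hka(\Psi,\caT)$ for $\caT\in\scrT_\sym$ coincide with a common value $a$; (eq:hkakaSig) then identifies $a=1-\hka(\Psi,\scrT_\ns)/[2(D+2)]$. Since $6P_\sym=R(\scrT_\sym)$, the resulting decomposition
\begin{equation*}
\bQ(\orb(\Psi))-P_\sym=(a-1)P_\sym+\frac{1}{6}\sum_{\caT\in\scrT_\ns}\hka(\Psi,\caT)R(\caT),
\end{equation*}
together with $|a-1|=|\hka(\Psi,\scrT_\ns)|/[2(D+2)]\leq|\hka|(\Psi,\scrT_\ns)/[2(D+2)]$ and $\|R(\caT)\|=1$, gives the first inequality in (eq:MomentNormd2b) by the triangle inequality.

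The main obstacle is upgrading the crude estimates $|\kappa(\caT)|\leq 1$ and $|\Sigma(d)|=2d+2$ to the sharp universal bounds $|\hka|(\Psi,\Sigma(d))/6\leq(d+2)/3$ and $(D+5)|\hka|(\Psi,\scrT_\ns)/[6(D+2)]<(d+2)/3$. For this I would parametrize via $c:=\kappa(\Psi,\Sigma(d))/(2D+2d)$, which lies in $[2/(D+1),(d+1)/(D+d)]$ by Lemma \ref{lem:kappaTLUB}, use the identity $\hka(\Psi,\caT)=(D+2)(\kappa(\Psi,\caT)-c)/(D-1)$ for $\caT\in\scrT_\ns$, and then invoke (eq:hkaSigNsAbs) to separate the $\scrT_\sym$ and $\scrT_\ns$ contributions to $|\hka|(\Psi,\Sigma(d))$. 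A short optimization in $c$, exploiting both endpoints of the range of $\kappa(\Psi,\Sigma(d))$ guaranteed by Lemma \ref{lem:kappaTLUB}, closes the remaining numerical gap and yields the asserted constants.
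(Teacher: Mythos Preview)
Your identification of the eigenvalue $\hka(\Sigma(d))/6$ via $R(\caT)P_\iso=P_\iso$ is correct and essentially equivalent to the paper's argument (the paper exhibits the third tensor power of any stabilizer state as a common $+1$-eigenvector of every $R(\caT)$ with $\caT\in\scrT_\iso$, which amounts to the same thing). The triangle-inequality bounds $\|\bQ(\orb(\Psi))\|\leq|\hka|(\Sigma(d))/6$ and the decomposition leading to the first inequality in \eqref{eq:MomentNormd2b} also match the paper's route.

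The gap is in your last step. The ``short optimization in $c$'' using only Lemma~\ref{lem:kappaTLUB} does \emph{not} deliver the constant $(d+2)/3$. Concretely, with $d=5$, $n=1$ (so $D=5$, $|\scrT_\ns|=6$), the constraints you have available are $\kappa(\caT)\in[-1,1]$ for $\caT\in\scrT_\ns$ and $\kappa(\Sigma(d))\geq 4(D+d)/(D+1)=20/3$. Maximizing $\sum_{\caT\in\scrT_\ns}|\kappa(\caT)-c|$ subject to the sum constraint $\sum_{\caT\in\scrT_\ns}\kappa(\caT)=(2D+2d)c-6$ at an extreme configuration gives $2d-4-sc$ with $s=\kappa(\scrT_\ns)$; at the lower endpoint $s=2/3$, $c=1/3$, this is $52/9$, yielding $|\hka|(\scrT_\ns)\leq 91/9$ and hence $|\hka|(\Sigma(d))\leq 154/9\approx17.1>14=2(d+2)$. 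The bound on $\|\bQ-P_\sym\|$ fails similarly ($65/27>7/3$). So merely knowing $|\kappa(\caT)|\leq 1$ together with the range of $\kappa(\Sigma(d))$ is not enough.

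What the paper actually uses here is Lemma~\ref{lem:hkaLUB}, specifically the two-sided bound $|\hka(\caT)|\leq(D+2)/(D+1)$ for every $\caT\in\Sigma(d)$; summing over $|\Sigma(d)|=2(d+1)$ terms gives $|\hka|(\Sigma(d))\leq2(d+2)$ immediately, and the $\scrT_\ns$ bound follows the same way. The nontrivial half of this---the lower bound $\hka(\caT)\geq-(D+2)/(D+1)$---is Lemma~\ref{lem:hkaLB}, whose proof is a spectral argument: one writes $\kappa(\caT)=\tr[R_\rmH(\caT)(|\Psi\rangle\langle\Psi|)^{\otimes 3}]$ with $R_\rmH(\caT)$ the Hermitian part, observes that $R_\rmH(\caT)\geq 2P_\iso-\bbI$ since the support of $P_\iso$ is the common $+1$-eigenspace, and then feeds in the lower bound on $\kappa(\Sigma(d))=|O_3(d)|\tr[P_\iso(|\Psi\rangle\langle\Psi|)^{\otimes3}]$. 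This is precisely the ingredient your optimization is missing; once you have it, no optimization is needed.
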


\begin{theorem}\label{thm:MomentNormd13}
	Suppose $d$ is an odd prime satisfying $d\neq 2\mmod 3$,  $|\Psi\>\in\caH_d^{\otimes n}$,  and $\caT_\defe\in \scrT_\defe$. Then $\hka(\scrT_\iso)/6$ and	
	$(D+2)\kappa(\caT_\defe)/3$ are eigenvalues of 
	$\bQ(\orb(\Psi))$ and
	\begin{gather}
	\frac{1}{6}\max\left\{\hka(\scrT_\iso), 2(D+2)\kappa(\caT_\defe)\right\}\leq
	\|\bQ(\orb(\Psi))\|\leq \frac{1}{6}\max\left\{|\hka|(\scrT_\iso), 2(D+2)\kappa(\caT_\defe)\right\}\leq \frac{D+2}{3}, \label{eq:MomentNormd13a}\\
	\|\bQ(\orb(\Psi))-P_\sym\| 
	\leq 	\max\left\{\frac{D+5}{6(D+2)}|\hka|(\scrT_\ns), \left|\frac{D+2}{3}\kappa(\caT_\defe)-1\right|\right\}< \frac{D+2}{3}. \label{eq:MomentNormd13b}
	\end{gather}	 	
	If  $\hka(\caT_\defe)\leq 0$ or $\kappa(\caT_\defe)	\leq \kappa(\Psi,\Sigma(d))/[2(D+d)]$, then
	\begin{gather}
	\frac{1}{6}\hka(\scrT_\iso)\leq
	\|\bQ(\orb(\Psi))\|\leq \frac{1}{6}|\hka|(\scrT_\iso)\leq \frac{d+2}{3}, \label{eq:MomentNormd13c}\\
	\|\bQ(\orb(\Psi))-P_\sym\| 
	\leq 	\max\left\{\frac{D+5}{6(D+2)}|\hka|(\scrT_\ns), 1-\frac{D+2}{3}\kappa(\caT_\defe)\right\}< \frac{d+2}{3}. \label{eq:MomentNormd13d}
	\end{gather}
	If in addition $\hka(\caT_\defe)= 0$ and $\hka(\caT)\geq 0$ for all $\caT\in \Sigma(d)$. Then 
	\begin{gather}
	\|\bQ(\orb(\Psi))\|=\frac{\hka(\Sigma(d))}{6}=1+\frac{D-1}{6(D+2)}\hka(\scrT_\ns), \quad 
	\|\bQ(\orb(\Psi))-P_\sym\| 
	\leq \frac{D+5}{6(D+2)}\hka(\scrT_\ns). \label{eq:MomentNormd13e}
	\end{gather}	
\end{theorem}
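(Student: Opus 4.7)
The strategy is to expand $\bQ(\orb(\Psi)) = \frac{1}{6}\sum_{\caT \in \Sigma(d)} \hka(\Psi, \caT) R(\caT)$ using \eref{eq:MomentQpsiBar} and exploit the nested support structure from \lref{lem:RTisodefe}: $R(\scrT_\iso)$ and $R(\scrT_\defe)$ are proportional to commuting projectors onto subspaces $V_\iso \supseteq V_\defe$ inside $\Sym_3\bigl(\caH_d^{\otimes n}\bigr)$. The analysis splits naturally into the three orthogonal blocks $V_\defe$, $V_\iso \ominus V_\defe$, and $\Sym_3\bigl(\caH_d^{\otimes n}\bigr) \ominus V_\iso$; on the last block $\bQ(\orb(\Psi))$ vanishes, reducing the problem to the other two.

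For the two claimed eigenvalues, first observe that \pref{pro:kappaSym} combined with \lref{lem:SemigroupT} forces $\kappa(\Psi, \cdot)$ to be constant on the double coset $\scrT_\defe$, so $\sum_{\caT \in \scrT_\defe} \kappa(\caT) R(\caT) = \kappa(\caT_\defe) R(\scrT_\defe) = \kappa(\caT_\defe)\,\|R(\scrT_\defe)\|\, P_\defe$ with the scalar read off \tref{tab:RisoRdefe}. Within $V_\defe \subseteq V_\iso$ we also have $R(\scrT_\iso) = \|R(\scrT_\iso)\|\,\mathrm{id}$, so passing to the primal form $Q(\orb(\Psi)) = \sum_\caT \tka(\caT) R(\caT)$ and invoking \pref{pro:hkaka} yields the eigenvalue $(D+2)\kappa(\caT_\defe)/3$ on $V_\defe$; this matches the stabilizer-state value in \tref{tab:bQnd3} when $\kappa(\caT_\defe) = 1$. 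For the eigenvalue $\hka(\scrT_\iso)/6$, restrict attention to $V_\iso \ominus V_\defe$: every $R(\caT)$ with $\caT \in \scrT_\defe$ has image inside $V_\defe$ by the CSS structure \eref{eq:rTCSS} combined with \lref{lem:SemigroupT}, so the defect sum annihilates this block, leaving only the isotropic contribution, which reduces to $\hka(\scrT_\iso)/6$ times the identity on a suitable Clifford-invariant direction.

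The upper bounds \eqsref{eq:MomentNormd13a}{eq:MomentNormd13b} then follow by block-by-block triangle inequality: within $V_\defe$ the semigroup-like relations \eref{eq:RTiTjd} (or \eref{eq:RTiTjd3} when $d=3$) together with the constancy of $\kappa$ on $\scrT_\defe$ bound the contribution by $(D+2)|\kappa(\caT_\defe)|/3$, while within $V_\iso \ominus V_\defe$ each $R(\caT)$ with $\caT \in \scrT_\iso$ is a permutation of operator norm $1$, giving the bound $|\hka|(\scrT_\iso)/6$. For $\|\bQ(\orb(\Psi)) - P_\sym\|$ we subtract $P_\sym = R(\scrT_\sym)/6$ in the same way: on $V_\iso \ominus V_\defe$ only the non-symmetric isotropic terms survive and their combined norm is controlled by $\frac{D+5}{6(D+2)}|\hka|(\scrT_\ns)$ after \pref{pro:hkaka} is used to rewrite $\hka(\scrT_\iso \setminus \scrT_\sym)$ and the deviation of $\hka(\scrT_\sym)$ from $1$ in terms of $\hka(\scrT_\ns)$; on $V_\defe$ the contribution is $\bigl|(D+2)\kappa(\caT_\defe)/3 - 1\bigr|$; on $\Sym_3\bigl(\caH_d^{\otimes n}\bigr) \ominus V_\iso$ we simply have $-P_\sym$ of norm $1$, which is absorbed into the strict inequality. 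The refined cases \eqsref{eq:MomentNormd13c}{eq:MomentNormd13e} specialize this: the hypothesis $\hka(\caT_\defe) \leq 0$ (equivalent via \pref{pro:hkaka} to $\kappa(\caT_\defe) \leq \kappa(\Psi,\Sigma(d))/[2(D+d)]$) forces the defect eigenvalue to be dominated by $|\hka|(\scrT_\iso)/6$, and under the extra positivity $\hka(\caT) \geq 0$ for all $\caT$ with $\hka(\caT_\defe) = 0$, each term in $6\bQ(\orb(\Psi))$ is a positive multiple of a positive operator, so the operator norm equals $\hka(\Sigma(d))/6 = 1 + \frac{D-1}{6(D+2)}\hka(\scrT_\ns)$ by \eref{eq:hkakaSig}.

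The main technical obstacle, in my view, is not the algebraic manipulations but rigorously certifying that $(D+2)\kappa(\caT_\defe)/3$ and $\hka(\scrT_\iso)/6$ are \emph{genuine eigenvalues} rather than merely block-wise norm bounds. This requires understanding how the commutant of the third Clifford tensor power decomposes within $V_\defe$ and within $V_\iso \ominus V_\defe$, and showing that the restriction of $\bQ(\orb(\Psi))$ to each block reduces to a scalar on at least one invariant direction. The ingredient that appears to make this work is the full $O_3(d)$-symmetry from \lsref{lem:SemigroupT} and \ref{lem:SemigroupscrT}, which is strictly stronger than the $S_3$-symmetry used in \pref{pro:kappaSym} and which forces $\hka$ to average out coherently along the relevant Clifford-irreducible components.
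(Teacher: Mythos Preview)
Your three-block decomposition and the identification of $V_\defe \subseteq V_\iso \subseteq \Sym_3(\caH_d^{\otimes n})$ are exactly right, but you have the roles of the second and third blocks swapped, and this produces a genuine gap.

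The claim that $\bQ(\orb(\Psi))$ vanishes on $\Sym_3(\caH_d^{\otimes n}) \ominus V_\iso$ is false. Although the \emph{sum} $R(\scrT_\iso)$ is proportional to the projector onto $V_\iso$, each individual $R(\caT)$ with $\caT \in \scrT_\iso$ is a unitary operator on the full space, so $M_\iso = \frac{1}{6}\sum_{\caT \in \scrT_\iso}\hka(\caT)R(\caT)$ has no reason to vanish on $V_\iso^\perp$. (Concretely: for a generic fiducial $|\Psi\rangle$, the tensor cube $|\Psi\rangle^{\otimes 3}$ is not fixed by all $r(O)$, $O\in O_3(d)$, hence lies outside $V_\iso$; yet it is in the support of $\bQ(\orb(\Psi))$.) It is precisely on this third block that the triangle-inequality bound $\|M_\iso\|\leq \tfrac{1}{6}|\hka|(\scrT_\iso)$ is needed, and likewise $\|M_\iso - P_\sym\|\leq \tfrac{D+5}{6(D+2)}|\hka|(\scrT_\ns)$ via \lref{lem:hkaDeltaNS}. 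Without this, neither \eref{eq:MomentNormd13a} nor \eref{eq:MomentNormd13b} is established.

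Conversely, on $V_\iso \ominus V_\defe$ you are working too hard. Since $V_\iso$ is by definition the common eigenvalue-$1$ eigenspace of all $R(O)$, $O\in O_3(d)$, every $R(\caT)$ with $\caT\in\scrT_\iso$ restricts to the identity there---not merely ``a permutation of operator norm $1$''. Hence $\bQ|_{V_\iso\ominus V_\defe} = \tfrac{1}{6}\hka(\scrT_\iso)\cdot\mathrm{id}$ exactly, and the eigenvalue $\hka(\scrT_\iso)/6$ is immediate on the whole block, with no need for ``a suitable Clifford-invariant direction'' or the extra $O_3(d)$-symmetry you invoke at the end. Similarly, on $V_\defe$ one gets $\tfrac{1}{6}[\hka(\scrT_\iso)+\hka(\caT_\defe)\|R(\scrT_\defe)\|]=\tfrac{1}{3}(D+2)\kappa(\caT_\defe)$ by \lref{lem:kahkaDef}; the detour through the primal form with $\tka$ is unnecessary. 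In short: the paper's two-block split $\caV=V_\iso$ versus $\caV^\perp$ is what you want---the restriction to $\caV$ is explicitly diagonal (two eigenvalues), while on $\caV^\perp$ only $M_\iso$ survives and is bounded by the triangle inequality.
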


By virtue of \eqsref{eq:MomentNormd2a}{eq:MomentNormd13a} in \thsref{thm:MomentNormd2} and \ref{thm:MomentNormd13} we can  deduce that
\begin{align}
\|\bQ(\orb(\Psi))-P_\sym\|\leq \max\{\|\bQ(\orb(\Psi))\|-1, 1\}
\leq  \frac{1}{3}\begin{cases}
d & d=2\mmod 3,\\
D & d\neq 2\mmod 3,
\end{cases}
\end{align}
which complements \eqsref{eq:MomentNormd2b}{eq:MomentNormd13b}. 
If $d=2\mmod 3$, then both $\|\bQ(\orb(\Psi))\|$ and $\|\bQ(\orb(\Psi))-P_\sym\| $ are $\caO(d)$  irrespective of the  number $n$ of qudits. If $d\neq 2\mmod 3$, by contrast, then $\|\bQ(\orb(\Psi))\|$ and $\|\bQ(\orb(\Psi))-P_\sym\| $ may increase exponentially with $n$.  
When $|\Psi\>$ is a stabilizer state for example,  $\kappa(\caT)=1$ and $\hka(\caT)=(D+2)/(D+d)$ for  $\caT\in \Sigma(d)$, which means $\kappa(\caT_\defe)=1$ and $|\hka|(\scrT_\iso)=\hka(\scrT_\iso)\leq 2d$, so the first two inequalities in \eref{eq:MomentNormd2a} and all three inequalities in \eref{eq:MomentNormd13a} are saturated, which is consistent with \pref{pro:Qspectra} and \tref{tab:bQnd3}.
Nevertheless, as we shall see later, a good approximate 3-design can be constructed from a suitable Clifford orbit as long as the fiducial state $|\Psi\>$ is chosen properly. To this end,  $|\hka|(\Psi,\scrT_\ns)$  and $\hka(\Psi,\caT_\defe)$  should be sufficiently small,  which means $\kappa(\Psi, \caT_\defe)$ should be close to $3/(D+2)$.

\subsection{\label{sec:ShadowNormGen}Shadow norm}

Here we first determine the  shadow norms of stabilizer projectors with respect to a general Clifford orbit $\orb(\Psi)$ and clarify  the operational significance of $\hka(\Psi,\Sigma(d))$ and $\kappa(\Psi,\Sigma(d))$. Then we derive nearly tight upper and lower bounds for the shadow norm of a general traceless observable  $\Ob\in \caL_0\bigl(\caH_d^{\otimes n}\bigr)$
and the shadow norm of  $\orb(\Psi)$ itself.

Let $\bcaQ_{\orb(\Psi)}(\cdot)$ be the shadow map associated with $\bQ(\orb(\Psi))$ as defined in \eref{eq:ShadowMap} and  $\|\Ob\|_{\orb(\Psi)}$ the shadow norm of $\Ob$ with respect to  $\orb(\Psi)$.  By virtue of \eqsref{eq:ObShNormDef}{eq:MomentQpsiBar} we can deduce that
\begin{align}
\bcaQ_{\orb(\Psi)}(\Ob)&=\frac{1}{6}\sum_{\caT\in \Sigma(d)}\hka(\Psi, \caT)\caR_\caT(\Ob), \quad \|\Ob\|^2_{\orb(\Psi)}= \frac{6(D+1)}{D+2}\|\bcaQ_{\orb(\Psi)}(\Ob)\|. \label{eq:ObShNormPsiDef}
\end{align}
When $\Ob$ is a rank-$K$ stabilizer projector, we can derive explicit formulas for $\|\bcaQ_{\orb(\Psi)}(\Ob)\|$ and $\|\Ob_0\|^2_{\orb(\Psi)}$. 
Define
\begin{equation} \label{eq:QPsiProj0eig}
\begin{aligned}
\upsilon_1
&:=\frac{(D-K)[D^2+2D+(D^2+D-2)K]}{D^2}-(D-K)(K+1)\hka(\Psi,\Delta)\\
&\;=\frac{(D-K)(K+1)}{2(D-1)}\hka(\Psi,\Sigma(d))-\frac{(D+2)(D-K)(2DK+D-K)}{D^2(D-1)}, \\
\upsilon_2&:=(K^2+K)\hka(\Psi,\Delta)-\frac{(D-1)(D+2)K^2}{D^2}\\
&\;=-\frac{K(K+1)}{2(D-1)}\hka(\Psi,\Sigma(d))+\frac{(D+2)K(D^2+2DK-K)}{D^2(D-1)},
\end{aligned}	
\end{equation}
where the second equality in each definition follows from \lref{lem:hkaDeltaNS}. 
Thanks to \pref{pro:hkaka}, here $\hka(\Psi,\Sigma(d))$ can also be expressed in terms of $\hka(\Psi,\scrT_\ns)$, $\kappa(\Psi,\Sigma(d))$, or $\kappa(\Psi,\scrT_\ns)$. \Thref{thm:StabShNormGen} and \coref{cor:Stab1ShNormGen} below are proved in \aref{app:thm:StabShNormPsi}. 
\begin{theorem}\label{thm:StabShNormGen}
	Suppose $d$ is an odd prime,  $|\Psi\>\in\caH_d^{\otimes n}$, and $\Ob$ is a rank-$K$ stabilizer projector on $\caH_d^{\otimes n}$ with $1\leq K\leq D/d$. Then 
	\begin{gather}
	6\|\bcaQ_{\orb(\Psi)}(\Ob)\|=\frac{K+1}{2} \hka(\Psi,\Sigma(d))+(K^2-1)\hka(\Psi,\Delta),
	\label{eq:QPsiProjNorm}\\
	\frac{(D+1)(D-K)K}{D^2}\leq \|\Ob_0\|^2_{\orb(\Psi)}=\frac{(D+1)\max\{\upsilon_1, \upsilon_2\}}{D+2}\leq \frac{(D+1)(D-K)[dD+(dD-D-d)K]}{D^2(D+d)}, \label{eq:StabShNormGen}\\
	\frac{D+1}{D}\leq 	\frac{\|\Ob_0\|^2_{\orb(\Psi)}}{\|\Ob_0\|_2^2}\leq \frac{D+1}{D+d}\left(d-1+\frac{d}{K}-\frac{d}{D}\right). \label{eq:StabShNormRatioGen}
	\end{gather}
	The upper bounds in \eqsref{eq:StabShNormGen}{eq:StabShNormRatioGen} are saturated when $|\Psi\>$ is a stabilizer state. If  $\kappa(\Psi,\scrT_\ns)\geq 2(d-2)/D$ or $\hka(\Psi,\scrT_\ns)\geq -4(d-2)(D+2)/[D(D+d)]$, then $\|\Ob_0\|^2_{\orb(\Psi)}=(D+1)\upsilon_1/(D+2)$. 
	If instead  $\hka(\Psi,\scrT_\ns)\leq0$, then $\|\Ob_0\|^2_{\orb(\Psi)}\leq K+2$.
\end{theorem}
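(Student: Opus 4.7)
For \eref{eq:QPsiProjNorm}, the plan is to substitute the four cases of \lref{lem:RTObStabProj} into $6\bcaQ_{\orb(\Psi)}(\Ob) = \sum_{\caT\in\Sigma(d)} \hka(\Psi, \caT)\caR_\caT(\Ob)$ and exploit the $S_3$-symmetry of $\hka$ from \pref{pro:kappaSym}: $\hka(\Psi,\caT_O) = \hka(\Psi,\Delta)$ for $O\in S_3$ and $\hka(\Psi, \scrT_0) = \hka(\Psi, \scrT_1) = \hka(\Psi, \Sigma(d))/2$. Collecting terms yields
\begin{equation*}
6\bcaQ_{\orb(\Psi)}(\Ob) = K(K+1)\hka(\Psi,\Delta)\,\bbI + (K+1)\bigl[\tfrac{1}{2}\hka(\Psi,\Sigma(d)) - \hka(\Psi,\Delta)\bigr]\Ob.
\end{equation*}
The bounds $(D+2)/(D+d) \leq \hka(\Psi,\Delta) \leq (D+2)/(D+1)$ and $\hka(\Psi,\Sigma(d)) \geq 4(D+2)/(D+1)$ from \lref{lem:kappaTLUB} and \pref{pro:hkaka} force both coefficients to be nonnegative, so $6\|\bcaQ_{\orb(\Psi)}(\Ob)\|$ equals their sum (the eigenvalue on the code space), which is exactly \eref{eq:QPsiProjNorm}.

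For $\Ob_0 = \Ob - (K/D)\bbI$, expand $\caR_\caT(\Ob_0)$ sesquilinearly in $\Ob_0 \otimes \Ob_0$. The two cross terms $\tr_{BC}[R(\caT)(\bbI\otimes\bbI\otimes\Ob)]$ and $\tr_{BC}[R(\caT)(\bbI\otimes\Ob\otimes\bbI)]$ contribute equally after weighting by $\hka(\Psi,\caT)$ and summing over $\caT$ because $\bar{Q}(\orb(\Psi))$ is permutation symmetric; their common $\hka$-weighted sum, together with $\sum_\caT \hka(\Psi,\caT)\tr_{BC}R(\caT)$, is evaluated via the 2-design identities $\tr_B[\bar{Q}(\orb(\Psi))] = (D+2)P_{[2]}/3$ and $\tr_{BC}[\bar{Q}(\orb(\Psi))] = (D+1)(D+2)\bbI/6$, yielding $(D+2)(K\bbI + \Ob)$ and $(D+1)(D+2)\bbI$ respectively. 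Substituting these together with the preceding formula for $6\bcaQ_{\orb(\Psi)}(\Ob)$, and using the clean identity $\hka(\Psi,\Delta) = 1 - \hka(\Psi,\scrT_\ns)/[2(D+2)]$ (from $\kappa(\Psi,\Delta) = 1$ and \pref{pro:hkaka}), identifies the two eigenvalues of $6\bcaQ_{\orb(\Psi)}(\Ob_0)$ as $\upsilon_2$ (with multiplicity $D-K$) and $\upsilon_1$ (with multiplicity $K$) in the first form of \eref{eq:QPsiProj0eig}; the equivalence with the second form follows from \lref{lem:hkaDeltaNS}.

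Both $\upsilon_1,\upsilon_2 \geq 0$ then follow from the aforementioned bounds on $\hka(\Psi,\Delta)$ combined with the assumption $K \leq D/d$, so $\|\Ob_0\|^2_{\orb(\Psi)} = (D+1)\max\{\upsilon_1,\upsilon_2\}/(D+2)$. The trace identity $(D-K)\upsilon_2 + K\upsilon_1 = (D+2)K(D-K)/D$, itself a 2-design consequence via $\tr\bcaQ_{\orb(\Psi)}(\Ob_0) = (D+2)\|\Ob_0\|_2^2/6$, forces $\max\{\upsilon_1,\upsilon_2\} \geq (D+2)K(D-K)/D^2$, giving the lower bound in \eref{eq:StabShNormGen}. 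The upper bound is saturated at stabilizer states $|\Psi\>$, where $\hka(\Psi,\Delta) = (D+2)/(D+d)$: $\upsilon_1$ is linear and decreasing in $\hka(\Psi,\Delta)$ and thus maximal there, while a direct algebraic comparison using $K \leq D/d$ shows $\upsilon_2$ never exceeds the stabilizer value of $\upsilon_1$. The difference
\begin{equation*}
\upsilon_1 - \upsilon_2 = \frac{2(D-2K)}{D} + \frac{D(K+1)}{2(D+2)}\hka(\Psi,\scrT_\ns)
\end{equation*}
(obtained after simplification) yields the dichotomy conditions in terms of $\kappa(\Psi,\scrT_\ns)$ or $\hka(\Psi,\scrT_\ns)$ via \pref{pro:hkaka}; and when $\hka(\Psi,\scrT_\ns) \leq 0$, algebraic estimates on each $\upsilon_i$ using $K \leq D/d$ produce $\|\Ob_0\|^2_{\orb(\Psi)} \leq K+2$. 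The main technical obstacle is the cross-term evaluation that bridges the third-moment quantity $\bcaQ_{\orb(\Psi)}$ with the 2-design structure; the subsequent algebraic matching among the various equivalent forms of $\upsilon_i$ is lengthy but routine.
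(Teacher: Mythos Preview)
Your approach is correct and mirrors the paper's closely: the paper packages the computation of $6\bcaQ_{\orb(\Psi)}(\Ob)$ and $6\bcaQ_{\orb(\Psi)}(\Ob_0)$ into auxiliary \lsref{lem:QPsiObOb0}, \ref{lem:QPsiProj}, and \ref{lem:QPsiProjEig}, but the ingredients (the case split of \lref{lem:RTObStabProj}, the $S_3$-symmetry of $\hka$, the 2-design identities for the cross terms, and monotonicity of $\upsilon_1,\upsilon_2$ in $\hka(\Psi,\Delta)$) are the same. Your trace-identity argument for the lower bound $\max\{\upsilon_1,\upsilon_2\}\geq (D+2)K(D-K)/D^2$ is a clean alternative to the paper's crossing-point computation.

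One step you glossed over: from your difference formula, $\upsilon_1\geq\upsilon_2$ is equivalent to $\hka(\Psi,\scrT_\ns)\geq -4(D+2)(D-2K)/[D^2(K+1)]$, which still depends on $K$. The theorem's hypothesis $\hka(\Psi,\scrT_\ns)\geq -4(d-2)(D+2)/[D(D+d)]$ is $K$-free, so you need to observe that $(D-2K)/[D(K+1)]$ is decreasing in $K$ and equals $(d-2)/(D+d)$ at $K=D/d$; hence the stated hypothesis is exactly the worst case over $1\leq K\leq D/d$. The equivalence with the $\kappa(\Psi,\scrT_\ns)$ condition then follows from \eref{eq:hkaNskappaNs}. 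This is routine, but it is the missing link between your formula and the precise thresholds in the statement.
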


\begin{corollary}\label{cor:Stab1ShNormGen}
	Suppose $d$ is an odd prime,  $|\Psi\>\in\caH_d^{\otimes n}$, and $\Ob$ is the projector onto a stabilizer state in  $\caH_d^{\otimes n}$. Then 
	\begin{gather}
	6\|\bcaQ_{\orb(\Psi)}(\Ob)\|= \hka(\Psi,\Sigma(d)),  \label{eq:QPsiProjNormK1}\\
	\frac{D+1}{D+2}\hka(\Psi,\Sigma(d))-\frac{(D+1)(3D-1)}{D^2}\leq \|\Ob_0\|^2_{\orb(\Psi)}\leq \frac{D+1}{D+2}\hka(\Psi,\Sigma(d))-\frac{3D^2-1}{D^2}\leq  \hka(\Psi,\Sigma(d))-3,  \label{eq:Stab1ShNormGen}\\
	\hka(\Psi,\Sigma(d))-3-\frac{5}{D}\leq \frac{\|\Ob_0\|^2_{\orb(\Psi)}}{\|\Ob_0\|_2^2}\leq  \hka(\Psi,\Sigma(d))-3,	  \label{eq:Stab1ShNormRatioGen}
	\end{gather}
	If $\kappa(\Psi,\scrT_\ns)\geq -2(D^2-2dD+D+d)/D^2$, then  the lower bound in \eref{eq:Stab1ShNormGen} is saturated. 
\end{corollary}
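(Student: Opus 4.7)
The plan is to obtain the corollary by specializing Theorem \ref{thm:StabShNormGen} to the rank $K=1$ case and then simplifying, using the general lower bound on $\hka(\Psi,\Sigma(d))$ that follows from Lemma \ref{lem:kappaTLUB} and Proposition \ref{pro:hkaka}.

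First I would derive \eqref{eq:QPsiProjNormK1} directly from the identity $6\|\bcaQ_{\orb(\Psi)}(\Ob)\|=\tfrac{K+1}{2}\hka(\Psi,\Sigma(d))+(K^2-1)\hka(\Psi,\Delta)$ in \eqref{eq:QPsiProjNorm}, where the second term vanishes at $K=1$. (As a sanity check one can recover this formula directly from Lemma \ref{lem:RTObStabProj}: when $K=1$, each $\caR_\caT(\Ob)$ equals either $\bbI$ or $\Ob$, and the symmetry $\hka(\Psi,\caT_{\tau_{23}})=\hka(\Psi,\Delta)$ from Proposition \ref{pro:kappaSym} together with the formula in Proposition \ref{pro:hkaka} yields $6\bcaQ_{\orb(\Psi)}(\Ob)=2\hka(\Psi,\Delta)\bbI+[\hka(\Psi,\Sigma(d))-2\hka(\Psi,\Delta)]\Ob$, whose two eigenvalues are $2\hka(\Psi,\Delta)$ and $\hka(\Psi,\Sigma(d))$; the latter dominates thanks to the bound $\hka(\Psi,\Sigma(d))\geq 4(D+2)/(D+1)$ from Lemma \ref{lem:kappaTLUB}.)

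Next I would plug $K=1$ into the definitions of $\upsilon_1$ and $\upsilon_2$ in \eqref{eq:QPsiProj0eig} to obtain
\begin{equation*}
\upsilon_1=\hka(\Psi,\Sigma(d))-\tfrac{(D+2)(3D-1)}{D^2},\qquad \upsilon_2=-\tfrac{\hka(\Psi,\Sigma(d))}{D-1}+\tfrac{(D+2)(D^2+2D-1)}{D^2(D-1)}.
\end{equation*}
The exact formula $\|\Ob_0\|^2_{\orb(\Psi)}=\frac{D+1}{D+2}\max\{\upsilon_1,\upsilon_2\}$ from Theorem \ref{thm:StabShNormGen} then gives the lower bound in \eqref{eq:Stab1ShNormGen} by discarding $\upsilon_2$. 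For the upper bound, the key claim is that both $\upsilon_1$ and $\upsilon_2$ are bounded above by $\hka(\Psi,\Sigma(d))-\tfrac{(3D^2-1)(D+2)}{(D+1)D^2}$. For $\upsilon_1$ this reduces to the elementary inequality $(3D-1)(D+1)\geq 3D^2-1$; for $\upsilon_2$ it reduces, after collecting terms, to $\hka(\Psi,\Sigma(d))\geq 4(D+2)/(D+1)$, which is precisely the universal lower bound on $\hka(\Psi,\Sigma(d))$ coming from \eqref{eq:kappaSigLUB} via the identity $\hka(\Psi,\Sigma(d))=\tfrac{D+2}{D+d}\kappa(\Psi,\Sigma(d))$ in \eqref{eq:hkakaSig}. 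The chain of inequalities in \eqref{eq:Stab1ShNormGen} and the ratio estimates in \eqref{eq:Stab1ShNormRatioGen} then follow by routine algebra, dividing by $\|\Ob_0\|_2^2=(D-1)/D$.

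Finally, the saturation condition is obtained by identifying when $\upsilon_1\geq\upsilon_2$, i.e.\ when the lower bound in \eqref{eq:Stab1ShNormGen} is attained exactly. Setting $\upsilon_1\geq\upsilon_2$ and simplifying gives $\hka(\Psi,\Sigma(d))\geq 2(D+2)(2D-1)/D^2$; rewriting $\hka(\Psi,\Sigma(d))=\tfrac{D+2}{D+d}[6+\kappa(\Psi,\scrT_\ns)]$ via \eqref{eq:hkakaSig} and isolating $\kappa(\Psi,\scrT_\ns)$ yields exactly the stated threshold $-2(D^2-2dD+D+d)/D^2$. The main (very minor) subtlety to watch for is checking that this threshold is consistent with the lower bound on $\kappa(\Psi,\scrT_\ns)$ from Lemma \ref{lem:kappaTLUB}, so that the saturation regime is non-empty; everything else is straightforward bookkeeping once $K=1$ is substituted into Theorem \ref{thm:StabShNormGen}.
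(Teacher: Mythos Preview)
Your approach is essentially the same as the paper's: specialize Theorem~\ref{thm:StabShNormGen} to $K=1$, compute $\upsilon_1,\upsilon_2$ explicitly, and control $\max\{\upsilon_1,\upsilon_2\}$ using the universal lower bound $\hka(\Psi,\Sigma(d))\geq 4(D+2)/(D+1)$. The paper packages the bound on $\upsilon_2$ via Lemma~\ref{lem:QPsiProjEig} (obtaining $\upsilon_2\leq \upsilon_1+2(D+2)/[D(D+1)]$), but this amounts to the same computation you outline.

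There is one genuine gap, though. You say the ratio estimates in \eqref{eq:Stab1ShNormRatioGen} follow by ``routine algebra, dividing by $\|\Ob_0\|_2^2=(D-1)/D$.'' That works for the lower bound, but not for the upper bound: since $(D-1)/D<1$, dividing the upper bound $\frac{D+1}{D+2}\hka(\Psi,\Sigma(d))-\frac{3D^2-1}{D^2}$ from \eqref{eq:Stab1ShNormGen} by $(D-1)/D$ gives
\[
\frac{D(D+1)}{(D-1)(D+2)}\hka(\Psi,\Sigma(d))-\frac{3D^2-1}{D(D-1)}
=\hka(\Psi,\Sigma(d))+\frac{2\hka(\Psi,\Sigma(d))}{(D-1)(D+2)}-\frac{3D^2-1}{D(D-1)},
\]
and to conclude this is at most $\hka(\Psi,\Sigma(d))-3$ you must bound the middle term from above, which requires the \emph{upper} bound $\hka(\Psi,\Sigma(d))\leq 2(d+1)(D+2)/(D+d)$ from Lemma~\ref{lem:kappaTLUB} and Proposition~\ref{pro:hkaka}. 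The lower bound alone, which is all you invoke, does not suffice here. The paper uses exactly this upper bound to close the argument; you should add it to your list of ingredients.
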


Note that the lower bound in \eref{eq:Stab1ShNormGen} is saturated whenever $n\geq 2$, $\kappa(\Psi,\scrT_\ns)\geq 0$, and $\Ob$ is the projector onto a stabilizer state. 
In addition, $4\leq \hka(\Psi,\Sigma(d))=(D+2)\kappa(\Psi,\Sigma(d))/(D+d)\leq 2(d+1)$ according to \lref{lem:kappaTLUB} and \pref{pro:hkaka},  so \coref{cor:Stab1ShNormGen} means
\begin{align}
\|\Ob_0\|^2_{\orb(\Psi)}&= \hka(\Psi,\Sigma(d))-3+\caO(d/D)=\kappa(\Psi,\Sigma(d))-3 +\caO(d^2/D),
\\ \frac{\|\Ob_0\|^2_{\orb(\Psi)}}{\|\Ob_0\|_2^2}&=\hka(\Psi,\Sigma(d))-3+\caO(1/D)=\kappa(\Psi,\Sigma(d))-3 +\caO(d^2/D).	  
\end{align}
When $n$ is large, $\hka(\Psi,\Sigma(d))-3$ and $\kappa(\Psi,\Sigma(d))-3$ are very accurate approximations of $ \|\Ob_0\|^2_{\orb(\Psi)}$ and $ \|\Ob_0\|^2_{\orb(\Psi)}/\|\Ob_0\|_2^2$. This observation  highlights  the operational significance of $\hka(\Psi,\Sigma(d))$ and $\kappa(\Psi,\Sigma(d))$.

Next, we turn to the shadow norm of a general traceless observable $\Ob$ and the shadow norm of $\orb(\Psi)$. 
\Thref{thm:ShNormGen} and \coref{cor:ShNormGenPos} below are proved in \aref{app:thm:orbitShadowProof}.

\begin{theorem}\label{thm:ShNormGen}
	Suppose $d$ is an odd prime, $|\Psi\>\in\caH_d^{\otimes n}$, and $\Ob\in \caL_0\bigl(\caH_d^{\otimes n}\bigr)$. Then 
	\begin{align}
	&\|\Ob\|^2_{\orb(\Psi)}	
	\leq \frac{D+1}{D+2}\left\{[\hka(\Delta)+|\hka|(\scrT_\ns) ]\|\Ob\|_2^2+2\hka(\Delta)\|\Ob\|^2\right\}\nonumber\\
	&\hphantom{\|\Ob\|^2_{\orb(\Psi)}}\leq \left[1+\frac{D+1}{D+2}|\hka|(\scrT_\ns) \right]\|\Ob\|_2^2+2\|\Ob\|^2
	\leq (2d-3)\|\Ob\|_2^2+2\|\Ob\|^2\leq (2d-1)\|\Ob\|_2^2,   \label{eq:OShNormGen}\\[1ex]
	&	
	\|\orb(\Psi)\|_\sh\leq\frac{D+1}{D+2}[3\hka(\Delta)+|\hka|(\scrT_\ns)] 	
	\leq 3+\frac{D+1}{D+2}|\hka|(\scrT_\ns)
	\leq 2d-1, \label{eq:ShNormGen}\\	
	&\hka(\Sigma(d))-3-\frac{5}{D}\leq \|\orb(\Psi)\|_\sh
	\leq  \frac{D+1}{D+2}[|\hka|(\Sigma(d))-3\hka(\Delta)] \leq |\hka|(\Sigma(d))-3+\frac{d-5}{D+d}. \label{eq:ShNormGenB}
	\end{align}
	If  in addition $\Ob$ is  diagonal in some stabilizer basis, then 		
	\begin{align}\label{eq:DiagOShNormGen}
	\|\Ob\|^2_{\orb(\Psi)} &\leq \frac{D+1}{2(D+2)} \left\{\left[2\hka(\Delta)+|\hka|(\scrT_\ns)\right]\|\Ob\|_2^2 +\left[4\hka(\Delta)+|\hka|(\scrT_\ns)\right] \|\Ob\|^2\right\}\leq (d-1)\|\Ob\|_2^2 +d \|\Ob\|^2.
	\end{align}		
\end{theorem}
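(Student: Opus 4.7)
The plan is to expand $\bcaQ_{\orb(\Psi)}(\Ob)$ via \eref{eq:ObShNormPsiDef} and split the sum over $\caT\in\Sigma(d)$ into the symmetric part $\scrT_\sym$ and the non-symmetric part $\scrT_\ns$. On $\scrT_\sym$, \pref{pro:kappaSym} forces $\hka(\Psi,\caT)$ to equal the common value $\hka(\Psi,\Delta)$, and \lref{lem:RTisodefe} gives $R(\scrT_\sym)=6P_\sym$; the symmetric contribution therefore collapses to a single term $\hka(\Delta)\cdot 6\tr_{BC}[P_\sym(\bbI\otimes\Ob\otimes\Ob^\dag)]$, whose operator norm is bounded by $\hka(\Delta)(\|\Ob\|_2^2+2\|\Ob\|^2)$ via \eref{eq:PsymObTraceless} using that $\Ob$ is traceless. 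On $\scrT_\ns$, the triangle inequality together with the universal bound $\|\caR_\caT(\Ob)\|\leq\|\Ob\|_2^2$ from \lref{lem:RTO} yields a contribution of at most $|\hka|(\scrT_\ns)\|\Ob\|_2^2$. Combining both and multiplying by $(D+1)/(D+2)$ delivers the first inequality of \eref{eq:OShNormGen}.

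I would derive the second inequality from the bound $\hka(\Delta)\leq(D+2)/(D+1)$ in \eref{eq:hkakaTLUB}. The third inequality demands $\frac{D+1}{D+2}|\hka|(\scrT_\ns)\leq 2d-4$, and I anticipate this to be the main technical obstacle: a naive triangle inequality on $\hka(\caT)=\frac{D+2}{D-1}[\kappa(\caT)-c]$ with $c=\kappa(\Psi,\Sigma(d))/[2(D+d)]$ loses constants, and Cauchy--Schwarz against $\hka(\scrT_\ns,2)$ is also too weak. My approach would be to split $\scrT_\ns$ according to the sign of $\kappa(\caT)-c$, exploit the identity $\sum_{\caT\in\scrT_\ns}[\kappa(\caT)-c]=(D-1)\hka(\scrT_\ns)/(D+2)$ coming from \pref{pro:hkaka}, and combine the envelope $|\kappa|(\scrT_\ns)\leq 2d-4$ from \lref{lem:kappaTLUB} with the two-sided bounds on $c$ that follow from \eqsref{eq:kappaSigLUB}{eq:kappaNsLUB}. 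The fourth inequality is the trivial step $\|\Ob\|^2\leq\|\Ob\|_2^2$.

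For the diagonal refinement \eref{eq:DiagOShNormGen}, the key improvement is \lref{lem:RTOdiag}, which upgrades $\|\caR_\caT(\Ob)\|\leq\|\Ob\|_2^2$ to $\|\caR_\caT(\Ob)\|=\|\Ob\|^2$ for every $\caT\in\scrT_0\setminus\{\Delta\}$. The transposition symmetry $\scrT_1=\tau_{12}\scrT_0$ together with \pref{pro:kappaSym} forces $|\hka|(\scrT_\ns\cap\scrT_0)=|\hka|(\scrT_\ns\cap\scrT_1)=|\hka|(\scrT_\ns)/2$. Tallying this with the explicit values on $\scrT_\sym$ from \eref{eq:RTOS3norm} (tracelessness kills the $\mathds{1},\tau_{12},\tau_{13}$ contributions, leaving $\hka(\Delta)\|\Ob\|_2^2$ from $\caT_{\tau_{23}}$ and $2\hka(\Delta)\|\Ob\|^2$ from $\caT_\zeta,\caT_{\zeta^2}$) reproduces the coefficients $\frac{1}{2}[2\hka(\Delta)+|\hka|(\scrT_\ns)]$ and $\frac{1}{2}[4\hka(\Delta)+|\hka|(\scrT_\ns)]$ of $\|\Ob\|_2^2$ and $\|\Ob\|^2$ respectively, after the $(D+1)/(D+2)$ prefactor. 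The constant bound $(d-1)\|\Ob\|_2^2+d\|\Ob\|^2$ then drops out of the same two estimates $\hka(\Delta)\leq(D+2)/(D+1)$ and $\frac{D+1}{D+2}|\hka|(\scrT_\ns)\leq 2d-4$ that appeared above.

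Finally, \eref{eq:ShNormGen} is obtained by collapsing $\|\Ob\|^2\leq\|\Ob\|_2^2$ into the $\|\Ob\|_2^2$ coefficient of \eref{eq:OShNormGen} and taking the supremum over normalized traceless $\Ob$, then substituting $\hka(\Delta)\leq(D+2)/(D+1)$. To derive the upper bound in \eref{eq:ShNormGenB} I would rewrite $3\hka(\Delta)+|\hka|(\scrT_\ns)=|\hka|(\Sigma(d))-3\hka(\Delta)$ using the identity $|\hka|(\Sigma(d))=6\hka(\Delta)+|\hka|(\scrT_\ns)$ and tighten the $(D+1)/(D+2)$ prefactor using $\hka(\Delta)\geq(D+2)/(D+d)$ from \eref{eq:hkakaTLUB}. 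The lower bound in \eref{eq:ShNormGenB} is immediate from \coref{cor:Stab1ShNormGen} by choosing $\Ob$ to be the traceless part of a rank-one stabilizer projector.
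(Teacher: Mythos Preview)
Your overall strategy matches the paper's proof: split $\Sigma(d)=\scrT_\sym\sqcup\scrT_\ns$, use $\hka(\caT)=\hka(\Delta)$ on $\scrT_\sym$ together with \eref{eq:PsymObTraceless} (the paper uses the termwise formula \eref{eq:RTOS3norm} instead, but the outcome is identical), bound the $\scrT_\ns$ contribution via \lref{lem:RTO}, and for the diagonal case refine via \lref{lem:RTOdiag} and the $\scrT_0/\scrT_1$ symmetry from \pref{pro:kappaSym}. Your treatment of \eref{eq:ShNormGen} and \eref{eq:ShNormGenB} also agrees with the paper's, though for the last inequality in \eref{eq:ShNormGenB} you will also need the identity $\hka(\Sigma(d))=2(D+2)-2(D-1)\hka(\Delta)$ from \lref{lem:hkaDeltaNS}, not just the lower bound $\hka(\Delta)\geq(D+2)/(D+d)$.

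The one genuine gap is the third inequality in \eref{eq:OShNormGen}, namely $\frac{D+1}{D+2}|\hka|(\scrT_\ns)\leq 2d-4$. Your sign-splitting plan, using only $|\kappa|(\scrT_\ns)\leq 2d-4$, the sum identity, and bounds on $c=\kappa(\Sigma(d))/(2D+2d)$, cannot deliver the sharp constant: a direct estimate gives $\sum_{\scrT_\ns}|\kappa(\caT)-c|\leq (2d-4)(1+c)$, and after the $(D+2)/(D-1)$ prefactor this overshoots $2d-4$ by a multiplicative factor of order $1+(d+3)/D$. Knowing the signed sum $\hka(\scrT_\ns)$ does not rescue this, because with only $\kappa(\caT)\geq -1$ a single very negative term can inflate $|\hka|(\scrT_\ns)$ while leaving the sum fixed. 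The paper's resolution is much simpler: it invokes the uniform two-sided bound $|\hka(\caT)|\leq (D+2)/(D+1)$ for every $\caT\in\Sigma(d)$ (packaged in \lref{lem:hkaLUB}), which multiplied by $|\scrT_\ns|=2d-4$ gives the result in one line. The nontrivial half is the lower bound $\hka(\caT)\geq -(D+2)/(D+1)$, equivalently $\kappa(\caT)\geq -(D-3)/(D+1)$; this is \lref{lem:hkaLB} and is proved by a separate spectral argument exploiting that $R(\scrT_\iso)/|O_3(d)|$ is the projector onto the common $+1$ eigenspace of the unitaries $R(O)$, $O\in O_3(d)$, combined with the lower bound on $\kappa(\Sigma(d))$ from \lref{lem:kappaTLUB}.
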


\begin{corollary}\label{cor:ShNormGenPos}
	Suppose  $\hka(\Psi,\caT)\geq 0$ for all $\caT\in \Sigma(d)$ in \thref{thm:ShNormGen}. Then 
	\begin{gather}
	\frac{D+d}{D+1}\|\Ob\|^2_{\orb(\Psi)}	\leq [1+\kappa(\scrT_\ns)]\|\Ob\|_2^2+2\|\Ob\|^2\leq (2d-3)\|\Ob\|_2^2+2\|\Ob\|^2\leq (2d-1)\|\Ob\|_2^2,   \label{eq:OShNormGenPos}\\[1ex]
	\frac{D+d}{D+1}\|\orb(\Psi)\|_\sh\leq  3+\kappa(\scrT_\ns)  \leq (2d-1). \label{eq:ShNormGenPos}
	\end{gather}	
	If  in addition $\Ob$ is  diagonal in some stabilizer basis, then 		
	\begin{align}\label{eq:DiagOShNormGenPos}
	\frac{D+d}{D+1}\|\Ob\|^2_{\orb(\Psi)} &\leq  \frac{1}{2}\left[2+\kappa(\scrT_\ns)\right]\|\Ob\|_2^2 +\frac{1}{2}\left[4+\kappa(\scrT_\ns)\right] \|\Ob\|^2\leq (d-1)\|\Ob\|_2^2 +d \|\Ob\|^2.
	\end{align}		
\end{corollary}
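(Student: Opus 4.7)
The plan is to upgrade \thref{thm:ShNormGen} using the two features of the sign assumption. First, $\hka(\Psi,\caT)\geq 0$ lets us replace $|\hka|(\scrT_\ns)$ by $\hka(\Psi,\scrT_\ns)$ in both \eref{eq:OShNormGen} and \eref{eq:DiagOShNormGen}. Second, \pref{pro:hkaka} converts $\hka$-quantities to $\kappa$-quantities: combining $\kappa(\Psi,\Delta)=1$ (from \pref{pro:kappaSym}) with \eref{eq:hkakaTLUB} and \eref{eq:hkakaSig}, one writes $\hka(\Psi,\Delta)$ and $\hka(\Psi,\scrT_\ns)$ as explicit linear functions of the single parameter $a:=\kappa(\Psi,\scrT_\ns)$.

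Substituting these expressions into the theorem bound and subtracting the claimed right-hand side of \eref{eq:OShNormGenPos} yields a linear combination of $\|\Ob\|_2^2$ and $\|\Ob\|^2$ whose coefficients turn out to be $5[a-2(d-2)]$ and $-2[a-2(d-2)]$ respectively, so the whole difference factors as $[a-2(d-2)]\bigl(5\|\Ob\|_2^2-2\|\Ob\|^2\bigr)$. Because $a\leq 2(d-2)$ by \eref{eq:kappaNsLUB} and $5\|\Ob\|_2^2-2\|\Ob\|^2\geq 3\|\Ob\|^2\geq 0$ via $\|\Ob\|\leq \|\Ob\|_2$, this difference is nonpositive, proving the first inequality in \eref{eq:OShNormGenPos}. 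The diagonal bound \eref{eq:DiagOShNormGenPos} is handled analogously starting from \eref{eq:DiagOShNormGen}; here the coefficient of each of $\|\Ob\|_2^2$ and $\|\Ob\|^2$ in the difference reduces individually to a nonpositive multiple of $[a-2(d-2)]$, so no trade-off between the two norms is needed.

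The remaining inequalities in \eref{eq:OShNormGenPos} and \eref{eq:DiagOShNormGenPos} are mechanical: they follow from $a\leq 2(d-2)$ (for the middle inequalities) and $\|\Ob\|^2\leq \|\Ob\|_2^2$ (for the final ones). For the shadow-norm estimate \eref{eq:ShNormGenPos}, one supremizes \eref{eq:OShNormGenPos} over Hermitian traceless $\Ob$ with $\|\Ob\|_2=1$; since $\|\Ob\|^2\leq 1$ in that range, the right-hand side collapses to $3+a\leq 2d-1$. The main obstacle is the algebraic bookkeeping needed to recognize and verify the factorization in the non-diagonal case; everything else is routine manipulation of the linear identities from \pref{pro:hkaka}.
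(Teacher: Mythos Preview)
Your proposal is correct and rests on the same ingredients as the paper: start from the first inequality in \eref{eq:OShNormGen} (respectively \eref{eq:DiagOShNormGen}), replace $|\hka|(\scrT_\ns)$ by $\hka(\scrT_\ns)$ under the sign hypothesis, and then convert $\hka$-data to $\kappa$-data via \pref{pro:hkaka}. The execution differs slightly: the paper rewrites
\[
[\hka(\Delta)+\hka(\scrT_\ns)]\|\Ob\|_2^2+2\hka(\Delta)\|\Ob\|^2
=[\hka(\scrT_\ns)+3\hka(\Delta)]\|\Ob\|_2^2-2\hka(\Delta)\bigl(\|\Ob\|_2^2-\|\Ob\|^2\bigr)
\]
and then invokes the auxiliary inequality $j\hka(\Delta)+\hka(\scrT_\ns)\leq \frac{D+2}{D+d}[j+\kappa(\scrT_\ns)]$ (\lref{lem:hkakaNsIneq}) together with the lower bound $\hka(\Delta)\geq \frac{D+2}{D+d}$, whereas you substitute the explicit linear expressions for $\hka(\Delta)$ and $\hka(\scrT_\ns)$ in $a=\kappa(\scrT_\ns)$ and observe that the difference factors as a positive multiple of $[a-2(d-2)]\bigl(5\|\Ob\|_2^2-2\|\Ob\|^2\bigr)$. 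Your factorization is exactly the identity underlying \lref{lem:hkakaNsIneq} specialized to this situation, so the two arguments are equivalent; your version has the minor advantage of making the equality case $a=2(d-2)$ (i.e., $|\Psi\rangle$ a stabilizer state) visibly sharp.
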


Surprisingly, $\|\orb(\Psi)\|_\sh$ is $\caO(d)$ irrespective of the  number $n$ of qudits and the Clifford orbit chosen. If  $\kappa(\Psi,\caT)\geq 0$ for $\caT\in \Sigma(d)$ (see \cref{con:kappaTLUB}),
then \thref{thm:ShNormGen} and \eref{eq:hkakabigOpos} further imply that
\begin{align}
\|\orb(\Psi)\|_\sh
=\hka(\Sigma(d)-3+\caO(d^2/D)=\kappa(\Sigma(d)-3+\caO(d^2/D). 
\end{align}
If  in addition $\hka(\Psi,\caT)\geq 0$ for $\caT\in \Sigma(d)$,
then $|\hka|(\Sigma(d))=\hka(\Sigma(d))$ and we have
\begin{align}
\|\orb(\Psi)\|_\sh
=\hka(\Sigma(d)-3+\caO(d/D). 
\end{align}
These results further highlight the operational significance of $\hka(\Sigma(d))$ and $\kappa(\Sigma(d))$.

\section{\label{sec:Balance}Clifford orbits based on balanced ensembles}

Let $\scrE=\{|\Psi_j\>,p_j\}_j$ be an ensemble of pure states in $\caH_d^{\otimes n}$, where $\{p_j\}_j$ forms a probability distribution. Let $\orb(\scrE)$ be the ensemble constructed from  $\orb(\Psi_j)$ according to the  probability  distribution $\{p_j\}_j$. Then  the third normalized moment operator of $\orb(\scrE)$ reads
\begin{align}\label{eq:MomentQEbar}
\bQ(\orb(\scrE))=\sum_j p_j \bQ(\orb(\Psi_j))
=\frac{1}{6}\sum_{\caT\in \Sigma(d)}\hka(\scrE,\caT)R(\caT),
\end{align}
where $\bQ(\orb(\scrE))$ and $\bQ(\orb(\Psi_j))$ are abbreviations of $\bQ_3(\orb(\scrE))$ and $\bQ_3(\orb(\Psi_j))$, respectively,  and
\begin{align}
\kappa(\scrE,\caT)&=\sum_j p_j\kappa(\Psi_j,\caT),\quad \hka(\scrE,\caT)=\sum_j p_j\hka(\Psi_j,\caT). \label{eq:kahkaScrE}
\end{align}
Most results on $\kappa(\Psi,\caT)$ and $\hka(\Psi,\caT)$, including \psref{pro:kappaSym}, \ref{pro:hkaka} and \lref{lem:kappaTLUB} for example,  also apply to $\kappa(\scrE,\caT)$ and $\hka(\scrE,\caT)$.

The ensemble $\scrE=\{|\Psi_j\>,p_j\}_j$ is \emph{balanced} if $\kappa(\scrE,\caT)$ for all $\caT\in \scrT_\ns$ are equal to each other. In this case, $\hka(\scrE,\caT)$ for all $\caT\in \scrT_\ns$ are also equal to each other according to \pref{pro:hkaka}. 
In view of this fact, $\kappa(\scrE,\caT)$ and $\hka(\scrE,\caT)$ for $\caT\in \scrT_\ns$ can be abbreviated as $\kappa(\scrE)$ and $\hka(\scrE)$, respectively, when there is no danger of confusion. For example, any ensemble composed of one or more stabilizer states is balanced thanks to \pref{pro:kappaSym}.  A balanced ensemble $\scrE$ is interesting because the third moment of $\orb(\scrE)$ is quite simple, which may serve as a benchmark for understanding more general ensembles. In addition, such ensembles are very useful to constructing 3-designs as we shall see in \sref{sec:3designExact}. The following lemma is proved in \aref{app:BalanceProofs}.

\begin{lemma}\label{lem:kahkaBalance}
	Suppose $d$ is an odd prime   and $\scrE$ is a balanced ensemble on  $\caH_d^{\otimes n}$. Then 
	\begin{gather}
	\kappa(\scrE,\scrT_\ns)=2(d-2)\kappa(\scrE),\quad \kappa(\scrE,\Sigma(d))=6+2(d-2)\kappa(\scrE), 
	\label{eq:kaSigBalance}\\
	\hka(\scrE)=\frac{(D+2)^2}{(D-1)(D+d)}\left[\kappa(\scrE)-\frac{3}{D+2}\right],\quad   \hka(\scrE,\Delta)=1-\frac{(d-2)(D+2)}{(D-1)(D+d)}\left[\kappa(\scrE)-\frac{3}{D+2}\right], \label{eq:hkaBalance}   \\ \hka(\scrE,\scrT_\ns)=\frac{2(d-2)(D+2)^2}{(D-1)(D+d)}\left[\kappa(\scrE)-\frac{3}{D+2}\right], 
	\quad
	\hka(\scrE,\Sigma(d))                               =6+\frac{D-1}{D+2}\hka(\scrE,\scrT_\ns), \label{eq:hkaSigBalance}  \\
	|\hka|(\scrE,\Sigma(d)) 
	=6+\frac{2(d-2)(D+2)(D+2-3s)}{(D-1)(D+d)}\left|\kappa(\scrE)-\frac{3}{D+2}\right|\leq 6+\frac{20d}{9}\left|\kappa(\scrE)-\frac{3}{D+2}\right|,  \label{eq:hkaSigAbsBalance}
	\end{gather}
	where $s=1$ if  $\hka(\scrE)\geq 0$ and $s=-1$ if $\hka(\scrE)< 0$. If in addition $d\geq 7$,  or $n\geq 2$, or $\kappa(\scrE)\geq 3/(D+2)$, then the coefficient $20/9$ in \eref{eq:hkaSigAbsBalance} can be replaced by 2.
\end{lemma}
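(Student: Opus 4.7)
The plan is to reduce all identities to two elementary inputs: the balance hypothesis, together with \pref{pro:kappaSym} (which forces $\kappa(\scrE,\caT)=1$ on the six elements of $\scrT_\sym$), and the ensemble-averaged version of \pref{pro:hkaka}, which turns every $\hka$ into an affine function of $\kappa$. Once these are in place, the combinatorial counts $|\scrT_\sym|=6$ and $|\scrT_\ns|=|\Sigma(d)|-6=2(d-2)$ from \eref{eq:Sigma33DefectNum} collapse every displayed sum to two explicit pieces, one from each double coset.

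First I would dispatch \eref{eq:kaSigBalance}: summing $\kappa(\scrE,\caT)=\kappa(\scrE)$ over the $2(d-2)$ elements of $\scrT_\ns$ gives $\kappa(\scrE,\scrT_\ns)=2(d-2)\kappa(\scrE)$, and adding the six-fold contribution $\kappa(\scrE,\scrT_\sym)=6$ yields $\kappa(\scrE,\Sigma(d))=6+2(d-2)\kappa(\scrE)$. Plugging this into the averaged form of \eref{eq:hkakaTLUB}, namely $\hka(\scrE,\caT)=\frac{D+2}{D-1}\bigl[\kappa(\scrE,\caT)-\kappa(\scrE,\Sigma(d))/(2D+2d)\bigr]$, and using the algebraic identity $(D+d)-(d-2)=D+2$ to simplify the numerator, produces the formulas for $\hka(\scrE)$, $\hka(\scrE,\Delta)$, and (after summing over $\scrT_\ns$) $\hka(\scrE,\scrT_\ns)$ in \eqsref{eq:hkaBalance}{eq:hkaSigBalance}. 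The remaining identity $\hka(\scrE,\Sigma(d))=6+\frac{D-1}{D+2}\hka(\scrE,\scrT_\ns)$ is just \eref{eq:hkakaSig} averaged over $\scrE$.

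The bulk of the work concerns \eref{eq:hkaSigAbsBalance}. The key observation is that the pure-state lower bound $\hka(\Psi,\caT)\geq(D+2)/(D+d)$ from \eref{eq:hkakaTLUB} is strictly positive and is preserved by convex combinations, so $\hka(\scrE,\caT)>0$ for every $\caT\in\scrT_\sym$; hence $|\hka|(\scrE,\scrT_\sym)=6\hka(\scrE,\Delta)$, while $|\hka|(\scrE,\scrT_\ns)=2(d-2)|\hka(\scrE)|$ since all $2(d-2)$ terms on $\scrT_\ns$ share the sign of $\hka(\scrE)$. Writing $x:=\kappa(\scrE)-3/(D+2)$ and noting that $\hka(\scrE)$ is a positive multiple of $x$ (so $s=\mathrm{sgn}(x)$), the sum of the two contributions collapses to
\begin{equation*}
6+\frac{2(d-2)(D+2)}{(D-1)(D+d)}\bigl[(D+2)|x|-3sx\bigr],
\end{equation*}
and a trivial case split on $s\in\{\pm 1\}$ turns the bracket into $(D+2-3s)|x|$, which is the claimed equality.

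The only genuinely non-mechanical step is the final numerical inequality. The worst case is $s=-1$, with coefficient $2(d-2)(D+2)(D+5)/[(D-1)(D+d)]$; the bound $\leq 20d/9$ reduces, after clearing denominators, to $9(d-2)(d+2)(d+5)\leq 20d^2(d-1)$ at $n=1$, which is tight at $d=3$ (both sides equal $360$), precisely pinning down the awkward constant $20/9$. For fixed $d$ the left side of the unreduced form is decreasing in $D$, so the inequality propagates to all $n\geq 2$ and to all larger primes. For the refinement from $20/9$ to $2$: when $s=1$ the coefficient simplifies to $2(d-2)(D+2)/(D+d)$, and $(d-2)(D+2)\leq d(D+d)$ reduces to $(d-1)^2+3+2D\geq 0$, which holds unconditionally; in the $s=-1$ case one directly checks $(d-2)(D+2)(D+5)\leq d(D-1)(D+d)$ at $n=1$ for $d\geq 7$, and monotonicity in $D$ handles all $n\geq 2$ for every $d\geq 3$. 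The main obstacle is thus purely notational: one must track the factors $(D-1)(D+d)$ and $(D+2)^2$ carefully and repeatedly exploit $(D+d)-(d-2)=D+2$ in order to keep the manipulations compact.
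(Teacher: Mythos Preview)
Your approach is essentially identical to the paper's: both reduce everything to \pref{pro:kappaSym} and the ensemble-averaged \pref{pro:hkaka}, then split $|\hka|(\scrE,\Sigma(d))$ into the positive $\scrT_\sym$ piece $6\hka(\scrE,\Delta)$ and the sign-uniform $\scrT_\ns$ piece $2(d-2)|\hka(\scrE)|$. The paper in fact gives less justification for the numerical inequality than you do, simply asserting $(d-2)(D+2)(D+5)/[d(D-1)(D+d)]\leq 10/9$ without further comment.

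There is one slip in your argument for that inequality: the claim that this ratio is decreasing in $D$ for fixed $d$ is false once $d\geq 11$ (e.g.\ at $d=11$ the ratio rises from about $0.77$ at $D=11$ toward the limit $(d-2)/d=9/11$ as $D\to\infty$). The easy patch is a case split. For $d\in\{3,5,7\}$ your monotonicity claim \emph{is} correct---writing $g(D)=(D+2)(D+5)/[(D-1)(D+d)]$, the sign of $g'$ is that of $(d-8)D^2-(2d+20)D-(17d-10)$, whose coefficients are all negative for $d\leq 7$---so checking $D=d$ suffices and your tightness observation at $d=3$ goes through. For primes $d\geq 11$ one has $(D+2)(D+5)<(D-1)(D+d)$ whenever $D\geq d$: this reduces to $(d-8)D>d+10$, and $(d+10)/(d-8)\leq 7<d$; hence the ratio is below $(d-2)/d<1<10/9$ outright. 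The same split repairs your argument for the refinement to coefficient~$2$.
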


The following theorem determines the third normalized frame potential of a balanced ensemble; it is a simple corollary of \eqsref{eq:Phi3psi}{eq:hkaBalance} given that $|\scrT_\ns|=2(d-2)$.
\begin{theorem}\label{thm:Phi3Balance}
	Suppose $d$ is an odd prime  and $\scrE$ is a balanced ensemble on $\caH_d^{\otimes n}$. Then 
	\begin{align} \bar{\Phi}_3(\orb(\scrE))=1+\frac{(d-2)(D+2)^2}{3(D-1)(D+d)}\left[\kappa(\scrE)-\frac{3}{D+2}\right]^2	\leq 1+\frac{d}{3}\left[\kappa(\scrE)-\frac{3}{D+2}\right]^2.	
	\end{align}			
\end{theorem}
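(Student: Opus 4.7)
The plan is to apply the identity of Theorem~\ref{thm:Phi3LUB} at the ensemble level and then use the balanced hypothesis to collapse the two $\scrT_\ns$-sums in \eref{eq:Phi3psi} into a single squared quantity, which is finally re-expressed via $\kappa(\scrE)$ by means of \eref{eq:hkaBalance}.

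First I would observe that, by linearity of the moment operator in the state,
\[ \bQ(\orb(\scrE)) = \sum_j p_j\,\bQ(\orb(\Psi_j)) = \tfrac{1}{6}\sum_{\caT \in \Sigma(d)} \hka(\scrE,\caT)\,R(\caT), \]
so the derivation of \eref{eq:Phi3psi}---which only uses this frame expansion together with the Gram-matrix data of Proposition~\ref{pro:RTT1T2tr}---applies verbatim with $\hka(\Psi,\cdot)$ replaced by $\hka(\scrE,\cdot)$. The ensemble-level analogue reads
\[ \bar{\Phi}_3(\orb(\scrE)) = 1 + \frac{D-1}{6(D+2)}\,\hka(\scrE,\scrT_\ns,2) + \frac{D-1}{12(D+2)^2}\,\hka^2(\scrE,\scrT_\ns). \]

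Next, the balanced hypothesis together with Proposition~\ref{pro:hkaka} forces $\hka(\scrE,\caT)$ to take a single common value $\hka(\scrE)$ on $\scrT_\ns$. Combined with $|\scrT_\ns| = 2(d-2)$ from \eref{eq:Sigma33DefectNum}, this yields $\hka(\scrE,\scrT_\ns,2) = 2(d-2)\hka(\scrE)^2$ and $\hka^2(\scrE,\scrT_\ns) = 4(d-2)^2\hka(\scrE)^2$. Substituting and pulling out the common factor $(D-1)(d-2)\hka(\scrE)^2/[3(D+2)^2]$, one uses $(D+2)+(d-2) = D+d$ to obtain
\[ \bar{\Phi}_3(\orb(\scrE)) - 1 = \frac{(d-2)(D-1)(D+d)}{3(D+2)^2}\,\hka(\scrE)^2. \]

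Finally, the first identity in \eref{eq:hkaBalance} gives $\hka(\scrE) = (D+2)^2[\kappa(\scrE) - 3/(D+2)]/[(D-1)(D+d)]$; squaring and substituting, the factors of $(D-1)(D+d)$ cancel cleanly to produce the claimed equality. The stated upper bound is equivalent to $(d-2)(D+2)^2 \le d(D-1)(D+d)$, which after expansion reduces to $2D^2 + (d^2 - 5d + 8)D \ge d^2 + 4d - 8$; this is comfortably satisfied for every odd prime $d \ge 3$ and every $D = d^n \ge d$, since $d^2 - 5d + 8 > 0$ on that range. No substantive obstacle appears: the whole argument is the observation that ``balanced'' is precisely the condition that assembles the two $\scrT_\ns$-terms of \eref{eq:Phi3psi} into a perfect square of $\hka(\scrE)$, with the rest being routine algebra.
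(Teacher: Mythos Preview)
Your proposal is correct and follows essentially the same route as the paper, which states the result as ``a simple corollary of \eqsref{eq:Phi3psi}{eq:hkaBalance} given that $|\scrT_\ns|=2(d-2)$.'' You have simply spelled out the substitution and the algebraic simplification that the paper leaves implicit.
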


According to \eref{eq:MomentQEbar}, the third normalized moment operator $\bQ(\orb(\scrE))$ can be expressed as 
\begin{align}
\bQ(\orb(\scrE))&=\frac{1}{6}\sum_{\caT\in \Sigma(d)}\hka(\scrE,\caT)R(\caT)=\frac{1}{6}\{[\hka(\scrE,\Delta)-\hka(\scrE)]R(\scrT_\sym)+ \hka(\scrE) R(\Sigma(d)) \}\nonumber\\
&=[\hka(\scrE,\Delta)-\hka(\scrE)]P_\sym+ \frac{1}{6}\hka(\scrE) R(\scrT_\iso)+ \frac{1}{6}\hka(\scrE) R(\scrT_\defe), \label{eq:MomentBal}
\end{align}
given that $\Sigma(d)=\scrT_\sym\sqcup\scrT_\ns=\scrT_\iso\sqcup\scrT_\defe$ and
\begin{align}
R(\scrT_\sym)=6P_\sym,\quad  R(\Sigma(d))=R(\scrT_\iso)+R(\scrT_\defe),\quad \hka(\scrE,\caT)=\begin{cases}
\hka(\scrE,\Delta) & \caT\in \scrT_\sym,\\ 
\hka(\scrE) & \caT\in \scrT_\ns.
\end{cases}
\end{align}
Thanks to \lref{lem:RTisodefe}, both  $R(\scrT_\iso)$ and $R(\scrT_\defe)$ are proportional to projectors, and
the support of 
$R(\scrT_\defe)$ is contained in the support of $R(\scrT_\iso)$, which in  turn is contained in the support of $P_\sym$. Based on these observations, we can determine the eigenvalues of $\bQ(\orb(\scrE))$ together with their  multiplicities as summarized in \thref{thm:Moment3Balance} and \tref{tab:Qbalance}; see \aref{app:BalanceProofs} for more details. If $\hka(\scrE)=0$, that is, $\kappa(\scrE)= 3/(D+2)$, then $\bQ(\orb(\scrE))=P_\sym$  and $\orb(\scrE)$ is a 3-design. Otherwise, $\bQ(\orb(\scrE))$ may have  two or three distinct eigenvalues within the tripartite symmetric subspace $\Sym_3\bigl(\caH_d^{\otimes n}\bigr)$ depending on the value of $d$.

\begin{theorem}\label{thm:Moment3Balance}
	Suppose $d$ is an odd prime and  $\scrE$ is a balanced ensemble on $\caH_d^{\otimes n}$. Then the eigenvalues (with multiplicities) of the third normalized moment operator $\bQ(\orb(\scrE))$ within $\Sym_3\bigl(\caH_d^{\otimes n}\bigr)$ are given in \tref{tab:Qbalance}. The operator norms of $\bQ(\orb(\scrE))$ and $\bQ(\orb(\scrE))-P_\sym$ are also given in  \tref{tab:Qbalance}.			
\end{theorem}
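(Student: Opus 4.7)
The plan is to read the eigenvalues of $\bQ(\orb(\scrE))$ directly off the decomposition in \eref{eq:MomentBal}, using the structural information about $R(\scrT_\iso)$ and $R(\scrT_\defe)$ already established in \lref{lem:RTisodefe} and summarized in \tref{tab:RisoRdefe}. First I would recall that $R(\scrT_\sym)=6P_\sym$, so \eref{eq:MomentBal} rewrites $\bQ(\orb(\scrE))$ as a linear combination of $P_\sym$, $R(\scrT_\iso)$, and $R(\scrT_\defe)$. By \lref{lem:RTisodefe}, all three are proportional to commuting projectors whose supports are nested:
\begin{equation*}
\supp R(\scrT_\defe)\leq \supp R(\scrT_\iso)\leq \Sym_3\bigl(\caH_d^{\otimes n}\bigr)=\supp P_\sym .
\end{equation*}
Hence $\bQ(\orb(\scrE))$ is simultaneously diagonalizable with at most three distinct eigenvalues inside $\Sym_3\bigl(\caH_d^{\otimes n}\bigr)$, one on each layer of the nested filtration.

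The next step is to compute these three eigenvalues. Writing $R(\scrT_\iso)=\|R(\scrT_\iso)\|\,\Pi_\iso$ and $R(\scrT_\defe)=\|R(\scrT_\defe)\|\,\Pi_\defe$, with $\Pi_\iso$ and $\Pi_\defe$ the corresponding projectors, I would substitute into \eref{eq:MomentBal} and evaluate on each stratum. On $\Sym_3\bigl(\caH_d^{\otimes n}\bigr)\setminus\supp R(\scrT_\iso)$ the eigenvalue is $\hka(\scrE,\Delta)-\hka(\scrE)$; on $\supp R(\scrT_\iso)\setminus\supp R(\scrT_\defe)$ one adds $\tfrac{1}{6}\hka(\scrE)\|R(\scrT_\iso)\|$; on $\supp R(\scrT_\defe)$ one adds $\tfrac{1}{6}\hka(\scrE)\|R(\scrT_\defe)\|$ as well. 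The multiplicities follow from the ranks recorded in \tref{tab:RisoRdefe} via $\rk(\Pi_\iso\setminus\Pi_\defe)=\rk R(\scrT_\iso)-\rk R(\scrT_\defe)$ and $\rk(P_\sym\setminus\Pi_\iso)=\pi_{[3]}-\rk R(\scrT_\iso)$, with $\pi_{[3]}=\binom{D+2}{3}$. The case $d=2\mmod 3$ collapses because $R(\scrT_\defe)=0$, leaving only two eigenvalues; when $d=3$ or $d=1\mmod 3$, $R(\scrT_\defe)\neq 0$ and all three strata contribute.

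Next I would simplify the eigenvalues into the form listed in \tref{tab:Qbalance}.  \Lref{lem:kahkaBalance} expresses both $\hka(\scrE,\Delta)$ and $\hka(\scrE)$ explicitly in terms of $\kappa(\scrE)$, so the combination $\hka(\scrE,\Delta)-\hka(\scrE)$ and the corrections $\tfrac{1}{6}\hka(\scrE)\|R(\scrT_\iso)\|$, $\tfrac{1}{6}\hka(\scrE)\|R(\scrT_\defe)\|$ reduce to closed-form rational functions of $D$, $d$, and $\kappa(\scrE)$ in each of the three congruence classes of $d$ modulo $3$. The case $\hka(\scrE)=0$, equivalently $\kappa(\scrE)=3/(D+2)$, gives $\bQ(\orb(\scrE))=P_\sym$ with a single eigenvalue $1$ of multiplicity $\pi_{[3]}$, consistent with $\orb(\scrE)$ being a $3$-design.

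Finally, the operator norms of $\bQ(\orb(\scrE))$ and $\bQ(\orb(\scrE))-P_\sym$ are obtained by taking the maximum absolute value over the (at most three) eigenvalues; for $\bQ(\orb(\scrE))-P_\sym$ one simply subtracts $1$ from each eigenvalue before taking the maximum. The main obstacle I anticipate is purely bookkeeping: organizing the three congruence classes of $d\pmod 3$ and keeping the simplified formulas compatible with the entries of \tref{tab:Qbalance}, especially the sign analysis needed to select which eigenvalue dominates in the operator norm, where the sign of $\hka(\scrE)$ (equivalently, whether $\kappa(\scrE)\lessgtr 3/(D+2)$) must be tracked carefully.
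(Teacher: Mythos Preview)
Your proposal is correct and follows essentially the same route as the paper: start from \eref{eq:MomentBal}, use \lref{lem:RTisodefe} to decompose $\Sym_3(\caH_d^{\otimes n})$ into nested strata on which $\bQ(\orb(\scrE))$ is constant, read off the eigenvalues and multiplicities from \tref{tab:RisoRdefe}, and simplify via \lref{lem:kahkaBalance}. One small correction: for $d=3$ one has $\supp R(\scrT_\iso)=\Sym_3(\caH_d^{\otimes n})$ (since $O_3(3)=S_3$ and hence $\rk R(\scrT_\iso)=\pi_{[3]}$), so the outermost stratum is empty and only two eigenvalues appear, as reflected in \tref{tab:Qbalance}; you would catch this immediately from the rank formulas, so it does not affect the argument.
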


\begin{table*}
	\renewcommand{\arraystretch}{1.8}
	\caption{\label{tab:Qbalance} Eigenvalues of the third normalized moment operator $\bQ=\bQ(\orb(\scrE))$ (within the tripartite symmetric subspace) associated with the balanced ensemble $\scrE$.  Three potential eigenvalues are denoted by  $\lambda_1, \lambda_2, \lambda_3$, which are arranged in decreasing order when $\kappa(\scrE)> 3/(D+2)$ and increasing order when $\kappa(\scrE)< 3/(D+2)$.   Their multiplicities, denoted by $m_1, m_2, m_3$, coincide with the counterparts in \tref{tab:bQnd3}. In the special case $\kappa(\scrE)= 3/(D+2)$, all eigenvalues are equal to 1.}	
	\begin{math}
	\begin{array}{c|ccc}
	\hline\hline
	& d=3 &d=1 \mmod 3& d=2\mmod 3 \\
	\hline		
	\lambda_1 &  \frac{(D+2)}{3}\kappa(\scrE) & \frac{D+2}{3}\kappa(\scrE)&   \frac{(d-2)(D+2)}{3(D+d)}\kappa(\scrE)+\frac{D+2}{D+d}\\	
	m_1  & \frac{D+1}{2} & D &\frac{D(D+1)(D+d)}{2d+2}\\
	\lambda_2 &  \hka(\scrE,\Delta) &	
	\frac{D+2}{D+d}\bigl[\frac{dD-4D-d-2}{3(D-1)} \kappa(\scrE)+  \frac{D+1}{D-1}\bigr]
	&\frac{D+2}{D-1}[1-\kappa(\scrE)]\\
	m_2 & \frac{D(D+1)(D+2)}{6}-\frac{D+1}{2} & \frac{D(D+1)(D+d-2)}{2d-2}-D &   
	\frac{D(D+1)(D+2)}{6}-\frac{D(D+1)(D+d)}{2d+2}
	\\
	\lambda_3 &/ & \frac{D+2}{D-1}[1-\kappa(\scrE)] &/  \\
	m_3  &/ & \frac{D(D+1)(D+2)}{6}-\frac{D(D+1)(D+d-2)}{2d-2} &/  \\[1ex]
	\|\bQ\| &\begin{cases}
	\lambda_1 &\mbox{if  } \kappa(\scrE)\geq 3/(D+2)\\
	\lambda_2 & \mbox{if  } \kappa(\scrE)\leq 3/(D+2)\\
	\end{cases} 
	& \begin{cases}
	\lambda_1 &\mbox{if  } \kappa(\scrE)\geq 3/(D+2)\\
	\lambda_3 & \mbox{if  } \kappa(\scrE)\leq 3/(D+2)\\
	\end{cases}  &\begin{cases}
	\lambda_1 &\mbox{if  } \kappa(\scrE)\geq 3/(D+2)\\
	\lambda_2 & \mbox{if  } \kappa(\scrE)\leq 3/(D+2)\\
	\end{cases}  \\[2.5ex]	
	\|\bQ-P_\sym\| &|\lambda_1-1|	& |\lambda_1-1| &   \begin{cases} |\lambda_2-1| &\mbox{if  } d=5\\
	|\lambda_1-1| &\mbox{if  } d\geq 11
	\end{cases} \\[1ex]	\hline\hline
	\end{array}	
	\end{math}	
\end{table*}

Next, we clarify the shadow norm of the  ensemble $\scrE$. The squared shadow norms of stabilizer projectors can be determined by virtue of \thref{thm:StabShNormGen} and \coref{cor:Stab1ShNormGen} with $\Psi$ replaced by $\scrE$. The following theorem offers informative upper bounds for the squared shadow norm of a general traceless observable and the shadow norm of the ensemble $\orb(\scrE)$; see \aref{app:BalanceProofs} for a proof. 
\begin{theorem}\label{thm:ShNormBalance}
	Suppose $d$ is an odd prime, $\scrE$ is a balanced ensemble on $\caH_d^{\otimes n}$, and  $\Ob\in \caL_0\bigl(\caH_d^{\otimes n}\bigr)$. Then 
	\begin{align}
	\|\Ob\|^2_{\orb(\scrE)}
	&\leq\left[1+	
	\frac{2(d-2)(D+1)(D+2)}{(D-1)(D+d)}\left|\kappa(\scrE)-\frac{3}{D+2}\right|\right]
	\|\Ob\|_2^2+
	2\|\Ob\|^2\nonumber\\
	& \leq\left[1+	
	2d\left|\kappa(\scrE)-\frac{3}{D+2}\right|\right]
	\|\Ob\|_2^2+
	2	\|\Ob\|^2, \label{eq:OShNormBalance}\\
	\|\orb(\scrE)\|_\sh
	&\leq 3+\frac{2(d-2)(D+1)(D+2)}{(D-1)(D+d)}\left|\kappa(\scrE)-\frac{3}{D+2}\right|
	\leq 	3+2d\left|\kappa(\scrE)-\frac{3}{D+2}\right|. \label{eq:ShNormBalance}
	\end{align}
\end{theorem}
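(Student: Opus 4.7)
The plan is to combine the expansion of $\bcaQ_{\orb(\scrE)}$ from \eref{eq:MomentQEbar} with the structural simplifications afforded by the balanced condition, and then convert the result into the claimed form using \lref{lem:kahkaBalance}. Applied to the ensemble $\scrE$, \pref{pro:kappaSym} shows that $\hka(\scrE,\caT)$ is constant on $\scrT_\sym$ with common value $\hka(\scrE,\Delta)$; the balanced hypothesis renders it constant also on $\scrT_\ns$ with common value $\hka(\scrE)$. Using $\sum_{\caT\in\scrT_\sym}R(\caT)=6P_\sym$ from \lref{lem:RTisodefe}, this factors the shadow map into two manageable pieces:
\begin{equation*}
6\bcaQ_{\orb(\scrE)}(\Ob)=6\hka(\scrE,\Delta)\,\tr_{BC}\bigl[P_\sym\bigl(\bbI\otimes\Ob\otimes\Ob^\dag\bigr)\bigr]+\hka(\scrE)\sum_{\caT\in\scrT_\ns}\caR_\caT(\Ob).
\end{equation*}

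Next I would take operator norms and apply the triangle inequality. For the symmetric piece, \eref{eq:PsymObTraceless} for traceless $\Ob$ yields $6\|\tr_{BC}[P_\sym(\bbI\otimes\Ob\otimes\Ob^\dag)]\|\leq 2\|\Ob\|^2+\|\Ob\|_2^2$; for each $\caT\in\scrT_\ns$, \lref{lem:RTO} gives $\|\caR_\caT(\Ob)\|\leq\|\Ob\|_2^2$, so the non-symmetric contribution is bounded by $|\scrT_\ns|\cdot|\hka(\scrE)|\cdot\|\Ob\|_2^2=2(d-2)|\hka(\scrE)|\|\Ob\|_2^2$. Multiplying by $(D+1)/(D+2)$ in accordance with \eref{eq:ObShNormPsiDef}, and invoking the universal upper bound $\hka(\scrE,\Delta)\leq(D+2)/(D+1)$, which extends \eref{eq:hkakaTLUB} to ensembles by linearity of $\hka$ in the state, one obtains
\begin{equation*}
\|\Ob\|^2_{\orb(\scrE)}\leq \|\Ob\|_2^2+2\|\Ob\|^2+\frac{2(d-2)(D+1)}{D+2}|\hka(\scrE)|\,\|\Ob\|_2^2.
\end{equation*}
Substituting the closed form $|\hka(\scrE)|=\tfrac{(D+2)^2}{(D-1)(D+d)}|\kappa(\scrE)-3/(D+2)|$ from \eref{eq:hkaBalance} converts the last term into the precise coefficient appearing in the first inequality of \eref{eq:OShNormBalance}.

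The remaining steps are routine. The second inequality in \eref{eq:OShNormBalance} (and its counterpart in \eref{eq:ShNormBalance}) reduces to the elementary estimate $(d-2)(D+1)(D+2)\leq d(D-1)(D+d)$; expanding both sides yields a difference of $2D^2+(d^2-4d+6)D-(d^2+2d-4)$, which is positive for every odd prime $d\geq 3$ and $D\geq d$ (the only relevant regime since $D=d^n\geq d$). Finally, \eref{eq:ShNormBalance} follows by maximizing the right-hand side of \eref{eq:OShNormBalance} over $\Ob$ with $\|\Ob\|_2=1$: the trivial bound $\|\Ob\|\leq\|\Ob\|_2$ turns $\|\Ob\|_2^2+2\|\Ob\|^2$ into at most $3$. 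The main obstacle is organizational rather than technical---correctly identifying which coefficients the balanced hypothesis allows one to pull out of the sum so that the non-symmetric contribution collapses into a single multiple of $\|\Ob\|_2^2$ with the sharp constant $2(d-2)$. No new operator-theoretic input beyond \lref{lem:RTO}, \lref{lem:RTisodefe}, \lref{lem:kahkaBalance}, and \pref{pro:kappaSym} is required.
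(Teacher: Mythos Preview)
Your proof is correct and follows essentially the same route as the paper's. The paper simply invokes \thref{thm:ShNormGen} (noting it applies verbatim with $|\Psi\>$ replaced by the ensemble $\scrE$) and then substitutes the expression for $|\hka|(\scrE,\scrT_\ns)$ from \lref{lem:kahkaBalance}; you instead re-derive the relevant special case of that theorem directly from the same ingredients (\lref{lem:RTO}, \eref{eq:PsymObTraceless}, and the bound $\hka(\scrE,\Delta)\leq (D+2)/(D+1)$), arriving at the same estimate.
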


\section{\label{sec:MagicState}Qudit magic gates and magic states}
In this section we review basic results about qudit magic gates and magic states, where the local dimension $d$ is an odd prime. By convention, these magic gates belong to the third Clifford hierarchy \cite{GottC99}.  We are mainly interested in magic
gates that are diagonal in the computational basis, which were clarified by Howard and  Vala \cite{HowaV12}. When $d>3$, let $\tscrP_3(\bbF_d)$ be the set of cubic polynomials on $\bbF_d$ with nonzero cubic coefficients; when $d=3$, let  $\tscrP_3(\bbF_d)$ be the set of functions from  $\bbF_3$ to $\bbZ_9$ that have the form
\begin{align}\label{eq:f39}
f(u)=c_3u^3+3c_2u^2,\quad c_2\in \bbF_3, \quad c_3\in \bbZ_9, \quad c_3\neq 0\mmod 3. 
\end{align}	
In either case, the special cubic polynomial/function $f(u)=u^3$ is called \emph{canonical}.

\begin{proposition}\label{pro:MagicGate}
	Suppose $d$ is an odd prime. Then a diagonal unitary $U$ is a magic gate iff it has the form $U=\rme^{\rmi \phi}T_f$,  where $\phi$ is a real phase, $f\in \tscrP_3(\bbF_d)$, and 
	\begin{align}\label{eq:MagicGate}
	T_f:=\sum_{u\in \bbF_d}\tomega^{f(u)}|u\>\<u|,\quad \tomega:=\begin{cases}
	\omega_d=\rme^{2\pi\rmi/d} & d\geq 5,\\
	\omega_9=\rme^{2\pi\rmi/9} & d=3.
	\end{cases}
	\end{align} 
\end{proposition}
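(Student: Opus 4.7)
The plan is to use the recursive definition of the Clifford hierarchy: a unitary $U$ is a magic gate (third-level Clifford) iff $UPU^\dag \in \Cl(n,d)$ for every Pauli $P\in \caW(n,d)$. Since $U$ is diagonal in the computational basis, $UZU^\dag = Z$ automatically, so the only nontrivial requirement is that $UXU^\dag$ lie in the Clifford group. Parameterizing $U = \sum_{u\in \bbF_d} \rme^{\rmi\theta(u)} \ket{u}\bra{u}$, a direct calculation gives
\begin{equation}
UXU^\dag = DX, \qquad D = \sum_{w\in \bbF_d} \rme^{\rmi[\theta(w)-\theta(w-1)]}\ket{w}\bra{w},
\end{equation}
so $U$ is magic iff the diagonal operator $D$ is a diagonal Clifford unitary (up to an overall phase).

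The next step is to invoke the characterization of single-qudit diagonal Clifford operators: for odd prime $d$, they are precisely the quadratic phase gates, i.e., they have the form $\rme^{\rmi\alpha}\sum_u \omega_d^{au^2+bu}\ket{u}\bra{u}$ with $a,b\in \bbF_d$. Therefore the magic-gate condition reads
\begin{equation}
\theta(w) - \theta(w-1) \;\equiv\; \alpha + \frac{2\pi}{d}(aw^2 + bw) \pmod{2\pi}
\end{equation}
for some $\alpha\in \bbR$ and $a,b \in \bbF_d$. For $d\geq 5$, inverting this discrete difference and using the constraint that $\theta$ is well defined on the cyclic group $\bbF_d$ modulo $2\pi$ forces $\theta(u) = \phi + 2\pi f(u)/d$ for some cubic polynomial $f\in \tscrP_3(\bbF_d)$ (nonzero cubic coefficient being equivalent to $D$ not being already a power of $Z$). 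Conversely, for any such $f$ the finite difference $f(w)-f(w-1)$ is a genuine quadratic polynomial in $w$, so $T_f X T_f^\dag$ is manifestly a diagonal-Clifford times $X$, hence Clifford.

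The main obstacle is the special case $d=3$: because $u^3 \equiv u \pmod{3}$ on $\bbF_3$, any phase of the form $\omega_3^{cu^3}$ collapses to a Clifford $Z$-power, so "$f$ valued in $\bbF_d$" is not rich enough to capture the genuinely new magic gates at $d=3$. The fix is to allow $\theta(u)/(2\pi)$ to take values in the finer cyclic group $\frac{1}{9}\bbZ/\bbZ$; writing $\theta(u)=\phi+2\pi f(u)/9$ with $f\colon \bbF_3\to \bbZ_9$, the requirement that $f(w)-f(w-1)$ reduces modulo $3$ to a quadratic (i.e., that $D$ be a cube root of a diagonal Clifford, modulo an overall phase) singles out exactly the family $f(u)=c_3 u^3 + 3c_2 u^2$ with $c_3\not\equiv 0\pmod{3}$ and $c_2\in\bbF_3$; the factor of $3$ in front of $c_2$ is what makes the quadratic term cohere modulo $9$. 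With $\tomega$ chosen as in \eref{eq:MagicGate}, one checks by direct computation in each of the two cases ($d\geq 5$ and $d=3$) that $T_f X T_f^\dag$ is Clifford, completing both implications.
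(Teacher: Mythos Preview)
The paper does not actually prove this proposition: it is stated as a known result and attributed to Howard and Vala \cite{HowaV12} (see the sentence immediately preceding the proposition in \sref{sec:MagicState}). Your sketch follows essentially the same route as the original Howard--Vala argument: reduce the third-level condition to the requirement that $UXU^\dag$ be Clifford, use the characterisation of diagonal Cliffords as quadratic phase gates, and solve the resulting finite-difference equation for $\theta(u)$. So in spirit you are reproducing the standard proof the paper is citing.

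Two minor points of care are worth flagging. First, when you ``invert the discrete difference'' for $d\geq 5$, the periodicity constraint $\sum_{w\in\bbF_d}[\theta(w)-\theta(w-1)]\equiv 0\pmod{2\pi}$ is what pins $\alpha$ to a multiple of $2\pi/d$ and hence forces all phases into $\omega_d^{\bbZ}$; this step is implicit in your write-up but should be made explicit. Second, for $d=3$ your parenthetical ``that $D$ be a cube root of a diagonal Clifford'' is not the right characterisation: the condition is still that $D$ itself be a diagonal Clifford, but the periodicity constraint now reads $3\alpha \equiv -\tfrac{4\pi a}{3}\pmod{2\pi}$, which forces $\alpha$ into $\tfrac{2\pi}{9}\bbZ$ rather than $\tfrac{2\pi}{3}\bbZ$. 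This is precisely why $\tomega=\omega_9$ and why the quadratic term carries the factor of $3$ in \eref{eq:f39}. With these two clarifications your argument is complete.
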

The overall phase factor $\rme^{\rmi \phi}$ in \pref{pro:MagicGate} is irrelevant to our study and will be ignored in the following discussion. The diagonal magic gate $T_f$ defined in \pref{pro:MagicGate} will be referred to as a $T$ gate henceforth. The single-qudit magic state associated with  $T_f$ reads 
\begin{align}\label{eq:psif}
|\psi_f\>:=T_f|+\>=\frac{1}{\sqrt{d}} \sum_{u\in \bbF_d}\tomega^{f(u)}|u\>,
\end{align}
where $|+\>:=\sum_{u\in\bbF_d} |u\>/\sqrt{d}$. By definition $|\psi_f\>$ can be generated from $|0\>$ by the Fourier gate $F$ (which belongs to the Clifford group) followed by the magic gate $T_f$.
The magic gate $T_f$ and magic state $|\psi_f\>$ are \emph{canonical} if $f$ is canonical. Define $\scrM(d)$ as the set of all magic states in $\caH_d$ that have the form in \eref{eq:psif}, that is, 
\begin{align}
\scrM(d):=\{|\psi_f\>\, |\, f \in \tscrP_3(\bbF_d) \}. 
\end{align}

\begin{figure}[tb]
	\centering 
	\includegraphics[width=0.6\textwidth]{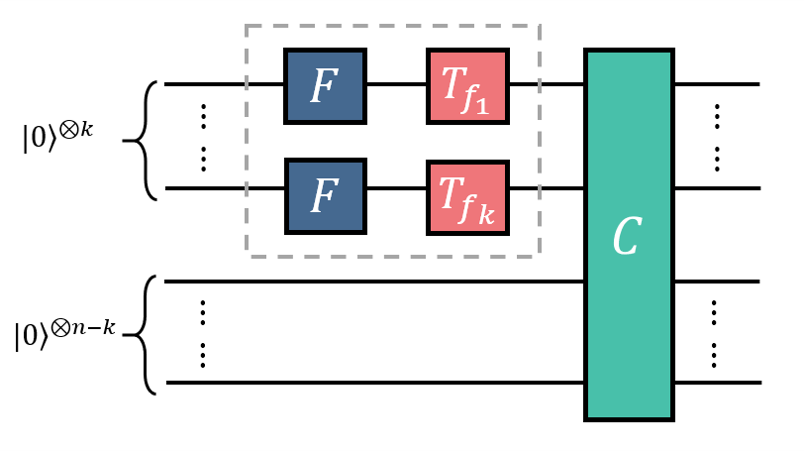}
	\caption{A circuit for sampling from a magic (Clifford) orbit. Here $F$ is the Fourier gate,  $T_{f_1},\ldots, T_{f_k}$ are diagonal magic gates defined in \eref{eq:MagicGate}, and $C$ is a random Clifford gate. The gates in the dashed box generate a fiducial state of the magic  orbit; such a fiducial state belongs to the set $\scrM_{n,k}(d)$.
	} 
	\label{fig:magic_ensemble}
\end{figure}

Next, we turn to $n$-qudit magic states.
Given  $n,k\in \bbN$  with $k\leq n$, define $\scrM_{n,k}(d)$ as  the set of magic states that can be  expressed as tensor products of $k$ single-qudit magic states and $n-k$ copies of  the basis state $|0\>$. Each magic state $|\Psi\>$ in $\scrM_{n,k}(d)$ can be generated by $k$ $T$ gates after $k$ Fourier gates, as illustrated in \fref{fig:magic_ensemble}, and 
has the  form 
\begin{align}\label{eq:MagicStatenk}
|\Psi\>=|\psi_{f_1}\>\otimes |\psi_{f_2}\>\otimes \cdots \otimes |\psi_{f_k}\> \otimes |0\>^{\otimes (n-k)}, \quad f_1, f_2, \ldots, f_k\in  \tscrP_3(\bbF_d)
\end{align}
up to a permutation of tensor factors.   
A magic state in $\scrM_{n,k}(d)$ is canonical if all cubic functions underlying single-qudit magic states in its tensor decomposition are canonical. Such a magic state can be generated by  $k$ \emph{canonical} $T$ gates after $k$ Fourier gates and  has the form $|\psi_f\>^{\otimes k}\otimes |0\>^{\otimes (n-k)}$
up to a permutation of tensor factors, where $f$ is canonical. Denote by $\scrM_{n,k}^\can(d)$
the set of all canonical magic states in $\scrM_{n,k}(d)$.
For the special case $k=0$, we define $\scrM_{n,k}^\can(d)=\scrM_{n,k}(d):=\Stab(n,d)$.

When $d=1\mmod 3$, we need to distinguish three types of $T$ gates and three types of single-qudit magic states, depending on  the   cubic character of the cubic coefficient of the underlying cubic polynomial. Let $\eta_3$ be a given cubic character of $\bbF_d$ as mentioned in \sref{sec:CharIndex} (see also \aref{app:MultiChar} for a brief introduction) and  $f\in \tscrP_3(\bbF_d)$ a cubic polynomial with cubic coefficient $c$. The cubic character of $f\in \tscrP_3(\bbF_d)$, denoted by $\eta_3(f)$, is defined as the cubic character of the cubic coefficient $c$, that is, $\eta_3(f)=\eta_3(c)$, which is in general different from $\eta_3(f(u))$ for a given $u\in \bbF_d$. The cubic character of $T_f$ and that of $|\psi_f\>$ are defined as the cubic character of $f$. The indices of $f$, $T_f$, and $|\psi_f\>$ can be defined in a similar way.

Next, we turn to classifying $n$-qudit magic states. Suppose $|\Psi\>\in \scrM_{n,k}(d)$ has the form in 
\eref{eq:MagicStatenk} up to the positions of tensor factors that are in the state $|0\>$. 
Define
\begin{align}
\eta_3(\Psi):=(\eta_3(f_1),\eta_3(f_2),\ldots, \eta_3(f_k))
\end{align}
and let $\scrC_l=\scrC_l(\Psi)$ be the number of entries of $\eta_3(\Psi)$ that are equal to $\omega_3^l$ for $l=0,1,2$. Then the vector $\scrC(\Psi)=(\scrC_0,\scrC_1,\scrC_2)$ only depends on the cubic characters of cubic coefficients of the polynomials $f_1, f_2, \ldots, f_k$ and 
is invariant under any permutation of the tensor factors of $|\Psi\>$. This vector can be used to distinguish magic states that are fundamentally different. In this way, magic states in $|\Psi\>\in \scrM_{n,k}(d)$ can be decomposed  into $(k+1)(k+2)$ classes, which reduce to three classes when $k=1$ as expected. 
As we shall see in \sref{sec:MagicOrbitd1}, magic orbits associated with different types of magic states may have dramatically different properties.

Next, we highlight certain special magic states.
A magic state $|\Psi\>\in \scrM_{n,k}(d)$ is uniform if all single-qudit magic states in its tensor decomposition have the same cubic character, that is, $\eta_3(\Psi)=\omega_3^l(1,1,\ldots, 1)$ for some $l=0,1,2$. In this case, $\scrC(\Psi)$ has only one nonzero entry. Such a magic state can be generated by  $k$ $T$ gates that have the same cubic character in addition to Fourier gates.
Denote by $\scrM_{n,k}^\id(d)$ the subset of uniform magic states in $\scrM_{n,k}(d)$, that is,
\begin{align}
\scrM_{n,k}^\id(d):=\{|\Psi\>\in \scrM_{n,k}(d): \eta_3(\Psi)=\omega_3^l(1,1,\ldots, 1) \quad \mbox{for} \quad l=0,1,2\}. 
\end{align}
Note that $\scrM_{n,k}^\id(d)$ is a union of three classes of magic states.

By contrast, a magic state  $|\Psi\>\in \scrM_{n,k}(d)$ is $k$-quasi-balanced if $|\scrC_l(\Psi)-k/3|\leq 2/3$ for $l=0,1,2$; in other words, $\scrC_l(\Psi)$ for each $l$ is equal to either $\lfloor k/3\rfloor$ or  $\lceil k/3\rceil$. 
The magic state $|\Psi\>$ is $k$-balanced if in addition 
$\scrC_l(\Psi)=k/3$ for $l=0,1,2$, which can happen only if $k$ is divisible by 3. As we shall see later in \sref{sec:BalanceMagic}, a $k$-balanced magic state is automatically balanced according to the definition in \sref{sec:Balance}. 
Denote by $\scrM_{n,k}^\rmB(d)$  the set of $k$-balanced magic states and by $\scrM_{n,k}^\QB(d)$ the set of $k$-quasi-balanced magic states. 

For completeness, when $d\neq 1 \mmod 3$, we take the convention 
$\scrM_{n,k}^\QB(d)=\scrM_{n,k}^\rmB(d)=\scrM_{n,k}^\id(d)=\scrM_{n,k}(d)$; in other words, 
all magic states in $\scrM_{n,k}(d)$ are regarded as uniform and $k$-balanced.

\section{\label{sec:MagicOrbitd23}Magic orbits and 3-designs  when $d\neq 1\mmod 3$}

In this section and the next section, we clarify the properties of magic (Clifford) orbits and propose simple methods for constructing approximate and exact 3-designs, assuming that the local dimension $d$ is an odd prime.  Our discussion is based on three main figures of merit, namely, the  third normalized frame potential, shadow norm, and operator norm of the third normalized moment operator. Throughout the two sections we assume that $n\in \bbN$, $k\in \bbN_0$, and $k\leq n$. Two cases are discussed separately depending on the congruence class of $d$ modulo 3.

In this section we focus on the case $d\neq 1\mmod 3$, which means  $d=3$ or  $d=2\mmod 3$. Suppose $|\Psi\>$ is a magic state in $\scrM_{n,k}(d)$ as defined in \sref{sec:MagicState}. 
Here  show that the deviation of $\orb(\Psi)$  from a 3-design decreases exponentially with $k$ with respect to all three figures of merit. If $k$ is sufficiently large, then $\orb(\Psi)$ is a good approximate 3-design with respect to all three  figures of merit when $d=2\mmod 3$. 
When $d=3$, by contrast,  $\orb(\Psi)$ is a good approximate 3-design with respect to the  third normalized frame potential and shadow norm, but the operator norm of the third normalized moment operator increases exponentially with $n$ even if $k=n$. Nevertheless, it is not difficult to find other Clifford orbits that form exact 3-designs \cite{GrosNW21}.

\subsection{Technical basis}

To start our analysis, we need to determine the value of $\kappa(\psi,\caT)$ for $\caT\in \scrT_\ns$ when $|\psi\>$ is a magic state of a single qudit, that is, $|\psi\>\in \scrM(d)$. When $d>3$, $|\psi\>$ is determined by a cubic polynomial $f$, and this problem amounts to  counting the number of solutions of a  cubic equation associated with $f$ on the stochastic Lagrangian subspace $\caT$. When $d=3$, the underlying counting problem  is slightly different. Nevertheless, 
in both cases we can derive a uniform result as summarized in \lref{lem:kappaMagicOned23} below and proved in \aref{app:lem:Qud23MagicProof}. As a corollary, \lref{lem:kahkaMagicd23} is also proved in \aref{app:lem:Qud23MagicProof}. The two lemmas are the  key to understanding Clifford orbits based on magic states. 
\begin{lemma}\label{lem:kappaMagicOned23}
	Suppose $d$ is an odd prime, $d\neq1 \mmod 3$, and  $|\psi\>\in \scrM(d)$ is a single-qudit magic state. Then
	\begin{align}\label{eq:kappaMagicOned23}
	\kappa(\psi, \caT)=\frac{2}{d}\quad \forall \caT\in \scrT_\ns.
	\end{align}
\end{lemma}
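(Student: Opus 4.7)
The plan is to compute $\kappa(\psi_f,\caT)$ directly as an exponential sum over the stochastic Lagrangian subspace $\caT$, and to show that this sum always evaluates to $2d^2$, which gives $2/d$ after the $d^{-3}$ normalization. Writing $F(\bfx):=f(x_1)+f(x_2)+f(x_3)$, inner-product expansion gives
\begin{align*}
\kappa(\psi_f,\caT) = \frac{1}{d^3}\sum_{(\bfx;\bfy)\in \caT} \tomega^{F(\bfy)-F(\bfx)}.
\end{align*}
The first observation is that only the cubic coefficient $c$ of $f$ contributes to the exponent: isotropy of $\caT$ gives $\sum x_i^2 = \sum y_i^2$, and polarizing this identity against $\mathbf{1}_6 \in \caT$ yields $\sum x_i = \sum y_i$, so the quadratic and linear parts of $F(\bfy)-F(\bfx)$ vanish. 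For $d=3$ the quadratic term $3c_2 \sum x_i^2$ lives in $9\bbZ$ after taking the difference and so vanishes modulo $9$ as well.

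For $d\geq 5$ with $d\equiv 2\pmod 3$, I would invoke the parametrization of \sref{sec:SLSconstruct2}: by \eref{eq:TvNS} together with \pref{pro:kappaSym}, it suffices to treat $\caT=\caT_\bfv$ with $\bfv\in\mathbf{1}_3^\perp$ having no zero entry. Writing points of $\caT_\bfv$ as $(\bfx+a\bfv;\bfx-a\bfv)$ with $\bfx\in\bfv^\perp$ and $a\in\bbF_d$, and invoking the identity $v_1^3+v_2^3+v_3^3=3v_1v_2v_3$ from \lref{lem:Tv}, the cubic contribution collapses to $c[\sum y_i^3-\sum x_i^3] = -6ac\bigl[\sum v_i x_i^2 + a^2 v_1v_2v_3\bigr]$. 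The $a=0$ slice contributes $|\bfv^\perp| = d^2$. For $a\neq 0$ the inner sum over $\bfx\in\bfv^\perp$ is a Gauss sum for the diagonal quadratic form $\sum v_i x_i^2$ restricted to the plane $\bfv^\perp$. I would insert the Fourier delta $\mathbf{1}[\bfv\cdot\bfx=0]=d^{-1}\sum_t \omega^{t\bfv\cdot\bfx}$, factorize the sum across coordinates, complete the square in each variable, and note that the cross-term picks up a factor $\omega^{(v_1+v_2+v_3)t^2/(24ac)}=1$ because $\bfv\in\mathbf{1}_3^\perp$, making the $t$-sum trivial. This yields $S(a)=\eta_2(-6ac)\,\eta_2(v_1v_2v_3)\,G(\eta_2)^3$.

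The remaining task is the outer sum over $a\neq 0$, which after reinserting the $a^3$ term becomes a twisted character sum $\sum_{a\neq 0}\eta_2(a)\,\omega^{-6c v_1v_2v_3\, a^3}$. Here the hypothesis $d\equiv 2\pmod 3$ enters crucially: the cubing map is a bijection on $\bbF_d^\times$, so the substitution $b=a^3$ (with inverse $a=b^k$ where $3k\equiv 1\pmod{d-1}$) is legitimate. The exponent $k=(2d-1)/3$ is odd for odd $d\equiv 2\pmod 3$, hence $\eta_2(a)=\eta_2(b^k)=\eta_2(b)$, and the $a$-sum reduces to a standard quadratic Gauss sum $\eta_2(-6cv_1v_2v_3)\,G(\eta_2)$. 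All $\eta_2$ factors then square to $1$, leaving a clean $G(\eta_2)^4 = d^2$ via $G(\eta_2)^2=\eta_2(-1)d$. Combining this with the $a=0$ contribution yields $d^2+d^2=2d^2$ as required.

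The main obstacle is the clean evaluation of the twisted character sum in the last step: one must verify the parity of $k$ under $d\equiv 2\pmod 3$ and track the $\eta_2$ factors carefully so that the Gauss sums combine to exactly $G(\eta_2)^4$. The case $d=3$ is handled separately but easily, since $\scrT_\ns=\scrT_\defe=\{\tcaT_0,\tcaT_1\}$ by \eref{eq:Sigma33DefectNum} and these two subspaces are written down explicitly in \lref{lem:defectT}; the $27$-term exponential sums over each can be evaluated by direct mod-$9$ computation with $\tomega=\omega_9$, again recovering $2\cdot 9 = 2d^2$.
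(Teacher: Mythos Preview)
Your argument is correct and takes a genuinely different route from the paper. Both proofs start from the same expansion $\kappa(\psi_f,\caT)=d^{-3}\sum_{(\bfx;\bfy)\in\caT}\tomega^{F(\bfy)-F(\bfx)}$ and both use the $\caT_\bfv$ parametrization of \sref{sec:SLSconstruct2}, but they diverge after that. The paper chooses a basis $\mathbf{1}_3,\bfu,\bfv$ of $\bbF_d^3$ to reduce the exponent to the two-variable form $2az^3-2by^2z$, then counts the level sets $N_\alpha(f,\caT)$ via \lref{lem:Cubic312} (which in turn substitutes $u=1/x$ to reach a diagonal form $cu^3+bv^2=a$ and invokes the standard point count of \lref{lem:CubicSolNum}), and finally Fourier-transforms the counts back into the exponential sum. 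You instead evaluate the exponential sum directly: Fourier-expand the plane constraint, factor into three one-variable quadratic Gauss sums, use $\bfv\cdot\mathbf{1}_3=0$ to kill the cross term, and finish with the cubing bijection on $\bbF_d^\times$ (your parity argument for $k$ is correct but unnecessary---$\eta_2(a)=\eta_2(a)^3=\eta_2(a^3)=\eta_2(b)$ already, since $3$ is odd). Your approach is more self-contained for this lemma; the paper's $N_\alpha(f,\caT)$ machinery pays off later because the same counting lemma handles the $d\equiv 1\pmod 3$ case uniformly (see \lref{lem:kappaMagicOned1}). For $d=3$ both proofs reduce to an explicit check on $\tcaT_0$, with the paper organizing it via the parametrization $\{(a\bfz+b\mathbf{1}_3;a\bfz+c\mathbf{1}_3)\}$. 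One minor point: you reuse the symbol $\bfx$ for both a generic point of $\caT$ and the plane parameter in $\bfv^\perp$, which is momentarily confusing but does not affect correctness.
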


\begin{lemma}\label{lem:kahkaMagicd23}
	Suppose $d$ is an odd prime, $d\neq1 \mmod 3$, $\caT\in \scrT_\ns$, $j\in \bbN_0$,   and	$|\Psi\>\in \scrM_{n,k}(d)$.  Then
	\begin{gather}
	\kappa(\Psi, \caT)=\frac{2^k}{d^k},\quad   \kappa(\Psi, \scrT_\ns,j)=\frac{2^{jk+1}(d-2)}{d^{jk}}, \quad 
	\kappa(\Psi,\Sigma(d),j)=6+\frac{2^{jk+1}(d-2)}{d^{jk}},  \label{eq:kappaSigd23}   \\
	\hka(\Psi, \caT)=\frac{ (D+2) \bigl[\frac{2^k}{d^k}(D+2)-3\bigr]}{(D-1)(D+d)}, \quad 
	|\hka(\Psi, \caT)|\leq \frac{2^k(D+2)}{d^k(D+d)}< \frac{2^k}{d^k},  \label{eq:hkaAbsUBd23}\\
	\hka(\Psi,\Sigma(d))=\frac{D+2}{D+d}\kappa(\Psi, \Sigma(d)),\quad 	|\hka|(\Psi,\Sigma(d))=\begin{cases}
	\frac{(d+2)(7d^2-21d+20)}{d^2(d-1)} & \mbox{if}\quad n=k=1, d\geq 5,\\
	\hka(\Psi,\Sigma(d)) &\mbox{otherwise}.
	\end{cases}\label{eq:hkaAbsSigd23}
	\end{gather}
	In addition, $\hka(\Psi, \caT)<0$ if $n=k=1$ and $d\geq 5$, while  $\hka(\Psi, \caT)>0$ otherwise. 	
\end{lemma}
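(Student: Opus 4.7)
The plan is to reduce every claim to the single-qudit identity in \lref{lem:kappaMagicOned23} and then execute the algebraic bookkeeping dictated by \pref{pro:kappaSym}, \lref{lem:kappaTLUB}, and \pref{pro:hkaka}. Since $|\Psi\>$ factorizes as $|\psi_{f_1}\>\otimes\cdots\otimes|\psi_{f_k}\>\otimes|0\>^{\otimes(n-k)}$ up to a permutation of tensor factors, and $|0\>$ is a stabilizer state so $\kappa(|0\>,\caT)=1$ for every $\caT\in\Sigma(d)$ by \pref{pro:kappaSym}, the multiplicativity relation \eref{eq:kappaPsiProd} combined with \lref{lem:kappaMagicOned23} immediately yields $\kappa(\Psi,\caT)=(2/d)^k$, uniformly in $\caT\in\scrT_\ns$.

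Next I would use $|\scrT_\ns|=|\Sigma(d)|-|\scrT_\sym|=2(d-2)$ (from \eref{eq:Sigma33DefectNum} applied to $d\neq 1\mmod 3$, together with $|\scrT_\sym|=6$) to sum the constant value and obtain $\kappa(\Psi,\scrT_\ns,j)=2(d-2)(2/d)^{jk}$. Then \eref{eq:kappaSigNSscrT}, which encodes the contribution $6$ from $\scrT_\sym$, delivers $\kappa(\Psi,\Sigma(d),j)$ and completes \eref{eq:kappaSigd23}. Substituting the values $\kappa(\Psi,\caT)=(2/d)^k$ and $\kappa(\Psi,\Sigma(d))=6+2(d-2)(2/d)^k$ into the conversion formula $\hka(\Psi,\caT)=\frac{D+2}{D-1}[\kappa(\Psi,\caT)-\kappa(\Psi,\Sigma(d))/(2D+2d)]$ from \pref{pro:hkaka} and collecting common denominators produces the closed form in \eref{eq:hkaAbsUBd23}. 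The accompanying bound $|\hka(\Psi,\caT)|\leq 2^k(D+2)/[d^k(D+d)]$ reduces to $|(2/d)^k(D+2)-3|\leq(D-1)(2/d)^k$; the upper half is immediate from $(2/d)^k\leq 1$, while the lower half amounts to $(2/d)^k(2D+1)\geq 3$ and is verified by an elementary case split in $k$. The formula for $\hka(\Psi,\Sigma(d))$ then follows directly from \eref{eq:hkakaSig}.

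The final and genuinely fiddly task is the sign analysis and the exceptional $|\hka|$-formula, which is the main obstacle because the sign of $\hka(\Psi,\caT)$ changes character at exactly one boundary of the parameter space. From the closed form, this sign coincides with that of $\phi(n,k,d):=2^k(D+2)-3d^k$. If $k=0$ then $\phi=D-1>0$; if $n>k\geq 1$ then $D\geq d^{k+1}$ gives $2^k(D+2)\geq 2^{k+1}d^{k+1}\geq 6d^k>3d^k$; if $n=k\geq 2$ then $2^k\geq 4>3$ and again $\phi>0$. The borderline case $n=k=1$ yields $\phi=4-d$, positive for $d=3$ and negative for every prime $d\geq 5$, identifying the unique exceptional regime. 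In that regime I would compute $\hka(\Psi,\caT)$ separately on $\scrT_\sym$ (using $\kappa(\Psi,\caT)=1$) and on $\scrT_\ns$, take absolute values, and sum with multiplicities $6$ and $2(d-2)$; a routine expansion of $6(2d^2-5d+4)+2(d-2)(d-4)=2(7d^2-21d+20)$ produces the claimed $(d+2)(7d^2-21d+20)/[d^2(d-1)]$. Outside this regime every $\hka(\Psi,\caT)$ is nonnegative, so $|\hka|(\Psi,\Sigma(d))=\hka(\Psi,\Sigma(d))$ with no further work.
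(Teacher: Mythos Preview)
Your approach is correct and matches the paper's proof essentially step for step: multiplicativity plus \lref{lem:kappaMagicOned23} for $\kappa$, the conversion formula from \pref{pro:hkaka} for $\hka$, and the same case split on the sign of $(2/d)^k(D+2)-3$. One minor slip: in the sign analysis for $n>k\geq 1$, the intermediate bound $2^k(D+2)\geq 2^{k+1}d^{k+1}$ fails when $n=k+1$, but the intended conclusion $2^k(D+2)>3d^k$ follows directly from $2^k(D+2)\geq 2^k d^{k+1}=2^k d\cdot d^k\geq 6d^k$.
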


By virtue of \lsref{lem:kappaMagicOned23} and \ref{lem:kahkaMagicd23} it is straightforward to deduce the following limits,
\begin{gather}\label{eq:kahkaSigLimd23}
\begin{gathered}
\lim_{k\to \infty}|\hka|(\Psi,\Sigma(d))=\lim_{k\to \infty}\hka(\Psi,\Sigma(d))=\lim_{k\to \infty} \kappa(\Psi,\Sigma(d))=6,\\
\lim_{d\to \infty}|\hka|(\Psi,\Sigma(d))=\lim_{d\to \infty}\hka(\Psi,\Sigma(d))=\lim_{d\to \infty} \kappa(\Psi,\Sigma(d))=\begin{cases}
10 & n\geq 2,k=1,\\
6 & n\geq k\geq 2, 
\end{cases}\\
\lim_{d\to \infty} \kappa(\Psi,\Sigma(d))=10,\quad 
\lim_{d\to \infty}\hka(\Psi,\Sigma(d))=5,\quad
\lim_{d\to \infty}|\hka|(\Psi,\Sigma(d))=7,\quad n=k= 1. 
\end{gathered}
\end{gather}
Note that the specific choice of the magic state $|\Psi\>$ in $\scrM_{n,k}(d)$  does not affect these limits.  Similar remarks  apply to other limits in this section.

\subsection{Approximate 3-designs based on magic orbits }

Here we clarify the deviations of magic  orbits from 3-designs with respect to three figures of merit. \Thsref{thm:Phi3Magicd23}-\ref{thm:ShNormMagicd23} below are proved in \aref{app:thm:Phi3Shadow32Proof}.

\begin{theorem}\label{thm:Phi3Magicd23}
	Suppose $d$ is an odd prime, $d\neq 1 \mmod 3$,   and	$|\Psi\>\in \scrM_{n,k}(d)$. Then 
	\begin{gather}
	\bar{\Phi}_3(\orb(\Psi))=1+\frac{(d-2)(D+2)^2}{3(D-1)(D+d)}\left(\frac{2^k}{d^k}-\frac{3}{D+2}\right)^2\leq 1+\frac{2^{2k}(d-2)(D-1)}{3d^{2k}(D+d)}
	\leq 1+\frac{2^{2k}}{3d^{2k-1}}. \label{eq:PHi332}
	\end{gather}
	If in addition $k\geq 1$, then $\bar{\Phi}_3(\orb(\Psi))\leq 29/25$. 	
\end{theorem}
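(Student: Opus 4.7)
The strategy is to recognize that a magic state $|\Psi\>\in \scrM_{n,k}(d)$ generates a trivially balanced ensemble (a singleton) in the sense of \sref{sec:Balance}. Indeed, \lref{lem:kahkaMagicd23} tells us that $\kappa(\Psi,\caT)=2^k/d^k$ takes the same value for \emph{every} $\caT\in \scrT_\ns$, so the definition of a balanced ensemble is satisfied with $\kappa(\Psi)=2^k/d^k$. Therefore \thref{thm:Phi3Balance} applies directly with this value of $\kappa$, and immediately yields the claimed equality
\begin{equation*}
\bar{\Phi}_3(\orb(\Psi))=1+\frac{(d-2)(D+2)^2}{3(D-1)(D+d)}\left(\frac{2^k}{d^k}-\frac{3}{D+2}\right)^2.
\end{equation*}
No further structural input is needed for this first step.

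The first inequality then reduces to showing $\bigl|(D+2)\tfrac{2^k}{d^k}-3\bigr|\leq (D-1)\tfrac{2^k}{d^k}$. This splits into two elementary linear inequalities: the upper side $(D+2)\tfrac{2^k}{d^k}-3\leq (D-1)\tfrac{2^k}{d^k}$ collapses to $2^k\leq d^k$, which holds for $d\geq 3$; the lower side $3-(D+2)\tfrac{2^k}{d^k}\leq (D-1)\tfrac{2^k}{d^k}$ becomes $3d^k\leq 2^k(2D+1)$, which is trivial for $k=0$ and, for $k\geq 1$, follows from $D\geq d^k$ since then $2^k(2D+1)\geq 2^{k+1}d^k\geq 4d^k$. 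The second inequality, $(d-2)(D-1)/(D+d)\leq d$, is straightforward algebra: rearranging gives $2-d-d^2\leq 2D$, valid for all $D\geq 1$ and $d\geq 3$.

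The final claim $\bar{\Phi}_3(\orb(\Psi))\leq 29/25$ for $k\geq 1$ is the only step requiring a little care, and it is more economical to argue directly from the exact formula than from the loose bound $1+2^{2k}/(3d^{2k-1})$. Writing $x=2^k/d^k$, I would express
\begin{equation*}
\bar{\Phi}_3-1=\frac{(d-2)(D+2)^2}{3(D-1)(D+d)}\left(x-\frac{3}{D+2}\right)^2
\end{equation*}
and show that as a function of $D\geq d^k$ with $d,k$ fixed this quantity is monotonically increasing in $D$ once $k\geq 1$ (so that $x\leq 2/3$ and the bracket is eventually positive), with limit $(d-2)x^2/3$ as $D\to\infty$. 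Hence it suffices to verify $(d-2)x^2/3\leq 4/25$, i.e., $(d-2)2^{2k}/(3d^{2k})\leq 4/25$. Treating this as a function of $d$ (in the allowed residue classes $d=3$ or $d\equiv 2\mmod 3$) and of $k\geq 1$, the maximum occurs at $k=1$, $d=5$, where it equals exactly $4/25$. The main obstacle in the whole proof is this optimization step: although elementary, it requires checking the correct residue classes and confirming monotonicity in $D$ and $k$; everything else is a direct application of \lref{lem:kahkaMagicd23} and \thref{thm:Phi3Balance} followed by arithmetic.
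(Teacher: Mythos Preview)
Your approach matches the paper's: derive the equality via \thref{thm:Phi3Balance} (the paper also mentions \thref{thm:Phi3LUB} together with \eref{eq:kappaSigd23} as an alternative), then verify the two inequalities by the same elementary arithmetic.

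The one wrinkle is in your argument for the bound $29/25$. The claimed monotonicity of the exact expression $\frac{(d-2)[(D+2)x-3]^2}{3(D-1)(D+d)}$ in $D$ is false as a continuous function: for $d=5$, $k=1$ the bracket $(D+2)x-3$ vanishes at $D=11/2$, so the expression dips to zero between $D=5$ and $D=25$. But the fix is already in your hands: the first inequality you proved gives $\bar\Phi_3-1\leq \frac{(d-2)(D-1)x^2}{3(D+d)}<\frac{(d-2)x^2}{3}$ directly, with no monotonicity argument needed, and then your maximization over $d$ and $k$ (landing at $d=5$, $k=1$, value $4/25$) goes through unchanged. This is precisely what the paper does, tersely writing only ``by \eref{eq:PHi332}''.
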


Suppose $|\Psi\>\in \scrM_{n,k}(d)$ with $n\geq k\geq 1$; then \thref{thm:Phi3Magicd23} implies that
\begin{align}\label{eq:PhiLimd23}
\lim_{d\to \infty} \bar{\Phi}_3(\orb(\Psi))=\lim_{k\to \infty}\bar{\Phi}_3(\orb(\Psi))=1.  
\end{align}
Note that $\orb(\Psi)$ is a good approximation to a 3-design with respect to the third normalized frame potential as long as $k\geq 1$. 

Next, we turn to the shadow norms of stabilizer projectors and general observables. 
\begin{theorem}\label{thm:StabShNormMagicd23}
	Suppose $d$ is an odd prime,   $d\neq 1 \mmod 3$,	$|\Psi\>\in \scrM_{n,k}(d)$, $\Ob$ is a rank-$K$ stabilizer projector on $\caH_d^{\otimes n}$ with $1\leq K\leq D/d$, and $\Ob_0=\Ob-\tr(\Ob)\bbI/D$. Then 
	\begin{align}
	\|\Ob_0\|^2_{\orb(\Psi)}&=\frac{(D+1)(D-K)(K+1)}{2(D-1)(D+2)}\hka(\Psi,\Sigma(d))-\frac{(D+1)(D-K)(2DK+D-K)}{D^2(D-1)}\nonumber\\
	&=\frac{(D+1)(D-K)[2D^2-dD+(D^2-2dD+D+d)K]}{D^2(D-1)(D+d)}+\frac{2^k(d-2)(D+1)(D-K)(K+1)}{d^k(D-1)(D+d)},\nonumber\\
	&\leq \frac{D+1}{D+d}\left[\frac{K+2}{K}+\frac{2^k(d-2)(K+1)}{Kd^k}
	\right]\|\Ob_0\|_2^2\leq \left[3+\frac{2^{k+1}(d-2)}{d^k}\right]	\|\Ob_0\|_2^2.
	\label{eq:StabShNormMagicd23}
	\end{align}
	In addition, $\|\Ob_0\|^2_{\orb(\Psi)}/\|\Ob_0\|_2^2$ decreases monotonically with $k$ and $K$. 
\end{theorem}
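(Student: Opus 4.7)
The plan is to specialize Theorem~\ref{thm:StabShNormGen} to the magic state $|\Psi\>\in\scrM_{n,k}(d)$ by substituting the values of $\kappa(\Psi,\cdot)$ and $\hka(\Psi,\cdot)$ provided by Lemma~\ref{lem:kahkaMagicd23}. First, I would verify the hypothesis that selects the clean formula $\|\Ob_0\|^2_{\orb(\Psi)}=(D+1)\upsilon_1/(D+2)$. Lemma~\ref{lem:kahkaMagicd23} gives $\kappa(\Psi,\scrT_\ns)=2^{k+1}(d-2)/d^k$, and since $D=d^n\geq d^k$ we have $D\cdot 2^{k+1}/d^k\geq 2$, so $\kappa(\Psi,\scrT_\ns)\geq 2(d-2)/D$, which is precisely the sufficient condition stated in Theorem~\ref{thm:StabShNormGen}. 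Plugging the second expression for $\upsilon_1$ from \eqref{eq:QPsiProj0eig} into $(D+1)\upsilon_1/(D+2)$ immediately yields the first equality of \eqref{eq:StabShNormMagicd23}.

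Next, I would derive the explicit closed-form (second equality) by substituting $\hka(\Psi,\Sigma(d))=(D+2)[6+2^{k+1}(d-2)/d^k]/(D+d)$ from Lemma~\ref{lem:kahkaMagicd23}. Grouping $k$-independent and $k$-dependent terms naturally separates the right-hand side into
\[
A=\frac{(D+1)(D-K)[2D^2-dD+(D^2-2dD+D+d)K]}{D^2(D-1)(D+d)},\qquad B=\frac{2^k(d-2)(D+1)(D-K)(K+1)}{d^k(D-1)(D+d)}.
\]

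For the first inequality of \eqref{eq:StabShNormMagicd23}, I would rewrite the claimed middle expression as $C+B''$ with $C=(D+1)(D-K)(K+2)/[D(D+d)]$ and $B''=2^k(d-2)(D+1)(D-K)(K+1)/[d^k D(D+d)]$. Note that $B''< B$ because the ratio is $(D-1)/D<1$, so the argument is not componentwise; one must show $(C-A)$ outweighs $(B-B'')$. After factoring out $(D+1)(D-K)/[D^2(D-1)(D+d)]$ the required inequality reduces to
\[
D(d-2)+2DK(d-1)-dK\geq \frac{D\cdot 2^k(d-2)(K+1)}{d^k},
\]
which I would verify using $2^k/d^k\leq 1$, $K+1\leq 2K$ for $K\geq1$, and the identity $2(d-1)-(d-2)=d$. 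The second inequality then follows from $\|\Ob_0\|_2^2=K(D-K)/D$ together with the two elementary bounds $(D+1)(K+2)\leq 3K(D+d)$ and $(D+1)(K+1)\leq 2K(D+d)$, both of which are immediate for $d\geq 3$ and $K\geq 1$.

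Finally, for monotonicity of $\|\Ob_0\|^2_{\orb(\Psi)}/\|\Ob_0\|_2^2$, I would use the decomposition $A/\|\Ob_0\|_2^2+B/\|\Ob_0\|_2^2$. The $A$-part is independent of $k$, and after simplification is an affine combination of $1/K$ and a positive constant, hence decreasing in $K$. The $B$-part factors as a positive prefactor times $(K+1)/K=1+1/K$ (decreasing in $K$) times $(2/d)^k$ (decreasing in $k$ since $d\geq 3$). Both monotonicities follow. The main obstacle will be the algebraic bookkeeping in Step~3 (verifying the polynomial inequality $D(d-2)+2DK(d-1)-dK\geq D\cdot 2^k(d-2)(K+1)/d^k$); this is routine but requires a careful case check at $k=0$ and a separate bound at $k\geq 1$.
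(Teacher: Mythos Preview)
Your proposal is correct and follows essentially the same route as the paper: invoke Theorem~\ref{thm:StabShNormGen} after checking $\kappa(\Psi,\scrT_\ns)\geq 2(d-2)/D$ from Lemma~\ref{lem:kahkaMagicd23}, then substitute $\hka(\Psi,\Sigma(d))$ to obtain the two explicit expressions, and finally handle the inequalities and monotonicity by elementary algebra.

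One simplification worth noting: the ``careful case check at $k=0$ and a separate bound at $k\geq 1$'' you anticipate is unnecessary. Your own reduction already shows that the first inequality is equivalent to
\[
D(d-2)+2DK(d-1)-dK\;\geq\;\frac{2^k}{d^k}\,D(d-2)(K+1),
\]
and since the right-hand side is monotone in $2^k/d^k\leq 1$, the single bound $2^k/d^k\leq 1$ (no need for $K+1\leq 2K$) reduces this uniformly to $K[2D(d-1)-d-D(d-2)]=Kd(D-1)\geq 0$. This is exactly how the paper argues: it observes that it suffices to take $k=0$, and the resulting inequality collapses to $d(D-1)K\geq 0$.
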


As a simple corollary of \thref{thm:StabShNormMagicd23}, we can deduce the following limits,
\begin{equation}
\begin{aligned}
\lim_{n\to \infty}\|\Ob_0\|^2_{\orb(\Psi)}&=K+2+\frac{2^k(d-2)(K+1)}{d^k},\quad \lim_{n,k\to \infty}\|\Ob_0\|^2_{\orb(\Psi)}=K+2,\\ \lim_{d\to \infty}\|\Ob_0\|^2_{\orb(\Psi)}&=\begin{cases}
2 & n=k=K=1,\\
3K+4 & n\geq 2, k=1, \\
K+2 &n\geq k\geq 2.
\end{cases}	  \label{eq:StabShNormMagicd23Lim}
\end{aligned}
\end{equation}
When $K=1$, so that $\Ob$ is the projector onto a stabilizer state, \eref{eq:StabShNormMagicd23} can be simplified as follows,
\begin{align}
\|\Ob_0\|^2_{\orb(\Psi)}&=\frac{D+1}{D+2}\hka(\Psi,\Sigma(d))-\frac{(D+1)(3D-1)}{D^2}=
\frac{(D+1)(3D^2-3dD+D+d)}{D^2(D+d)}+\frac{2^{k+1}(d-2)(D+1)}{d^k(D+d)}\nonumber\\
&\leq \frac{D+1}{D+d} \left[3+\frac{2^{k+1}(d-2)}{d^k}\right]	\|\Ob_0\|_2^2\leq 3+\frac{2^{k+1}(d-2)}{d^k}. \label{eq:Stab1ShNormMagicd23}
\end{align}

\begin{theorem}\label{thm:ShNormMagicd23}
	Suppose $d$ is an odd prime, $d\neq1 \mmod 3$, 	$|\Psi\>\in \scrM_{n,k}(d)$,
	and  $\Ob\in \caL_0\bigl(\caH_d^{\otimes n}\bigr)$. Then 
	\begin{gather}
	\|\Ob\|^2_{\orb(\Psi)}\leq	
	\left[1+\frac{2^{k+1}(d-2)}{d^k}\right]	\|\Ob\|_2^2+2\|\Ob\|^2\leq \left[3+\frac{2^{k+1}(d-2)}{d^k}\right]	\|\Ob\|_2^2,  \label{eq:OShNormMagicd23}\\[1ex]
	\hka(\Sigma(d))-3-\frac{5}{D}\leq	\|\orb(\Psi)\|_\sh\leq 	3+\frac{2^{k+1}(d-2)}{d^k}\leq 3+ \frac{2^{k+1}}{d^{k-1}}. \label{eq:ShNormMagicd23}
	\end{gather}		
\end{theorem}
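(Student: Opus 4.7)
The plan is to derive both the upper and lower bounds as direct consequences of the general Clifford-orbit bounds in \thref{thm:ShNormGen} combined with the explicit formulas for $\hka(\Psi,\caT)$ established in \lref{lem:kahkaMagicd23}. The main technical content has already been absorbed into those earlier results, so essentially only algebraic assembly remains.

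Concretely, I would first invoke \thref{thm:ShNormGen}, which yields
\begin{equation*}
\|\Ob\|^2_{\orb(\Psi)}\leq \left[1+\frac{D+1}{D+2}|\hka|(\Psi,\scrT_\ns)\right]\|\Ob\|_2^2 + 2\|\Ob\|^2, \qquad \|\orb(\Psi)\|_\sh\leq 3+\frac{D+1}{D+2}|\hka|(\Psi,\scrT_\ns),
\end{equation*}
so the task reduces to controlling the single quantity $\tfrac{D+1}{D+2}|\hka|(\Psi,\scrT_\ns)$. Since $|\Sigma(d)|=2d+2$ and $|\scrT_\sym|=6$, we have $|\scrT_\ns|=2(d-2)$. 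By \lref{lem:kahkaMagicd23}, when $d\neq 1\mmod 3$ the value of $\hka(\Psi,\caT)$ is uniform over $\caT\in\scrT_\ns$ and bounded in modulus by $2^k(D+2)/[d^k(D+d)]$; summing gives
\begin{equation*}
\frac{D+1}{D+2}|\hka|(\Psi,\scrT_\ns) \;\leq\; \frac{2^{k+1}(d-2)(D+1)}{d^k(D+d)}\;\leq\; \frac{2^{k+1}(d-2)}{d^k},
\end{equation*}
where the last step uses $D+1\leq D+d$.

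Substituting this bound immediately delivers the first inequality in both \eref{eq:OShNormMagicd23} and \eref{eq:ShNormMagicd23}. The remaining simplifications are elementary: bounding $2\|\Ob\|^2\leq 2\|\Ob\|_2^2$ for traceless $\Ob$ yields the coarsened estimate $[3+2^{k+1}(d-2)/d^k]\|\Ob\|_2^2$, while $d-2\leq d$ gives $2^{k+1}(d-2)/d^k\leq 2^{k+1}/d^{k-1}$. The lower bound $\hka(\Sigma(d))-3-5/D \leq \|\orb(\Psi)\|_\sh$ is precisely the lower bound already supplied by \eref{eq:ShNormGenB} in \thref{thm:ShNormGen}, and requires no further manipulation.

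The only mildly delicate point is verifying the bound $|\hka(\Psi,\caT)|\leq 2^k(D+2)/[d^k(D+d)]$, because \lref{lem:kahkaMagicd23} notes that $\hka(\Psi,\caT)$ can be negative in the exceptional case $n=k=1$, $d\geq 5$, so one must take absolute values carefully. Algebraically this reduces to checking $|2^k(D+2)-3d^k|\leq 2^k(D-1)$, which follows by splitting into the two sign cases and invoking $D\geq d\geq 3$. Since this verification is already subsumed in \lref{lem:kahkaMagicd23}, it is not really an obstacle, and the whole argument is a short assembly job with no substantive new analysis; the genuine work has been front-loaded into the spectral and combinatorial facts about the operator frame $\{R(\caT)\}$ and the magic-state counting lemma \lref{lem:kappaMagicOned23}.
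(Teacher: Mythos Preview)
Your proposal is correct and follows essentially the same approach as the paper's proof: invoke \thref{thm:ShNormGen} for the general bound and the lower bound, then control $|\hka|(\Psi,\scrT_\ns)$ via the pointwise estimate in \lref{lem:kahkaMagicd23}. The paper's write-up is more terse, simply noting $|\hka|(\scrT_\ns)\leq \kappa(\scrT_\ns)=2^{k+1}(d-2)/d^k$ and then appealing to \thref{thm:ShNormGen}, whereas you spell out the intermediate step with the sharper factor $(D+2)/(D+d)$; but the substance is identical.
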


Suppose $|\Psi\>\in \scrM_{n,k}(d)$ with  $0\leq k\leq n$; then \thref{thm:ShNormMagicd23} implies that
\begin{align}
\|\orb(\Psi)\|_\sh\leq \begin{cases}
2d-1 & k=0,\\
7 &k=1,\\
4 &k\geq 2.
\end{cases}
\end{align}
In addition,
\thref{thm:ShNormMagicd23} and \eref{eq:kahkaSigLimd23} together imply that
\begin{align}\label{eq:lim73}
\lim_{k\to \infty}\|\orb(\Psi)\|_\sh=3, \quad \lim_{d\to \infty} \|\orb(\Psi)\|_\sh=\begin{cases}
7 & n\geq 2, k=1,\\
3 &n\geq k\geq 2. 
\end{cases}
\end{align}

Next, we clarify the basic properties of the third normalized moment operator $\bQ(\orb(\Psi))$; two cases are discussed separately. \Thsref{thm:Moment3Magicd2} and \ref{thm:Moment3Magicd3} below are simple corollaries of \lref{lem:kahkaMagicd23} and \thref{thm:Moment3Balance}.
\begin{theorem}\label{thm:Moment3Magicd2}
	Suppose $d$ is an odd prime, $d=2 \mmod 3$, and $|\Psi\>\in \scrM_{n,k}(d)$. Then $\bQ(\orb(\Psi))$ has two distinct eigenvalues in $\Sym_3\bigl(\caH_d^{\otimes n}\bigr)$, namely, 	
	\begin{align}\label{eq:MomEigMagicd2}
	\lambda_1=\frac{\hka(\Psi,\Sigma(d))}{6}=\frac{[3d^k+2^k(d-2)](D+2)}{3d^k(D+d)},\quad 
	\lambda_2=\frac{D+2}{D-1}\left(1-\frac{2^k}{d^k}\right).
	\end{align}
	The multiplicities $m_1, m_2$ of $\lambda_1,\lambda_2$ are the same as that given in \tsref{tab:bQnd3} and \ref{tab:Qbalance}. Moreover,
	\begin{gather}
	\| \bQ(\orb(\Psi))\|=\begin{cases}
	\lambda_2 & \mbox{if $n=k=1$},\\
	\lambda_1 & \mbox{otherwise},
	\end{cases}  \\ 
	\| \bQ(\orb(\Psi))-P_\sym\|=
	\begin{cases}
	|\lambda_2-1| &\mbox{if $d=5$},\\
	|\lambda_1-1| &\mbox{if $d>5$},
	\end{cases}   \\
	\| \bQ(\orb(\Psi))-P_\sym\| \leq  \frac{d-2}{3}\times \frac{2^k}{d^k}. 
	\end{gather}
\end{theorem}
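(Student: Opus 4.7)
The plan is to derive this theorem directly from \lref{lem:kahkaMagicd23} and \thref{thm:Moment3Balance} by specializing the general balanced-ensemble formulas to the magic orbit at hand. First I would observe that, by \lref{lem:kahkaMagicd23}, $\kappa(\Psi,\caT)=2^k/d^k$ for every $\caT\in\scrT_\ns$. In the language of \sref{sec:Balance}, this means the single-state ensemble $\scrE=\{|\Psi\>\}$ is balanced with $\kappa(\scrE)=2^k/d^k$, so \thref{thm:Moment3Balance} applies.

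Next, since the hypothesis $d=2\mmod 3$ places us in the third column of \tref{tab:Qbalance}, the entries for $\lambda_3$ and $m_3$ are vacuous, and only $\lambda_1,\lambda_2$ (with multiplicities $m_1,m_2$ inherited verbatim from the table) can appear. Direct substitution of $\kappa(\scrE)=2^k/d^k$ into the third-column formulas of \tref{tab:Qbalance} yields
\begin{align*}
\lambda_1&=\frac{(d-2)(D+2)}{3(D+d)}\cdot\frac{2^k}{d^k}+\frac{D+2}{D+d}=\frac{[3d^k+2^k(d-2)](D+2)}{3d^k(D+d)},\\
\lambda_2&=\frac{D+2}{D-1}\left(1-\frac{2^k}{d^k}\right),
\end{align*}
which matches \eref{eq:MomEigMagicd2}, and simultaneously gives $\lambda_1=\hka(\Psi,\Sigma(d))/6$ via the first identity in \eref{eq:hkaAbsSigd23}. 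These are manifestly distinct (in the nontrivial regime $k\geq 1$) unless one verifies a degenerate edge case, which can be ruled out by inspection.

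For the operator norm I would invoke the case split in \tref{tab:Qbalance}: $\|\bQ\|=\lambda_1$ if $\kappa(\scrE)\geq 3/(D+2)$ and $\|\bQ\|=\lambda_2$ otherwise. The inequality $2^k/d^k\geq 3/(d^n+2)$ is equivalent to $2^k(d^n+2)\geq 3d^k$, which fails only when $n=k=1$ (indeed the smallest eligible prime is $d=5$, giving $2/5<3/7$), while for any other $(n,k)$ pair one has $2^k d^n\geq 3 d^k$ already from $d^n\geq d\cdot d^k$ together with $d\geq 5>3/2$. This produces the stated case split for $\|\bQ(\orb(\Psi))\|$. The formula for $\|\bQ(\orb(\Psi))-P_\sym\|$ is simply read off from the last row of the third column of \tref{tab:Qbalance}.

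The only residual task is the bound $\|\bQ(\orb(\Psi))-P_\sym\|\leq (d-2)2^k/(3d^k)$. A short computation gives
\[
\lambda_1-1=\frac{(d-2)[(D+2)2^k-3d^k]}{3d^k(D+d)},\qquad \lambda_2-1=\frac{3-(D+2)2^k/d^k}{D-1},
\]
and since $(D+2)/(D+d)\leq 1$ for $d\geq 2$, the $\lambda_1$ bound is immediate; for the $\lambda_2$ branch (the $d=5$ case) one similarly checks that $|\lambda_2-1|\leq 2^k/d^k\cdot(d-2)/3$ holds for all admissible $(n,k)$. I expect the routine but slightly delicate bookkeeping in distinguishing the $n=k=1$ case and verifying the uniform upper bound for both eigenvalue branches to be the only point requiring care; none of the steps present a real mathematical obstacle since the heavy lifting has already been done in \lref{lem:kahkaMagicd23} and \thref{thm:Moment3Balance}.
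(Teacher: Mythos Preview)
Your proposal is correct and follows exactly the approach the paper indicates: the paper states that \thref{thm:Moment3Magicd2} is a simple corollary of \lref{lem:kahkaMagicd23} and \thref{thm:Moment3Balance}, and you have carried out precisely this specialization, supplying the routine substitutions and case analysis that the paper leaves implicit. The only minor slip is in your justification of $2^k d^n\geq 3d^k$ via $d^n\geq d\cdot d^k$, which assumes $n\geq k+1$ and misses the case $n=k\geq 2$; but there $2^k\geq 4>3$ gives the inequality directly, so the conclusion stands.
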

By virtue of \thref{thm:Moment3Magicd2} it is straightforward to deduce that $\| \bQ(\orb(\Psi))-P_\sym\|\leq 2/3$ when $k\geq 1$. In addition,
\begin{align}
\lim_{k\to \infty}	\| \bQ(\orb(\Psi))-P_\sym\|&=0,\quad 
\lim_{d\to \infty}	\| \bQ(\orb(\Psi))-P_\sym\|=
\begin{cases}
0 &n\geq k\geq 2,\\[0.5ex]
\frac{1}{6} & n=k=1,\\[0.5ex]
\frac{2}{3} &k=1,n\geq 2.
\end{cases}
\end{align}

\begin{theorem}\label{thm:Moment3Magicd3}
	Suppose $d=3$  and	$|\Psi\>\in \scrM_{n,k}(d)$. Then $\bQ(\orb(\Psi))$ has two distinct eigenvalues in $\Sym_3\bigl(\caH_d^{\otimes n}\bigr)$, namely, 
	\begin{align}\label{eq:MomEigMagicd3}
	\begin{aligned}
	\lambda_1=\frac{2^k(D+2)}{3d^k}, \quad  \lambda_2=\frac{(D+2)[d^kD-2^k]}{d^k(D-1)(D+3)}.
	\end{aligned}
	\end{align}
	The multiplicities $m_1, m_2$ of $\lambda_1,\lambda_2$ are the same as that given in  \tsref{tab:bQnd3} and \ref{tab:Qbalance}. Moreover,
	\begin{align}
	\| \bQ(\orb(\Psi))\|=\lambda_1,   \quad \| \bQ(\orb(\Psi))-P_\sym\|=\lambda_1-1.
	\end{align}
\end{theorem}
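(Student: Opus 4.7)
The plan is to obtain \thref{thm:Moment3Magicd3} as a direct corollary of \lref{lem:kahkaMagicd23} and \thref{thm:Moment3Balance} together with \tref{tab:Qbalance}. First I note that \lref{lem:kahkaMagicd23} gives $\kappa(\Psi,\caT)=2^k/d^k$ for \emph{every} $\caT\in\scrT_\ns$, so the singleton ensemble $\scrE=\{|\Psi\>\}$ is balanced in the sense of \sref{sec:Balance} with $\kappa(\scrE)=2^k/d^k$. Consequently the spectral data of $\bQ(\orb(\Psi))$ is read off from the $d=3$ column of \tref{tab:Qbalance}, and the multiplicities $m_1,m_2$ transfer verbatim (they agree with \tref{tab:bQnd3} in that column).

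Substituting $\kappa(\scrE)=2^k/d^k$ into $(D+2)\kappa(\scrE)/3$ yields $\lambda_1=2^k(D+2)/(3d^k)$ without further work. For the second eigenvalue, I would start from $\lambda_2=\hka(\scrE,\Delta)$ and plug $\kappa(\scrE)=2^k/d^k$, $d=3$ into the balanced-ensemble formula
\[
\hka(\scrE,\Delta)=1-\frac{(d-2)(D+2)}{(D-1)(D+d)}\left[\kappa(\scrE)-\frac{3}{D+2}\right]
\]
from \lref{lem:kahkaBalance}. Putting everything over the common denominator $d^k(D-1)(D+3)$ and using the identity $(D-1)(D+3)+3=D(D+2)$, the numerator collapses to $(D+2)[d^k D-2^k]$, which matches the claimed $\lambda_2$.

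For the operator norm, \tref{tab:Qbalance} singles out $\lambda_1$ precisely when $\kappa(\scrE)\geq 3/(D+2)$, i.e.\ $2^k(D+2)\geq 3d^k$. Since $n\geq k$ gives $D=3^n\geq 3^k=d^k$, this is elementary: the cases $k=0,1$ are direct, and for $k\geq 2$ one uses $2^k D\geq 6^k\geq 3^{k+1}$. Hence $\|\bQ(\orb(\Psi))\|=\lambda_1$. Finally, since $P_\sym$ is the orthogonal projector onto $\mathrm{Sym}_3(\caH)$, the nonzero eigenvalues of $\bQ(\orb(\Psi))-P_\sym$ are $\lambda_1-1$ and $\lambda_2-1$. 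A short computation shows the two numerators coincide,
\[
\lambda_1-1=\frac{2^k(D+2)-3d^k}{3d^k},\qquad 1-\lambda_2=\frac{2^k(D+2)-3d^k}{d^k(D-1)(D+3)},
\]
so the ratio $(\lambda_1-1)/(1-\lambda_2)=(D-1)(D+3)/3\geq 4$ for every $n\geq 1$, giving $\|\bQ(\orb(\Psi))-P_\sym\|=\lambda_1-1$.

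There is no real obstacle here; the entire argument is algebraic bookkeeping, and the main point to get right is the clean factorization $2^k(D+2)-3d^k$ of the two competing numerators, which turns the norm comparison into the trivial inequality $(D-1)(D+3)\geq 3$.
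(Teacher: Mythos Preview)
Your proposal is correct and follows exactly the approach the paper itself indicates: the paper states that \thref{thm:Moment3Magicd3} is a simple corollary of \lref{lem:kahkaMagicd23} and \thref{thm:Moment3Balance}, and you have carried out precisely that reduction, supplying the algebraic verifications for $\lambda_2$, the inequality $\kappa(\scrE)\geq 3/(D+2)$, and the comparison $\lambda_1-1\geq 1-\lambda_2$.
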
 
The eigenvalues in \eref{eq:MomEigMagicd3} satisfy the following inequalities
\begin{align}
\frac{5}{6}\leq 1-\frac{2^k}{d^k(D+3)}< \lambda_2< 1< \lambda_1,\quad  \lambda_1-1>1-\lambda_2. 
\end{align}
In the case $d=3$, the orbit of any magic state  is in general far from a 3-design with respect to the operator norm of the third normalized moment operator, although it is  a good approximation with respect to the third normalized frame potential and shadow norm. Nevertheless, an exact 3-design can be constructed from a suitable Clifford orbit according to \rcite{GrosNW21} and \sref{sec:3designConstructd23} below.

\subsection{\label{sec:3designConstructd23}Construction of exact 3-designs}
For a single qudit with $d=2\mmod 3$ and $d\geq 5$, we can construct an exact 3-design using only two Clifford orbits, namely, the stabilizer orbit and a magic orbit.  By contrast, $\caO(d)$ orbits are required according to the construction method proposed by GNW \cite{GrosNW21}.

Let $|\Psi\>\in \scrM_{n,k}(d)$; then $\kappa(\Psi,\caT)=2^k/d^k$ for $\caT\in \scrT_\ns$ by \lref{lem:kahkaMagicd23}. Suppose we choose the stabilizer orbit with weight $w$ and $\orb(\Psi)$ with weight $1-w$. Then the resulting ensemble forms a  3-design if 
\begin{align}
w+\frac{2^k}{d^k}(1-w)=\frac{3}{D+2},
\end{align}
which means 
\begin{align}\label{eq:wdnk}
w=w(d,n,k):=\frac{\frac{3}{D+2}-\frac{2^k}{d^k}}{1-\frac{2^k}{d^k}}=\frac{3d^k-2^k(D+2)}{(d^k-2^k)(D+2)}.
\end{align}
Note that  $w(d,n,k)<1$; in addition,  $w(d,n,k)>0$ iff $n=k=1$ and $d\geq 5$, so an exact 3-design can be constructed when $n=k=1$ and $d\geq 5$.  In general, according to \rcite{GrosNW21}, it is possible to  construct an exact 3-design using $\caO(d)$ orbits, but it is still an open problem whether the orbit number can be reduced to a constant that is independent of $d$.

In the case $d=3$, the above construction does not work. Nevertheless, an exact 3-design for an $n$-qutrit system can be constructed from a single orbit  if the fiducial state is chosen properly \cite{GrosNW21}. Let
\begin{align}
|\psi(\phi)\>=\frac{\cos\phi |0\>-\sin\phi |1\>}{\sqrt{2}},\quad |\Psi_n(\phi)\>=|\psi(\phi)\>^{\otimes n}. 
\end{align}
By virtue of \lref{lem:defectT} we can deduce that
\begin{align}
\kappa(\Psi_n(\phi),\caT)=\bigl(\cos^3\phi-\sin^3\phi\bigr)^{2n}\quad \forall \caT\in \scrT_\ns=\scrT_\defe. 
\end{align}
So $|\Psi_n(\phi)\>$ is a fiducial state of a 3-design if 
\begin{align}
\cos^3\phi-\sin^3\phi=\pm \biggl(\frac{3}{D+2}\biggr)^{1/(2n)},
\end{align}
in which case $\kappa(\Psi_n(\phi),\caT)=3/(D+2)$ for all $\caT\in \scrT_\defe$.
Alternatively, we can construct a fiducial state of the form $|\psi(\phi)\>\otimes|0\>^{\otimes (n-1)}$, where $\phi$ satisfies the  condition
\begin{align}
\cos^3\phi-\sin^3\phi=\pm \biggl(\frac{3}{D+2}\biggr)^{1/2}. 
\end{align}

\section{\label{sec:MagicOrbitd1}Magic orbits and 3-designs  when $d= 1\mmod 3$}
In this section we assume  that $d$ is an odd prime satisfying $d=1\mmod 3$ (which means $d\geq 7$) unless stated otherwise (results on other prime local dimensions are presented as a benchmark); in addition, $n\in \bbN$, $k\in \bbN_0$,  $k\leq n$, and $D=d^n$ as before. Recall that
$\tscrP_3(\bbF_d)$ is the set of cubic polynomials over $\bbF_d$ with nonzero cubic coefficients;  $\scrM_{n,k}^\can(d)\subseteq \scrM_{n,k}^\id(d)\subseteq \scrM_{n,k}(d)$ are three sets of magic states defined in \sref{sec:MagicState}.

After clarifying the basic properties of a general magic  orbit $\orb(\Psi)$ with $|\Psi\>\in \scrM_{n,k}(d)$,  we show that  the deviation of $\orb(\Psi)$ from a 3-design usually decreases exponentially with $k$ with respect to the third normalized frame potential, operator norm of the third normalized moment operator, and shadow norm.
If $k$ is sufficiently large, then $\orb(\Psi)$ is a good approximate 3-design with respect to the third normalized frame potential and shadow norm. 
However, the operator norm  strongly depends on the fiducial state $|\Psi\>$ chosen and may even increase exponentially with $n$.
In addition, we propose  a general method for constructing accurate approximate 3-designs with respect to all three figures of merit simultaneously.  By virtue of balanced magic ensembles, we further construct exact 3-designs for  an infinite family of local dimensions. 
As we shall see shortly, two Clifford orbits suffice to construct an accurate approximate 3-design;  four Clifford orbits suffice to construct an exact 3-design for an infinite family of local dimensions.

To streamline the presentation, the proofs of most technical lemmas and main results, including \Lsref{lem:kappaMagicOned1}-\ref{lem:BalMagicExactd1} and \thsref{thm:Phi3MagicUBd1}-\ref{thm:StabShNormMagicd1}, \ref{thm:AccurateDesignMagic},  are relegated to \aref{app:MagicOrbitd1Proof}.

\subsection{\label{sec:MagicOrbitTechbasis}Technical basis}

To understand the  basic properties of the  Clifford orbit generated from a magic state $|\Psi\>\in \scrM_{n,k}(d)$,  we need to determine the function $ \kappa(\Psi,\caT)$ defined in  \eref{eq:kappapsiT}. To this end, first we need to    consider the case in which $|\Psi\>$ is a single-qudit magic state. Then $|\Psi\>$ is  specified by a cubic polynomial $f\in \tscrP_3(\bbF_d)$ according to \eref{eq:psif}, and $\kappa(\Psi,\caT)$ can be written as $\kappa(f,\caT)$.

\begin{lemma}\label{lem:kappaMagicOned1}
	Suppose $d$ is an odd prime, $d=1\mmod 3$,	
	$f\in \tscrP_3(\bbF_d)$ has cubic coefficient $c$, $\caT\in \scrT_\ns$, and $a\in \bbF_d$ satisfies $\eta_3(a)=\eta_3(\caT)$.  Then 
	\begin{gather}\label{eq:kappaMagicOned1}
	\frac{1}{9d^2}< \kappa(f,\caT)=\frac{2}{d}+\frac{2}{d^2}\Re\bigl[\eta_3(ac)G^2(\eta_3)\bigr]=\frac{g^2(3,ac)}{d^2}< \frac{4d-3}{d^2}<\frac{4}{d}.
	\end{gather}
\end{lemma}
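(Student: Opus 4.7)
The plan is to evaluate $\kappa(f,\caT)$ as an exponential sum and reduce it to a cubic Gauss sum via the alternative parametrization of stochastic Lagrangian subspaces from \sref{sec:SLSconstruct2}. By \pref{pro:kappaSym} and $\scrT_0=\tau_{12}\scrT_1$, one may assume $\caT=\caT_\bfv$ for some nonzero $\bfv=(v_1,v_2,v_3)^\top\in\mathbf{1}_3^\perp$, and \lref{lem:Tv} gives $v_1v_2v_3\neq 0$ since $\caT\in\scrT_\ns$. Parametrizing elements of $\caT_\bfv$ as $(\bfx+a\bfv;\bfx-a\bfv)$ with $\bfx\in\bfv^\perp$, $a\in\bbF_d$, writing $f(u)=cu^3+c_2u^2+c_1u+c_0$, and using the constraints $\sum_iv_i=\sum_ix_iv_i=0$ together with $\sum_iv_i^3=3v_1v_2v_3$ from \lref{lem:Tv}, the cubic exponent in $\kappa(f,\caT_\bfv)=d^{-3}\sum\omega^{\sum_i[f(x_i-av_i)-f(x_i+av_i)]}$ collapses to $-6ac\,Q(\bfx)-6a^3cv_1v_2v_3$ with $Q(\bfx)=\sum_iv_ix_i^2$, as all lower-order terms in $a$ are killed by the orthogonality constraints.

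The next step is to evaluate the inner sum over $\bfx\in\bfv^\perp$. Eliminating $x_3$ via $v_1x_1+v_2x_2+v_3x_3=0$ and using $v_1+v_2+v_3=0$ gives $Q(\bfx)=-(v_1v_2/v_3)(x_1-x_2)^2$, a rank-one quadratic form whose character sum collapses to a single-variable quadratic Gauss sum: for $a\neq 0$ it equals $d\,\eta_2(6acv_1v_2/v_3)G(\eta_2)$, and for $a=0$ it equals $d^2$. The ancillary quadratic characters combine via $\eta_2((6cv_1v_2)^2)=1$, leaving $\kappa(f,\caT)=\frac{1}{d}+\frac{G(\eta_2)\eta_2(-1)}{d^2}\sum_{a\ne 0}\omega^{-\beta a^3}\eta_2(a)$ with $\beta=6cv_1v_2v_3$. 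Since $d\equiv 1\mmod 6$ forces $\eta_2(\xi)=1$ on the order-$3$ element $\xi\in\bbF_d^\times$, each nonzero cube $y$ contributes $3\eta_2(y)$ to the sum, and the indicator $\mathbf{1}_{C^*}(y)=\tfrac{1}{3}(1+\eta_3(y)+\bar\eta_3(y))$ rewrites everything as a linear combination of $G(\eta_2,-\beta)$, $G(\eta_2\eta_3,-\beta)$, $G(\eta_2\bar\eta_3,-\beta)$. Using $G(\eta_2)^2=\eta_2(-1)d$, the Hasse-Davenport duplication $G(\eta_3)G(\eta_2\eta_3)=\bar\eta_3(4)G(\bar\eta_3)G(\eta_2)$, and $d\,G(\bar\eta_3)/G(\eta_3)=G(\bar\eta_3)^2$, one obtains $G(\eta_2)\eta_2(-1)G(\eta_2\eta_3)=\eta_3(2)G(\bar\eta_3)^2$; choosing $a=3v_1v_2v_3$ so that $\eta_3(a)=\eta_3(3v_1v_2v_3)=\eta_3(\caT)$ and $\bar\eta_3(\beta)=\bar\eta_3(2)\bar\eta_3(ac)$, the stray $\eta_3(2)$ cancels, and the two complex-conjugate contributions yield $\kappa(f,\caT)=\frac{2}{d}+\frac{2}{d^2}\Re[\eta_3(ac)G^2(\eta_3)]$. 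Expanding $g(3,ac)=\bar\eta_3(ac)G(\eta_3)+\eta_3(ac)G(\bar\eta_3)$ and using $G(\eta_3)G(\bar\eta_3)=d$ immediately gives $g^2(3,ac)=2d+2\Re[\eta_3(ac)G^2(\eta_3)]$, establishing the second equality.

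For the two strict inequalities, set $\alpha=G(\eta_3)$ and $g_k:=g(3,\nu^k)=\omega_3^{-k}\alpha+\omega_3^k\bar\alpha$, so $\sum_kg_k=0$, $\sum_kg_k^2=6d$, and $g_0g_1g_2=2\Re(\alpha^3)=2d\,\Re[J(\eta_3,\eta_3)]=d(2a_0-b_0)=:dm$, where $J(\eta_3,\eta_3)=a_0+b_0\omega_3$ satisfies $a_0^2-a_0b_0+b_0^2=d$. For the lower bound, $m=0$ would force $b_0=2a_0$ and $d=3a_0^2$, impossible for prime $d\ge 7$, so $|m|\geq 1$; AM-GM yields $|g_{k'}g_{k''}|\leq(6d-g_k^2)/2$, hence $g_k^2(6d-g_k^2)^2\geq 4d^2$, and the substitution $g_k^2=1/9$ gives $(1/9)(6d-1/9)^2=4d^2-4d/27+1/729<4d^2$, contradicting the inequality and forcing $g_k^2>1/9$. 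For the upper bound, the classical Jacobi-sum normalization $b_0\equiv 0\mmod 3$ gives $|b_0|\geq 3$, whence $4d=m^2+3b_0^2\geq m^2+27$; writing $g_k=2\sqrt d\cos\theta_k$ with $\cos(3\theta_k)=m/(2\sqrt d)$ and $u=\cos^2\theta_k$, the triple-angle identity yields $h(u):=u(4u-3)^2=m^2/(4d)\leq 1-27/(4d)$, and the Taylor expansion $h(1-\epsilon)=1-9\epsilon+24\epsilon^2-16\epsilon^3$ together with the positivity of $24\epsilon^2-16\epsilon^3$ for small $\epsilon$ forces $\epsilon>3/(4d)$, giving $g_k^2=4du<4d-3$. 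The main technical obstacle is the character-sum bookkeeping in the middle paragraph, where the quadratic factors and the stray $\eta_3(2)$ introduced by the Hasse-Davenport formula must cancel in precisely the right pattern to produce the clean $\eta_3(ac)$-dependence; once that simplification is secured, both inequalities fall out of the cubic polynomial identity for $g_k$ together with the classical normalization of the cubic Jacobi sum.
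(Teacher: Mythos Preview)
Your argument is correct and takes a genuinely different route from the paper. Both start from the parametrization $\caT=\caT_\bfv$ of \sref{sec:SLSconstruct2} and compute the exponent $f(\bfy)-f(\bfx)$ on $\caT$, but then diverge: the paper passes through the solution-counting lemma \lref{lem:NfT} (which in turn invokes \lsref{lem:Cubic312} and \ref{lem:CubicSolNum} on Weil-type point counts) and only afterwards sums against $\omega^\alpha$, whereas you evaluate the exponential sum directly---first collapsing the inner $\bfx$-sum to a single quadratic Gauss sum, and then decomposing the outer $a$-sum via the cube indicator and Hasse--Davenport duplication. Your route bypasses the intermediate counting machinery at the cost of the Hasse--Davenport identity, which the paper never needs. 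For the inequalities, the paper simply cites \lref{lem:GaussSum3LUB} (itself resting on results in \rcite{BernEW98book}), while you reprove the relevant content of that lemma from scratch using the Jacobi-sum normalization $J(\eta_3,\eta_3)=a_0+b_0\omega_3$ with $b_0\equiv 0\pmod 3$ and the triple-angle identity; this is self-contained but essentially duplicates \lref{lem:GaussSum3LUB}.

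One small point of presentation: after the $\bfx$-sum you write the prefactor as $G(\eta_2)\eta_2(-1)$, but the raw computation actually produces $G(\eta_2)\eta_2(\beta)$ with $\beta=6cv_1v_2v_3$. This causes no error, because once you evaluate $G(\chi,-\beta)=\bar\chi(-\beta)G(\chi)$ in each of the three terms, the factor $\eta_2(\beta)$ combines with the $\eta_2(-\beta)$ from the argument to yield $\eta_2(-1)$ in every case, and the final formula is exactly as stated. The phrase ``$\eta_2((6cv_1v_2)^2)=1$'' does not quite capture this cancellation; you may want to rewrite that sentence to make the mechanism explicit.
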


Here $\eta_3$ is any given cubic character on $\bbF_d$ as in \sref{sec:CharIndex}; $G(\eta_3)$ and $g(3,ac)$ are two types of Gauss sums \cite{LidlN97book,BernEW98book} as explained in \aref{app:GaussJacobi}. The cubic character $\eta_3(\caT)$ is defined in \sref{sec:CharIndex}. Thanks to \eref{eq:gGaussSumCubic}, the value of $\kappa(f,\caT)$  is independent of the specific choice of the cubic character $\eta_3$. 
In addition, $\kappa(f,\caT)$  is completely determined by the cubic character of  the product $ac$ and can take on three possible values for any given prime $d$ satisfying $d=1\mmod 3$ (cf. \lref{lem:GaussSum3LUB} in \aref{app:GaussJacobi}).  The three possible values of 
$\kappa(f,\caT)$,  arranged in nondecreasing order, are denoted by $\kappa_0^\uparrow(d), \kappa_1^\uparrow(d), \kappa_2^\uparrow(d)$ henceforth, which can be abbreviated as  $\kappa_0^\uparrow, \kappa_1^\uparrow, \kappa_2^\uparrow$ if there is no danger of confusion. 
Suppose  $\nu$ is a given primitive element in $\bbF_d$ (see \tref{tab:nu}); for $j=0,1,2$ define
\begin{align}
\kappa_j(f)=\kappa_j(f,d)&:=\frac{2}{d}+\frac{2}{d^2}\Re\bigl[\eta_3(3\nu^jc)G^2(\eta_3)\bigr]=\frac{g^2(3,3\nu^j c)}{d^2}. \label{eq:kappaMagid1j}
\end{align}
Then  $\kappa(f,\caT)=\kappa_j(f)$ if $\eta_3(\caT)=\eta_3(3\nu^j)$ and  $\{\kappa_j(f)\}_{j=0,1,2}$ for a given $f$ is identical to $\{\kappa_j^\uparrow\}_{j=0,1,2}$
up to a permutation. In conjunction with \lref{lem:Tindex} we can deduce that
\begin{align}\label{eq:kappaTdefMagicd1}
\kappa(f,\caT)=\kappa_0(f)\quad \forall \caT\in \scrT_\defe. 
\end{align}

To better understand the properties of $\kappa_0^\uparrow, \kappa_1^\uparrow, \kappa_2^\uparrow$, define
\begin{align}\label{eq:tgd}
\tg(d):=\frac{g(3,1)g(3,\nu)g(3,\nu^2)}{d}.
\end{align}
Note that $\tg(d)$ is independent of the specific choice of $\nu$. According to \lref{lem:GaussSum3LUB} [see also \tref{tab:tgd}] in \aref{app:GaussJacobi}, $\tg(d)$ is an integer and satisfies 
\begin{align}\label{eq:tgdsq}
1\leq \tg(d)^2=2d[1+\cos(6\phi)]\leq 4d-27,
\end{align}
where $\phi$ is the phase of the Gauss sum $G(\eta_3)$.

\begin{lemma}\label{lem:kappaOrder}
	Suppose $d$ is an odd prime and $d=1\mmod 3$; then	$\kappa_0^\uparrow(d), \kappa_1^\uparrow(d), \kappa_2^\uparrow(d)$ satisfy the relations
	\begin{gather}
	\kappa_0^\uparrow+\kappa_1^\uparrow
	+\kappa_2^\uparrow=\frac{6}{d},\quad \kappa_0^\uparrow\kappa_1^\uparrow+\kappa_1^\uparrow\kappa_2^\uparrow+\kappa_2^\uparrow\kappa_0^\uparrow=\frac{9}{d^2}, \quad 
	\frac{1}{d^4}\leq \kappa_0^\uparrow\kappa_1^\uparrow\kappa_2^\uparrow=\frac{\tg(d)^2}{d^4}
	\leq \frac{4d-27}{d^4}, \label{eq:kappaSumProd} \\ 		
	0<\kappa_0^\uparrow< \frac{1}{d}< \kappa_1^\uparrow < \frac{3}{d}< \kappa_2^\uparrow <\frac{4}{d},\quad 
	\bigl(d\kappa_0^\uparrow\bigr)^k+\bigl(d\kappa_1^\uparrow\bigr)^k> 2\quad \forall k\in \bbN. \label{eq:kappaOrder} 
	\end{gather}	
\end{lemma}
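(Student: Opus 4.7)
The plan is to reduce the whole lemma to elementary trigonometry by exploiting the phase $\phi := \arg G(\eta_3)$ of the cubic Gauss sum. First I would note that for any $f \in \tscrP_3(\bbF_d)$ with cubic coefficient $c$, the multiset $\{\kappa_j(f)\}_{j=0,1,2}$ coincides with $\{g^2(3,\nu^i)/d^2\}_{i=0,1,2}$: the Gauss sum $g(3,a)$ depends on $a$ only through $\eta_3(a)$, and as $j$ runs over $\{0,1,2\}$ the character value $\eta_3(3\nu^j c) = \omega_3^j \eta_3(3c)$ cycles through all three cube roots of unity. Hence the sorted values $\kappa_0^\uparrow \le \kappa_1^\uparrow \le \kappa_2^\uparrow$ depend only on $d$, and I may work with the canonical representatives $g(3,\nu^j)/d$. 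Using the standard decomposition $g(3,\nu^j) = \bar\eta_3(\nu^j) G(\eta_3) + \eta_3(\nu^j) \overline{G(\eta_3)}$ with $|G(\eta_3)| = \sqrt d$, one obtains
\begin{equation*}
\kappa_j \;=\; \frac{g^2(3,\nu^j)}{d^2} \;=\; \frac{4\cos^2(\phi - 2\pi j/3)}{d} \;=\; \frac{2}{d}\bigl[\,1+\cos(2\phi + 2\pi j/3)\,\bigr],
\end{equation*}
so the problem is reduced to studying three cosines at angles equally spaced by $2\pi/3$.

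For the symmetric-function identities in \eqref{eq:kappaSumProd}, I would invoke the standard relations $\sum_j \cos(\alpha + 2\pi j/3) = 0$, $\sum_j \cos^2(\alpha + 2\pi j/3) = 3/2$, and $\prod_j \cos(\alpha + 2\pi j/3) = \cos(3\alpha)/4$. The first two immediately give $\sum_j \kappa_j = 6/d$ and $\sum_j \kappa_j^2 = 18/d^2$, whence $\sum_{i<j}\kappa_i \kappa_j = 9/d^2$. The third yields $\prod_j \kappa_j = 4\cos^2(3\phi)/d^3$, which I match against the identity $\tg(d)^2 = 2d[1+\cos(6\phi)] = 4d\cos^2(3\phi)$ already quoted from \aref{app:GaussJacobi} to get $\prod_j \kappa_j = \tg(d)^2/d^4$; the bounds $1\le \tg(d)^2 \le 4d-27$ are also quoted from there.

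For the strict ordering in \eqref{eq:kappaOrder}, set $t_j := \cos(2\phi + 2\pi j/3)$. The triple-angle identity $\cos(3\alpha) = 4\cos^3\alpha - 3\cos\alpha$ shows that $t_0,t_1,t_2$ are precisely the three roots of $4t^3-3t = \cos(6\phi)$; the bounds on $\tg(d)^2$ force $\cos(6\phi) \in (-1,1)$ strictly. A one-line calculus check on $f(t)=4t^3-3t$ (with critical values $f(\pm 1/2) = \mp 1$) then places exactly one root in each of $(-1,-1/2)$, $(-1/2,1/2)$, $(1/2,1)$, which under $\kappa = (2/d)(1+t)$ gives the strict chain $0 < \kappa_0^\uparrow < 1/d < \kappa_1^\uparrow < 3/d < \kappa_2^\uparrow < 4/d$. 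Finally, writing $a := d\kappa_0^\uparrow$ and $b := d\kappa_1^\uparrow$, the sum rule $a + b + d\kappa_2^\uparrow = 6$ combined with $d\kappa_2^\uparrow < 4$ yields $a+b > 2$, and convexity of $x \mapsto x^k$ (equivalently, the power-mean inequality) gives $a^k + b^k \ge 2[(a+b)/2]^k > 2$ for every $k \in \bbN$. The main obstacle is really bookkeeping: aligning the three character-labeled values $\{\kappa_j(f)\}_j$ with the canonical Gauss sums $\{g(3,\nu^i)\}_i$ and verifying the product matches the sign and normalization convention for $\tg(d)$; once that is settled, the analytic content reduces to trigonometric identities, the triple-angle cubic, and convexity.
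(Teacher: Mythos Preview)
Your proposal is correct and follows essentially the same route as the paper: both identify $\{\kappa_j^\uparrow\}$ with $\{g^2(3,\nu^i)/d^2\}$, read off the elementary symmetric functions, invoke the bounds $1\le\tg(d)^2\le 4d-27$, locate the three values via a cubic (the paper through \lref{lem:GaussSum3LUB}, which analyzes $x^3-3dx-ds$; you through the equivalent Chebyshev form $4t^3-3t=\cos 6\phi$), and then obtain the last inequality from $d\kappa_0^\uparrow+d\kappa_1^\uparrow>2$ by convexity of $x\mapsto x^k$. The only cosmetic difference is that your trigonometric version makes the cubic analysis self-contained rather than citing \lref{lem:GaussSum3LUB}.
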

Suppose $f_1, f_2, f_3\in \tscrP_3(\bbF_d)$ and  $\caT_1,\caT_2\in \scrT_\ns$, then $\kappa(f_1, \caT_1)=\kappa(f_2, \caT_2)$ iff $\eta_3(f_1)\eta_3(\caT_1)=\eta_3(f_2)\eta_3(\caT_2)$ thanks to \lsref{lem:kappaMagicOned1} and \ref{lem:kappaOrder}. In addition, if $f_1, f_2, f_3$ have pairwise different cubic characters, then the product $\kappa(f_1, \caT)\kappa(f_2, \caT)\kappa(f_3, \caT)$ is independent of $\caT$ for $\caT\in \scrT_\ns$. This observation is crucial to constructing good approximate 3-designs and exact 3-designs as we shall see in Secs.~\ref{sec:BalanceMagic} and \ref{sec:3designExact}.

Next, we  consider the properties of $\kappa(\Psi,\caT)$ and related functions for a general magic state $|\Psi\>$ in $\scrM_{n,k}(d)$. Thanks to \lref{lem:kappaTLUB},  we have $-1\leq \kappa(\Psi,\caT)\leq 1$, and the upper bound is saturated when $|\Psi\>$ is a stabilizer state (which means $k=0$) or $\caT\in \scrT_\sym$.  Suppose  $|\Psi\>$ has the form in \eref{eq:MagicStatenk}; by virtue of  \eqsref{eq:kappaPsiProd}{eq:kappaTdefMagicd1}, we can deduce that  
\begin{align}\label{eq:kappaMagicDecomd1}
\kappa(\Psi,\caT)=\prod_{i=1}^k\kappa(f_i,\caT)\quad \forall \caT\in \scrT_\ns,
\quad \kappa(\Psi,\caT)=\prod_{i=1}^k\kappa_0(f_i)
\quad \forall \caT\in \scrT_\defe.
\end{align}
Note that $\kappa(f_i,\caT)=\kappa_j(f_i)$ when $\eta_3(\caT)=\eta_3(3\nu^j)$. In conjunction with Eqs.~\eqref{eq:muj}, \eqref{eq:mujFormula}, and \pref{pro:hkaka} we can immediately deduce the following proposition. 

\begin{proposition}\label{pro:kahkaMagicd1}
	Suppose $d$ is an odd prime, $d=1\mmod 3$, $|\Psi\>\in \scrM_{n,k}(d)$ has the form in \eref{eq:MagicStatenk}, and $m\in \bbN$. Then 
	\begin{gather}
	\kappa(\Psi,\Sigma(d),m)-6	
	=\kappa(\Psi,\scrT_\ns,m)=\sum_{j=0}^2\mu_j\prod_{i=1}^k\kappa_j^m(f_i),\quad 	\kappa(\Psi,\scrT_\defe,m)=4\prod_{i=1}^k\kappa_0^m(f_i),		\\
	|\hka|(\Psi,\scrT_\ns)= \frac{D+2}{D-1}\sum_{j=0}^2\mu_j\left|\prod_{i=1}^k\kappa_j(f_i)-\frac{\kappa(\Psi,\Sigma(d))}{2D+2d}\right|, \\
	\hka(\Psi,\scrT_\defe)= \frac{4(D+2)}{D-1}\left[\prod_{i=1}^k\kappa_0(f_i)-\frac{\kappa(\Psi,\Sigma(d))}{2D+2d}\right],\quad |\hka|(\Psi,\scrT_\defe)=|\hka(\Psi,\scrT_\defe)|,
	\end{gather}	
	where $\mu_j$ and $\kappa_j(f_i)$ are determined by \eqsref{eq:mujFormula}{eq:kappaMagid1j}, respectively. 	
\end{proposition}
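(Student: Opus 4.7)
The plan is to reduce each identity to a straightforward bookkeeping exercise on top of the multiplicative decomposition $\kappa(\Psi,\caT)=\prod_{i=1}^k\kappa(f_i,\caT)$ already recorded in \eqref{eq:kappaMagicDecomd1}, combined with the classification of $\scrT_\ns$ by cubic character (\sref{sec:CharIndex}) and the linear relation between $\hka$ and $\kappa$ from \pref{pro:hkaka}. Nothing conceptually new is needed; essentially all the heavy machinery—the values $\kappa_j(f_i)$, the counts $\mu_j$, and the constraint $\eta_3(\caT)=\eta_3(3)$ on $\scrT_\defe$—is already in place.

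First, I would handle the $\scrT_\ns$ identities. Partition $\scrT_\ns$ according to the cubic character $\eta_3(\caT)$; by the definition of $\mu_j$ in \eqref{eq:muj}, there are exactly $\mu_j$ subspaces with $\eta_3(\caT)=\eta_3(3\nu^j)$. By \lref{lem:kappaMagicOned1} and the definition of $\kappa_j(f_i)$ in \eqref{eq:kappaMagid1j}, we have $\kappa(f_i,\caT)=\kappa_j(f_i)$ for each such $\caT$, and hence $\kappa(\Psi,\caT)^m=\prod_{i=1}^k\kappa_j^m(f_i)$ by \eqref{eq:kappaMagicDecomd1}. Summing over the three classes gives the formula for $\kappa(\Psi,\scrT_\ns,m)$, and the first equality $\kappa(\Psi,\Sigma(d),m)-6=\kappa(\Psi,\scrT_\ns,m)$ is a direct restatement of \eqref{eq:kappaSigNSscrT} (using $|\scrT_\sym|=6$ and $\kappa(\Psi,\caT)=1$ for $\caT\in\scrT_\sym$ from \pref{pro:kappaSym}).

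Next I would treat the $\scrT_\defe$ identities. Here the key input is \lref{lem:Tindex}, which forces $\eta_3(\caT)=\eta_3(3)$ for every $\caT\in\scrT_\defe$; equivalently, the ``defect'' part of \eqref{eq:kappaMagicDecomd1} says $\kappa(\Psi,\caT)=\prod_{i=1}^k\kappa_0(f_i)$ is the \emph{same constant} for all $\caT\in\scrT_\defe$. Since $|\scrT_\defe|=4$ when $d=1\mmod 3$ by \eqref{eq:Sigma33DefectNum}, summing the $m$th power yields $\kappa(\Psi,\scrT_\defe,m)=4\prod_{i=1}^k\kappa_0^m(f_i)$. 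For the $\hka$ statements I would invoke the linear relation $\hka(\Psi,\caT)=\frac{D+2}{D-1}\bigl[\kappa(\Psi,\caT)-\frac{\kappa(\Psi,\Sigma(d))}{2D+2d}\bigr]$ from \eqref{eq:hkakaTLUB}. Summing this over $\scrT_\defe$ gives the claimed formula for $\hka(\Psi,\scrT_\defe)$, while $|\hka|(\Psi,\scrT_\defe)=|\hka(\Psi,\scrT_\defe)|$ is immediate because $\hka(\Psi,\caT)$ takes a single value on $\scrT_\defe$, so the four summands share a common sign. Summing the absolute value over $\scrT_\ns$, once again partitioned by cubic character, produces the displayed formula for $|\hka|(\Psi,\scrT_\ns)$.

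There is no genuine obstacle here; the only thing to be careful about is ensuring that the ``partition by cubic character'' argument is applied consistently, and in particular that $\kappa(\Psi,\caT)$ truly depends on $\caT\in\scrT_\ns$ only through $\eta_3(\caT)$—which is exactly the content of \lref{lem:kappaMagicOned1}. Once that observation is pinned down, the proof is a few lines of rearrangement, and the rest of the statement is packaged from \pref{pro:hkaka} and \lref{lem:Tindex}.
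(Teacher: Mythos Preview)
Your proposal is correct and follows exactly the route the paper takes: the paper states that the proposition ``can be immediately deduced'' from \eqref{eq:muj}, \eqref{eq:mujFormula}, \pref{pro:hkaka}, and the preceding decomposition \eqref{eq:kappaMagicDecomd1}, and your argument is precisely the unpacking of that deduction---partition $\scrT_\ns$ by cubic character, use the constancy of $\kappa(\Psi,\caT)$ on each class, count via $\mu_j$ and $|\scrT_\defe|=4$, and apply the affine relation \eqref{eq:hkakaTLUB}.
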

Thanks to \psref{pro:hkaka}, \ref{pro:kahkaMagicd1} and \eref{eq:kappaSigIsoDef}, it is also straightforward to calculate $\hka(\Psi,\scrT_\ns)$, $	\hka(\Psi,\Sigma(d))$,  $	|\hka|(\Psi,\Sigma(d))$, $\kappa(\Psi,\scrT_\iso)$, $\hka(\Psi,\scrT_\iso)$, and $|\hka|(\Psi,\scrT_\iso)$.  In several special cases, more explicit formulas on $\kappa(\Psi,\scrT_\ns)$ are presented in 
\lsref{lem:CubicStateSum} and \ref{lem:CubicState12Sum} in \aref{app:MagicOrbitd1Aux}.

By virtue of \lref{lem:kappaMagicOned1} and \eref{eq:kappaMagicDecomd1} we can further deduce the following lemma. 
\begin{lemma}\label{lem:kappaMagicLUBd1}
	Suppose $d$ is an odd prime, $d=1\mmod 3$, and	$|\Psi\>\in \scrM_{n,k}(d)$. Then
	\begin{gather}
	\frac{1}{9^kd^{2k}}\leq  \kappa(\Psi,\caT) \leq \left(\frac{4d-3}{d^2}\right)^k \leq \frac{4^k}{d^k}\quad \forall \caT\in \scrT_\ns,
	\quad  \kappa(\Psi,\caT)= 1 \quad    \forall \caT\in \scrT_\sym, \label{eq:kappaMagicLUBd1} \\	
	\frac{2(d-2)}{9^kd^{2k}}\leq  \kappa(\Psi,\Sigma(d))-6=\kappa(\Psi,\scrT_\ns) \leq  \frac{2^{2k+1}}{d^{k-1}}, \label{eq:kaNsSigMagicd1}\\   
	\frac{D+2}{D+d}\left[6+\frac{2(d-2)}{9^kd^{2k}}\right]\leq  \hka(\Psi,\Sigma(d))\leq |\hka|(\Psi,\Sigma(d)) \leq 6+ \frac{2^{2k+1}}{d^{k-1}}+\frac{14d}{D},\quad |\hka|(\Psi,\scrT_\ns)\leq \frac{2^{2k+1}}{d^{k-1}}+\frac{14d}{D}. 	\label{eq:hkaAbsNsSigMagicd1}
	\end{gather}
	If $\caT,\caT'\in \scrT_\ns$ have the same cubic character, then $\kappa(\Psi,\caT')=\kappa(\Psi,\caT)$. 
\end{lemma}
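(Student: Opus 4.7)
The plan is to establish pointwise bounds on $\kappa(\Psi,\caT)$ for a single $\caT\in\scrT_\ns$ first, then sum over $\scrT_\ns$ to obtain the bounds on $\kappa(\Psi,\Sigma(d))$ and $\kappa(\Psi,\scrT_\ns)$, and finally convert to $\hka$-bounds using the identities of \pref{pro:hkaka} (in particular \eref{eq:hkakaSig} and \eref{eq:hkaSigNsAbs}). The engine driving everything is \lref{lem:kappaMagicOned1}, which controls the single-qudit factors of $\kappa(\Psi,\caT)$.

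\textbf{Step 1: pointwise bounds on $\kappa(\Psi,\caT)$.} By \eref{eq:kappaMagicDecomd1} (the tensor-product rule \eref{eq:kappaPsiProd} applied $k$ times), one has $\kappa(\Psi,\caT)=\prod_{i=1}^k \kappa(f_i,\caT)$ for $\caT\in\scrT_\ns$, where $f_1,\ldots,f_k$ are the cubic polynomials associated with the single-qudit tensor factors of $|\Psi\>$. Plugging the two-sided estimate $\tfrac{1}{9d^2}<\kappa(f_i,\caT)<\tfrac{4d-3}{d^2}$ from \lref{lem:kappaMagicOned1} into each of the $k$ factors yields the stated lower and upper bounds on $\kappa(\Psi,\caT)$ in one line. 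The equality $\kappa(\Psi,\caT)=1$ on $\scrT_\sym$ is a direct invocation of \pref{pro:kappaSym}. The concluding claim that $\kappa(\Psi,\caT')=\kappa(\Psi,\caT)$ whenever $\eta_3(\caT')=\eta_3(\caT)$ follows because \lref{lem:kappaMagicOned1} shows each factor $\kappa(f_i,\caT)$ depends on $\caT$ only through $\eta_3(\caT)$, so the product does as well.

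\textbf{Step 2: bounds on $\kappa(\Psi,\Sigma(d))$ and $\kappa(\Psi,\scrT_\ns)$.} Sum the pointwise bounds of Step 1 over $\scrT_\ns$, using $|\scrT_\ns|=2(d-2)$ from \eref{eq:Sigma33DefectNum}. The lower bound gives $\kappa(\Psi,\scrT_\ns)\geq 2(d-2)/(9^k d^{2k})$; the upper bound yields $2(d-2)\cdot 4^k/d^k\leq 2^{2k+1}/d^{k-1}$ after the crude estimate $(4d-3)/d^2<4/d$ and $d-2<d$. The identity $\kappa(\Psi,\Sigma(d))-6=\kappa(\Psi,\scrT_\ns)$ is just \eref{eq:kappaSigNSscrT} combined with $\kappa(\Psi,\caT)=1$ on the six-element set $\scrT_\sym$.

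\textbf{Step 3: bounds on $\hka$ and main obstacle.} The lower bound on $\hka(\Psi,\Sigma(d))$ is immediate from \eref{eq:hkakaSig}, which says $\hka(\Psi,\Sigma(d))=\tfrac{D+2}{D+d}\kappa(\Psi,\Sigma(d))$, coupled with the lower bound on $\kappa(\Psi,\Sigma(d))$ from Step~2. For the upper bound on $|\hka|(\Psi,\Sigma(d))$, invoke \eref{eq:hkaSigNsAbs}, which isolates the contribution from $\scrT_\sym$ as a clean algebraic term plus $|\hka|(\Psi,\scrT_\ns)$; the former is bounded by $6+\caO(d/D)$. For the latter, apply the triangle inequality to the formula $\hka(\Psi,\caT)=\tfrac{D+2}{D-1}[\kappa(\Psi,\caT)-\tfrac{\kappa(\Psi,\Sigma(d))}{2(D+d)}]$ from \eref{eq:hkakaTLUB}, then sum over $\caT\in\scrT_\ns$, and substitute the bounds from Step~2 on $\kappa(\Psi,\scrT_\ns)$ and $\kappa(\Psi,\Sigma(d))\leq 6+2^{2k+1}/d^{k-1}$. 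The main (and only) obstacle is then the bookkeeping required to verify that the constant $14$ in $14d/D$ works uniformly in all $k\geq 0$ and $n\geq k$: one has to track how the factors $(D+2)/(D-1)=1+3/(D-1)$ and the term $(d-2)\kappa(\Psi,\Sigma(d))/(D+d)$ combine, and then bundle all $\caO(d/D)$ remainders into a single additive $14d/D$ summand. This is routine but tedious; conceptually, the magnitude of $|\hka|(\Psi,\Sigma(d))$ is dominated by the $6$ from $\scrT_\sym$ and the $|\hka|(\Psi,\scrT_\ns)$ sum, with everything else suppressed by $1/D$.
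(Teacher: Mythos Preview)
Your proposal is correct and follows essentially the same approach as the paper: \lref{lem:kappaMagicOned1} plus the product formula \eref{eq:kappaMagicDecomd1} for Step~1, summing over $|\scrT_\ns|=2(d-2)$ for Step~2, and the triangle inequality on \eref{eq:hkakaTLUB} for the bound on $|\hka|(\Psi,\scrT_\ns)$. Two small remarks on the bookkeeping you flag: (i) the paper treats $k=0$ separately, since then $\kappa(\Psi,\Sigma(d))=2(d+1)$ rather than $\leq 14$, and the triangle-inequality route yields an $\caO(d^2/D)$ remainder instead of the needed $14d/D$---but $k=0$ is the trivial stabilizer case where $|\hka|(\Psi,\scrT_\ns)=2(d-2)\tfrac{D+2}{D+d}\leq 2d$ directly; (ii) for the $|\hka|(\Psi,\Sigma(d))$ upper bound the paper invokes \eref{eq:hkaSigAbsLUB} (giving $|\hka|(\Psi,\Sigma(d))\leq \tfrac{D+d-1}{D}\kappa(\Psi,\Sigma(d))$ in one stroke) rather than your route via \eref{eq:hkaSigNsAbs}, which makes the constant $14$ fall out without further juggling.
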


Thanks to \lref{lem:kappaMagicLUBd1}, any magic state $|\Psi\>$ in $\scrM_{n,k}(d)$ satisfies $\kappa(\Psi,\caT)\geq 0$, which agrees with \cref{con:kappaTLUB}, so    
$|\kappa|(\Psi,\scrT)=\kappa(\Psi,\scrT)$ for any subset $\scrT$ in $\Sigma(d)$. Notably, this is the case when $\scrT=\Sigma(d), \scrT_\ns, \scrT_\iso,\scrT_\defe$.
The following limits are  simple corollaries of 
\pref{pro:kahkaMagicd1} and \lref{lem:kappaMagicLUBd1} (cf. \lref{lem:CubicStateSum} in \aref{app:MagicOrbitd1Aux}),
\begin{gather}\label{eq:kahkaSigLimd1}
\begin{gathered}
\lim_{k\to \infty}|\hka|(\Psi,\Sigma(d))=\lim_{k\to \infty}\hka(\Psi,\Sigma(d))=\lim_{k\to \infty} \kappa(\Psi,\Sigma(d))=6,\\
\lim_{d\to \infty}|\hka|(\Psi,\Sigma(d))=\lim_{d\to \infty}\hka(\Psi,\Sigma(d))=\lim_{d\to \infty} \kappa(\Psi,\Sigma(d))=\begin{cases}
10 & n\geq 2,k=1,\\
6 & n\geq k\geq 2, 
\end{cases}\\
\lim_{d\to \infty} \kappa(\Psi,\Sigma(d))=10,\quad 
\lim_{d\to \infty}\hka(\Psi,\Sigma(d))=5,\quad n=k= 1. 
\end{gathered}
\end{gather}
Note that the specific choice of the magic state $|\Psi\>$ in $\scrM_{n,k}(d)$  does not affect these limits,  which is  analogous to the counterparts for the case $d=2\mmod 3$ as presented in \eref{eq:kahkaSigLimd23}.  On the other hand, $|\hka|(\Psi,\Sigma(d))$ has no limit when $n=k=1$ and $d\to \infty$, which is in sharp contrast with \eref{eq:kahkaSigLimd23}.

\Lsref{lem:kappaMagicOned1}-\ref{lem:kappaMagicLUBd1} are instructive to understanding the basic properties of magic orbits. Nevertheless, some bounds in \lref{lem:kappaMagicLUBd1} are suboptimal. 
By virtue of more sophisticated analysis we can derive
better bounds as presented in 
\lsref{lem:hkaAuxMagicd1} and \ref{lem:kahkaNsSigMagicSLUBd1} below. 
They are proved in \aref{app:kahkaUBd1} after we establish some auxiliary results in \aref{app:MagicOrbitd1Aux}.
\begin{lemma}\label{lem:hkaAuxMagicd1}
	Suppose $d$ is an odd prime, $d=1\mmod 3$, and	$|\Psi\>\in \scrM_{n,k}(d)$. Then 
	\begin{align}\label{eq:hkaSigNsAbsd1A}
	|\hka|(\Psi,\Sigma(d))\leq\kappa(\Sigma(d))+\frac{6d}{D},\quad |\hka|(\Psi,\scrT_\ns)\leq\kappa(\scrT_\ns)+\frac{6d}{D}.
	\end{align}
	If in addition $n\geq [2+(\log_9d)^{-1}]k+(\log_6d)^{-1}$,  then $0\leq \hka(\Psi,\caT)\leq \kappa(\Psi,\caT)$ for all $\caT\in \Sigma(d)$ and
	\begin{align}\label{eq:hkaSigNsAbsd1B}
	|\hka|(\Psi,\Sigma(d))= \hka(\Psi,\Sigma(d))=\frac{D+2}{D+d}\kappa(\Psi,\Sigma(d)), \quad |\hka|(\Psi,\scrT_\ns)=\hka(\Psi,\scrT_\ns)\leq \frac{D+2}{D+d}\kappa(\scrT_\ns).
	\end{align} 
\end{lemma}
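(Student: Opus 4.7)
The plan is to exploit the linear identity $\hka(\Psi,\caT)=\tfrac{D+2}{D-1}\bigl(\kappa(\Psi,\caT)-\alpha\bigr)$ with $\alpha:=\kappa(\Psi,\Sigma(d))/(2(D+d))$ from Proposition~\ref{pro:hkaka}. Since every magic state satisfies $\kappa(\Psi,\caT)\ge 0$ by Lemma~\ref{lem:kappaMagicLUBd1}, the sign of $\hka(\Psi,\caT)$ is governed solely by whether $\kappa(\Psi,\caT)$ lies above or below $\alpha$; and because $\kappa(\Psi,\caT)=1$ on $\scrT_\sym$ while $\alpha<1$, the set $A_-:=\{\caT:\hka(\Psi,\caT)<0\}$ is contained in $\scrT_\ns$.

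For the first part~\eqref{eq:hkaSigNsAbsd1A}, I decompose $|\hka|(\Psi,\Sigma(d))=\hka(\Psi,\Sigma(d))+2\sum_{\caT\in A_-}|\hka(\Psi,\caT)|$, substitute $\hka(\Psi,\Sigma(d))=\tfrac{D+2}{D+d}\kappa(\Psi,\Sigma(d))$ from Proposition~\ref{pro:hkaka}, and bound each correction term by $|\hka(\Psi,\caT)|\le\tfrac{D+2}{D-1}\alpha$ using $0\le\kappa(\Psi,\caT)<\alpha$ on $A_-$. With the crude bound $|A_-|\le|\scrT_\ns|=2(d-2)$ this yields a correction proportional to $(d-2)\kappa(\Psi,\Sigma(d))/D$. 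Feeding in the magic-state refinement $\kappa(\Psi,\Sigma(d))\le 6+2d(4/d)^k$ from Lemma~\ref{lem:kappaMagicLUBd1} — which stays bounded by a small constant for $k\ge 1$ and tends to $6$ as $k$ grows — then collapses the excess to the claimed $6d/D$; the case $k=0$ is trivial since $A_-=\emptyset$. The same argument restricted to $\scrT_\ns$, together with $\kappa(\Psi,\Sigma(d))=6+\kappa(\Psi,\scrT_\ns)$, gives the bound on $|\hka|(\Psi,\scrT_\ns)$.

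For the second part, I translate the hypothesis into $D\ge 6\cdot (9d^2)^k$ via $d^{1/\log_9 d}=9$ and $d^{1/\log_6 d}=6$ (with the ceiling implicit in integer $n$ providing slack). Lemma~\ref{lem:kappaMagicOned1} combined with~\eqref{eq:kappaMagicDecomd1} and~\eqref{eq:kappaTdefMagicd1} gives $\kappa(\Psi,\caT)>(9d^2)^{-k}$ for every $\caT\in\scrT_\ns\cup\scrT_\defe$, while $\kappa(\Psi,\caT)=1$ on $\scrT_\sym$. Using the upper bound on $\kappa(\Psi,\Sigma(d))$ from Lemma~\ref{lem:kappaMagicLUBd1}, the hypothesis on $D$ simultaneously forces $\alpha\le (9d^2)^{-k}$ and $(D+2)\alpha\ge 3$. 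The first implies $\hka(\Psi,\caT)\ge 0$ throughout and hence $|\hka|=\hka$, while the second, after rearranging $\hka=\tfrac{D+2}{D-1}(\kappa-\alpha)$ into $\kappa-\hka=\tfrac{(D+2)\alpha-3\kappa}{D-1}$ and invoking $\kappa(\Psi,\caT)\le 1$, delivers $\hka(\Psi,\caT)\le\kappa(\Psi,\caT)$. The stated equalities then follow at once from Propositions~\ref{pro:hkaka} and~\ref{pro:kahkaMagicd1}, and the inequality on $|\hka|(\Psi,\scrT_\ns)$ from~\eqref{eq:kaNsSigMagicd1}.

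The hardest part is the tight calibration of the constants, in particular the coefficient $6$ in the $6d/D$ correction of the first part and the matching threshold $D\ge 6(9d^2)^k$ in the second. A naive triangle bound $|\kappa-\alpha|\le\kappa+\alpha$ on its own produces an excess of order $\kappa(\Psi,\Sigma(d))\cdot d/D$, which together with the generic bound $\kappa(\Psi,\Sigma(d))\le 2d+2$ available for arbitrary pure states would only give $O(d^2/D)$. Descending to the sharp linear scaling requires both the magic-state tightening from Lemma~\ref{lem:kappaMagicLUBd1} and, in the regime where $A_-$ is nonempty, the structural observation that $\kappa(\Psi,\cdot)$ takes only three distinct values on $\scrT_\ns$ with multiplicities $\mu_0,\mu_1,\mu_2$ close to $2(d-2)/3$ by~\eqref{eq:mujFormula}, so $A_-$ is confined to the group attaining the smallest of these values rather than spanning all of $\scrT_\ns$.
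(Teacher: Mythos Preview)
Your second part is correct and coincides with the paper's argument via Lemma~\ref{lem:hkaTLUBd1}: the hypothesis indeed translates to $D\ge 6(9d^2)^k$, and the lower bound $\kappa(\Psi,\caT)\ge(9d^2)^{-k}$ from Lemma~\ref{lem:kappaMagicLUBd1} together with $\kappa(\Psi,\Sigma(d))\le 11$ forces $\hka(\Psi,\caT)\ge 0$ throughout, after which Lemma~\ref{lem:kahakRelation} gives $\hka\le\kappa$.

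Your first part has a genuine gap. The assertion that ``$A_-$ is confined to the group attaining the smallest of these values'' is false in general: when $n$ is close to $k$ (in particular $n=k$), the threshold $\alpha=\kappa(\Psi,\Sigma(d))/(2(D+d))$ can exceed the middle of the three values, placing two of the three groups in $A_-$. The paper's Lemma~\ref{lem:hkaAuxAux} treats this two-group case explicitly, exploiting $(d\kappa_0^\uparrow)^k+(d\kappa_1^\uparrow)^k>2$ from Lemma~\ref{lem:kappaOrder}, the refined bound $d^k\alpha\le 40/13$ from Lemma~\ref{lem:kappaSigUBmagic}, and the multiplicity estimates of Lemma~\ref{lem:muj}; it also relies on direct verification for a finite range of small $(d,n,k)$. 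Moreover, even granting your one-group restriction $|A_-|\approx 2d/3$, the estimate with $\kappa(\Psi,\Sigma(d))\approx 10$ at $k=1$ yields a correction $\approx 20d/(3D)$, still above $6d/D$. The paper reaches the claimed constant via a dichotomy you do not make: for non-uniform $|\Psi\>\in\scrM_{n,k}(d)\setminus\scrM_{n,k}^\id(d)$ it invokes the much sharper bound $\kappa(\Psi,\scrT_\ns)\le 8/d$ from Lemma~\ref{lem:CubicState12Sum}, after which the plain triangle inequality already suffices; for uniform $|\Psi\>\in\scrM_{n,k}^\id(d)$ it delegates to Lemma~\ref{lem:hkaAuxMagicId} (built on Lemma~\ref{lem:hkaAuxAux}), which actually delivers the stronger correction $+16/D$, whence $+6d/D$ since $d\ge 7$.
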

Next, define
\begin{align}\label{eq:gammadk}
\gamma_{d,k}&:= \frac{4^k\gamma_k}{d^{k-1}},\qquad
\gamma_k:=\begin{cases}
9/8 &k=1, \\
15/16 &k= 2,\\
13/16 &k\geq 3.
\end{cases}	
\end{align}
The significance of $\gamma_{d,k}$ is manifested in \lref{lem:kahkaNsSigMagicSLUBd1} and \thsref{thm:Phi3MagicUBd1}-\ref{thm:AccurateDesignMagic} below.

\begin{lemma}\label{lem:kahkaNsSigMagicSLUBd1}
	Suppose $d$ is an odd prime, $d=1\mmod 3$, and $|\Psi\>\in \scrM_{n,k}(d)$ with $k\geq 1$. Then 
	\begin{align}
	\kappa(\Psi,\Sigma(d))-6=\kappa(\Psi,\scrT_\ns)&\leq 	\gamma_{d,k}\leq \frac{9}{8}\times\frac{4^k}{d^{k-1}}, \label{eq:kaNsSigUBd1} 
	\end{align}
	and the same bounds apply to $|\hka|(\Psi,\scrT_\ns)$ and $|\hka|(\Psi,\Sigma(d))-6$. If in addition $|\Psi\>\in \scrM^\id_{n,k}(d)$, then 
	\begin{align}\label{eq:kappaNsSigLB}
	\kappa^2(\Psi,\Sigma(d))\geq \frac{6(D+d)}{(D+j)}\kappa(\Psi,\Sigma(d),2),\quad 	d^k\kappa(\Psi,\scrT_\ns)\geq \begin{cases}
	2(d-2) & k=0,1,\\
	3\times 2^{k-1}(d-2) &k\geq 2,
	\end{cases}
	\end{align}	
	where $j=3$ if $n=k=1$ and $d=7$, while  $j=2$ otherwise. 		
\end{lemma}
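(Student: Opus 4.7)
The plan is to unfold both parts of the lemma using the decomposition $\kappa(\Psi,\scrT_\ns)=\sum_{j=0}^{2}\mu_j\prod_{i=1}^{k}\kappa_j(f_i)$ from Proposition~\ref{pro:kahkaMagicd1}, combined with the Gauss-sum representations of $\mu_j$ in \eqref{eq:mujFormula} and of $\kappa_j(f)$ in \eqref{eq:kappaMagid1j}. Part~1 will be handled by separating small $k$---where closed-form trigonometric expansions produce the sharp constants $\gamma_1=9/8$, $\gamma_2=15/16$, $\gamma_3=13/16$---from $k\geq 4$, where an iterated estimate propagates $\gamma_3$ upward; the bound will then be transferred to $|\hka|$ via Lemma~\ref{lem:hkaAuxMagicd1}. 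Part~2 will exploit the uniformity of states in $\scrM_{n,k}^{\id}(d)$, which forces $\kappa_j(f_i)$ to be independent of $i$ and collapses each product to a single power $\kappa_j^k$.

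For Part~1, substituting the Gauss-sum forms $\mu_j=\frac{2(d-2)}{3}+\frac{4\sqrt{d}}{3}\cos(\phi_3+\frac{4\pi j}{3})$ and $\kappa_j(f_i)=\frac{2}{d}+\frac{2}{d}\cos(\phi_{f_i}+\frac{2\pi j}{3})$ into the decomposition and collapsing the $j$-sum via $\sum_{j=0}^{2}\rme^{2\pi\rmi mj/3}=0$ for $m\not\equiv 0\pmod 3$ leaves only finitely many oscillatory terms. For $k=1$ this yields the closed form $\kappa(\Psi,\scrT_\ns)=4(d-2)/d+(4/\sqrt{d})\cos(\phi_3+\phi_{f_1})$, whose worst case $4(d-2)/d+4/\sqrt{d}$ is dominated by $9/2$ via the elementary inequality $(\sqrt{d}-4)^2\geq 0$. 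Analogous but longer expansions for $k=2,3$, controlled by the Newton-type identities of Lemma~\ref{lem:kappaOrder} for the elementary symmetric functions of $\kappa_j^\uparrow$, produce $\gamma_2$ and $\gamma_3$. For $k\geq 4$, the bound $\kappa_j^\uparrow<4/d$ combined with $\kappa_0^\uparrow\kappa_1^\uparrow\kappa_2^\uparrow\leq(4d-27)/d^4$ from \eqref{eq:kappaSumProd} yields $\sum_j(\kappa_j^\uparrow)^k\leq(4/d)^{k-3}\cdot 66/d^3$; combining this with the rearrangement bound $\sum_j\mu_j\prod_i\kappa_j(f_i)\leq(\max_j\mu_j)\sum_j(\kappa_j^\uparrow)^k$ and $\max_j\mu_j\leq 2(d-2)/3+4\sqrt{d}/3$ produces an estimate strictly below $\gamma_{d,k}$. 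The extension to $|\hka|(\Psi,\scrT_\ns)$ and $|\hka|(\Psi,\Sigma(d))-6$ is immediate from Lemma~\ref{lem:hkaAuxMagicd1} in the regime where $\hka(\Psi,\caT)\geq 0$ throughout $\Sigma(d)$; in the remaining regime I bound $|\hka|(\Psi,\caT)$ elementwise by $\frac{D+2}{D-1}[\kappa(\Psi,\caT)+\kappa(\Psi,\Sigma(d))/(2D+2d)]$ and absorb the extra $\caO(d^2/D)$ correction using the slack left in the trigonometric estimates.

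For Part~2, uniformity makes $\kappa_j(f_i)$ independent of $i$, so the relevant sums reduce to $a=\sum_j\mu_j\kappa_j^k$ and $b=\sum_j\mu_j\kappa_j^{2k}$ with $(\kappa_0,\kappa_1,\kappa_2)$ a permutation of $(\kappa_0^\uparrow,\kappa_1^\uparrow,\kappa_2^\uparrow)$. Since $\kappa_j\leq 4/d\leq 4/7$ for $d\geq 7$ and $k\geq 1$, one has $b\leq(4/d)a$ and hence $(6+a)^2-6(6+b)\geq 12a-6b\geq 6(2-4/d)a\geq 0$; the additional correction $6(d-2)(6+b)/(D+2)$ needed to promote the prefactor $6$ to $6(D+d)/(D+2)$ will be absorbed using the lower bound on $a$ supplied by the second assertion. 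That lower bound on $d^k\kappa(\Psi,\scrT_\ns)$ is derived by combining the inequality $(d\kappa_0^\uparrow)^k+(d\kappa_1^\uparrow)^k>2$ from Lemma~\ref{lem:kappaOrder}---which guarantees that at least two of the $\kappa_j$ values contribute nontrivially---with $\min_j\mu_j\geq 2(d-2)/3-4\sqrt{d}/3$; the $k\leq 1$ case collapses to $\sum_j\kappa_j^\uparrow=6/d$, while the $k\geq 2$ case leverages the Newton-type identities. The exceptional value $j=3$ for $n=k=1$, $d=7$ is forced because $\tg(7)^2=1$ saturates the lower bound on $\kappa_0^\uparrow\kappa_1^\uparrow\kappa_2^\uparrow$ in \eqref{eq:kappaSumProd}, shrinking $a$ and enlarging the correction.

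The main obstacle is that the constants $\gamma_k$ are essentially tight at the smallest admissible primes $d=7,13$, so the Gauss-sum fluctuations of order $\sqrt{d}$ cannot be discarded as lower-order corrections but must be tracked through each trigonometric sum and paired with the exact amount of slack furnished by $(\sqrt{d}-4)^2\geq 0$ at $k=1$ and its analogues at $k=2,3$. An ancillary challenge is that in the non-uniform case $A_j=\prod_i\kappa_j(f_i)$ is bounded only via the rearrangement inequality $\sum_j A_j\leq\sum_j(\kappa_j^\uparrow)^k$, whose verification requires decomposing the $f_i$'s by their cubic characters and checking that all mixing permutations strictly decrease the unweighted sum.
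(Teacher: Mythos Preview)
Your approach diverges from the paper's and contains a genuine gap at small primes, where the constants $\gamma_k$ are essentially tight.

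\textbf{What the paper does.} For the upper bound in \eqref{eq:kaNsSigUBd1} the paper never separates the $\mu_j$ weights from the $\kappa_j$ factors. Instead it keeps the weighted sum intact by peeling off a small tensor factor: for $k\geq m$ one has $\kappa(\Psi,\caT)\leq (4/d)^{k-m}\kappa(\varphi,\caT)$ termwise, whence $\kappa(\Psi,\scrT_\ns)\leq(4/d)^{k-m}\kappa(\varphi,\scrT_\ns)$. It then invokes pre-computed bounds $\kappa(f,\scrT_\ns,2)\leq 15/d$ and $\kappa(f,\scrT_\ns,3)\leq 52/d^2$ in the uniform case (\lref{lem:CubicStateSum}) and $\kappa(\varphi,\scrT_\ns)\leq 8/d$ in the non-uniform case (\lref{lem:CubicState12Sum}), yielding exactly the constants $15/16$ and $13/16$. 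Crucially, those input bounds are established by the Gauss-sum formula only for $d\geq 100$; for $d<100$ the paper checks them by direct computation. The transfer to $|\hka|$ likewise passes through auxiliary lemmas (\lref{lem:hkaAuxMagicId}, \lref{lem:hkaAuxAux}) that handle $d\geq 31$ analytically and fall back to direct verification for $d\in\{7,13,19\}$ with small $n,k$. Part~2 is delegated to \lref{lem:kappaNsSigLBd1}, which again uses explicit numerics for $k\in\{1,2\}$, $d<100$, and for $n=k=1$, $7\leq d<100$.

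\textbf{Where your argument breaks.} Your estimate $\sum_j\mu_j A_j\leq(\max_j\mu_j)\sum_j(\kappa_j^\uparrow)^k$ together with the Gauss-sum bound $\max_j\mu_j\leq\tfrac{2(d-2)}{3}+\tfrac{4\sqrt d}{3}$ is too lossy at $d=7$: there $\max_j\mu_j=6$ while your upper bound gives roughly $6.86$, and even with the exact Newton-identity value $\sum_j(\kappa_j^\uparrow)^3=381/2401$ your $k=4$ estimate evaluates to about $0.62$, strictly above $\gamma_{7,4}\approx 0.61$. The separation of weights and products is precisely what the paper avoids. More seriously, for Part~2 your lower bound hinges on $\min_j\mu_j\geq\tfrac{2(d-2)}{3}-\tfrac{4\sqrt d}{3}$, but at $d=7$ one actually has $(\mu_0,\mu_1,\mu_2)=(4,0,6)$, so $\min_j\mu_j=0$ and your lower bound on $d^k\kappa(\Psi,\scrT_\ns)$ collapses to zero. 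The claim that ``the $k\leq 1$ case collapses to $\sum_j\kappa_j^\uparrow=6/d$'' cannot rescue this, since the weighted sum $\sum_j\mu_j\kappa_{\sigma(j)}$ is what must be bounded below, and at $d=7$ the worst permutation can place $\kappa_2^\uparrow$ on the index with $\mu_j=0$. In short, the paper's resort to finite numerical checks for small $d$ is not optional slack but a load-bearing part of the proof; an analytic-only route with the tools you describe does not close.
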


\subsection{\label{sec:MagicOrbitKeyd1}Approximate 3-designs based on magic orbits }

Here we  clarify the deviations of a general magic orbit $\orb(\Psi)$ for	$|\Psi\>\in \scrM_{n,k}(d)$ with respect to three figures of merit, namely, the third normalized frame potential $\bar{\Phi}_3(\orb(\Psi))$, operator norm $\|\bar{Q}(\orb(\Psi))\|$ of the third normalized moment operator, and shadow norm $\|\orb(\Psi)\|_\sh$. Note that $\|\orb(\Psi)\|_\sh$ is approximately equal to $\|\Ob_0\|^2_\sh=\|\Ob_0\|^2_{\orb(\Psi)}$, where $\Ob$ is the projector onto a stabilizer state and $\Ob_0=\Ob-\tr(\Ob)\bbI/D$. \Thsref{thm:Phi3MagicUBd1}-\ref{thm:StabShNormMagicd1} below are proved in \aref{app:MagicKeyd1Proofs}.

By virtue of \thref{thm:Phi3LUB} and \pref{pro:kahkaMagicd1} we can calculate
the third normalized frame potential $\bar{\Phi}_3(\orb(\Psi))$ and  derive informative bounds as presented in the following theorem. 
\begin{theorem}\label{thm:Phi3MagicUBd1}
	Suppose $d$ is an odd prime, $d=1\mmod 3$, and $|\Psi\>\in \scrM_{n,k}(d)$ with $k\geq 1$. Then 
	\begin{align}\label{eq:Phi3MagicUBd1} \bar{\Phi}_3(\orb(\Psi))\leq  1+\min\left\{\frac{2}{11},\frac{\kappa(\Psi,\scrT_\ns,2)}{6}+\frac{3d}{D^2}\right\},\quad  \bar{\Phi}_3(\orb(\Psi))\leq 1+ \frac{\gamma_{d,2k}}{6}\leq  1+\frac{5}{32}\times\frac{16^k}{d^{2k-1}}. 
	\end{align}
\end{theorem}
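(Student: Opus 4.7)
\medskip

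\noindent\emph{Proof proposal.} My plan is to start from the exact formula in \thref{thm:Phi3LUB},
\begin{align}
\bar{\Phi}_3(\orb(\Psi))=\frac{D+2}{6(D-1)}\left[\kappa(\Psi,\Sigma(d),2)-\frac{\kappa^2(\Psi,\Sigma(d))}{2(D+d)}\right],
\end{align}
introduce the shorthand $X=\kappa(\Psi,\scrT_\ns)$ and $Y=\kappa(\Psi,\scrT_\ns,2)$ (so that $\kappa(\Psi,\Sigma(d))=6+X$ and $\kappa(\Psi,\Sigma(d),2)=6+Y$), and exploit the following three pieces of structure available for magic orbits: (i) by \lref{lem:kappaMagicLUBd1} we have $0\leq\kappa(\Psi,\caT)\leq 1$ on $\scrT_\ns$, so $Y\leq X$; (ii) the elementary Cauchy--Schwarz inequality $X^2\leq 2(d-2)Y$; (iii) the bound $Y\leq \gamma_{d,2k}$, which is the key input for the second inequality.

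\smallskip

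Piece (iii) is the cleanest of the three: by the product formula in \eref{eq:kappaPsiProd}, $\kappa(\Psi^{\otimes 2},\caT)=\kappa(\Psi,\caT)^2$, so $\kappa(\Psi,\scrT_\ns,2)=\kappa(\Psi^{\otimes 2},\scrT_\ns)$. Since $|\Psi\>^{\otimes 2}\in\scrM_{2n,2k}(d)$ with $2k\geq 2$, \lref{lem:kahkaNsSigMagicSLUBd1} applied to $|\Psi\>^{\otimes 2}$ gives $Y\leq\gamma_{d,2k}$. With the explicit values of $\gamma_{2k}$ from \eref{eq:gammadk}, this in turn produces the final numerical bound $\gamma_{d,2k}/6\leq (5/32)\cdot 16^k/d^{2k-1}$ stated in the theorem.

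\smallskip

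For the bound $\bar{\Phi}_3(\orb(\Psi))-1\leq Y/6+3d/D^2$, the plan is to expand $(6+X)^2=36+12X+X^2$ and organize the exact formula as
\begin{align}
\bar{\Phi}_3(\orb(\Psi))-1=\frac{D+2}{12(D-1)(D+d)}\left[\frac{36(d-2)}{D+2}+2Y(D+d)-12X-X^2\right].
\end{align}
Replace $-12X$ by $-12Y$ using $X\geq Y\geq 0$ from piece (i) and drop the $-X^2$ term (nonpositive). The resulting expression splits as $Y/6$ plus the leftover $3(d-2)/[(D-1)(D+d)]$ plus a $Y/[2(D-1)]$-type remainder that is cancelled by the extra $-6Y/(D+d)$ produced by the $X\to Y$ substitution. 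The constant remainder is bounded by $3d/D^2$ via the elementary inequality $(d-2)D^2\leq d(D-1)(D+d)$, which is equivalent to $dD+d^2\leq 2D^2+d^2D$ and clearly holds for $D\geq d\geq 7$.

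\smallskip

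The absolute constant $2/11$ and the clean final form $\bar{\Phi}_3(\orb(\Psi))\leq 1+\gamma_{d,2k}/6$ are the subtle part. The plan is to observe that combining the two preceding bounds yields $\bar{\Phi}_3-1\leq \gamma_{d,2k}/6+3d/D^2$, which is already $\leq\gamma_{d,2k}/6+\mathcal{O}(d/D^2)$; in the regime where $\gamma_{d,2k}/6\geq 3d/D^2$ the additive correction is absorbed into a tightening of the first-order term using the slack $\frac{D+2}{6(D-1)}-\frac{1}{6}=\frac{1}{2(D-1)}$ and the bound $Y\leq\gamma_{d,2k}$, while in the opposite regime a direct estimate using $\gamma_{d,2k}\leq 15/d$ and $D\geq d$ gives $\gamma_{d,2k}/6+3d/D^2\leq 2/11$ once $d$ or $D$ is large enough, and the finitely many remaining small-$(d,D,k)$ cases are handled by the sharp Gauss-sum evaluation from \lref{lem:kappaMagicOned1}. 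The main obstacle I anticipate is precisely this careful bookkeeping: one must simultaneously exploit $X\geq Y$ and $X^2\leq 2(d-2)Y$ in just the right combination to make the residual $Y/[2(D-1)]$ vanish against $3d/D^2$ without sacrificing the clean $1/6$ coefficient, and to verify that the small $(d,n,k)$ exceptions all satisfy the $2/11$ bound via explicit Gauss-sum arithmetic.
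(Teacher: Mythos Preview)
Your overall strategy matches the paper's: start from the exact formula in \thref{thm:Phi3LUB}, set $X=\kappa(\Psi,\scrT_\ns)$ and $Y=\kappa(\Psi,\scrT_\ns,2)$, and recognize that $Y=\kappa(\Psi^{\otimes 2},\scrT_\ns)$ so that \lref{lem:kahkaNsSigMagicSLUBd1} applied to $|\Psi\rangle^{\otimes 2}\in\scrM_{2n,2k}(d)$ yields $Y\leq\gamma_{d,2k}$. Your derivation of $\bar\Phi_3-1\leq Y/6+3d/D^2$ via $X\geq Y$ and $(d-2)D^2\leq d(D-1)(D+d)$ is correct and is essentially the paper's computation (through \coref{cor:Phi3UB}) in its non-uniform branch.

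There is, however, a real gap in your passage to the clean bound $\bar\Phi_3\leq 1+\gamma_{d,2k}/6$. The ``absorption'' sketch does not work: from $\bar\Phi_3-1\leq Y/6+3d/D^2$ and $Y\leq\gamma_{d,2k}$ alone the additive $3d/D^2$ cannot be removed, since $Y$ may saturate $\gamma_{d,2k}$; and the slack $\frac{D+2}{6(D-1)}-\frac16=\frac{1}{2(D-1)}$ you invoke has the wrong sign (it multiplies $6+Y$ and \emph{increases} the bound). The paper avoids this by splitting on whether $|\Psi\rangle\in\scrM_{n,k}^{\id}(d)$. In the uniform case it uses the \emph{first} inequality of \lref{lem:kahkaNsSigMagicSLUBd1}, namely $\kappa^2(\Psi,\Sigma(d))\geq\frac{6(D+d)}{D+2}\kappa(\Psi,\Sigma(d),2)$, which substituted directly into the exact formula gives
\[
\bar\Phi_3\leq\frac{D+2}{6(D-1)}\Bigl[\kappa(\Psi,\Sigma(d),2)-\frac{3\,\kappa(\Psi,\Sigma(d),2)}{D+2}\Bigr]=\frac{\kappa(\Psi,\Sigma(d),2)}{6}=1+\frac{Y}{6},
\]
with no residual term; the exceptional cases $n=k=1$, $d\in\{7,13\}$ are checked by hand. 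In the non-uniform case ($k\geq 2$) the paper uses \lref{lem:CubicState12Sum} to get the sharper estimate $Y\leq 16^k/(2d^{2k-1})$, strictly below $\gamma_{d,2k}$ by a fixed constant factor, which then leaves room to absorb the multiplicative factor $1+3(d-2)/D^2$ from \coref{cor:Phi3UB}. The $2/11$ constant drops out of the clean $\gamma_{d,2k}/6$ bound together with those same small-case checks, rather than from a separate regime split. The ingredient you are missing is precisely the lower bound on $\kappa^2(\Psi,\Sigma(d))$ (equivalently, the second part of \coref{cor:Phi3UB}); you only invoke the upper-bound half of \lref{lem:kahkaNsSigMagicSLUBd1}.
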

As a simple corollary, \thref{thm:Phi3MagicUBd1} yields the following limits,
\begin{align}\label{eq:PhiLimd1}
\lim_{d\to \infty} \bar{\Phi}_3(\orb(\Psi))=\lim_{k\to \infty}\bar{\Phi}_3(\orb(\Psi))=1. 
\end{align} 
The deviation  $\bar{\Phi}_3(\orb(\Psi))-1$ vanishes quickly as $k$  and $d$ increase (assuming $k\geq 1$) as illustrated in \fsref{fig:Phi_Q_sn_k10} and \ref{fig:framepot}, where  $|\Psi\>$ belongs to the set $\scrM_{n,k}^\can(d)$ of canonical magic states defined in \sref{sec:MagicState}.  The two figures also cover the case with
$d \neq 1 \mmod 3$ for completeness, in which  $\bar{\Phi}_3(\orb(\Psi))$ is determined by \thref{thm:Phi3Magicd23} [see also \eref{eq:PhiLimd23}] and shows a similar behavior. 
Notably, $\bar{\Phi}_3(\orb(\Psi))-1$ 
deceases exponentially with $k$.  In addition,  $\bar{\Phi}_3(\orb(\Psi))$ tends to increase with $n$  initially, but levels off quickly when $n\geq 5$, as illustrated in the right plot in \fref{fig:framepot}.

\begin{figure}[tb]
	\centering 
	\includegraphics[width=0.95\textwidth]{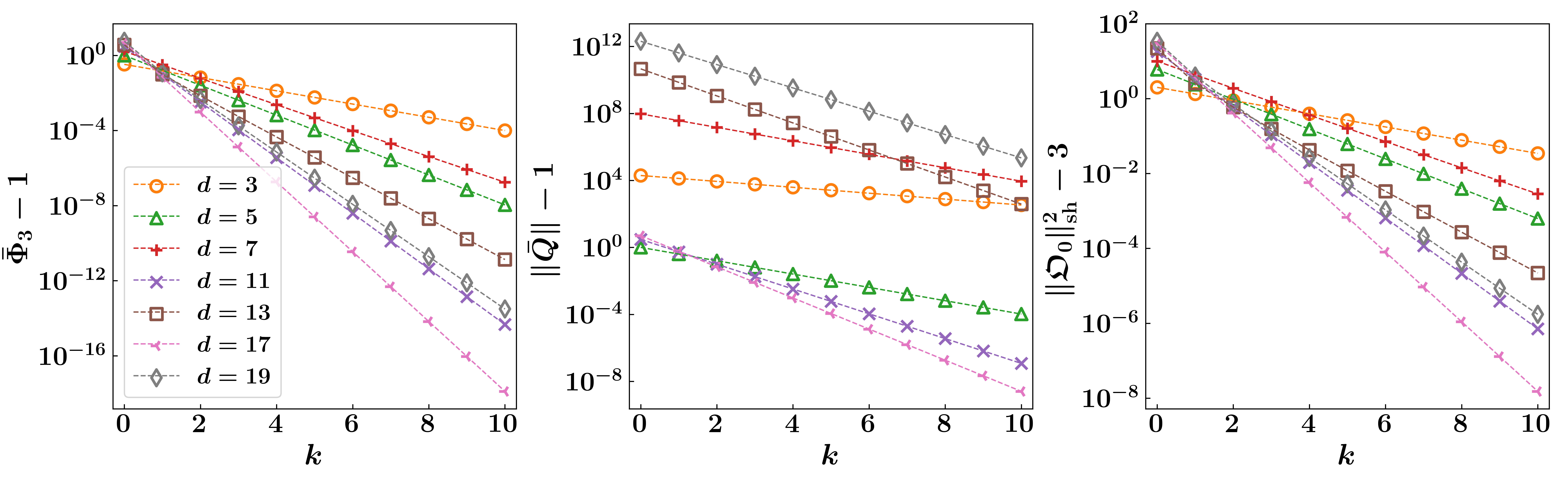}
	\caption{
		The deviations of $\bPhi_3(\orb(\Psi))$, $\|\bQ(\orb(\Psi))\|$, and $\|\Ob_0\|_\sh^2=\|\Ob_0\|_{\orb(\Psi)}^2$ associated with the canonical magic state $|\Psi\>$ in 
		$\scrM_{n,k}^\can(d)$. 	Here $n=10$ and $\Ob$ is the projector onto a stabilizer state. 	A  small deviation means the Clifford orbit  is close to a  3-design with respect to the corresponding figure of merit.	
	}\label{fig:Phi_Q_sn_k10}
\end{figure}

\begin{figure}[tb]
	\centering 
	\includegraphics[width=0.8\textwidth]{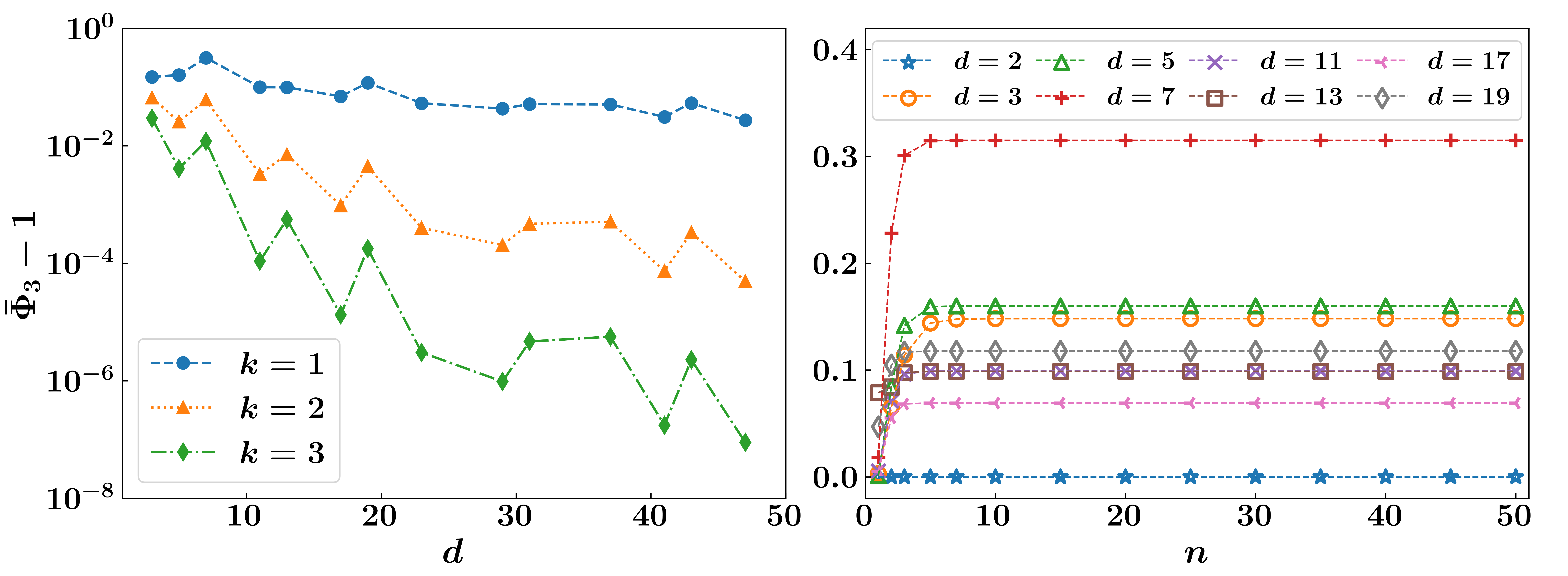}
	\caption{The deviation of  $\bPhi_3(\orb(\Psi))$ with $|\Psi\> \in \scrM_{n,k}^\can(d)$ as a function of $d$  and $n$.  Here  $n=50$ in the left plot and $k=1$ in the right plot. } 
	\label{fig:framepot}
\end{figure}

\begin{figure}[tb]
	\centering 
	\includegraphics[width=0.8\textwidth]{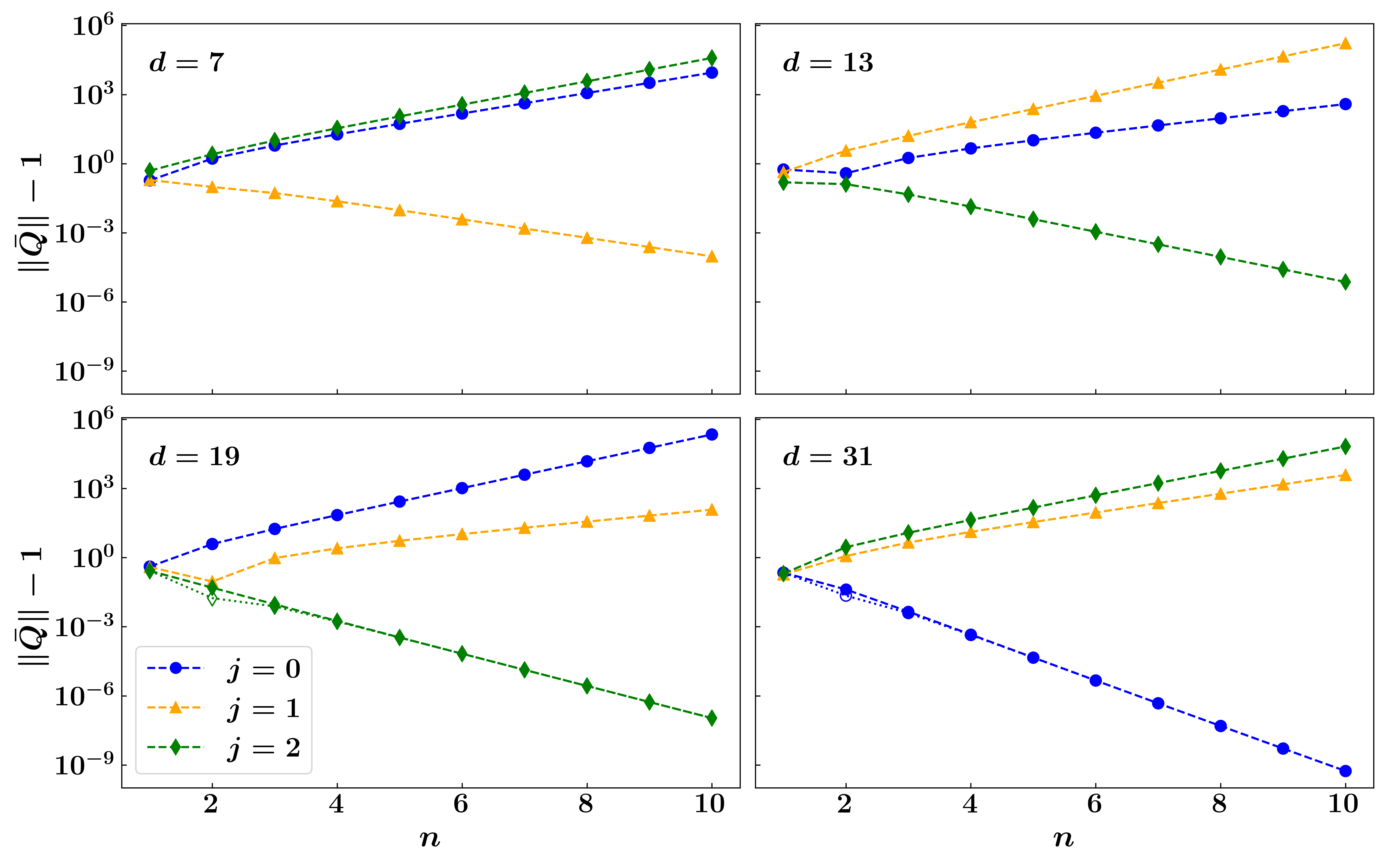}
	\caption{The operator norms  $\|\bQ(\orb(\Psi_j))\|$ associated with some magic states in $\scrM_{n,n}^\id(d)$, where $|\Psi_j\> =|\psi_{f_j}\>^{\otimes n}$ with $f_j(x)=\nu^jx^3$ and $j=0,1,2$. Results on four different local dimensions are shown in four plots, respectively. When $d=19$ and $j=2$, the figure shows an upper bound (solid mark) and lower bound (hollow mark) for $\|\bQ(\orb(\Psi_j))\|$ as presented in \eref{eq:MomentNormd13a} in \thref{thm:MomentNormd13}; the two bounds almost coincide although it is difficult to compute the exact value in this special case. The situation is similar when $d=31$ and $j=0$. 
	} 
	\label{fig:Qnormnu3}
\end{figure}

\begin{figure}[tb]
	\centering 
	\includegraphics[width=0.8\textwidth]{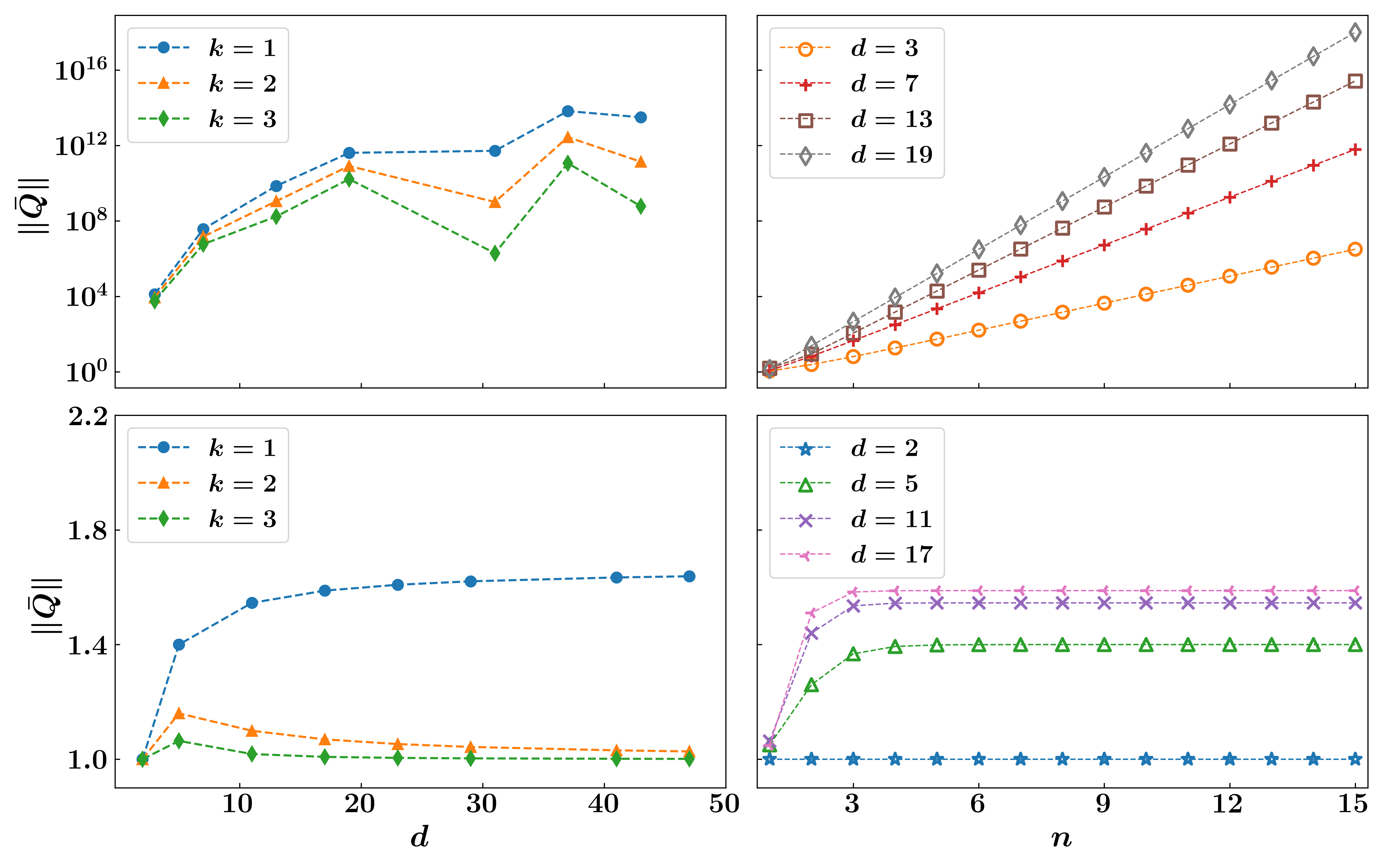}
	\caption{The operator norm  $\|\bQ(\orb(\Psi))\|$ with $|\Psi\> \in \scrM_{n,k}^\can(d)$ as a function of $d$  and $n$. Here $n=10$
		in the left column and  $k=1$ in the right column; $d$ is a prime with $d \neq 2 \mmod 3$  in the first row, while it is a prime with	$d=2 \mmod 3$  in the second row because the properties of  $\|\bQ(\orb(\Psi))\|$ in the two cases are dramatically different. 
	} 
	\label{fig:Qnorm}
\end{figure}
\begin{figure}[tb]
	\centering 
	\includegraphics[width=0.8\textwidth]{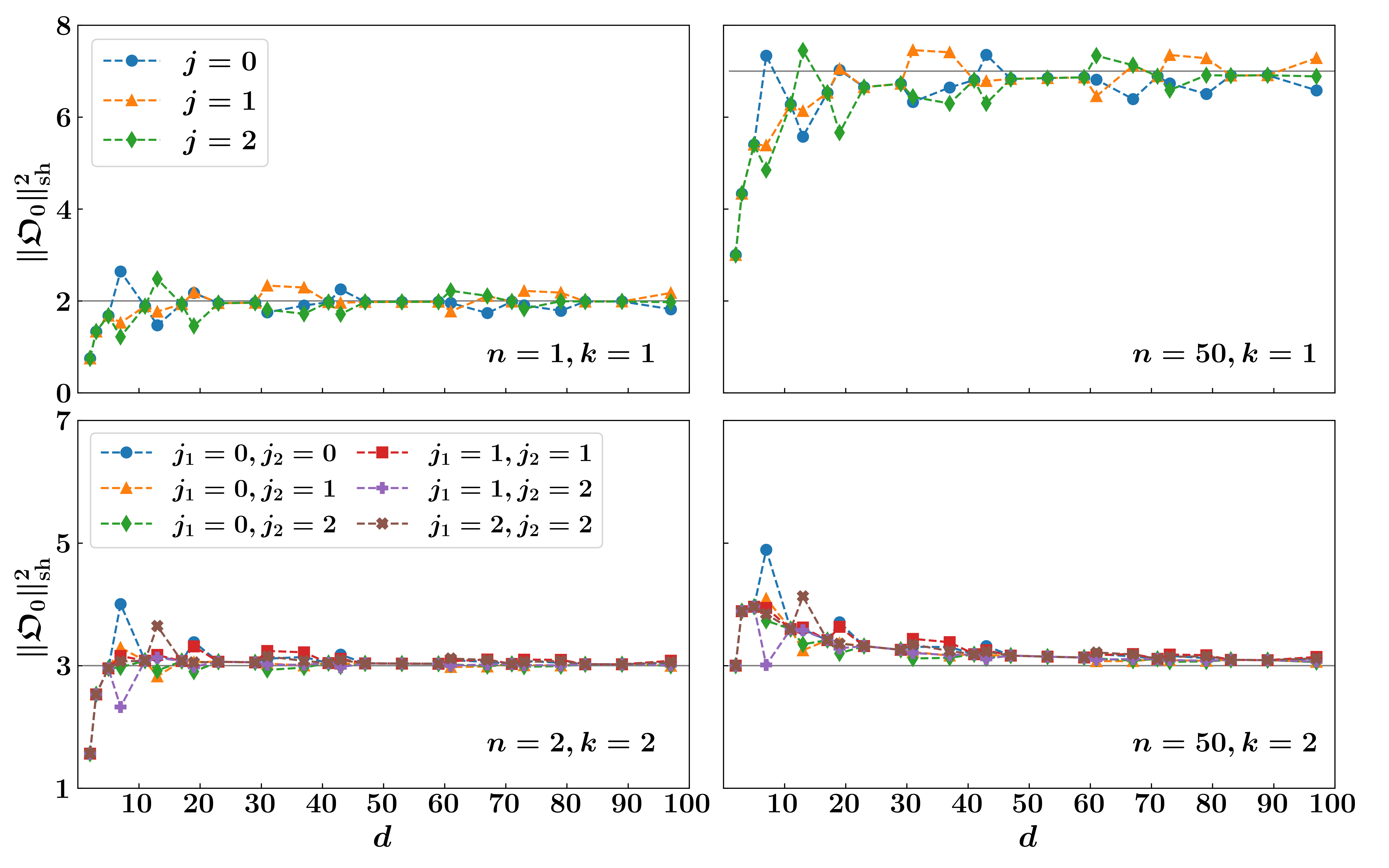}
	\caption{The squared shadow norm of $\Ob_0$ with respect to  $\orb(\Psi)$ for $|\Psi\> \in \scrM_{n,k}(d)$, where $\Ob$ is the projector onto a stabilizer state. Here the values of  $n,k$ are shown in each plot. In the first row, $k=1$ and the fiducial state $|\Psi\>$ is determined by the cubic function $f_j(x)=\nu^j x^3$ for  $j=0,1,2$, where $\nu$ is a primitive element of $\bbF_d$ as specified in \tref{tab:nu}. In the second row, $k=2$ and 
		$|\Psi\>$ is determined by the two cubic functions $f_{j_1}(x)$ and  $f_{j_2}(x)$. As a benchmark, the horizontal line in each plot denotes the asymptotic value of $\|\Ob_0\|_\sh^2$ when $d \to \infty$ as determined in \eqsref{eq:StabShNormMagicd23Lim}{eq:StabShNormMagicd1Lim}.  } 
	\label{fig:stabstate_k12}
\end{figure}

Next, by virtue of \thref{thm:MomentNormd13}, \psref{pro:hkaka}, \ref{pro:kahkaMagicd1}, and \eref{eq:kappaSigIsoDef} we can derive pretty good lower and upper bounds for  $\|\bQ(\orb(\Psi))\|$. Moreover,  an exact formula can be obtained when  $|\hka|(\Psi,\scrT_\iso)=\hka(\Psi,\scrT_\iso)$ or $|\hka|(\Psi,\scrT_\iso)\leq 2(D+2)\kappa(\caT_\defe)$, where $\caT_\defe$ is any stochastic Lagrangian subspace in $\scrT_\defe$. Note that the first condition holds
if  $n\geq [2+(\log_9d)^{-1}]k+(\log_6d)^{-1}$ by \lref{lem:hkaAuxMagicd1}. For small $d$ and $n$, $\|\bQ(\orb(\Psi))\|$ can also be calculated directly. In addition, we can derive simple explicit bounds for $\|\bQ(\orb(\Psi))\|$ as  presented in the following theorem, which are instructive to understanding the general trends. 
\begin{theorem}\label{thm:Moment3MagicLUBd1}
	Suppose $d$ is an odd prime, $d=1\mmod 3$, and $|\Psi\>\in \scrM_{n,k}(d)$ with $k\geq 1$. Then 
	\begin{gather}\label{eq:Moment3MagicLUBd1}
	\frac{d^{n-2k}}{3^{2k+1}}\leq \frac{D+2}{3(9d^2)^k}
	\leq \|\bQ(\orb(\Psi))\|\leq \max\left\{\frac{7}{4},\frac{4^kd^{n-k}}{3}\right\}. 
	\end{gather}
	If in addition $\hka(\Psi,\caT_\defe)\leq 0$ or $\kappa(\Psi,\caT_\defe)	\leq \kappa(\Psi,\Sigma(d))/[2(D+d)]$ for any stochastic Lagrangian subspace $\caT_\defe$  in $\scrT_\defe$, then 
	\begin{gather}\label{eq:Moment3MagicDefNegLUBd1}
	\frac{d^{n-2k}}{3^{2k+1}}
	\leq \|\bQ(\orb(\Psi))\|\leq 
	\frac{1}{6}|\hka|(\Psi,\Sigma(d))
	\leq  1+\frac{\gamma_{d,k}}{6}\leq 
	1+ \frac{3\times 4^{k-2}}{d^{k-1}}. 
	\end{gather}
\end{theorem}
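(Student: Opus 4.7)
The plan is to invoke \thref{thm:MomentNormd13} with $\caT_\defe$ any stochastic Lagrangian subspace in $\scrT_\defe$, which reduces everything to bounding the three quantities $\kappa(\Psi,\caT_\defe)$, $\hka(\Psi,\scrT_\iso)$, and $|\hka|(\Psi,\scrT_\iso)$ that appear in its lower and upper estimates. All three can be controlled via the preparatory analysis in \sref{sec:MagicOrbitTechbasis}, with \lsref{lem:kappaMagicOned1}, \ref{lem:kappaOrder}, \ref{lem:hkaAuxMagicd1}, and \ref{lem:kahkaNsSigMagicSLUBd1} as the key inputs.

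For the lower bound in \eref{eq:Moment3MagicLUBd1}, \thref{thm:MomentNormd13} gives $\|\bQ(\orb(\Psi))\|\geq (D+2)\kappa(\Psi,\caT_\defe)/3$. By \eref{eq:kappaTdefMagicd1}, $\kappa(\Psi,\caT_\defe)=\prod_{i=1}^{k}\kappa_0(f_i)$, and the strict inequality $\kappa_0(f_i)>(9d^2)^{-1}$ from \lref{lem:kappaMagicOned1}, applied factor by factor, yields $\kappa(\Psi,\caT_\defe)>(9d^2)^{-k}$. The two chained lower bounds then follow from $D+2\geq d^n$ and $(9d^2)^k=3^{2k}d^{2k}$.

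For the unconditional upper bound, \thref{thm:MomentNormd13} furnishes $6\|\bQ(\orb(\Psi))\|\leq \max\{|\hka|(\Psi,\scrT_\iso),\,2(D+2)\kappa(\Psi,\caT_\defe)\}$. The defect term is bounded via $\kappa_0(f_i)\leq 4/d$ from \lref{lem:kappaOrder}, producing $2(D+2)\kappa(\Psi,\caT_\defe)/6\leq 4^k d^{n-k}/3$ up to the difference between $D+2$ and $D$. For the isotropic term I would decompose $|\hka|(\Psi,\scrT_\iso)=6|\hka(\Psi,\Delta)|+|\hka|(\Psi,\scrT_\ns)-|\hka|(\Psi,\scrT_\defe)$, invoke $|\hka|(\Psi,\scrT_\ns)\leq\gamma_{d,k}$ from \lref{lem:kahkaNsSigMagicSLUBd1}, and control $|\hka(\Psi,\Delta)|$ using \pref{pro:hkaka} together with $\kappa(\Psi,\Sigma(d))\leq 6+\gamma_{d,k}$; a case analysis shows $|\hka|(\Psi,\scrT_\iso)/6\leq 7/4$ whenever the defect term is the smaller of the two candidates, so the $\max$ formulation covers both regimes. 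For the conditional statement \eref{eq:Moment3MagicDefNegLUBd1}, I would first observe that $\hka(\Psi,\caT_\defe)\leq 0$ is equivalent to $\kappa(\Psi,\caT_\defe)\leq \kappa(\Psi,\Sigma(d))/(2D+2d)$ via \pref{pro:hkaka}, so either hypothesis activates \eref{eq:MomentNormd13c}, giving $\|\bQ(\orb(\Psi))\|\leq |\hka|(\Psi,\scrT_\iso)/6\leq |\hka|(\Psi,\Sigma(d))/6$. Combining \lref{lem:hkaAuxMagicd1} with \lref{lem:kahkaNsSigMagicSLUBd1} yields $|\hka|(\Psi,\Sigma(d))\leq 6+\gamma_{d,k}$ after absorbing the $6d/D$ residual (which either vanishes thanks to the sharper clause of \lref{lem:hkaAuxMagicd1} when $n$ exceeds a logarithmic threshold, or is dominated by the other terms for $n$ close to $k$), and the concluding chain $1+\gamma_{d,k}/6\leq 1+3\cdot 4^{k-2}/d^{k-1}$ is immediate from $\gamma_{d,k}=4^k\gamma_k/d^{k-1}$ with $\gamma_k\leq 9/8$.

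The main technical obstacle is the tight control of $|\hka|(\Psi,\scrT_\ns)\leq \gamma_{d,k}$ inherited from \lref{lem:kahkaNsSigMagicSLUBd1}: the three possible values of $\kappa(\Psi,\caT)$ on $\scrT_\ns$ are products of Gauss-sum evaluations attached to cubic characters, and bounding a sum of absolute values (rather than the signed sum $\hka(\Psi,\scrT_\ns)$) requires careful accounting of sign cancellations across the three cubic-character classes. A secondary subtlety is the case split hidden in the $\max$ of the unconditional upper bound: in the regime where $n$ is close to $k$, the $6d/D$ residual in \lref{lem:hkaAuxMagicd1} is not negligible, so one must either refine the bound on $|\hka|(\Psi,\scrT_\iso)$ directly without passing through $|\hka|(\Psi,\Sigma(d))$ or argue that the competing candidate $4^k d^{n-k}/3$ dominates.
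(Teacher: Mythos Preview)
Your overall strategy matches the paper's: invoke \thref{thm:MomentNormd13}, bound the defect contribution via \lref{lem:kappaMagicOned1}/\lref{lem:kappaMagicLUBd1}, and bound the isotropic contribution via \lref{lem:kahkaNsSigMagicSLUBd1}. The lower bound argument is essentially identical to the paper's.

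However, you over-complicate the upper bounds and in doing so introduce avoidable gaps. The key observation you miss is that \lref{lem:kahkaNsSigMagicSLUBd1} already asserts, in the clause immediately following \eref{eq:kaNsSigUBd1}, that $|\hka|(\Psi,\Sigma(d))-6\leq\gamma_{d,k}$ directly. Since $\gamma_{d,k}\leq\gamma_{d,1}=9/2$ for all $k\geq1$, this yields $|\hka|(\Psi,\scrT_\iso)\leq|\hka|(\Psi,\Sigma(d))\leq 21/2$ unconditionally, hence $|\hka|(\Psi,\scrT_\iso)/6\leq 7/4$ always---no decomposition, no case analysis, and no need to appeal to \lref{lem:hkaAuxMagicd1} or worry about any $6d/D$ residual. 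The same bound dispatches the conditional chain in \eref{eq:Moment3MagicDefNegLUBd1} immediately once \eref{eq:MomentNormd13c} is invoked. Your proposed route through $6|\hka(\Psi,\Delta)|+|\hka|(\Psi,\scrT_\ns)$, bounding each piece separately, actually falls short: with $|\hka(\Psi,\Delta)|\leq(D+2)/(D+1)\leq 9/8$ and $|\hka|(\Psi,\scrT_\ns)\leq 9/2$ for $k=1$, you get at best $45/4$, i.e.\ $15/8>7/4$.

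A second minor gap: for the defect term you write ``up to the difference between $D+2$ and $D$,'' but the crude bound $\kappa(\Psi,\caT_\defe)\leq(4/d)^k$ gives $(D+2)4^k/(3d^k)$, which strictly exceeds $4^kd^{n-k}/3$. The paper uses the sharper $\kappa(\Psi,\caT_\defe)\leq((4d-3)/d^2)^k$ from \lref{lem:kappaMagicLUBd1} and then performs a short case split ($n=k=1$ versus $n\geq2$) to show this quantity is dominated by $\max\{7/4,\,4^kd^{n-k}/3\}$.
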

\Thref{thm:Moment3MagicLUBd1} indicates that the norm $\|\bQ(\orb(\Psi))\|$ decreases exponentially with $k$, which is similar to the case  $d\neq1 \mmod 3$ determined  by \thsref{thm:Moment3Magicd2} and \ref{thm:Moment3Magicd3}, as illustrated in \fref{fig:Phi_Q_sn_k10}. However, there is a crucial difference depending on the congruence class of $d$ modulo 3. When $d=2\mmod 3$, the norm $\|\bQ(\orb(\Psi))\|$ approaches~1 exponentially with $k$. When $d\neq 2\mmod 3$, by contrast, $\|\bQ(\orb(\Psi))\|$ cannot approach 1 unless $k$ is  $\caO(n)$. Even in the case $k=n$, $\|\bQ(\orb(\Psi))\|$ may still increase exponentially with $n$ if the magic state $|\Psi\>$ is not chosen properly as illustrated in \fref{fig:Qnormnu3}; 
nevertheless,  $\|\bQ(\orb(\Psi))\|$ approaches  1 exponentially with $k=n$ if the magic state $|\Psi\>$ is chosen properly  (cf. \thref{thm:AccurateDesignMagic} and \fref{fig:two_orbits} in \sref{sec:MoreAccurateDesign}). The dependence of $\|\bQ(\orb(\Psi))\|$ on $d$ and $n$ (with $k$ fixed) is further illustrated in \fref{fig:Qnorm}, which  shows dramatically different behaviors depending on whether $d=2\mmod 3$ or $d\neq 2\mmod3$. In the first case,  $\|\bQ(\orb(\Psi))\|$ tends to level off quickly as $d$ and $n$ increases; notably, the deviation $\|\bQ(\orb(\Psi))\|-1$ vanishes quickly
as $d$ increases whenever $k\geq 2$. In the second case, by contrast, $\|\bQ(\orb(\Psi))\|$ tends to increase exponentially with $n$ and oscillates quite randomly with $d$.

Next, by virtue of \thref{thm:StabShNormGen} and \psref{pro:hkaka}, \ref{pro:kahkaMagicd1} we can calculate the shadow norm $\|\Ob_0\|^2_{\orb(\Psi)}$ associated with any stabilizer projector $\Ob$. 
When  the projector $\Ob$  has rank 1 and thus corresponds to a stabilizer state, we can derive more explicit formulas as presented in the following theorem. 
\begin{theorem}\label{thm:StabShNormMagicd1}
	Suppose $d$ is an odd prime, $d=1\mmod 3$, $|\Psi\>\in \scrM_{n,k}(d)$ with $k\geq 1$, and $\Ob$ is the projector onto a stabilizer state in $\caH_d^{\otimes n}$. Then 
	\begin{gather}
	\|\Ob_0\|^2_{\orb(\Psi)}=\frac{D+1}{D+2}\hka(\Psi,\Sigma(d))-\frac{(D+1)(3D-1)}{D^2}=
	\frac{(D+1)(3D^2-3dD+D+d)}{D^2(D+d)}+\frac{D+1}{D+d}\kappa(\Psi,\scrT_\ns), \label{eq:StabShNormMagicd1}\\
	\|\Ob_0\|^2_{\orb(\Psi)}< \frac{\|\Ob_0\|^2_{\orb(\Psi)}}{\|\Ob_0\|_2^2}\leq 3+\kappa(\Psi,\scrT_\ns)\leq 3+\gamma_{d,k}\leq 3+\frac{9}{8}\times\frac{4^k}{d^{k-1}}.
	\label{eq:StabShNormMagicd1UB}
	\end{gather}	
\end{theorem}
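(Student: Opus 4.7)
The plan is to derive the theorem essentially as a corollary of \coref{cor:Stab1ShNormGen} and \pref{pro:hkaka}, combined with the magic-state-specific bounds established in \lsref{lem:kappaMagicLUBd1} and \ref{lem:kahkaNsSigMagicSLUBd1}. First, I would verify the saturation condition from \coref{cor:Stab1ShNormGen}, namely $\kappa(\Psi,\scrT_\ns)\geq -2(D^2-2dD+D+d)/D^2$. Since $|\Psi\>\in \scrM_{n,k}(d)$, \lref{lem:kappaMagicLUBd1} guarantees $\kappa(\Psi,\caT)\geq 0$ for every $\caT\in\scrT_\ns$, and hence $\kappa(\Psi,\scrT_\ns)\geq 0$. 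When $n\geq 2$ we have $D\geq d^2\geq 2d$, so $D^2-2dD+D+d\geq 0$ and the saturation condition holds automatically. For the edge case $n=k=1$ (where $D=d$) the right-hand side becomes $2(d-2)/d>0$ and must be checked directly; this can be done using $\kappa(\Psi,\scrT_\ns)=\sum_{j=0}^{2}\mu_j\kappa_j(f)$ from \pref{pro:kahkaMagicd1}, the identities $\sum_j\mu_j=2(d-2)$ and $\sum_j\kappa_j(f)=6/d$ from \lref{lem:kappaOrder}, and the Gauss-sum estimate $|G(\eta_3)|=\sqrt d$ to show that the rearrangement-type deviation of $\sum_j\mu_j\kappa_j(f)$ from its ``uniform'' value $4(d-2)/d$ is $O(1/\sqrt d)$, which is safely below $2(d-2)/d$.

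Once the saturation is established, \coref{cor:Stab1ShNormGen} yields
\[
\|\Ob_0\|^2_{\orb(\Psi)}=\frac{D+1}{D+2}\hka(\Psi,\Sigma(d))-\frac{(D+1)(3D-1)}{D^2},
\]
which is the first equality in \eref{eq:StabShNormMagicd1}. Substituting the identity $\hka(\Psi,\Sigma(d))=\frac{D+2}{D+d}[\kappa(\Psi,\scrT_\ns)+6]$ from \pref{pro:hkaka} and simplifying the two fractions yields the second equality, after combining $(D+1)\cdot 6/(D+d)$ with $-(D+1)(3D-1)/D^2$ over the common denominator $D^2(D+d)$ to produce the polynomial $3D^2-3dD+D+d$ in the numerator.

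For the inequality chain in \eref{eq:StabShNormMagicd1UB}, the strict inequality $\|\Ob_0\|^2_{\orb(\Psi)}<\|\Ob_0\|^2_{\orb(\Psi)}/\|\Ob_0\|_2^2$ is immediate from $\|\Ob_0\|_2^2=(D-1)/D<1$. The ratio bound follows from the upper bound in \eref{eq:Stab1ShNormRatioGen}, which gives $\|\Ob_0\|^2_{\orb(\Psi)}/\|\Ob_0\|_2^2\leq \hka(\Psi,\Sigma(d))-3$, combined with $\hka(\Psi,\Sigma(d))=\frac{D+2}{D+d}\kappa(\Psi,\Sigma(d))\leq\kappa(\Psi,\Sigma(d))=6+\kappa(\Psi,\scrT_\ns)$ from \pref{pro:hkaka}; subtracting $3$ yields $3+\kappa(\Psi,\scrT_\ns)$. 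The last two bounds are then read off from \lref{lem:kahkaNsSigMagicSLUBd1}, which supplies $\kappa(\Psi,\scrT_\ns)\leq\gamma_{d,k}$, together with the definition $\gamma_{d,k}=4^k\gamma_k/d^{k-1}$ and the elementary estimate $\gamma_k\leq 9/8$ from \eref{eq:gammadk}.

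The main obstacle is the edge-case verification for $n=k=1$: the trivial lower bound $\kappa(\Psi,\scrT_\ns)\geq 0$ no longer suffices, and the required bound $\geq 2(d-2)/d$ needs the finer structure of $\mu_j$ and $\kappa_j(f)$. All subsequent manipulations are algebraic simplifications or straightforward appeals to previously established lemmas, so the crux is ensuring that the saturation hypothesis of \coref{cor:Stab1ShNormGen} is met uniformly across the parameter range $d\geq 7$, $d=1\mmod 3$, $n\geq 1$, $k\geq 1$.
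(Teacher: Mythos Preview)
Your proposal is correct and follows essentially the same route as the paper: verify the saturation hypothesis of \coref{cor:Stab1ShNormGen} (trivially for $n\geq 2$ via $\kappa(\Psi,\scrT_\ns)\geq 0$, separately for $n=k=1$), then read off \eref{eq:StabShNormMagicd1} from the corollary and \pref{pro:hkaka}, and finally chain \eref{eq:Stab1ShNormRatioGen} with $\hka(\Psi,\Sigma(d))\leq \kappa(\Psi,\Sigma(d))$ and \lref{lem:kahkaNsSigMagicSLUBd1} to get \eref{eq:StabShNormMagicd1UB}.

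One small remark on the $n=k=1$ edge case: the paper does not redo the Gauss-sum estimate here but simply invokes \lref{lem:kahkaNsSigMagicSLUBd1}, which already records the exact bound $d\,\kappa(\Psi,\scrT_\ns)\geq 2(d-2)$ needed (note $|\Psi\>\in\scrM_{1,1}^\id(d)$ automatically). Your sketch via $\kappa(f,\scrT_\ns)=4(d-2)/d+O(d^{-1/2})$ is the right idea, but the naive constant from $|G(\eta_3)|=\sqrt d$ gives a deviation $\leq 4/\sqrt d$, and $4(d-2)/d-4/\sqrt d\geq 2(d-2)/d$ fails by a hair at $d=7$; the paper's \lref{lem:kahkaNsSigMagicSLUBd1} (via its proof in \aref{app:MagicOrbitd1Aux}) closes this with a direct check for small $d$, so you would need to do the same.
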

As a simple corollary of \thref{thm:StabShNormMagicd1} and \eref{eq:kahkaSigLimd1} we can deduce the following limits [cf. \eref{eq:StabShNormMagicd23Lim}],
\begin{align}
\lim_{n\to \infty} 	\|\Ob_0\|^2_{\orb(\Psi)}=3+\kappa(\Psi,\scrT_\ns),
\quad \lim_{n,k\to \infty} 	\|\Ob_0\|^2_{\orb(\Psi)}=3,\quad 
\lim_{d\to \infty} 	\|\Ob_0\|^2_{\orb(\Psi)}=\begin{cases}
2 & n=k=1,\\
7 & n\geq 2, k=1, \\
3 &n\geq k\geq 2.
\end{cases}	  \label{eq:StabShNormMagicd1Lim}
\end{align}
The shadow norm $\|\Ob_0\|^2_{\orb(\Psi)}$ 
levels off quickly as $k$, $d$, or $n$ increases (assuming $k\geq 1$),  which is similar to the case  $d\neq1 \mmod 3$ determined  by \thref{thm:StabShNormMagicd23} and \eref{eq:Stab1ShNormMagicd23}, as illustrated in \fsref{fig:Phi_Q_sn_k10} and~\ref{fig:stabstate_k12}. Notably, the deviation
$\|\Ob_0\|^2_{\orb(\Psi)}-3$  decreases exponentially with $k$ and vanishes quickly.

Finally, we offer informative bounds for the shadow norm of a general traceless observable and the shadow norm $\|\orb(\Psi)\|_\sh$  itself.

\begin{theorem}\label{thm:ShNormMagicd1}
	Suppose $d$ is an odd prime, $d=1\mmod 3$,  $|\Psi\>\in \scrM_{n,k}(d)$ with $k\geq 1$, and $\Ob\in \caL_0\bigl(\caH_d^{\otimes n}\bigr)$. Then 
	\begin{gather}
	\|\Ob\|^2_{\orb(\Psi)}\leq	
	[1+|\hka|(\scrT_\ns)  ]\|\Ob\|_2^2+2\|\Ob\|^2	
	\leq(1 +\gamma_{d,k})\|\Ob\|_2^2+2\|\Ob\|^2,  \\
	\hka(\Psi,\Sigma(d))-3-\frac{5}{D}\leq \|\orb(\Psi)\|_\sh\leq 3+|\hka|(\scrT_\ns)\leq  3+\gamma_{d,k}\leq 3+\frac{9}{8}\times\frac{4^k}{d^{k-1}}. 
	\end{gather}
\end{theorem}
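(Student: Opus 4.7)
My plan is to derive Theorem~\ref{thm:ShNormMagicd1} as a direct consequence of the general Clifford orbit bounds in \thref{thm:ShNormGen} combined with the magic-orbit-specific estimate $|\hka|(\Psi,\scrT_\ns) \leq \gamma_{d,k}$ established in \lref{lem:kahkaNsSigMagicSLUBd1}. Since both ingredients are already in hand, the proof reduces to a short chain of inequalities and does not require any essentially new estimate.

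First I would invoke the second bound in \eref{eq:OShNormGen} of \thref{thm:ShNormGen}, which asserts
\[
\|\Ob\|^2_{\orb(\Psi)} \leq \left[1 + \tfrac{D+1}{D+2}|\hka|(\scrT_\ns)\right]\|\Ob\|_2^2 + 2\|\Ob\|^2.
\]
Using the elementary inequality $(D+1)/(D+2) < 1$ (applied only to the $|\hka|(\scrT_\ns)$ term) immediately yields the first stated upper bound $[1 + |\hka|(\scrT_\ns)]\|\Ob\|_2^2 + 2\|\Ob\|^2$. For the ensemble shadow norm, I would apply the second bound in \eref{eq:ShNormGen} to get $\|\orb(\Psi)\|_\sh \leq 3 + \tfrac{D+1}{D+2}|\hka|(\scrT_\ns)$, and again bound the coefficient by 1 to obtain $3 + |\hka|(\scrT_\ns)$. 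The lower bound $\hka(\Psi,\Sigma(d))-3-5/D \leq \|\orb(\Psi)\|_\sh$ is read off verbatim from the left-hand inequality of \eref{eq:ShNormGenB} in \thref{thm:ShNormGen}.

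Finally, I would substitute the magic-orbit bound $|\hka|(\Psi,\scrT_\ns) \leq \gamma_{d,k}$ from \lref{lem:kahkaNsSigMagicSLUBd1} (valid precisely in the regime $k \geq 1$ and $d = 1 \mmod 3$ considered here) into each of the three inequalities just derived. The resulting estimates $(1 + \gamma_{d,k})\|\Ob\|_2^2 + 2\|\Ob\|^2$ and $3 + \gamma_{d,k}$ yield the middle bounds in the two displayed inequalities. The final bound $3 + \tfrac{9}{8}\cdot\tfrac{4^k}{d^{k-1}}$ then follows by inserting the definition $\gamma_{d,k} = 4^k\gamma_k/d^{k-1}$ in \eref{eq:gammadk} and the uniform bound $\gamma_k \leq 9/8$, which holds for all $k \geq 1$ by inspection of the piecewise definition of $\gamma_k$.

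There is no real obstacle in executing this plan, since all the hard work has already been done: \thref{thm:ShNormGen} packages the connection between the shadow norm and the commutant coefficients $\hka$, and \lref{lem:kahkaNsSigMagicSLUBd1} provides the exponentially decaying bound on $|\hka|(\Psi,\scrT_\ns)$. The theorem is essentially a clean restatement of these facts in the cleanest possible form for downstream applications such as the construction of approximate $3$-designs in \sref{sec:MoreAccurateDesign}.
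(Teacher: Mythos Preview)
Your proposal is correct and matches the paper's approach exactly: the paper states that \thref{thm:ShNormMagicd1} ``is a simple corollary of \thref{thm:ShNormGen} and \lref{lem:kahkaNsSigMagicSLUBd1},'' and you have spelled out precisely the chain of inequalities this entails, including the use of $(D+1)/(D+2)<1$, the lower bound from \eref{eq:ShNormGenB}, and the final bound $\gamma_k\leq 9/8$ from \eref{eq:gammadk}.
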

\Thref{thm:ShNormMagicd1} is a simple corollary of \thref{thm:ShNormGen}  and \lref{lem:kahkaNsSigMagicSLUBd1}.  
Thanks to \thref{thm:ShNormMagicd1} and \lref{lem:kappaMagicLUBd1}, the shadow norm $\|\orb(\Psi)\|_\sh$ converges to 3 exponentially as $k$ increases; if $k\geq 2$, then  $\|\orb(\Psi)\|_\sh$ converges to 3 as $d$ increases too.

\subsection{\label{sec:MoreAccurateDesign}Construction of more accurate approximate 3-designs}
Here we propose a simple recipe for constructing accurate approximate 3-designs with respect to three figures of merit, including the operator norm of the third normalized moment operator.
Only two Clifford orbits are required to apply this recipe.

Let $\scrE$ be an ensemble of pure states in $\caH_d^{\otimes n}$
and define
\begin{align}\label{eq:kappaMin}
\kappa_{\min}(\scrE):=\min_{\caT\in \Sigma(d)} \kappa(\scrE,\caT),\quad \hka_{\min}(\scrE):=\min_{\caT\in \Sigma(d)} \hka(\scrE,\caT). 
\end{align}
Let $\caT_\defe$ be any given stochastic Lagrangian subspace in $\scrT_\defe$; then $\kappa(\scrE, \caT)=\kappa(\scrE, \caT_\defe)$ for all $\caT\in \scrT_\defe$ by \pref{pro:kappaSym} and \lref{lem:SemigroupT}.
Suppose $\scrE$ satisfies the following two equivalent conditions [cf. \pref{pro:hkaka}],
\begin{align}\label{eq:DesignFiducialCon}
\kappa_{\min}(\scrE)=\kappa(\scrE, \caT_\defe)\leq \frac{\kappa(\scrE,\Sigma(d))}{2(D+d)},\quad \min_{\caT\in \Sigma(d)}\hka(\scrE, \caT)= \hka(\scrE, \caT_\defe)\leq 0.
\end{align}
Let $\scrE_p$ be the ensemble constructed by choosing the ensembles 
$\scrE$ and $\{|0\>^{\otimes n}\}$ with probabilities $1-p$ and $p$
respectively. 
Then the normalized moment operator of $\orb(\scrE_p)$ reads
\begin{equation}
\begin{gathered}
\bQ(\orb(\scrE_p))=(1-p)\bQ(\orb(\scrE))+p\bQ(n,d,3)=\frac{1}{6}\sum_{\caT\in \Sigma(d)} 
\left[\hka(\scrE_p,\caT)R(\caT)\right],\\
\hka(\scrE_p,\caT):=(1-p)\hka(\scrE,\caT)+\frac{p(D+2)}{D+d},
\end{gathered}
\end{equation}
where $\bQ(n,d,3)$ is the third normalized moment operator of stabilizer states. 
Here $\hka(\scrE_p,\caT)$ enjoys the same symmetry of $\kappa(\Psi,\caT)$ as clarified in \pref{pro:kappaSym}. Notably, we have $\hka(\scrE_p,\caT)=\hka(\scrE_p,\caT_\defe)$ for all $\caT\in \scrT_\defe$.

Next, suppose $p$ satisfies the following equation,
\begin{align}\label{eq:designProb}
\hka(\scrE_p, \caT_\defe)=(1-p)\hka(\scrE, \caT_\defe)+\frac{D+2}{D+d}p=0.
\end{align}
Then
\begin{align}
\hka(\scrE_p,\caT)=\frac{(1-p)(D+2)}{D-1}\kappa'(\scrE,\caT)\leq \frac{D+2}{D-1}\kappa'(\scrE,\caT),   \quad  \kappa'(\scrE,\caT):=\kappa(\scrE,\caT)-\kappa_{\min}(\scrE)\geq 0,
\label{eq:hkapDesign}
\end{align}
and the second inequality is saturated iff $\kappa(\scrE,\caT)=\kappa(\scrE, \caT_\defe)=\kappa_{\min}(\scrE)$. Notably, $\hka(\scrE_p,\caT)=\kappa'(\scrE,\caT)=0$ for all $\caT\in \scrT_\defe$. Define
\begin{align}\label{eq:kappaNsP}
\kappa'(\scrE,\scrT_\ns):=\sum_{\caT\in \scrT_\ns}\kappa'(\scrE,\caT)=
\sum_{\caT\in \scrT_\ns} [\kappa(\scrE,\caT)-\kappa_{\min}(\scrE)]
=\kappa(\scrE,\scrT_\ns)-2(d-2)\kappa_{\min}(\scrE),
\end{align}
where the second equality holds because $|\scrT_\ns|=2(d-2)$. Now the following proposition is a simple corollary of \thref{thm:MomentNormd13}.
\begin{proposition}\label{pro:AccurateDesign}
	Suppose $d$ is an odd prime, $d=1\mmod 3$,  $\scrE$ is an ensemble of pure states in $\caH_d^{\otimes n}$ that satisfies the condition in \eref{eq:DesignFiducialCon}, and $p$ satisfies the condition in \eref{eq:designProb}. Then 
	\begin{align}
	\|\bQ(\orb(\scrE_p))\|&=1+\frac{D-1}{6(D+2)}\hka(\scrE_p,\scrT_\ns)=1+\frac{(1-p)}{6}\kappa'(\scrE,\scrT_\ns)\leq 1+\frac{1}{6}\kappa'(\scrE,\scrT_\ns), \label{eq:QscrEp1}\\
	\|\bQ(\orb(\scrE_p))-P_\sym\| 
	&\leq \frac{D+5}{6(D+2)}\hka(\scrE_p,\scrT_\ns)
	=\frac{(1-p)(D+5)}{6(D-1)}\kappa'(\scrE,\scrT_\ns)
	\leq \frac{D+5}{6(D-1)}\kappa'(\scrE,\scrT_\ns).  \label{eq:QscrEp2}
	\end{align}
\end{proposition}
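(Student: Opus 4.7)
The plan is to observe that the proposition reduces directly to the last part of \thref{thm:MomentNormd13}, so the proof consists of (i) verifying that the stipulated $p$ yields $p \in [0,1)$, (ii) checking that $\scrE_p$ satisfies the hypotheses $\hka(\scrE_p,\caT_\defe)=0$ and $\hka(\scrE_p,\caT) \ge 0$ for all $\caT \in \Sigma(d)$, and (iii) substituting the explicit form of $\hka(\scrE_p,\scrT_\ns)$ in terms of $\kappa'(\scrE,\scrT_\ns)$ into the bounds supplied by that theorem.

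For step (i), solving \eref{eq:designProb} for $p$ gives
\begin{equation*}
p \;=\; \frac{-\hka(\scrE,\caT_\defe)}{\tfrac{D+2}{D+d}-\hka(\scrE,\caT_\defe)}.
\end{equation*}
Since \eref{eq:DesignFiducialCon} guarantees $\hka(\scrE,\caT_\defe)\le 0$ and $(D+2)/(D+d)>0$, both numerator and denominator are nonnegative and the denominator strictly exceeds the numerator, so $p\in[0,1)$. For step (ii), $\hka(\scrE_p,\caT_\defe)=0$ by construction, and \eref{eq:hkapDesign} rewrites $\hka(\scrE_p,\caT)=\frac{(1-p)(D+2)}{D-1}\kappa'(\scrE,\caT)$, which is nonnegative because $\kappa'(\scrE,\caT)=\kappa(\scrE,\caT)-\kappa_{\min}(\scrE)\ge 0$ by the definition \eref{eq:kappaMin} of $\kappa_{\min}(\scrE)$. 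Note also that $\kappa'(\scrE,\caT)=0$ for every $\caT\in\scrT_\defe$ thanks to \pref{pro:kappaSym} and \lref{lem:SemigroupT}, which is consistent with $\hka(\scrE_p,\caT_\defe)=0$.

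For step (iii), summing over $\scrT_\ns$ and using the definition \eref{eq:kappaNsP} yields
\begin{equation*}
\hka(\scrE_p,\scrT_\ns)\;=\;\frac{(1-p)(D+2)}{D-1}\,\kappa'(\scrE,\scrT_\ns).
\end{equation*}
Now the last part of \thref{thm:MomentNormd13}, applied with the roles of $\Psi$ replaced by the ensemble $\scrE_p$ (which is legitimate since the theorem only uses the third moment data, and the extension to ensembles is immediate from \eref{eq:MomentQEbar} and the linearity of $\hka$), gives both equalities
\begin{equation*}
\|\bQ(\orb(\scrE_p))\|\;=\;1+\frac{D-1}{6(D+2)}\hka(\scrE_p,\scrT_\ns),\qquad \|\bQ(\orb(\scrE_p))-P_\sym\|\;\le\;\frac{D+5}{6(D+2)}\hka(\scrE_p,\scrT_\ns).
\end{equation*}
Substituting the expression for $\hka(\scrE_p,\scrT_\ns)$ above and using $1-p\le 1$ produces the chain of (in)equalities asserted in \eqsref{eq:QscrEp1}{eq:QscrEp2}.

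The only mildly nontrivial point is the tacit extension of \thref{thm:MomentNormd13} from a pure fiducial state $|\Psi\>$ to an ensemble $\scrE_p$; I expect this to pose no real obstacle because the proof of that theorem in \aref{app:thm:MomentNormProof} works with the expansion coefficients $\hka(\cdot,\caT)$ of $\bQ$, and these coefficients behave linearly under mixing of ensembles, as recorded in \eref{eq:kahkaScrE}. Once this observation is made, everything else is a direct substitution, and there is no hidden analytic difficulty.
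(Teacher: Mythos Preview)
Your proof is correct and follows exactly the approach the paper intends: the paper states that \pref{pro:AccurateDesign} is ``a simple corollary of \thref{thm:MomentNormd13},'' and you have correctly filled in the details by verifying the hypotheses $\hka(\scrE_p,\caT_\defe)=0$ and $\hka(\scrE_p,\caT)\geq 0$ needed for \eref{eq:MomentNormd13e}, then substituting the expression \eref{eq:hkapDesign} for $\hka(\scrE_p,\caT)$. Your remark on extending \thref{thm:MomentNormd13} from pure states to ensembles is apt and matches the paper's own convention established after \eref{eq:kahkaScrE}.
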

\Pref{pro:AccurateDesign} implies the following relations,
\begin{align}\label{eq:AccurateDevRel}
\|\bQ(\orb(\scrE_p))\|-1\leq \|\bQ(\orb(\scrE_p))-P_\sym\| \leq \frac{D+5}{D-1}(\|\bQ(\orb(\scrE_p))\|-1)\leq  2(\|\bQ(\orb(\scrE_p))\|-1),
\end{align}
where the last inequality holds because $D\geq 7$.

\begin{theorem}\label{thm:AccurateDesignMagic}
	Suppose  the ensemble $\scrE$  in \pref{pro:AccurateDesign} is built on $\scrM_{n,k}(d)$ with $k\geq 1$. Then	$p\leq 21/(4D)$ and
	\begin{align}
	\|\bQ(\orb(\scrE_p))\|&\leq  1+\frac{\gamma_{d,k}}{6}\leq 
	1+ \frac{3\times 4^{k-2}}{d^{k-1}}, \label{eq:QscrEpMagic1} \\
	\|\bQ(\orb(\scrE_p))-P_\sym\| 
	&\leq \frac{(D+5)\gamma_{d,k}}{6(D-1)}
	\leq  
	\frac{\gamma_{d,k}}{3}\leq 
	\frac{6\times 4^{k-2}}{d^{k-1}}. \label{eq:QscrEpMagic2}
	\end{align}
\end{theorem}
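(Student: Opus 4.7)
The plan is to derive the two norm bounds from \pref{pro:AccurateDesign} by substituting an explicit bound on $\kappa'(\scrE,\scrT_\ns)$, and to bound $p$ directly from the defining equation \eref{eq:designProb}. First, \lref{lem:kappaMagicLUBd1} guarantees $\kappa(\Psi,\caT)\geq 1/(9^k d^{2k})>0$ for every $|\Psi\>\in\scrM_{n,k}(d)$ and every $\caT\in\scrT_\ns$, while $\kappa(\Psi,\caT)=1$ for $\caT\in\scrT_\sym$; averaging over $\scrE$ therefore yields $\kappa_{\min}(\scrE)\geq 0$, so the definition in \eref{eq:kappaNsP} gives $\kappa'(\scrE,\scrT_\ns)\leq \kappa(\scrE,\scrT_\ns)$. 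Next, averaging the bound $\kappa(\Psi,\scrT_\ns)\leq \gamma_{d,k}$ from \lref{lem:kahkaNsSigMagicSLUBd1} over $\scrE$ gives $\kappa(\scrE,\scrT_\ns)\leq \gamma_{d,k}$, and hence $\kappa'(\scrE,\scrT_\ns)\leq \gamma_{d,k}$. Substituting this into \eqsref{eq:QscrEp1}{eq:QscrEp2} immediately yields \eqsref{eq:QscrEpMagic1}{eq:QscrEpMagic2}; the numerical tails $\gamma_{d,k}/6\leq 3\cdot 4^{k-2}/d^{k-1}$ and $\gamma_{d,k}/3\leq 6\cdot 4^{k-2}/d^{k-1}$ then follow from the piecewise definition of $\gamma_k$ in \eref{eq:gammadk} by checking $k=1$, $k=2$, and $k\geq 3$ separately.

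For the bound on $p$, I would solve the linear equation \eref{eq:designProb} to obtain $p = -\hka(\scrE,\caT_\defe)(D+d)/[(D+2)-\hka(\scrE,\caT_\defe)(D+d)]$ and use $\hka(\scrE,\caT_\defe)\leq 0$ to deduce $p\leq -\hka(\scrE,\caT_\defe)(D+d)/(D+2)$. Invoking \pref{pro:hkaka} with $\kappa(\scrE,\caT_\defe)\geq 0$ then gives
\[
-\hka(\scrE,\caT_\defe) \leq \frac{(D+2)\kappa(\scrE,\Sigma(d))}{2(D-1)(D+d)},
\]
so substituting $\kappa(\scrE,\Sigma(d))=6+\kappa(\scrE,\scrT_\ns)\leq 6+\gamma_{d,k}\leq 21/2$ (the maximum of $\gamma_{d,k}$ over $d\geq 7$ and $k\geq 1$ is $9/2$, attained at $k=1$) produces $p\leq 21/[4(D-1)]$. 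A slightly more careful bookkeeping that retains the full denominator in the exact expression for $p$ tightens this to the claimed bound $p\leq 21/(4D)$.

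The main obstacle will be the final sharpening from $21/[4(D-1)]$ to $21/(4D)$. This is conceptually straightforward but requires tracking the $-\hka(\scrE,\caT_\defe)(D+d)$ correction in the denominator of the exact formula for $p$ and absorbing the induced $D/(D-1)$ factor using the already established upper bound on $-\hka(\scrE,\caT_\defe)$. Beyond this technicality, the proof reduces to verifying that $\gamma_{d,k}\leq 9/2$ uniformly over admissible $d$ and $k$—which singles out $k=1$ as the worst case, reflecting that the single-$T$-gate orbit is the one farthest from a $3$-design—and to routine algebraic manipulation.
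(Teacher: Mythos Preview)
Your proposal is correct and follows essentially the same route as the paper: bound $\kappa'(\scrE,\scrT_\ns)\leq\kappa(\scrE,\scrT_\ns)\leq\gamma_{d,k}$ via \lref{lem:kappaMagicLUBd1} and \lref{lem:kahkaNsSigMagicSLUBd1}, then substitute into \pref{pro:AccurateDesign}. For the $p$ bound, the paper's ``careful bookkeeping'' is exactly what you describe: it keeps the full denominator $\tfrac{D+2}{D+d}-\hka(\scrE,\caT_\defe)$, lower-bounds it by replacing $\kappa(\scrE,\Sigma(d))$ with its minimum value $6$, and obtains $p\leq \kappa(\scrE,\Sigma(d))/[2(D+2)]\leq 21/[4(D+2)]<21/(4D)$.
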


\begin{figure}[tb]
	\centering 
	\includegraphics[width=0.8\textwidth]{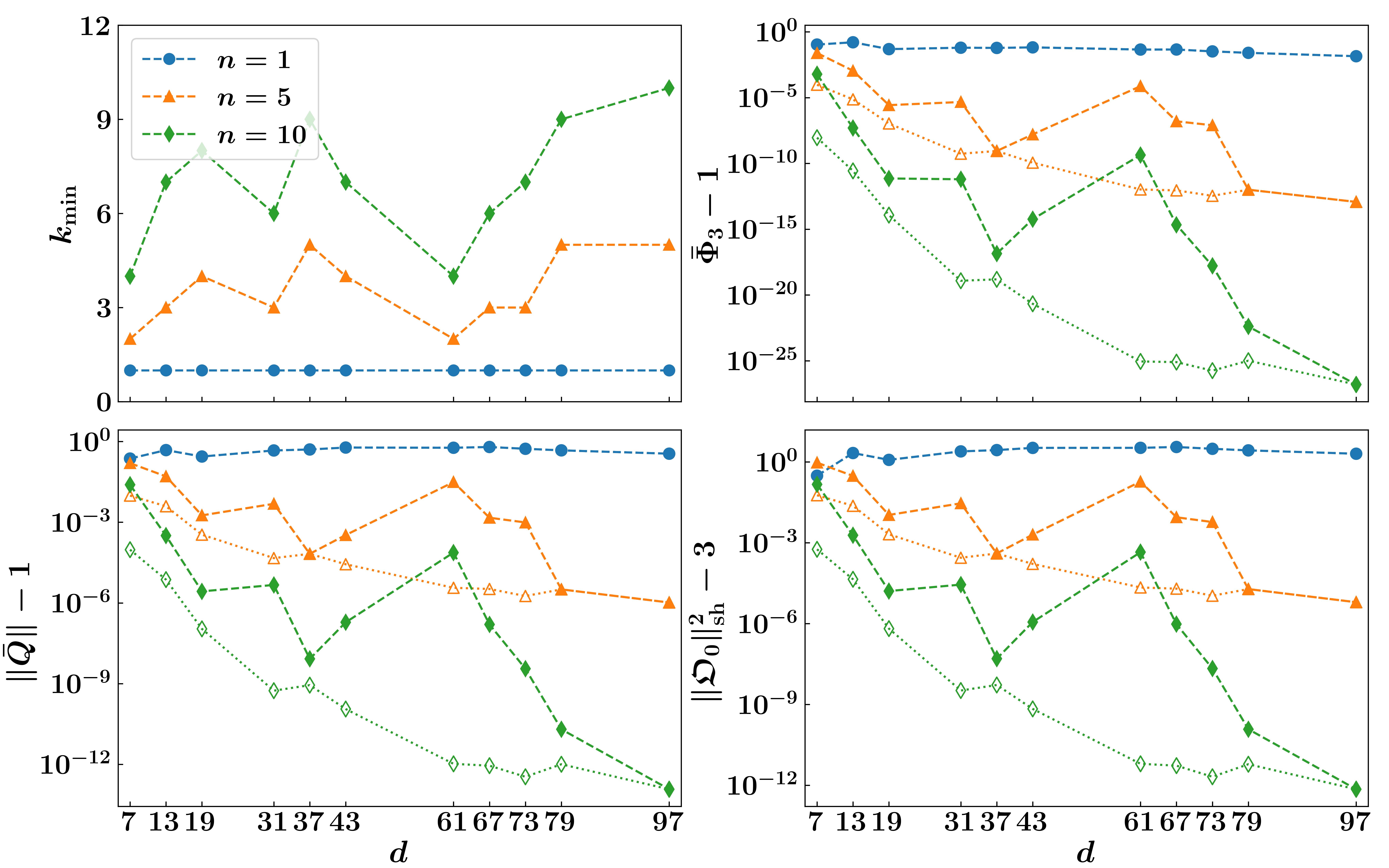}
	\caption{The deviations of an approximate 3-design $\scrE_p$ composed of two Clifford orbits with respect to three figures of merit. Here the local dimension $d$ is a prime with $d=1\mmod 3$ and $d<100$, $\scrE$ is composed of a single magic state of the form $|\Psi_j\>=|\psi_{f_j}\>^{\otimes  k}\otimes|0\>^{\otimes (n-k)}$ with $f_j(x)=\nu^j x^3$ and $\kappa(f_j, \caT_\defe)<1/d$, and $p$ is determined by \eref{eq:designProb}.
		The upper left plot shows the  minimum value $k_{\min}$ of $k$ such that $\hka(\Psi_j,\caT_\defe)\leq 0$ [cf. \eqsref{eq:DesignFiducialCon}{eq:DesignFiducialConPsij}]. The other three plots show the deviations of 
		$\bPhi_3(\orb(\scrE_p))$ (upper right), $\|\bQ(\orb(\scrE_p))\|$ (lower left), and $\|\Ob_0\|_\sh^2=\|\Ob_0\|_{\orb(\scrE_p)}^2$ (lower right), respectively, where $\Ob$ is the projector onto a stabilizer state;  the curves in the three plots have similar shapes, which means the three figures of merit are correlated with each other.   
		Two cases $k=k_{\min}$ and $k=n$ are labeled with solid markers  and hollow markers, respectively; when $n=1$, we have $k_{\min}=n=1$, so the two cases coincide. Note that $\|\bQ(\orb(\scrE_p))-P_\sym\| \approx \|\bQ(\orb(\scrE_p))\|-1$ by \pref{pro:AccurateDesign} and \eref{eq:AccurateDevRel}. }
	\label{fig:two_orbits}
\end{figure}

\Thref{thm:AccurateDesignMagic} follows from \pref{pro:AccurateDesign} as shown in 
\aref{app:thm:AccurateDesignMagicProof}. Note that the third normalized frame potential and shadow norm of $\scrE_p$ are comparable to the counterparts of $\scrE$ since $p=\caO(1/D)$.  
Therefore, an accurate approximate 3-design with respect to all three figures of merit can be constructed as illustrated in \fref{fig:two_orbits} if an ensemble $\scrE$ that satisfies the conditions in \eref{eq:DesignFiducialCon} can be constructed from $\scrM_{n,k}(d)$, which is the case.

Let $f_j(x)=\nu^j x^3$ for $j=0,1,2$ be three cubic polynomials on $\bbF_d$. Then $\min_{j=0,1,2} \kappa(f_j, \caT) <1/d$ for all $\caT\in \scrT_\ns$ thanks to \lsref{lem:kappaMagicOned1} and \ref{lem:kappaOrder}.
Choose a  cubic polynomial $f_j$ such that $\kappa(f_j, \caT_\defe)<1/d$, then   $\kappa(f_j, \caT_\defe)=\kappa_{\min}(f_j):=\min_{\caT\in \scrT_\ns}\kappa(f_j, \caT)$. 
Let $|\Psi_j\>=|\psi_{f_j}\>^{\otimes  k}\otimes|0\>^{\otimes (n-k)}$; then $|\Psi_j\>\in \scrM_{n,k}(d)$ and
\begin{align}
\kappa_{\min}(\Psi_j):=\min_{\caT\in \scrT_\ns} \kappa(\Psi_j,\caT)=\kappa(\Psi_j, \caT_\defe)=[\kappa(f_j, \caT_\defe)]^k.
\end{align}
In addition, the ratio $\kappa(\Psi_j, \caT_\defe)/\kappa(\Psi_j, \Sigma(d))$ decreases monotonically with $k$, and we have
\begin{align}\label{eq:DesignFiducialConPsij}
\kappa(\Psi_j, \caT_\defe)\leq \frac{1}{D}\leq \frac{\kappa(\Psi_j,\Sigma(d))}{2(D+d)}
\end{align}
when $k=n$ given that $\kappa(\Psi_j,\Sigma(d))\geq 6$ by \lref{lem:kappaMagicLUBd1}. Let $k_{\min}$ be the minimum value of $k$ (depending on $d$ and $n$) such that \eref{eq:DesignFiducialConPsij} holds; then  the ensemble $\scrE$ composed of the single state $|\Psi_j\>$  satisfies the conditions in \eref{eq:DesignFiducialCon} as long as  $k_{\min}\leq k\leq n$. The resulting ensemble $\orb(\scrE_p)$, which is composed of only two Clifford orbits, is usually  an accurate approximate 3-design with respect to all three figures of merit simultaneously, and the accuracy tends to increase quickly with  $n$ and $k$, as illustrated in \fref{fig:two_orbits}.

\subsection{\label{sec:BalanceMagic}Approximate 3-designs from balanced magic ensembles}

\subsubsection{Balanced  magic ensembles} 
Here we generalize the concept of $k$-balanced magic states introduced in \sref{sec:MagicState} to ensembles and clarify their basic properties. Suppose $\scrE=\{|\Psi_j\>,p_j\}_j$ is an ensemble of magic states constructed from $\scrM_{n,k}(d)$ in which $|\Psi_j\>\in \scrM_{n,k}(d)$ is chosen with probability $p_j$. Then $\scrE$ is $k$-balanced if 
\begin{align}
|\Psi_j\>\in \scrM_{n,k}^\QB(d)\quad \forall j,\quad 3\sum_j p_j\scrC_l(\Psi_j)=k\quad \forall l=0,1,2.
\end{align}
In this case, $\scrE$ is indeed balanced according to the definition in \sref{sec:Balance} as we shall see shortly. Let $f_j(x)=\nu^j x^3$ for $j=0,1,2$ be three cubic polynomials on $\bbF_d$ and let 
\begin{align}\label{eq:Psij}
|\tPsi\>&=\bigl(|\psi_{f_0}\> \otimes |\psi_{f_1}\> \otimes |\psi_{f_2}\>\bigr )^{\otimes \lfloor k/3\rfloor}\otimes |0\>^{\otimes (n-k)},\quad  
|\Psi_j\>=\begin{cases}
|\psi_{f_j}\> \otimes |\tPsi\> &k=1 \mmod 3, \\
|\psi_{f_{j+1}}\> \otimes |\psi_{f_{j+2}}\> \otimes|\tPsi\>  &k=2 \mmod 3, \\
\end{cases}
\end{align}
where the addition in each subscript is modulo 3. Then $|\tPsi\>\in \scrM_{n,k}^\rmB(d)$  if $k=0\mmod 3$ and $|\Psi_j\>\in \scrM_{n,k}^\QB(d)$ for $j=0,1,2$ if $k=1,2\mmod 3$. In addition, a $k$-balanced magic ensemble can be constructed as follows, 
\begin{align}\label{eq:kbalanceEx}
\scrE&=\begin{cases}
\{|\tPsi\>\} & k=0 \mmod 3,\\
\{|\Psi_0\>, |\Psi_1\>, |\Psi_2\>\} & k\neq 0 \mmod 3.
\end{cases}
\end{align}

Let  $s_k$ be the remainder of $k$ modulo 3 and define
\begin{align}\label{eq:kappatgammadk}
\kappa_{d,k}:=\frac{(s_k+1)\tg(d)^{2\lfloor k/3\rfloor}}{d^{k+\lfloor k/3\rfloor}},\quad 
\gamma^\rmB_{d,k}:=\frac{4^{(k+2)/3}(d-2)}{d^k},		
\end{align}
where $\tg(d)$ is defined in \eref{eq:tgd}.  \Lsref{lem:kahkaBalMagicd1} and \ref{lem:kahkaQBalMagicd1} below are
proved in \aref{app:BalanceMagicProof}.

\begin{lemma}\label{lem:kahkaBalMagicd1}
	Suppose $d$ is an odd prime, $d=1\mmod 3$,  $\scrE$ is a $k$-balanced magic ensemble 
	on  $\scrM_{n,k}(d)$ with $k\geq 1$, and $\caT\in \scrT_\ns$. Then 
	\begin{gather}
	\frac{s_k+1}{d^{k+\lfloor k/3\rfloor}}\leq 	\kappa(\scrE,\caT)=\kappa_{d,k}  \leq \frac{s_k+1}{d^k}\left(\frac{4d-27}{d}\right)^{\lfloor k/3\rfloor}\leq \frac{4^{(2k+1)/6}}{d^k}, \label{eq:kaBalance} \\
	\hka(\scrE,\caT)=\frac{(D+2)^2}{(D-1)(D+d)}\left[\kappa(\scrE,\caT)-\frac{3}{D+2}\right],\quad |\hka(\scrE,\caT)|\leq \frac{4^{(2k+1)/6}(D+2)}{d^k(D+d)},
	\label{eq:hkaBalMagic}\\
	\kappa(\scrE,\scrT_\ns)	=2(d-2)\kappa_{d,k}\leq \gamma^\rmB_{d,k},\quad  	
	|\hka|(\scrE,\scrT_\ns)=2(d-2)|	\hka(\scrE,\caT)|
	\leq \frac{(D+2)\gamma^\rmB_{d,k}}{D+d}. \label{eq:kahkaNsAbsBalMagic}	
	\end{gather}
	If in addition $k=1,2$ or  $k<3n/4$, then 
	\begin{align}
	|\hka(\scrE,\caT)|\leq \frac{D+2}{D+d}	\kappa(\scrE,\caT),\quad |\hka|(\scrE,\scrT_\ns)\leq \frac{D+2}{D+d}	\kappa(\scrE,\scrT_\ns). \label{eq:hkaNsAbsBalMagicUB}
	\end{align}
\end{lemma}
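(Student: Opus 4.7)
The plan is to reduce everything to the single-qudit values $\kappa_j(f)$ from \lref{lem:kappaMagicOned1}, exploit the balance to eliminate the $\caT$-dependence, and then feed the resulting expression for $\kappa(\scrE,\caT)$ into \lref{lem:kahkaBalance} to obtain $\hka(\scrE,\caT)$ and its bounds. The construction in \eref{eq:Psij}--\eqref{eq:kbalanceEx} provides the canonical $k$-balanced ensembles; since \lref{lem:kahkaBalance} itself is insensitive to the choice of balanced ensemble, it suffices to treat these representatives.

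\textbf{Step 1 (computing $\kappa_{d,k}$).} For the monomials $f_j(x)=\nu^jx^3$ with pairwise distinct cubic characters, \eref{eq:kappaMagicDecomd1} shows that $\kappa(\Psi,\caT)$ for each ingredient state $|\Psi\>$ factors into a product of single-qudit contributions $\kappa(f_i,\caT)$, each of which only depends on $\eta_3(\caT)$ and belongs to $\{\kappa_0^\uparrow,\kappa_1^\uparrow,\kappa_2^\uparrow\}$. When $k\equiv 0\pmod 3$, each character class appears exactly $k/3$ times in the tensor factorization of the single fiducial state, so
\begin{equation*}
\kappa(\scrE,\caT)=\bigl(\kappa_0^\uparrow\kappa_1^\uparrow\kappa_2^\uparrow\bigr)^{k/3}=\frac{\tg(d)^{2k/3}}{d^{4k/3}}
\end{equation*}
by \lref{lem:kappaOrder}, and this equals $\kappa_{d,k}$ since $s_k+1=1$. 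When $k\equiv 1\pmod 3$ or $k\equiv 2\pmod 3$, the uniform average $\frac{1}{3}\sum_j\kappa(\Psi_j,\caT)$ over the three ensemble members in \eref{eq:Psij} factors out the common contribution of $|\tPsi\>$ and multiplies it by $\frac{1}{3}\sum_j\kappa_j^\uparrow=2/d$ or $\frac{1}{3}\sum_{i\neq j}\kappa_i^\uparrow\kappa_j^\uparrow=3/d^2$ respectively, again by \lref{lem:kappaOrder}. All three cases collapse into $\kappa(\scrE,\caT)=\kappa_{d,k}$, independent of $\caT\in\scrT_\ns$, which also confirms that $\scrE$ is balanced in the sense of \sref{sec:Balance}. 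The estimates in \eref{eq:kaBalance} follow by substituting the bounds $1\leq \tg(d)^2\leq 4d-27$ from \eref{eq:tgdsq} and checking, residue by residue of $k$ modulo~$3$, that $(s_k+1)\,4^{\lfloor k/3\rfloor}\leq 4^{(2k+1)/6}$ (with equality at $k\equiv 1\pmod 3$).

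\textbf{Step 2 (the formula and bound for $\hka$).} Since $\scrE$ is balanced, the identity in \eref{eq:hkaBalMagic} is exactly \eref{eq:hkaBalance} of \lref{lem:kahkaBalance} specialized to $\kappa(\scrE)=\kappa_{d,k}$, and \eref{eq:kahkaNsAbsBalMagic} reduces to $\kappa(\scrE,\scrT_\ns)=2(d-2)\kappa_{d,k}$ together with $|\hka|(\scrE,\scrT_\ns)=2(d-2)|\hka(\scrE,\caT)|$, given that $|\scrT_\ns|=2(d-2)$; the arithmetic identity $2\cdot 4^{(2k+1)/6}=4^{(k+2)/3}$ then turns the bound on $|\hka(\scrE,\caT)|$ into the stated bound on $|\hka|(\scrE,\scrT_\ns)$ and similarly for $\kappa(\scrE,\scrT_\ns)$. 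For the key upper bound $|\hka(\scrE,\caT)|\leq 4^{(2k+1)/6}(D+2)/[d^k(D+d)]$ I would rewrite $|\hka(\scrE,\caT)|=(D+2)|(D+2)\kappa_{d,k}-3|/[(D-1)(D+d)]$ and prove the equivalent inequality $|(D+2)\kappa_{d,k}-3|\leq (D-1)\cdot 4^{(2k+1)/6}/d^k$ by splitting on the sign of $\hka$. On the positive side, $(D+2)\kappa_{d,k}-3\leq(D+2)\cdot 4^{(2k+1)/6}/d^k-3$ and it reduces to $4^{(2k+1)/6}\leq d^k$, which is elementary for $d\geq 7$ and $k\geq 1$; on the negative side, $3-(D+2)\kappa_{d,k}\leq 3$ and the inequality $3d^k\leq(D-1)4^{(2k+1)/6}$ is easy for $n\geq 2$, while the remaining case $n=k=1$ is handled directly by $\kappa_{d,1}=2/d$, which gives $|(d+2)\kappa_{d,1}-3|=(d-4)/d$ and saturates the bound with a comfortable margin.

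\textbf{Step 3 (the sharper bound under the extra hypotheses).} The inequality $|\hka(\scrE,\caT)|\leq (D+2)\kappa(\scrE,\caT)/(D+d)$ in \eref{eq:hkaNsAbsBalMagicUB} is equivalent, after clearing denominators, to $|(D+2)\kappa_{d,k}-3|\leq(D-1)\kappa_{d,k}$, which reduces to $\kappa_{d,k}\leq 1$ (always true by \lref{lem:kappaTLUB}) when $\kappa_{d,k}\geq 3/(D+2)$, and to $\kappa_{d,k}\geq 3/(2D+1)$ otherwise. Using the lower bound $\kappa_{d,k}\geq (s_k+1)/d^{k+\lfloor k/3\rfloor}$ from \eref{eq:kaBalance}, direct substitution verifies this for $k=1$ (where $\kappa_{d,1}=2/d\geq 3/(2d+1)$) and for $k=2$ (where $\kappa_{d,2}=3/d^2\geq 3/(2D+1)$ because $D\geq d^2$); for $k<3n/4$ one estimates $2D+1\geq 2d^n\geq 3 d^{k+\lfloor k/3\rfloor}$ using $k+\lfloor k/3\rfloor\leq 4k/3<n$ and $d\geq 7$. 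The principal obstacle throughout is this quantitative bookkeeping: the case split depends simultaneously on the sign of $(D+2)\kappa_{d,k}-3$ and on the residue of $k$ modulo~$3$, and the claimed exponent $(2k+1)/6$ is saturated only in the residue class $k\equiv 1\pmod 3$, so the combinatorial estimates need to be checked separately in each residue class to confirm the uniform bounds.
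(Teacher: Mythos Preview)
Your proposal is correct and follows essentially the same approach as the paper: reduce to the canonical $k$-balanced ensemble of \eref{eq:kbalanceEx}, compute $\kappa(\scrE,\caT)$ via the elementary symmetric functions of $\kappa_0^\uparrow,\kappa_1^\uparrow,\kappa_2^\uparrow$ from \lref{lem:kappaOrder}, and then feed the result into \lref{lem:kahkaBalance}. The only organizational difference is that the paper splits the bound on $|\hka(\scrE,\caT)|$ according to whether $\kappa(\scrE,\caT)\geq 3/(D+2)$ and handles $k=1,2$ by directly invoking the resulting inequality $|\hka|\leq (D+2)\kappa/(D+d)$, whereas you re-derive this for $k=1,2$ in Step~3 via your sign split; both routes are valid and lead to the same estimates.
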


Based on \lref{lem:kahkaBalMagicd1} we can introduce a special subset of $k$-quasi-balanced magic states,  \begin{align}\label{eq:MagicQBset}
\tscrM_{n,k}^\QB(d):=\bigl\{|\Psi\>\in \scrM_{nk}^\QB(d) \,:\, \kappa(\Psi,\scrT_\ns)\leq 2(d-2)\kappa_{d,k} \bigr\}. 
\end{align}
If $k=0 \mmod 3$, then the state $|\tPsi\>$ defined in \eref{eq:Psij} belongs to $\tscrM_{n,k}^\QB(d)$; if instead $k\neq 0 \mmod 3$, then at least one of the three states $|\Psi_0\>, |\Psi_1\>, |\Psi_2\>$ belongs to $\tscrM_{n,k}^\QB(d)$, given that $\{|\Psi_j\>\}_{j=0}^2$ is a $k$-balanced magic ensemble  on $\scrM_{n,k}(d)$, which means $\sum_{j=0,1,2}\kappa(\Psi_j,\scrT_\ns)/3=2(d-2)\kappa_{d,k}$.

\begin{lemma}\label{lem:kahkaQBalMagicd1}
	Suppose $d$ is an odd prime, $d=1\mmod 3$,  $|\Psi\>\in \tscrM_{n,k}^\QB(d)$ with $k\geq 1$, and  $\gamma^\rmB_{d,k,n}:=\gamma^\rmB_{d,k}+(6d/D)$. Then 
	\begin{gather}	
	|\hka|(\Psi,\scrT_\ns)\leq 	\kappa(\Psi,\scrT_\ns)+\frac{6d}{D}	
	\leq 2(d-2)\kappa_{d,k}+\frac{6d}{D} \leq \gamma^\rmB_{d,k,n}\leq \frac{4^{(k+2)/3}}{d^{k-1}}+\frac{6d}{D}. \label{eq:kahkaNsAbsQBalMagicA}
	\end{gather}
	If in addition $n\geq(4k+8)/3$, then 
	\begin{align}\label{eq:kahkaNsAbsQBalMagicB}
	|\hka|(\Psi,\scrT_\ns)=\hka(\Psi,\scrT_\ns)\leq 	\frac{D+2}{D+d}\kappa(\Psi,\scrT_\ns)\leq 
	\frac{2(d-2) (D+2)\kappa_{d,k}}{D+d}\leq \frac{(D+2)\gamma^\rmB_{d,k}}{D+d}. 
	\end{align} 	
\end{lemma}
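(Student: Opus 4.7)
The plan is to establish \eqref{eq:kahkaNsAbsQBalMagicA} by chaining three ingredients that are already in hand. First I would invoke \lref{lem:hkaAuxMagicd1}, specifically \eqref{eq:hkaSigNsAbsd1A}, which yields $|\hka|(\Psi,\scrT_\ns)\leq \kappa(\Psi,\scrT_\ns)+6d/D$ for any $|\Psi\>\in \scrM_{n,k}(d)$. Next, the hypothesis $|\Psi\>\in \tscrM_{n,k}^\QB(d)$ together with the defining condition \eqref{eq:MagicQBset} supplies $\kappa(\Psi,\scrT_\ns)\leq 2(d-2)\kappa_{d,k}$. Finally, the bound $2(d-2)\kappa_{d,k}\leq \gamma^\rmB_{d,k}$ is precisely \eqref{eq:kahkaNsAbsBalMagic} of \lref{lem:kahkaBalMagicd1}. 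Combining these with the definition $\gamma^\rmB_{d,k,n}:=\gamma^\rmB_{d,k}+(6d/D)$ produces the first three inequalities in \eqref{eq:kahkaNsAbsQBalMagicA}; the trailing inequality $\gamma^\rmB_{d,k}\leq 4^{(k+2)/3}/d^{k-1}$ is immediate from $(d-2)/d\leq 1$.

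The substantive task is \eqref{eq:kahkaNsAbsQBalMagicB} under the extra hypothesis $n\geq (4k+8)/3$. The plan is to upgrade the absolute value bound to an equality by proving the pointwise nonnegativity $\hka(\Psi,\caT)\geq 0$ for every $\caT\in \scrT_\ns$; the analogous nonnegativity on $\scrT_\sym$ follows from $\kappa(\Psi,\caT)=1$ via \pref{pro:hkaka}, and on $\scrT_\defe$ it follows from \lref{lem:Tindex} once it is verified on $\scrT_\ns$. By the explicit formula $\hka(\Psi,\caT)=\frac{D+2}{D-1}\bigl[\kappa(\Psi,\caT)-\kappa(\Psi,\Sigma(d))/(2D+2d)\bigr]$ in \pref{pro:hkaka}, the nonnegativity reduces to $\kappa(\Psi,\caT)\geq \kappa(\Psi,\Sigma(d))/(2D+2d)$. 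I would upper-bound the right side using $\kappa(\Psi,\Sigma(d))=6+\kappa(\Psi,\scrT_\ns)\leq 6+\gamma^\rmB_{d,k}$ from the first part, and lower-bound the left side by exploiting the quasi-balanced tensor structure of $|\Psi\>$: decomposing $\kappa(\Psi,\caT)=\prod_{i=1}^k \kappa(f_i,\caT)$ via \eqref{eq:kappaMagicDecomd1} and grouping factors according to the distribution of cubic characters among the $f_i$, the product telescopes into $(\kappa_0^\uparrow\kappa_1^\uparrow\kappa_2^\uparrow)^{\lfloor k/3\rfloor}$ times an $s_k$-factor remainder. Using $\kappa_0^\uparrow\kappa_1^\uparrow\kappa_2^\uparrow=\tg(d)^2/d^4\geq 1/d^4$ from \lref{lem:kappaOrder} and $\kappa_0^\uparrow>0$, this produces a uniform lower bound on $\kappa(\Psi,\caT)$ essentially of order $d^{-(4k+2)/3}$. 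Substituting $D=d^n$ with $n\geq (4k+8)/3$, I can then check that the ratio $(2D+2d)\kappa(\Psi,\caT)$ dominates $6+\gamma^\rmB_{d,k}$, completing the nonnegativity argument. Once $|\hka|(\Psi,\scrT_\ns)=\hka(\Psi,\scrT_\ns)$ is established, the remaining inequalities in \eqref{eq:kahkaNsAbsQBalMagicB} follow by summing the formula of \pref{pro:hkaka} over $\scrT_\ns$ (which collapses the $(2D+2d)^{-1}$ correction into $(D+2)/(D+d)$) and applying the bound on $\kappa(\Psi,\scrT_\ns)$ from the first part.

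The main obstacle will be making the lower bound on $\kappa(\Psi,\caT)$ tight enough that the threshold $n\geq (4k+8)/3$ is exactly what is required, rather than a significantly larger threshold as in the generic estimate \eqref{eq:hkaSigNsAbsd1B} of \lref{lem:hkaAuxMagicd1}. Concretely, one must avoid losing the constant $(s_k+1)$ and the precise power $\tg(d)^{2\lfloor k/3\rfloor}$ built into $\kappa_{d,k}$, and keep careful book-keeping of the discrepancy between the $\lfloor k/3\rfloor$-balanced contributions and the at most two extra quasi-balanced tensor factors. This is where the quasi-balanced structure encoded in $\tscrM_{n,k}^\QB(d)$ genuinely improves over the generic $\scrM_{n,k}(d)$ bound, and checking that the precise exponent $(4k+8)/3$ suffices is the only calculation I expect to require nontrivial estimation beyond direct invocation of earlier lemmas.
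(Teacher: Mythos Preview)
Your proposal is correct and follows essentially the same route as the paper. Part \eqref{eq:kahkaNsAbsQBalMagicA} is handled identically; for part \eqref{eq:kahkaNsAbsQBalMagicB} the paper organizes your ``$s_k$-factor remainder'' idea as an explicit case split on $k\bmod 3$, using the quantitative bounds $\kappa(f,\caT)\geq 1/(9d^2)$ from \lref{lem:kappaMagicOned1} for the single leftover factor and $\kappa(f_1,\caT)\kappa(f_2,\caT)\geq 1/(4d^3)$ (via \lref{lem:kappaOrder}, since the two leftover characters differ) for the pair, together with $\kappa(\Psi,\Sigma(d))\leq 21/2$ from \lref{lem:kahkaNsSigMagicSLUBd1} on the other side---exactly the estimates you would need to make the threshold $(4k+8)/3$ sharp. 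One harmless redundancy: $\scrT_\defe\subset\scrT_\ns$, so there is no separate case to check there.
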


\subsubsection{Deviations from 3-designs} 

By virtue of \lref{lem:kahkaBalMagicd1} we can clarify the deviation of $\orb(\scrE)$ from a 3-design
with respect to three main figures of merit (cf. \thsref{thm:Phi3MagicUBd1}-\ref{thm:ShNormMagicd1} in \sref{sec:MagicOrbitKeyd1}). The following theorem is 
a simple corollary of \thref{thm:Phi3Balance} and   \lref{lem:kahkaBalMagicd1}. 
\begin{theorem}
	Suppose $d$ is an odd prime, $d=1\mmod 3$, and $\scrE$ is a $k$-balanced magic ensemble 
	on  $\scrM_{n,k}(d)$ with $k\geq 1$. Then
	\begin{align} 
	\bar{\Phi}_3(\orb(\scrE))-1= \frac{(d-2)(D+2)^2}{3(D-1)(D+d)}\left[\kappa(\scrE)-\frac{3}{D+2}\right]^2	\leq 
	\frac{4^{(2k+1)/3}(d-2)(D-1)}{3d^{2k}(D+d)}
	\leq \frac{4^{(2k+1)/3}}{3d^{2k-1}}. 	
	\end{align}			
\end{theorem}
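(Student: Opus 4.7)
The statement is asserted to be a simple corollary of \thref{thm:Phi3Balance} together with \lref{lem:kahkaBalMagicd1}, so I would structure the proof in three short steps: reduce to the balanced-ensemble framework, apply the exact formula of \thref{thm:Phi3Balance}, and then insert the sharp bound on $\kappa_{d,k}$ from \lref{lem:kahkaBalMagicd1}.

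First I would verify that any $k$-balanced magic ensemble $\scrE$ on $\scrM_{n,k}(d)$ is in fact balanced in the sense of \sref{sec:Balance}. By \lref{lem:kahkaBalMagicd1}, the function $\caT \mapsto \kappa(\scrE,\caT)$ is constant on $\scrT_\ns$, taking the value $\kappa_{d,k}$. This is exactly the definition of a balanced ensemble, and we may therefore write $\kappa(\scrE)=\kappa_{d,k}$. Substituting $\kappa(\scrE)=\kappa_{d,k}$ into the identity of \thref{thm:Phi3Balance} yields the claimed equality
$$\bar{\Phi}_3(\orb(\scrE))-1=\frac{(d-2)(D+2)^2}{3(D-1)(D+d)}\left[\kappa_{d,k}-\frac{3}{D+2}\right]^2.$$

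For the first inequality, after clearing common positive factors the task reduces to proving $|(D+2)\kappa_{d,k}-3| \le 4^{(2k+1)/6}(D-1)/d^k$. I would prove the two directions separately using the bounds $0\le \kappa_{d,k}\le 4^{(2k+1)/6}/d^k$ provided by \lref{lem:kahkaBalMagicd1}. For the upper direction, $(D+2)\kappa_{d,k}-3\le (D+2)\cdot 4^{(2k+1)/6}/d^k-3$, and subtracting $4^{(2k+1)/6}(D-1)/d^k$ gives exactly $3\bigl(4^{(2k+1)/6}/d^k-1\bigr)$; this is nonpositive because $2^{(2k+1)/3}\le d^k$ holds for every odd prime $d\ge 7$ and every $k\ge 1$. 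For the lower direction, the crude estimate $\kappa_{d,k}\ge 0$ gives $3-(D+2)\kappa_{d,k}\le 3$, so the claim holds whenever $3d^k\le 4^{(2k+1)/6}(D-1)$; this is immediate when $n\ge k+1$ (then $D-1$ dwarfs $d^k$) and also for $n=k\ge 2$ since $4^{(2k+1)/6}(1-1/d^k)>3$ once $k\ge 2$. The only borderline case is $n=k=1$, where the crude bound fails by a small margin and one must instead use the explicit value $\kappa_{d,1}=2/d$; this yields $|(d+2)\kappa_{d,1}-3|=1-4/d\le 2(d-1)/d$, an elementary one-line check for $d\ge 7$.

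The final inequality $(d-2)(D-1)/(D+d)\le d$ is a routine algebraic rearrangement valid for all $D\ge 1$ and $d\ge 2$, so no further work is needed. Since the authors describe the result as a simple corollary, there is no genuine obstacle beyond the arithmetic above; the only point requiring care is the corner case $n=k=1$, where the generic argument that uses only $\kappa_{d,k}\ge 0$ is not quite sharp enough and one must fall back on the explicit value of $\kappa_{d,1}$ given by \lref{lem:kahkaBalMagicd1}. This parallels the occasional singling out of the $n=k=1$ regime throughout \sref{sec:MagicOrbitKeyd1}.
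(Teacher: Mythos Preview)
Your proof is correct and follows essentially the same route as the paper. The only difference is one of packaging: the paper records the key inequality $|(D+2)\kappa_{d,k}-3|\le 4^{(2k+1)/6}(D-1)/d^k$ as part of \lref{lem:kahkaBalMagicd1} itself, namely the bound on $|\hka(\scrE,\caT)|$ in \eref{eq:hkaBalMagic}, so that the theorem becomes a one-line citation. You instead re-derive that bound from the cruder estimates $0\le\kappa_{d,k}\le 4^{(2k+1)/6}/d^k$ in \eref{eq:kaBalance}, which forces the small case analysis. Had you invoked \eref{eq:hkaBalMagic} directly, the corner case $n=k=1$ would already be absorbed (the paper's proof of that lemma handles $k=1,2$ via the explicit values $\kappa_{d,1}=2/d$ and $\kappa_{d,2}=3/d^2$, exactly as you do).
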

In the special cases $n=k=1,2$, we can derive  more explicit formulas for $\bar{\Phi}_3(\orb(\scrE))$, given that $\kappa(\scrE)=2/d$ when $k=1$ and $\kappa(\scrE)=3/d^2$ when $k=2$, 
\begin{align}\label{eq:Phi3d1Balancen12}
\bar{\Phi}_3(\orb(\scrE))=1+
\begin{cases}
\frac{(d-2)(d-4)^2}{6d^3(d-1)} & n=k=1, \\[1ex]
\frac{12(d-2)}{d^5(d+1)^2(d-1)}& n=k=2. 
\end{cases}
\end{align}

Next, we consider 
the operator norms of $\bQ(\orb(\scrE))$ and $\bQ(\orb(\scrE))-P_\sym$. 
\begin{theorem}\label{thm:Moment3BalMagicLUBd1}
	Suppose $d$ is an odd prime, $d=1\mmod 3$, and $\scrE$ is a $k$-balanced magic ensemble 
	on  $\scrM_{n,k}(d)$ with $k\geq 1$. Then	
	\begin{gather} 
	\frac{(s_k+1)d^{n-k-\lfloor k/3\rfloor}}{3}	\leq \|\bQ(\orb(\scrE))\|	=\max\left\{\frac{D+2}{3}\kappa(\scrE), \frac{D+2}{D-1}[1-\kappa(\scrE)] \right\}
	\leq\frac{4^{(2k+1)/6}(d^n+2)}{3d^k},\\
	\|\bQ(\orb(\scrE))-P_\sym\|=\left|\frac{D+2}{3}\kappa(\scrE)-1\right|	\leq \frac{4^{(2k+1)/6}d^{n-k}}{3}.
	\end{gather}			
\end{theorem}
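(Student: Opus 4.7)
The strategy is to reduce everything to \Thref{thm:Moment3Balance} and \Tref{tab:Qbalance} by first confirming that a $k$-balanced magic ensemble is balanced in the sense of \Sref{sec:Balance}. This is exactly the content of \Lref{lem:kahkaBalMagicd1}: it asserts $\kappa(\scrE,\caT)=\kappa_{d,k}$ for every $\caT\in \scrT_\ns$, making $\kappa(\scrE,\caT)$ independent of the choice of $\caT\in \scrT_\ns$, which is precisely the definition of a balanced ensemble. No new combinatorial input is needed at this step beyond what \Lref{lem:kahkaBalMagicd1} already provides.

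Once balance is established, the column $d=1\mmod 3$ of \Tref{tab:Qbalance} furnishes the full spectrum of $\bQ(\orb(\scrE))$ inside $\Sym_3\bigl(\caH_d^{\otimes n}\bigr)$ as three potential eigenvalues $\lambda_1=(D+2)\kappa(\scrE)/3$, $\lambda_2$, and $\lambda_3=(D+2)[1-\kappa(\scrE)]/(D-1)$. The same table records $\|\bQ(\orb(\scrE))\|=\lambda_1$ when $\kappa(\scrE)\geq 3/(D+2)$ and $\|\bQ(\orb(\scrE))\|=\lambda_3$ otherwise, which I would repackage as the single max expression in the theorem statement. The identity $\|\bQ(\orb(\scrE))-P_\sym\|=|\lambda_1-1|$ is read off the same row. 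Substituting $\kappa(\scrE)=\kappa_{d,k}$ then yields the claimed equalities.

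For the quantitative bounds, I would plug in the explicit two-sided estimates on $\kappa_{d,k}$ supplied by \Lref{lem:kahkaBalMagicd1}, namely $(s_k+1)/d^{k+\lfloor k/3\rfloor}\leq \kappa_{d,k}\leq 4^{(2k+1)/6}/d^k$. The lower bound combines with $\|\bQ(\orb(\scrE))\|\geq \lambda_1\geq D\kappa_{d,k}/3$ to give $(s_k+1)d^{n-k-\lfloor k/3\rfloor}/3$. For the upper bound, applying the upper estimate on $\kappa_{d,k}$ to $\lambda_1$ gives $\lambda_1\leq 4^{(2k+1)/6}(D+2)/(3d^k)$, which is the quantity in the theorem. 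For the bound on $\|\bQ-P_\sym\|$, I would simply write $|\lambda_1-1|\leq \lambda_1\leq 4^{(2k+1)/6}(D+2)/(3d^k)$ and absorb the additive constant to reach $4^{(2k+1)/6}d^{n-k}/3$.

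The main nuisance I anticipate is handling the upper bound on $\|\bQ(\orb(\scrE))\|$ in the regime where the max is realized by $\lambda_3$ rather than $\lambda_1$. This happens only when $\kappa_{d,k}<3/(D+2)$, which under $k\geq 1$ and $d\geq 7$ forces $D$ to be small (concretely, only for a handful of pairs with $n$ close to $k$). There $\lambda_3\leq (D+2)/(D-1)$ is of order unity, while the envelope $4^{(2k+1)/6}(d^n+2)/(3d^k)$ is of order $d^{n-k}$; one verifies by a brief direct case-check, using $4^{(2k+1)/6}\geq 2$ for $k\geq 1$ and $d^{n-k}\geq 1$, that the envelope still dominates $\lambda_3$ in all nonedge cases, with the remaining boundary instances handled by plugging numerics into $\lambda_3=(D+2)(d^k-2)/(d^k(D-1))$ directly. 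This is the only place where care beyond assembling earlier results is required; everything else is bookkeeping.
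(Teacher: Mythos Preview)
Your strategy is identical to the paper's: \Lref{lem:kahkaBalMagicd1} confirms $\kappa(\scrE,\caT)=\kappa_{d,k}$ for all $\caT\in\scrT_\ns$, hence $\scrE$ is balanced, and \Thref{thm:Moment3Balance} with the $d=1\mmod 3$ column of \Tref{tab:Qbalance} then yields the max formula and $\|\bQ-P_\sym\|=|\lambda_1-1|$; the lower bound and the $\lambda_1$-half of the upper bound follow by substituting the two-sided estimates on $\kappa_{d,k}$ from \Lref{lem:kahkaBalMagicd1}.

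Where the paper is more explicit than you is the $\lambda_3$-branch. Rather than a case-check, it supplies the uniform estimate
\[
\lambda_3=\frac{D+2}{D-1}\bigl[1-\kappa(\scrE)\bigr]\leq \frac{(d+2)(d-2)}{d(d-1)}\leq \frac{15}{14},
\]
the first inequality saturated at $n=k=1$ (so $D=d$, $\kappa(\scrE)=2/d$) and the second at $d=7$; this is the third ingredient the paper cites alongside \Thref{thm:Moment3Balance} and \Lref{lem:kahkaBalMagicd1}. Your assertion that a ``brief direct case-check'' will show the envelope dominates $\lambda_3$ does not survive this extremal point: at $n=k=1$, $d=7$ one finds $\lambda_3=15/14$ while the envelope equals $2(d+2)/(3d)=6/7<15/14$, so plugging numerics there reveals a failure rather than closing the argument. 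You would need the paper's uniform bound (or something equivalent) to control $\lambda_3$, not a comparison to the $\lambda_1$-envelope. A minor slip: your expression $\lambda_3=(D+2)(d^k-2)/(d^k(D-1))$ tacitly assumes $\kappa_{d,k}=2/d^k$, which holds only at $k=1$; for general $k$ one has $\kappa_{d,k}=(s_k+1)\tg(d)^{2\lfloor k/3\rfloor}/d^{k+\lfloor k/3\rfloor}$.
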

\Thref{thm:Moment3BalMagicLUBd1} is a simple corollary of \thref{thm:Moment3Balance}, \lref{lem:kahkaBalMagicd1}, and the following fact,
\begin{align}
\frac{D+2}{D-1}[1-\kappa(\scrE)]\leq \frac{(d+2)(d-2)}{d(d-1)}\leq \frac{15}{14}. 
\end{align}	
Here the first inequality is saturated when $n=k=1$, in which case $D=d$ and $\kappa(\scrE)=2/d$; the second inequality is saturated when $d=7$. In the special cases $n=k=1,2$, we can derive more explicit formulas,
\begin{align}
\|\bQ(\orb(\scrE))\|=\begin{cases}
\frac{(d+2)(d-2)}{d(d-1)} &n=k=1,\\[1ex]
\frac{d^2+2}{d^2} & n=k=2; 
\end{cases}\quad 
\|\bQ(\orb(\scrE))-P_\sym \|=\begin{cases}
\frac{d-4}{3d} &n=k=1,\\[1ex]
\frac{2}{d^2} & n=k=2. 
\end{cases}
\end{align}	
In general, $k$ should satisfy the condition $k\geq 3n/4$ to suppress the operator norm $\|\bQ(\orb(\scrE))\|$.

\begin{figure}[tb]
	\centering 
	\includegraphics[width=0.75\textwidth]{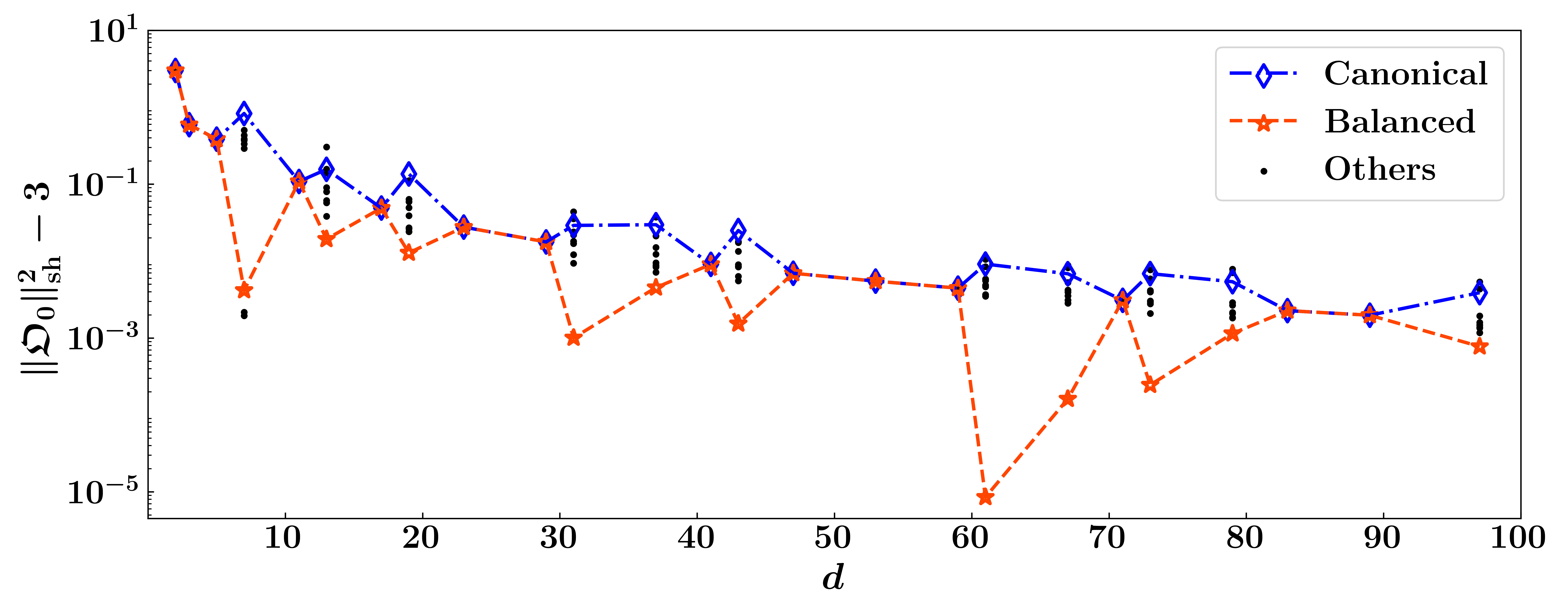}
	\caption{The squared shadow norm of $\Ob_0$ with respect to  magic orbits $\orb(\Psi)$ for all $|\Psi\>\in \scrM_{n,k}(d)$ with $n=50$ and $k=3$, where $\Ob$ is the projector onto a stabilizer state. Orbits based on  canonical magic states and $k$-balanced magic states are highlighted.} 
	\label{fig:stabstate_canbal}
\end{figure}

\begin{figure}[tb]
	\centering 
	\includegraphics[width=0.8\textwidth]{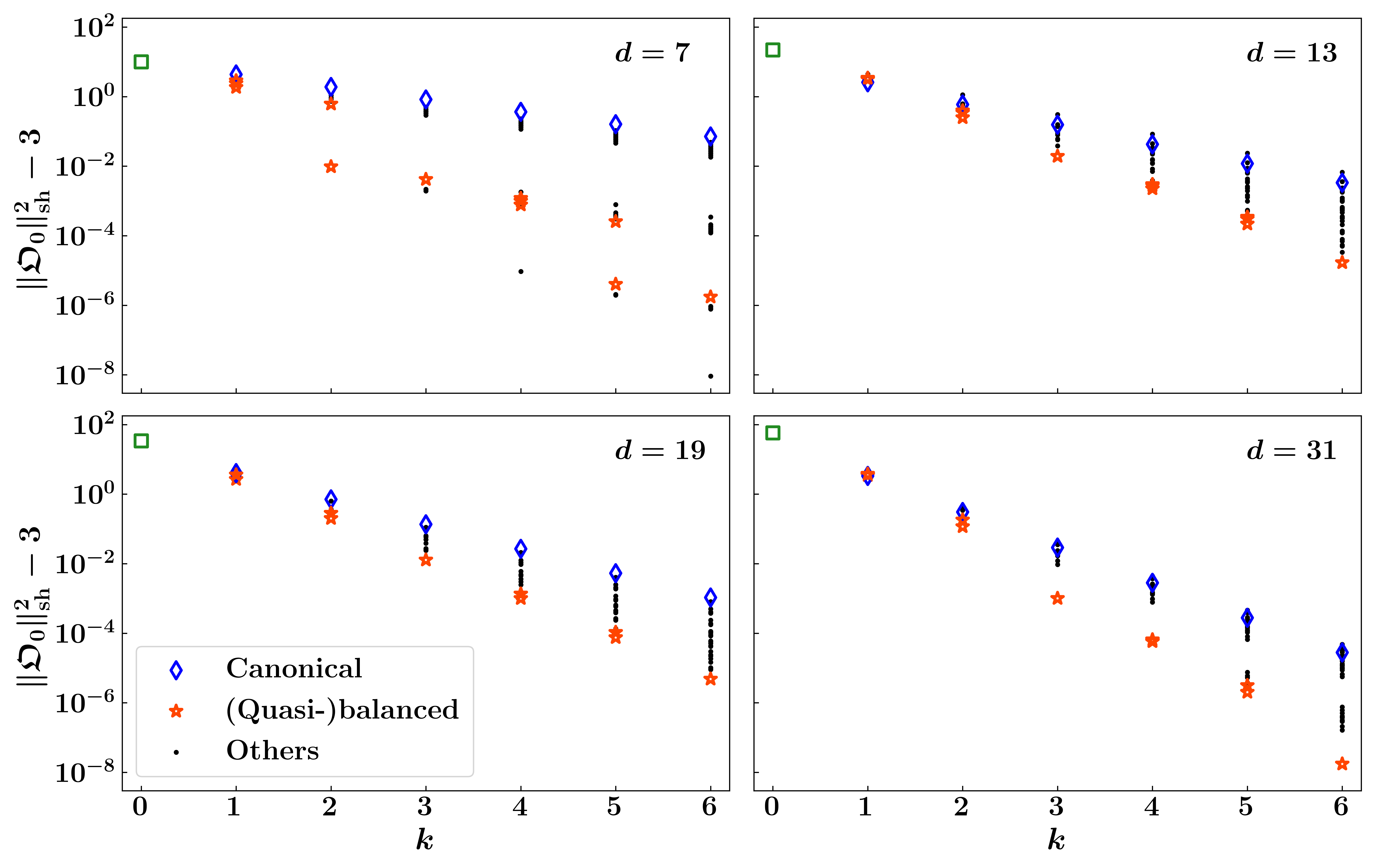}
	\caption{The squared shadow norm of $\Ob_0$ with respect to $\orb(\Psi)$ for all $|\Psi\> \in \scrM_{n,k}(d)$ with $n=50$ and $0 \leq k \leq 6$, where $\Ob$ is the projector onto a stabilizer state. Results on four local dimensions ($d=7,13,19,31$) are presented in four plots, respectively; orbits based on  canonical magic states and $k$-quasi-balanced magic sates  are highlighted as in \fref{fig:stabstate_canbal}.
		In each plot, the shadow norm associated with the stabilizer orbit $\Stab(n,d)$ (corresponding to the case $k=0$) is marked with a green square.	
	} 
	\label{fig:d713k6}
\end{figure}

\begin{theorem}\label{thm:ShNormBalMagicd1}
	Suppose $d$ is an odd prime, $d=1\mmod 3$,  $\scrE$ is a $k$-balanced magic ensemble 
	on $\scrM_{n,k}(d)$ with $k\geq 1$, and $\Ob\in \caL_0\bigl(\caH_d^{\otimes n}\bigr)$. Then 
	\begin{gather}
	\|\Ob\|^2_{\orb(\scrE)}
	\leq\left[1+	\frac{(D+1)\gamma^\rmB_{d,k}}{D+d}\right]
	\|\Ob\|_2^2+
	2\|\Ob\|^2 \leq 	
	(1+\gamma^\rmB_{d,k})
	\|\Ob\|_2^2+
	2\|\Ob\|^2,							\\
	\hka(\scrE,\Sigma(d))-3-\frac{5}{D}\leq \|\orb(\scrE)\|_\sh
	\leq 3+	\frac{(D+1)\gamma^\rmB_{d,k}}{D+d}
	\leq 3+\gamma^\rmB_{d,k}\leq 3+\frac{4^{(k+2)/3}}{d^{k-1}}. 
	\end{gather}
\end{theorem}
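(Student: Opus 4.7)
The plan is to deduce Theorem~\ref{thm:ShNormBalMagicd1} as a short corollary of the general bounds for balanced ensembles in Theorem~\ref{thm:ShNormBalance} together with the sharp estimates for $|\hka|(\scrE,\scrT_\ns)$ already established in Lemma~\ref{lem:kahkaBalMagicd1}. Since a $k$-balanced magic ensemble is balanced in the sense of Section~\ref{sec:Balance}, no additional structural work is needed; the theorem reduces to a substitution and a few elementary inequalities.

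First I would rewrite the content of Theorem~\ref{thm:ShNormBalance} in the equivalent form
\begin{align*}
\|\Ob\|^2_{\orb(\scrE)}&\leq\Bigl[1+\tfrac{D+1}{D+2}|\hka|(\scrE,\scrT_\ns)\Bigr]\|\Ob\|_2^2+2\|\Ob\|^2,\\
\|\orb(\scrE)\|_\sh&\leq 3+\tfrac{D+1}{D+2}|\hka|(\scrE,\scrT_\ns),
\end{align*}
using the identity $|\hka|(\scrE,\scrT_\ns)=\frac{2(d-2)(D+2)^2}{(D-1)(D+d)}\bigl|\kappa(\scrE)-3/(D+2)\bigr|$ that follows from Eqs.~\eqref{eq:hkaBalance} and \eqref{eq:hkaSigBalance}. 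Equation~\eref{eq:kahkaNsAbsBalMagic} in Lemma~\ref{lem:kahkaBalMagicd1} then supplies the key bound $|\hka|(\scrE,\scrT_\ns)\leq(D+2)\gamma^\rmB_{d,k}/(D+d)$, and inserting it into the two displays makes the factors of $D+2$ cancel, giving at once the first inequality in each of the upper-bound chains stated in the theorem. The subsequent weakenings are trivial: the bound $3+(D+1)\gamma^\rmB_{d,k}/(D+d)\leq 3+\gamma^\rmB_{d,k}$ uses $(D+1)/(D+d)\leq 1$ (valid for $d\geq 7$), and $\gamma^\rmB_{d,k}\leq 4^{(k+2)/3}/d^{k-1}$ follows from $(d-2)/d\leq 1$ applied to the definition $\gamma^\rmB_{d,k}=4^{(k+2)/3}(d-2)/d^k$.

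For the lower bound $\hka(\scrE,\Sigma(d))-3-5/D\leq\|\orb(\scrE)\|_\sh$ I would test the ensemble shadow norm on a single well-chosen observable, namely the traceless part $\Ob=|S\rangle\!\langle S|-\bbI/D$ of the projector onto an arbitrary stabilizer state $|S\rangle$. The ensemble version of Corollary~\ref{cor:Stab1ShNormGen}, which is immediate from $\bcaQ_{\orb(\scrE)}=\sum_j p_j\bcaQ_{\orb(\Psi_j)}$ and the linearity of $\hka$ in its first argument, yields $\|\Ob\|^2_{\orb(\scrE)}/\|\Ob\|_2^2\geq\hka(\scrE,\Sigma(d))-3-5/D$, and the claim follows because $\|\orb(\scrE)\|_\sh$ is by definition the supremum of this ratio over all normalized traceless observables.

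Conceptually the argument is pure book-keeping: all the analytic content --- the evaluation of $\kappa(\scrE,\caT)$ for a $k$-balanced magic ensemble and its conversion to the explicit constant $\gamma^\rmB_{d,k}$ via the Gauss-sum estimate $\tg(d)^2\leq 4d-27$ --- has already been absorbed into Lemma~\ref{lem:kahkaBalMagicd1}. The only mild checkpoint is verifying that Theorem~\ref{thm:ShNormGen} and Corollary~\ref{cor:Stab1ShNormGen}, stated for single-state orbits $\orb(\Psi)$, extend verbatim to convex mixtures $\orb(\scrE)$; this is immediate from the linearity of the shadow map in the underlying moment operator. No obstacle of substance remains.
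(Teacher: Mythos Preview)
Your proposal is correct and matches the paper's approach essentially verbatim: the paper states that Theorem~\ref{thm:ShNormBalMagicd1} is a simple corollary of Theorem~\ref{thm:ShNormGen} and Lemma~\ref{lem:kahkaBalMagicd1}, and your rewriting of Theorem~\ref{thm:ShNormBalance} in terms of $|\hka|(\scrE,\scrT_\ns)$ is just Theorem~\ref{thm:ShNormGen} applied to the ensemble $\scrE$, while your lower-bound argument via Corollary~\ref{cor:Stab1ShNormGen} is exactly how the paper establishes the lower bound in \eref{eq:ShNormGenB}. The only cosmetic difference is that you route through Theorem~\ref{thm:ShNormBalance} rather than citing Theorem~\ref{thm:ShNormGen} directly, but since the former is itself derived from the latter by the same substitution, the arguments are identical.
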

\Thref{thm:ShNormBalMagicd1} is a simple corollary of \thref{thm:ShNormGen} and \lref{lem:kahkaBalMagicd1}. If instead $\Ob$ is the projector onto a stabilizer state, then \thref{thm:StabShNormMagicd1} is still applicable when the magic state $|\Psi\>\in \scrM_{n,k}(d)$ is replaced by the $k$-balanced magic ensemble $\scrE$. Notably, we have
\begin{align}
\|\Ob_0\|^2_{\orb(\scrE)}&=\frac{D+1}{D+2}\hka(\scrE,\Sigma(d))-\frac{(D+1)(3D-1)}{D^2}=
\frac{(D+1)(3D^2-3dD+D+d)}{D^2(D+d)}+\frac{D+1}{D+d}\kappa(\scrE,\scrT_\ns).  
\end{align}	
When the $k$-balanced magic ensemble $\scrE$ is replaced by a $k$-quasi-balanced magic state  $|\Psi\>$ in $\tscrM^\qb_{n,k}(d)$, the above results still hold after minor modification.
For example, the following theorem is  an analog of \thref{thm:ShNormBalMagicd1} and a simple corollary of \thref{thm:ShNormGen}  and \lref{lem:kahkaQBalMagicd1}. Compared with the result on a general magic orbit presented in \thref{thm:ShNormMagicd1},  the shadow norm $\|\orb(\Psi)\|_\sh$ converges to 3 more quickly as $k$ and $d$ increases.

\begin{theorem}
	Suppose $d$ is an odd prime, $d=1\mmod 3$, $|\Psi\>\in \tscrM^\qb_{n,k}(d)$ with $k\geq 1$, and $\Ob\in \caL_0\bigl(\caH_d^{\otimes n}\bigr)$. Then 
	\begin{gather}
	\|\Ob\|^2_{\Psi} \leq 	
	(1+\gamma^\rmB_{d,n,k})
	\|\Ob\|_2^2+
	2\|\Ob\|^2,							\\
	\hka({\Psi},\Sigma(d))-3-\frac{5}{D}\leq \|\orb({\Psi})\|_\sh
	\leq 3+\gamma^\rmB_{d,n,k}\leq 3+\frac{4^{(k+2)/3}}{d^{k-1}}+\frac{6d}{D}.  
	\end{gather} 
	If in addition $n\geq(4k+8)/3$, then \thref{thm:ShNormBalMagicd1} is still applicable when $\scrE$ is replaced by $|\Psi\>$.
\end{theorem}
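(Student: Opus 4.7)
The plan is to obtain this theorem as a direct corollary of \thref{thm:ShNormGen} together with \lref{lem:kahkaQBalMagicd1}; no new technical machinery is required, so the proof is essentially a matter of substituting one estimate into another.

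First, I would invoke \thref{thm:ShNormGen} applied to the magic orbit $\orb(\Psi)$. For a traceless observable $\Ob$, that theorem already provides
\begin{equation*}
\|\Ob\|^2_{\orb(\Psi)}\leq \left[1+\tfrac{D+1}{D+2}|\hka|(\Psi,\scrT_\ns)\right]\|\Ob\|_2^2+2\|\Ob\|^2,
\end{equation*}
together with the corresponding bound on the ensemble shadow norm,
\begin{equation*}
\|\orb(\Psi)\|_\sh\leq 3+\tfrac{D+1}{D+2}|\hka|(\Psi,\scrT_\ns),
\end{equation*}
and the lower bound $\hka(\Psi,\Sigma(d))-3-5/D\leq \|\orb(\Psi)\|_\sh$. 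These are the skeletons of the two inequalities in the statement, so the whole task reduces to controlling $|\hka|(\Psi,\scrT_\ns)$ when $|\Psi\>\in \tscrM^\qb_{n,k}(d)$.

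Next, I would plug in the first part of \lref{lem:kahkaQBalMagicd1}, namely $|\hka|(\Psi,\scrT_\ns)\leq \gamma^\rmB_{d,k,n}\leq 4^{(k+2)/3}/d^{k-1}+6d/D$, which is valid for any $|\Psi\>\in \tscrM^\qb_{n,k}(d)$. Absorbing the factor $(D+1)/(D+2)\leq 1$ into the bound, one immediately gets $\|\Ob\|^2_{\Psi}\leq (1+\gamma^\rmB_{d,n,k})\|\Ob\|_2^2+2\|\Ob\|^2$ and $\|\orb(\Psi)\|_\sh\leq 3+\gamma^\rmB_{d,n,k}\leq 3+4^{(k+2)/3}/d^{k-1}+6d/D$. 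The lower bound $\hka(\Psi,\Sigma(d))-3-5/D\leq \|\orb(\Psi)\|_\sh$ is inherited from \thref{thm:ShNormGen} with no extra work.

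For the final sentence, under the sharper hypothesis $n\geq (4k+8)/3$, the second half of \lref{lem:kahkaQBalMagicd1} improves the control to
\begin{equation*}
|\hka|(\Psi,\scrT_\ns)\leq \tfrac{(D+2)\gamma^\rmB_{d,k}}{D+d},
\end{equation*}
which is precisely the bound on $|\hka|(\scrE,\scrT_\ns)$ that powers \thref{thm:ShNormBalMagicd1}. Substituting this tighter estimate into the same two inequalities from \thref{thm:ShNormGen} yields the same conclusions as \thref{thm:ShNormBalMagicd1} but with the balanced ensemble $\scrE$ replaced by the single state $|\Psi\>$. There is no real obstacle here; the only thing that merits a careful line is verifying that $(D+1)/(D+2)\leq 1$ and $D+d\leq (D+2)$ bookkeeping reproduces the stated constants, but these are just algebraic simplifications using $d\geq 7$ and $D=d^n$.
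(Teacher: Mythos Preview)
Your proposal is correct and matches the paper's approach exactly: the paper states that this theorem ``is an analog of \thref{thm:ShNormBalMagicd1} and a simple corollary of \thref{thm:ShNormGen} and \lref{lem:kahkaQBalMagicd1},'' which is precisely your plan. One small slip: in your final sentence you write ``$D+d\leq (D+2)$ bookkeeping,'' but $D+d\leq D+2$ is false since $d\geq 7$; presumably you meant the cancellation $\tfrac{D+1}{D+2}\cdot\tfrac{(D+2)\gamma^\rmB_{d,k}}{D+d}=\tfrac{(D+1)\gamma^\rmB_{d,k}}{D+d}$, which indeed reproduces the constants in \thref{thm:ShNormBalMagicd1} cleanly with no inequality needed.
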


At this point,
it is instructive to compare shadow norms  of  stabilizer states with respect to $\orb(\Psi)$ for all $|\Psi\>\in \scrM_{n,k}(d)$, which can be calculated by virtue of \thref{thm:StabShNormMagicd1} and \pref{pro:kahkaMagicd1}. \Fsref{fig:stabstate_canbal} and \ref{fig:d713k6}
illustrate some numerical results, which  highlight the performances of  magic obits based on $k$-quasi-balanced magic states.  Note that a $k$-quasi-balanced magic state is automatically $k$-balanced when $k=0\mmod 3$. 
Except for the special case $d=7$, 
orbits based on  $k$-quasi-balanced magic states are nearly optimal, while orbits based on canonical magic states are nearly worst, although the difference is not so significant for practical applications.  \Fref{fig:stabstate_canbal} also covers the case $d\neq 1 \mmod 3$, in which all magic states in $\scrM_{n,k}(d)$ are $k$-balanced and the common shadow norm is  determined  by \thref{thm:StabShNormMagicd23} and \eref{eq:Stab1ShNormMagicd23}.

\subsection{\label{sec:3designExact}Construction of exact 3-designs}

Suppose $\scrE$ is a balanced ensemble and satisfies
the conditions in \eref{eq:DesignFiducialCon}, which  can now be simplified as 
\begin{align}\label{eq:ExDesignFiducialCon}
\kappa(\scrE)\leq \frac{3}{D+2},
\end{align}
which is equivalent to the condition  $\hka(\scrE)\leq 0$. Let $p$ be the probability determined by \eref{eq:designProb}. Then $\scrE_p$ is a balanced ensemble with $\hka(\scrE_p)=0$, so $\orb(\scrE_p)$  forms an exact 3-design by \eref{eq:QscrEp2} (cf. \pref{pro:orbit3designCon}).

Next, suppose
$\scrE$ is a $k$-balanced magic ensemble. Then $\kappa(\scrE)=\kappa_{d,k}$ by
\lref{lem:kahkaBalMagicd1}, where  $\kappa_{d,k}$  is defined in \eref{eq:kappatgammadk} and
decreases monotonically with $k$ given that $1\leq  \tg(d)^2\leq 4d-27$ by \eref{eq:tgdsq}. Define
\begin{align}\label{eq:kstardn}
k_*(d,n):=\min\left\{k\in \bbN\,\Big|\, \kappa_{d,k}\leq \frac{3}{D+2}\right\},\quad \scrS(d):=\{n\,|\,k_*(d,n)\leq n \};
\end{align} 
note that we do not impose the requirement $k\leq n$ in the definition of $k_*(d,n)$. Then the condition in  \eref{eq:ExDesignFiducialCon} is satisfied iff $k_*(d,n)\leq k\leq n$. If this condition holds, then $n\in \scrS(d)$ and an exact 3-design can be constructed from a $k$-balanced magic ensemble. On the other hand, if $\kappa_{d,n}>3/(D+2)$, then $k_*(d,n)\geq n+1$, $n\notin \scrS(d)$,  and exact 3-designs cannot be constructed  in this way for  all $k\leq n$. Direct calculation based on \eqsref{eq:kappatgammadk}{eq:kstardn}  yields
\begin{align}\label{eq:kstardn123}
k_*(d,n)=\begin{cases}
1 &\mbox{if}\quad  n=1,\\
3 &\mbox{if}\quad  n=2,\\
3 &\mbox{if}\quad n=3,\; \tg(d)^2\leq 3d^4/(d^3+2), \\
4 &\mbox{if}\quad n=3, \; \tg(d)^2> 3d^4/(d^3+2).
\end{cases}
\end{align}
So the condition in  \eref{eq:ExDesignFiducialCon} is satisfied when $k=n=1$, but cannot be satisfied when $n=2$, which means $1\in \scrS(d)$ and $2\notin \scrS(d)$. 

\begin{table*}
	\caption{\label{tab:Exact3design} The set $\scrS(d)$ defined in \eref{eq:kstardn} and its cardinality for odd prime $d$ with $d=1\mmod 3$ and $d\leq 1000$. Exact 3-designs   can be constructed from $k$-balanced magic ensembles on $\scrM_{n,k}(d)$ with $k= n$ if $n\in \scrS(d)$. }
	\renewcommand{\arraystretch}{1.25}
	\setstretch{1.2}
	\begin{tabularx}{\textwidth}{c X cl}
		\hline\hline
		No. & $d$ & $|\scrS(d)|$ &  $\scrS(d)$\\ \hline\hline
		\hphantom{xx}1 \hphantom{xx}  &  7, 31, 61, 67, 73, 109, 181, 199, 211, 307, 331, 337, 373, 421, 433, 457, 547, 577, 619, 691, 739, 823, 829, 859, 907,  997  &  $\hphantom{xxx}\infty$\hphantom{xxx}  &  $\bbN\setminus\{2\}$\\
		
		2  &  37, 79, 97, 127, 139, 163, 223, 271, 283, 313, 349, 367, 523, 607, 613, 661, 709, 733, 769, 811, 877, 937, 991 &  1  &  {1}\\
		
		3  &  13, 19, 151, 193, 229, 277, 379, 397, 409, 439, 499, 631, 643, 673, 727, 751, 883, 919, 967  &  2  &  {1, 3}\\
		
		4  &  {103, 541, 571}  &  3  &  {1, 3, 6}\\
		
		5  &  {43}  &  4  &  {1, 3, 4, 6}\\
		
		6  &  {853}  &  5  &  {1, 3, 4, 6, 9}\\
		
		7  &  {157, 487}  &  6  &  {1, 3, 4, 6, 9, 12}\\
		
		8  &  {787}  &  8  &  {1, 3, 4, 6, 7, 9, 12, 15}\\
		
		9  &  {241}  &  9  &  {1, 3, 4, 6, 7, 9, 12, 15, 18}\\
		
		10  &  {463}  &  12  &  {1, 3, 4, 6, 7, 9, 10, 12, 15, 18, 21, 24}\\
		
		11  &  {601}  &  13  &  {1, 3, 4, 6, 7, 9, 10, 12, 15, 18, 21, 24, 27}\\
		
		12  &  {757}  &  14  &  {1, 3, 4, 6, 7, 9, 10, 12, 15, 18, 21, 24, 27, 30}\\			
		\hline	\hline 
	\end{tabularx}
\end{table*}

\Lref{lem:BalMagicExactd1} and \thref{thm:BalMagicExactd1} below
clarify the basic properties of $k_*(d,n)$ and $\scrS(d)$. \Lref{lem:BalMagicExactd1} can be proved by virtue of \lref{lem:GaussSum3LUB} and  the definitions in \eqsref{eq:kappatgammadk}{eq:kstardn} as shown in 
\aref{app:BalMagicExactd1Proof}. \Thref{thm:BalMagicExactd1} is a simple corollary of \eref{eq:kstardn123} and \lref{lem:BalMagicExactd1}. 

\begin{lemma}\label{lem:BalMagicExactd1}
	Suppose $d$ is an odd prime and $d=1\mmod 3$; then 
	\begin{align}\label{eq:kstardnLimLB}
	k_*(d,n)\geq 3n/4,\quad \frac{3}{4}\leq \lim_{n\to \infty } \frac{k_*(d,n)}{n}=\frac{3\ln d}{4\ln d-2\ln |\tg(d)|}< \frac{3\ln d}{3\ln d-2\ln 2}. 
	\end{align}	 
	If  $\tg(d)^2\geq 3d$ and $n\geq 2$, then $k_*(d,n)\geq n+1$. If $\tg(d)^2\leq 3d$;  then  $k_*(d,3)= 3$.	
	If  $\tg(d)^2\geq d$,  then
	$k_*(d,n)\geq n$;  in addition, $k_*(d,n)\geq n+1$ when  $n= 2\mmod 3$ or $n\geq d\ln 3+5$. 
	If instead $\tg(d)^2\leq d$, then  $k_*(d,n)\leq n$ except when $n= 2$, in which case  $k_*(d,n)= n+1$. If in addition $\tg(d)^2=1$, then 
	\begin{align}\label{eq:kstartgd1}
	k_*(d,n)=\begin{cases}
	\left\lceil 3n/4\right\rceil+1 & \mbox{if}\quad n=2 \mmod 4, \\
	\left\lceil 3n/4\right\rceil &\mbox{if}\quad  n\neq 2 \mmod 4. 
	\end{cases}
	\end{align}		 
\end{lemma}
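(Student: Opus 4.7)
The plan is to rewrite the defining inequality $\kappa_{d,k}\leq 3/(D+2)$ in the explicit form
\begin{equation*}
(j+1)(d^n+2)\tg(d)^{2m}\leq 3\lsp d^{4m+j}\qquad (\star)
\end{equation*}
where $k=3m+j$ with $j\in\{0,1,2\}$ and $m=\lfloor k/3\rfloor$, and then analyze $(\star)$ case by case. I first verify that $\kappa_{d,k}$ is strictly decreasing in $k$: the three within-triple ratios $\kappa_{d,k+1}/\kappa_{d,k}\in\{2/d,\;3/(2d),\;\tg(d)^2/(3d^2)\}$ are all below $1$ thanks to $\tg(d)^2\leq 4d-27<3d^2$ from \eref{eq:tgdsq}, so $k_*(d,n)$ is the precise transition threshold for $(\star)$ and the strict reverse inequality holds at $k=k_*-1$.

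For the universal bound, applying $\tg(d)^{2m}\geq 1$ and $j+1\geq 1$ to $(\star)$ yields $d^n<3\lsp d^{4m+j}$; since $\log_d 3<1$ for $d\geq 7$, the integer constraint forces $n\leq 4m+j\leq 4k/3$, giving $k_*(d,n)\geq 3n/4$. The limit in \eref{eq:kstardnLimLB} is obtained by taking logarithms of $(\star)$ at $k_*$ and of its reverse at $k_*-1$, which sandwich $k_*/n$ between two sequences converging to $3\ln d/(4\ln d-2\ln|\tg(d)|)$; the strict upper bound $3\ln d/(3\ln d-2\ln 2)$ then follows from $2\ln|\tg(d)|<\ln(4d)=\ln d+2\ln 2$.

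The conditional statements are handled by substituting $k=n$ into $(\star)$ and rewriting
\begin{equation*}
\kappa_{d,n}(d^n+2)=(j+1)\left(\frac{\tg(d)^2}{d}\right)^{\!m}\!\left(1+\frac{2}{d^n}\right),
\end{equation*}
so that each case is governed by the ratio $\tg(d)^2/d$. When $\tg(d)^2\geq 3d$, $(\tg(d)^2/d)^m\geq 3^m$ forces the right-hand side above $3$ for every $n\geq 2$, proving $k_*(d,n)\geq n+1$. When $\tg(d)^2\leq 3d$, plugging $k=3$ into $(\star)$ reduces it to $\tg(d)^2(d^3+2)\leq 3d^4$, which follows from the integer bound $\tg(d)^2\leq 3d-1$ (since $3d$ is not a perfect square for prime $d$) and the polynomial inequality $(3d-1)(d^3+2)\leq 3d^4$ for $d\geq 3$; combined with $k_*(d,3)\geq 3$ from the universal bound, this gives $k_*(d,3)=3$. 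For $\tg(d)^2\leq d$ (hence $\leq d-1$), the bound $(1-1/d)^m(1+2/d^n)\leq 1$ for $m\geq 1$ yields $k_*(d,n)\leq n$ for $n\geq 3$; the exception $n=2$ arises because $m=0$ leaves $1+2/d^2>1$, while $n=1$ is checked directly. For $\tg(d)^2\geq d$ (so $\geq d+1$), $(\tg(d)^2/d)^m\geq(1+1/d)^m$ gives $k_*(d,n)\geq n$; the coefficient $j+1=3$ alone handles $n=2\mmod 3$, whereas for $j\in\{0,1\}$ the inequality $(j+1)(1+1/d)^m(1+2/d^n)>3$ is extracted using $\ln(1+1/d)\geq 1/(d+\tfrac{1}{2})$ and $m=(n-j)/3$, with the offset $+5$ absorbing the integer-rounding slack. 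Finally, for $\tg(d)^2=1$, $(\star)$ collapses to $(j+1)(d^n+2)\leq 3\lsp d^{4m+j}$, and case analysis on $n\bmod 4$ minimizes $k=3m+j$ over $j\in\{0,1,2\}$ to produce the explicit formula \eref{eq:kstartgd1}.

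The hard part will be the ``in addition'' clause of the $\tg(d)^2\geq d$ case: upgrading the bare integer bound $\tg(d)^2\geq d+1$ into an exponential lower bound on $(1+1/d)^m$ sharp enough to extract the stated threshold $n\geq d\ln 3+5$ requires sharper non-asymptotic estimates of $\ln(1+1/d)$ than a first-order Taylor expansion, together with careful bookkeeping of the integer offsets coming from $m=\lfloor n/3\rfloor$. The remaining cases, once $(\star)$ is in hand, reduce to routine polynomial or elementary real-analytic verifications.
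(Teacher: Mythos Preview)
Your overall strategy—rewriting the threshold condition as $(\star)$ and analyzing it case by case—is essentially the paper's approach, and most of your case analysis is sound. However, there is a genuine gap in precisely the place you flag as ``the hard part'': the claim that, when $\tg(d)^2\geq d$, the threshold $n\geq d\ln 3+5$ can be extracted from the bare integer bound $\tg(d)^2\geq d+1$ via sharper non-asymptotic estimates of $\ln(1+1/d)$.

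This cannot work. No matter how sharp your estimate, you always have $\ln(1+1/d)<1/d$, so $(1+1/d)^m\geq 3$ forces $m>d\ln 3$ and hence (in the worst case $j=0$, where $n=3m$) $n>3d\ln 3$. The required threshold $d\ln 3+5$ is smaller by a factor of roughly $3$ in the leading term; the ``$+5$'' cannot absorb that. The missing ingredient is arithmetic, not analytic: by \lref{lem:GaussSum3LUB} one has $\tg(d)\equiv 1\pmod 3$, and since $d\equiv 1\pmod 3$ this gives $d-\tg(d)^2\equiv 0\pmod 3$. As $d$ is prime (hence not a perfect square), $\tg(d)^2\neq d$, so in fact $\tg(d)^2\geq d+3$ whenever $\tg(d)^2\geq d$. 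With this stronger base, $(1+3/d)^{\lfloor n/3\rfloor}\geq 3$ follows from $\lfloor n/3\rfloor\geq (d\ln 3+2)/3$ because the function $d\mapsto (1+3/d)^{(d\ln 3+2)/3}$ is decreasing with limit $e^{\ln 3}=3$; this is exactly how the paper closes the case. The same congruence incidentally gives $\tg(d)^2\leq d-3$ (rather than your $d-1$) in the opposite regime and $\tg(d)^2\leq 3d-2$ (rather than your $3d-1$) for the $k_*(d,3)=3$ case, although in those two places your weaker bounds already suffice.
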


\begin{theorem}\label{thm:BalMagicExactd1}
	Suppose $d$ is an odd prime and $d=1\mmod 3$; then
	\begin{align}
	\scrS(d)=\begin{cases}
	\bbN\setminus \{2\} & \tg(d)^2\leq d,\\
	\{1\} & \tg(d)^2\geq 3d,
	\end{cases} \quad 
	|\scrS(d)|=
	\begin{cases}
	\infty& \tg(d)^2\leq d,\\
	1 & \tg(d)^2\geq 3d.
	\end{cases}
	\end{align}	
	If  $d<\tg(d)^2< 3d$, then $\{1,3\}\subseteq \scrS(d)$, $\max \scrS(d)<d\ln 3+5$,  and $n\neq 2\mmod 3$ for each $n\in \scrS(d)$; in addition, $2\leq |\scrS(d)|\leq d$. 		 
\end{theorem}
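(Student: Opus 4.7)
The plan is to derive \thref{thm:BalMagicExactd1} by a case analysis on the magnitude of $\tg(d)^2$, feeding \eref{eq:kstardn123} and \lref{lem:BalMagicExactd1} into the equivalence $n\in\scrS(d)\Leftrightarrow k_*(d,n)\leq n$. The three disjoint regimes $\tg(d)^2\leq d$, $\tg(d)^2\geq 3d$, and $d<\tg(d)^2<3d$ are dictated by the structure of the theorem and correspond exactly to the conditional branches of \lref{lem:BalMagicExactd1}; I would handle them in turn.

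In the regime $\tg(d)^2\leq d$, \lref{lem:BalMagicExactd1} gives $k_*(d,n)\leq n$ for every $n\neq 2$, while \eref{eq:kstardn123} pins $k_*(d,2)=3>2$, so $\scrS(d)=\bbN\setminus\{2\}$ and $|\scrS(d)|=\infty$. In the regime $\tg(d)^2\geq 3d$, the same lemma yields $k_*(d,n)\geq n+1$ for all $n\geq 2$, and combined with $k_*(d,1)=1$ this forces $\scrS(d)=\{1\}$. The intermediate regime $d<\tg(d)^2<3d$ then requires combining four ingredients from \lref{lem:BalMagicExactd1}: the global inequality $k_*(d,n)\geq n$ (from $\tg(d)^2\geq d$), the equality $k_*(d,3)=3$ (from $\tg(d)^2\leq 3d$), and the two strict inequalities $k_*(d,n)\geq n+1$ whenever $n=2\mmod 3$ or $n\geq d\ln 3+5$. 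These immediately yield $\{1,3\}\subseteq\scrS(d)$, the residue restriction $n\neq 2\mmod 3$ for every $n\in\scrS(d)$, and $\max\scrS(d)<d\ln 3+5$; in particular $|\scrS(d)|\geq 2$.

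The remaining inequality $|\scrS(d)|\leq d$ would follow from a short counting step: $\scrS(d)$ sits inside the set of positive integers $n<d\ln 3+5$ with $n\neq 2\mmod 3$, whose cardinality is at most $(2\ln 3/3)\,d+\caO(1)$. Since $2\ln 3/3<1$, this bound is $\leq d$ for all sufficiently large $d$, and for the small primes $d=1\mmod 3$ actually falling into this regime (the smallest being $d=13$, as \tref{tab:Exact3design} confirms) the inequality can be verified by direct enumeration of $\{n\leq \lfloor d\ln 3+4\rfloor:n\neq 2\mmod 3\}$. I expect this counting step to be the only subtle point of the argument, since the asymptotic constant $2\ln 3/3\approx 0.733$ is comfortably below $1$ but not by a wide margin, so the additive $\caO(1)$ must be tracked explicitly rather than absorbed into an asymptotic statement.
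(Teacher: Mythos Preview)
Your proposal is correct and follows exactly the route the paper indicates: the paper's proof is the single sentence ``\thref{thm:BalMagicExactd1} is a simple corollary of \eref{eq:kstardn123} and \lref{lem:BalMagicExactd1},'' and you have spelled out precisely that corollary. One minor remark on the counting step: if you track the constants, the bound $|\{n<d\ln 3+5:n\not\equiv 2\mmod 3\}|\leq (2d\ln 3+9)/3\leq d$ holds for all $d\geq 12$, hence for every prime $d\equiv 1\mmod 3$ that can fall into the intermediate regime (the smallest being $d=13$, where the bound is exactly tight); so no separate small-$d$ verification is actually needed.
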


Suppose $d$ is an odd prime satisfying $d=1 \mmod 3$. 
If $\tg(d)^2\leq d$, then we can construct an exact 3-design in $\caH_d^{\otimes n}$ with $n\neq 2$ using at most four Clifford orbits (only two orbits when $n$ is divisible by 3) thanks to \thref{thm:BalMagicExactd1} and the balanced magic ensemble  constructed in \eref{eq:kbalanceEx}.
If  instead $\tg(d)^2\geq d$,  then an exact 3-design based on a $k$-balanced magic ensemble can be constructed only for limited choices of $n$ as specified by the set $\scrS(d)$. 
If in addition  $\tg(d)^2\geq 3d$, then such a 3-design  can be constructed only when $n=1$ since $\scrS(d)=\{1\}$. By virtue of \eqsref{eq:kappatgammadk}{eq:kstardn} above and \tref{tab:tgd} in \aref{app:GaussJacobi} we can determine 
the set $\scrS(d)$ for each odd prime $d$ satisfying   $d=1\mmod 3$ and $d\leq 1000$. The results are summarized in \tref{tab:Exact3design}.

Next, we consider the asymptotic behavior of the set $\scrS(d)$ as $d$ gets large. Note that the value of $\tg(d)^2$ is tied  to the phase $\phi$ of the Gauss sum $G(\eta_3)$ according to \eref{eq:tgdsq}. The distribution of the phase $\phi$  was originally studied by  Kummer in the 19th century, who formulated  the famous  Kummer conjecture (falsified later). 
Now, it is well known that the phase of the Gauss sum $G(\eta_3)$ is asymptotically uniformly distributed in the  interval $[0,2\pi)$  if we consider all primes $d$ with $d=1\mmod 3$, although there is a bias 
if we set an upper bound for the primes
\cite{HeatP79}. The bias in  the distribution remains an active research topic in number theory until now \cite{DunnR21}.  
Asymptotically, about one third of these primes satisfy each of the following three conditions,
\begin{align}
0\leq \tg(d)^2\leq d, \quad d<\tg(d)^2< 3d, \quad 3d\leq \tg(d)^2\leq 4d,
\end{align}
which correspond to the three cases considered in \thref{thm:BalMagicExactd1}, respectively.

If instead $d$ is an odd prime satisfying $d\neq 1\mmod 3$, then every magic state is balanced according to \lref{lem:kahkaMagicd23} in \sref{sec:MagicOrbitd23}. In addition, an exact 3-design can be constructed from a mixture of magic orbits and the orbit of stabilizer states iff $n=1$ and $d\geq 5$. Denote by $\scrS(d)$ the set of $n$ for which an exact 3-design can be constructed in this way; then
$\scrS(d)$ is empty when $d=3$ and $\scrS(d)=\{1\}$ when $d\geq 5$ and $d=2\mmod 3$. In the special case $d=3$, although exact 3-designs cannot be constructed in this way, many Clifford orbits form exact 3-designs  (see \rcite{GrosNW21} or \sref{sec:3designConstructd23}). Asymptotically, about one half primes are equal to 1 modulo 3 and the other half primes are equal to 2 modulo 3. 
If we consider all primes, then the following three cases
\begin{align}
\scrS(d)=\bbN\setminus\{2\}, \quad 2\leq |\scrS(d)|\leq d,\quad \scrS(d)=\{1\}
\end{align}
occur with probabilities about $1/6$, $1/6$, and $2/3$, respectively.

\section{Summary}
When the local dimension $d$ is an odd prime, the qudit Clifford group is only a 2-design, but not a 3-design, which is in sharp contrast with the counterpart based on a qubit system. This  distinction extends to the orbit of stabilizer states and other generic Clifford orbits. To clarify the implications of this distinction, in this work we  studied   the third moments of general qudit Clifford orbits  systematically and in depth. First, we introduced the shadow norm for quantifying the deviation of an ensemble of quantum states from a 3-design and clarified its connection with the third normalized frame potential. Then, we showed that  the third normalized frame potential and shadow norm of every Clifford orbit are both $\caO(d)$, although the operator norm of the third normalized moment operator may increase exponentially with $n$ when $d\neq 2\mmod 3$. Notably, this is the case for the orbit of stabilizer states. Therefore, the overhead of qudit stabilizer measurements in shadow estimation compared with qubit stabilizer measurements is only $\caO(d)$, which is independent of $n$, even if the ensemble of stabilizer states is far from a 3-design when  $d\neq 2\mmod 3$.

Furthermore, we showed that magic orbits based on $T$ gates are much better approximations to 3-designs and thus more appealing to many applications. 
Notably, the third normalized frame potential, operator norm of the third normalized moment operator, and shadow norm all decrease exponentially with the number of $T$ gates. Therefore,  good approximate 3-designs   with respect to 
the third normalized frame potential and shadow norm can be constructed using  only a few $T$ gates. Actually, the third normalized frame potential and shadow norm
of any magic orbit are upper bounded by the constants $13/11$ and $15/2$, respectively, so a single $T$ gate can already bridge the gap between qudit systems and qubit systems with respect to the two figures of merit.
When $d=2\mod 3$, the operator norm of the third normalized moment operator is upper bounded by the constant $5/3$. 
When $d\neq 2\mmod 3$, however, this operator norm may grow exponentially with $n$ depending on the magic orbit under consideration. Nevertheless, we can construct good approximate 3-designs   with respect to all three figures of merit  from only two Clifford orbits. For an infinite family of local dimensions, we can construct exact 3-designs  from only two or four Clifford orbits. Interestingly, the existence of such exact 3-designs is tied to a deep result on the phase distribution of cubic Gauss sums, which was originally studied by Kummer in the 19th century and remains an active research topic in number theory until today \cite{HeatP79,DunnR21}.

Our work  offers valuable insights on the (qudit) Clifford group, Clifford orbits, and $t$-designs and  highlights the power of a single magic gate in quantum information processing, which is quite unexpected. 
It may have profound implications for various topics in quantum information processing, including shadow estimation in particular (see the companion paper \cite{MaoYZ24} for more discussions). Notably, the gap between qudit systems and qubit systems can be circumvented with only a few $T$ gates.  In the course of study we clarify the key properties of the commutant of the third Clifford tensor power and the underlying mathematical structures. The ideas and technical tools we introduce here are also useful to studying higher moments of Clifford orbits and their applications.

\section*{Acknowledgements}
This work is supported by the National Natural Science Foundation of China (Grant No.~92165109), National Key Research and Development Program of China (Grant No.~2022YFA1404204), and Shanghai Municipal Science and Technology Major Project (Grant No.~2019SHZDZX01).

\appendix
\numberwithin{lemma}{section}
\numberwithin{proposition}{section}
\numberwithin{corollary}{section}
\renewcommand{\thelemma}{\thesection\arabic{lemma}}
\renewcommand{\theproposition}{\thesection\arabic{proposition}}
\renewcommand{\thecorollary}{\thesection\arabic{corollary}}

\section{\label{app:Shadow3design}Proof of \thref{thm:ShNormPhi3LB}}

\begin{proof}[Proof of \thref{thm:ShNormPhi3LB}]
	The inequality $\|\bQ\|_\sh\geq \|P_\sym\|_\sh$ in \eref{eq:QbarShNormPhi3LB} can be proved as follows. Let $\bbT$ be the twirling superoperator; then  $\bbT(\bQ)=P_\sym$. Let $\Ob$ be an arbitrary traceless operator on $\caH$ that is normalized with respect to the Hilbert-Schmidt norm. Then 
	\begin{align}
	\|\bQ\|_\sh&\geq \left\|\tr_{BC}\bigl[\bQ \bbT\bigl(\bbI\otimes   \Ob\otimes \Ob^\dag\bigr)\bigr]\right\|
	=\left\|\tr_{BC}\bigl[\bbT(\bQ)\bigl(\bbI\otimes   \Ob\otimes \Ob^\dag\bigr)\bigr]\right\|
	=\left\|\tr_{BC}\bigl[P_\sym \bigl(\bbI\otimes   \Ob\otimes \Ob^\dag\bigr)\bigr]\right\|,
	\end{align}
	which implies that $\|\bQ\|_\sh\geq \|P_\sym\|_\sh$.

	To prove \eref{eq:QbarShNormPhi3LB} completely, we need to resort to an alternative approach. Let 
	\begin{align}
	M=\sum_j w_j |\psi_j\>\<\psi_j|\otimes (D |\psi_j\>\<\psi_j|-\bbI)\otimes (D |\psi_j\>\<\psi_j|-\bbI);
	\end{align}
	note that $\|D |\psi_j\>\<\psi_j|-\bbI\|_2^2=D(D-1)$. 
	By virtue of \lref{lem:MQ} below and \eref{eq:ShadowNormSym}  we can deduce that
	\begin{align}\label{eq:QbarShNormLBproof}
	\|\bQ\|_\sh&\geq \frac{\tr(\bQ M) }{D(D-1)}=\frac{(D+1)(D+2)}{6(D-1)}(D^2\Phi_3-2D\Phi_2+\Phi_1)
	\geq \frac{3D-2}{6D}=\|P_\sym\|_\sh
	,
	\end{align}
	which confirms \eref{eq:QbarShNormPhi3LB}.  \Eref{eq:ShNormPhi3LB} is a simple corollary of  \eqsref{eq:EnsembleShNormDef}{eq:QbarShNormPhi3LB}.

	If $\scrE$ forms a 3-design, then $\bQ=P_\sym$, so the inequality $\|\bQ\|_\sh\geq \|P_\sym\|_\sh$ is saturated. 
	Conversely, if this inequality is saturated, then both inequalities in \eref{eq:QbarShNormLBproof} are necessarily saturated,  and the saturation of the second inequality implies that $\scrE$ forms a 3-design by \lref{lem:MQ}. Thanks to \eref{eq:EnsembleShNormDef},  the final   bound for $\|\scrE\|_\sh$ in \eref{eq:ShNormPhi3LB}	
	is saturated iff the final bound for $\|\bQ\|_\sh$ in  \eref{eq:QbarShNormPhi3LB} is saturated.

	Note that $\bar{\Phi}_3=D(D+1)(D+2)\Phi_3/6$. If  $\scrE$ is a 2-design, then $\Phi_1=1/D$ and $\Phi_2=2/[D(D+1)]$, so \eref{eq:QbarShNormLBproof} implies that
	\begin{align}
	\|\bQ\|_\sh&\geq \frac{D}{D-1}\bar{\Phi}_3-\frac{(D+2)(3D-1)}{6D(D-1)}\geq \bar{\Phi}_3+\frac{1}{D-1}-\frac{(D+2)(3D-1)}{6D(D-1)}=\bar{\Phi}_3-\frac{3D+2}{6D},\\
	\|\scrE\|_\sh&\geq	\frac{6D(D+1)}{(D-1)(D+2)}\bar{\Phi}_3-\frac{(D+1)(3D-1)}{D(D-1)}\geq 6\bar{\Phi}_3+\frac{12}{(D-1)(D+2)}-\frac{(D+1)(3D-1)}{D(D-1)}\nonumber\\
	&\geq  6\bar{\Phi}_3-3-\frac{1}{D}-\frac{4}{D+2}
	> 6\bar{\Phi}_3-3-\frac{5}{D},	
	\end{align}
	which confirm \eqsref{eq:QbarShNormPhi3LB2}{eq:ShNormPhi3LB2}. Here the second inequalities in both equations hold because $\bar{\Phi}_3\geq 1$. 
\end{proof}

In the rest of this appendix, we prove an  auxiliary lemma employed in the proof of \thref{thm:ShNormPhi3LB}. 

\begin{lemma}\label{lem:MQ}
	Suppose $\scrE=\{|\psi_j\>, w_j \}_j$ is an ensemble of pure states in $\caH$, $Q=Q_3(\scrE)$ is the third moment operator, and 
	\begin{align}\label{eq:Mdef}
	M=\sum_j w_j |\psi_j\>\<\psi_j|\otimes (D |\psi_j\>\<\psi_j|-\bbI)\otimes (D |\psi_j\>\<\psi_j|-\bbI). 
	\end{align}
	Then 	
	\begin{align}\label{eq:MQ}
	\tr(M^2)=D^2\tr(MQ)=D^4\Phi_3-2D^3\Phi_2+D^2\Phi_1\geq \frac{D(3D-2)(D-1)}{(D+1)(D+2)},
	\end{align}
	and the lower bound is saturated iff
	$\scrE$ forms a 3-design. 
\end{lemma}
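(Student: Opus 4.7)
The plan is to treat the two equalities and the lower bound separately. For the equalities $\tr(M^2)=D^2\tr(MQ)=D^4\Phi_3-2D^3\Phi_2+D^2\Phi_1$, I will write $M=\sum_j w_j\Pi_j\otimes A_j\otimes A_j$ with $\Pi_j=|\psi_j\>\<\psi_j|$ and $A_j=D\Pi_j-\bbI$, and use the elementary identities $\tr(A_jA_k)=D^2F_{jk}-D$ and $\<\psi_k|A_j|\psi_k\>=DF_{jk}-1$, where $F_{jk}:=|\<\psi_j|\psi_k\>|^2$. Direct expansion then gives $\tr(M^2)=\sum_{j,k}w_jw_k F_{jk}(D^2F_{jk}-D)^2$ and $\tr(MQ)=\sum_{j,k}w_jw_k F_{jk}(DF_{jk}-1)^2$, and expanding the squares against $\Phi_t=\sum_{j,k}w_jw_k F_{jk}^t$ delivers both formulas at once; the proportionality $\tr(M^2)=D^2\tr(MQ)$ is then immediate.

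For the lower bound, I will invoke the Haar twirl $\bbT(X):=\int U^{\otimes 3}X(U^{\otimes 3})^\dag\rmd U$, which is an orthogonal projection on $\caL(\caH^{\otimes 3})$, with respect to the Hilbert-Schmidt inner product, onto the commutant of $U^{\otimes 3}$. Hence $\|M\|_2^2\geq\|\bbT(M)\|_2^2$, with equality iff $M$ lies in the range of $\bbT$. Writing $M=D^2Q_3-D(Q_2)_{12}-D(Q_2)_{13}+(Q_1)_1$, where the subscripts indicate trivial extension by identities on the remaining subsystems, the key observation is that $\bbT$ sends each moment operator $Q_t$ to the ensemble-independent operator $P_{[t]}/\pi_{[t]}$, so $\bbT(M)$ equals a fixed operator $M_{\text{Haar}}$. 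Its squared Hilbert-Schmidt norm is obtained by specialising the formula from the previous paragraph to the Haar moments $\Phi_t=1/\pi_{[t]}$; after collecting terms over the common denominator $D(D+1)(D+2)$ one finds $\|M_{\text{Haar}}\|_2^2=D(3D-2)(D-1)/[(D+1)(D+2)]$, which is the asserted lower bound.

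It remains to characterise equality. The forward direction is immediate, since a 3-design has $Q_3=P_{[3]}/\pi_{[3]}$, and partial tracing forces $Q_2=P_{[2]}/\pi_{[2]}$ and $Q_1=\bbI/D$, whence $M=M_{\text{Haar}}$. The converse is the main obstacle. Saturation yields the operator identity $D^2\Delta Q_3=D(\Delta Q_2)_{12}+D(\Delta Q_2)_{13}-(\Delta Q_1)_1$ with $\Delta Q_t:=Q_t-P_{[t]}/\pi_{[t]}$. Taking $\tr_A$ of both sides I expect to derive $D\Delta Q_2=\Delta Q_1\otimes\bbI+\bbI\otimes\Delta Q_1$ on $\caH^{\otimes 2}$; substituting this back into the three-body identity then collapses it to $D^2\Delta Q_3=\sum_{i=1}^3(\Delta Q_1)^{(i)}$. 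The hard part is the ensuing support argument: diagonalising $\Delta Q_1=\sum_i\lambda_i|e_i\>\<e_i|$, the right-hand side has eigenvalue $2\lambda_i+\lambda_j$ on vectors of the form $|e_ie_ie_j\>-|e_ie_je_i\>$, which lie in $\Sym_3(\caH)^\perp$; matching both sides on this subspace forces $2\lambda_i+\lambda_j=0$ for all $i\neq j$, whence every $\lambda_i=0$ and $\Delta Q_1=0$. The cascading relations then give $\Delta Q_2=0$ and $\Delta Q_3=0$, so $\scrE$ is a 3-design. The delicacy will be ensuring that the partial-trace reduction genuinely captures the operator-level content of $M=M_{\text{Haar}}$, and selecting witnesses in $\Sym_3(\caH)^\perp$ that work uniformly across all $D\geq 2$.
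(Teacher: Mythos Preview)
Your proposal is correct. The treatment of the equalities and the lower bound via the Haar twirl is essentially the same as the paper's. Where you genuinely diverge is in the converse direction of the saturation condition.

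The paper, after obtaining $M=\bbT(M)$, computes $\tr_A M$, sandwiches it with the two-copy antisymmetric projector $\Pi$, and takes a further partial trace to obtain $\tr_B[\Pi(\tr_A M)\Pi]=-\tfrac{1}{2}\bbI-\tfrac{D(D-2)}{2}Q_1$; twirl-invariance of this one-body operator forces $Q_1\propto\bbI$, which then bootstraps to $Q_2$ and $Q_3$. Your route instead takes $\tr_A$ of the full three-body identity to get $D\,\Delta Q_2=\Delta Q_1\otimes\bbI+\bbI\otimes\Delta Q_1$, substitutes back to obtain $D^2\,\Delta Q_3=\sum_{i=1}^3(\Delta Q_1)^{(i)}$, and kills $\Delta Q_1$ by testing against explicit antisymmetric witnesses $|e_ie_ie_j\>-|e_ie_je_i\>$. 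Both arguments are short; yours has the mild advantage that it works uniformly for all $D\geq 2$, whereas the paper's coefficient $D(D-2)$ vanishes at $D=2$ and that case would need separate handling (though in the paper's intended applications $D=d^n\geq 3$). The paper's antisymmetric-projector trick, on the other hand, avoids having to name explicit eigenvectors of $\Delta Q_1$ and is perhaps more canonical from a representation-theoretic viewpoint.
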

Here $\Phi_t$ for $t=1,2,3$ are shorthands for $\Phi_t(\scrE)$.

\begin{proof}[Proof of \lref{lem:MQ}]
	Let $Q_t=Q_t(\scrE)$ for $t=1,2,3$ be the $t$th moment operator; note that $Q_3=Q$. 
	By definition we can deduce that
	\begin{align}
	M&=D^2 Q_3-D Q_2\otimes \bbI-D \sum_j \left[w_j( |\psi_j\>\<\psi_j|\otimes \bbI\otimes  |\psi_j\>\<\psi_j|)\right]+Q_1\otimes \bbI\otimes \bbI,  \label{eq:Mproof} \\
	\tr(M^2)&=D^2\tr(MQ_3)=D^4\tr(Q_3^2)-2D^3\tr(Q_2^2)+D^2\tr(Q_1^2)=D^4\Phi_3-2D^3\Phi_2+D^2\Phi_1,
	\end{align}
	which imply the equalities in \eref{eq:MQ}. Note that \eref{eq:Mproof} still holds if we apply the twirling superoperator $\bbT$ on
	$M$ and $Q_t$ for $t=1,2,3$. Based on this observation we can deduce that
	\begin{align}
	\tr(M^2)&\geq \tr\bigl\{[\bbT(M)]^2\bigr\}	
	=D^2\tr\bigl\{\bbT(M)\bbT(Q_3)\bigr\}
	=D^4\tr\bigl\{[\bbT(Q_3)]^2\bigr\}-2D^3\tr\bigl\{[\bbT(Q_2)]^2\bigr\}+D^2\tr\bigl\{[\bbT(Q_1)]^2\bigr\}\nonumber\\
	&=\frac{6D^4}{D(D+1)(D+2)}-\frac{4D^3}{D(D+1)}+\frac{D^2}{D}=\frac{D(3D-2)(D-1)}{(D+1)(D+2)},
	\end{align}
	which confirms the inequality in \eref{eq:MQ}. Here the third equality holds because $\bbT(Q_t)=P_{[t]}/\pi_{[t]}$. The inequality is saturated iff $M$ is invariant under twirling, that is, $\bbT(M)=M$. If $\scrE$ forms a 3-design, then $M$ is invariant under twirling (cf. \pref{pro:tdesignEqui}), so the inequality in \eref{eq:MQ} is saturated.

	Conversely, if the inequality in \eref{eq:MQ} is saturated, then $M$ is invariant under twirling,  and so are the three operators $\tr_A M$,  $\Pi(\tr_A M)\Pi$, $\tr_B[\Pi(\tr_A M)\Pi]$, where $\Pi$ is the projector onto the antisymmetric subspace in $\caH^{\otimes 2}$. Meanwhile, direct calculation yields the following results,
	\begin{align}
	\tr_AM&=D^2Q_2 -DQ_1\otimes \bbI -D\bbI\otimes Q_1+\bbI\otimes \bbI,\\
	\Pi(\tr_A M)\Pi&= \Pi- \Pi(DQ_1\otimes \bbI +D\bbI\otimes Q_1)\Pi=\Pi-2D\Pi(Q_1\otimes \bbI)\Pi,\\
	\tr_B[\Pi(\tr_AM)\Pi]&=-\frac{\bbI}{2}-\frac{D(D-2)}{2}Q_1. 
	\end{align}
	Here the last equation means $Q_1$ is invariant under twirling, which in turn implies that $Q_2$ is invariant under twirling thanks to the first equation. In conjunction with \eref{eq:Mproof} we can deduce that $Q_3$ is invariant under twirling and is thus proportional to the projector $P_\sym$ onto the symmetric subspace in $\caH^{\otimes 3}$ (cf. \pref{pro:tdesignEqui}). 
	So $\scrE$ forms a 3-design, which completes the proof of \lref{lem:MQ}.
\end{proof}

\section{\label{app:sec:O3dAux}Proofs of \lref{lem:O3Dh} and auxiliary results on $O_3(d)$}
In this appendix we prove \lref{lem:O3Dh} and introduce some auxiliary results on $O_3(d)$. 
\subsection{Proof of \lref{lem:O3Dh}}\label{appendix:O3Dh}
\begin{proof}[Proof of \lref{lem:O3Dh}]
	When  $d=3$, direct calculation shows that $O_3(d)$ coincides with the symmetric group $S_3$, which is isomorphic to the dihedral group $\caD_3$.
	
	Next, we assume $d\geq 5$. By definition any stochastic isometry $O\in O_3(d)$ satisfies
	\begin{equation}
	O\mathbf{1}_3 = \mathbf{1}_3,\quad 	O\bfx \cdot O\bfx = \bfx \cdot \bfx \quad \forall \bfx\in\bbF_d^3,
	\end{equation}
	which implies that 
	\begin{gather}
	O\bfx_1\cdot O\bfx_2 = \bfx_1 \cdot \bfx_2 \quad \forall \bfx_1,\bfx_2 \in\bbF_d^3,\\
	\bfx \cdot \mathbf{1}_3=O\bfx \cdot O\mathbf{1}_3 = O\bfx \cdot \mathbf{1}_3 \quad \forall \bfx\in\bbF_d^3.
	\end{gather}
	Note that both $\mathrm{span}(\mathbf{1}_3)$ and  $\mathbf{1}_3^\perp=\{\bfx\in \bbF_d^3\,|\, \bfx \cdot \mathbf{1}_3=0 \}$ are  invariant  under $O_3(d)$. Denote by $O(2,d)$ the orthogonal group on  $\mathbf{1}_3^\perp$, which preserves the dot product. Then $O_3(d)$ is isomorphic to $O(2,d)$.

	According to \rcite{Came00book}, there exist two types of orthogonal groups on $\bbF_d^2$, with orders $2(d-1)$ and $2(d+1)$, respectively, and  
	both of them are isomorphic to dihedral groups. So both  $O(2,d)$ and $O_3(d)$
	are isomorphic to a dihedral group of order either $2(d-1)$ or $2(d+1)$. Since $O_3(d)$ contains $S_3$ as a subgroup, its order is necessarily divisible by 6, which means 
	\begin{equation}\label{eq:O_3(d)}
	|O_3(d)|=
	\begin{cases}
	2(d+1) \quad \text{if} \  d=2 \mmod 3,\\
	2(d-1) \quad \text{if} \  d=1 \mmod 3.
	\end{cases}
	\end{equation}
	Therefore, $O_3(d)$ is isomorphic to  $\caD_h$ with
	\begin{equation}
	h = \begin{cases}
	d+1 \quad \text{if} \  d=2 \mmod 3,\\
	d-1 \quad \text{if} \  d=1 \mmod 3,
	\end{cases}
	\end{equation}
	which completes the proof of \lref{lem:O3Dh}. 
\end{proof}

\subsection{Alternative approach for determining the order of $O_3(d)$}\label{app:O3element}

Suppose $O\in O_3(d)$ is a stochastic isometry and $(a,b,c)^\top$ is the first column of $O$. Then 
\begin{equation}\label{aeq:abc}
a^2+b^2+c^2=a+b+c=1,
\end{equation}
where the arithmetic is done modulo $d$.  Similar conclusion  applies to each column and each row of $O$. Since $d$ is an odd prime by assumption, to satisfy \eref{aeq:abc}, $a,b,c$ cannot equal  a same number. Once the first column of $O$  is fixed,  there exist two and only two stochastic isometries. To be specific,  $O$ must equal one of the  two stochastic isometries presented in \eref{eq:Oevenodd} (see \rcite{NezaW20} for the case  $d=2\mmod 3$) as reproduced here,
\begin{equation}\label{aeq:Oevenodd}
O_\even=aI + b \zeta + c \zeta^{-1}=
\begin{pmatrix}
a & c & b \\
b & a & c \\
c & b & a 
\end{pmatrix},
\quad O_\odd=a\tau_{23} + b \tau_{12} + c \tau_{13}=
\begin{pmatrix}
a & b & c \\
b &c & a \\
c & a & b    
\end{pmatrix}. 
\end{equation}
The determinants of $O_\even$ and $O_\odd$ read
\begin{align}
\det(O_\even)=-\det(O_\odd)=a^3+b^3+c^3-3abc=\frac{3 (a + b + c)(a^2 + b^2 + c^2) - (a + b + c)^3}{2}=1.
\end{align}

According to the above analysis,  each solution to \eref{aeq:abc} corresponds to a pair of stochastic isometries in $O_3(d)$. Denote by $N_d$ the number of solutions $(a,b,c)^\top\in \bbF_d^3$ to \eref{aeq:abc};  then $|O_3(d)|=2N_d$. 
To determine the order of $O_3(d)$, it suffices to determine $N_d$. 
Note that any permutation of a solution $(a,b,c)^\top$ is also a solution and that there does not exist a solution with  $a=b=c$, so both  $N_d$ and $|O_3(d)|$ are multiples of 3, which is consistent with the fact that $O_3(d)$ contains $S_3$ as a subgroup.

When $d=3$, direct calculation shows  that the solution set to \eref{aeq:abc}
is $\{(1,0,0),(0,1,0),(0,0,1)\}$. Accordingly,  $O_3(3)=\{\mathds{1},\zeta,\zeta^{-1},\tau_{12},\tau_{13},\tau_{23}\}$ coincides with $S_3$, which means $|O_3(3)|=6$.

Next, suppose  $d \geq 5$. By eliminating the variable $c$   in \eref{aeq:abc} we can deduce that $a^2+b^2+ab-a-b=0$, that is, 
\begin{equation}
3(2a+b-1)^2+(3b-1)^2=4.
\end{equation}
To determine $N_d$, it suffices to determine the number of solutions to this diagonal quadratic equation on $a,b\in\bbF_d$. According to Lemma 6.24 in \rcite{LidlN97book}, we have
\begin{align}
N_d&=d-\eta_2(-3,d)=d-\eta_2(-1,d)\eta_2(3,d)=d-(-1)^{(d-1)/2}\eta_2(3,d)=d-\eta_2(d,3)\nonumber\\
&=\begin{cases}
d+1 \quad \text{if} \  d=2 \mmod 3,\\
d-1 \quad \text{if} \  d=1 \mmod 3,
\end{cases}
\end{align}
which implies the order formula in \lref{lem:O3Dh} [see also \eref{eq:O_3(d)}] given that $|O_3(d)|=2N_d$. Here $\eta_2(\cdot,d)$ and $\eta_2(\cdot,3)$
denote  the quadratic characters  over the finite fields $\bbF_d$ and $\bbF_3$, respectively, and the fourth equality follows from the law of quadratic reciprocity \cite{LidlN97book} as reproduced in \lref{lem:QuadraticReciprocity} in \aref{app:MultiChar}.

\subsection{Auxiliary results on  stochastic isometries in $O_3(d)$}

When $b=1$ and $a=c=0$, $O_\even$ and $O_\odd$ in \eref{aeq:Oevenodd} reduce to $\zeta$ and $\tau_{12}$, respectively. In addition, if any of $a,b,c$ is equal to 0 or 1, then only one of them is nonzero according to \eref{aeq:abc}, in which case $O_\even, O_\odd\in S_3$. This observation leads to the following proposition. 
\begin{proposition}\label{pro:O0entry}
	Suppose $d$ is  an odd prime and $O\in O_3(d)\setminus S_3$; then no entry of $O$ is equal to 0 or 1. 
\end{proposition}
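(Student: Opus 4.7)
The plan is to exploit the two defining constraints on any column $(a,b,c)^\top$ of $O$, namely $a+b+c=1$ and $a^2+b^2+c^2=1$ (and similarly for rows). Squaring the first and subtracting the second yields the key auxiliary identity $ab+bc+ca=0$, valid in $\bbF_d$ since $d$ is odd. This single identity, combined with the original two, turns the problem of ruling out a $0$ or $1$ entry into an elementary case analysis over a finite field.

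First I would handle the case where some entry of $O$ equals $0$. Without loss of generality, suppose the first column has $a=0$. Then $ab+bc+ca=0$ collapses to $bc=0$, so $b=0$ or $c=0$; together with $b+c=1$ and $b^2+c^2=1$ this forces $(b,c)\in\{(1,0),(0,1)\}$. Thus the column is a permutation of $(1,0,0)^\top$. Next I would treat the case where some entry equals $1$. Suppose $a=1$; then $b+c=0$ and $b^2+c^2=0$, so $c=-b$ and $2b^2=0$, which gives $b=c=0$ because $d$ is odd and prime. Again the column is a permutation of $(1,0,0)^\top$.

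In either case, some column of $O$ is a standard basis vector. Now I would invoke the classification of stochastic isometries summarized in \eref{aeq:Oevenodd}: once the first column of $O$ is fixed (after applying a permutation from $S_3$ on the left, which preserves membership in $O_3(d)\setminus S_3$ versus $S_3$), $O$ must coincide with $O_\even$ or $O_\odd$ as in that display. Substituting $(a,b,c)=(1,0,0)$, $(0,1,0)$, or $(0,0,1)$ into those explicit forms makes both $O_\even$ and $O_\odd$ into permutation matrices, hence elements of $S_3$. This contradicts the hypothesis $O\in O_3(d)\setminus S_3$, completing the argument.

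The computation is essentially mechanical, so there is no serious obstacle; the only point requiring a sentence of care is the reduction to the first column being standard, which is handled either by left-multiplying by a transposition in $S_3$ (permutations preserve the row/column constraints and the decomposition $O_3(d)=S_3\sqcup(O_3(d)\setminus S_3)$) or, more directly, by observing that the same case analysis applies verbatim to \emph{any} column of $O$, so a single $0$ or $1$ entry anywhere already forces the column containing it to be a standard basis vector, and then the explicit forms in \eref{aeq:Oevenodd} (with the roles of rows/columns interchanged if needed) complete the proof.
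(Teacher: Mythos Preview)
Your proposal is correct and follows essentially the same approach as the paper: the paper's argument (given in the paragraph immediately preceding the proposition) asserts that if any of $a,b,c$ equals $0$ or $1$ then only one of them is nonzero by \eref{aeq:abc}, whence $O_\even,O_\odd\in S_3$. You simply make that assertion explicit via the identity $ab+bc+ca=0$ and a short case split, and your handling of the reduction to the first column is fine (the paper instead relies on the remark that every row and column of $O$ is a permutation of the same triple $a,b,c$).
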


Next, we clarify the basic properties of  $\ker(O-\mathds{1})$ for $O\in O_3(d)$, which is tied to the number of fixed points of $O$ and is useful to computing the trace of $R(O)$ according to the following equation [cf. \eref{eq:RTtr}]
\begin{align}\label{eq:ROtr}
\tr R(O)=d^{n\dim \ker (O-\mathds{1})}. 
\end{align}

\begin{lemma}\label{lem:kerOI}
	Let $d$ be an odd prime and $O\in O_3(d)$. Then 
	$\mathbf{1}_3\in \ker(O-\mathds{1})$. If $O\neq \mathds{1}$ and $O\in O_3^\even(d)$, then  $\ker(O-\mathds{1})=\spa(\mathbf{1}_3)$. If  $O\in O_3^\odd(d)$, then $\dim \ker(O-\mathds{1})=2$. 
	If  $O\in O_3(d)\setminus S_3$, then any nonzero vector in $\ker(O-\mathds{1})$  has at least two nonzero entries. 
\end{lemma}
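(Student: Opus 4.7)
Claim (1) is immediate since every $O\in O_3(d)$ is stochastic, so $O\mathbf{1}_3=\mathbf{1}_3$. For claim (4), if a nonzero $\bfv\in\ker(O-\mathds{1})$ had a unique nonzero coordinate $v_i$, then $O\bfv=\bfv$ would force $O\bfe_i=\bfe_i$, so the $i$th column of $O$ would equal the standard basis vector $\bfe_i$, whose entries are only $0$ and $1$; this is incompatible with $O\in O_3(d)\setminus S_3$ by \pref{pro:O0entry}. The substance of the lemma therefore lies in claims~(2) and~(3), which I plan to treat by splitting on the parity of~$O$.

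\textbf{Odd case (claim 3).} The plan is to exploit the two structural identities in \eref{eq:Odet}, namely $O_\odd^2=\mathds{1}$ and $\det O_\odd=-1$. Because $d$ is odd, $t^2-1$ splits into distinct linear factors over $\bbF_d$, so $O$ is diagonalizable with eigenvalues in $\{\pm 1\}$. The three eigenvalues have product $-1$, leaving only the multisets $(1,1,-1)$ and $(1,-1,-1)$. I would rule out the latter by reading the trace off the explicit form in~\eqref{eq:Oevenodd}: the diagonal entries of $O_\odd$ are $(a,c,b)$ and the stochastic condition $a+b+c=1$ then forces $\tr O=1$, which is realized only by $(1,1,-1)$. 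Hence $\dim\ker(O-\mathds{1})=2$.

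\textbf{Even case (claim 2).} Here I would argue by contradiction: assume $O\in O_3^\even(d)$ has $\dim\ker(O-\mathds{1})\geq 2$ and aim to conclude $O=\mathds{1}$. Setting $N:=O-\mathds{1}$, we have $\mathrm{rank}\, N\leq 1$, so $N=\bfa\bfb^\top$ for some $\bfa,\bfb\in\bbF_d^3$. The matrix-determinant lemma combined with $\det O=1$ gives $1=\det(\mathds{1}+\bfa\bfb^\top)=1+\bfb^\top\bfa$, whence $\bfb^\top\bfa=0$ and $N^2=0$. Expanding $O^\top O=\mathds{1}$ in terms of $N$ then yields
\begin{equation*}
\bfa\bfb^\top+\bfb\bfa^\top+(\bfa^\top\bfa)\,\bfb\bfb^\top=0.
\end{equation*}
Taking the trace of this identity and using $\bfb^\top\bfa=0$ collapses matters to the isotropy alternative $(\bfa^\top\bfa)(\bfb^\top\bfb)=0$; by symmetry I may assume $\bfa^\top\bfa=0$. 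The displayed identity then becomes $a_ib_j+a_jb_i=0$ for all $i,j$; the diagonal entries together with $d$ being odd give $a_ib_i=0$ for every $i$, after which the off-diagonal entries force $\bfb=0$ the moment any $a_i\neq 0$, contradicting $\mathrm{rank}\,N=1$.

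The odd case is essentially immediate; the main obstacle I expect is the even case, where one cannot invoke semisimplicity of $O$ directly, since Jordan blocks with eigenvalue $1$ do occur for elements of $O_3^\even(d)$ (for instance $\zeta\in O_3^\even(3)$ satisfies $(\zeta-\mathds{1})^2\neq 0$). What makes the argument go through is combining the rank-one reduction forced by the hypothesis with the full orthogonality constraint, funnelled through the isotropy alternative; this yields a uniform proof across all odd primes~$d$, including the small case $d=3$.
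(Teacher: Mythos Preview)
Your proof is correct. There is one small slip in the odd case: you write that the determinant constraint leaves the multisets $(1,1,-1)$ and $(1,-1,-1)$, but $(1,-1,-1)$ has product $1$, not $-1$. Since $\mathbf{1}_3$ is already an eigenvector with eigenvalue $1$, the condition $\det O=-1$ alone pins down $(1,1,-1)$; your trace computation is then redundant (though correct).

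For claims (1), (3), and (4), your argument is essentially the paper's. The paper simply cites \eref{eq:Odet} and declares the kernel dimensions, leaving the reader to unpack $O_\odd^2=\mathds{1}$ and $\det O_\odd=-1$ exactly as you do; for claim (4) it invokes \pref{pro:O0entry} just as you do.

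For claim (2), the even case, the paper's proof is again only the bare citation of \eref{eq:Odet}. The most natural expansion of that citation would go through semisimplicity---e.g., using that $O_3^\even(d)$ is cyclic of order $h$ coprime to $d$ when $d\neq 3$, or that $O_\even$ commutes with $\zeta$ and hence is diagonal in the eigenbasis of $\zeta$ over an extension field---followed by $\det O=1$. Your rank-one reduction is a genuinely different and more elementary route: it uses only $O^\top O=\mathds{1}$ and the determinant, with no appeal to diagonalizability or the group structure, and as you correctly observe it handles $d=3$ (where $\zeta$ is \emph{not} semisimple) uniformly. One clarification: the ``symmetry'' you invoke is not literally an $\bfa\leftrightarrow\bfb$ swap but rather the companion identity coming from $OO^\top=\mathds{1}$ instead of $O^\top O=\mathds{1}$, which gives $\bfa\bfb^\top+\bfb\bfa^\top+(\bfb^\top\bfb)\bfa\bfa^\top=0$; with that, the case $\bfb^\top\bfb=0$ runs identically.
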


\begin{lemma}\label{lem:kerOI2}
	Let $d$ be an odd prime and $O\in O_3(d)$. Then any row (column) of $O-\mathds{1}$ is a linear combination of the other two rows (columns). If in addition $O\in O_3^\odd(d)\setminus S_3$, then the three rows (columns) of  $O-\mathds{1}$ are nonzero vectors that are proportional to each other.
\end{lemma}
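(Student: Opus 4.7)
The plan has two parts, matching the two assertions of the lemma.

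For the first assertion, I would exploit the fact that $\mathbf{1}_3$ is simultaneously a right and a left fixed vector of $O$. The condition $O\mathbf{1}_3=\mathbf{1}_3$ is built into the definition of a stochastic isometry; the condition $\mathbf{1}_3^\top O=\mathbf{1}_3^\top$ follows by applying $O^\top$ to $O\mathbf{1}_3=\mathbf{1}_3$ and using $O^\top O=\mathds{1}$. Hence $(O-\mathds{1})\mathbf{1}_3=0$ and $\mathbf{1}_3^\top(O-\mathds{1})=0$, which respectively say that the three columns and the three rows of $O-\mathds{1}$ sum to the zero vector. Therefore any one row (column) equals the negative sum of the other two, which is the desired linear combination.

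For the second assertion I would combine a rank computation with \Pref{pro:O0entry}. By \lref{lem:kerOI}, any $O\in O_3^\odd(d)$ satisfies $\dim\ker(O-\mathds{1})=2$, so by the rank-nullity theorem over $\bbF_d$ the matrix $O-\mathds{1}$ has rank exactly $1$. In particular, its row space and column space are one dimensional, so any two rows (columns) are scalar multiples of each other. The only remaining point is to rule out that one of the rows or columns could be the zero vector. This is where I would use \Pref{pro:O0entry}: when $O\in O_3(d)\setminus S_3$, no entry of $O$ equals $0$ or $1$, so each diagonal entry of $O-\mathds{1}$ is nonzero. Since every row and every column of $O-\mathds{1}$ contains a diagonal entry, none of them is the zero vector, giving the claimed proportionality of three nonzero rows (and columns).

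I do not expect any real obstacle: once \lref{lem:kerOI} is in hand the rank is pinned down immediately, and the nonvanishing of rows and columns is a one-line consequence of \Pref{pro:O0entry}. The only thing to be a little careful about is that ``proportional'' is meant in the strict sense of nonzero scalar multiples, which is exactly why the hypothesis $O\notin S_3$ is invoked to guarantee that no diagonal entry of $O-\mathds{1}$ vanishes.
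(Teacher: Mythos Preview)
Your proof is correct. For the second assertion it coincides with the paper's argument (rank one from \lref{lem:kerOI}, nonvanishing rows/columns from \pref{pro:O0entry}). For the first assertion, however, your route is genuinely different and cleaner: the paper splits into cases $O\in S_3$, $O\in O_3^{\even}(d)\setminus S_3$, $O\in O_3^{\odd}(d)$, and in the even nonpermutation case uses the cyclic structure of $O_\even$ to argue that any two rows are independent while all three are dependent. Your observation that $\mathbf{1}_3^\top(O-\mathds{1})=0$ and $(O-\mathds{1})\mathbf{1}_3=0$ immediately gives $r_1+r_2+r_3=0$ (and likewise for columns), which handles all cases uniformly in one line and avoids any appeal to the explicit form \eqref{eq:Oevenodd}. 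The paper's approach does yield the slightly stronger intermediate fact that in the even nonpermutation case any two rows are linearly independent, but that extra information is not needed for the lemma as stated.
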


\begin{proof}[Proof of \lref{lem:kerOI}]
	By definition $\ker(O-\mathds{1})$ is the eigenspace of $O$ with eigenvalue 1. 
	In addition, $\mathbf{1}_3$ is an  eigenvector of $O$  with  eigenvalue 1. 
	In conjunction with \eref{eq:Odet} we can deduce that
	\begin{align}\label{eq:kerOI}
	\!\!\!\dim \ker(O-\mathds{1})=\begin{cases}
	3 &\! \mbox{if} \; O=\mathds{1},\\
	1 &\! \mbox{if} \; O\neq \mathds{1}, \; O\in O_3^\even(d), \\
	2 & \! \mbox{if} \;  O\in O_3^\odd(d). 
	\end{cases}
	\end{align}
	In the three cases, $O$ has $d^3$, $d$, and $d^2$ fixed points, respectively. In addition, $\ker(O-\mathds{1})=\spa(\mathbf{1}_3)$ when $O\neq \mathds{1}$ and $O\in O_3^\even(d)$.
	
	It remains to prove the last claim in \lref{lem:kerOI}.
	If $O\in O_3(d)\setminus S_3$, then $O$ has no entry equal to 0 according to  \pref{pro:O0entry}. So $\ker(O-\mathds{1})$  cannot contain any nonzero vector that has only one nonzero entry, which completes the proof of \lref{lem:kerOI}. 
\end{proof}

\begin{proof}[Proof of \lref{lem:kerOI2}]
	If $O\in S_3$,  then it is straightforward to verify that  any row  of $O-\mathds{1}$ is a linear combination of the other two rows.	
	
	If  $O\in O_3^\even(d)\setminus S_3$, 
	then $\ker(O-\mathds{1})=\spa(\mathbf{1}_3)$ according to \lref{lem:kerOI}. So the row span of $O-\mathds{1}$ is two dimensional, and the three rows of $O-\mathds{1}$ are linearly dependent. On the other hand, the three rows of $O-\mathds{1}$ are related to each other by cyclic permutations according to  \eref{aeq:Oevenodd}, so any two rows must be linearly independent. Therefore, any row of $O-\mathds{1}$ is a linear combination of the other two rows.

	If   $O\in O_3^\odd(d)$, 
	then $\dim\ker(O-\mathds{1})=2$ according to \lref{lem:kerOI}, so the row span of $O-\mathds{1}$ is one dimensional.  If in addition  $O\in O_3^\odd(d)\setminus S_3$,  then  no entry of $O-\mathds{1}$ is equal to 0 according to \pref{pro:O0entry}. Therefore, the three rows of  $O-\mathds{1}$ are nonzero vectors that are proportional to each other, which means any row  of $O-\mathds{1}$ is a linear combination of the other two rows. 
	
	The above conclusions still hold if rows are replaced by columns because $O^\top=O^{-1}\in O_3(d)$ whenever $O\in O_3(d)$. 
\end{proof}

\section{Proofs of \lsref{lem:defect}-\ref{lem:Tindex} and auxiliary results on stochastic Lagrangian subspaces}

In this appendix we prove \lsref{lem:defect}-\ref{lem:Tindex} and introduce some auxiliary results on stochastic Lagrangian subspaces in \aref{app:SigmadAux}.

\subsection{\label{app:DefectProof}Proofs of \lsref{lem:defect} and  \ref{lem:ONNO}}

\begin{proof}[Proof of \lref{lem:defect}]
	Note that any nontrivial defect subspace in $\bbF_d^3$ is one dimensional and is thus determined by a defect vector. Suppose  $\bfg=(a,b,c)^\top\in \bbF_d^3$  is a defect vector; then $a^2+b^2+c^2=a+b+c=0$,	which implies that $a,b,c\neq 0$ and $c=-a-b$ given that $d$ is an odd prime by assumption. 
	By eliminating the variable $c$, we can deduce that
	\begin{equation}\label{eq:Neq}
	a^2+ab+b^2 = 0, \quad a,b \neq 0. 
	\end{equation}	
	If $d=3$, then $\bfg$ is necessarily proportional to $\mathbf{1}_3$, so  $\tcaN$ defined in \eref{eq:DefectSpaced3} is the only nontrivial defect subspace in 
	$\bbF_d^3$.

	Next, suppose $d>3$. Then \eref{eq:Neq}  has no  solution with $a=b$ and is equivalent to the equation $a^3-b^3=0$ with $a\neq b$,	that is,
	\begin{align}\label{eq:Neq2}
	(b/a)^3=1, \quad a,b\neq 0, \; a\neq b. 
	\end{align}
	When $d=2 \mmod 3$, this equation has no solution 
	because the order of the multiplicative group $\bbF_d^{\times}=\bbF_d\setminus \{0\}$ is not divisible by 3. 
	So there are no nontrivial defect subspaces.

	When $d=1 \mmod 3$, \eref{eq:Neq2} means $b/a$ is an order-3 element in the multiplicative group $\bbF_d^{\times}$. So $b/a=\xi$ or $b/a=\xi^2=-1-\xi$, where 
	$\xi$ is the order-3 element in $\bbF_d^\times$
	that appears in \eref{eq:TwoDefectSpaces}.   It follows that $\bfg$ is proportional to either $\bfg_0$ or $\bfg_1$ defined in \eref{eq:TwoDefectSpaces}. Meanwhile, $\tcaN_0=\spa(\bfg_0)$ and $\tcaN_1=\spa(\bfg_1)$ are indeed two distinct nontrivial defect subspaces, which completes the proof of \lref{lem:defect}. 
\end{proof}

\begin{proof}[Proof of \lref{lem:ONNO}]
	Note that $O\caN$ and $\caN O$ are  defect subspaces if  $\caN$ is a defect subspace and $O\in O_3(d)$. If  $d=3$, then
	$\tcaN$ defined in  \eref{eq:DefectSpaced3} is the only nontrivial defect subspace by \lref{lem:defect}, so \eref{eq:ONNOd3} holds.

	If 
	$d = 1\mmod  3$, then  $\tcaN_0=\spa(\bfg_0)$ and $\tcaN_1=\spa(\bfg_1)$  defined in \eref{eq:TwoDefectSpaces} are the only two nontrivial defect subspaces. Note that $\bfg_0$ and $\bfg_1$ are the eigenvectors of the cyclic permutation $\zeta$ defined in \eref{eq:zetatau} with eigenvalues $\xi^2$ and $\xi$, respectively, where $\xi$ is an order-3  element in $\bbF_d$. If $O\in O_3^\even(d)$, then $O$ commutes with  $\zeta$ by \eref{eq:Odet}, which means  $O\bfg_j$, $O^\top\bfg_j$, and $\bfg_j$ for each $j=0,1$ are eigenvectors of $\zeta$ with the same eigenvalue. Since $\zeta$ is nondegenerate, we conclude that  $O\bfg_j, O^\top\bfg_j \in \spa(\bfg_j)$, which implies \eref{eq:ONNO}.

	If $O\in O_3^\odd(d)$,  then $O^2=\mathds{1}$ and  $O\zeta O=\zeta^{-1}=\zeta^2$ by \eref{eq:Odet}, which means 
	$O\bfg_j$, $O^\top\bfg_j$, and $\bfg_{\bar{j}}$  are eigenvectors of $\zeta$ with the same eigenvalue for each $j=0,1$. So 
	$O\bfg_j, O^\top\bfg_j \in \spa(\bfg_{\bar{j}})$, which implies \eref{eq:ONNO}. 	 	
\end{proof}

\subsection{\label{app:SLagrangTproof}Proofs of \lsref{lem:defectT}-\ref{lem:TDelta2}}

Before proving  \lref{lem:defectT}, we first 
clarify the conditions on when a matrix in $\bbF_d^{6,3}$ can encode a stochastic Lagrangian subspace. The following auxiliary lemma  is a simple corollary of the definition in \sref{sec:SLSspanningSet}. 
\begin{lemma}\label{lem:GenMatrix}
	Suppose $d$ is an  odd prime, $\bfx_1, \bfx_2,\bfx_3, \bfy_1,\bfy_2,\bfy_3\in \bbF_d^3$, and 
	\begin{align}
	M=\begin{pmatrix}
	\bfx_1&\bfx_2&\bfx_3\\
	\bfy_1&\bfy_2&\bfy_3 
	\end{pmatrix}.
	\end{align}
	Then the column span of $M$ is a stochastic Lagrangian subspace in $\bbF_d^6$ iff the following three conditions hold, 
	\begin{enumerate}
		\item The three vectors $(\bfx_1;\bfy_1), (\bfx_2;\bfy_2), (\bfx_3;\bfy_3)$ are linearly independent;
		
		\item $\mathbf{1}_6\in \spa(\{(\bfx_1;\bfy_1), (\bfx_2;\bfy_2), (\bfx_3;\bfy_3) \})$;
		
		\item $\bfx_j\cdot \bfx_k =\bfy_j\cdot \bfy_k\;\forall j,k=1,2,3$.  
	\end{enumerate}
\end{lemma}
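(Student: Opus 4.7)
The plan is to unfold the three defining axioms of a stochastic Lagrangian subspace $\caT\leq \bbF_d^6$---isotropy under the quadratic form $\mathfrak{q}(\bfx,\bfy)=\bfx\cdot\bfx-\bfy\cdot\bfy$, maximal dimension $3$, and containment of $\mathbf{1}_6$---and translate each one into a condition on the generating matrix $M$. Linear independence of the three columns (condition 1) is clearly equivalent to the column span having dimension exactly $3$, and the membership $\mathbf{1}_6\in \spa(\cdots)$ (condition 2) is literally the stochastic axiom. The only substantive equivalence is between isotropy of the entire column span and the pairwise Gram-matrix identity $\bfx_j\cdot \bfx_k=\bfy_j\cdot \bfy_k$ of condition 3.

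For the forward direction I would apply the isotropy relation $\bfx\cdot\bfx=\bfy\cdot\bfy$ first to each column to obtain $\bfx_j\cdot\bfx_j=\bfy_j\cdot\bfy_j$, then to each sum $(\bfx_j+\bfx_k;\bfy_j+\bfy_k)$ and subtract the diagonal terms to obtain $2\bfx_j\cdot\bfx_k=2\bfy_j\cdot\bfy_k$; dividing by $2$ (valid since $d$ is odd) yields condition 3. For the converse, bilinearity gives, for any $(\bfx;\bfy)=\sum_j c_j (\bfx_j;\bfy_j)$ in the column span, $\bfx\cdot \bfx=\sum_{j,k} c_j c_k \bfx_j\cdot \bfx_k=\sum_{j,k} c_j c_k \bfy_j\cdot \bfy_k=\bfy\cdot \bfy$, which is the required isotropy. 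The one mild obstacle is this polarization step, which requires the odd-prime hypothesis on $d$ so that $2$ is invertible in $\bbF_d$; apart from that, the argument is a direct unpacking of definitions.
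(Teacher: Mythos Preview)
Your proposal is correct and matches the paper's approach: the paper states that this lemma ``is a simple corollary of the definition'' of a stochastic Lagrangian subspace and gives no further argument, and your unpacking of the three axioms (with the polarization step using invertibility of $2$ for the isotropy $\leftrightarrow$ Gram-matrix equivalence) is precisely that corollary spelled out.
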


\begin{proof}[Proof of \lref{lem:defectT}]
	When $d=3$, by virtue of \lref{lem:GenMatrix} it is straightforward to verify that $\tcaT_0$ and $\tcaT_1$ are two distinct stochastic Lagrangian subspaces  that have nontrivial left and right defect subspaces. Therefore,
	$\scrT_\defe=\{\tcaT_0, \tcaT_1\}$ given that $|\scrT_\defe|=2$ by \eref{eq:Sigma33DefectNum}. According to \eref{eq:Tcss},  $\tcaT_0$ is of CSS type.

	When $d = 1 \mmod 3$, $\tcaT_{ij}$ for $i,j=0,1$ are four distinct stochastic Lagrangian subspaces that have nontrivial left and right defect subspaces by \eref{eq:g12Properties} and \lref{lem:GenMatrix}. Therefore, $\scrT_\defe=\{\tcaT_{ij}\}_{i,j=0,1}$ given that $|\scrT_\defe|=4$ by \eref{eq:Sigma33DefectNum}. According to \eref{eq:Tcss},  $\tcaT_{00}$ and $\tcaT_{11}$ are of CSS type.
\end{proof}

\begin{proof}[Proof of \lref{lem:SemigroupT}] 
	When $d=3$, \eref{eq:OTTO3} can be verified by straightforward calculation based on \eref{eq:SLStwo}. Note that the left action of $O_3(d)=S_3$ on $\scrT_\defe$ is transitive, and the stabilizer of $\tcaT_0$ (and similarly for $\tcaT_1$) is  necessarily a subgroup of $O_3(d)$ of index 2. The same conclusion holds if  the left action is replaced by the right action. When $d = 1 \mmod  3$, each stochastic Lagrangian subspace is completely determined by its left and right defect subspaces. So \eref{eq:OTTOd} follows from \eref{eq:tTijDefect}, \lref{lem:ONNO}, and the following equation,
	\begin{align}
	(O_1\caT O_2)_{\LD}=O_1T_{\LD},\quad (O_1\caT O_2)_{\RD}=O_2^\top \caT_{\RD}=\caT_{\RD}O_2\quad \forall O_1, O_2\in O_3(d). 
	\end{align}  
\end{proof}

\begin{proof}[Proof of \lref{lem:SemigroupscrT}]
	\Lref{lem:SemigroupscrT} follows from \lref{lem:SemigroupT} and the following equation
	\begin{align}
	OO_3^\even(d)&=O_3^\even(d) O=O_3^\even(d), &\quad 
	OO_3^\odd(d)&=O_3^\odd(d) O=O_3^\odd(d)\quad \quad \forall O\in O_3^\even(d),\\
	OO_3^\even(d)&=O_3^\even(d) O=O_3^\odd(d),& \quad 
	OO_3^\odd(d)&=O_3^\odd(d) O=O_3^\even(d)\quad \quad \forall O\in O_3^\odd(d).
	\end{align}
\end{proof}

\begin{proof}[Proof \lref{lem:TDelta}] 
	By definition $\mathbf{1}_6\in \caT$ for all $\caT\in \Sigma(d)$, which implies that	$\spa(\mathbf{1}_6)\leq \caT_\Delta$. 
	To prove other results in  \lref{lem:TDelta}, it is instructive to recall the facts that $\Sigma(d)=\scrT_\iso\sqcup\scrT_\defe$ and
	\begin{equation}
	\begin{aligned}
	\scrT_0\cap \scrT_\iso&=\{T_O\ |\ O\in O_3^\even(d)\}, & 
	\scrT_0\cap \scrT_\defe&=\begin{cases}
	\tcaT_1 & d=3,\\
	\{\tcaT_{01},\tcaT_{10}\} &d=1\mmod 3,
	\end{cases}\\
	\scrT_1\cap \scrT_\iso&=\{T_O\ |\ O\in O_3^\odd(d)\},&
	\scrT_1\cap \scrT_\defe&=\begin{cases}
	\tcaT_0 & d=3,\\
	\{\tcaT_{00},\tcaT_{11}\} &d=1\mmod 3.
	\end{cases}
	\end{aligned}
	\end{equation}	
	Note that $\scrT_\defe$ is empty when $d=2\mmod 3$. Here $\tcaT_i$ and $\tcaT_{ij}$ for $i,j=0,1$ are determined by \lref{lem:defectT} and they have the following properties,
	\begin{equation}\label{eq:tTDelta}
	\begin{gathered}
	(\tcaT_{0})_\Delta=\{(\bfx;\bfx) \,|\, \bfx\in \tcaN^\perp\}, \quad  (\tcaT_{1})_\Delta=\spa(\mathbf{1}_6),\\
	(\tcaT_{01})_\Delta=(\tcaT_{10})_\Delta=\spa(\mathbf{1}_6),\quad  (\tcaT_{ii})_\Delta=\{(\bfx;\bfx) \ |\ \bfx\in \tcaN_i^\perp\},\quad i=0,1,
	\end{gathered}
	\end{equation}
	where $\tcaN$ is defined in \eref{eq:DefectSpaced3}, and $\tcaN_0, \tcaN_1$ are defined in \eref{eq:TwoDefectSpaces}. In addition, $\tcaN,\tcaN_0, \tcaN_1$ are one dimensional, while $\tcaN^\perp,\tcaN_0^\perp, \tcaN_1^\perp$ are two dimensional.

	If $\caT\in \scrT_0$ and $\caT\neq\Delta$, then either  $\caT=\caT_O$ for some $O\in O_3^\even(d)\setminus \{\mathds{1}\}$ or $\caT\in \scrT_0\cap \scrT_\defe$. In the first case we have $\caT_\Delta=\spa(\mathbf{1}_6)$ according to \lref{lem:kerOI} and the following equality 
	\begin{align}\label{eq:TODelta}
	(\caT_O)_\Delta=
	\{(\bfx;\bfx)\ |\ \bfx= \ker(O-\mathds{1})\}.
	\end{align}
	In the second case, we have $\caT_\Delta=\spa(\mathbf{1}_6)$ thanks to \eref{eq:tTDelta}.

	If  $\caT\in \scrT_1$; then either  $\caT=\caT_O$ for some $O\in O_3^\odd(d)$ or $\caT\in\scrT_1\cap \scrT_\defe$. In the first case we have $\dim \caT_\Delta=2$ according to \eref{eq:TODelta} and \lref{lem:kerOI}. In the second case,  we have $\dim \caT_\Delta=2$  thanks to \eref{eq:tTDelta}. 
\end{proof}

\begin{proof}[Proof of \lref{lem:TDelta2}]
	By definition $\mathbf{1}_6\in \caT_1,\caT_2$, which implies that $\spa(\mathbf{1}_6)\leq \caT_1\cap \caT_2$. 	If $\caT_1\in \scrT_\iso$, that is, $\caT_1=\caT_O$ with $O\in O_3(d)$, then 
	\begin{align}
	\dim(\caT_1\cap \caT_2)=\dim(\caT_O\cap \caT_2)=\dim(\Delta\cap O^\top \caT_2)=\dim(O^\top \caT_2)_\Delta,
	\end{align}
	and \eref{eq:T1capT2} follows from \lsref{lem:SemigroupscrT} and \ref{lem:TDelta}. By the same token \eref{eq:T1capT2} holds if $\caT_2\in \scrT_\iso$. If $\caT_1, \caT_2\notin \scrT_\iso$, then $\caT_1, \caT_2\in \scrT_\defe$, in which case \eref{eq:T1capT2} can be verified by virtue of \lref{lem:defectT}.
\end{proof}

\subsection{\label{app:lem:tauTvProof}Proofs of \lsref{lem:Tv} and \ref{lem:tauTv}}

Before proving \lref{lem:Tv}, we need to introduce an auxiliary lemma. 
\begin{lemma}\label{lem:abcCubic}
	Suppose $d$ is an odd prime and $a,b,c\in \bbF_d$ satisfy $a+b+c=0$.  Then $a^3+b^3+c^3=3abc$. 
	If in addition $d\geq 5$,  then $a^3+b^3+c^3=0$ iff  $abc=0$, that is, one of the three numbers $a,b,c$ is equal to 0.
\end{lemma}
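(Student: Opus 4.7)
The plan is to invoke the classical factorization identity
\[
a^3+b^3+c^3-3abc = (a+b+c)\bigl(a^2+b^2+c^2-ab-bc-ca\bigr),
\]
which holds in any commutative ring and in particular in $\bbF_d$. Under the hypothesis $a+b+c=0$, the right-hand side vanishes, giving $a^3+b^3+c^3 = 3abc$ immediately. This dispatches the first assertion without any case analysis.

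For the second assertion, I would simply observe that when $d\geq 5$ the element $3$ is a unit in $\bbF_d$ (since $d$ is a prime different from $3$). Therefore the equation $3abc=0$ in $\bbF_d$ is equivalent to $abc=0$, and the equivalence $a^3+b^3+c^3 = 0 \Leftrightarrow abc=0$ follows by combining this with the identity established above. Since $\bbF_d$ is an integral domain, $abc=0$ is in turn equivalent to saying that at least one of $a,b,c$ vanishes, which is the stated conclusion.

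There is no real obstacle here; the only subtlety is the exclusion of $d=3$, which is necessary precisely because $3=0$ in $\bbF_3$ makes the implication $3abc=0 \Rightarrow abc=0$ fail (indeed, in $\bbF_3$ one has $1^3+1^3+1^3=3=0$ while $1\cdot 1\cdot 1 = 1\neq 0$, and $1+1+1=0$, showing the second claim genuinely requires $d\geq 5$).
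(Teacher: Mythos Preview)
Your proof is correct and essentially the same as the paper's: both factor $a^3+b^3+c^3-3abc$ through $(a+b+c)$, you via the classical factorization $(a+b+c)(a^2+b^2+c^2-ab-bc-ca)$ and the paper via the equivalent form $\frac{3(a+b+c)(a^2+b^2+c^2)-(a+b+c)^3}{2}$. Your version is arguably cleaner since it avoids the division by $2$ (though this is harmless here as $d$ is odd).
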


\begin{proof}[Proof of \lref{lem:abcCubic}]
	\Lref{lem:abcCubic} is a simple corollary of   the following equation,
	\begin{align}
	a^3+b^3+c^3-3abc=\frac{3 (a + b + c)(a^2 + b^2 + c^2) - (a + b + c)^3}{2}=0.
	\end{align}
\end{proof}

\begin{proof}[Proof of \lref{lem:Tv}]
	By assumption $v_1+v_2+v_3=\bfv\cdot \mathbf{1}_3=0$, so the equality $3v_1v_2v_3=v_1^3+v_2^3+v_3^3$ follows from \lref{lem:abcCubic}. As a corollary, $v_1v_2v_3=0$ iff	$v_1^3+v_2^3+v_3^3=0$ when $d\geq 5$.
	
	The equivalence of the three conditions in \eref{eq:TvdefEqui} follows from \eqref{eq:TvDeltaDefect}. 
	
	The equivalence of the first two conditions in \eref{eq:TvsymEqui} follows from \eqsref{eq:scrT01}{eq:TvDeltaDefect} given that $\scrT_\sym\cap\scrT_\odd=\scrT_\sym\cap\scrT_1$.  If $ v_1v_2v_3=0$, then one of the three entries $v_1, v_2, v_3$ is zero, and $\bfv$ is proportional to $(1,-1,0)^\top$, $(1,0,-1)^\top$, or $(0,1,-1)^\top$.  So  $\bfv\cdot\bfv\neq0$ and $O_\bfv$ is a transposition, which means $\caT_\bfv\in \scrT_\sym\cap\scrT_\odd$. Conversely, if $\caT_\bfv\in \scrT_\sym\cap\scrT_\odd$, then $\bfv\cdot\bfv\neq0$ by \eref{eq:TvdefEqui}, $O_\bfv$ is a transposition, and  $\bfv$ is proportional to $(1,-1,0)^\top$, $(1,0,-1)^\top$, or $(0,1,-1)^\top$, given that 
	$\bfv$ is an eigenvector of $O_\bfv$ with eigenvalue $-1$. Therefore, $v_1v_2v_3=0$. 	
\end{proof}

\begin{proof}[Proof of \lref{lem:tauTv}]	
	The relations  $\caT_\bfv\in \scrT_1$ and $\tau_{12}\caT_\bfv\in \scrT_0$ follow from \eref{eq:TvDeltaDefect}  and \lref{lem:SemigroupscrT}.
	If $\bfu$ and $\bfv$ are proportional to each other, then 
	$\caT_\bfu=\caT_\bfv$ thanks to the definition in \eref{eq:Tv}. Otherwise, $(\caT_\bfu)_\tDelta\neq (\caT_\bfv)_\tDelta$ by \eref{eq:TvDeltaDefect},  which means $\caT_\bfu\neq \caT_\bfv$. Therefore,  $\caT_\bfu=\caT_\bfv$ iff $\bfu$ and $\bfv$ are proportional to each other. 
	
	Alternatively, the relation $\tau_{12}\caT_\bfv\in \scrT_0$ follows from \lref{lem:TDelta} and the equation below
	\begin{align}\label{eq:DeltatauTvProof}
	(\tau_{12} \caT_\bfv)_\tDelta=\begin{cases}
	\bbF_d^3 &\mbox{if} \quad \bfv \propto (1,-1,0), \\
	\spa(\mathbf{1}_3) &\mbox{otherwise},
	\end{cases}
	\end{align}
	which can be proved as follows. If $ \bfv \propto (1,-1,0)^\top$, then $\caT_\bfv=\caT_{\tau_{12}}$ and $\tau_{12} \caT_\bfv=\Delta$, so $(\tau_{12} \caT_\bfv)_\tDelta=\bbF_d^3$, which confirms \eref{eq:DeltatauTvProof}. Otherwise, by virtue of \eref{eq:Tv} we can deduce that
	\begin{align}
	(\tau_{12} \caT_\bfv)_\tDelta&=\{\bfx-a\bfv\ |\  \bfx\cdot \bfv=0,  a\in \bbF_d, \tau_{12} \bfx+a\tau_{12}\bfv=\bfx-a\bfv\} \nonumber\\
	&=\{\bfx-a\bfv\ |\  \bfx\cdot \bfv=0,  a\in \bbF_d, a(1+\tau_{12})\bfv=(1-\tau_{12})\bfx\} =\{\bfx\ |\  \bfx\cdot \bfv=0,  (1-\tau_{12})\bfx =0\}\nonumber\\
	&=\bigl\{\bfx\ |\  \bfx\cdot \bfv=0,  \bfx\cdot (1,-1,0)^\top=0 \bigr\}=\spa(\mathbf{1}_3),
	\end{align}
	which confirms \eref{eq:DeltatauTvProof}. 	Here the third equality holds because the third entry of $(1-\tau_{12})\bfx$ is zero, while the third entries of $\bfv$ and $(1+\tau_{12})\bfv$ are nonzero. The last equality  holds because $\bfv\cdot \mathbf{1}_3=(1,-1,0)^\top\cdot \mathbf{1}_3=0$ and  $\bfv$ is not proportional to $(1,-1,0)^\top$. 
\end{proof}

\subsection{\label{app:TindexProof}Proof of \lref{lem:Tindex}}

\begin{proof}[Proof of \lref{lem:Tindex}]	
	If $\caT\in \scrT_\defe$, then $\caT\in \scrT_\defe\cap\scrT_1$ or  $\tau_{12}\caT\in \scrT_\defe\cap\scrT_1$, and any characteristic vector of $\caT$ is  proportional to either $\bfg_0$ or $\bfg_1$ defined in \eref{eq:TwoDefectSpaces} by \lref{lem:defect} and \eref{eq:TvDeltaDefect}. Note that the third power of each entry of $\bfg_0$ or $\bfg_1$ is equal to 1, so 
	$\ind(\caT)=\ind(3)$,  which implies \eref{eq:Tindex}  given that $O\caT,\caT O\in \scrT_\defe$ for all $O\in S_3$ by \lref{lem:SemigroupT}.

	Next, suppose $\caT\in \scrT_\ns\setminus \scrT_\defe=\scrT_\ns\cap\scrT_\iso$. Then
	$\caT=\caT_O$ for some $O\in O_3(d)\setminus S_3$ and $\ind(\caT)=\ind(O)$. Now \eref{eq:Tindex} follows from \lref{lem:Oindex} below.
	
	Thanks to \eref{eq:CharIndex}, \eref{eq:Tindex} still holds if $\ind$ is replaced by $\eta_3$. 	
\end{proof}

\begin{lemma}\label{lem:Oindex}
	Suppose $d\geq 5$ is an odd prime and $O\in O_3(d)\setminus S_3$; then 
	\begin{align}\label{eq:Oindex}
	\ind(O'O )=\ind(OO')=\ind(O)\quad \forall O'\in S_3. 
	\end{align}
\end{lemma}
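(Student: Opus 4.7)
The plan is to work in a parametric model of $O_3(d)$ and to extract $\ind$ from a cubic invariant of the parameter. The case $d = 2 \mmod 3$ is immediate: every nonzero element of $\bbF_d$ is a cube, so $\ind \equiv 0$ on $\scrT_\ns$ and both sides of \eref{eq:Oindex} vanish. I will therefore focus on the substantive case $d = 1 \mmod 3$ (so $d \geq 7$).

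First, I would introduce hyperbolic coordinates on $\mathbf{1}_3^\perp$ using the basis $\{\bfg_0, \bfg_1\}$ from \eref{eq:TwoDefectSpaces}, under which the restricted dot product becomes $(\alpha\bfg_0 + \beta\bfg_1) \cdot (\alpha'\bfg_0 + \beta'\bfg_1) = 3(\alpha\beta' + \alpha'\beta)$. The group $O_3(d)$ then identifies with the split orthogonal group $O(2,d)$, with even rotations realized as diagonal matrices $\mathrm{diag}(t, t^{-1})$ and odd reflections as antidiagonal matrices with entries $s, s^{-1}$, for parameters $t, s \in \bbF_d^{\times}$. A direct expansion using $1 + \xi + \xi^2 = 0$ yields the key identity
\[
v_1 v_2 v_3 = \alpha^3 + \beta^3 \qquad\text{for}\qquad \bfv = \alpha\bfg_0 + \beta\bfg_1,
\]
and hence $v_1^3 + v_2^3 + v_3^3 = 3(\alpha^3 + \beta^3)$ on $\mathbf{1}_3^\perp$, consistent with \lref{lem:abcCubic}. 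For an odd reflection $O$ with parameter $s$, solving $O\bfv = -\bfv$ gives $\bfv \propto -s\bfg_0 + \bfg_1$, whence $\ind(O) = \ind(1 - s^3)$; for an even rotation $O$ with parameter $t \neq 1$, the reduction $\ind(O) = \ind(\tau_{12}O)$ (by the $\tau_{12}$-convention used to define characteristic vectors in $\scrT_0$) and a short matrix multiplication using $\tau_{12}\bfg_0 = \xi\bfg_1$ yield $\ind(O) = \ind(1 - t^{-3})$.

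The invariance under $S_3$ will then follow from two observations. First, the $S_3$-elements have parameters that are all cube roots of unity: $\zeta^k$ corresponds to the rotation with $t = \xi^{-k}$, and $\tau_{23}, \tau_{12}, \tau_{13}$ correspond to reflections with $s \in \{1, \xi^2, \xi\}$ respectively. Second, the group law in $O(2,d)$ implies that multiplying $O$ on either side by an element with cube-root-of-unity parameter multiplies $s^3$ (or $t^{-3}$) by the cube of that parameter, i.e., by $1$. Hence $\ind(O'O) = \ind(O) = \ind(OO')$ in every case. A minor subtlety is that right multiplication by a transposition, when composed with the $\tau_{12}$-extraction of the characteristic vector, can replace $s^3$ by $s^{-3}$; but $\ind(1 - s^{-3}) = \ind(1 - s^3)$ holds because $(1 - s^{-3})/(1 - s^3) = -s^{-3}$ is a cube, using that $-1$ is a cube in $\bbF_d$ whenever $d = 1 \mmod 6$, which is automatic for odd primes with $d = 1 \mmod 3$.

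The main obstacle I foresee is the bookkeeping in the cases where multiplication mixes parities, for example when $O \in O_3^\odd(d)$ is multiplied by a transposition to produce an even rotation: extracting the characteristic vector then requires applying $\tau_{12}$ to $O'O$, which further modifies the parameter. However, each such instance reduces to the same principle---the parameter gets rescaled by a cube root of unity, possibly composed with inversion---so the invariance of $\ind(1 - s^3)$ under such rescalings, together with the cube identity for $-1$, closes the argument uniformly.
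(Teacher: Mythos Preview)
Your proposal is correct and takes a genuinely different route from the paper's proof. The paper argues abstractly: it observes that for $O_1 \in O_3^\odd(d)$ the characteristic vector is the $(-1)$-eigenvector, that conjugation by any $O' \in S_3$ permutes the entries of this eigenvector (hence preserves $v_1^3+v_2^3+v_3^3$), and then exploits the dihedral relation $\zeta^{2j}O_1\zeta^{-2j}=\zeta^j O_1=O_1\zeta^{-j}$ to reduce one-sided multiplication by $\zeta^j$ to conjugation; the remaining $S_3$-products and the even case $O=\tau O_1$ are then handled by short manipulations. This works uniformly for all $d\geq 5$ without splitting on the residue of $d$ modulo $3$.

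Your approach is more computational but also more explicit: after disposing of $d\equiv 2\mmod 3$ trivially, you diagonalize the quadratic form on $\mathbf{1}_3^\perp$ in the hyperbolic basis $\{\bfg_0,\bfg_1\}$ and parametrize $O_3(d)$ by a single unit $s$ or $t$, obtaining the closed formula $\ind(O)=\ind\bigl(3(1-s^3)\bigr)$ (you drop the factor $3$, but it is irrelevant for the invariance claim). The $S_3$-invariance then becomes the elementary fact that the $S_3$-parameters are cube roots of unity, so multiplication changes $s^3$ at worst to $s^{-3}$, and $(1-s^{-3})/(1-s^3)=-s^{-3}$ is a cube. What you gain is an explicit parametric description of $\ind$ on $O_3(d)\setminus S_3$, which dovetails nicely with the character computations in \sref{sec:MagicOrbitd1}; what the paper's argument gains is brevity and a uniform treatment that never needs the defect basis.
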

\begin{proof}[Proof of \lref{lem:Oindex}]
	Let $O_1$ be an arbitrary element in $O_3^\odd\setminus S_3$ and let $\bfv$ be a characteristic vector of $O_1$; then $\bfv$ is also an eigenvector of $O_1$ associated with eigenvalue $-1$. In addition, $O'\bfv$ is a characteristic vector of $O'O_1O'^\top$ and is related to $\bfv$ by a permutation. Therefore,
	\begin{align}\label{eq:OindexProof1}
	\ind(O'O_1O'^\top)&=\ind(O_1) \quad \forall O'\in S_3, \quad 
	\ind(\tau O_1\tau)=\ind(O_1),
	\end{align}
	where $\tau=\tau_{12}$ is the transposition defined in \eref{eq:zetatau}. 
	In addition, according to \lref{lem:O3Dh} and \eref{eq:Odet}, $O_3(d)$ is a dihedral group; 
	$O_1$ has order 2 and satisfies the relation
	\begin{align}
	\zeta^{2j}O_1\zeta^{-2j}=\zeta^{j}O_1 =O_1 \zeta^{-j},\quad j=0,1,2, 
	\end{align}
	where $\zeta$ is the  cyclic permutation defined in \eref{eq:zetatau}.
	The above two equations together imply  that
	\begin{align}\label{eq:OindexProof2}
	\ind(\zeta^j O_1 )=\ind(O_1\zeta^j)=\ind(O_1)\quad j=0,1,2. 
	\end{align}
	By virtue of \eqsref{eq:OindexProof1}{eq:OindexProof2} we can further deduce the following results for $j=0,1,2$,
	\begin{align}
	\begin{aligned}
	\ind(\zeta^j\tau O_1)&=\ind(\tau \zeta^j\tau O_1)=\ind(\zeta^{-j} O_1)=\ind(O_1),\\
	\ind(O_1\zeta^j\tau)&=\ind(\tau O_1\zeta^j\tau)=\ind(O_1\zeta^j)=\ind(O_1),
	\end{aligned}
	\end{align} 
	note that $O_1\zeta^j\in O_3^\odd\setminus S_3$. 
	So \eref{eq:Oindex} holds when  $O\in O_3^\odd\setminus S_3$
	given that $S_3=\{1,\zeta,\zeta^2, \tau, \zeta\tau,\zeta^2\tau\}$.

	If $O\notin O_3^\odd$, then $O$ can be expressed as $O=\tau O_1$ with  $O_1\in O_3^\odd\setminus S_3$, and  $\tau O, O\zeta^j \tau\in O_3^\odd$. By virtue of \eqsref{eq:OindexProof1}{eq:OindexProof2} we can deduce the following results for $j=0,1,2$,
	\begin{align}
	\begin{aligned}
	\ind(\zeta^j O)&=\ind(\tau \zeta^j O)=\ind(\zeta^{-j}\tau O)=\ind(\tau O)=\ind(O),\\
	\ind(O\zeta^j)&=\ind(\tau O  \zeta^j )=\ind(\tau O)=\ind(O),\\
	\ind(\zeta^j \tau O)&=\ind(\tau O)=\ind(O),\\
	\ind(O\zeta^j \tau)&=\ind(\tau O\zeta^j )=\ind(\tau O)=\ind(O),
	\end{aligned}
	\end{align}
	which confirm \eref{eq:Oindex} and complete the proof of \lref{lem:Oindex}. 
\end{proof}

\subsection{\label{app:SigmadAux}Auxiliary results on stochastic Lagrangian subspaces in $\Sigma(d)$}

\begin{lemma}\label{lem:TDelta3}
	Suppose  $d$ is an odd prime, and $\caT_1, \caT_2, \caT_3$ are three distinct stochastic Lagrangian subspaces in $\Sigma(d)$, then $\caT_1\cap \caT_2 \cap \caT_3=\spa(\mathbf{1}_6)$. 
	In addition, $(\caT_1)_\Delta=(\caT_2)_\Delta$ iff $\caT_1, \caT_2\in \scrT_0\setminus \{\Delta\}$. 
\end{lemma}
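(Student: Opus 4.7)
The plan is to combine a simple pigeonhole argument with the dimension count already provided by \lref{lem:TDelta2}. Since $\Sigma(d)=\scrT_0\sqcup \scrT_1$ consists of only two classes, any three distinct stochastic Lagrangian subspaces must contain two, say $\caT_i$ and $\caT_j$, lying in the same class. By \lref{lem:TDelta2}, distinct elements of the same class have $\dim(\caT_i\cap \caT_j)=1$, and since $\mathbf{1}_6\in \caT_i\cap \caT_j$ by definition, we get $\caT_i\cap \caT_j=\spa(\mathbf{1}_6)$. The triple intersection is sandwiched between $\spa(\mathbf{1}_6)$ and $\caT_i\cap \caT_j=\spa(\mathbf{1}_6)$, so equality follows. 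This part is essentially immediate.

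For the "iff" statement (where I understand $\caT_1\neq \caT_2$ is tacitly assumed, as in the setup of \lref{lem:TDelta2}), the direction ($\Leftarrow$) follows directly from \lref{lem:TDelta}, which gives $(\caT_1)_\Delta=(\caT_2)_\Delta=\spa(\mathbf{1}_6)$ whenever $\caT_1,\caT_2\in \scrT_0\setminus\{\Delta\}$. For the forward direction, I would proceed by case analysis on the classes containing $\caT_1$ and $\caT_2$, using \lref{lem:TDelta} to compute $\dim \caT_\Delta$ in each case. Cross-class pairs give mismatched dimensions (one of $1,3$ versus $2$), and the case $\caT_1=\Delta$ gives dimension $3$ versus $1$, all of which are excluded. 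The only genuine obstacle is ruling out two distinct elements of $\scrT_1$: both have $\dim \caT_\Delta = 2$, so the dimensions match, and one must work harder.

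The crucial observation for that last case is the dimension clash between $\caT_\Delta$ and pairwise intersections. If $\caT_1,\caT_2\in \scrT_1$ are distinct, then \lref{lem:TDelta2} gives $\dim(\caT_1\cap \caT_2)=1$. However, $(\caT_1)_\Delta$ is two-dimensional and sits in $\caT_1$, so if $(\caT_1)_\Delta=(\caT_2)_\Delta$ then this common $2$-dimensional subspace also sits in $\caT_2$, hence in $\caT_1\cap \caT_2$, contradicting $\dim(\caT_1\cap \caT_2)=1$. This rules out the only remaining nontrivial case and forces $\caT_1,\caT_2\in \scrT_0\setminus\{\Delta\}$, completing the forward direction.

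The main step is really just the packaging: the proof is essentially a bookkeeping exercise combining \lsref{lem:TDelta} and \ref{lem:TDelta2}, and no new structural input on stochastic Lagrangian subspaces is needed. The only mild subtlety is checking that the implicit assumption $\caT_1\neq \caT_2$ in the second statement is consistent with the conventions used elsewhere (e.g., in \lref{lem:TDelta2}); I would state this convention explicitly at the start of the proof to avoid any ambiguity with the trivially-equal case $\caT_1=\caT_2=\Delta$.
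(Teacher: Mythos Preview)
Your proof is correct. For the first statement and the $(\Leftarrow)$ direction of the second, your argument matches the paper's exactly: pigeonhole into $\scrT_0\sqcup\scrT_1$, then invoke \lref{lem:TDelta2} and \lref{lem:TDelta}.

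For the $(\Rightarrow)$ direction, your case analysis works, but the paper takes a shorter route that avoids enumerating cases. It observes that $(\caT_i)_\Delta=\caT_i\cap\Delta$ by definition, so if $(\caT_1)_\Delta=(\caT_2)_\Delta$ (with $\caT_1\neq\caT_2$, hence neither equal to $\Delta$), then this common subspace equals $\caT_1\cap\caT_2\cap\Delta$, which is $\spa(\mathbf{1}_6)$ by the first statement of the lemma applied to the triple $\caT_1,\caT_2,\Delta$. Since $\dim(\caT_i)_\Delta=1$, \lref{lem:TDelta} forces $\caT_1,\caT_2\in\scrT_0\setminus\{\Delta\}$ in one stroke. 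This bootstrapping from part~1 is cleaner than separately handling the $\scrT_1$ case, though your argument for that case (a $2$-dimensional $(\caT_1)_\Delta$ cannot sit inside the $1$-dimensional $\caT_1\cap\caT_2$) is perfectly valid and arguably more transparent, since it does not rely on having already established the triple-intersection result. Your remark about the implicit assumption $\caT_1\neq\caT_2$ is well taken; the paper's proof relies on it too without stating it.
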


The following lemma clarifies the number of zero entries of a nonzero vector in a defect subspace; it is a simple corollary of \lref{lem:defect} and \eref{eq:NbotN}. 
\begin{lemma}\label{lem:defect0}
	Suppose $d$ is an odd prime and $\caN$ is a nontrivial defect subspace in $\bbF_d^3$.   Then any nonzero vector in $\caN$ has no entry equal to 0, and any nonzero vector in $\caN^\perp$ has at most one entry equal to 0. 
\end{lemma}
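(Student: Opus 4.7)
The plan is to invoke \lref{lem:defect} to reduce to three cases depending on the residue of $d$ modulo $3$, and in each case verify both claims by direct inspection of the explicit generators of $\caN$ and a short dimension-count for $\caN^\perp$.

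When $d = 2\mmod 3$ there is nothing to prove, since \lref{lem:defect} states that no nontrivial defect subspace exists. When $d = 3$, the only nontrivial defect subspace is $\tcaN = \spa(\mathbf{1}_3)$, whose nonzero vectors are $(1,1,1)^\top$ and $(2,2,2)^\top$; both manifestly have no zero entries, which settles the first claim. For the second claim, any vector with at least two zero entries lies in the span of a single standard basis vector $\bfe_i$; requiring such a vector to lie in $\tcaN^\perp = \mathbf{1}_3^\perp$ forces its nonzero entry to vanish, so only the zero vector has two or more zero entries.

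When $d = 1\mmod 3$, the nontrivial defect subspaces are $\tcaN_j = \spa(\bfg_j)$ for $j=0,1$, with $\bfg_0 = (1,\xi,\xi^2)^\top$ and $\bfg_1 = (1,\xi^2,\xi)^\top$, where $\xi\in\bbF_d^\times$ has order $3$. Since $\xi,\xi^2\in\bbF_d^\times$, every nonzero scalar multiple of $\bfg_j$ has all entries nonzero, proving claim one. For claim two, \eref{eq:NbotN} gives $\tcaN_j^\perp = \spa(\{\bfg_j,\mathbf{1}_3\})$, so a generic element is $\bfv = a\bfg_j + b\mathbf{1}_3$ with $a,b\in\bbF_d$. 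If two coordinates of $\bfv$ vanish, then two of the scalars in the triple $(a+b,\,a\xi^{\epsilon_1}+b,\,a\xi^{\epsilon_2}+b)$ are zero (with $\epsilon_1,\epsilon_2\in\{1,2\}$ distinct); subtracting any two such vanishing equations yields $a(\xi^{s}-\xi^{t})=0$ for some distinct $s,t\in\{0,1,2\}$, and since $\xi$ has order $3$ the factor $\xi^s-\xi^t$ is nonzero, forcing $a=0$ and then $b=0$. Hence $\bfv = 0$, establishing the second claim.

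There is no real obstacle here; the statement is essentially a computational consequence of the explicit descriptions of $\caN$ and $\caN^\perp$ furnished by \lref{lem:defect} and \eref{eq:NbotN}. The only point requiring the primality of $d$ (beyond what \lref{lem:defect} already uses) is the invertibility of $\xi^s - \xi^t$ in the case $d = 1\mmod 3$, which follows from $\xi$ being a primitive third root of unity in the field $\bbF_d$.
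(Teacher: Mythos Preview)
Your proof is correct and follows essentially the same route as the paper, which simply records the lemma as a direct corollary of \lref{lem:defect} and \eref{eq:NbotN}; you have spelled out the case analysis and the short computations that make this corollary explicit.
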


The following lemma clarifies the number of zero entries of a nonzero vector in a stochastic Lagrangian subspace in $\Sigma(d)$. 
\begin{lemma}\label{lem:TzeroEntry}
	Suppose $d$ is an odd prime, $\caT\in \scrT_\ns$, and $(\bfx;\bfy)$ is a nonzero vector in $\caT$. Then  	
	$(\bfx;\bfy)$ has at most three zero entries. If the upper bound is saturated, then $\caT\in \scrT_\defe$; in addition, either $\bfx$ or $\bfy$ has no zero entry. Furthermore,  
	any nonzero vector in $\caT_\Delta$ has at most two zero entries. 
\end{lemma}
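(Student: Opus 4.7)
The plan is to reduce to the parametrization introduced in Section~\ref{sec:SLSconstruct2} and then perform an elementary case analysis. By \lref{lem:tauTv} and \eref{eq:TvSigma} every $\caT\in\scrT_\ns$ equals $\caT_\bfv$ or $\tau_{12}\caT_\bfv$ for some $\bfv=(v_1,v_2,v_3)^\top\in\mathbf{1}_3^\perp$ with $v_1v_2v_3\neq 0$ (by \lref{lem:Tv}). Since $\tau_{12}$ acts on the first three coordinates by transposition and hence preserves zero counts, the main bound on general nonzero vectors reduces to the case $\caT=\caT_\bfv$, whose elements take the form $(\bfx+a\bfv;\bfx-a\bfv)$ with $\bfx\in\bfv^\perp$ and $a\in\bbF_d$.

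The key position-by-position observation I will use is that $x_i+av_i$ and $x_i-av_i$ can both vanish only if $av_i=0$; since $v_i\neq 0$ this forces $a=0$. Consequently, whenever $a\neq 0$ each of the three positions contributes at most one zero entry to the 6-tuple, giving the upper bound of three zeros. The case $a=0$ produces an element $(\bfx;\bfx)$ with $\bfx\in\bfv^\perp$, and this also handles the $\caT_\Delta$ claim for $\caT\in\scrT_\ns\cap\scrT_1$ once I observe that $\bfx\in\bfv^\perp$ together with all $v_i\neq 0$ prohibits $\bfx$ from having two zero entries (otherwise $\bfx$ would have its unique nonzero entry at some position $j$, giving $\bfx\cdot\bfv=x_jv_j\neq 0$). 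The remaining case $\caT\in\scrT_\ns\cap\scrT_0$ for $\caT_\Delta$ is immediate from \lref{lem:TDelta}, which gives $\caT_\Delta=\spa(\mathbf{1}_6)$, so every nonzero element is $c\mathbf{1}_6$ and has no zero entries.

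For the saturation case ($a\neq 0$ with exactly three zeros), I will let $S\subseteq\{1,2,3\}$ record the positions where $x_i+av_i=0$; by the previous observation the positions outside $S$ are precisely where $x_i-av_i=0$, so $x_i=-av_i$ for $i\in S$ and $x_i=av_i$ for $i\notin S$. The orthogonality $\bfx\cdot\bfv=0$ then reduces to $\sum_{i\in S}v_i^2=\sum_{i\notin S}v_i^2$. The main obstacle is to rule out $|S|\in\{1,2\}$: for $|S|=1$ with $S=\{j\}$ the identity reads $v_j^2=v_i^2+v_k^2$ for the remaining indices; substituting $v_j=-v_i-v_k$ from $\bfv\cdot\mathbf{1}_3=0$ collapses this to $2v_iv_k=0$, contradicting $v_iv_k\neq 0$, and $|S|=2$ is symmetric. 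Only $S=\emptyset$ and $S=\{1,2,3\}$ remain, both forcing $\bfx=\pm a\bfv$. Combined with $\bfx\in\bfv^\perp$ and $a\neq 0$ this yields $\bfv\cdot\bfv=0$, equivalently $\caT\in\scrT_\defe$ by \eref{eq:TvdefEqui}, and the element becomes $(2a\bfv;\mathbf{0}_3)$ or $(\mathbf{0}_3;-2a\bfv)$; since $\bfv$ has no zero entries, the nonzero half has no zero entries while the opposite half is $\mathbf{0}_3$, as claimed.
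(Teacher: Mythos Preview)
Your proof is correct and takes a genuinely different route from the paper. The paper splits into two structural cases: for $\caT\in\scrT_\iso\setminus\scrT_\sym$ it writes $\caT=\caT_O$ and invokes \pref{pro:O0entry} (that $O\in O_3(d)\setminus S_3$ has no zero entries) to bound the zeros of $\bfx=O\bfy$; for $\caT\in\scrT_\defe$ it uses \lref{lem:defect0} on the zero patterns in $\caN$ and $\caN^\perp$. Your argument instead works uniformly through the characteristic-vector parametrization $\caT_\bfv$ from \sref{sec:SLSconstruct2} and a direct position-by-position analysis, so it never appeals to \pref{pro:O0entry} or \lref{lem:defect0}. The trade-off is that the paper's approach reuses already-established structural facts with almost no computation, whereas yours is more self-contained and gives a concrete description of the saturating vectors as $(\pm 2a\bfv;\mathbf{0}_3)$ or $(\mathbf{0}_3;\mp 2a\bfv)$. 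One small point worth making explicit: when you transfer the saturation conclusion $\caT_\bfv\in\scrT_\defe$ back through $\tau_{12}$ to get $\tau_{12}\caT_\bfv\in\scrT_\defe$, you are using that $\scrT_\defe$ is stable under the $S_3$-action (immediate from the definition $\scrT_\defe=\Sigma(d)\setminus\scrT_\iso$, or from \lref{lem:SemigroupT}).
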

Thanks to this lemma, for any  $\caT\in \scrT_\ns$, the projection of $\caT$ onto any three-dimensional coordinate subspace is injective and surjective unless $\caT\in \scrT_\defe$ and the coordinate subspace is associated with the first three coordinates or last three coordinates. This fact will be  crucial to proving \lref{lem:RTO} in \sref{sec:ShadowMapR}, which is a cornerstone for understanding third moments of Clifford orbits.

\begin{lemma}\label{lem:muj} 
	The functions  $\mu_j=\mu_j(d)$ for $j=0, 1,2$ defined in \eref{eq:muj} satisfy the following relations,	
	\begin{gather}\label{eq:mujLUB}
	-\frac{4\sqrt{d}}{3}\leq \mu_j-\frac{2(d-2)}{3}=	\frac{4\Re\bigl[\eta_3^2(\nu^j)G^3(\eta_3)\bigr] }{3d}
	\leq \frac{4\sqrt{d}}{3},\quad 
	\mu_j\leq \begin{cases}
	d-1 & d=7,13,\\
	d-3 & d=19, \\
	d-7 & d\geq 31, 
	\end{cases}\\	
	\mu_0+\mu_1+\mu_2=2(d-2).
	\label{eq:mu012Sum}
	\end{gather}	
	If $0\leq i<j\leq 2$ are integers and $\alpha, \beta$ are real numbers, then 
	\begin{gather}
	\left|\alpha \mu_i +\beta \mu_j-\frac{2(d-2)(\alpha+\beta)}{3}\right|\leq \frac{4\sqrt{d(\alpha^2+\beta^2-\alpha \beta)}}{3}.
	\label{eq:muijGenSum}
	\end{gather} 				
\end{lemma}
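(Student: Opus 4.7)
The plan is to reduce $\mu_j$ to a counting problem via the alternative parameterization of $\scrT_\ns$ from \eref{eq:TvNS}, then evaluate the resulting character sum using a Jacobi sum identity. By \eref{eq:TvNS}, each element of $\scrT_\ns$ is either $\caT_{\bfv_y}$ or $\tau_{12}\caT_{\bfv_y}$ for some $y\in\{1,\dots,d-2\}$, where $\bfv_y=(1,y,-1-y)^\top$; both share the same characteristic vector and therefore the same cubic character by \lref{lem:Tindex}. A direct computation gives $v_{y,1}^3+v_{y,2}^3+v_{y,3}^3=1+y^3-(1+y)^3=-3y(y+1)$, so that
\[
\mu_j=2\,\bigl|\{y\in\{1,\dots,d-2\}:\eta_3(-y(y+1))=\eta_3(\nu^j)\}\bigr|.
\]

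Next I would apply the indicator-via-character identity $\mathbf{1}[\eta_3(a)=\eta_3(b)]=\tfrac13\sum_{k=0}^{2}\eta_3^k(a)\eta_3^{-k}(b)$ (valid for nonzero $a,b\in\bbF_d$) to obtain
\[
\mu_j=\frac{2(d-2)}{3}+\frac{2}{3}\sum_{k=1}^{2}\eta_3^{-k}(\nu^j)\,S_k,\qquad S_k:=\sum_{y\in\bbF_d}\eta_3^k(-y(y+1)),
\]
where the boundary terms $y=0,-1$ contribute zero. The substitution $t=-y$ converts $S_k$ into the Jacobi sum $J(\eta_3^k,\eta_3^k)=G(\eta_3^k)^2/G(\eta_3^{2k})$; using $\eta_3(-1)=1$ and $G(\chi)\overline{G(\chi)}=d$ then yields $S_1=G(\eta_3)^3/d$ and $S_2=\overline{S_1}$. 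Substituting back (and recognizing $\eta_3^{-1}=\eta_3^2$) delivers \eref{eq:mujFormula}. From this formula, the bound $|\mu_j-2(d-2)/3|\leq 4\sqrt d/3$ in \eref{eq:mujLUB} is immediate from $|G(\eta_3)|=\sqrt d$; the identity $\mu_0+\mu_1+\mu_2=2(d-2)$ follows either from $|\scrT_\ns|=2(d-2)$ or by summing \eref{eq:mujFormula} together with $\sum_{j=0}^{2}\eta_3^2(\nu^j)=0$; and \eref{eq:muijGenSum} follows by setting $z_\ell:=\eta_3^2(\nu^\ell)$, noting that for $i\neq j$ the distinct cube roots of unity $z_i,z_j$ satisfy $|\alpha z_i+\beta z_j|^2=\alpha^2+\beta^2-\alpha\beta$, and then bounding $|(\alpha z_i+\beta z_j)G(\eta_3)^3|$.

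The main obstacle will be the refined upper bounds $\mu_j\leq d-1,\;d-3,\;d-7$ for $d\in\{7,13\}$, $d=19$, and $d\geq 31$, respectively. The uniform estimate $\mu_j\leq 2(d-2)/3+4\sqrt d/3$ combined with integrality of $\mu_j$ is not by itself sharp enough (for $d=31$ it yields only $\mu_j\leq 26$, not $\mu_j\leq 24$). To close this gap I would exploit the explicit integer $\tg(d)$ listed in \tref{tab:tgd} of \aref{app:GaussJacobi}: writing $G(\eta_3)^3=d^{3/2}e^{\rmi\theta}$, the derived formula gives $\mu_j-2(d-2)/3=(4\sqrt d/3)\cos(\theta-2\pi j/3)$, and the explicit relation between $\Re[G(\eta_3)^3]$ and $\tg(d)$, combined with the bound $\tg(d)^2\leq 4d-27$ from \eref{eq:tgdsq}, restricts the admissible phases $\theta$ enough that a short case analysis for each listed $d$—together with integrality of $\mu_j$—yields the sharpened bounds.
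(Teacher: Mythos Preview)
Your derivation of the formula $\mu_j=\tfrac{2(d-2)}{3}+\tfrac{4}{3d}\Re[\eta_3^2(\nu^j)G^3(\eta_3)]$ via the indicator-by-characters trick and the Jacobi sum identity is correct and matches the paper's approach: the paper rewrites the same count as $\tfrac{2N(y^2+y=\nu^j x^3)-4}{3}$ and invokes \lref{lem:Cubic213SolNum}, which is just another packaging of the same character-sum computation. Your arguments for $\mu_0+\mu_1+\mu_2=2(d-2)$ and for \eref{eq:muijGenSum} (via $|\alpha z_i+\beta z_j|^2=\alpha^2+\beta^2-\alpha\beta$ for distinct cube roots of unity) are also correct and essentially identical to what the paper does.

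Where you diverge is in the refined bounds $\mu_j\leq d-1,\,d-3,\,d-7$. You correctly observe that the uniform estimate $\mu_j\leq\tfrac{2(d-2)}{3}+\tfrac{4\sqrt d}{3}$ fails for a few small primes (e.g.\ $d=31$), but you then overcomplicate the repair by proposing a phase analysis via $\tg(d)$ and \eref{eq:tgdsq}. The paper's fix is far simpler: one checks directly that $\tfrac{2(d-2)}{3}+\tfrac{4\sqrt d}{3}\leq d-7$ holds as soon as $d\geq 44$ (equivalently $4\sqrt d\leq d-17$), so the uniform bound already yields $\mu_j\leq d-7$ for all such $d$. This leaves only the six primes $d\in\{7,13,19,31,37,43\}$ with $d\equiv 1\bmod 3$, and for these one simply computes $\mu_0,\mu_1,\mu_2$ explicitly (or reads off $G^3(\eta_3)$) and verifies the stated bounds by hand. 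No phase-restriction argument is needed.
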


\begin{proof}[Proof of \lref{lem:TDelta3}]
	By assumption two of the three stochastic Lagrangian subspaces $\caT_1, \caT_2, \caT_3$ either belong to $\scrT_0$ simultaneously or belong to $\scrT_1$ simultaneously. Therefore, $\caT_1\cap \caT_2 \cap \caT_3=\spa(\mathbf{1}_6)$ by \lref{lem:TDelta2}.

	If $\caT_1, \caT_2\in \scrT_0\setminus \{\Delta\}$, then $(\caT_1)_\Delta=(\caT_2)_\Delta=\spa(\mathbf{1}_6)$  by \lref{lem:TDelta}. Conversely, if 
	$(\caT_1)_\Delta=(\caT_2)_\Delta$, then $\caT_1, \caT_2, \Delta$ are three distinct stochastic Lagrangian subspaces  and $(\caT_1)_\Delta=(\caT_2)_\Delta=\caT_1\cap \caT_2\cap \Delta=\spa(\mathbf{1}_6)$.
	Therefore, $\caT_1, \caT_2\in \scrT_0\setminus \{\Delta\}$ thanks to \lref{lem:TDelta} again. 
\end{proof}

\begin{proof}[Proof of \lref{lem:TzeroEntry}]
	First, suppose $\caT\in \scrT_\iso\setminus \scrT_\sym$, then $\caT=\caT_O$ for some $O\in O_3(d)\setminus S_3$,  and we have
	$\bfx=O\bfy$ and $\bfy=O^\top \bfx$. According to \pref{pro:O0entry},  $O$ has no entry equal to 0, so $(\bfx;\bfy)$ has at most two entries equal to zero given that $(\bfx;\bfy)$ is a nonzero vector by assumption.

	Next, suppose $\caT\in \Sigma(d)\setminus \scrT_\iso=\scrT_\defe$;  then $\bfx\in \caT_{\LD}^\perp$ and $\bfy\in \caT_{\RD}^\perp$. 
	According to \lref{lem:defect0},  any nonzero vector in 	$\caT_{\LD}$ or $\caT_{\RD}$ has no entry equal to 0, while any nonzero vector in 
	$\caT_{\LD}^\perp$ or $\caT_{\RD}^\perp$ has at most one entry equal to 0. If $\bfx$ has at least two entries equal to 0, then $\bfx=0$, and $\bfy\in \caT_\RD$ has no entry equal to 0. Similarly, if $\bfy$ has at least two entries equal to 0, then $\bfy=0$, and $\bfx\in \caT_\LD$ has no entry equal to 0. Therefore, $(\bfx;\bfy)$ has at most three entries equal to 0. If the upper bound is saturated, then either $\bfx$ or $\bfy$ has no entry equal to 0. 
	
	As a simple corollary of the above discussion, any nonzero vector in $\caT_\Delta$ has at most two zero entries, given that the number of zero entries is necessarily even. 
\end{proof}

\begin{proof}[Proof of \lref{lem:muj}]
	By virtue of \eref{eq:TvNS} we can deduce that
	\begin{align}
	\mu_j&=\left|\left\{\caT\in\scrT_\ns \,|\, \eta_3(\caT)= \eta_3(3\nu^j)\right\} \right|=2\left|\left\{\caT\in \scrT_1\cap\scrT_\ns \,|\, \eta_3(\caT)= \eta_3(3\nu^j)\right\} \right|\nonumber\\
	&=2\left|\left\{y\in \bbF_d\,|\, y=0, y\neq d-1, \eta_3(-3(y^2+y))
	= \eta_3(3\nu^j)\right\}\right|\nonumber\\
	&=2\left|\left\{y\in \bbF_d\,|\, y=0, y\neq d-1, \eta_3(y^2+y)=\eta_3(\nu^j)\right\}\right|=\frac{2}{3}\left|\left\{x,y\in \bbF_d\,|\, y=0, y\neq d-1, y^2+y=\nu^j x^3 \right\}\right|\nonumber\\
	&=\frac{2N(y^2+y=\nu^j x^3)-4}{3}=\frac{2d(d-2)+4\Re\bigl[\eta_3^2(\nu^j)G^3(\eta_3)\bigr] }{3d},
	\end{align} 
	which confirms the equality in \eref{eq:mujLUB}. 	Here the third equality holds because 
	$\scrT_1\cap\scrT_\ns=\{\caT_{\bfv_y}\}_{y=1}^{d-2}$, and the  last equality follows from \lref{lem:Cubic213SolNum} in \aref{sec:CubicEqFF}. The first two inequalities in  \eref{eq:mujLUB} hold because $|\eta_3^2(\nu^j)|=1$ and $|G^3(\eta_3)|=d\sqrt{d}$ (see \aref{app:GaussJacobi}). If $d\geq 44$, then the third inequality in  \eref{eq:mujLUB} follows from the second inequality; if $d< 44$, then this inequality holds by direct calculation.
	
	\Eqsref{eq:mu012Sum}{eq:muijGenSum} are simple corollaries of \eref{eq:mujLUB}.  
\end{proof}

\section{Proofs of results on the commutant of the third Clifford tensor power}

In this appendix we prove \lsref{lem:SemigroupR}-\ref{lem:RTOdiag} and \thref{thm:spanR}, which are tied to 
the commutant of the third Clifford tensor power.

\subsection{\label{app:rRTbasicProofs}Proofs of \lsref{lem:SemigroupR}-\ref{lem:RTpt}}

\begin{proof}[Proof of \lref{lem:SemigroupR}] 
	When $d=3$, the first equality in \eref{eq:RTiTjd3}  can be verified directly based on \eref{eq:SLStwo}; see also \eref{eq:rT01}. 
	The second equality in \eref{eq:RTiTjd3} follows from \eref{eq:OTTO3}
	and the fact that $R(\tcaT_0) = 3^nP_{\mathrm{CSS}(\tcaN)}$, given that any element in $O_3^\odd(d)$ has order 2. 
	
	When $d = 1 \mmod  3$, each stochastic Lagrangian subspace is completely determined by its left and right defect subspaces.
	The first equality in \eref{eq:RTiTjd}  follows from \eref{eq:tTijDefect} and the following facts
	\begin{align}
	R(\caT)=r(\caT)^{\otimes n},\quad r(\caT)=\sum_{(\bfx;\bfy)\in \caT} |\bfx\>\<\bfy|,\quad  r(\caT)^\dag=\sum_{(\bfx;\bfy)\in \caT} |\bfy\>\<\bfx|.
	\end{align}
	Alternatively, it can be verified directly based on \eref{eq:SLSfour}. 
	The second equality in 
	\eref{eq:RTiTjd} follows from \lref{lem:SemigroupT} and \eref{eq:tTijDefect} given that  $\tcaN_0,\tcaN_1$ are the only two nontrivial defect subspaces in $\bbF_d^3$, both of which are one dimensional.  
\end{proof}

\begin{proof}[Proof of \lref{lem:TDeltaSum}]
	The last equality in \eref{eq:TDeltaSum} can be proved as follows,
	\begin{align}
	\sum_{\caT\in \scrT_0} r(\caT_\Delta)=r(\Delta)
	+\sum_{\caT\in \scrT_0\setminus\{\Delta\}}
	r(\caT_\Delta)=\bbI+d\sum_{x\in \bbF_d} |xxx\>\<xxx|,
	\end{align}
	where the second equality holds because $r(\Delta)=\bbI$ and $\caT_\Delta=\spa(\mathbf{1}_6)$ for $\caT\in \scrT_0\setminus\{\Delta\}$ by \lref{lem:TDelta}. 
	
	To prove the second equality in \eref{eq:TDeltaSum}, note that $\mathbf{1}_3\in \caT_\tDelta$ for all $\caT\in \Sigma(d)$.  If in addition $\caT\in \scrT_1$, then $\dim \caT_\Delta=\dim \caT_\tDelta=2$ according to \eqsref{eq:TvDeltaDefect}{eq:TvSigma}  (cf. \lref{lem:TDelta}). Let $\scrV$ be the set of two dimensional subspaces of $\bbF_d^3$ that contain the vector $\mathbf{1}_3$; then $\caT_\tDelta\in \scrV$. In addition,
	\begin{align}
	|\scrV|=d+1=|\scrT_1|;  \quad  \cup_{V\in \scrV}V=\bbF_d^3;   \quad 	V_1 \cap V_2 =\spa(\mathbf{1}_3)\quad \forall V_1, V_2\in \scrV,\; V_1\neq V_2. 
	\end{align}
	Furthermore, any vector in $\bbF_d^3\setminus \spa(\mathbf{1}_3)$ belongs to a unique subspace in $\scrV$. 
	Now consider the map $\caT\mapsto \caT_\tDelta$ from $\scrT_1$ to $\scrV$. According to \eqsref{eq:TvDeltaDefect}{eq:TvSigma} (see also 
	\lref{lem:TDelta3}), this map is injective and is thus also surjective given that $|\scrV|=|\scrT_1|$. In conjunction with the definition in \eref{eq:TDeltaDef} we can deduce that
	\begin{align}
	\sum_{\caT\in \scrT_1} r(\caT_\Delta)	=\sum_{\caT\in \scrT_1} \sum_{\bfx\in \caT_\tDelta} |\bfx\>\<\bfx|=
	\sum_{V\in \scrV}\sum_{\bfx\in V} |\bfx\>\<\bfx|=\bbI+d\sum_{x\in \bbF_d} |xxx\>\<xxx|=\sum_{\caT\in \scrT_0} r(\caT_\Delta),
	\end{align}
	which confirms the second equality in \eref{eq:TDeltaSum}.

	Finally, the first equality in \eref{eq:TDeltaSum} is a simple corollary of the second equality because  $\Sigma(d)=\scrT_0\sqcup\scrT_1$ and  $\sum_{\caT\in \Sigma(d)} r(\caT_\Delta)=\sum_{\caT\in \scrT_0} r(\caT_\Delta)+\sum_{\caT\in \scrT_1} r(\caT_\Delta)$, which completes the proof of \lref{lem:TDeltaSum}. 
\end{proof}

\begin{proof}[Proof of \lref{lem:RTpt}]
	To simplify the notation, here we focus on the case $n=1$, which means $R(\caT)=r(\caT)$, but the basic idea admits straightforward generalization. 	
	First, suppose $\caT\in \scrT_1\cap \scrT_\ns$, then either $\caT=\caT_O$ with $O\in O_3^\odd\setminus S_3$ or $\caT\in \scrT_1\cap\scrT_\defe$. In the first case we have
	\begin{align}
	\tr_A R(\caT)=\sum_{\bfx\in \bbF_d^3} \tr_A (|O\bfx\>\<\bfx|)=\sum_{\bfx\in \bbF_d^3\,|\, [(O-\mathds{1})\bfx]_1=0} \tr_A (|O\bfx\>\<\bfx|)=\sum_{\bfx\in \ker(O-\mathds{1})} \tr_A (|\bfx\>\<\bfx|)=\bbI,
	\end{align}
	which confirms \eref{eq:RTpt}. 
	Here the third equality holds because the three rows of $O-\mathds{1}$ are proportional to each other by \lref{lem:kerOI2}; the last equality holds because $\ker(O-\mathds{1})$ is two dimensional and any nonzero vector has at most one entry equal to zero by \lref{lem:kerOI},  which means the projection of $\ker(O-\mathds{1})$ onto any two-dimensional coordinate subspace is both 
	injective and surjective.

	If $\caT\in \scrT_1\cap\scrT_\defe$, then $\caT$ is of CSS type and has the form in \eref{eq:Tcss}. Therefore, 
	\begin{align}
	\tr_A R(\caT)&=\sum_{[\bfx] \in \caN^\perp/\caN, \ \bfz, \mathbf{w} \in \caN} \tr_A(|\bfx+\bfz\>\<\bfx+\mathbf{w}|)
	=\sum_{[\bfx] \in \caN^\perp/\caN, \ \bfz\in \caN} \tr_A(|\bfx+\bfz\>\<\bfx+\bfz|)\nonumber\\
	&=\sum_{\bfy \in \caN^\perp} \tr_A(|\bfy\>\<\bfy|)
	=\bbI,
	\end{align}
	which confirms \eref{eq:RTpt}. 
	Here the second equality holds because any nonzero vector in $\caN$ has no entry equal to 0 by \lref{lem:defect0}; the last equality holds because $\caN^\perp$ is two dimensional and any nonzero vector has at most one entry equal to zero by  \lref{lem:defect0} again. Alternatively, the  result $\tr_A R(\caT)=\bbI$ can be verified directly by virtue of \lref{lem:defectT}.

	Next, suppose $\caT\in \scrT_0\setminus \{\Delta\}$. If $\caT=\caT_O$ with $O=\zeta$ or $O=\zeta^2$, then it is straightforward to verify that $\tr_A R(\caT)=
	\mathrm{SWAP}$, which confirms \eref{eq:RTpt}. 
	Otherwise,  $\caT\in \scrT_0\setminus \scrT_\sym$  can be expressed as $\caT=\tau_{23} \caT'$, where $\caT'\in \scrT_1\cap \scrT_\ns$. According to the above analysis, we have  $\tr_A R(\caT')=\bbI$, which means $\tr_A R(\caT)=
	\mathrm{SWAP}$ and confirms \eref{eq:RTpt}.
	
	When $\caT\in \scrT_\sym$, \eref{eq:RTpt2} can be verified by direct calculation; otherwise, \eref{eq:RTpt2} follows from \eref{eq:RTpt}. 
	
	Finally, when the partial trace is taken with respect to 
	other parties,  the same conclusions can be proved by virtue  of a similar reasoning as presented above. This observation completes the proof of \lref{lem:RTpt}.	
\end{proof}

\subsection{\label{app:SpanProof}Proof of  \thref{thm:spanR}}

\begin{proof}[Proof of \thref{thm:spanR}]
	The dimension of the commutant of $\Cl(n,d)^{\totimes 3}$ is equal to the third frame potential of the Clifford group, which is determined in \rcite{Zhu17MC} as reproduced in \eref{eq:CliffordFP3} in the main text.	
	Let $\caT_i$ for $i=1,2,\ldots, 2(d+1)$ be 
	the $2(d+1)$ elements in $\Sigma(d)$ and let 	
	$\Gamma$ be the Gram matrix of the set $\Sigma(d)$, whose entries read
	\begin{equation}\label{eq:Gammaij}
	\Gamma_{ij}: = \tr[R(\caT_i)^\dag R(\caT_j)]=\tr[R(\caT_i)^\top R(\caT_j)],\quad \caT_i, \caT_j\in \Sigma(d). 
	\end{equation}
	Then we have
	\begin{align}
	\dim\spa\bigl(\{R(\caT)\}_{\caT \in \Sigma(d)}\bigr)=\rk \Gamma. 
	\end{align}
	Similarly, the dimension of the span of any subset in $\{R(\caT)\}_{\caT \in \Sigma(d)}$ is equal to the rank of the corresponding principal submatrix of $\Gamma$. So \thref{thm:spanR} follows from \eref{eq:CliffordFP3}  and  \lref{lem:GammaRank} below.
\end{proof}

\begin{lemma}\label{lem:GammaRank}
	Suppose $d$ is an odd prime and $\Gamma$ is the Gram matrix of 
	$\{R(\caT)\}_{\caT \in \Sigma(d)}$ as defined in \eref{eq:Gammaij}. Then 
	\begin{equation}\label{eq:GammaRank}
	\rk \Gamma = \begin{cases}
	2d+1, \quad n=1,\\
	2d+2, \quad n \geq 2.
	\end{cases}
	\end{equation}
	In addition, any $(2d+1)\times (2d+1)$ principal submatrix of $\Gamma$ has rank $2d+1$. 
\end{lemma}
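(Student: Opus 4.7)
The plan is to exploit the block structure of $\Gamma$ inherited from \Pref{pro:RTT1T2tr}. Order the stochastic Lagrangian subspaces so that the first $d+1$ lie in $\scrT_0$ and the last $d+1$ lie in $\scrT_1$. By \Pref{pro:RTT1T2tr}, the $(i,j)$ entry equals $D^3$ if $\caT_i=\caT_j$, equals $D$ if $\caT_i\neq\caT_j$ but $\caT_i,\caT_j$ lie in the same $\scrT_s$, and equals $D^2$ if they lie in different $\scrT_s$. Thus, with $I$ and $J$ the $(d+1)\times(d+1)$ identity and all-ones matrix respectively,
\begin{equation*}
\Gamma=\begin{pmatrix} A & B \\ B & A \end{pmatrix},\qquad A=D(D-1)(D+1) I + D J,\qquad B=D^2 J.
\end{equation*}

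The next step is to diagonalize $\Gamma$ by the standard $2\times 2$ block trick: its spectrum is the union (with multiplicity) of the spectra of $A+B$ and $A-B$. Since the spectrum of $J$ consists of the eigenvalue $d+1$ (on $\spa(\mathbf{1}_{d+1})$) and $0$ (with multiplicity $d$), I will compute
\begin{align*}
A+B &= D(D-1)(D+1) I + D(D+1) J,\\
A-B &= D(D-1)(D+1) I - D(D-1) J = D(D-1)\bigl[(D+1)I - J\bigr],
\end{align*}
whose spectra are $\{D(D+1)(D+d)\}\cup\{D(D-1)(D+1)\}^{\times d}$ and $\{D(D-1)(D-d)\}\cup\{D(D-1)(D+1)\}^{\times d}$. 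Since $D=d^n\geq d$, every eigenvalue is nonnegative, and the only way an eigenvalue can vanish is $D=d$, i.e.\ $n=1$, in which case $A-B$ acquires a one-dimensional kernel spanned by $\mathbf{1}_{d+1}$. This yields $\rk\Gamma=2(d+1)$ for $n\geq 2$ and $\rk\Gamma=2d+1$ for $n=1$, establishing \eqref{eq:GammaRank}.

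For the final claim, observe that in the case $n=1$ the one-dimensional kernel of $\Gamma$ is spanned by $v=(\mathbf{1}_{d+1};-\mathbf{1}_{d+1})$ (the antisymmetric lift of the null vector of $A-B$). Because $\Gamma$ is the Gram matrix of $\{R(\caT)\}_{\caT\in\Sigma(d)}$, this yields the unique (up to scalar) linear relation $R(\scrT_0)=R(\scrT_1)$ among the $R(\caT)$. Every coefficient in this relation is $\pm 1$, hence nonzero; consequently, after deleting any single operator $R(\caT_{i_0})$ there is no nontrivial linear relation left among the remaining $2d+1$ operators, so they are linearly independent. Equivalently, every $(2d+1)\times(2d+1)$ principal submatrix of $\Gamma$ has rank $2d+1$.

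The only mildly delicate point is the last paragraph, where one must verify that the kernel vector has \emph{no} zero entries; this is immediate from the block form once one checks that the $A-B$ block's null vector is $\mathbf{1}_{d+1}$ and that the full-space lift is the antisymmetric combination. No further computation is needed, and none of the steps require going beyond \Pref{pro:RTT1T2tr}.
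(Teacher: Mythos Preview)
Your proof is correct and follows essentially the same route as the paper: both order $\Sigma(d)$ by $\scrT_0,\scrT_1$, read off the block form of $\Gamma$ from \pref{pro:RTT1T2tr}, and compute the spectrum explicitly to obtain the three eigenvalues $D(D-1)(D+1)$ (multiplicity $2d$), $D(D+1)(D+d)$, and $D(D-1)(D-d)$. Your use of the $A\pm B$ block trick is just a tidy way to organize the eigenvector computation that the paper does by inspection.

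The one genuine difference is in the principal-submatrix claim. The paper argues via Cauchy interlacing (or direct computation) for one specific $(2d+1)\times(2d+1)$ principal submatrix and then observes that all such principal submatrices are permutation-similar. Your argument---that the unique kernel vector $(\mathbf{1}_{d+1};-\mathbf{1}_{d+1})$ has no zero entry, so deleting any single $R(\caT)$ destroys the only linear relation---is cleaner and more direct. One small omission: you only treat $n=1$ explicitly, whereas the claim is stated for all $n$. For $n\geq 2$ the Gram matrix is positive definite (all eigenvalues strictly positive), so every principal submatrix is positive definite as well and the claim is immediate; you should add a sentence to that effect.
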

\begin{proof}[Proof of \lref{lem:GammaRank}]
	Without loss of generality, we can assume that $\caT_i\in \scrT_0$ for $i=1,2,\ldots, d+1$ and  $\caT_i\in \scrT_1$ for $i=d+2,d+3,\ldots, 2d+2$. Then by virtue of \pref{pro:RTT1T2tr}  we can deduce that
	\begin{align}
	\Gamma_{ij}=\begin{cases}
	D^3 & i=j; \\
	D &  i\neq j, i, j\leq d+1 \; \mbox{or}\; i\neq j, i,j> d+1; \\
	D^2 & i\leq d+1, j>d+1\;\mbox{or}\; i>d+1, j\leq d+1. 
	\end{cases}
	\end{align} 
	In other words, $\Gamma$ has the form
	\begin{equation}
	\Gamma = \left(\begin{array}{cccc|cccc}
	D^3 & D & \cdots & D  & D^2 & D^2 & \cdots & D^2 \\ 
	D & D^3 & \cdots & D  & D^2 & D^2 & \cdots & D^2  \\ 
	\vdots & \vdots & \ddots & \vdots  & \vdots & \vdots & \ddots & \vdots  \\ 
	D & D & \cdots & D^3 & D^2 & D^2 & \cdots & D^2 \\ 
	& & & & & & & \\
	\hline
	& & & & & & & \\
	D^2 & D^2 & \cdots & D^2 & D^3 & D & \cdots & D  \\ 
	D^2 & D^2 & \cdots & D^2 & D & D^3 & \cdots & D    \\ 
	\vdots & \vdots & \ddots & \vdots  & \vdots & \vdots & \ddots & \vdots  \\ 
	D^2 & D^2 & \cdots & D^2 & D & D & \cdots & D^3  \\ 
	\end{array}\right). 
	\end{equation} 
	It is easy to verify that
	$\Gamma$ has three distinct eigenvalues, $D^3-D$, $D(D+1)(D+d)$, and $D(D-1)(D-d)$ with multiplicities $2d$, $1$, and $1$, respectively. The first eigenspace is spanned by the $2d$ vectors
	\begin{align}
	e_i-e_{i+1}, \quad i=1,2,\ldots, 2d+1, \; i\neq d+1, 
	\end{align}
	where $e_1, e_2, \ldots, e_{2d+2}$ form the standard orthonormal basis of $\bbR^{2d+2}$. 
	The second and third eigenspaces are spanned by the two vectors $\sum_{i=1}^{2d+2} e_i$ and $\sum_{i=1}^{d+1} e_i-\sum_{i=d+2}^{2d+2} e_i$, respectively. These observations imply \eref{eq:GammaRank}, given that $\rk \Gamma$ is equal to the number of nonzero eigenvalues of $\Gamma$.

	Next, let $\Gamma'$ be the   $(2d+1)\times (2d+1)$ principal submatrix of $\Gamma$  obtained by deleting the last row  and last column. Then direct calculation shows that all eigenvalues of  $\Gamma'$ are positive, so $\Gamma'$ is nonsingular and has rank $2d+1$. This conclusion also follows from the above analysis and Cauchy interlacing theorem (see Theorem 4.3.17 in \rcite{HornJ13book}). Since all  $(2d+1)\times (2d+1)$ principal submatrices of $\Gamma$ are similar to each other, it follows that all $(2d+1)\times (2d+1)$ principal submatrices of $\Gamma$ have the same rank of $2d+1$. 
\end{proof}

\subsection{\label{app:RTisodefeProof}Proof of \lref{lem:RTisodefe}}

\begin{proof}[Proof of \lref{lem:RTisodefe}]
	By definition we have
	\begin{gather}\label{eq:RisoRdef}
	R(\scrT_\sym)=\sum_{\caT\in \scrT_\sym} R(\caT)=\sum_{O\in S_3}R(O)=6P_\sym, \\
	R(\Sigma(d))=R(\scrT_\iso)+R(\scrT_\defe),
	\quad 
	R(\scrT_\iso)=\sum_{\caT\in \scrT_\iso} R(\caT)=\sum_{O\in O_3(d)} R(O),\\   
	R(\scrT_\defe)=\sum_{\caT\in \scrT_\defe} R(\caT)=\begin{cases}
	R(\tcaT_0)+R(\tcaT_1) & d=3,\\
	R(\tcaT_{00})+R(\tcaT_{01})+R(\tcaT_{10})+R(\tcaT_{11}) & d=1\mmod 3,\\
	0 & d=2\mmod 3. 
	\end{cases}
	\end{gather}	
	So  $R(\scrT_\sym)/6$ and $R(\scrT_\iso)/|O_3(d)|$ are projectors, which means  $\|R(\scrT_\sym)\|=6$ and  $\|R(\scrT_\iso)\|=|O_3(d)|$, where $|O_3(d)|$ is determined in \lref{lem:O3Dh}. In addition,  $\supp R(\scrT_\iso)\leq\supp P_\sym= \Sym_3\bigl(\caH_d^{\otimes n}\bigr)$ given that $S_3$ is a subgroup of $O_3(d)$. 
	In conjunction with \lsref{lem:SemigroupT} and \ref{lem:SemigroupR} we can deduce that $R(\scrT_\defe)^\dag =R(\scrT_\defe)$ and 
	\begin{gather}
	R(\scrT_\defe)^2=\begin{cases}
	2D R(\scrT_\defe) & d=3,\\
	(2D+2)R(\scrT_\defe) & d=1\mmod 3,
	\end{cases}\quad 
	\|R(\scrT_\defe)\|=\begin{cases}
	2D  & d=3,\\
	2D+2& d=1\mmod 3,
	\end{cases}\\
	R(\scrT_\iso)R(\scrT_\defe)=R(\scrT_\defe)R(\scrT_\iso)=|O_3(d)|R(\scrT_\defe).
	\end{gather}	
	Therefore, $R(\scrT_\defe)$ is also proportional to a projector, and its support  is contained in the support of $R(\scrT_\iso)$. As a corollary, $R(\Sigma(d))$ is a positive operator that has  the same support as $R(\scrT_\iso)$, which confirms \eref{eq:RTisodefeSupp}. In addition, $\|R(\Sigma(d))\|=\|R(\scrT_\iso)\|+\|R(\scrT_\defe)\|=|O_3(d)|+\|R(\scrT_\defe)\|$. The above analysis  confirms the results on operator norms in \tref{tab:RisoRdefe}.

	The  traces of the three operators $R(\scrT_\iso)$, $R(\scrT_\defe)$, and $R(\Sigma(d))$  presented in   \tref{tab:RisoRdefe} follow from  \lref{lem:O3Dh}, \eref{eq:Sigma33DefectNum}, and
	\pref{pro:RTT1T2tr}, given that
	\begin{gather}
	\Sigma(d)=\scrT_0\sqcup\scrT_1,\quad |\Sigma(d)|=2|\scrT_0|=2|\scrT_1|=2(d+1),\\ |\scrT_\iso|=2|\scrT_\iso\cap\scrT_0|=2|\scrT_\iso\cap\scrT_1|=|O_3(d)|,\quad  |\scrT_\defe|=2|\scrT_\defe\cap\scrT_0|=2|\scrT_\defe\cap\scrT_1|. 
	\end{gather}
	Finally, the ranks of the three operators can be calculated as follows, 
	\begin{align}
	\rk R(\Sigma(d))=\rk R(\scrT_\iso)=\frac{\tr R(\scrT_\iso)}{\|R(\scrT_\iso)\|}, \quad \rk R(\scrT_\defe)=\frac{\tr R(\scrT_\defe)}{\|R(\scrT_\defe)\|},
	\end{align}
	which completes the proof of \lref{lem:RTisodefe}. 
\end{proof}

\subsection{\label{app:RTOproof}Proofs of  \lsref{lem:RTObStabProj}-\ref{lem:RTOdiag}}

\begin{proof}[Proof of \lref{lem:RTObStabProj}]
	Since all stabilizer projectors of  rank $K$ form one orbit under the action of the Clifford group, without loss of generality we can assume that $\Ob$ has the form  $\Ob=\Ob_1 \otimes \bbI_2$, where  $\bbI_2$ is the identity operator on  $\caH_d^{\otimes n_2}$ with  $n_2=\log_d K$, and $\Ob_1$ is a rank-1 stabilizer projector on $\caH_d^{\otimes n_1}$ with  $n_1=n-n_2$.
	Let  $\bbI_1$ be the identity operator on $\caH_d^{\otimes n_1}$; then \eref{eq:RTOstabProj} can be proved as follows,
	\begin{align}
	\caR_\caT(\Ob)&=\caR_\caT(\Ob_1\otimes \bbI_2)=\caR_\caT(\Ob_1)\otimes\caR_\caT(\bbI_2)=\begin{cases}
	K^2\bbI_1\otimes \bbI_2 &\mbox{if}\;\;  \caT=\Delta,\\
	K \bbI_1\otimes \bbI_2 &\mbox{if}\;\; \caT=\caT_{\tau_{23}}, \\
	\Ob_1 \otimes \bbI_2&\mbox{if}\;\; \caT\in \scrT_0\setminus \{\Delta\},\\
	K\Ob_1\otimes \bbI_2 &	\mbox{if}\;\; \caT\in \scrT_1\setminus\{\caT_{\tau_{23}}\}.
	\end{cases} 
	\end{align} 
	Here the third equality follows from \lref{lem:RTpt} in the main text and \lref{lem:RTObStab} below since $\caR_\caT(\bbI)=\tr_{BC}R(\caT)$.
\end{proof}

Next, we prove an auxiliary lemma used in the proof of \lref{lem:RTObStabProj}. 
\begin{lemma}\label{lem:RTObStab}
	Suppose $d$ is an odd prime and $\Ob$ is the projector onto a stabilizer state in $\caH_d^{\otimes n}$. Then 
	\begin{align}
	\caR_\caT(\Ob)&=\begin{cases}
	\bbI &\mbox{if}\quad  \caT\in\{\Delta,\caT_{\tau_{23}}\}, \\
	\Ob &\mbox{if}\quad \caT\in \Sigma(d)\setminus  \{\Delta,\caT_{\tau_{23}}\}.
	\end{cases} \label{eq:RTOstab}
	\end{align}
\end{lemma}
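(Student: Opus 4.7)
The plan is to reduce the general statement to a one-qudit computation in two steps and then dispatch that computation using the classification of stochastic Lagrangian subspaces from Section~\ref{sec:SLS}. First, since $R(\caT)$ lies in the commutant of $\Cl(n,d)^{\totimes 3}$ (see Section~\ref{sec:SLSspanningSet}), the map $\caR_\caT$ is Clifford-covariant:
\[
\caR_\caT(U\Ob U^\dag)=U\caR_\caT(\Ob)U^\dag \quad \forall\, U\in\Cl(n,d),
\]
and since the Clifford group acts transitively on stabilizer states, it suffices to prove the lemma for the reference state $\Ob=(|0\>\<0|)^{\otimes n}$. Second, $R(\caT)=r(\caT)^{\otimes n}$ together with the sitewise factorization of $\tr_{BC}$ yields
\[
\caR_\caT\bigl((|0\>\<0|)^{\otimes n}\bigr)=\bigl[\caR_\caT^{(1)}(|0\>\<0|)\bigr]^{\otimes n},
\]
where $\caR_\caT^{(1)}$ denotes the single-qudit analog. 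The lemma then amounts to showing that $\caR_\caT^{(1)}(|0\>\<0|)=\bbI$ when $\caT\in\{\Delta,\caT_{\tau_{23}}\}$ and $\caR_\caT^{(1)}(|0\>\<0|)=|0\>\<0|$ otherwise.

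For $n=1$, unpacking the definition gives, with $\mathbf{e}_1=(1,0,0)^\top$,
\[
\caR_\caT^{(1)}(|0\>\<0|)=\sum_{(a,b)\in S(\caT)}|a\>\<b|,\qquad S(\caT):=\bigl\{(a,b)\in\bbF_d^2\,:\,(a\mathbf{e}_1;b\mathbf{e}_1)\in\caT\bigr\}.
\]
The set $S(\caT)$ is a linear subspace of $\bbF_d^2$, and the isotropy identity $a^2=b^2$ confines it to $\spa\{(1,1),(1,-1)\}$. The first step is to rule out the direction $(1,-1)$. By the alternative parametrization of Section~\ref{sec:SLSconstruct2}, any $\caT\in\scrT_1$ has the form $\caT_\bfv$ with a nonzero $\bfv\in\mathbf{1}_3^\perp$, and any $\caT\in\scrT_0$ has the form $\tau_{12}\caT_\bfv$; substituting $(a\mathbf{e}_1;-a\mathbf{e}_1)\in\caT_\bfv$ into the defining relation $\bfx\pm c\bfv=\pm a\mathbf{e}_1$ of \eref{eq:Tv} and taking the sum forces $\bfx=0$, while taking the difference gives $c\bfv=a\mathbf{e}_1$; since $\bfv\cdot\mathbf{1}_3=0$ prevents $\bfv$ from being proportional to $\mathbf{e}_1$, this yields $a=0$. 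A parallel substitution handles $\tau_{12}\caT_\bfv$, the would-be direction $\bfv\propto(1,1,0)^\top$ being ruled out by $\bfv\cdot\mathbf{1}_3=2\neq 0$. Hence $S(\caT)\leq\spa\{(1,1)\}$, so $\caR_\caT^{(1)}(|0\>\<0|)$ is either $\bbI$ or $|0\>\<0|$ according as $(1,1)\in S(\caT)$ or not.

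Finally, $(1,1)\in S(\caT)$ is equivalent to $\mathbf{e}_1\in\caT_\tDelta$. By \lref{lem:TDelta}, $\caT_\tDelta$ equals $\spa(\mathbf{1}_3)$ when $\caT\in\scrT_0\setminus\{\Delta\}$ (so $\mathbf{e}_1\notin\caT_\tDelta$), equals $\bbF_d^3$ when $\caT=\Delta$ (so $\mathbf{e}_1\in\caT_\tDelta$), and has dimension two when $\caT\in\scrT_1$. For the middle case, $(\caT_{\bfv_y})_\tDelta=\bfv_y^\perp$ from \eref{eq:TvDeltaDefect} contains $\mathbf{e}_1$ iff the first entry of $\bfv_y$ vanishes; inspection of \eref{eq:vy} shows this happens only at $y=d$, corresponding to $\caT_{\bfv_d}=\caT_{\tau_{23}}$ (the reflection fixing $\mathbf{e}_1$). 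The main obstacle is keeping the case analysis for the $(1,-1)$ direction under control across isotropic and defect-type subspaces simultaneously; the $\caT_\bfv$ parametrization sidesteps any explicit enumeration of $\scrT_\defe$ and reduces each case to the one-line arithmetic above.
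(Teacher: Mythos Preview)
Your proof is correct but follows a genuinely different route from the paper's. Both reduce to $n=1$ and $\Ob=|0\>\<0|$ via Clifford transitivity and the tensor factorization $R(\caT)=r(\caT)^{\otimes n}$. From there the paper splits into $\scrT_\sym$ (handled by bare-hands permutation calculation) and $\scrT_\ns$, where the latter is dispatched in one line by \lref{lem:TzeroEntry}: since the only vector in $\caT\in\scrT_\ns$ with $x_2=x_3=y_2=y_3=0$ is the zero vector, the sum collapses to $|0\>\<0|$. You instead treat all of $\Sigma(d)$ uniformly through the $\caT_\bfv$/$\tau_{12}\caT_\bfv$ parametrization of \sref{sec:SLSconstruct2}, first excluding the $(1,-1)$ direction by direct substitution into \eref{eq:Tv} and then deciding between $\bbI$ and $|0\>\<0|$ by testing $\mathbf{e}_1\in\caT_\tDelta$ via \lref{lem:TDelta} and \eref{eq:TvDeltaDefect}. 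Your path is self-contained (it never invokes the zero-entry lemma) and handles $\scrT_\sym$ and $\scrT_\ns$ in one stroke; the paper's path is shorter because it offloads the combinatorics to \lref{lem:TzeroEntry}. One minor wording issue: writing that $a^2=b^2$ ``confines $S(\caT)$ to $\spa\{(1,1),(1,-1)\}$'' literally says $S(\caT)\leq\bbF_d^2$; what you mean (and what your argument uses) is that the subspace $S(\caT)$ lies inside the union of the two lines $\spa\{(1,1)\}\cup\spa\{(1,-1)\}$ and hence inside one of them.
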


\begin{proof}[Proof of \lref{lem:RTObStab}]
	To simplify the notation, here we focus on the case $n=1$, but the basic idea admits straightforward generalization. Since all stabilizer states form one orbit under the action of the Clifford group, we can assume that  $\Ob=|0\>\<0|$ 
	without loss of generality. If in addition $\caT\in \scrT_\ns$, then
	\begin{align}
	\caR_\caT(|0\>\<0|) &= \sum_{(\bfx;\bfy)\in \caT}\tr_{BC}\bigl[\lsp|\bfx\>\<\bfy|(I\otimes |0\>\<0|\otimes |0\>\<0|) \bigr]=|0\>\<0|,
	\end{align}
	which confirms \eref{eq:RTOstab}. 
	Here in deriving the second equality we have noted that $(\bfx;\bfy)=\mathbf{0}$ whenever four entries are equal to 0 according to \lref{lem:TzeroEntry}. 
	
	When  $\caT\in \scrT_\sym$,  \eref{eq:RTOstab} can be verified by direct calculation.
\end{proof}

Before proving  \lref{lem:RTO}, we need to introduce an auxiliary lemma. Let $M$ be a $k\times k$ complex matrix. The maximum column sum  norm $\vertiii{M}_1$ and 
maximum row sum norm $\vertiii{M}_\infty$ are defined as follows \cite{HornJ13book},
\begin{align}
\vertiii{M}_1:=\max_{1\leq j\leq k}\sum_{i} |M_{ij}|,\quad \vertiii{M}_\infty:=\max_{1\leq i\leq k}\sum_{j} |M_{ij}|.
\end{align}

\begin{lemma}\label{lem:Matrix3norms}
	Any  $k\times k$ complex matrix $M$ satisfies 
	\begin{align}\label{eq:Matrix3norms}
	\|M\| \leq \sqrt{\vertiii{M}_1 \vertiii{M}_\infty}\leq \frac{\vertiii{M}_1+ \vertiii{M}_\infty}{2}. 
	\end{align}
\end{lemma}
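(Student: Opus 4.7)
The plan is to establish the first inequality by a Cauchy--Schwarz argument on the bilinear form $u^{*}Mv$, and then obtain the second inequality as an immediate consequence of the AM--GM inequality. Both steps are standard, and neither should present any real difficulty.

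More concretely, I would first recall that the operator norm admits the variational characterization
\begin{equation*}
\|M\|=\sup\bigl\{|u^{*}Mv|\ :\ u,v\in \bbC^{k},\ \|u\|_{2}=\|v\|_{2}=1\bigr\}.
\end{equation*}
Given unit vectors $u,v$, I would write $|u^{*}Mv|\le \sum_{i,j}|M_{ij}||u_{i}||v_{j}|$ and then split $|M_{ij}|=|M_{ij}|^{1/2}\cdot |M_{ij}|^{1/2}$, applying Cauchy--Schwarz to obtain
\begin{equation*}
|u^{*}Mv|^{2}\le \Bigl(\sum_{i,j}|M_{ij}||u_{i}|^{2}\Bigr)\Bigl(\sum_{i,j}|M_{ij}||v_{j}|^{2}\Bigr).
\end{equation*}
Summing first over $j$ in the first factor and first over $i$ in the second factor bounds the inner sums by $\vertiii{M}_{\infty}$ and $\vertiii{M}_{1}$, respectively, giving $|u^{*}Mv|^{2}\le \vertiii{M}_{1}\vertiii{M}_{\infty}$, which yields the first inequality after taking the supremum. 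The second inequality is immediate from $\sqrt{ab}\le (a+b)/2$ for nonnegative reals.

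No step is a genuine obstacle; the only place to be careful is the precise way Cauchy--Schwarz is invoked so that one factor produces the row sums and the other produces the column sums. The entire argument fits in a few lines.
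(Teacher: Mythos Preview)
Your proof is correct but takes a different route from the paper's. The paper argues via $\|M\|^{2}=\|M^{\dag}M\|\le \vertiii{M^{\dag}M}_{\infty}\le \vertiii{M^{\dag}}_{\infty}\,\vertiii{M}_{\infty}=\vertiii{M}_{1}\,\vertiii{M}_{\infty}$, invoking the spectral-radius bound for Hermitian matrices and submultiplicativity of the induced norms; the AM--GM step is the same as yours. Your approach is the Schur-test/Cauchy--Schwarz argument applied directly to the bilinear form, which is arguably more self-contained since it does not appeal to submultiplicativity or to a cited spectral-radius corollary. The paper's route, on the other hand, makes the structural reason transparent (everything reduces to properties of $M^{\dag}M$) and generalizes cleanly to products $\|AB\|$. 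Both are short and standard; either would serve.
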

\begin{proof}[Proof of \lref{lem:Matrix3norms}]
	The second inequality in \eref{eq:Matrix3norms}	follows from the well-known relation between the geometric mean and arithmetic mean. The first inequality in \eref{eq:Matrix3norms} follows from 5.6.P21 in \rcite{HornJ13book}. Alternatively, it is a simple corollary of the following equation,
	\begin{align}
	\|M\|^2 =\|M^\dag M\|\leq \vertiii{M^\dag M}_\infty\leq \vertiii{M^\dag}_\infty \vertiii{M}_\infty=\vertiii{M}_1 \vertiii{M}_\infty.
	\end{align}	
	Here the first inequality  holds because  $\|M^\dag M\|$ is equal to the spectral radius of the Hermitian matrix $M^\dag M$, which is upper bounded by $\vertiii{M^\dag M}_1$ and $\vertiii{M^\dag M}_\infty$ according to Corollary 6.1.5 in \rcite{HornJ13book}. The second inequality holds because the matrix norms $\vertiii{M}_1$ and  $\vertiii{M}_\infty$ are submultiplicative. 
\end{proof}

\begin{proof}[Proof of \lref{lem:RTO}]
	First, we focus on the case $n=1$. By definition in \eref{eq:ShadowMapR} we have
	\begin{align}
	\caR_\caT(\Ob) = \tr_{BC}\bigl[R(\caT)\bigl(I\otimes \Ob\otimes \Ob^\dag\bigr) \bigr]
	=\!\sum_{(\bfx;\bfy)\in \caT}\tr_{BC}\bigl[\lsp|\bfx\>\<\bfy|\bigl(I\otimes \Ob\otimes \Ob^\dag\bigr) \bigr]
	=\!\sum_{(\bfx;\bfy)\in \caT} \Ob_{y_2, x_2}\bigl(\Ob^\dag\bigr)_{y_3, x_3} |x_1\>\<y_1|,
	\end{align}
	where $x_1, x_2, x_3$ are the three entries of $\bfx$, and $y_1, y_2, y_3$ are the three entries of $\bfy$.

	To understand the entries of $	\caR_\caT(\Ob)$  in the computational basis, we introduce the following two linear maps from $\caT$ to $\bbF_d^3$, 
	\begin{align}\label{eq:CoordinateProjMap}
	f_i: (\bfx;\bfy)\mapsto (x_1;x_i;y_i),\quad i=2,3.
	\end{align}
	The kernels of the two maps read
	\begin{align}
	\ker(f_i)=\{(\bfx;\bfy)\in \caT\ |\ x_1=x_i=y_i=0\}, \quad i=2,3.
	\end{align}	
	According to \lref{lem:TzeroEntry}, $\ker(f_i)$ for $i=2,3$ are both trivial, so both maps are injective and surjective, given that each $\caT\in \Sigma(d)$ is three dimensional. Consequently, 
	\begin{align}
	\{(y_3;x_3)\ |\ (\bfx;\bfy)\in \caT, \;x_1 =a \}=\{(y_2;x_2)\ |\ (\bfx;\bfy)\in \caT, \;x_1 =a \}=\bbF_d^2\quad \forall a\in \bbF_d. 
	\end{align}
	Therefore,
	\begin{align}
	\sum_{b\in \bbF_d}|\caR_\caT(\Ob)_{a,b}|\leq \sum_{b,c\in \bbF_d}|\Ob_{b,c}| |\Ob_{\sigma(b,c)}|\leq \sum_{b,c\in \bbF_d}|\Ob_{b,c}|^2=\|\Ob\|_2^2\quad \forall a\in \bbF_d,
	\end{align}
	where $\sigma$ is a permutation on $\bbF_d^2$, which may depend on $a$. By a similar reasoning we can deduce that 
	\begin{align}
	\sum_{b\in \bbF_d}|\caR_\caT(\Ob)_{b,a}|\leq \|\Ob\|_2^2\quad \forall a\in \bbF_d. 
	\end{align}
	The above two equations together 	
	imply \eref{eq:RTOUB1}	thanks to \lref{lem:Matrix3norms}.

	Next, we consider the general situation with $n\geq 1$. Given $\caT\in \Sigma(d)$, let 
	\begin{align}
	\caT^n:=\{(\tbfx;\tbfy) \,|\, (\bfx_j;\bfy_j)\in \caT \quad \forall j=1,2,\ldots, n\},
	\end{align}
	where $\tbfx$ and $\tbfy$  are shorthands for $(\bfx_1; \bfx_2;\ldots; \bfx_n)$ and $(\bfy_1; \bfy_2;\ldots; \bfy_n)$, respectively, with $\bfx_j=(x_{j1}; x_{j2};x_{j3})$  and $\bfy_j=(y_{j1}; y_{j2};y_{j3})$. Note that $\bfx_j,\bfy_j\in \bbF_d^3$ for $j=1,2,\ldots, n$ and $\bfx,\bfy\in \bbF_d^{3n}$. In addition, the basis vectors in the computational basis of $\caH_d^{\otimes n}$ can be labeled by vectors in $\bbF_d^n$.
	Then 
	\begin{align}
	\caR_\caT(\Ob) &= \tr_{BC}\bigl[R(\caT)\bigl(I\otimes \Ob\otimes \Ob^\dag\bigr) \bigr]
	=\sum_{(\tbfx;\tbfy)\in \caT^{n}}\tr_{BC}\bigl[\lsp|\tbfx\>\<\tbfy|\bigl(I\otimes \Ob\otimes \Ob^\dag\bigr) \bigr]
	\nonumber\\
	&=\sum_{(\tbfx;\tbfy)\in \caT^n} \Ob_{\ty_2, \tx_2}\bigl(\Ob^\dag\bigr)_{\ty_3, \tx_3} |\tx_1\>\<\ty_1|,
	\end{align}
	where  $\tx_i$ and $\ty_i$ for  $i=1,2,3$ are shorthands for $(x_{1i}; x_{2i};\ldots; x_{ni})$ and  $(y_{1i}; y_{2i};\ldots; y_{ni})$, respectively.

	Now, we can introduce  two linear maps from $\caT^n$ to $\bbF_d^{3n}$ as follows,
	\begin{align}
	\tf_i: (\tbfx;\tbfy)\mapsto (\tx_1;\tx_i;\ty_i),\quad i=2,3,
	\end{align}
	which generalize the two linear maps $f_i$ defined in \eref{eq:CoordinateProjMap}.
	Moreover, both maps are injective and surjective as before. Consequently, 
	\begin{align}
	\{(\ty_3;\tx_3)\ |\ (\tbfx;\tbfy)\in \caT^n, \;\tx_1 =a \}=\{(\ty_2;\tx_2)\ |\ (\tbfx;\tbfy)\in \caT^n, \;\tx_1 =a \}=\bbF_d^{2n}\quad \forall a\in \bbF_d^n. 
	\end{align}
	Therefore,
	\begin{align}
	\sum_{b\in \bbF_d^n}|\caR_\caT(\Ob)_{a,b}|\leq \sum_{b,c\in \bbF_d^n}|\Ob_{b,c}| |\Ob_{\sigma(b,c)}|\leq \sum_{b,c\in \bbF_d^n}|\Ob_{b,c}|^2=\|\Ob\|_2^2\quad \forall a\in \bbF_d^n,
	\end{align}
	where $\sigma$ is a permutation on $\bbF_d^{2n}$, which may depend on $a$. By a similar reasoning we can deduce that 
	\begin{align}
	\sum_{b\in \bbF_d^n}|\caR_\caT(\Ob)_{b,a}|\leq \|\Ob\|_2^2\quad \forall a\in \bbF_d^n. 
	\end{align}
	The above two equations together 	
	imply \eref{eq:RTOUB1}	thanks to \lref{lem:Matrix3norms} again. 
	
	\Eref{eq:RTOUB2} follows from \eqsref{eq:RTOS3norm}{eq:RTOUB1}.
\end{proof}

Before proving \lref{lem:RTOdiag} we need to introduce an auxiliary lemma.  
\begin{lemma}\label{lem:TDeltaij}
	Suppose $d$ is an odd prime and 
	\begin{align}
	\Delta_{ij}=\bigl\{(\bfx; \bfy)\ |\ \bfx,\bfy\in \bbF_d^3,\; x_i=y_i,\; x_j=y_j\bigr\},\quad i,j=1,2,3,\; i\neq j,
	\end{align}
	where $x_1, x_2, x_3$ are the three entries of $\bfx$, and $y_1, y_2, y_3$ are the three entries of $\bfy$. Then 
	\begin{align}\label{eq:TDeltaij}
	\caT\cap\Delta_{ij}=\caT_\Delta\quad \forall i,j=1,2,3,\; i\neq j.
	\end{align}
\end{lemma}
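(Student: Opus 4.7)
The inclusion $\caT_\Delta\subseteq\caT\cap\Delta_{ij}$ is immediate since $\Delta\subseteq\Delta_{ij}$, so the entire content of the lemma is the reverse inclusion. The plan is to fix an arbitrary element $(\bfx;\bfy)\in\caT\cap\Delta_{ij}$, let $k$ be the unique index in $\{1,2,3\}\setminus\{i,j\}$, and show that $x_k=y_k$; this yields $\bfx=\bfy$ and hence $(\bfx;\bfy)\in\caT_\Delta$.

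The first step is to extract information from the isotropy of $\caT$. By the defining property of a stochastic Lagrangian subspace, every $(\bfx;\bfy)\in\caT$ satisfies $\bfx\cdot\bfx=\bfy\cdot\bfy$. Combining this with the hypotheses $x_i=y_i$ and $x_j=y_j$ immediately gives $x_k^2=y_k^2$, so $x_k=\pm y_k$ because $d$ is odd. The case $x_k=y_k$ finishes the proof, so the only thing to rule out is $x_k=-y_k$.

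The key trick for eliminating this sign ambiguity is to exploit the stochastic condition $\mathbf{1}_6\in\caT$. Since $\caT$ is a subspace, the shifted vector $(\bfx+\mathbf{1}_3;\bfy+\mathbf{1}_3)=(\bfx;\bfy)+\mathbf{1}_6$ also lies in $\caT$, and it manifestly belongs to $\Delta_{ij}$ as well. Applying the isotropy argument to this shifted element gives $(x_k+1)^2=(y_k+1)^2$; subtracting $x_k^2=y_k^2$ yields $2x_k=2y_k$, which forces $x_k=y_k$ since $d$ is an odd prime. This contradicts $x_k=-y_k$ unless $x_k=y_k=0$, so in every case $x_k=y_k$.

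I do not expect any genuine obstacle here: the entire argument uses only the two defining properties of $\Sigma(d)$ recalled in \sref{sec:SLSspanningSet}, namely isotropy with respect to the form $\mathfrak{q}(\bfx,\bfy)=\bfx\cdot\bfx-\bfy\cdot\bfy$ and containment of the all-ones vector $\mathbf{1}_6$, together with the oddness of $d$ to invert the factor of $2$. The same reasoning works uniformly for all three pairs $(i,j)$ since it is symmetric in the indices.
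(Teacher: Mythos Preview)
Your proof is correct and uses essentially the same idea as the paper: combine isotropy with the shift by $\mathbf{1}_6$ to pass from the quadratic relation to a linear one. The paper packages this as the identity $\bfx\cdot\mathbf{1}_3=\bfy\cdot\mathbf{1}_3$ for all $(\bfx;\bfy)\in\caT$ (their \eref{eq:bfxdotbfy}), after which $x_i=y_i$ and $x_j=y_j$ give $x_k=y_k$ in one line; your detour through $x_k^2=y_k^2$ is unnecessary but harmless.
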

This lemma is a simple corollary of the following equation
\begin{equation}\label{eq:bfxdotbfy}
\bfx\cdot \mathbf{1}_{t}=\bfy\cdot \mathbf{1}_{t}\quad \forall (\bfx;\bfy)\in \caT,
\end{equation}
which follows from  Conditions~1 and 3 
in the definition of stochastic Lagrangian subspaces in \sref{sec:SLSspanningSet}.

\begin{proof}[Proof of \lref{lem:RTOdiag}]
	To simplify the notation, here we focus on the case $n=1$, but the basic idea admits straightforward generalization (cf. the proof of \lref{lem:RTO}). Since all stabilizer bases form one orbit under the action of the Clifford group, we can assume that  $\Ob$ 
	is diagonal in the computational basis without loss of generality. Then 
	$\Ob$ has the form $\Ob=\sum_{x\in \bbF_d}o_x |x\>\<x|$,
	where $o_x$ for $x\in \bbF_d$ are the eigenvalues of $\Ob$. Therefore,
	\begin{align}
	\caR_\caT(\Ob) &= \sum_{(\bfx;\bfy)\in \caT}\tr_{BC}\bigl[\lsp|\bfx\>\<\bfy|\bigl(I\otimes \Ob\otimes \Ob^\dag\bigr) \bigr]=\sum_{(\bfx;\bfy)\in \caT\cap\Delta_{23}}\tr_{BC}\bigl[\lsp|\bfx\>\<\bfy|\bigl(I\otimes \Ob\otimes \Ob^\dag\bigr) \bigr]\nonumber\\
	&=\sum_{(\bfx;\bfx)\in \caT_\Delta}\tr_{BC}\bigl[\lsp|\bfx\>\<\bfx|\bigl(I\otimes \Ob\otimes \Ob^\dag\bigr) \bigr]=\sum_{x\in \bbF_d}\tr_{BC}\bigl[\lsp|x\mathbf{1}_3\>\<x\mathbf{1}_3|\bigl(I\otimes \Ob\otimes \Ob^\dag\bigr) \bigr]\nonumber\\
	&=\sum_{x\in \bbF_d}o_x^2 |x\>\<x|=\Ob\Ob^\dag=\Ob^\dag\Ob,
	\end{align}
	which implies \eref{eq:RTOeven}. Here the second inequality holds because $\Ob$ is diagonal in the computational basis, the third inequality follows from  \lref{lem:TDeltaij} above,
	and  the  fourth equality holds because $\caT_\Delta=\spa(\mathbf{1}_6)$ for $\caT\in \scrT_0\setminus\{\Delta\}$ by  \lref{lem:TDelta}. 
\end{proof}

\section{\label{app:ThirdMomentStab}Proofs of results on the third moment of stabilizer states}
In this appendix we prove \lref{lem:QObStabProj} and \thref{thm:ShNormStab} after introducing two auxiliary lemmas in \aref{app:QndAux}.

\subsection{\label{app:QndAux}Auxiliary results on the shadow map $\bcaQ_{n,d}(\cdot)$ }

\begin{lemma}\label{lem:QndObOb0}
	Suppose $d$ is a prime, $\Ob\in \caL^\rmH\bigl(\caH_d^{\otimes n}\bigr)$, and $\Ob_0=\Ob-\tr(\Ob)\bbI/D$. Then 
	\begin{align}
	6\bcaQ_{n,d}(\Ob_0)
	&=	6\bcaQ_{n,d}(\Ob)-\frac{(D-1)(D+2)(\tr \Ob)^2}{D^2}\bbI-\frac{2(D+2)\tr(\Ob)}{D}\Ob.
	\label{eq:QndObOb0}
	\end{align}
\end{lemma}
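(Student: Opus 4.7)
The plan is to expand $\Ob_0 = \Ob - c\bbI$ with $c := \tr(\Ob)/D$ inside the shadow map and then reduce the cross terms using the fact that $\Stab(n,d)$ is a $2$-design (and hence a $1$-design). Since $\Ob$ is Hermitian, $c$ is real and $\Ob_0^{\dagger}=\Ob_0$, so
\[
\bbI\otimes\Ob_0\otimes\Ob_0^{\dagger}
=\bbI\otimes\Ob\otimes\Ob
-c\,\bbI\otimes\Ob\otimes\bbI
-c\,\bbI\otimes\bbI\otimes\Ob
+c^{2}\,\bbI\otimes\bbI\otimes\bbI.
\]
Applying $\tr_{BC}[\bQ(n,d,3)\,\cdot\,]$ term by term reduces the identity to computing the two auxiliary quantities
\[
A:=\tr_{BC}\bigl[\bQ(n,d,3)(\bbI\otimes\Ob\otimes\bbI)\bigr],\qquad
B:=\tr_{BC}\bQ(n,d,3).
\]

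The key step is to express $A$ and $B$ via lower moments. Since $\bQ(n,d,3)=\pi_{[3]}Q(n,d,3)$ and $\Stab(n,d)$ is both a $1$-design and a $2$-design, tracing out the third (resp.\ second and third) tensor factor of $Q(n,d,3)$ yields $Q(n,d,2)=\frac{2}{D(D+1)}P_{[2]}$ (resp.\ $Q(n,d,1)=\bbI/D$). Using $P_{[2]}=(\bbI+\swap)/2$ and the swap identity $\tr_{C}[\swap(\bbI\otimes\Ob)]=\Ob$, a short computation gives
\[
A=\tfrac{D+2}{6}\bigl(\tr(\Ob)\bbI+\Ob\bigr),\qquad
B=\tfrac{(D+1)(D+2)}{6}\bbI.
\]
Permutation invariance of $\bQ(n,d,3)$ ensures $\tr_{BC}[\bQ(n,d,3)(\bbI\otimes\bbI\otimes\Ob)]=A$ as well, so no separate calculation is needed for the third term.

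Substituting back gives
\[
6\bcaQ_{n,d}(\Ob_0)
=6\bcaQ_{n,d}(\Ob)-2c(D+2)\tr(\Ob)\bbI
-2c(D+2)\Ob+c^{2}(D+1)(D+2)\bbI,
\]
and collecting the $(\tr\Ob)^{2}\bbI$ coefficients using $c=\tr(\Ob)/D$ yields $\frac{(D+1)(D+2)-2D(D+2)}{D^{2}}=-\frac{(D-1)(D+2)}{D^{2}}$, which is exactly the claimed formula.

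There is no serious obstacle here: the only technical point is to keep track of the normalizations $\pi_{[3]}=D(D+1)(D+2)/6$ and $\pi_{[2]}=D(D+1)/2$ carefully when converting between $Q$ and $\bQ$, and to invoke the $2$-design and $1$-design properties in the right places. Everything else is linearity of partial trace and arithmetic.
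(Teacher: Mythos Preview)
Your proof is correct and follows essentially the same approach as the paper's: expand $\Ob_0=\Ob-c\bbI$, use the $2$-design property to compute $\tr_{BC}[\bQ(n,d,3)(\bbI\otimes\Ob\otimes\bbI)]=\tfrac{D+2}{6}(\tr(\Ob)\bbI+\Ob)$ and $\tr_{BC}\bQ(n,d,3)=\tfrac{(D+1)(D+2)}{6}\bbI$, invoke permutation invariance for the symmetric cross term, and collect coefficients. The only cosmetic difference is that the paper phrases the intermediate reductions via $3\tr_C[\bQ(n,d,3)]=(D+2)P_{[2]}$ rather than passing through $Q(n,d,2)$ explicitly, but the content is identical.
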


\begin{proof}[Proof of \lref{lem:QndObOb0}]
	The set of stabilizer states forms a 2-design, so   $3\tr_{C}[\bQ(n,d,3)]=(D+2)P_{[2]}$ and
	\begin{align}
	\begin{aligned}
	\tr_{BC}[\bQ(n,d,3)(\bbI\otimes \bbI\otimes \Ob)]&=\tr_{BC}[\bQ(n,d,3)(\bbI\otimes \Ob\otimes \bbI)],\quad 6\bcaQ_{n,d}(\bbI)=(D+1)(D+2)\bbI,\\
	6\tr_{BC}[\bQ(n,d,3)(\bbI\otimes \Ob\otimes \bbI)]&=2(D+2)\tr_B[P_{[2]}(\bbI\otimes \Ob)]=(D+2)[\tr(\Ob) \bbI+\Ob].
	\end{aligned}
	\end{align}	
	Consequently,
	\begin{align}
	6\bcaQ_{n,d}(\Ob_0)&=6\tr_{BC}\biggl\{\bQ(n,d,3)\biggl[\bbI\otimes \biggl(\Ob-\frac{\tr(\Ob)}{D}\bbI\biggr)\otimes \biggl(\Ob-\frac{\tr(\Ob)}{D}\bbI\biggr) \biggr]\biggr\}\nonumber\\
	&=6\bcaQ_{n,d}(\Ob)+\frac{6(\tr\Ob)^2}{D^2}\bcaQ_{n,d}(\bbI)-\frac{12\tr(\Ob)}{D}\tr_{BC}\bigl[\bQ(n,d,3)(\bbI\otimes \Ob\otimes \bbI)\bigr]\nonumber\\
	&=6\bcaQ_{n,d}(\Ob)+\frac{(D+1)(D+2)(\tr \Ob)^2}{D^2}\bbI-\frac{2(D+2)\tr(\Ob)[\tr(\Ob) \bbI+\Ob]}{D}  \nonumber\\
	&=	6\bcaQ_{n,d}(\Ob)-\frac{(D-1)(D+2)(\tr \Ob)^2}{D^2}\bbI-\frac{2(D+2)\tr(\Ob)}{D}\Ob,
	\end{align}
	which confirms \eref{eq:QndObOb0}. 
\end{proof}

\begin{lemma}\label{lem:QndObn1}
	Suppose $d$ is a prime and $\Ob\in \caL(\caH_d)$  is diagonal in some stabilizer basis. Then 
	\begin{align}\label{eq:Qnd1}
	\bcaQ_{1,d}(\Ob)=\frac{(d+2)(|\tr\Ob|^2\bbI+d\Ob^\dag\Ob)}{d}.
	\end{align}	
\end{lemma}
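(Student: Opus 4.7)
My plan is to exploit the fact that for $n=1$ the stabilizer states in $\Stab(1,d)$ are exactly the $d(d+1)$ vectors of $d+1$ mutually unbiased bases. First I would use the Clifford covariance $U\bcaQ_{1,d}(\Ob)U^\dag=\bcaQ_{1,d}(U\Ob U^\dag)$ for every $U\in\Cl(1,d)$, which follows from $[\bQ(1,d,3),U^{\otimes 3}]=0$ together with cyclicity of the partial trace over $BC$, to reduce to the case where $\Ob$ is diagonal in the computational basis, $\Ob=\sum_{x\in\bbF_d}o_x|x\>\<x|$. This is permissible because the Clifford group acts transitively on stabilizer bases.

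Next, using $|\Stab(1,d)|=d(d+1)$ and $\pi_{[3]}=\binom{d+2}{3}=d(d+1)(d+2)/6$, the definition of the normalized moment operator collapses to
\begin{equation*}
\bQ(1,d,3)=\frac{d+2}{6}\sum_{|s\>\in\Stab(1,d)}(|s\>\<s|)^{\otimes 3}.
\end{equation*}
Combining this with the identity $\tr_{BC}[(|s\>\<s|)^{\otimes 3}(\bbI\otimes\Ob\otimes\Ob^\dag)]=|\<s|\Ob|s\>|^{2}|s\>\<s|$ immediately gives
\begin{equation*}
\bcaQ_{1,d}(\Ob)=\frac{d+2}{6}\sum_{|s\>\in\Stab(1,d)}|\<s|\Ob|s\>|^{2}|s\>\<s|.
\end{equation*}

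Finally I would evaluate this sum by splitting $\Stab(1,d)$ into its $d+1$ MUBs. The computational basis contributes $\sum_{x}|o_x|^{2}|x\>\<x|=\Ob^\dag\Ob$, because $\<x|\Ob|x\>=o_x$. For each of the remaining $d$ bases the unbiasedness $|\<s|x\>|^{2}=1/d$ forces $\<s|\Ob|s\>=\tr(\Ob)/d$ independently of $|s\>$, so $|\<s|\Ob|s\>|^{2}=|\tr\Ob|^{2}/d^{2}$; the completeness relation $\sum_{|s\>\in\text{basis}}|s\>\<s|=\bbI$ then gives per-basis contribution $|\tr\Ob|^{2}\bbI/d^{2}$, and summing over the $d$ unbiased bases yields $|\tr\Ob|^{2}\bbI/d$. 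Assembling the two pieces gives $\Ob^\dag\Ob+|\tr\Ob|^{2}\bbI/d$, and multiplication by the prefactor delivers the claimed identity. The main subtlety is the initial covariance reduction, since naively $\bQ$ only commutes with $U^{\otimes 3}$ rather than with $\bbI\otimes U\otimes U$; the workaround is to insert $U^{\otimes 3}(U^{\otimes 3})^\dag$ inside the partial trace and use that $\tr_{BC}$ is invariant under conjugation by $\bbI\otimes V\otimes W$. Once that is in hand, the calculation itself is purely a bookkeeping exercise using the MUB overlaps.
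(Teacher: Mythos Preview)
Your argument is correct and takes a genuinely different route from the paper. The paper splits into cases: for $d=2$ it invokes the 3-design property $\bQ(1,2,3)=P_\sym$ together with \eref{eq:PsymOb}; for odd $d$ it expands $\bQ(1,d,3)$ over the stochastic Lagrangian subspaces via \eref{eq:bQnd3}, observes that diagonality of $\Ob$ in the computational basis cuts each $r(\caT)$ down to $r(\caT_\Delta)$ by \lref{lem:TDeltaij}, and then evaluates the resulting sum with \lref{lem:TDeltaSum}. Your direct computation over the $d+1$ mutually unbiased bases is more elementary, treats all prime $d$ uniformly, and bypasses the commutant machinery entirely; the paper's route, by contrast, illustrates how the Lagrangian-subspace formalism recovers the same answer and serves as a warm-up for the $n>1$ bounds where that formalism is indispensable. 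Your covariance reduction is also fine: partial cyclicity of $\tr_{BC}$ under $\bbI\otimes V\otimes W$ combined with $[\bQ,U^{\otimes3}]=0$ does give $U\bcaQ_{1,d}(\Ob)U^\dag=\bcaQ_{1,d}(U\Ob U^\dag)$.

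One numerical caveat: your prefactor $(d+2)/6$ and your MUB sum $\Ob^\dag\Ob+|\tr\Ob|^2\bbI/d$ combine to $\bcaQ_{1,d}(\Ob)=\frac{d+2}{6d}\bigl(|\tr\Ob|^2\bbI+d\Ob^\dag\Ob\bigr)$, which is off from the displayed formula by a factor of $6$. This agrees with the paper's own $d=2$ branch (which explicitly computes $6\bcaQ_{1,d}$) and with the downstream application in the proof of \thref{thm:ShNormStab}, so the missing $6$ is a typo in the lemma statement rather than a flaw in your argument.
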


\begin{proof}[Proof of \lref{lem:QndObn1}]
	By assumption $\Ob$ is diagonal in a stabilizer basis, so  $\Ob\Ob^\dag=\Ob^\dag\Ob$. 
	When $d=2$, the set of stabilizer states forms a 3-design \cite{KuenG13,Zhu17MC,Webb16}, so  $\bQ(n,d,3)$ coincides with the projector $P_\sym$ onto the symmetric subspace $\Sym_3\bigl(\caH_d^{\otimes n}\bigr)$. In conjunction with \eref{eq:PsymOb} we can deduce that
	\begin{align}
	6\bcaQ_{1,d}(\Ob)&=\Ob\Ob^\dag+\Ob^\dag\Ob+\tr(\Ob\Ob^\dag)\bbI +|\tr(\Ob)|^2\bbI
	+\tr(\Ob)\Ob^\dag+\tr\bigl(\Ob^\dag\bigr)\Ob=2\bigl(|\tr\Ob|^2\bbI+d\Ob\Ob^\dag\bigr),   
	\end{align}		
	which confirms \eref{eq:Qnd1}. 
	Here the second equality follows from the  equation below,
	\begin{align}
	\tr(\Ob\Ob^\dag)\bbI+\tr(\Ob)\Ob^\dag+\tr\bigl(\Ob^\dag\bigr)\Ob=|\tr\Ob|^2\bbI+d\Ob^\dag\Ob, 
	\end{align}
	which is easy to verify given that $\Ob$ is diagonal in a stabilizer basis (first consider the case in which $\Ob$ is traceless).

	Next, suppose $d$ is an odd prime. 	Since all stabilizer bases form one orbit under the action of the Clifford group, we can assume that $\Ob$ is diagonal in the computational basis without loss of generality. Then by virtue of   \eref{eq:QndObShNormStab} we can deduce that
	\begin{align}
	\frac{2d}{d+2}\bcaQ_{1,d}(\Ob)	&=\sum_{\caT\in \Sigma(d)}\tr_{BC}\bigl[r(\caT)\bigl(\bbI\otimes \Ob\otimes \Ob^\dag\bigr)\bigr]=\sum_{\caT\in \Sigma(d)}\tr_{BC}\bigl[r(\caT\cap\Delta_{23})\bigl(\bbI\otimes \Ob\otimes \Ob^\dag\bigr)\bigr]\nonumber\\
	&=\sum_{\caT\in \Sigma(d)}\tr_{BC}\bigl[r(\caT_\Delta)\bigl(\bbI\otimes \Ob\otimes \Ob^\dag\bigr)\bigr]=
	2\bigl(|\tr\Ob|^2\bbI+d\Ob^\dag\Ob\bigr),
	\end{align}		
	which confirms \eref{eq:Qnd1}. 	
	Here the second and fifth equalities hold because $\Ob$ is diagonal in the computational basis; 
	the third equality holds because $\caT\cap\Delta_{23}=\caT_\Delta$ for $\caT\in \Sigma(d)$ by \lref{lem:TDeltaij}; the fourth equality follows from \lref{lem:TDeltaSum}. 
\end{proof}

\subsection{Proofs of \lref{lem:QObStabProj} and \thref{thm:ShNormStab}}

\begin{proof}[Proof of \lref{lem:QObStabProj}]
	When $d=2$, the set of stabilizer states forms a 3-design \cite{KuenG13,Zhu17MC,Webb16}, which means $\bQ(n,d,3)=P_\sym$, so \eqsref{eq:QObStabProj}{eq:QO0stabProj}  follow from \eqsref{eq:PsymOb}{eq:PsymObTraceless}.

	Next, suppose  $d$ is an odd prime. By virtue of \eref{eq:QndObShNormStab} and \lref{lem:RTObStabProj} we can deduce that
	\begin{align}
	\frac{6(D+d)}{D+2}\bcaQ_{n,d}(\Ob)&=\sum_{\caT\in \Sigma(d)} \caR_\caT(\Ob)=(K^2+K)\bbI+|\scrT_0\setminus \{\Delta\}|\Ob+ |\scrT_1\setminus\{\caT_{\tau_{23}}\}|K\Ob\nonumber\\
	&=(K^2+K)\bbI+d(K+1)\Ob,
	\end{align}	
	which implies \eref{eq:QObStabProj}. Here the last equality holds because $|\scrT_1\setminus\{\caT_{\tau_{23}}\}|=|\scrT_0\setminus \{\Delta\}|=d$. \Eref{eq:QO0stabProj} follows from \eref{eq:QObStabProj} and \lref{lem:QndObOb0} with $\tr(\Ob)=K$. 
\end{proof}

\begin{proof}[Proof of \thref{thm:ShNormStab}]
	The upper bounds in  \eref{eq:StabShadowLUB} follow from \eref{eq:StabObShadowLUB} and the fact that $\| \Ob \|\leq \| \Ob \|_2$; the lower bound in  \eref{eq:StabShadowLUB} follows from  \eref{eq:StabProjShNormRatio} with $K=1$. 	So it remains to prove  \eqssref{eq:StabObShadowLUB}{eq:StabObShadowDiagn}{eq:StabObShadowDiag1}. 
	
	When $d=2$, the set of stabilizer states forms a 3-design \cite{KuenG13,Zhu17MC,Webb16}, which means $\bQ(n,d,3)=P_\sym$.  In conjunction with \eref{eq:PsymObTraceless} we can deduce that $6\|\bcaQ_{n,d}(\Ob)\|=\|\Ob\|_2^2+\|\Ob\Ob^\dag+\Ob^\dag\Ob\|$. This equation implies the following result,
	\begin{equation}
	\frac{(D+2)\| \Ob \|_2^2}{D}\leq 	6\|\bcaQ_{n,d}(\Ob)\| \leq  \| \Ob \|_2^2 +2\| \Ob \|^2\leq 3 \| \Ob \|_2^2, 
	\end{equation}
	which in turn implies \eqsref{eq:StabObShadowLUB}{eq:StabObShadowDiagn} thanks to the formula in \eref{eq:QndObShNormStab}. 
	If in addition $n=1$ (which means $	D=d=2$) and $\Ob$ is diagonal in a stabilizer basis, then 
	\begin{align}
	\|\Ob\Ob^\dag+\Ob^\dag\Ob\|=2\|\Ob\|^2, \quad \|\Ob\|_2^2=2\|\Ob\|^2, \quad  6\|\caQ_{n,d}(\Ob)\|= 4\|\Ob\|^2, 
	\end{align}
	which implies  \eref{eq:StabObShadowDiag1}.

	From now on, we assume that $d$ is an odd prime. Then 	
	\begin{align}
	6\|\bcaQ_{n,d}(\Ob)\|&=6\max_\rho\tr\bigl[\bQ(n,d,3)\bigl(\rho\otimes \Ob \otimes \Ob^\dag\bigr)\bigr]\geq 
	\frac{6}{D}\tr\bigl[\bQ(n,d,3)\bigl(\bbI\otimes \Ob \otimes \Ob^\dag\bigr)\bigr]\nonumber\\
	&=\frac{D+2}{D}\tr\bigl[(\bbI+\mathrm{SWAP})\bigl(\Ob \otimes \Ob^\dag\bigr)\bigr]=\frac{(D+2)\tr(\Ob\Ob^\dag)}{D}=\frac{(D+2)\|\Ob\|_2^2}{D},   \label{eq:QndObProof1}
	\end{align}
	where the maximization is taken over all density operators (or all pure states) on $\caH_d^{\otimes n}$. Here the second equality holds because $6\tr_A \bQ(n,d,3)=(D+2)(\bbI+\mathrm{SWAP})$
	given that the set of stabilizer states forms a 2-design \cite{Zhu17MC,Webb16}. In addition, \eref{eq:RTOS3norm} and \lref{lem:RTO} imply  that
	\begin{align}	
	\sum_{\caT\in \scrT_\sym}\|\caR_\caT(\Ob)\|=  \| \Ob \|_2^2 +2\| \Ob \|^2,   \quad 
	\sum_{\caT\in \scrT_\ns}\|\caR_\caT(\Ob)\|\leq (2d-4) \| \Ob \|_2^2, 
	\end{align}	
	given that $|\scrT_\ns|=2d-4$ and $\Ob$ is traceless. In conjunction with  \eref{eq:QndObShNormStab} we can deduce that 
	\begin{align}
	\frac{6(D+d)}{D+2} \|\bcaQ_{n,d}(\Ob)\|&
	\leq\sum_{\caT\in \Sigma(d)}\|\caR_\caT(\Ob)\|	
	=\sum_{\caT\in \scrT_\sym}\|\caR_\caT(\Ob)\|+\sum_{\caT\in \scrT_\ns}\|\caR_\caT(\Ob)\|\nonumber\\
	&\leq (2d-3) \| \Ob \|_2^2 +2\| \Ob \|^2,   \label{eq:QndObProof2}
	\end{align}	
	which implies  \eref{eq:StabObShadowLUB} given 	
	\eqsref{eq:QndObShNormStab}{eq:QndObProof1}.

	Next, suppose $\Ob$ is diagonal in a stabilizer basis. By virtue of \lsref{lem:RTO}, \ref{lem:RTOdiag} and a similar reasoning that leads to \eref{eq:QndObProof2} we can deduce that
	\begin{align}\label{eq:QndObDiagn}
	\sum_{\caT\in \scrT_\ns}\|\caR_\caT(\Ob)\|\leq (d-2) \left(\| \Ob \|_2^2+\| \Ob \|^2\right),\quad 	\frac{6(D+d)}{D+2} \|\bcaQ_{n,d}(\Ob)\| 
	&\leq(d-1) \| \Ob \|_2^2 +d\| \Ob \|^2, 
	\end{align}		
	which implies \eref{eq:StabObShadowDiagn}. 
	If in addition $n=1$, then $6\|\bcaQ_{n,d}(\Ob)\| =(d+2)\| \Ob \|^2$ thanks to \lref{lem:QndObn1}, 
	which implies \eref{eq:StabObShadowDiag1} according to \eref{eq:QndObShNormStab}.	
\end{proof}

\section{\label{app:ThirdMomentGen}Proofs of results on the third moment of a general Clifford orbit}

In this appendix we prove our main results on the third moment of a general Clifford orbit, including
\lref{lem:kappaTLUB}, \psref{pro:hkaka}, \ref{pro:orbit3designCon}, \thsref{thm:Phi3LUB}-\ref{thm:ShNormGen},  and \cosref{cor:Phi3UB}-\ref{cor:ShNormGenPos}, assuming that the local dimension $d$ is an odd prime. To this end,  some auxiliary results (\lsref{lem:hkaDeltaNS}-\ref{lem:QPsiProjEig}) are introduced in \aref{app:ThirdMomentAux}. 

\subsection{\label{app:hkakaTLUBproof}Proofs of  \lref{lem:kappaTLUB} and \pref{pro:hkaka}}

\begin{proof}[Proof of \lref{lem:kappaTLUB}]
	If $\caT\in \scrT_1$, then  $R(\caT)$ is Hermitian, which means $\kappa(\caT)$ is real. So $\kappa(\caT)$ is real for all $\caT\in \Sigma(d)$ thanks to \pref{pro:kappaSym} given that $\Sigma(d)=\scrT_1\sqcup \tau_{12}\scrT_1$, where $\tau_{12}$ is the transposition in $S_3$ as defined in \eref{eq:zetatau}.  
	
	The  inequalities $-1\leq \kappa(\caT)\leq1$ in
	\eref{eq:kappaTLUB} follow from  \lref{lem:RTO}. If  $\scrT_\defe$ is nonempty, then it contains a stochastic Lagrangian subspace $\tcaT$ of CSS type and the operator $R(\tcaT)$ is proportional to a projector, which means $0\leq \kappa(\tcaT)\leq1$.  Therefore,  $0\leq \kappa(\caT)\leq1$ for all $\caT\in \scrT_\defe$ by \pref{pro:kappaSym}
	given that all stochastic Lagrangian subspaces  in $\scrT_\defe$ form one orbit under the action of  $S_3$ according to \lref{lem:SemigroupT}. If $\caT\in \scrT_\sym$, then $\kappa(\caT)=1$ thanks to \pref{pro:kappaSym} again.

	The second inequality in \eref{eq:kappaSigLUB}  is trivial; the last  inequality    follows from \eref{eq:kappaTLUB} and the fact that $|\Sigma(d)|=2d+2$. To prove the first inequality in \eref{eq:kappaSigLUB}, note that the set of stabilizer states in $\caH_d^{\otimes n}$ forms 2-design. Let $|\Psi'\>$ be a random stabilizer state in $\Stab(n,d)$; then 
	\begin{equation}
	\begin{gathered}
	\bbE_{\Stab(n,d)} |\<\Psi'|\Psi\>|^2=\frac{1}{D},\quad 
	\bbE_{\Stab(n,d)} |\<\Psi'|\Psi\>|^4=\frac{2}{D(D+1)},\\
	\left(\bbE_{\Stab(n,d)} |\<\Psi'|\Psi\>|^4\right)^2\leq \left(\bbE_{\Stab(n,d)} |\<\Psi'|\Psi\>|^2 \right)\left(\bbE_{\Stab(n,d)} |\<\Psi'|\Psi\>|^6 \right).
	\end{gathered}
	\end{equation}
	Therefore,
	\begin{align}
	\tr\bigl[Q(n,d,3)(|\Psi\>\<\Psi|)^{\otimes 3}\bigr]=\bbE_{\Stab(n,d)} |\<\Psi'|\Psi\>|^6 \geq \frac{\left(\bbE_{\Stab(n,d)} |\<\Psi'|\Psi\>|^4\right)^2}{\bbE_{\Stab(n,d)} |\<\Psi'|\Psi\>|^2}=\frac{4}{D(D+1)^2}.
	\end{align}
	In conjunction with  \eref{eq:bQnd3} we can deduce that
	\begin{align}
	\kappa(\Sigma(d))=D(D+1)(D+d)\tr\bigl[Q(n,d,3)(|\Psi\>\<\Psi|)^{\otimes 3}\bigr]\geq \frac{4(D+d)}{D+1}, 
	\end{align}
	which confirms the first inequality in \eref{eq:kappaSigLUB}. 
	
	\Eref{eq:kappaNsLUB} follows from \eref{eq:kappaSigLUB} given that $\kappa(\scrT_\ns)=\kappa(\Sigma(d))-6$ and $|\kappa|(\scrT_\ns)=|\kappa|(\Sigma(d))-6$. 
	\Eref{eq:kappaDefIsoLUB} follows from \eqsref{eq:kappaTLUB}{eq:kappaSigLUB} given that $\Sigma(d)=\scrT_\iso\sqcup\scrT_\defe$ and $|\scrT_\defe|\leq  4$ by \eref{eq:Sigma33DefectNum}.
\end{proof}

\begin{proof}[Proof of \pref{pro:hkaka}]
	The equality in \eref{eq:hkakaTLUB} follows from \eqssref{eq:DualBasis}{eq:hkapsiT}{eq:kappaSigNSscrT}, while the two inequalities follow from \lref{lem:kappaTLUB}.

	The first equality in \eref{eq:hkakaSig} follows from \eref{eq:hkakaTLUB}  given that  $|\Sigma(d)|=2d+2$; the second equality holds because $\kappa(\Sigma(d)=\kappa(\scrT_\ns)+6$ by \eref{eq:kappaSigNSscrT}. By virtue of \eref{eq:hkakaTLUB} and the fact $|\scrT_\ns|=2(d-2)$ we can deduce that
	\begin{align}
	\hka(\scrT_\ns)&=\frac{D+2}{D-1}\left[\kappa(\scrT_\ns)-\frac{d-2}{D+d}\kappa(\Sigma(d))\right]	
	=\frac{D+2}{(D-1)(D+d)}\left[(D+2)\kappa(\scrT_\ns)-6(d-2)\right],
	\end{align}
	which implies   the last equality in \eref{eq:hkakaSig}. 
	
	Finally, by virtue of \pref{pro:kappaSym} and \eref{eq:hkakaTLUB}  we can deduce that	
	\begin{align}
	|\hka|(\Sigma(d))=|\hka|(\scrT_\sym)+|\hka|(\scrT_\ns)=6\hka(\Delta)+|\hka|(\scrT_\ns)
	=\frac{D+2}{D-1}\left[6-\frac{3\kappa(\Sigma(d))}{D+d}\right]+|\hka|(\scrT_\ns)
	\end{align}	
	given that $|\scrT_\sym|=6$, where $\Delta$ is the diagonal stochastic Lagrangian subspace defined in
	\eref{eq:Delta}. This observation  confirms \eref{eq:hkaSigNsAbs} and completes the proof of \pref{pro:hkaka}.
\end{proof}

\subsection{\label{app:ThirdMomentAux}Auxiliary results on the third moments of general Clifford orbits}

\subsubsection{Auxiliary results \lsref{lem:hkaDeltaNS}-\ref{lem:QPsiProjEig}}

The following two lemmas further clarify the relations between $\hka(\Psi,\scrT_\ns)$, $\hka(\Psi,\Delta)$, $\hka(\Psi,\scrT_\ns)$, and related functions, where $\Delta$ is the diagonal stochastic Lagrangian subspace in  $\bbF_d^{2t}$ with $t=3$ as defined in \eref{eq:Delta}.

\begin{lemma}\label{lem:hkaDeltaNS}
	Suppose $d$ is an odd prime, $\scrT$ is any subset in $\Sigma(d)$ that contains $\scrT_\sym$,  and $|\Psi\>\in\caH_d^{\otimes n}$. Then 
	\begin{gather}
	\hka(\Psi,\caT)=|\hka(\Psi,\Delta)|=\hka(\Psi,\Delta)=\frac{D+2}{D-1}
	-\frac{\hka(\Psi,\Sigma(d))}{2(D-1)}=	
	1-\frac{\hka(\Psi,\scrT_\ns)}{2(D+2)}\quad \forall \caT\in \scrT_\sym, \label{eq:hkaDeltaNS}\\
	|\hka|(\Psi,\scrT_\sym)=\hka(\Psi,\scrT_\sym)=6-\frac{3}{D+2}\hka(\Psi,\scrT_\ns), \label{eq:hkaSym}\\	
	\hka(\Psi,\Sigma(d))=6+\frac{D-1}{D+2}\hka(\Psi,\scrT_\ns)=2D+4-2(D-1)\hka(\Psi,\Delta),  \label{eq:hkaSigNS}\\
	\hka(\Psi,\scrT_\ns)=
	\frac{(D+2)^2}{(D-1)(D+d)}\kappa(\Psi,\Sigma(d))-\frac{6(D+2)}{D-1}=\frac{(D+2)^2}{(D-1)(D+d)}\kappa(\Psi,\scrT_\ns)-\frac{6(D+2)(d-2)}{(D-1)(D+d)},  \label{eq:hkaNskappaNs}\\	
	6+\frac{D-1}{D+2}|\hka|(\Psi,\scrT_\ns)\leq |\hka|(\Psi,\Sigma(d))\leq 6+\frac{D+5}{D+2}|\hka|(\Psi,\scrT_\ns),\label{eq:hkaSigAbsNS}\\
	|\hka(\Psi,\scrT)-6|\leq \frac{D}{D+2}|\hka|(\Psi,\scrT_\ns),\quad
	6-\frac{3}{D+2}\hka(\Psi,\scrT_\ns)\leq |\hka|(\Psi,\scrT)\leq 6+\frac{D+5}{D+2}|\hka|(\Psi,\scrT_\ns). \label{eq:hkaGenAbsNS}
	\end{gather}
\end{lemma}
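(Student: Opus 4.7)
The plan is to derive every identity and bound in \lref{lem:hkaDeltaNS} as an algebraic consequence of \pref{pro:hkaka} (already established), together with the two elementary facts $\kappa(\Psi,\caT)=1$ for $\caT\in\scrT_\sym$ (\pref{pro:kappaSym}) and $|\scrT_\sym|=6$. The starting point will be \eref{eq:hkaDeltaNS}: substituting $\kappa(\Psi,\caT)=1$ into the first identity of \pref{pro:hkaka} immediately yields
\[\hka(\Psi,\caT)=\frac{D+2}{D-1}\left[1-\frac{\kappa(\Psi,\Sigma(d))}{2(D+d)}\right]\quad\forall\,\caT\in\scrT_\sym,\]
so $\hka(\Psi,\caT)$ is constant on $\scrT_\sym$ and equal to $\hka(\Psi,\Delta)$. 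Replacing $\kappa(\Psi,\Sigma(d))/(D+d)$ by $\hka(\Psi,\Sigma(d))/(D+2)$ via the first identity of \eref{eq:hkakaSig} gives the $\hka(\Psi,\Sigma(d))$ form, while substituting $\hka(\Psi,\Sigma(d))=6+(D-1)\hka(\Psi,\scrT_\ns)/(D+2)$ from the second identity of \eref{eq:hkakaSig} and simplifying yields the $\hka(\Psi,\scrT_\ns)$ form. Positivity $\hka(\Psi,\Delta)\geq(D+2)/(D+d)>0$ comes straight from the second inequality of \eref{eq:hkakaTLUB}, so $|\hka(\Psi,\Delta)|=\hka(\Psi,\Delta)$.

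From there \eref{eq:hkaSym} is simply six times \eref{eq:hkaDeltaNS}. For \eref{eq:hkaSigNS}, the first equality is recalled from \pref{pro:hkaka}, and the second follows by solving \eref{eq:hkaDeltaNS} for $\hka(\Psi,\scrT_\ns)$ in terms of $\hka(\Psi,\Delta)$ and substituting. \Eref{eq:hkaNskappaNs} then combines $\hka(\Psi,\Sigma(d))=\tfrac{D+2}{D+d}\kappa(\Psi,\Sigma(d))$ with the relation $\hka(\Psi,\Sigma(d))=6+\tfrac{D-1}{D+2}\hka(\Psi,\scrT_\ns)$ to solve for $\hka(\Psi,\scrT_\ns)$; the second expression is obtained by replacing $\kappa(\Psi,\Sigma(d))$ with $\kappa(\Psi,\scrT_\ns)+6$ (from \eref{eq:kappaSigNSscrT}) and collecting terms, which makes the $\frac{6(D+2)(d-2)}{(D-1)(D+d)}$ constant emerge cleanly.

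The bounds \eref{eq:hkaSigAbsNS} and \eref{eq:hkaGenAbsNS} are pure triangle-inequality estimates enabled by the decomposition $\Sigma(d)=\scrT_\sym\sqcup\scrT_\ns$ and the constancy of $\hka(\Psi,\cdot)$ on $\scrT_\sym$. Using $\hka(\Psi,\Delta)>0$, one has $|\hka|(\Psi,\Sigma(d))=6\hka(\Psi,\Delta)+|\hka|(\Psi,\scrT_\ns)=6-\tfrac{3}{D+2}\hka(\Psi,\scrT_\ns)+|\hka|(\Psi,\scrT_\ns)$; applying $-|\hka|(\Psi,\scrT_\ns)\le\hka(\Psi,\scrT_\ns)\le|\hka|(\Psi,\scrT_\ns)$ produces both sides of \eref{eq:hkaSigAbsNS}. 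For the more delicate \eref{eq:hkaGenAbsNS}, the right bookkeeping step is to write $\scrT_\ns=\scrT'\sqcup\scrT''$ with $\scrT':=\scrT\setminus\scrT_\sym$ and $\scrT'':=\scrT_\ns\setminus\scrT$, so that
\[\hka(\Psi,\scrT)-6=\hka(\Psi,\scrT')-\tfrac{3}{D+2}\bigl[\hka(\Psi,\scrT')+\hka(\Psi,\scrT'')\bigr]=\tfrac{(D-1)\hka(\Psi,\scrT')-3\hka(\Psi,\scrT'')}{D+2};\]
the triangle inequality combined with $|\hka|(\Psi,\scrT')+|\hka|(\Psi,\scrT'')=|\hka|(\Psi,\scrT_\ns)$ and the crude estimate $D-1\le D$ then gives the factor $D/(D+2)$. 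The second half of \eref{eq:hkaGenAbsNS} follows from this and $6\hka(\Psi,\Delta)=6-\tfrac{3}{D+2}\hka(\Psi,\scrT_\ns)$.

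There is no real obstacle here; the only step requiring a little care is the decomposition in the last paragraph, where one must avoid bounding $|\hka(\Psi,\scrT')|$ and $|\hka(\Psi,\scrT'')|$ separately by $|\hka|(\Psi,\scrT_\ns)$ (which would give a factor $(D+2)/(D+2)=1$) rather than exploiting the constraint $|\hka|(\Psi,\scrT')+|\hka|(\Psi,\scrT'')=|\hka|(\Psi,\scrT_\ns)$ to produce the sharper $D/(D+2)$ coefficient. Everything else is routine rewriting of identities already supplied by \pref{pro:hkaka} and \pref{pro:kappaSym}.
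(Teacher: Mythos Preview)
Your proposal is correct and follows essentially the same route as the paper: both derive \eref{eq:hkaDeltaNS} by specializing \pref{pro:hkaka} to $\kappa(\Psi,\caT)=1$ and invoking positivity from \eref{eq:hkakaTLUB}, obtain \eqref{eq:hkaSym}--\eqref{eq:hkaNskappaNs} as algebraic rearrangements of \pref{pro:hkaka} and \eref{eq:kappaSigNSscrT}, and handle \eref{eq:hkaSigAbsNS}--\eref{eq:hkaGenAbsNS} via the same $\scrT_\sym\sqcup\scrT_\ns$ decomposition and triangle-inequality bound $\max(D-1,3)\le D$. Your remark about exploiting $|\hka|(\Psi,\scrT')+|\hka|(\Psi,\scrT'')=|\hka|(\Psi,\scrT_\ns)$ to get the $D/(D+2)$ factor is exactly the step the paper takes (with $\scrT'=\scrT_\ns\cap\scrT$, $\scrT''=\scrT_\ns\setminus\scrT$).
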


\begin{lemma}\label{lem:kahkaDef}
	Suppose $d$ is an odd prime satisfying  $d\neq 2\mmod 3$
	and $|\Psi\>\in\caH_d^{\otimes n}$. Then 
	\begin{gather}
	\hka(\Psi,\scrT_\iso) +\hka(\Psi,\caT)\|R(\scrT_\defe)\|=2(D+2)\kappa(\Psi,\caT)\quad \forall \caT\in \scrT_\defe. \label{eq:kahkaDef}
	\end{gather}
\end{lemma}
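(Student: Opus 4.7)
The plan is to evaluate the scalar $\tr[R(\scrT_\defe)\bQ(\orb(\Psi))]$ in two complementary ways and equate the results. For the first evaluation, since $R(\scrT_\defe)$ lies in the commutant of $\Cl(n,d)^{\totimes 3}$, it commutes with every $U^{\otimes 3}$ for $U\in\Cl(n,d)$, so the Clifford average over the orbit drops out and
\begin{align*}
\tr[R(\scrT_\defe)\bQ(\orb(\Psi))]=\pi_{[3]}\tr\bigl[R(\scrT_\defe)(|\Psi\>\<\Psi|)^{\otimes 3}\bigr]=\pi_{[3]}\kappa(\Psi,\scrT_\defe)=\pi_{[3]}|\scrT_\defe|\kappa(\Psi,\caT),
\end{align*}
where in the last step I use \lref{lem:SemigroupT} and \pref{pro:kappaSym} to conclude that $\kappa(\Psi,\cdot)$ is constant on the single $S_3$-orbit $\scrT_\defe$.

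For the second evaluation, I expand $\bQ(\orb(\Psi))$ via \eref{eq:MomentQpsiBar} and split the sum according to $\Sigma(d)=\scrT_\iso\sqcup\scrT_\defe$. On the $\scrT_\iso$ piece, the composition rule $R(\caT')R(O)=R(\caT'O)$ combined with the right-invariance $\scrT_\defe O=\scrT_\defe$ from \lref{lem:SemigroupT} gives $R(\scrT_\defe)R(\caT_O)=R(\scrT_\defe)R(O)=R(\scrT_\defe)$ for every $\caT_O\in\scrT_\iso$, so each such term contributes $\tr[R(\scrT_\defe)]$ and the piece evaluates to $\tr[R(\scrT_\defe)]\hka(\Psi,\scrT_\iso)$. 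On the $\scrT_\defe$ piece, \pref{pro:kappaSym} lets me pull the common value $\hka(\Psi,\caT)$ outside the sum, and then the projector identity $R(\scrT_\defe)^2=\|R(\scrT_\defe)\|R(\scrT_\defe)$ from \lref{lem:RTisodefe} yields $\sum_{\caT'\in\scrT_\defe}\tr[R(\scrT_\defe)R(\caT')]=\|R(\scrT_\defe)\|\tr[R(\scrT_\defe)]$. Collecting,
\begin{align*}
6\tr[R(\scrT_\defe)\bQ(\orb(\Psi))]=\tr[R(\scrT_\defe)]\bigl[\hka(\Psi,\scrT_\iso)+\|R(\scrT_\defe)\|\hka(\Psi,\caT)\bigr].
\end{align*}

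Equating the two expressions and cancelling $\tr[R(\scrT_\defe)]$ leaves the desired identity once one verifies $D(D+1)(D+2)|\scrT_\defe|/\tr[R(\scrT_\defe)]=2(D+2)$, equivalently $\tr[R(\scrT_\defe)]=|\scrT_\defe|D(D+1)/2$. This holds uniformly by \pref{pro:RTT1T2tr}: exactly half of $\scrT_\defe$ lies in $\scrT_0$ (trace $D$ each) and half in $\scrT_1$ (trace $D^2$ each), so the average trace is $D(D+1)/2$; alternatively, the same values for $|\scrT_\defe|$ and $\tr[R(\scrT_\defe)]$ can be read off directly from \tref{tab:RisoRdefe} in the two cases $d=3$ and $d=1\mmod 3$. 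There is no serious obstacle; the argument is essentially a bookkeeping identity assembled from \lref{lem:SemigroupT}, \lref{lem:RTisodefe}, and \pref{pro:kappaSym}, and the only item requiring any care is the uniform verification of the numerical prefactor across the two congruence classes of $d$.
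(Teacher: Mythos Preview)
Your argument is correct and takes a genuinely different route from the paper. The paper proceeds by pure algebra: it writes $\hka(\scrT_\iso)+\hka(\caT)\|R(\scrT_\defe)\|=\hka(\Sigma(d))+(2D-2)\hka(\caT)$ (checking the two cases $d=3$ and $d=1\mmod 3$ separately via the explicit values of $|\scrT_\defe|$ and $\|R(\scrT_\defe)\|$), and then substitutes the closed formulas for $\hka(\Sigma(d))$ and $\hka(\caT)$ from \pref{pro:hkaka} to see the $\kappa(\Sigma(d))$ terms cancel, leaving $2(D+2)\kappa(\caT)$. Your approach instead interprets the identity as the equality of two evaluations of $\tr[R(\scrT_\defe)\bQ(\orb(\Psi))]$: one via the definition of $\kappa$, the other via the expansion \eref{eq:MomentQpsiBar} together with the absorption $R(\scrT_\defe)R(O)=R(\scrT_\defe)$ and the projector identity $R(\scrT_\defe)^2=\|R(\scrT_\defe)\|R(\scrT_\defe)$. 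The paper's route is shorter and requires nothing beyond \pref{pro:hkaka}; yours is more structural and explains \emph{why} the identity holds---it is the trace of $\bQ$ against the defect block, with the numerical prefactor $\tr[R(\scrT_\defe)]=|\scrT_\defe|D(D+1)/2$ doing the final matching. Both are equally valid, and your verification of that prefactor (half of $\scrT_\defe$ in each of $\scrT_0,\scrT_1$ via \pref{pro:RTT1T2tr}) handles the two congruence classes uniformly rather than case-by-case.
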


\begin{lemma}\label{lem:kahakRelation}
	Suppose $d$ is an odd prime and $|\Psi\>\in\caH_d^{\otimes n}$. Then the following seven statements are equivalent:
	\begin{enumerate}
		\item $\hka(\Psi,\scrT_\ns)\geq 0$.
		\item $\hka(\Psi,\Sigma(d))\geq 6$. 
		\item $\hka(\Psi,\Delta)\leq 1$.
		\item $\hka(\Psi,\caT)\leq 1$ for all $\caT\in \Sigma(d)$.
		\item 	$\hka(\Psi,\caT)\leq \kappa(\Psi,\caT)$ for all $\caT\in \Sigma(d)$.
		\item $\kappa(\Psi,\Sigma(d))\geq 6(D+d)/(D+2)$. 
		\item $\kappa(\Psi,\scrT_\ns)\geq 6(d-2)/(D+2)$. 		
	\end{enumerate}	
\end{lemma}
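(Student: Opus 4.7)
The plan is to reduce every statement to a linear inequality in a single scalar (most conveniently $\hka(\Psi,\Sigma(d))$, or equivalently $\kappa(\Psi,\Sigma(d))$) by chasing through the affine identities already packaged in \pref{pro:hkaka} and \lref{lem:hkaDeltaNS}. Once all seven statements are rewritten in this normalized form, each equivalence collapses to a one-line algebraic check. The only conceptual ingredient beyond algebra is the pointwise bound $\kappa(\Psi,\caT)\leq 1$ from \lref{lem:kappaTLUB}, used to transfer information at the single subspace $\Delta$ to all of $\Sigma(d)$.

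Concretely, I would proceed as follows. The equivalence (1) $\Leftrightarrow$ (2) is immediate from $\hka(\Psi,\Sigma(d))=6+\tfrac{D-1}{D+2}\hka(\Psi,\scrT_\ns)$ in \eref{eq:hkakaSig}. Next, (2) $\Leftrightarrow$ (3) follows from $\hka(\Psi,\Delta)=\tfrac{D+2}{D-1}-\tfrac{\hka(\Psi,\Sigma(d))}{2(D-1)}$ in \eref{eq:hkaDeltaNS}: the condition $\hka(\Psi,\Delta)\leq 1$ is exactly $\hka(\Psi,\Sigma(d))\geq 6$. For (3) $\Leftrightarrow$ (4), one direction is trivial since $\Delta\in\Sigma(d)$; for the converse I would use the explicit expression $\hka(\Psi,\caT)=\tfrac{D+2}{D-1}\bigl[\kappa(\Psi,\caT)-\tfrac{\kappa(\Psi,\Sigma(d))}{2(D+d)}\bigr]$ from \eref{eq:hkakaTLUB} together with $\kappa(\Psi,\caT)\leq 1=\kappa(\Psi,\Delta)$ (\pref{pro:kappaSym} and \lref{lem:kappaTLUB}) to conclude $\hka(\Psi,\caT)\leq\hka(\Psi,\Delta)\leq 1$. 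The equivalence (2) $\Leftrightarrow$ (6) follows from $\hka(\Psi,\Sigma(d))=\tfrac{D+2}{D+d}\kappa(\Psi,\Sigma(d))$, and (6) $\Leftrightarrow$ (7) is the substitution $\kappa(\Psi,\Sigma(d))=6+\kappa(\Psi,\scrT_\ns)$ together with $\tfrac{6(D+d)}{D+2}-6=\tfrac{6(d-2)}{D+2}$.

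Finally, for (5), the identity
\begin{equation*}
\hka(\Psi,\caT)-\kappa(\Psi,\caT)=\frac{3}{D-1}\kappa(\Psi,\caT)-\frac{(D+2)\kappa(\Psi,\Sigma(d))}{2(D-1)(D+d)}
\end{equation*}
shows that $\hka(\Psi,\caT)\leq\kappa(\Psi,\caT)$ is equivalent to $6(D+d)\kappa(\Psi,\caT)\leq(D+2)\kappa(\Psi,\Sigma(d))$. Requiring this for every $\caT$ and using $\kappa(\Psi,\caT)\leq 1$ with equality for $\caT\in\scrT_\sym$ (\pref{pro:kappaSym}), the strongest instance is exactly (6); this gives (5) $\Leftrightarrow$ (6) and closes the loop. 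There is no real obstacle here: the seven statements are linear or pointwise conditions on quantities that are already tied together by the explicit formulas of \pref{pro:hkaka} and \lref{lem:hkaDeltaNS}, and the only non-substitutive step—promoting information from $\Delta$ to all of $\Sigma(d)$—is handled in a single line by the monotonicity of $\hka(\Psi,\caT)$ in $\kappa(\Psi,\caT)$ together with $\kappa(\Psi,\caT)\leq 1$.
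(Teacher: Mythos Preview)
Your proposal is correct and follows essentially the same approach as the paper: reduce everything to the linear identities in \pref{pro:hkaka} and \lref{lem:hkaDeltaNS}, and use $\kappa(\Psi,\caT)\leq 1$ with equality on $\scrT_\sym$ to pass between the single-subspace statement at $\Delta$ and the uniform statements over $\Sigma(d)$. The only cosmetic difference is that the paper closes the loop via $(5)\Rightarrow(4)\Rightarrow(6)\Rightarrow(5)$, whereas you argue $(5)\Leftrightarrow(6)$ directly from the displayed identity; both use the same ingredients.
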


\begin{lemma}\label{lem:hkakaNsIneq}
	Suppose $d$ is an odd prime, $|\Psi\>\in\caH_d^{\otimes n}$, $0\leq j\leq 6$, and $\scrT$ is a subset of $\Sigma(d)$ that contains $\scrT_\ns$. Then 
	\begin{align}\label{eq:hkakascrT}
	j\hka(\Psi,\Delta)+\hka(\Psi,\scrT_\ns)\leq \frac{D+2}{D+d}[j+\kappa(\Psi,\scrT_\ns)],\quad \hka(\Psi,\scrT)\leq \frac{D+2}{D+d}\kappa(\Psi,\scrT).
	\end{align}
\end{lemma}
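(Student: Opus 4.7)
The plan is to observe that the two inequalities are in fact a single statement viewed from two angles, and then to reduce everything to the upper bound $\kappa(\Psi,\scrT_\ns)\leq 2(d-2)$ from \lref{lem:kappaTLUB}. Specifically, any $\scrT\subseteq\Sigma(d)$ containing $\scrT_\ns$ can be written as $\scrT=\scrT_\ns\sqcup\scrT_s$ with $\scrT_s\subseteq\scrT_\sym$ of cardinality $j\in\{0,1,\ldots,6\}$. By \pref{pro:kappaSym} we have $\kappa(\Psi,\caT)=1$ for $\caT\in\scrT_\sym$, and by \eref{eq:hkaDeltaNS} we have $\hka(\Psi,\caT)=\hka(\Psi,\Delta)$ for $\caT\in\scrT_\sym$. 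Hence $\kappa(\Psi,\scrT)=j+\kappa(\Psi,\scrT_\ns)$ and $\hka(\Psi,\scrT)=j\hka(\Psi,\Delta)+\hka(\Psi,\scrT_\ns)$, so the second inequality in \eref{eq:hkakascrT} is literally the first.

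Next I would substitute the explicit formulas from \lref{lem:hkaDeltaNS}, namely $\hka(\Psi,\Delta)=1-\hka(\Psi,\scrT_\ns)/[2(D+2)]$ and the linear relation \eref{eq:hkaNskappaNs} expressing $\hka(\Psi,\scrT_\ns)$ in terms of $\kappa(\Psi,\scrT_\ns)$. A direct computation then yields
\begin{equation*}
j\hka(\Psi,\Delta)+\hka(\Psi,\scrT_\ns)=j+\frac{2(D+2)-j}{2(D+2)}\hka(\Psi,\scrT_\ns),
\end{equation*}
and after plugging in the expression for $\hka(\Psi,\scrT_\ns)$ and comparing with $(D+2)[j+\kappa(\Psi,\scrT_\ns)]/(D+d)$, the constant terms (those independent of $\kappa(\Psi,\scrT_\ns)$) should collapse via the identity $-j(d-2)(D-1)+3(2(D+2)-j)(d-2)=(d-2)(D+2)(6-j)$, leaving an expression of the shape
\begin{equation*}
\frac{(D+2)(6-j)}{(D-1)(D+d)}\Bigl[(d-2)-\tfrac{1}{2}\kappa(\Psi,\scrT_\ns)\Bigr]
\end{equation*}
for the difference (right hand side minus left hand side).

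Finally, since $0\leq j\leq 6$ the prefactor $(D+2)(6-j)/[(D-1)(D+d)]$ is nonnegative, so the inequality reduces to $\kappa(\Psi,\scrT_\ns)\leq 2(d-2)$, which is exactly the bound supplied by \lref{lem:kappaTLUB}. Thus the proof boils down to a bookkeeping simplification followed by one appeal to an already-established bound; the only delicate point is carrying out the algebraic collapse cleanly so that the factor $(6-j)$ emerges, making the case $j=6$ an equality (as a consistency check) and the remaining cases follow uniformly. I do not foresee a genuine obstacle beyond verifying these identities carefully.
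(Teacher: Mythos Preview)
Your proof is correct. Both your approach and the paper's reduce the second inequality to the first via the decomposition $\scrT=\scrT_\ns\sqcup\scrT_s$ with $|\scrT_s|=j$, and both ultimately rest on the same bound $\kappa(\Psi,\scrT_\ns)\leq 2(d-2)$ (equivalently, $\hka(\Psi,\Delta)\geq (D+2)/(D+d)$).

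The difference is purely in execution. You substitute the explicit formulas from \lref{lem:hkaDeltaNS} and carry out the algebra to obtain a closed form for the difference, namely $\frac{(D+2)(6-j)}{(D-1)(D+d)}\bigl[(d-2)-\tfrac{1}{2}\kappa(\Psi,\scrT_\ns)\bigr]$, and then invoke $\kappa(\Psi,\scrT_\ns)\leq 2(d-2)$. The paper instead observes directly from \pref{pro:hkaka} that the $j=6$ case is an \emph{equality}, $6\hka(\Delta)+\hka(\scrT_\ns)=\frac{D+2}{D+d}[6+\kappa(\scrT_\ns)]$, and that each unit decrease in $j$ lowers the left side by $\hka(\Delta)\geq\frac{D+2}{D+d}$ while lowering the right side by exactly $\frac{D+2}{D+d}$; this gives the inequality for all $0\leq j\leq 6$ with no computation. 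Your route yields slightly more (an exact expression for the slack), at the cost of the bookkeeping you flagged; the paper's route trades that information for a two-line argument.
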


\begin{lemma}\label{lem:hkaLB}
	Suppose $d$ is an odd prime  and $|\Psi\>\in\caH_d^{\otimes n}$. Then    
	\begin{gather}
	\kappa(\Psi,\caT)\geq \frac{2\kappa(\Psi,\scrT_\iso)}{|O_3(d)|}-1\geq  \frac{2[\kappa(\Psi,\Sigma(d))-|\scrT_\defe|]}{|O_3(d)|}-1\geq -\frac{D-3}{D+1}\quad  \forall \caT\in\scrT_\iso, \label{eq:kappaTSLB}\\
	\kappa(\Psi,\caT)\geq -\frac{D-3}{D+1},\quad  \hka(\Psi,\caT)\geq -\frac{D+2}{D+1} \quad \forall \caT\in \Sigma(d). \label{eq:hkaLB}
	\end{gather}	
\end{lemma}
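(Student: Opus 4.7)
My plan is to split the proof along two axes: the two inequality chains live in different regimes (the ``good'' $\scrT_\iso$ regime where $R(\caT)$ is unitary, and the $\scrT_\defe$ regime where it is not), and the final numerical comparisons require casework on $d\bmod 3$.

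First, for the chain in \eref{eq:kappaTSLB} I would use the group-average structure. Since $\{R(O)\}_{O\in O_3(d)}$ is a unitary representation of $O_3(d)$ (from $R(O)R(O')=R(OO')$, a consequence of \eref{eq:OTTO}) and $R(\scrT_\iso)=\sum_{O}R(O)$, \lref{lem:RTisodefe} tells us that $\Pi:=R(\scrT_\iso)/|O_3(d)|$ is the orthogonal projector onto the $O_3(d)$-invariant subspace of $\caH_d^{\otimes 3n}$. Write $|\Psi\>^{\otimes 3}=|\phi_0\>+|\phi_1\>$ with $|\phi_0\>=\Pi|\Psi\>^{\otimes 3}$ so that $\|\phi_0\|^2=\kappa(\Psi,\scrT_\iso)/|O_3(d)|$. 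For $\caT=\caT_O\in\scrT_\iso$, $R(O)$ fixes $|\phi_0\>$ and acts unitarily on the complement, giving $\kappa(\Psi,\caT_O)=\|\phi_0\|^2+\<\phi_1|R(O)|\phi_1\>$; since $\kappa$ is real (\lref{lem:kappaTLUB}) and $|\<\phi_1|R(O)|\phi_1\>|\le\|\phi_1\|^2$, the first inequality follows. The second inequality is $\kappa(\Psi,\scrT_\iso)=\kappa(\Psi,\Sigma(d))-\kappa(\Psi,\scrT_\defe)\ge\kappa(\Psi,\Sigma(d))-|\scrT_\defe|$ from $\kappa(\Psi,\caT)\le 1$ on $\scrT_\defe$. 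For the third inequality I would substitute $\kappa(\Psi,\Sigma(d))\ge 4(D+d)/(D+1)$ from \lref{lem:kappaTLUB} and verify each of $d=3$, $d=1\bmod 3$, $d=2\bmod 3$ using the values of $|O_3(d)|$ and $|\scrT_\defe|$ in \lref{lem:O3Dh} and \eref{eq:Sigma33DefectNum}; the case $d=1\bmod 3$ saturates the bound, while the other two reduce to $D\ge 1$ after clearing denominators. The extension of $\kappa(\Psi,\caT)\ge -(D-3)/(D+1)$ to all $\caT\in\Sigma(d)$ is immediate on $\scrT_\defe$ since $\kappa\ge 0$ and $D=d^n\ge 3$.

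For the $\hka$ bound I would use the identity $\hka(\Psi,\caT)=\frac{D+2}{D-1}[\kappa(\Psi,\caT)-\kappa(\Psi,\Sigma(d))/(2(D+d))]$ from \pref{pro:hkaka} and aim to show the bracket is $\ge -(D-1)/(D+1)$. On $\scrT_\iso$ I would insert $\kappa(\Psi,\caT)\ge 2[\kappa(\Psi,\Sigma(d))-|\scrT_\defe|]/|O_3(d)|-1$ from \eref{eq:kappaTSLB}; the bracket becomes a linear function of $K:=\kappa(\Psi,\Sigma(d))$ whose coefficient $2/|O_3(d)|-1/(2(D+d))$ is nonnegative for $D\ge 1$, so plugging in the lower bound $K\ge 4(D+d)/(D+1)$ minimizes it, and direct casework on $d\bmod 3$ produces exactly $-(D-1)/(D+1)$ (with equality for $d=1\bmod 3$). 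For $\caT\in\scrT_\defe$ I would first note that \pref{pro:kappaSym} together with the $S_3$-transitivity on $\scrT_\defe$ (\lref{lem:SemigroupT}) forces $\kappa(\Psi,\cdot)$ to be constant on $\scrT_\defe$, so $\kappa(\Psi,\Sigma(d))=\kappa(\Psi,\scrT_\iso)+|\scrT_\defe|\kappa(\Psi,\caT)$. Substituting, the bracket becomes
\[
\kappa(\Psi,\caT)\Bigl(1-\tfrac{|\scrT_\defe|}{2(D+d)}\Bigr)-\tfrac{\kappa(\Psi,\scrT_\iso)}{2(D+d)}\ge -\tfrac{|O_3(d)|}{2(D+d)},
\]
where I use $\kappa(\Psi,\caT)\ge 0$ and $\kappa(\Psi,\scrT_\iso)\le|O_3(d)|$. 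Requiring $(D+1)|O_3(d)|\le 2(D-1)(D+d)$ then closes the argument; this factors as $(D-3)(D+2)\ge 0$ for $d=3$ and reduces to $D^2\ge 2d-1$ for $d=1\bmod 3$, both easily verified since $D\ge d\ge 3$.

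The main obstacle is the $\scrT_\defe$ case of the $\hka$ bound: a naive strategy of minimizing $\kappa(\Psi,\caT)$ and maximizing $\kappa(\Psi,\Sigma(d))$ independently gives a strictly weaker inequality because the two extremes are not simultaneously achievable. The essential ingredient is the $S_3$-symmetry coupling $\kappa(\Psi,\caT)$ to $\kappa(\Psi,\scrT_\defe)$, which eliminates one degree of freedom and converts the minimization into the clean linear form above. The casework on $d\bmod 3$ cannot be avoided because $|O_3(d)|$ and $|\scrT_\defe|$ depend on the congruence class, but in each case the resulting arithmetic inequality is elementary.
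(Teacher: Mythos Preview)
Your proposal is correct. The argument for \eref{eq:kappaTSLB} and for the $\hka$ bound on $\scrT_\iso$ matches the paper's proof essentially line by line: your vector decomposition $|\Psi\>^{\otimes 3}=|\phi_0\>+|\phi_1\>$ is equivalent to the paper's operator inequality $R_\rmH(\caT)\geq 2\Pi-\bbI$, and your linear-in-$K$ minimization is exactly the paper's substitution of $\kappa(\Sigma(d))\geq 4(D+d)/(D+1)$ into the expression from \pref{pro:hkaka}.

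The one place you diverge is the $\hka$ bound on $\scrT_\defe$. The paper does use the ``naive strategy'' you worry about, bounding $\kappa(\caT)\geq 0$ and $\kappa(\Sigma(d))\leq 2(d+1)$ independently; this yields $\hka(\caT)\geq -\tfrac{(d+1)(D+2)}{(D-1)(D+d)}$, which suffices for all $D\geq 5$ but fails at $D=d=3$, forcing a separate check using $\kappa(\Sigma(d))=6+2\kappa(\caT)$ in that corner case. Your coupling argument---writing $\kappa(\Sigma(d))=\kappa(\scrT_\iso)+|\scrT_\defe|\kappa(\caT)$ via the $S_3$-transitivity of \lref{lem:SemigroupT} and then bounding $\kappa(\scrT_\iso)\leq|O_3(d)|$---is exactly what the paper does in its special case, but you apply it uniformly. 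The payoff is that your inequality $(D+1)|O_3(d)|\leq 2(D-1)(D+d)$ holds on the nose at $D=d=3$ (it factors as $(D-3)(D+2)\geq 0$), so no exceptional case is needed. This is a cleaner organization of the same ingredients.
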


\begin{lemma}\label{lem:hkaLUB}
	Suppose $d$ is an odd prime  and $|\Psi\>\in\caH_d^{\otimes n}$. Then
	\begin{gather}
	-\frac{D+2}{D+1}\leq \hka(\Psi,\caT)\leq \frac{D+2}{D+1}\quad \forall \caT\in \Sigma(d), \quad 
	\hka(\Psi,\caT)\geq \frac{D+2}{D+d} \quad \forall \caT\in \scrT_\sym, \label{eq:hkaLUB}\\
	\frac{4(D+2)}{D+1}\leq \hka(\Psi,\Sigma(d))\leq |\hka|(\Psi,\Sigma(d))\leq \frac{2(d+1)(D+2)}{D+1}\leq 2(d+2),  \label{eq:hkaSigLUB}\\	
	\frac{D+2}{D+d}\kappa(\Psi,\Sigma(d))=	\hka(\Psi,\Sigma(d))	
	\leq |\hka|(\Psi,\Sigma(d))\leq \frac{D+d-1}{D}|\kappa|(\Psi,\Sigma(d)),  \label{eq:hkaSigAbsLUB}\\		
	-\frac{2(D+2)}{D+1}\leq \hka(\Psi,\scrT_\ns)\leq \frac{D+2}{D+d}\kappa(\Psi,\scrT_\ns), \quad 
	|\hka|(\Psi,\scrT_\ns)\leq \frac{2(d-2)(D+2)}{D+1}\leq 2(d-1). \label{eq:hkaNsLUB}
	\end{gather}	
\end{lemma}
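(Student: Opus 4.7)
\emph{Plan of proof.} \Lref{lem:hkaLUB} consolidates bounds on $\hka(\Psi,\caT)$, $\hka(\Psi,\Sigma(d))$, $|\hka|(\Psi,\Sigma(d))$, and $\hka(\Psi,\scrT_\ns)$ that are essentially direct consequences of the auxiliary results already established in this appendix---namely \lref{lem:kappaTLUB}, \pref{pro:hkaka}, \lref{lem:hkaDeltaNS}, \lref{lem:hkaLB}, and \lref{lem:hkakaNsIneq}---together with the structural identities $\hka(\Psi,\Sigma(d))=(D+2)\kappa(\Psi,\Sigma(d))/(D+d)$ and $\hka(\Psi,\scrT_\ns)=[(D+2)/(D-1)][\hka(\Psi,\Sigma(d))-6]$ of \eqsref{eq:hkakaSig}{eq:hkaSigNS}. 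The proof is thus an exercise in invoking the right auxiliary result for each inequality, except for one genuinely new estimate identified below.

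The pointwise bounds in \eref{eq:hkaLUB} are immediate: the upper bound $\hka(\Psi,\caT)\leq (D+2)/(D+1)$ and the $\scrT_\sym$ bound $\hka(\Psi,\caT)\geq(D+2)/(D+d)$ are the two inequalities of \eref{eq:hkakaTLUB}, while the negative lower bound is \eref{eq:hkaLB} in \lref{lem:hkaLB}. For \eref{eq:hkaSigLUB}, the lower bound is obtained by substituting $\kappa(\Psi,\Sigma(d))\geq 4(D+d)/(D+1)$ of \eref{eq:kappaSigLUB} into \eref{eq:hkakaSig}, the middle inequality is the trivial $\hka\leq|\hka|$, and the outermost upper bound follows from the triangle inequality over $|\Sigma(d)|=2d+2$ terms each of magnitude at most $(D+2)/(D+1)$. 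In \eref{eq:hkaSigAbsLUB}, the equality is \eref{eq:hkakaSig} and the middle inequality is the positivity $\kappa(\Psi,\Sigma(d))\geq 0$ from \lref{lem:kappaTLUB}. The bounds in \eref{eq:hkaNsLUB} are then assembled by substituting the already-established bounds on $\hka(\Psi,\Sigma(d))$ into the rearrangement of \eref{eq:hkaSigNS} (yielding $\hka(\Psi,\scrT_\ns)\geq-2(D+2)/(D+1)$), invoking \eref{eq:hkakascrT} of \lref{lem:hkakaNsIneq} with $j=0$ for the upper bound in $\kappa(\Psi,\scrT_\ns)$, and applying the triangle inequality over $|\scrT_\ns|=2(d-2)$ terms for the final estimate.

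The main obstacle is the sharp constant $(D+d-1)/D$ in the last inequality of \eref{eq:hkaSigAbsLUB}. A direct triangle inequality applied to the affine formula $\hka(\Psi,\caT)=[(D+2)/(D-1)][\kappa(\Psi,\caT)-\kappa(\Psi,\Sigma(d))/(2D+2d)]$ of \eref{eq:hkakaTLUB} only yields the weaker constant $(D+2)(D+2d+1)/[(D-1)(D+d)]$, which exceeds $(D+d-1)/D$ for small $D$ and $d$; a quick test shows the sharp constant is actually attained, for instance at $d=D=3$ with $\kappa(\Psi,\caT)=0$ on $\scrT_\defe$ and $\kappa(\Psi,\caT)=1$ on $\scrT_\sym$. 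To reach the claimed sharp factor, I plan to exploit the coset constraints $\sum_{\caT\in\scrT_j}\kappa(\Psi,\caT)=\kappa(\Psi,\Sigma(d))/2$ of \eref{eq:kappaSigNSscrT} together with the pointwise range restrictions $\kappa(\Psi,\caT)\in[-1,1]$ and $\kappa(\Psi,\caT)\geq 0$ on $\scrT_\sym\cup\scrT_\defe$ from \lref{lem:kappaTLUB}; these constraints prevent the centering term $\kappa(\Psi,\Sigma(d))/[2(D+d)]$ from being added in absolute value to every one of the $2d+2$ pointwise terms, forcing partial cancellation that a naive triangle-inequality estimate cannot see.
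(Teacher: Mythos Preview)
Your proposal is correct and matches the paper's proof for every inequality except the last bound in \eref{eq:hkaSigAbsLUB}, where you correctly identify the obstacle but the planned tools are slightly off-target. The paper's argument is more direct than the coset-constraint route you sketch: split $|\hka|(\Psi,\Sigma(d))=\sum_{\scrT_\sym}\hka(\Psi,\caT)+\sum_{\scrT_\ns}|\hka(\Psi,\caT)|$, using the sign constraint $\hka(\Psi,\caT)\geq(D+2)/(D+d)>0$ on $\scrT_\sym$ from \eref{eq:hkakaTLUB} (equivalently, $\kappa(\Psi,\caT)=1$ there, which dominates the centering term). Expressed via \eref{eq:hkakaTLUB}, the six $\scrT_\sym$ terms carry the centering $\kappa(\Psi,\Sigma(d))/(2D+2d)$ with a \emph{minus} sign, while only the $2d-4$ terms in $\scrT_\ns$ are triangle-inequality bounded with a plus sign; the net contribution is $(d-5)\kappa(\Psi,\Sigma(d))/(D+d)$, and bounding $\kappa(\Psi,\Sigma(d))\leq|\kappa|(\Psi,\Sigma(d))$ followed by the elementary check $\frac{D+2}{D-1}\cdot\frac{D+2d-5}{D+d}\leq\frac{D+d-1}{D}$ finishes the proof. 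Neither the $\scrT_0/\scrT_1$ coset relation nor positivity of $\kappa$ on $\scrT_\defe$ is used---the latter in fact does not help, since $\hka$ can still be negative on $\scrT_\defe$ when $\kappa$ is small there.
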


The following lemma  clarifies the approximations made in \eqsref{eq:hkakabigO}{eq:hkakabigOpos}. 
\begin{lemma}\label{lem:kappaMhkaTLUB}
	Suppose $d$ is an odd prime and $|\Psi\>\in\caH_d^{\otimes n}$. Then  
	\begin{gather}
	-\frac{1}{D}<-\frac{1}{D+1} \leq \kappa(\Psi,\caT)-\hka(\Psi,\caT)\leq \frac{d}{D-1},  \label{eq:kappaMhkaTLUB}\\
	\frac{4(d-2)}{D+1}\leq \kappa(\Psi,\Sigma(d))-\hka(\Psi,\Sigma(d))	\leq \frac{2(d-2)(d+1)}{D+d}<\frac{2d^2}{D}. \label{eq:kappaMhkaSigLUB}
	\end{gather}
	If  $d\geq 5$, then the upper bound in \eref{eq:kappaMhkaTLUB}  can be replaced by $d/D$.	
	If  $\kappa(\Psi,\caT)\geq 0$ for all $\caT\in \Sigma(d)$, then 
	\begin{gather}
	\hka(\Psi,\caT)\geq -\frac{d}{D-1},\quad  |\hka(\Psi,\caT)|-\hka(\Psi,\caT)\leq \frac{2d}{D-1},  \label{eq:hkaAbsMhkaLUB}\\
	-\frac{d}{D-1}\leq \kappa(\Psi,\caT)-|\hka(\Psi,\caT)|\leq \frac{d}{D-1}, \label{eq:kappaMhkaAbsLUB}\\ |\hka|(\Psi,\Sigma(d))-\hka(\Psi,\Sigma(d))	\leq \frac{4d^2}{D},  \label{eq:hkaAbsMhkaSigUB}\\
	-\frac{2d^2}{D}\leq \kappa(\Psi,\Sigma(d))-|\hka|(\Psi,\Sigma(d))	\leq \frac{2d^2}{D}. \label{eq:kappaMhkaAbsSigLUB}
	\end{gather}
\end{lemma}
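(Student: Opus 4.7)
The plan is to derive every inequality from the explicit identity
\begin{equation*}
\kappa(\Psi,\caT)-\hka(\Psi,\caT)=\frac{(D+2)\kappa(\Psi,\Sigma(d))-6(D+d)\kappa(\Psi,\caT)}{2(D-1)(D+d)},
\end{equation*}
which follows by combining the formula $\hka(\Psi,\caT)=\frac{D+2}{D-1}\bigl[\kappa(\Psi,\caT)-\kappa(\Psi,\Sigma(d))/(2(D+d))\bigr]$ from \pref{pro:hkaka} with $\kappa-\hka$. The summed display \eref{eq:kappaMhkaSigLUB} is immediate: \pref{pro:hkaka} also gives $\hka(\Psi,\Sigma(d))=\frac{D+2}{D+d}\kappa(\Psi,\Sigma(d))$, so $\kappa(\Psi,\Sigma(d))-\hka(\Psi,\Sigma(d))=\frac{d-2}{D+d}\kappa(\Psi,\Sigma(d))$, and the two-sided bound $\frac{4(D+d)}{D+1}\leq\kappa(\Psi,\Sigma(d))\leq 2(d+1)$ from \lref{lem:kappaTLUB} delivers the stated inequalities, with the trivial estimate $2(d-2)(d+1)<2d^2$ closing the chain.

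For the lower bound in the first display, substituting the extremes $\kappa(\Psi,\caT)=1$ and $\kappa(\Psi,\Sigma(d))=4(D+d)/(D+1)$ (both admissible by \lref{lem:kappaTLUB}) into the identity gives exactly $-1/(D+1)$; this is the minimum because the identity is linear in each variable with the right sign of each coefficient. The upper bound splits by which subset of $\Sigma(d)$ contains $\caT$. If $\caT\in\scrT_\sym$, then $\kappa(\Psi,\caT)=1$ by \pref{pro:kappaSym} and $\kappa(\Psi,\Sigma(d))\leq 2(d+1)$ yield $\kappa-\hka\leq(d-2)/(D+d)$. If $\caT\in\scrT_\defe$, the transitive $S_3\times S_3$-action from \lref{lem:SemigroupT} forces $\kappa(\Psi,\caT)=\kappa_\defe\in[0,1]$ to be a common value across $\scrT_\defe$; substituting $\kappa(\Psi,\Sigma(d))=\kappa(\Psi,\scrT_\iso)+|\scrT_\defe|\kappa_\defe$, bounding $\kappa(\Psi,\scrT_\iso)\leq|O_3(d)|$, and noting the coefficient of $\kappa_\defe$ is negative show the supremum is at $\kappa_\defe=0$, yielding a bound strictly below $d/(D-1)$ (and below $d/D$ for $d\geq 5$).

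The delicate case $\caT\in\scrT_\ns\cap\scrT_\iso$ arises only when $d=1\mmod 3$. Here I would partition $\scrT_\iso\setminus\scrT_\sym$ into $S_3\times S_3$-double cosets, on each of which $\kappa(\Psi,\cdot)$ is constant by \pref{pro:kappaSym}. Every such orbit has at least $6$ elements because the left $S_3$-orbit of any $\caT_O$ with $O\in O_3(d)\setminus S_3$ already contains $6$ distinct members. Writing $\kappa(\Psi,\Sigma(d))=6+m\kappa(\Psi,\caT)+\sum_{j\neq i}m_j\alpha_j+|\scrT_\defe|\kappa_\defe$ where $m$ is the size of the orbit containing $\caT$, using $\alpha_j,\kappa_\defe\leq 1$, and substituting into the identity, the resulting expression is linear in $\kappa(\Psi,\caT)$ with coefficient $m(D+2)-6(D+d)$. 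For $m\geq 7$ the supremum reduces to the $\scrT_\sym$ bound $(d-2)/(D+d)$; for $m=6$ the coefficient is negative and the refined lower bound $\kappa(\Psi,\caT)\geq-(D-3)/(D+1)$ from \lref{lem:hkaLB} yields $(d-2)(D+7)/[(D+1)(D+d)]$ after simplification. Both are bounded by $d/(D-1)$, and by $d/D$ for $d\geq 5$.

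The conditional inequalities under $\kappa(\Psi,\caT)\geq 0$ all follow by manipulation: $\hka\geq\kappa-(\kappa-\hka)\geq-d/(D-1)$ is immediate, $|\hka|-\hka\leq 2d/(D-1)$ is trivial, and the two-sided bound on $\kappa-|\hka|$ splits on the sign of $\hka$, with the case $\hka<0$ giving $\kappa<d/(D-1)$ directly from the just-proven upper bound. For the summed versions, \lref{lem:hkaDeltaNS} forces $\hka(\caT)>0$ on $\scrT_\sym$, so possibly negative $\hka(\caT)$ are restricted to at most $2(d-2)$ terms in $\scrT_\ns$; summing $|\hka|-\hka\leq 2d/(D-1)$ over these gives $|\hka|(\Sigma(d))-\hka(\Sigma(d))\leq 4d(d-2)/(D-1)\leq 4d^2/D$, and the two-sided bound on $\kappa(\Sigma(d))-|\hka|(\Sigma(d))$ uses \eref{eq:kappaMhkaSigLUB} in the upper direction together with the per-$\caT$ estimate $\kappa(\caT)-|\hka(\caT)|\geq-(D+2)(d+1)/[(D-1)(D+d)]$ (for terms with $\hka<0$, obtained from $\kappa\geq 0$ and $\kappa(\Psi,\Sigma(d))\leq 2(d+1)$) summed over $\scrT_\ns$ and combined with the contribution from $\scrT_\sym$. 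The main obstacle is the orbit analysis in the case $\caT\in\scrT_\ns\cap\scrT_\iso$: a naive application of $\kappa(\Psi,\caT)\geq-(D-3)/(D+1)$ together with $\kappa(\Psi,\Sigma(d))\leq 2(d+1)$ fails for $n\geq 2$, and one must carefully bookkeep the orbit sizes (each at least $6$ and dividing $36$) and how they affect the sign and magnitude of the coefficient $m(D+2)-6(D+d)$.
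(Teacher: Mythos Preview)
Your overall strategy is sound and matches the paper's, but the upper bound in \eref{eq:kappaMhkaTLUB} is handled in a needlessly complicated way, and there are two slips worth noting.

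First, the claim that the case $\caT\in\scrT_\ns\cap\scrT_\iso$ ``arises only when $d=1\mmod 3$'' is false: for $d=2\mmod 3$ one has $\scrT_\defe=\emptyset$, hence $\scrT_\ns\subset\scrT_\iso$, and this case is actually the generic one for all $d\geq 5$. Fortunately your orbit argument does not depend on the congruence class. Second, the assertion that for $m\geq 7$ ``the supremum reduces to the $\scrT_\sym$ bound'' is not justified: the coefficient $m(D+2)-6(D+d)$ can still be negative (e.g.\ $m=7$, $n=1$, $d=5$). This does not break the proof, because $g(m,\kappa)$ is monotone decreasing in $m$ for every fixed $\kappa\in[-1,1]$, so $m=6$ is always the worst case; but you should state the monotonicity rather than the incorrect sign claim.

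The paper avoids all of this. It treats $\scrT_\iso$ uniformly (no split into $\scrT_\sym$ versus $\scrT_\ns\cap\scrT_\iso$) by taking only the \emph{left} orbit $\scrT=S_3\caT$, which has size exactly $6$ for every $\caT\in\scrT_\iso$ since the left $S_3$-action on $O_3(d)$ is free. With $|\scrT|=6$ and the crude bound $\kappa(\caT)\geq -1$ from \lref{lem:kappaTLUB}, one computes directly
\[
\kappa(\caT)-\hka(\caT)\leq\frac{(d-2)(D+5)}{(D-1)(D+d)}\leq\frac{d}{D}.
\]
No double cosets, no case split on $m$, and no need for the sharper bound $-(D-3)/(D+1)$ from \lref{lem:hkaLB}. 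Your refined bound $\frac{(d-2)(D+7)}{(D+1)(D+d)}$ is numerically tighter but buys nothing for the target inequality. The $\scrT_\defe$ case is handled as you do, with $|\scrT|\geq 2$ and $\kappa(\caT)\geq 0$, which is where the weaker $d/(D-1)$ appears for $d=3$.

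The remaining pieces of your proposal (\eref{eq:kappaMhkaSigLUB}, the conditional inequalities, and the summed bounds via the $\scrT_\sym/\scrT_\ns$ split) agree with the paper's approach.
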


\begin{lemma}\label{lem:QPsiObOb0}
	Suppose $d$ is an odd prime,  $|\Psi\>\in\caH_d^{\otimes n}$, $\Ob\in \caL^\rmH\bigl(\caH_d^{\otimes n}\bigr)$, and $\Ob_0=\Ob-\tr(\Ob)\bbI/D$. Then 
	\begin{align}
	6\bcaQ_{\orb(\Psi)}(\Ob_0)=6\bcaQ_{\orb(\Psi)}(\Ob)-\frac{(D-1)(D+2)(\tr \Ob)^2}{D^2}\bbI-\frac{2(D+2)\tr(\Ob)}{D}\Ob. \label{eq:QPsiObOb0}
	\end{align}
\end{lemma}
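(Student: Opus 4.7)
The plan is to mimic the proof of \lref{lem:QndObOb0} almost verbatim, exploiting that the only structural facts about $\bQ(n,d,3)$ used there are (i) linearity of the shadow map, (ii) permutation invariance of the third moment operator, and (iii) the 2-design property at the level of partial traces. All three properties transfer to $\bQ(\orb(\Psi))$: it is manifestly a symmetric function of $(|\Psi\>\<\Psi|)^{\otimes 3}$, so it is permutation invariant across the three tensor factors; and since the Clifford group is a 2-design, every Clifford orbit $\orb(\Psi)$ is a 2-design, which fixes the two-copy reduction uniquely.

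Concretely, the plan is as follows. First, I would expand $\Ob_0 \otimes \Ob_0 = \Ob\otimes \Ob - (\tr\Ob / D)(\bbI\otimes \Ob + \Ob\otimes \bbI) + (\tr\Ob/D)^2\,\bbI\otimes \bbI$, invoking that $\Ob$ is Hermitian so $\tr\Ob$ is real, and use linearity of $\bcaQ_{\orb(\Psi)}$ to break \eref{eq:QPsiObOb0} into three pieces. Second, the permutation invariance of $\bQ(\orb(\Psi))$ on the $B,C$ factors yields
\begin{equation*}
\tr_{BC}\bigl[\bQ(\orb(\Psi))(\bbI\otimes \bbI\otimes \Ob)\bigr]=\tr_{BC}\bigl[\bQ(\orb(\Psi))(\bbI\otimes \Ob\otimes \bbI)\bigr],
\end{equation*}
so the two cross terms combine into $-(2\tr\Ob/D)\tr_{BC}[\bQ(\orb(\Psi))(\bbI\otimes\Ob\otimes\bbI)]$.

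Third, I would compute the two remaining pieces using the 2-design property. Since $\orb(\Psi)$ is a 2-design, $3\tr_C\bQ(\orb(\Psi)) = (D+2)P_{[2]}$ (the same reduction as for $\bQ(n,d,3)$), which gives $6\bcaQ_{\orb(\Psi)}(\bbI) = (D+1)(D+2)\bbI$ and
\begin{equation*}
6\tr_{BC}\bigl[\bQ(\orb(\Psi))(\bbI\otimes \Ob \otimes \bbI)\bigr] = 2(D+2)\tr_B[P_{[2]}(\bbI\otimes \Ob)] = (D+2)[\tr(\Ob)\bbI+\Ob].
\end{equation*}
Substituting these into the linear expansion and simplifying the $\bbI$-coefficient $(D+1)(D+2)(\tr\Ob)^2/D^2 - 2(D+2)(\tr\Ob)^2/D = -(D-1)(D+2)(\tr\Ob)^2/D^2$ yields \eref{eq:QPsiObOb0}.

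There is no substantial obstacle: the derivation is a direct linear-algebraic manipulation, and the only nontrivial input (the 2-design reduction) is standard. If anything, the bookkeeping of the constant $-(D-1)(D+2)/D^2$ in the $\bbI$ coefficient is the only step that requires care, but it is a short algebraic simplification identical to the one in the proof of \lref{lem:QndObOb0}.
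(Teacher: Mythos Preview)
Your proposal is correct and matches the paper's approach exactly: the paper states that \lref{lem:QPsiObOb0} ``can be proved using a similar reasoning that leads to \lref{lem:QndObOb0} given that any Clifford orbit forms a 2-design,'' and your write-up carries out precisely that reasoning with the same three ingredients (linearity, permutation invariance, and the 2-design reduction $3\tr_C\bQ(\orb(\Psi))=(D+2)P_{[2]}$).
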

\Lref{lem:QPsiObOb0} can be proved using a similar reasoning that leads to \lref{lem:QndObOb0} given that any Clifford orbit forms a 2-design, just like the orbit of  stabilizer states.

\begin{lemma}\label{lem:QPsiProj}
	Suppose $d$ is an odd prime,  $|\Psi\>\in\caH_d^{\otimes n}$, $\Ob$ is a rank-$K$ stabilizer projector on $\caH_d^{\otimes n}$, and $\Ob_0=\Ob-\tr(\Ob)\bbI/D$. Then 
	\begin{gather}
	6\bcaQ_{\orb(\Psi)}(\Ob)=\frac{K+1}{2}\left\{2K\hka(\Psi,\Delta)\bbI+\left[\hka(\Psi,\Sigma(d))-2\hka(\Psi,\Delta)\right]\Ob\right\}, \quad 
	6\bcaQ_{\orb(\Psi)}(\Ob_0)
	=\upsilon_2 \bbI+(\upsilon_1-\upsilon_2)\Ob,
	\label{eq:QPsiProj0}
	\end{gather}	
	where $\upsilon_1$ and $\upsilon_2$ are defined in \eref{eq:QPsiProj0eig}.
\end{lemma}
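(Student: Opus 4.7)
The plan is to expand the shadow map via the frame $\{R(\caT)\}_{\caT\in\Sigma(d)}$ and evaluate each term using the explicit formulas for $\caR_\caT(\Ob)$ already derived. Starting from the definition
\[
6\bcaQ_{\orb(\Psi)}(\Ob)=\sum_{\caT\in \Sigma(d)}\hka(\Psi,\caT)\,\caR_\caT(\Ob),
\]
which follows from \eref{eq:ObShNormPsiDef}, I would partition $\Sigma(d)=\{\Delta\}\sqcup\{\caT_{\tau_{23}}\}\sqcup(\scrT_0\setminus\{\Delta\})\sqcup(\scrT_1\setminus\{\caT_{\tau_{23}}\})$ and invoke \lref{lem:RTObStabProj}, which gives $\caR_\caT(\Ob)$ as one of $K^2\bbI$, $K\bbI$, $\Ob$, or $K\Ob$ according to the class of $\caT$.

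The key simplification is that $\hka(\Psi,\caT_{\tau_{23}})=\hka(\Psi,\Delta)$ by the $S_3$ symmetry in \pref{pro:kappaSym}, and that summing $\hka(\Psi,\cdot)$ over $\scrT_0$ or over $\scrT_1$ yields $\hka(\Psi,\Sigma(d))/2$, which follows from the analogue of \eref{eq:kappaSigNSscrT} for $\hka$. Substituting these into the expansion collects into an $\bbI$-part of weight $K(K+1)\hka(\Psi,\Delta)$ and an $\Ob$-part of weight $(K+1)[\hka(\Psi,\Sigma(d))/2-\hka(\Psi,\Delta)]$, which is exactly the first claim of the lemma after a straightforward regrouping.

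For the traceless part, I would apply \lref{lem:QPsiObOb0} with $\tr(\Ob)=K$, which subtracts $(D-1)(D+2)K^2/D^2\cdot\bbI$ from the $\bbI$-coefficient and $2(D+2)K/D$ from the $\Ob$-coefficient. The $\bbI$-coefficient then matches the first displayed form of $\upsilon_2$ in \eref{eq:QPsiProj0eig} verbatim. To match the $\Ob$-coefficient with $\upsilon_1-\upsilon_2$, I would eliminate $\hka(\Psi,\Delta)$ in favor of $\hka(\Psi,\Sigma(d))$ via the identity $\hka(\Psi,\Delta)=(D+2)/(D-1)-\hka(\Psi,\Sigma(d))/[2(D-1)]$ from \lref{lem:hkaDeltaNS}; the resulting expression simplifies to $D(K+1)\hka(\Psi,\Sigma(d))/[2(D-1)] - (D+2)(3DK+D-2K)/[D(D-1)]$, which is precisely the difference of the second displayed forms of $\upsilon_1$ and $\upsilon_2$ after expanding $(D-K)(2DK+D-K)+K(D^2+2DK-K)=D(3DK+D-2K)$.

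The whole argument is essentially bookkeeping once the right frame expansion and trace-adjustment lemma are in place, so I do not expect any real obstacle; the only mildly subtle step is the algebraic reconciliation between the two equivalent formulations of $\upsilon_1,\upsilon_2$ (one in terms of $\hka(\Psi,\Delta)$ and one in terms of $\hka(\Psi,\Sigma(d))$), but this reduces to the linear identity from \lref{lem:hkaDeltaNS} plus the polynomial expansion noted above.
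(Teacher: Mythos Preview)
Your proposal is correct and follows essentially the same approach as the paper: expand via \eref{eq:ObShNormPsiDef}, apply \lref{lem:RTObStabProj} on the four classes of $\caT$, use the $S_3$ symmetry $\hka(\tau_{23})=\hka(\Delta)$ and $\hka(\Sigma(d))=2\hka(\scrT_0)=2\hka(\scrT_1)$ for the first identity, then invoke \lref{lem:QPsiObOb0} and \lref{lem:hkaDeltaNS} for the traceless part. The paper's proof does exactly this, with the same algebraic bookkeeping you outline.
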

Note that $\upsilon_1$ and $\upsilon_2$ are the two distinct eigenvalues of the operator $6\bcaQ_{\orb(\Psi)}(\Ob_0)$ when $1\leq K\leq D/d$.

\begin{lemma}\label{lem:QPsiProjEig}
	Suppose $1\leq K\leq D/d$ in \lref{lem:QPsiProj}. 	Then $\upsilon_1,\upsilon_2$ defined in \eref{eq:QPsiProj0eig} satisfy 
	\begin{gather}
	\frac{(D+2)(D-K)^2}{D^2(D+1)}\leq \upsilon_1\leq \frac{(D+2)(D-K)[dD+(dD-D-d)K]}{D^2(D+d)},  \label{eq:QPsiProj0eig1LUB}\\
	\frac{K}{d}\leq \frac{(D+2)K}{dD}\leq \frac{(D+2)[D^2-(dD-D-d)K]K}{D^2(D+d)}\leq \upsilon_2 \leq \frac{(D+2)(D^2+K)K}{D^2(D+1)}\leq \frac{(D+2)K}{D},  \label{eq:QPsiProj0eig2LUB}\\
	\frac{(D-K)K}{D}\leq 	\frac{(D+2)(D-K)K}{D^2}\leq \max\{\upsilon_1, \upsilon_2\}\leq \frac{(D+2)(D-K)[dD+(dD-D-d)K]}{D^2(D+d)}, \label{eq:QPsiProj0eigMaxLUB}\\
	- K\leq-\frac{(D+2)[(D+2)K-D]}{D(D+1)}\leq \upsilon_1-\upsilon_2\leq \frac{(D+2)[dD+(dD-2D-2d)K]}{D(D+2)}\leq d+(d-2)K.  \label{eq:QPsiProj0eigDif}
	\end{gather}
	The upper bound in \eref{eq:QPsiProj0eig1LUB} is saturated iff $\hka(\Psi, \Delta)=(D+2)/(D+d)$, and the same conclusion applies to the upper bound in \eref{eq:QPsiProj0eigMaxLUB}. 
	In addition, the following four inequalities  are equivalent,
	\begin{align} \label{QPsiProj0EigOrder}
	\begin{gathered}
	\upsilon_2\leq \upsilon_1,\quad \hka(\Psi,\Delta)\leq \frac{(D+2)(DK+D-2K)}{D^2(K+1)},\\
	\kappa(\Psi,\scrT_\ns)\geq \frac{2D(d-2D)+2(3dD-2D-2d)K}{D^2(K+1)},\quad 
	\hka(\Psi,\scrT_\ns)\geq -\frac{4(D+2)(D-2K)}{D^2(K+1)}.
	\end{gathered}
	\end{align}
	If $\hka(\Psi,\Delta)\geq1$ or $\hka(\Psi,\scrT_\ns)\leq 0$, then $\max\{\upsilon_1, \upsilon_2\}\leq K+2$.
\end{lemma}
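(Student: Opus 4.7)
The plan is to exploit the linearity of $\upsilon_1$ and $\upsilon_2$ in $\hka(\Psi,\Delta)$ and then bootstrap all stated inequalities from the elementary bounds on $\hka(\Psi,\Delta)$ provided by \lref{lem:hkaLUB}. From the first equalities in \eref{eq:QPsiProj0eig} both $\upsilon_1$ and $\upsilon_2$ are affine functions of $\hka(\Psi,\Delta)$, with slopes $-(D-K)(K+1)<0$ and $K(K+1)>0$ respectively. Since $\Delta\in\scrT_\sym$, \lref{lem:hkaLUB} gives $\frac{D+2}{D+d}\leq \hka(\Psi,\Delta)\leq \frac{D+2}{D+1}$, so the sharp two-sided bounds for $\upsilon_1$ in \eref{eq:QPsiProj0eig1LUB} and for $\upsilon_2$ in \eref{eq:QPsiProj0eig2LUB} are obtained by plugging in these two extreme values and simplifying; the intermediate comparisons in the chain (e.g.\ $K/d\leq (D+2)K/(dD)$ and the final $\leq (D+2)K/D$) are elementary inequalities valid under the standing assumption $1\leq K\leq D/d$. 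The saturation claim is immediate: the upper bound on $\upsilon_1$ is a strictly decreasing affine function evaluated at $\hka(\Psi,\Delta)=(D+2)/(D+d)$, so equality holds iff $\hka(\Psi,\Delta)$ attains this minimum.

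Next I would derive
\begin{equation*}
\upsilon_1-\upsilon_2=\frac{(D+2)(DK+D-2K)}{D}-D(K+1)\hka(\Psi,\Delta)
\end{equation*}
by direct subtraction (after the algebraic identity $(D-K)(DK+D-K)+(D-1)K^2=D(DK+D-2K)$). This makes $\upsilon_1-\upsilon_2$ a decreasing affine function of $\hka(\Psi,\Delta)$, and the two-sided bound \eref{eq:QPsiProj0eigDif} is obtained by substituting the same extreme values; the outer inequalities in that chain follow from straightforward estimates after clearing denominators. The maxmin bound \eref{eq:QPsiProj0eigMaxLUB} then follows by observing that $\max\{\upsilon_1,\upsilon_2\}$ is attained at one of the two endpoints of the admissible interval for $\hka(\Psi,\Delta)$: the upper bound equals that of $\upsilon_1$ at $\hka(\Psi,\Delta)=(D+2)/(D+d)$, while the lower bound follows by taking the max of the two lower bounds already established.

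The equivalence of the four inequalities in \eref{QPsiProj0EigOrder} begins from the observation that $\upsilon_2\leq\upsilon_1$ rearranges directly (using the expression above) into $\hka(\Psi,\Delta)\leq (D+2)(DK+D-2K)/[D^2(K+1)]$. The three remaining forms are obtained by a purely mechanical translation using the identities $\hka(\Psi,\Delta)=1-\hka(\Psi,\scrT_\ns)/[2(D+2)]$ (from \eref{eq:hkaDeltaNS}) and $\hka(\Psi,\scrT_\ns)=\frac{(D+2)^2}{(D-1)(D+d)}\kappa(\Psi,\scrT_\ns)-\frac{6(D+2)(d-2)}{(D-1)(D+d)}$ (from \eref{eq:hkaNskappaNs}); each substitution reduces the inequality to the claimed form after clearing denominators, with the identity $D^2(K+1)-(D+2)(DK+D-2K)=-2(D-2K)$ playing the central role.

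Finally, for the last claim, note that $\hka(\Psi,\Delta)\geq 1$ and $\hka(\Psi,\scrT_\ns)\leq 0$ are equivalent by \lref{lem:kahakRelation} (or directly from \eref{eq:hkaDeltaNS}), so I may assume $\hka(\Psi,\Delta)\geq 1$. By monotonicity, $\upsilon_1\leq \upsilon_1\big|_{\hka(\Psi,\Delta)=1}=(D-K)[D(K+2)-2K]/D^2\leq (K+2)(D-K)/D\leq K+2$, while the general upper bound $\upsilon_2\leq (D+2)(D^2+K)K/[D^2(D+1)]$ already established reduces to $K(1+2/D)(1+\caO(K/D^2))\leq K+2$ for $K\leq D$. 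The main obstacle I expect is purely algebraic stamina: the translation between the four equivalent forms in \eref{QPsiProj0EigOrder} requires careful bookkeeping, and verifying the chain of intermediate inequalities in \eqsref{eq:QPsiProj0eig2LUB}{eq:QPsiProj0eigDif} (in particular the inner lower/upper bounds matching the stated rational functions of $D,d,K$) is tedious, though each step is elementary once the slope and endpoint analysis is in place.
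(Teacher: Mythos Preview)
Your approach is essentially the paper's: treat $\upsilon_1,\upsilon_2$ as affine in $\hka(\Psi,\Delta)$, plug in the endpoints $(D+2)/(D+d)$ and $(D+2)/(D+1)$, and translate between the equivalent inequalities via \lref{lem:hkaDeltaNS}. Most of your outline is fine. There is, however, one genuine gap and one missing step.

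The gap is in your lower bound on $\max\{\upsilon_1,\upsilon_2\}$. Taking the max of the two lower bounds already established gives only $\max\{\upsilon_1^{\mathrm{lb}},\upsilon_2^{\mathrm{lb}}\}$, which is \emph{not} equal to (and can be strictly smaller than) the claimed $(D+2)(D-K)K/D^2$; for instance, $\upsilon_1^{\mathrm{lb}}=(D+2)(D-K)^2/[D^2(D+1)]$ falls below this as soon as $K\geq 1$. The correct argument (and the paper's) is to note that the piecewise-linear convex function $\max\{\upsilon_1,\upsilon_2\}$ of $\hka(\Psi,\Delta)$ attains its \emph{global minimum} at the crossing point $\upsilon_1=\upsilon_2$, where the common value is exactly $(D+2)(D-K)K/D^2$; this is then a lower bound regardless of whether the crossing point lies in the admissible interval.

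The missing step is in your upper bound on $\max\{\upsilon_1,\upsilon_2\}$. Convexity tells you the supremum over the interval is at an endpoint, but you must still compare the two endpoint values: at the right endpoint the max is $\upsilon_2^{\mathrm{ub}}=(D+2)(D^2+K)K/[D^2(D+1)]$, and you need the strict inequality $\upsilon_2^{\mathrm{ub}}<\upsilon_1^{\mathrm{ub}}$ (which also gives the saturation claim). The paper does this explicitly. Finally, for $\upsilon_2\leq K+2$ you can avoid the asymptotic estimate entirely by using the outermost bound $\upsilon_2\leq (D+2)K/D$ from \eref{eq:QPsiProj0eig2LUB}, which reduces to $K\leq D$.
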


\subsubsection{Proofs of \lsref{lem:hkaDeltaNS}-\ref{lem:kahakRelation}}

\begin{proof}[Proof of \lref{lem:hkaDeltaNS}]
	The first two equalities in \eref{eq:hkaDeltaNS} follow from \pref{pro:kappaSym} and the fact that $\hka(\Delta)\geq0$ by \eref{eq:hkakaTLUB}. The last two equalities hold because
	(see  \pref{pro:hkaka})
	\begin{align}
	\hka(\Sigma(d))=6\hka(\Delta)+\hka(\scrT_\ns),\quad 1=\kappa(\Delta)=\frac{D-1}{D+2}\hka(\Delta)+\frac{\hka(\Sigma(d))}{2(D+2)}.
	\end{align}	
	\Eqsref{eq:hkaSym}{eq:hkaSigNS}   are simple corollaries of \eref{eq:hkaDeltaNS} given that $|\scrT_\sym|=6$ and $\hka(\scrT_\sym)=6\hka(\Delta)$ [cf. \eref{eq:hkakaSig} in  \pref{pro:hkaka}].
	\Eref{eq:hkaNskappaNs} is a simple corollary of  \eref{eq:hkakaSig} given that $\kappa(\Sigma(d))=\kappa(\scrT_\ns)+6$ by \eref{eq:kappaSigNSscrT}. 
	
	By virtue of \eref{eq:hkaSym}  we can further deduce that
	\begin{align}	
	|\hka|(\Sigma(d))&=|\hka|(\scrT_\sym)+|\hka|(\scrT_\ns)=6-\frac{3}{D+2}\hka(\scrT_\ns)+|\hka|(\scrT_\ns), 
	\end{align}
	which implies \eref{eq:hkaSigAbsNS} given that  $|\hka(\scrT_\ns)|\leq |\hka|(\scrT_\ns)$. In addition,
	\begin{equation}
	\begin{aligned}
	&\hka(\scrT)-6=\hka(\scrT_\sym)+\hka(\scrT_\ns\cap\scrT)-6=-\frac{3}{D+2}\hka(\scrT_\ns)+ \hka(\scrT_\ns\cap\scrT)\\
	&=-\frac{3}{D+2}[\hka(\scrT_\ns\cap \scrT)+\hka(\scrT_\ns\setminus \scrT)]+ \hka(\scrT_\ns\cap\scrT)=\frac{1}{D+2}[(D-1)\hka(\scrT_\ns\cap\scrT)-3\hka(\scrT_\ns\setminus \scrT)],\\
	&|\hka(\scrT)-6|\leq \frac{D}{D+2}[\,|\hka|(\scrT_\ns\cap\scrT)+|\hka|(\scrT_\ns\setminus \scrT)]=\frac{D}{D+2}|\hka|(\scrT_\ns),
	\end{aligned}
	\end{equation}
	which confirms the first inequality in \eref{eq:hkaGenAbsNS}. 	The second inequality in \eref{eq:hkaGenAbsNS} follows from \eref{eq:hkaSym} and the fact that $|\hka|(\scrT)\geq |\hka|(\scrT_\sym)=\hka(\scrT_\sym)$; the last inequality  follows from \eref{eq:hkaSigAbsNS} and the inequality $|\hka|(\scrT)\leq 	|\hka|(\Sigma(d))$. 
\end{proof}

\begin{proof}[Proof of \lref{lem:kahkaDef}]
	Thanks to \lref{lem:SemigroupT} and \pref{pro:kappaSym},
	$\kappa(\caT)$ is independent of  $\caT$ in $\scrT_\defe$. 	
	According to \eref{eq:Sigma33DefectNum}, \lref{lem:RTisodefe},  and \tref{tab:RisoRdefe}, 
	if $d=3$, then 
	\begin{align}
	|\scrT_\defe|=2,\quad \hka(\Sigma(d))=\hka(\scrT_\iso)+\hka(\scrT_\defe)=\hka(\scrT_\iso)+2\hka(\caT),\quad \|R(\scrT_\defe)\|=2D;
	\end{align}
	if instead $d=1\mmod 3$, then 
	\begin{align}
	|\scrT_\defe|=4, \quad \hka(\Sigma(d))=\hka(\scrT_\iso)+\hka(\scrT_\defe)=\hka(\scrT_\iso)+4\hka(\caT),\quad \|R(\scrT_\defe)\|=2D+2. 
	\end{align}
	In both cases, we can deduce that
	\begin{align}
	&\hka(\scrT_\iso) +\hka(\caT)\|R(\scrT_\defe)\|=
	\hka(\Sigma(d)) +(2D-2)\hka(\caT)\nonumber\\
	&=\frac{D+2}{D+d}\kappa(\Sigma(d))+2(D+2)\left[\kappa(\caT)-\frac{\kappa(\Sigma(d))}{2D+2d}\right]=2(D+2)\kappa(\caT),
	\end{align}
	which confirms \eref{eq:kahkaDef}. 
	Here the second equality follows from \pref{pro:hkaka}. 
\end{proof}

\begin{proof}[Proof of \lref{lem:kahakRelation}]
	The equivalence of the first three statements in \lref{lem:kahakRelation} is a simple corollary of \lref{lem:hkaDeltaNS}. The equivalence of statements 3 and 4 follows from \pref{pro:hkaka} and the fact that $\kappa(\caT)\leq \kappa(\Delta)=1$ for all $\caT\in \Sigma(d)$ by  \lref{lem:kappaTLUB}. The equivalence of the three statements 2, 6, and 7 follows from the relations 
	$\kappa(\Sigma(d))=\kappa(\scrT_\ns)+6$ and
	$\hka(\Sigma(d))=(D+2)\kappa(\Sigma(d))/(D+d)$ by \pref{pro:hkaka}. If statement 5 holds, then statement~4 holds given that $\kappa(\caT)\leq 1$ 
	for all $\caT\in \Sigma(d)$ by \lref{lem:kappaTLUB}. Conversely, if statement 4 holds, then statement 6 holds, that is, $\kappa(\Sigma(d))\geq 6(D+d)/(D+2)$. Therefore, $\hka(\caT)\leq \kappa(\caT)$ by \pref{pro:hkaka} given that $\kappa(\caT)\leq 1$ for all $\caT\in \Sigma(d)$, which confirms statement 5 and completes the proof of \lref{lem:kahakRelation}. 
\end{proof}

\subsubsection{\label{app:lem:hkaLUBproof}Proofs of \lsref{lem:hkakaNsIneq}-\ref{lem:hkaLUB}}

\begin{proof}[Proof of \lref{lem:hkakaNsIneq}]	
	According to  \pref{pro:hkaka} we have
	\begin{align}
	6\hka(\Delta)+\hka(\scrT_\ns)=\frac{D+2}{D+d}[6+\kappa(\scrT_\ns)], \quad \hka(\Delta)\geq \frac{D+2}{D+d},
	\end{align}	
	which implies the first inequality in \eref{eq:hkakascrT}.	Next, let $k=|\scrT\setminus\scrT_\ns|$; then $0\leq k\leq 6$ and 
	\begin{align}
	\kappa(\scrT)=k+\kappa(\scrT_\ns),\quad \hka(\scrT)=k\hka(\Delta)+\hka(\scrT_\ns). 
	\end{align}	
	So the second inequality in \eref{eq:hkakascrT} follows from the  first inequality.
\end{proof}

\begin{proof}[Proof of  \lref{lem:hkaLB}]
	Let $R_\rmH(\caT)=\bigl[R(\caT)+R(\caT)^\dag\bigr]/2$.  By definition and  \lref{lem:kappaTLUB} we have
\begin{equation}
\kappa(\caT)=\tr\bigl[R(\caT)(|\Psi\>\<\Psi|)^{\otimes 3}\bigr]=\tr\bigl[R_\rmH(\caT)(|\Psi\>\<\Psi|)^{\otimes 3}\bigr].
\end{equation} 
	If $\caT\in \scrT_\iso$, then $R(\caT)$ is a unitary operator and $-\bbI\leq R_\rmH(\caT)\leq \bbI$.  Let $\caV$ be  the support of $R(\scrT_\iso)$ and $\Pi$ the projector onto $\caV$. Then $\caV$ is the common eigenspace of $R(\caT)$ and $R_\rmH(\caT)$  with eigenvalue 1 for $\caT\in \scrT_\iso$ and $\Pi=R(\scrT_\iso)/|O_3(d)|$. 
	Let  $p=\tr\bigl[\Pi(|\Psi\>\<\Psi|)^{\otimes 3}\bigr]=\kappa(\scrT_\iso)/|O_3(d)|$. Then 
	\begin{align}
	\kappa(\caT)&=\tr\bigl[R_\rmH(\caT)(|\Psi\>\<\Psi|)^{\otimes 3}\bigr]\geq \tr\bigl\{[\Pi-(\bbI-\Pi)](|\Psi\>\<\Psi|)^{\otimes 3}\bigr\}
	=  2p-1=\frac{2\kappa(\scrT_\iso)}{|O_3(d)|}-1\nonumber\\
	&\geq \frac{2[\kappa(\Sigma(d))-\kappa(\scrT_\defe)]}{|O_3(d)|}-1\geq \frac{2[\kappa(\Sigma(d))-|\scrT_\defe|]}{|O_3(d)|}-1,
	\end{align}
	where the last inequality holds because $\kappa(\scrT_\defe)\leq |\scrT_\defe|$ by \lref{lem:kappaTLUB}. This equation confirms the first two inequalities in \eref{eq:kappaTSLB}. The last inequality  in \eref{eq:kappaTSLB} holds because [see \lsref{lem:O3Dh}, \ref{lem:kappaTLUB}, and \eref{eq:Sigma33DefectNum}],
	\begin{align}
	\kappa(\Sigma(d))\geq \frac{4(D+d)}{D+1},\quad |O_3(d)|=
	\begin{cases}
	2d   & d=3,\\
	2(d-1)   & d = 1\mmod 3,\\
	2(d+2) & d=2\mmod 3,
	\end{cases}\quad 
	|\scrT_\defe|=
	\begin{cases}
	2   & d=3,\\
	4   & d = 1\mmod 3,\\
	0 & d=2\mmod 3.
	\end{cases}
	\end{align}

	Next, we turn to \eref{eq:hkaLB} with $\caT\in \scrT_\iso$. The first inequality in \eref{eq:hkaLB} follows from \eref{eq:kappaTSLB}. 
	If $d=3$, then $\scrT_\iso=\scrT_\sym$ and $\hka(\caT)\geq (D+2)/(D+d)$ by \eref{eq:hkakaTLUB}, which implies the second inequality in \eref{eq:hkaLB}. If $d=1\mmod 3$, then
	the second inequality in \eref{eq:hkaLB} can be derived  from the  two equations above
	and \pref{pro:hkaka} as follows,
	\begin{align}
	\hka(\caT)
	&\geq \frac{D+2}{D-1}\left[\frac{\kappa(\Sigma(d))-4}{d-1} -\frac{\kappa(\Sigma(d))}{2D+2d}-1\right]
	\geq \frac{D+2}{D-1}\left[\frac{4}{D+1} -\frac{2}{D+1}-1\right]=- \frac{D+2}{D+1}.
	\end{align}
	A similar reasoning also applies to the case $d=2\mmod 3$.

	Finally, we consider \eref{eq:hkaLB} with $\caT\in \scrT_\defe$. Now $\kappa(\caT)\geq 0$ and $\kappa(\Sigma(d))\leq 2(d+1)$ by \lref{lem:kappaTLUB}, so the first inequality in \eref{eq:hkaLB} holds.  In conjunction with \pref{pro:hkaka} we can deduce that
	\begin{align}
	\hka(\caT)
	&=\frac{D+2}{D-1}\left[\kappa(\caT) -\frac{\kappa(\Sigma(d))}{2D+2d}\right]\geq -\frac{(d+1)(D+2)}{(D-1)(D+d)},
	\end{align}
	which implies the second inequality in \eref{eq:hkaLB} if $D\geq 5$. If instead $D=d=3$, then  $\scrT_\iso=\scrT_\sym$ and  $|\scrT_\defe|=2$, which means $\kappa(\Sigma(d))=\kappa(\scrT_\sym)+\kappa(\scrT_\defe)=6+2\kappa(\caT)$ by \lref{lem:SemigroupT} and \pref{pro:kappaSym}. Therefore,
	\begin{align}
	\hka(\caT)
	&=\frac{d+2}{d-1}\left[\kappa(\caT) -\frac{6+2\kappa(\caT)}{4d}\right]\geq  -\frac{3(d+2)}{2d(d-1)}=-\frac{d+2}{d+1},
	\end{align}
	which confirms the second inequality in \eref{eq:hkaLB} and completes the proof of \lref{lem:hkaLB}. 
\end{proof}

\begin{proof}[Proof of \lref{lem:hkaLUB}]
	The first inequality in \eref{eq:hkaLUB} follows from 	\lref{lem:hkaLB}, and the other two inequalities follow from \pref{pro:hkaka}. 	The first inequality in \eref{eq:hkaSigLUB} follows from \lref{lem:kappaTLUB} and \pref{pro:hkaka}; the second and fourth inequalities are trivial; the third inequality follows from \eref{eq:hkaLUB} and the fact that $|\Sigma(d)|=2(d+1)$.

	The equality in  \eref{eq:hkaSigAbsLUB} follows from \pref{pro:hkaka}. The first inequality is trivial. The second inequality	
	can be proved by virtue of \pref{pro:hkaka} as follows, given that $\kappa(\Sigma(d))\geq0$, $|\scrT_\sym|=6$, and $|\scrT_\ns|=2d-4$,
	\begin{align}
	&|\hka|(\Sigma(d))=\sum_{\caT\in \Sigma(d)}|\hka(\caT)|=\sum_{\caT\in \scrT_\sym}\hka(\caT)+\sum_{\caT\in \scrT_\ns}|\hka(\caT)|\nonumber\\
	&=\frac{D+2}{D-1}\left\{\sum_{\caT\in \scrT_\sym}\left[\kappa(\caT) -\frac{\kappa(\Sigma(d))}{2D+2d}\right]+
	\sum_{\caT\in\scrT_\ns}\left|\kappa(\caT) -\frac{\kappa(\Sigma(d))}{2D+2d}\right|\right\}\nonumber\\	
	&\leq \frac{D+2}{D-1}\left\{\sum_{\caT\in \scrT_\sym}\left[|\kappa(\caT)| -\frac{\kappa(\Sigma(d))}{2D+2d}\right]+
	\sum_{\caT\in\scrT_\ns}\left[|\kappa(\caT)| +\frac{\kappa(\Sigma(d))}{2D+2d}\right]\right\}\nonumber\\	
	&= \frac{D+2}{D-1}\left[|\kappa|(\Sigma(d)) +\frac{d-5}{D+d}\kappa(\Sigma(d)) \right]
	\leq \frac{D+2}{D-1}\left(1+\frac{d-5}{D+d} \right)|\kappa|(\Sigma(d))\leq  \frac{D+d-1}{D}|\kappa|(\Sigma(d)).
	\end{align}
	
	The first inequality in \eref{eq:hkaNsLUB} follows from  \eqsref{eq:hkaSigNS}{eq:hkaSigLUB}; the second inequality follows from \lref{lem:hkakaNsIneq}, the third inequality follows from \eref{eq:hkaLUB} given that $|\scrT_\ns|=2(d-2)$;  the  last inequality is  trivial.
\end{proof}

\subsubsection{\label{app:kappaMhkaTLUBproof}Proof of \lref{lem:kappaMhkaTLUB}}

\begin{proof}[Proof of \lref{lem:kappaMhkaTLUB}]
	According to \pref{pro:hkaka}  we have
	\begin{align}
	\hka(\caT) -\kappa(\caT)=\frac{3\kappa(\caT)}{D-1}-\frac{(D+2)\kappa(\Sigma(d))}{2(D-1)(D+d)}\leq \frac{3}{D-1}-\frac{2(D+2)}{(D-1)(D+1)}=\frac{1}{D+1}< \frac{1}{D}, 
	\end{align}
	which implies the lower bounds in \eref{eq:kappaMhkaTLUB}. Here the first inequality holds because  $\kappa(\caT)\leq 1$ for $\caT\in \Sigma(d)$ and $\kappa(\Sigma(d))\geq 4(D+d)/(D+1)$ by \lref{lem:kappaTLUB}. 
	
	To prove the upper bound in \eref{eq:kappaMhkaTLUB}, let $\scrT=S_3 \caT$ and $\scrT'=\Sigma(d)\setminus \scrT$. Then  $|\scrT|\leq 6$, $\kappa(\scrT)=|\scrT|\kappa(\caT)$, 
	and  $\kappa(\scrT')\leq |\scrT'|=2d+2-|\scrT|$ given that  $\kappa(\caT)\leq 1$ for $\caT\in \Sigma(d)$ and $|\Sigma(d)|=2d+2$. Therefore,
	\begin{align}
	\kappa(\caT)-\hka(\caT) =\frac{(D+2)[\kappa(\scrT)+\kappa(\scrT')]}{2(D-1)(D+d)}-\frac{3\kappa(\caT)}{D-1}\leq \frac{(D+2)[|\scrT|\kappa(\caT)+2d+2-|\scrT|]}{2(D-1)(D+d)}-\frac{3\kappa(\caT)}{D-1}. \label{eq:hkamkappaTLUBproof}
	\end{align}
	Note that $\Sigma(d)=\scrT_\iso\sqcup \scrT_\defe$. 
	If $\caT\in \scrT_\iso$, then  $\kappa(\caT)\geq -1$ by \lref{lem:kappaTLUB} and $|\scrT|=|S_3|=6$; therefore,
	\begin{align}
	\kappa(\caT)-\hka(\caT)\leq \frac{3}{D-1}+\frac{(D+2)(d-5)}{(D-1)(D+d)}=\frac{d}{D}-\frac{2D^2+(d^2-6d+10)D-d^2}{D(D-1)(D+d)}\leq \frac{d}{D},
	\end{align}
	which implies the upper bound in \eref{eq:kappaMhkaTLUB}. 
	If instead $\caT\in \scrT_\defe$, then $\kappa(\caT)\geq 0$ by \lref{lem:kappaTLUB} and $|\scrT|\geq 2$ by \lref{lem:SemigroupT}, 
	so \eref{eq:hkamkappaTLUBproof} yields
	\begin{align}
	\kappa(\caT)-\hka(\caT)\leq \frac{(D+2)(2d+2-|\scrT|)}{2(D-1)(D+d)}\leq \frac{d(D+2)}{(D-1)(D+d)}
	\leq \frac{d}{D-1}, \label{eq:kappaMhkaTLUBproof}
	\end{align}
	which confirms the upper bound in \eref{eq:kappaMhkaTLUB}. 
	If in addition $d\geq 5$, then the last upper bound above can be replaced by $d/D$, so the upper bound in \eref{eq:kappaMhkaTLUB} can be replaced by $d/D$ accordingly.

	\Eref{eq:kappaMhkaSigLUB} is a simple corollary of 
	\pref{pro:hkaka} and \eref{eq:kappaSigLUB}.

	Next, suppose  $\kappa(\caT)\geq 0$ for $\caT\in \Sigma(d)$. Then \eref{eq:hkaAbsMhkaLUB} and the upper bound in \eref{eq:kappaMhkaAbsLUB} are simple corollaries of \eref{eq:kappaMhkaTLUB}. If in addition $\hka(\caT)\geq 0$, then   the lower bound in \eref{eq:kappaMhkaAbsLUB} follows from  \eref{eq:kappaMhkaTLUB}. If instead $\hka(\caT)\leq 0$,
	then by virtue of a similar reasoning that leads to \eref{eq:kappaMhkaTLUBproof} we can deduce that
	\begin{align}
	|\hka(\caT)|- \kappa(\caT)&\leq -\hka(\caT)=\frac{(D+2)[|\scrT|\kappa(\caT)+\kappa(\scrT')]}{2(D-1)(D+d)}-\frac{(D+2)\kappa(\caT)}{D-1}
	\nonumber\\
	&\leq \frac{(D+2)(2d+2-|\scrT|)}{2(D-1)(D+d)}\leq \frac{d(D+2)}{(D-1)(D+d)}
	<\frac{d}{D-1},
	\end{align}
	which confirms the lower bound in \eref{eq:kappaMhkaAbsLUB}. 
	
	It remains to prove \eqsref{eq:hkaAbsMhkaSigUB}{eq:kappaMhkaAbsSigLUB}. 
	Note that $\Sigma(d)=\scrT_\sym\sqcup\scrT_\ns$ with $|\scrT_\sym|=6$ and $|\scrT_\ns|=2(d-2)$;  meanwhile, $\hka(\caT)\geq 0$ for $\caT\in \scrT_\sym$ by \eref{eq:hkakaTLUB}. In conjunction with \eref{eq:hkaAbsMhkaLUB} we can deduce that
	\begin{align}
	|\hka|(\Sigma(d))-\hka(\Sigma(d))=|\hka|(\scrT_\ns)-\hka(\scrT_\ns)\leq \frac{4d(d-2)}{D-1}<\frac{4d^2}{D}, 
	\end{align} 
	which confirms \eref{eq:hkaAbsMhkaSigUB}.
	In conjunction with \eqsref{eq:kappaMhkaTLUB}{eq:kappaMhkaAbsLUB} we can deduce that 
	\begin{align}
	|\hka|(\Sigma(d))-\kappa(\Sigma(d))&=|\hka|(\scrT_\sym)-\kappa(\scrT_\sym)+|\hka|(\scrT_\ns)-\kappa(\scrT_\ns)\leq \frac{6}{D+1}+\frac{2d(d-2)}{D-1}\leq \frac{2d^2}{D},
	\end{align}
	which implies the lower bound in \eref{eq:kappaMhkaAbsSigLUB}. Alternatively, this lower bound follows from \eqsref{eq:kappaSigLUB}{eq:hkaSigAbsLUB}. The upper bound in \eref{eq:kappaMhkaAbsSigLUB} follows from \eref{eq:kappaMhkaSigLUB}, which completes the proof of \lref{lem:kappaMhkaTLUB}. 
\end{proof}

\subsubsection{\label{app:lem:QPsiProjProof}Proofs of \lsref{lem:QPsiProj} and \ref{lem:QPsiProjEig}}

\begin{proof}[Proof of \lref{lem:QPsiProj}]By virtue of  \eref{eq:ObShNormPsiDef} we can deduce that
	\begin{align}
	6\bcaQ_{\orb(\Psi)}(\Ob)&=\sum_{\caT\in \Sigma(d)}\hka(\caT)\caR_\caT(\Ob)
	=[K^2 \hka(\Delta)+K\hka(\tau_{23})]\bbI+\left[\hka(\scrT_0\setminus \{\Delta\})+K\hka\left(\scrT_1\setminus \{\caT_{\tau_{23}}\}\right)\right]\Ob
	\nonumber\\
	&=(K^2+K)\hka(\Delta)\bbI+\left[\frac{K+1}{2} \hka(\Sigma(d))-(K+1)\hka(\Delta)\right]\Ob,  \label{eq:QPsiProjProof}
	\end{align}
	which confirms the first equality in \eref{eq:QPsiProj0}.
	Here the second equality follows from \lref{lem:RTObStabProj}, and the third equality holds because $\hka(\tau_{23})=\hka(\Delta)$ and $\hka(\Sigma(d))=2\hka(\scrT_0)=2\hka(\scrT_1)$ by \pref{pro:kappaSym}.

	By virtue of \lref{lem:QPsiObOb0} and  \eref{eq:QPsiProjProof}	we can deduce that
	\begin{align}
	&6\bcaQ_{\orb(\Psi)}(\Ob_0)=6\bcaQ_{\orb(\Psi)}(\Ob)+\frac{(D+1)(D+2)K^2}{D^2}\bbI-\frac{2(D+2)K(K \bbI+\Ob)}{D}\nonumber\\
	&=(K^2+K)\hka(\Delta)\bbI+\left[\frac{K+1}{2} \hka(\Sigma(d))-(K+1)\hka(\Delta)\right]\Ob  +\frac{(D+1)(D+2)K^2}{D^2}\bbI-\frac{2(D+2)K(K \bbI+\Ob)}{D}\nonumber\\
	&
	=\left[(K^2+K)\hka(\Delta)-\frac{(D-1)(D+2)K^2}{D^2}\right] \bbI +\left[\frac{K+1}{2} \hka(\Sigma(d))-(K+1)\hka(\Delta)-\frac{2(D+2)K}{D}\right]\Ob \nonumber\\
	&=\left[(K^2+K)\hka(\Delta)-\frac{(D-1)(D+2)K^2}{D^2}\right] \bbI +\left[-D(K+1)\hka(\Delta)+\frac{(D+2)(DK+D-2K)}{D}
	\right]\Ob\nonumber\\
	&=\upsilon_2 \bbI+(\upsilon_1-\upsilon_2)\Ob,
	\end{align}	
	which confirms  the second equality in \eref{eq:QPsiProj0}. Here the fourth equality follows from \lref{lem:hkaDeltaNS}, and the last equality follows from the definition in \eref{eq:QPsiProj0eig}. 
\end{proof}

\begin{proof}[Proof of \lref{lem:QPsiProjEig}]
	According to  \eref{eq:QPsiProj0eig}, $\upsilon_1$ is monotonically decreasing in $\hka(\Delta)$, while $\upsilon_2$ is monotonically increasing in $\hka(\Delta)$. So \eqsref{eq:QPsiProj0eig1LUB}{eq:QPsiProj0eig2LUB} follow from the assumption $1\leq K\leq D/d$ and  the inequalities below  (see \pref{pro:hkaka}), 
	\begin{align}
	\frac{D+2}{D+d}\leq \hka(\Delta)\leq \frac{D+2}{D+1}.   \label{eq:hkaDeltaLUB}
	\end{align}	
	In addition, the upper bound in \eref{eq:QPsiProj0eig1LUB} is saturated iff $\hka(\Delta)=(D+2)/(D+d)$.
	
	The equivalence of the first two inequalities in \eref{QPsiProj0EigOrder} follows from \eref{eq:QPsiProj0eig}, and the equivalence of the last three inequalities follows from \pref{pro:hkaka} and \lref{lem:hkaDeltaNS}.

	If the inequalities in \eref{QPsiProj0EigOrder} are saturated, then 	
	\begin{align}
	\upsilon_1=\upsilon_2=\frac{(D+2)(D-K)K}{D^2}, 
	\end{align}
	which sets  a lower bound for $\max\{\upsilon_1,\upsilon_2\}$ and implies the first two inequalities in \eref{eq:QPsiProj0eigMaxLUB} given	the monotonicity properties  of $\upsilon_1,\upsilon_2$ mentioned above. The last inequality in \eref{eq:QPsiProj0eigMaxLUB} follows from \eref{eq:QPsiProj0eig1LUB} and the inequalities below,	
	\begin{align}
	\upsilon_2 \leq \frac{(D+2)(D^2+K)K}{D^2(D+1)}<\frac{(D+2)(D-K)[dD+(dD-D-d)K]}{D^2(D+d)}.  \label{eq:QPsiProjEigProof2}
	\end{align}	
	Here the first inequality is reproduced from \eref{eq:QPsiProj0eig2LUB}, and the second inequality holds because	
	\begin{align}
	\frac{(D-K)[dD+(dD-D-d)K]}{K(D+d)}- \frac{D^2+K}{D+1}\geq \frac{D[(d^2-3d+1)D+d^2-2d]}{d(D+1)}>0,
	\end{align}
	given that $1\leq K\leq D/d$ by assumption. Thanks to \eref{eq:QPsiProjEigProof2}, the upper bound in \eref{eq:QPsiProj0eigMaxLUB} is saturated iff the upper bound in \eref{eq:QPsiProj0eig1LUB} is saturated, which is the case iff
	$\hka(\Delta)=(D+2)/(D+d)$ according to the conclusion presented after \eref{eq:hkaDeltaLUB}.

	Next,  by virtue of \eref{eq:QPsiProj0eig} we can deduce that
	\begin{align}
	\upsilon_1-\upsilon_2&=\frac{(D+2)(DK+D-2K)}{D}-D(K+1)\hka(\Delta),
	\end{align}	
	which implies \eref{eq:QPsiProj0eigDif} given \eref{eq:hkaDeltaLUB}.

	Thanks to \eref{eq:QPsiProj0eig2LUB}, we have $\upsilon_2\leq K+2$. 
	If in addition $\hka(\Delta)\geq1$, then from \eref{eq:QPsiProj0eig} we can deduce that
	\begin{align}
	\upsilon_1\leq \frac{(D-K)(DK+2D-2K)}{D^2}\leq K+2,
	\end{align}
	given that  $\upsilon_1$ is monotonically decreasing in $\hka(\Delta)$.  Therefore, $\max\{\upsilon_1, \upsilon_2\}\leq K+2$. If $\hka(\Psi,\scrT_\ns)\leq 0$, then $\hka(\Delta)\geq1$ by \lref{lem:hkaDeltaNS}, so the same conclusion holds, 
	which completes the proof of \lref{lem:QPsiProjEig}.
\end{proof}

\subsection{\label{app:3designPhi3LUB}Proofs of \pref{pro:orbit3designCon}, \thref{thm:Phi3LUB}, and \coref{cor:Phi3UB}}

\begin{proof}[Proof of \pref{pro:orbit3designCon}]
	If $\orb(\Psi)$ is a 3-design, then $Q(\orb(\Psi))=6P_\sym/[D(D+1)(D+2)]$. Therefore,
	\begin{align}
	\kappa(\caT)& =\tr[R(\caT)Q(\orb(\Psi))]=\frac{6\tr[P_\sym R(\caT)]}{D(D+1)(D+2)}
	=\frac{\sum_{O\in S_3} \tr[R(O) R(\caT)]}{D(D+1)(D+2)}=\frac{3}{D+2}\quad \forall \caT\in \scrT_\ns,
	\end{align}	
	which confirms the implication $1\imply 2$. Here the fourth equality follows from \lref{lem:SemigroupscrT}, \pref{pro:RTT1T2tr}, and the fact that $R(O) R(\caT)=R(O\caT)$.
	
	If  $\kappa(\Psi, \caT)=3/(D+2)$  for all $\caT\in \scrT_\ns$, then $\kappa(\Psi,\Sigma(d))=6+\kappa(\Psi,\scrT_\ns)=6(D+d)/(D+2)$	given that $|\scrT_\ns|=2d-4$. In addition,
	$\hka(\caT)=1$ for all $\caT\in \scrT_\sym$ and
	$\hka(\caT)=0$ for all $\caT\in \scrT_\ns$ by \pref{pro:hkaka},  which confirms the implication $2\imply 3$. Meanwhile, 
	\begin{align}
	\bQ(\orb(\Psi))&=\frac{1}{6}\sum_{\caT\in \Sigma(d)}\hka(\caT)R(\caT)=\frac{1}{6}\sum_{\caT\in \scrT_\sym}R(\caT)=\frac{1}{6}\sum_{O\in S_3}R(O)=P_\sym,
	\end{align}
	so $\orb(\Psi)$ is a 3-design, which confirms the implication $2\imply 1$. 
	
	If $\hka(\caT)=0$ for all $\caT\in \scrT_\ns$, then $\kappa(\caT)$ for all $\caT\in \scrT_\ns$ are equal to each other by \pref{pro:hkaka}. In addition,
	\begin{align}
	\kappa(\caT)=\frac{\kappa(\Sigma(d))}{2D+2d}
	=\frac{3+(d-2)\kappa(\caT)}{D+d} \quad \forall \caT\in \scrT_\ns,
	\end{align}
	which implies that  $\kappa(\caT)=3/(D+2)$  for all $\caT\in \scrT_\ns$ and confirms the implication $3\imply 2$.
\end{proof}

\begin{proof}[Proof of \thref{thm:Phi3LUB}]
	By virtue of \eqsref{eq:kappapsiT}{eq:MomentQpsiBar} we can deduce that
	\begin{align}
	\bar{\Phi}_3(\orb(\Psi))
	&=\tr[Q(\orb(\Psi))\bQ(\orb(\Psi))]=\frac{1}{6}\sum_{\caT\in \Sigma(d)}\hka(\caT)\tr[Q(\orb(\Psi))R(\caT)]=\frac{1}{6}\sum_{\caT\in \Sigma(d)}\kappa(\caT)\hka(\caT), \label{eq:Phi3psiProof}
	\end{align}
	which confirms the first equality in \eref{eq:Phi3psi}. 
	The second equality in \eref{eq:Phi3psi} can be proved as follows,
	\begin{align}
	\sum_{\caT\in \Sigma(d)}\kappa(\caT)\hka(\caT)
	=6\hka(\Delta)+\sum_{\caT\in\scrT_\ns}\kappa(\caT)\hka(\caT)=6+\sum_{\caT\in \scrT_\ns}\left[\kappa(\caT)-\frac{3}{D+2}\right]\hka(\caT).
	\end{align}
	Here the first equality holds because $\Sigma(d)=\scrT_\sym\sqcup\scrT_\ns$,  $|\scrT_\sym|=6$, and  $\kappa(\caT)=1$, $\hka(\caT)=\hka(\Delta)$ for all $\caT\in \scrT_\sym$; the second equality is a simple corollary of \eref{eq:hkaDeltaNS}. The third and fourth equalities in \eref{eq:Phi3psi} follow from the first two equalities and \pref{pro:hkaka}; the last equality follows from \lref{lem:hkaDeltaNS}, which means	
	\begin{align}
	\hka(\Sigma(d))&
	=6+\frac{D-1}{D+2}\hka(\scrT_\ns),\\
	\hka(\Sigma(d),2)&	=6\left[1-\frac{\hka(\scrT_\ns)}{2(D+2)}\right]^2+\hka(\scrT_\ns,2)=6-\frac{6\hka(\scrT_\ns)}{D+2}+\frac{3\hka^2(\scrT_\ns)}{2(D+2)^2}+\hka(\scrT_\ns,2).
	\end{align}
	
	The upper bound for $\bar{\Phi}_3(\orb(\Psi))$ in \eref{eq:Phi3psi}  follows from \eref{eq:Phi3psiProof} given that $|\Sigma(d)|=2(d+1)$ and  $|\kappa(\caT)|\leq 1$, $\hka(\caT)\leq (D+2)/(D+1)$ for all $\caT\in \Sigma(d)$ by  \lref{lem:kappaTLUB} and \eref{eq:hkakaTLUB}.		
\end{proof}

\begin{proof}[Proof of \coref{cor:Phi3UB}]
	By assumption and  \lref{lem:kappaTLUB} we have
	$0\leq \kappa(\caT)\leq 1$ for all $\caT\in \Sigma(d)$, which means $\kappa(\Sigma(d))\geq \kappa(\Sigma(d),2)$ and $\kappa(\Sigma(d))\geq \kappa(\scrT_\sym)=6$. In conjunction with   \eref{eq:Phi3psi} we can deduce that
	\begin{align}
	&\bar{\Phi}_3(\orb(\Psi))\leq \frac{D+2}{6(D-1)}\left[1-\frac{3}{D+d}\right]\kappa(\Sigma(d),2)\leq \frac{D^2+3(d-2)}{6D^2}\kappa(\Sigma(d),2)=\frac{D^2+3(d-2)}{6D^2}\sum_{\caT\in \Sigma(d)}\kappa^2(\caT),
	\end{align}
	which confirms \eref{eq:Phi3PosUB1}. 
	If   $\kappa(\Sigma(d))\geq 6(D+d)/(D+2)$, then  $\kappa^2(\Sigma(d))/(D+d)\geq 6\kappa(\Sigma(d),2)/(D+2)$, which implies that $\bar{\Phi}_3(\orb(\Psi))\leq \kappa(\Sigma(d),2)/6$. 
\end{proof}

\subsection{\label{app:thm:MomentNormProof}Proofs of \thsref{thm:MomentNormd2} and \ref{thm:MomentNormd13}}

\begin{proof}[Proof of \thref{thm:MomentNormd2}]
	By virtue of  \eref{eq:MomentQpsiBar} and the relation  $\Sigma(d)=\scrT_\sym\sqcup \scrT_\ns$ we can deduce that 
	\begin{align}
	\bQ(\orb(\Psi))&=\frac{1}{6}\sum_{\caT\in \Sigma(d) }\hka(\caT)R(\caT)=\hka(\Delta)P_\sym +\frac{1}{6}\sum_{\caT\in \scrT_\ns }\hka(\caT)R(\caT), \label{eq:MomentNormProof1}\\
	\bQ(\orb(\Psi))-P_\sym&=[\hka(\Delta)-1]P_\sym+\frac{1}{6}\sum_{\caT\in \scrT_\ns }\hka(\caT)R(\caT)=-\frac{\hka(\scrT_\ns)}{2(D+2)}P_\sym +\frac{1}{6}\sum_{\caT\in \scrT_\ns }\hka(\caT)R(\caT).  \label{eq:MomentNormProof2}
	\end{align}
	The second equality in \eref{eq:MomentNormProof1} holds because  $\sum_{\caT\in \scrT_\sym} R(\caT)=6P_\sym$  and  $\hka(\caT)$ for all $\caT\in \scrT_\sym$ are equal to $\hka(\Delta)$ by \pref{pro:kappaSym}. The second equality in \eref{eq:MomentNormProof2} follows from \eref{eq:hkaDeltaNS} in \lref{lem:hkaDeltaNS}.
	In addition,  $R(\caT)$ is a unitary operator for each $\caT\in \Sigma(d)$,  and the third tensor power of each stabilizer state is a common eigenstate of $R(\caT)$ with eigenvalue 1. Therefore, $\hka(\Sigma(d))/6$ is an eigenvalue of $\bQ(\orb(\Psi))$, and the first two inequalities in  \eref{eq:MomentNormd2a} hold. The last inequality in \eref{eq:MomentNormd2a} follows from \lref{lem:hkaLUB}.  Furthermore, 
	\begin{align}
	\|\bQ(\orb(\Psi))-P_\sym\|&\leq \frac{|\hka(\scrT_\ns)|}{2(D+2)}+\frac{|\hka|(\scrT_\ns)}{6}\leq \frac{D+5}{6(D+2)}|\hka|(\scrT_\ns)< \frac{d+2}{3}, 
	\end{align}	
	which confirms  \eref{eq:MomentNormd2b}. Here the last  inequality also
	follows from \lref{lem:hkaLUB}. 
\end{proof}

\begin{proof}[Proof of \thref{thm:MomentNormd13}]
	Note that $\Sigma(d)=\scrT_\sym\sqcup \scrT_\ns=\scrT_\iso\sqcup \scrT_\defe$ and $\scrT_\iso=\scrT_\sym \sqcup(\scrT_\iso\cap \scrT_\ns)$. According to \eref{eq:MomentQpsiBar},  the third normalized moment operator $\bQ=\bQ(\orb(\Psi))$ can be expressed as follows (cf. the proof of \thref{thm:MomentNormd2}),
	\begin{equation}
	\begin{gathered}
	\bQ=M_\iso+M_\defe,\quad M_\defe:=\frac{1}{6}\sum_{\caT\in \scrT_\defe }\hka(\caT)R(\caT)	=\frac{1}{6}\hka(\caT_\defe)R(\scrT_\defe), \\
	M_\iso:=\frac{1}{6}\sum_{\caT\in \scrT_\sym}\hka(\caT)R(\caT)+\frac{1}{6}\sum_{\caT\in \scrT_\iso\cap\scrT_\ns }\hka(\caT)R(\caT)
	=\hka(\Delta)P_\sym+\frac{1}{6}\sum_{\caT\in \scrT_\iso\cap\scrT_\ns }\hka(\caT)R(\caT).
	\end{gathered}
	\end{equation}
	Here the third equality holds because $\kappa(\caT)$ for all $\caT\in \scrT_\defe$ are equal to each other by \lref{lem:SemigroupT} and \pref{pro:kappaSym}. Note that $R(\caT)$ for each $\caT\in \scrT_\iso$ is a unitary operator. 
	Let $\caV$ be the support of $R(\scrT_\iso)$; then $\caV$ is  the common eigenspace of $R(\caT)$ with eigenvalue 1  for $\caT\in \scrT_\iso$ and is contained in $\Sym_3\bigl(\caH_d^{\otimes n}\bigr)$. In addition, $R(\scrT_\defe)$ is proportional to a projector supported in a proper subspace in $\caV$ by  \lref{lem:RTisodefe}. Let $\caV^\perp$ be the orthogonal complement of $\caV$ within $\Sym_3\bigl(\caH_d^{\otimes n}\bigr)$. Then both $\caV$ and $\caV^\perp$ are invariant subspaces of $M_\iso, M_\defe, \bQ$. Let $\Pi$ and $\Pi^\perp$ be the projectors onto $\caV$ and $\caV^\perp$, respectively, and let
	\begin{align}
	Q_1=\Pi \bQ\Pi, \quad Q_2=\Pi^\perp \bQ\Pi^\perp=\Pi^\perp M_\iso \Pi^\perp. 
	\end{align}
	Then 
	\begin{align}
	\|\bQ\|=\max\{\|Q_1\|, \|Q_2\|\},\quad \|\bQ-P_\sym\|=\max\bigl\{\|Q_1-\Pi\|, \bigl\|Q_2-\Pi^\perp\bigr\|\bigr\}. 
	\end{align}

	Furthermore, the above analysis shows  that $\hka(\scrT_\iso)$ and $\hka(\scrT_\iso) +\hka(\caT_\defe)\|R(\scrT_\defe)\|=2(D+2)\kappa(\caT_\defe)$ are eigenvalues of 
	$6\bQ(\orb(\Psi))$, where the equality follows from \lref{lem:kahkaDef}. These eigenvalues are necessarily nonnegative since $6\bQ(\orb(\Psi))$ is a positive operator. They are also the only eigenvalues of  $Q_1$ if we regard $Q_1$ as an operator acting on $\caV$. Therefore,
	\begin{align}
	\|Q_1\| &=\frac{\max\{\hka(\scrT_\iso), 2(D+2)\kappa(\caT_\defe)\}}{6},\\	
	\|Q_1-\Pi\| &=\frac{\max\{|\hka(\scrT_\iso)-6|, |2(D+2)\kappa(\caT_\defe)-6|\}}{6}	
	\leq \frac{\max\{|\hka|(\scrT_\ns), |2(D+2)\kappa(\caT_\defe)-6|\}}{6},
	\end{align}
	where the last inequality holds because $|\hka(\scrT_\iso)-6|\leq  |\hka|(\scrT_\ns)$ by \eref{eq:hkaGenAbsNS} in \lref{lem:hkaDeltaNS}. 
	In addition, 
	\begin{align}
	\|Q_2\|&\leq \|M_\iso\|\leq \frac{1}{6}|\hka|(\scrT_\iso),\\
	\| Q_2-\Pi^\perp\|&\leq \|M_\iso-P_\sym\|\leq
	|\hka(\Delta)-1|+\frac{|\hka|(\scrT_\iso\cap\scrT_\ns)}{6}\leq  \frac{|\hka(\scrT_\ns)|}{2(D+2)}+\frac{|\hka|(\scrT_\ns)}{6}\leq \frac{D+5}{6(D+2)}|\hka|(\scrT_\ns), \label{eq:MomentNormProof50}
	\end{align}
	where the third inequality in \eref{eq:MomentNormProof50} follows from \eref{eq:hkaDeltaNS}  and the fact that $|\hka|(\scrT_\iso\cap\scrT_\ns)\leq |\hka|(\scrT_\ns)$. The above five equations together imply the first two inequalities in \eref{eq:MomentNormd13a} and the first inequality in \eref{eq:MomentNormd13b}. The third inequality in \eref{eq:MomentNormd13a} holds because $|\hka|(\scrT_\iso)\leq |\hka|(\Sigma(d))\leq 2(d+2)$ by
	\eref{eq:hkaSigLUB}
	and that $0\leq \kappa(\caT_\defe)\leq 1$ by \lref{lem:kappaTLUB}. The second inequality in \eref{eq:MomentNormd13b} follows from \eref{eq:hkaNsLUB}.

	Next, we consider  Eqs.~\eqref{eq:MomentNormd13c}-\eqref{eq:MomentNormd13e}. 	Thanks to \pref{pro:hkaka}, $\hka(\caT_\defe)\leq 0$ iff $\kappa(\caT_\defe)	\leq \kappa(\Psi,\Sigma(d))/[2(D+d)]$. If either condition holds, then $2(D+2)\kappa(\caT_\defe)\leq \hka(\scrT_\iso)\leq |\hka|(\scrT_\iso)$ by \lref{lem:kahkaDef}. So \eref{eq:MomentNormd13c} follows from \eref{eq:MomentNormd13a} given that $|\hka|(\scrT_\iso)\leq |\hka|(\Sigma(d))\leq 2(d+2)$ by \lref{lem:hkaLUB}. In conjunction with \eref{eq:hkaGenAbsNS} we can deduce that 
	\begin{align}
	2(D+2)\kappa(\caT_\defe)-6\leq \hka(\scrT_\iso)-6\leq \frac{D}{D+2}|\hka|(\scrT_\ns). 
	\end{align}
	So \eref{eq:MomentNormd13d} follows from \eref{eq:MomentNormd13b} given  that $|\hka|(\scrT_\ns)\leq 2(d-1)$ by \lref{lem:hkaLUB}.

	Finally, suppose  $\hka(\caT_\defe)= 0$ and $\hka(\caT)\geq 0$ for all $\caT\in \Sigma(d)$. Then 	
	\begin{align}
	|\hka|(\scrT_\ns)=\hka(\scrT_\ns),\quad 2(D+2)\kappa(\caT_\defe)=|\hka|(\scrT_\iso)=\hka(\scrT_\iso)=\hka(\Sigma(d))=6+\frac{D-1}{D+2}\hka(\scrT_\ns)\geq 6 
	\end{align}
	by 	\lref{lem:kahkaDef} and \pref{pro:hkaka}. 
	Together with \eqsref{eq:MomentNormd13c}{eq:MomentNormd13d},	this equation implies
	\eref{eq:MomentNormd13e} and completes the proof of \thref{thm:MomentNormd13}. 
\end{proof}

\subsection{\label{app:thm:StabShNormPsi}Proofs of \thref{thm:StabShNormGen} and \coref{cor:Stab1ShNormGen}}

\begin{proof}[Proof of \thref{thm:StabShNormGen}]
	\Eref{eq:QPsiProjNorm}	follows from  \lref{lem:QPsiProj} and the  inequalities below
	(see \lref{lem:hkaLUB}),
	\begin{gather}
	\frac{D+2}{D+d}\leq \hka(\Delta)\leq \frac{D+2}{D+1},\quad 
	\hka(\Sigma(d))\geq \frac{4(D+2)}{D+1}\geq 4\hka(\Delta). 
	\end{gather}
	By virtue of \lsref{lem:QPsiProj}, \ref{lem:QPsiProjEig} and  the observation $\|\Ob_0\|_2^2=K(D-K)/D$ we can deduce the following results,	
	\begin{gather}
	\frac{(D+2)(D-K)K}{D^2}\leq 6\|\bcaQ_{\orb(\Psi)}(\Ob_0)\|=\max\{\upsilon_1, \upsilon_2\}\leq \frac{(D+2)(D-K)[dD+(dD-D-d)K]}{D^2(D+d)}, \label{eq:StabShNormPsiProof} \\
	\frac{(D+2)}{D}\leq 	\frac{6\|\bcaQ_{\orb(\Psi)}(\Ob_0)\|}{\|\Ob_0\|_2^2}\leq \frac{D+2}{D+d}\left(d-1+\frac{d}{K}-\frac{d}{D}\right),  
	\end{gather}	
	which imply \eqsref{eq:StabShNormGen}{eq:StabShNormRatioGen} given that $\|\Ob_0\|^2_{\orb(\Psi)}= 6(D+1)\|\bcaQ_{\orb(\Psi)}(\Ob_0)\|/(D+2)$ by \eref{eq:ObShNormPsiDef}.   If  $|\Psi\>$ is a stabilizer state, then $\hka( \Delta)=(D+2)/(D+d)$, so  the upper bounds in the above two equations are saturated by \lref{lem:QPsiProjEig}, and the upper bounds in \eqsref{eq:StabShNormGen}{eq:StabShNormRatioGen} are saturated accordingly.

	Thanks to \pref{pro:hkaka} and \lref{lem:hkaDeltaNS}, 
	$\hka(\scrT_\ns)\geq -4(d-2)(D+2)/[D(D+d)]$ iff $\kappa(\scrT_\ns)\geq 2(d-2)/D$.	
	If either condition holds, then $\upsilon_2\leq \upsilon_1$ by \lref{lem:QPsiProjEig}, so we have $6\|\bcaQ_{\orb(\Psi)}(\Ob_0)\|=\upsilon_1$, which means $\|\Ob_0\|^2_{\orb(\Psi)}=(D+1)\upsilon_1/(D+2)$. 
	If instead  $\hka(\scrT_\ns)\leq0$, then $6\|\bcaQ_{\orb(\Psi)}(\Ob_0)\|=\max\{\upsilon_1, \upsilon_2\}\leq K+2$ by \lref{lem:QPsiProjEig}, so $\|\Ob_0\|^2_{\orb(\Psi)}\leq (D+1)(K+2)/(D+2)\leq (K+2)$, which completes the proof of \thref{thm:StabShNormGen}.
\end{proof}

\begin{proof}[Proof of \coref{cor:Stab1ShNormGen}]
	By assumption $\Ob$ is the projector onto a stabilizer state, which corresponds to the case $K=1$ in  \lref{lem:QPsiProj} and \thref{thm:StabShNormGen}. So \eref{eq:QPsiProjNormK1} is a simple corollary of \eref{eq:QPsiProjNorm}.

	Now the eigenvalues $\upsilon_1$ and $\upsilon_2$ in \lref{lem:QPsiProj} and \thref{thm:StabShNormGen} read 
	\begin{gather}
	\upsilon_1= \hka(\Sigma(d))-\frac{(D+2)(3D-1)}{D^2},\quad \upsilon_2=2\hka(\Delta)-\frac{(D-1)(D+2)}{D^2}.   
	\end{gather}
	In addition, they satisfy the following relations by \lref{lem:QPsiProjEig},
	\begin{gather}
	\upsilon_2 \leq \upsilon_1+ \frac{2(D+2)}{D(D+1)},\quad 0\leq \upsilon_1\leq  \max\{\upsilon_1, \upsilon_2\}\leq \upsilon_1+\frac{2(D+2)}{D(D+1)}\leq \hka(\Sigma(d))-\frac{(D+2)(3D^2-1)}{D^2(D+1)}. 
	\end{gather}
	In conjunction with \eref{eq:StabShNormGen} we can deduce that
	\begin{align}
	&\frac{D+1}{D+2}\hka(\Sigma(d))-\frac{(D+1)(3D-1)}{D^2}= \frac{(D+1)\upsilon_1}{D+2}\leq \|\Ob_0\|^2_{\orb(\Psi)}=\frac{(D+1)\max\{\upsilon_1, \upsilon_2\}}{D+2}\nonumber\\
	&\leq \frac{D+1}{D+2}\hka(\Sigma(d))-\frac{3D^2-1}{D^2}\leq \hka(\Sigma(d))-3,  \label{eq:StabShNormPsiK1Proof}
	\end{align}
	which impies \eref{eq:Stab1ShNormGen}. Here the last inequality holds because $\hka(\Sigma(d))\geq 4(D+2)/(D+1)$ by \lref{lem:hkaLUB}.

	To prove \eref{eq:Stab1ShNormRatioGen}, note that $\|\Ob_0\|_2^2=(D-1)/D$. 
	So the above three equations means
	\begin{align}
	\frac{\|\Ob_0\|^2_{\orb(\Psi)}}{\|\Ob_0\|_2^2}\geq \frac{D(D+1)}{(D-1)(D+2)} \left[\hka(\Sigma(d))-\frac{(D+2)(3D-1)}{D^2}\right]\geq \hka(\Sigma(d))-3-\frac{5}{D}, 
	\end{align}
	which confirms the lower bound in \eref{eq:Stab1ShNormRatioGen}. Meanwhile,  from \eref{eq:StabShNormPsiK1Proof} we can deduce  that
	\begin{align}
	\frac{\|\Ob_0\|^2_{\orb(\Psi)}}{\|\Ob_0\|_2^2}&\leq \frac{D(D+1)}{(D-1)(D+2)} \hka(\Sigma(d))-\frac{3D^2-1}{D(D-1)}=\hka(\Sigma(d))+\frac{2\hka(\Sigma(d))}{(D-1)(D+2)} -\frac{3D^2-1}{D(D-1)}\nonumber\\
	&\leq \hka(\Sigma(d))+ \frac{4(d+1)}{(D-1)(D+d)} -\frac{3D^2-1}{D(D-1)}=\hka(\Sigma(d))-3-\frac{3D^2-(d+5)D-d}{D(D-1)(D+d)}\nonumber\\
	&\leq \hka(\Sigma(d))-3, 
	\end{align}
	given that $\hka(\Sigma(d))\leq 2(d+1)(D+2)/(D+d)$ by \pref{pro:hkaka} and \lref{lem:kappaTLUB}. This equation confirms the upper bound in \eref{eq:Stab1ShNormRatioGen}. 
	
	If  $\kappa(\scrT_\ns)\geq -2(D^2-2dD+D+d)/D^2$, then $\upsilon_2\leq \upsilon_1$ by \lref{lem:QPsiProjEig}, so 
	the first inequality in \eref{eq:StabShNormPsiK1Proof} is saturated, and the lower bound in \eref{eq:Stab1ShNormGen} is saturated accordingly.  
\end{proof}

\subsection{\label{app:thm:orbitShadowProof}Proofs of \thref{thm:ShNormGen} and \coref{cor:ShNormGenPos}}

\begin{proof}[Proof of \thref{thm:ShNormGen}]
	Note that  $\Sigma(d)=\scrT_\sym\sqcup \scrT_\ns$ and $\hka(\caT)=\hka(\Delta)$ for $\caT\in \scrT_\sym$ thanks to \pref{pro:kappaSym}. By virtue of \eref{eq:ObShNormPsiDef} we can deduce that
	\begin{align}\label{eq:ObShadowProof}
	\frac{D+2}{D+1}\|\Ob\|^2_{\orb(\Psi)}&= 6\|\bcaQ_{\orb(\Psi)}(\Ob)\|\leq  \hka(\Delta)\sum_{\caT\in \scrT_\sym}\left\|\caR_\caT(\Ob) \right\| 
	+\sum_{\caT\in \scrT_\ns}| \hka(\caT)|\left\|\caR_\caT(\Ob) \right\|. 
	\end{align}
	In addition, thanks to \eref{eq:RTOS3norm} and \lref{lem:RTO}, we have	
	\begin{align}\label{eq:RTOsymNsProof}
	\sum_{\caT\in \scrT_\sym}\left\|\caR_\caT(\Ob) \right\|=\|\Ob\|_2^2 +2\|\Ob\|^2  ,\quad \sum_{\caT\in \scrT_\ns}| \hka(\caT)|\left\|\caR_\caT(\Ob) \right\|\leq |\hka|(\scrT_\ns)\|\Ob\|_2^2,
	\end{align}	
	which  implies the first inequality in \eref{eq:OShNormGen}. The second  and third inequalities in \eref{eq:OShNormGen} hold because 
	\begin{align}\label{eq:hkaDeltaNsAbsUBproof}
	\frac{D+1}{D+d}\leq \frac{D+1}{D+2}\hka(\Delta)\leq 1,\quad  \frac{D+1}{D+2}|\hka|(\scrT_\ns)\leq 2d-4
	\end{align}	
	by \lref{lem:hkaLUB}.
	The last inequality in \eref{eq:OShNormGen}  holds because  $\|\Ob\|\leq \|\Ob\|_2$.

	\Eref{eq:ShNormGen}  follows from 
	\eqsref{eq:OShNormGen}{eq:hkaDeltaNsAbsUBproof}.
	The first inequality in \eref{eq:ShNormGenB} follows from \eref{eq:Stab1ShNormRatioGen}; the second inequality  follows from 
	\eref{eq:ShNormGen} and the fact that 
	$|\hka|(\Sigma(d))=6\hka(\Delta)+|\hka|(\scrT_\ns)$.  The last inequality in \eref{eq:ShNormGenB} can be proved as follows,
	\begin{align}
	&\frac{D+1}{D+2}[|\hka|(\Sigma(d))-3\hka(\Delta)] =|\hka|(\Sigma(d))-\frac{1}{D+2}|\hka|(\Sigma(d))-\frac{3(D+1)}{D+2}\hka(\Delta)
	\nonumber\\
	&\leq |\hka|(\Sigma(d))-\frac{1}{D+2}\hka(\Sigma(d))-\frac{3(D+1)}{D+2}\hka(\Delta)
	\leq |\hka|(\Sigma(d))-3+\frac{d-5}{D+d},
	\end{align}
	where the last inequality follows from \lref{lem:hkaDeltaNS} and the first inequality in \eref{eq:hkaDeltaNsAbsUBproof}.

	If  $\Ob$ is  diagonal in some stabilizer basis, then
	by virtue of \lsref{lem:RTO} and \ref{lem:RTOdiag} we can improve the second inequality in \eref{eq:RTOsymNsProof} as follows,
	\begin{align}
	\sum_{\caT\in \scrT_\ns}| \hka(\caT)|\left\|\caR_\caT(\Ob) \right\|&=\sum_{\caT\in \scrT_\ns\cap\scrT_0}| \hka(\caT)|\left\|\caR_\caT(\Ob) \right\|   
	+\sum_{\caT\in \scrT_\ns\cap\scrT_1}| \hka(\caT)|\left\|\caR_\caT(\Ob) \right\|\nonumber\\
	&\leq |\hka|(\scrT_\ns\cap\scrT_0) \|\Ob\|^2 +|\hka|(\scrT_\ns\cap\scrT_1)\|\Ob\|_2^2= \frac{1}{2}|\hka|(\scrT_\ns)(\|\Ob\|_2^2+\|\Ob\|^2), 
	\end{align}
	where the last equality holds because $|\hka|(\scrT_\ns)=2|\hka|(\scrT_\ns\cap\scrT_0)=2|\hka|(\scrT_\ns\cap\scrT_1)$ by \pref{pro:kappaSym}. Together with \eqsref{eq:ObShadowProof}{eq:RTOsymNsProof}, this equation   implies the first inequality in \eref{eq:DiagOShNormGen}. 
	The second inequality  in \eref{eq:DiagOShNormGen} follows from \eref{eq:hkaDeltaNsAbsUBproof}, which completes the proof of \thref{thm:ShNormGen}. 
\end{proof}

\begin{proof}[Proof of \coref{cor:ShNormGenPos}]
	By assumption $\hka(\caT)\geq 0$ for all $\caT\in \Sigma(d)$,  which means $|\hka|(\scrT_\ns)=\hka(\scrT_\ns)$. In conjunction with \lsref{lem:hkakaNsIneq} and \ref{lem:hkaLUB} we can deduce that
	\begin{align}
	&[|\hka|(\scrT_\ns) +\hka(\Delta) ]\|\Ob\|_2^2+2\hka(\Delta)\|\Ob\|^2=[\hka(\scrT_\ns) +3\hka(\Delta) ]\|\Ob\|_2^2-2\hka(\Delta)(\|\Ob\|_2^2-\|\Ob\|^2)\nonumber\\
	&\leq \frac{D+2}{D+d}[3+\kappa(\scrT_\ns) ]\|\Ob\|_2^2-\frac{2(D+2)}{D+d}(\|\Ob\|_2^2-\|\Ob\|^2)= \frac{D+2}{D+d}\left\{[1+\kappa(\scrT_\ns) ]\|\Ob\|_2^2+2\|\Ob\|^2\right\}.
	\end{align}
	This equation and \eref{eq:OShNormGen} together imply the first inequality in \eref{eq:OShNormGenPos}. The second inequality in \eref{eq:OShNormGenPos} holds because
	$\kappa(\scrT_\ns)\leq 2d-4$ by \lref{lem:kappaTLUB}; the last inequality holds because $\|\Ob\|\leq \|\Ob\|_2$. \Eref{eq:ShNormGenPos} is a simple corollary of  \eref{eq:OShNormGenPos}.
	
	Next, suppose $\Ob$ is  diagonal in some stabilizer basis. Then  \eref{eq:DiagOShNormGenPos} follows from  \thref{thm:ShNormGen} and \lref{lem:hkakaNsIneq}, given that $\kappa(\scrT_\ns)\leq 2d-4$ by \lref{lem:kappaTLUB}. 
\end{proof}

\section{\label{app:BalanceProofs}Proofs of results on Clifford orbits based on balanced ensembles}
In this appendix we prove \lref{lem:kahkaBalance} and \thsref{thm:Moment3Balance}, \ref{thm:ShNormBalance}.

\begin{proof}[Proof of \lref{lem:kahkaBalance}]The first equality in \eref{eq:kaSigBalance} holds because $\kappa(\scrE,\scrT_\ns)=|\scrT_\ns|\kappa(\scrE)$ and  $|\scrT_\ns|=2(d-2)$; the second equality holds because $\kappa(\scrE,\Sigma(d))=6\kappa(\scrE,\Delta)+\kappa(\scrE,\scrT_\ns)$ and $\kappa(\scrE,\Delta)=1$.
	\Eref{eq:hkaBalance} follows from \pref{pro:hkaka} and \eref{eq:kaSigBalance}. 
	\Eref{eq:hkaSigBalance} follows from \eref{eq:hkaBalance} given that
	$\hka(\scrE,\scrT_\ns)=2(d-2)\hka(\scrE)$ and  $\hka(\scrE,\Sigma(d))=6\hka(\scrE,\Delta)+\hka(\scrE,\scrT_\ns)$. The equality in  \eref{eq:hkaSigAbsBalance} holds because $|\hka|(\scrE,\scrT_\ns)=|\hka(\scrE,\scrT_\ns)|$ and $\hka(\scrE,\Delta)\geq0$,  which means
	$|\hka|(\scrE,\Sigma(d))=6\hka(\scrE,\Delta)+2(d-2)|\hka(\scrE)|$; the inequality  holds because
	\begin{align}
	\frac{(d-2)(D+2)(D+2-3s)}{d(D-1)(D+d)}\leq \frac{(d-2)(D+2)(D+5)}{d(D-1)(D+d)}\leq  \frac{10}{9}. 
	\end{align}
	If in addition $d\geq 7$ or $n\geq 2$, then the upper
	bound $10/9$  can be improved to 1. If $\kappa(\scrE)\geq 3/(D+2)$, then $s=1$ and 
	\begin{align}
	\frac{(d-2)(D+2)(D+2-3s)}{d(D-1)(D+d)}\leq 
	\frac{(d-2)(D+2)}{d(D+d)}\leq \frac{d-2}{d}\leq 1.
	\end{align}
	In both cases,  the coefficient $20/9$ in \eref{eq:hkaSigAbsBalance} can be replaced by 2. 
\end{proof}

\begin{proof}[Proof of \thref{thm:Moment3Balance}]
	According to \eref{eq:MomentBal}, we have
	\begin{align}
	\bQ(\orb(\scrE))&=[\hka(\scrE,\Delta)-\hka(\scrE)]P_\sym+ \frac{1}{6}\hka(\scrE) R(\scrT_\iso)+ \frac{1}{6}\hka(\scrE) R(\scrT_\defe).
	\end{align}
	If $d=3$, then $R(\scrT_\iso)=6P_\sym$ and $R(\scrT_\defe)/(2D)$ is a projector of rank $(D+1)/2$ by \lref{lem:RTisodefe}. So $\bQ(\orb(\scrE))$ has two distinct eigenvalues, namely
	\begin{align}
	\lambda_1&=\hka(\scrE,\Delta)+\frac{D}{3}\hka(\scrE)=\frac{D+2}{3}\kappa(\scrE),  \quad \lambda_2=\hka(\scrE,\Delta)=1-\frac{(d-2)(D+2)}{(D-1)(D+d)}\left[\kappa(\scrE)-\frac{3}{D+2}\right],
	\end{align}
	where the second and fourth equalities follow from \lref{lem:kahkaBalance}. 
	If $d=2\mmod 3$, then $R(\scrT_\defe)=0$ and $R(\scrT_\iso)/(2d+2)$ is a projector of rank $D(D+1)(D+d)/(2d+2)$ by \lref{lem:RTisodefe}. So $\bQ(\orb(\scrE))$ has two distinct eigenvalues, namely, 
	\begin{align}
	\!\!\lambda_1&=\hka(\scrE,\Delta)+\frac{d-2}{3}\hka(\scrE)=\frac{(d-2)(D+2)}{3(D+d)}\kappa(\scrE)+\frac{D+2}{D+d},  \;\; \lambda_2=\hka(\scrE,\Delta)-\hka(\scrE)=\frac{D+2}{D-1}[1-\kappa(\scrE)]. 
	\end{align}
	In both cases, $\lambda_1>\lambda_2$ if $\kappa(\scrE)> 3/(D+2)$, while $\lambda_1<\lambda_2$ 
	if $\kappa(\scrE)< 3/(D+2)$. In addition, $|\lambda_1-1|\geq |\lambda_2-1|$ except when $d=5$, in which case the inequality is reversed.

	Next, suppose  $d=1\mmod 3$; then $R(\scrT_\iso)/(2d-2)$ is a projector of rank $D(D+1)(D+d-2)/(2d-2)$, and $R(\scrT_\defe)/(2D+2)$ is a projector of rank $D$ by \lref{lem:RTisodefe}. In addition, the support of 
	$R(\scrT_\defe)$ is contained in the support of $R(\scrT_\iso)$, which in  turn is contained in $\Sym_3\bigl(\caH_d^{\otimes n}\bigr)$. So $\bQ(\orb(\scrE))$ has three distinct eigenvalues, namely, 
	\begin{equation}
	\begin{aligned}
	\lambda_1&=\hka(\scrE,\Delta)+\frac{D+d-3}{3}\hka(\scrE)=\frac{D+2}{3}\kappa(\scrE),\\
	\lambda_2&=\hka(\scrE,\Delta)+\frac{d-4}{3}\hka(\scrE)=\frac{(D+2)(dD-4D-d-2)}{3(D-1)(D+d)} \kappa(\scrE)+  \frac{(D+1)(D+2)}{(D-1)(D+d)},\\
	\lambda_3&=\hka(\scrE,\Delta)-\hka(\scrE)=\frac{D+2}{D-1}[1-\kappa(\scrE)].
	\end{aligned}
	\end{equation}
	Note that $|\lambda_1-1|\geq |\lambda_3-1|$. In addition,
	$\lambda_1>\lambda_2>\lambda_3$ if $\kappa(\scrE)> 3/(D+2)$, while $\lambda_1<\lambda_2<\lambda_3$ 
	if $\kappa(\scrE)< 3/(D+2)$. The above analysis clarifies the eigenvalues of the third normalized moment operator $\bQ(\orb(\scrE))$ as
	summarized in \thref{thm:Moment3Balance} and \tref{tab:Qbalance}. In conjunction with \lref{lem:RTisodefe} it is straightforward to determine the multiplicities of these  eigenvalues.
	It is also  straightforward	to determine the operator norms of $\bQ(\orb(\scrE))$ and $\bQ(\orb(\scrE))-P_\sym$. 
\end{proof}

\begin{proof}[Proof of \thref{thm:ShNormBalance}]Note that \thref{thm:ShNormGen} is still applicable when the pure state $|\Psi\>$ is replaced by the ensemble~$\scrE$. Based on this observation \eref{eq:OShNormBalance} can be proved as follows,
	\begin{align}
	\|\Ob\|^2_{\orb(\scrE)}&\leq 	
	\left[1+\frac{D+1}{D+2}|\hka|(\scrE,\scrT_\ns) \right]\|\Ob\|_2^2+2\|\Ob\|^2
	\nonumber\\
	&\leq\left[1+	
	\frac{2(d-2)(D+1)(D+2)}{(D-1)(D+d)}\left|\kappa(\scrE)-\frac{3}{D+2}\right|\right]
	\|\Ob\|_2^2+
	2\|\Ob\|^2\nonumber\\
	&\leq\left[1+	
	2d\left|\kappa(\scrE)-\frac{3}{D+2}\right|\right]
	\|\Ob\|_2^2+
	2	\|\Ob\|^2. 
	\end{align}
	Here the second inequality follows from \lref{lem:kahkaBalance}, and  the third inequality holds because
	\begin{align}
	\frac{2(d-2)(D+1)(D+2)}{(D-1)(D+d)}\leq 2d.
	\end{align}
	\Eref{eq:ShNormBalance} is a simple corollary of \eref{eq:OShNormBalance}. 
\end{proof}

\section{\label{app:GaussJacobi}Gauss sums and Jacobi sums}
Here we review some basic results on Gauss sums and Jacobi sums that are relevant to the current study. To this end we first need to review a few basic concepts about multiplicative characters of finite fields. Our discussion is based on \rscite{LidlN97book,BernEW98book}.

\subsection{\label{app:MultiChar}Multiplicative characters of $\bbF_d$}
Suppose $d$ is a prime. A \emph{multiplicative character} of the field $\bbF_d$ is a character of the multiplicative group $\bbF_d^{\times}=\bbF_d\setminus\{0\}$. Let $\nu$ be a primitive element in the field $\bbF_d$, then $\nu^{d-1}=1$ and $\bbF_d^{\times}=\{1,\nu, \nu^2, \ldots, \nu^{d-2}\}$.  For each $j=0,1,\ldots, d-1$, a multiplicative character of $\bbF_d$ can be defined as follows,
\begin{align}\label{eq:MultiChar}
\eta(\nu^k,d):=\rme^{2\pi\rmi jk/(d-1)},  \quad k=0,1,\ldots, d-2.
\end{align}
Moreover, any multiplicative character of $\bbF_d$ has this form for some $j$. 
Here $d$ may be omitted if it is clear from the context. The multiplicative character corresponding to $j=0$ is trivial, while other characters are nontrivial.  By convention $\eta(0)=1$ if $\eta$ is trivial and $\eta(0)=0$ if $\eta$ is nontrivial.

Given a multiplicative character $\eta$ of $\bbF_d$, denote by $\eta^*$  the character obtained from $\eta$ by taking complex conjugation, that is, $\eta^*(a):=[\eta(a)]^*$. Given $m\in \bbN$, denote by $\eta^m$  the character defined as follows,
\begin{align}\label{eq:etam}
\eta^m(a):=[\eta(a)]^m\quad \forall a\in \bbF_d^{\times}. 
\end{align}
Note that the equality $\eta^m(a)=[\eta(a)]^m$ is not guaranteed  when $a=0$ because $\eta^m(0)$  is determined by whether $\eta^m$ is trivial. 
If $\eta$ is the character specified by \eref{eq:MultiChar}, then $\eta^m$ is identical to the  character corresponding to $jm \mmod (d-1)$ in \eref{eq:MultiChar}. 
The order of $\eta$ is defined as the smallest positive integer $m$ such that $\eta^m$ is trivial. By definition the order of $\eta$ is a divisor of $d-1$; 
$\eta$ has order 1 iff it is trivial.

An element in $\bbF_d$ is a \emph{quadratic residue} if it is the square of another element and a \emph{quadratic nonresidue} otherwise. 
The \emph{quadratic character} is the unique multiplicative character of order 2 and is denoted by $\eta_2$. By definition we have $\eta_2(\nu^k)=(-1)^k$,
which means $\eta_2(a)=1$ if $a$ is a nonzero quadratic residue and $\eta_2(a)=-1$ if $a$ is a quadratic nonresidue.  The quadratic character $\eta_2$ obeys an important  quadratic reciprocity relation as stated in the following lemma, which reproduces Theorem~5.17 in \rcite{LidlN97book}. 
\begin{lemma}\label{lem:QuadraticReciprocity}
	Suppose $p$ and $s$ are two distinct odd primes. Then $\eta_2(p,s)\eta_2(s,p)=(-1)^{(p-1)(s-1)/4}$.
\end{lemma}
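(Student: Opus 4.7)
The plan is to prove the reciprocity law via Gauss sums, which fits naturally with the machinery already present in the paper and provides one of the cleanest classical proofs. I would work with the quadratic Gauss sum $g := g(\eta_2,1)$ associated with the quadratic character $\eta_2 = \eta_2(\cdot,s)$ of $\bbF_s$, living in the cyclotomic ring $R = \bbZ[\omega_s]$ with $\omega_s = \rme^{2\pi\rmi/s}$. The fundamental identity is $g^2 = \eta_2(-1,s)\,s = (-1)^{(s-1)/2} s$; this follows from the standard pairing argument that collapses the double sum $\sum_{a,b\in\bbF_s^\times} \eta_2(ab)\,\omega_s^{a+b}$ via the substitution $b = ac$ and evaluation of the inner geometric sum over $c$.

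The core of the proof is to evaluate $g^p$ modulo $p$ in $R$ in two different ways and compare. On the one hand, writing $g^p = g \cdot (g^2)^{(p-1)/2}$ gives $g^p = g \cdot (-1)^{(s-1)(p-1)/4}\, s^{(p-1)/2}$, and by Euler's criterion $s^{(p-1)/2} \equiv \eta_2(s,p) \pmod p$. On the other hand, the Frobenius-style congruence $(x+y)^p \equiv x^p + y^p \pmod{pR}$, iterated over the defining sum of $g$, yields
\[
g^p \equiv \sum_{a\in\bbF_s^\times} \eta_2(a,s)^p\, \omega_s^{pa} \equiv \sum_{a\in\bbF_s^\times} \eta_2(a,s)\, \omega_s^{pa} \pmod{pR},
\]
since $\eta_2$ takes values in $\{\pm 1\}$ and $p$ is odd. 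Reindexing by $b = pa$, which is a bijection of $\bbF_s^\times$ because $p \ne s$, and using multiplicativity of $\eta_2$ together with $\eta_2(p^{-1},s) = \eta_2(p,s)$, this becomes $\eta_2(p,s)\,g$. Equating the two expressions for $g^p$ and cancelling $g$ leaves $(-1)^{(s-1)(p-1)/4}\,\eta_2(s,p) \equiv \eta_2(p,s) \pmod p$, and because both sides are $\pm 1$ with $p$ odd, this congruence lifts to the stated equality of integers.

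The main point requiring care is the cancellation of $g$ in $R/pR$: one must check that $g$ is not a zero divisor modulo $p$. This is immediate from $g^2 = \pm s$ together with $\gcd(p,s)=1$, so $g^2$ is a unit in $R/pR$ and hence so is $g$; but it should be stated explicitly. A secondary subtlety is the extension of the Frobenius congruence from two summands to an arbitrary finite sum, which is valid in any commutative ring of characteristic $p$ and therefore in $R/pR$. Everything else reduces to elementary character manipulations and two applications of Euler's criterion, so the proof is self-contained once the identity $g^2 = \eta_2(-1,s)\,s$ is in hand.
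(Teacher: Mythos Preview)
Your proof is correct and complete; it is the classical Gauss sum proof of quadratic reciprocity, and the care you take with the cancellation of $g$ in $R/pR$ and with the Frobenius congruence is exactly what is needed. However, the paper does not actually prove this lemma: it simply states the result and attributes it to Theorem~5.17 in Lidl--Niederreiter, treating quadratic reciprocity as a standard classical input. So you have supplied a full argument where the paper only gives a citation; your approach fits seamlessly with the Gauss sum machinery already developed in the paper's appendix, which makes it a natural choice if a self-contained proof were desired.

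One small notational slip: you write $g(\eta_2,1)$, but in the paper's conventions the character-based Gauss sum is $G(\eta,a)$, while $g(m,a)$ takes an integer exponent $m$ rather than a character. For $m=2$ these coincide via $g(2,a)=G(\eta_2,a)$, so nothing is wrong, but it would be cleaner to write $G(\eta_2)$ throughout.
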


An element in $\bbF_d$ is a \emph{cubic residue} if it is the cube of another element and a \emph{cubic nonresidue} otherwise. When $d\neq 1\mod 3$, every element in $\bbF_d$ is a cubic residue. 
When $d=1\mmod 3$, the situation is different: 0 and one third elements in $\bbF_d^{\times}$ are cubic residues. A multiplicative character  $\eta$ of $\bbF_d$ is a \emph{cubic character} if it has order 3. In that case, $\eta(a)=1$ iff $a$ is a nonzero cubic residue.  In addition,
$\eta$  corresponds to the character in \eref{eq:MultiChar} with
$j=(d-1)/3$ or $j=2(d-1)/3$, and the first choice means $\eta(\nu^k)=\omega_3^k$,
where $\omega_3=\rme^{2\pi\rmi/3}$. 
This cubic character depends on the choice of the primitive element $\nu$, in contrast with $\eta_2$, which is independent of $\nu$.

\subsection{Gauss sums}
Here we review two types of Gauss sums that are crucial to the current study. 
Let $m>1$ be  a positive integer, $a\in \bbF_d^{\times}$, and let $\eta$ be a nontrivial multiplicative character of $\bbF_d$ of order $m$.  The Gauss sums  $g(m,a)$ and $G(\eta,a)$ \cite{LidlN97book,BernEW98book} are defined as 
\begin{align}\label{eq:GaussSumDef}
g(m,a):=\sum_{b\in \bbF_d}\omega_d^{ab^m},\quad 
G(\eta,a):=\sum_{b\in \bbF_d}\eta(b)\omega_d^{ab}=\sum_{b\in \bbF_d^{\times}}\eta(b)\omega_d^{ab},
\end{align}
where $\omega_d=\rme^{2\pi \rmi/d}$ and the last equality holds because $\eta(0)=0$ given that $\eta$ is nontrivial. When $a=1$, $g(m,a)$ and $G(\eta,a)$ are abbreviated as $g(m)$ and $G(\eta)$, respectively.  According to Eq.~(1.1.4) in \rcite{BernEW98book},
the two types of Gauss sums are connected by the following formula,
\begin{align}\label{eq:GaussSumConect}
g(m,a)=\sum_{j=1}^{m-1}G(\eta^j,a), 
\end{align}
which means $G(\eta,a)=g(m,a)$ when $m=2$. 
The complex conjugate and $m$th power of 
$G(\eta,a)$ are denoted by $G^*(\eta,a)$ and $G^m(\eta,a)$, respectively, and similarly for  $g(m,a)$.

Next, we summarize a few basic properties of Gauss sums. 
\pref{pro:GaussSum} below follows from Theorems~1.1.3 and 1.1.4 in \rcite{BernEW98book}; see also Theorems~5.11 and 5.12 in \rcite{LidlN97book}.
\begin{proposition}\label{pro:GaussSum}
	Suppose $d$ is an odd prime, $a\in \bbF_d^{\times}$, and $\eta$ is a nontrivial multiplicative character of $\bbF_d$. Then 
	\begin{align}
	|G(\eta,a)|=\sqrt{d},\quad
	G(\eta,a)=\eta(1/a)G(\eta)=\eta^*(a)G(\eta),\quad 
	G^*(\eta,a)=\eta(-1)G(\eta^*,a).
	\end{align} 
\end{proposition}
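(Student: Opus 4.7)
The plan is to dispatch the three claims in order, since each builds on the previous, and all reduce to elementary substitutions in the defining sum for $G(\eta,a)$.

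I would first prove the substitution identity $G(\eta,a)=\eta^*(a)G(\eta)$. Since $a\in\bbF_d^\times$, the map $b\mapsto ab$ is a bijection of $\bbF_d$. Writing $b=c/a$ in the defining sum gives
\begin{equation}
G(\eta,a)=\sum_{c\in\bbF_d^\times}\eta(c/a)\,\omega_d^{c}=\eta(1/a)\sum_{c\in\bbF_d^\times}\eta(c)\,\omega_d^{c}=\eta(1/a)\,G(\eta),
\end{equation}
where multiplicativity of $\eta$ on $\bbF_d^\times$ is used. Since $|\eta(a)|=1$, we have $\eta(1/a)=\eta(a)^{-1}=\overline{\eta(a)}=\eta^*(a)$, giving the second equality. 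The only subtle point is the convention $\eta(0)=0$ for nontrivial $\eta$, which ensures the $b=0$ term can be freely dropped.

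Next, $|G(\eta,a)|=\sqrt{d}$ reduces via the first identity to $|G(\eta)|^2=d$. I would compute
\begin{equation}
|G(\eta)|^2=\sum_{b,c\in\bbF_d^\times}\eta(b)\overline{\eta(c)}\,\omega_d^{b-c}
\end{equation}
and substitute $b=cu$ with $u\in\bbF_d^\times$, so $\eta(b)\overline{\eta(c)}=\eta(u)$ (since $|\eta(c)|=1$) and $b-c=c(u-1)$. Summing over $c\in\bbF_d^\times$ first gives $d-1$ when $u=1$ and $-1$ otherwise (the standard geometric identity $\sum_{c\in\bbF_d}\omega_d^{ck}=d\,\delta_{k,0}$ minus the $c=0$ term). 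Thus
\begin{equation}
|G(\eta)|^2=(d-1)-\sum_{u\in\bbF_d^\times,\,u\neq 1}\eta(u)=d-\sum_{u\in\bbF_d^\times}\eta(u)=d,
\end{equation}
where the last equality uses $\sum_{u\in\bbF_d^\times}\eta(u)=0$ for nontrivial $\eta$ (orthogonality of characters).

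Finally, for the conjugation identity, I would start from $G^*(\eta,a)=\sum_{b}\eta^*(b)\,\omega_d^{-ab}$ and substitute $b\mapsto -b$, yielding $\eta^*(-b)=\eta^*(-1)\eta^*(b)$ and $\omega_d^{-a(-b)}=\omega_d^{ab}$, so $G^*(\eta,a)=\eta^*(-1)G(\eta^*,a)$. The proof concludes by noting $\eta(-1)^2=\eta(1)=1$, so $\eta(-1)=\pm 1$ is real and $\eta^*(-1)=\eta(-1)$. None of the steps present a real obstacle — the arguments are standard manipulations of multiplicative characters over $\bbF_d$ — the only care needed is consistent bookkeeping of the convention $\eta(0)=0$ and the multiplicativity $\eta(bc)=\eta(b)\eta(c)$ on the multiplicative group.
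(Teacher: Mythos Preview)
Your proof is correct and entirely standard. The paper does not actually prove this proposition itself; it simply cites Theorems~1.1.3 and~1.1.4 in Berndt--Evans--Williams and Theorems~5.11 and~5.12 in Lidl--Niederreiter, so your direct elementary argument is exactly what those references contain.
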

Now suppose $m>1$ is a positive integer that divides $d-1$ and $\eta$ is a  multiplicative character of $\bbF_d$ of order $m$. As a simple corollary of 
\eref{eq:GaussSumConect} and \pref{pro:GaussSum}, we can deduce that
\begin{gather}
g(m,a)=\sum_{j=1}^{m-1}\eta^j(1/a)G(\eta^j)=\sum_{j=1}^{m-1}\eta^{j*}(a)G(\eta^j),  \quad 
|g(m,a)|\leq (m-1)\sqrt{d}\quad \forall a\in \bbF_d^{\times}, \label{eq:gmaGUB}\\
\sum_{k=0}^{m-1}g(m,\nu^k)=\sum_{k=0}^{m-1}G(\eta,\nu^k)=0 .\label{eq:GaussSumSum}
\end{gather}
In addition, if $a,b\in \bbF_d^{\times}$ satisfy $\eta(a)=\eta(b)$, then $G(\eta,a)=G(\eta,b)$ and $g(m,a)=g(m,b)$.

\begin{table*}
	\caption{\label{tab:tgd}The value of $\tg(d)^2$   [see \eref{eq:tgd}]  for each odd prime $d$ satisfying $d=1\mmod 3$ and $d\leq 1000$.  }
	\renewcommand{\arraystretch}{1.4}
	\small
	\begin{tabular}{c | ccccc  ccccc   ccccc   ccccc}
		\hline\hline
		$d$  & 7&  13&  19&  31&  37&  43&  61&  67&  73&  79&  97&  103&  109&  127&  139&  151& 	157&  163&  181&  193\\
		
		$\tg(d)^2$ &  1&  25&  49&  16&  121&  64&  1&  25&  49&  289&  361&  169&  4&  400&  529&  361& 	196&  625&  49&  529\\
		
		$d$  & 199&  211&  223&  229&  241&  271& 277& 283& 307& 313& 331& 337& 349&  367& 373& 379& 397& 409& 421& 433\\ 
		
		$\tg(d)^2$ &  121&  169&  784&  484&  289&  841&  676& 1024& 256& 1225& 1& 25& 1369& 1225& 169& 841& 1156& 961& 361& 4\\
		
		$d$  & 439& 457& 463&  487&  499&  523&  541&  547&  571&  577&  601&  607&  613& 	619&  631&  643&  661&  673&  691&  709\\ 
		
		$\tg(d)^2$ &  784&  100&  529&  625&  1024&  1849&  841&  1&  961&  121&  676&  2401&  2209&  289&  1849&  1600&  2401&  1369&  64& 2809\\
		
		$d$  &  727&  733&  739&  751&  757&  769&  787&  811&  823&  829&  853&  859&  877&  883&  907&  919&  937&  967&  991&  997\\ 
		$\tg(d)^2$ &  1936&  2500&  256&  1681&  841&  2401&  961&  3136&  25&  49&  1225&  169&  3481&  2209&  361&  2704&  3721&  1681&  3721&  100\\
		\hline	\hline 
	\end{tabular}
\end{table*}

When $\eta$ is a cubic character, which is the most important case to the current study, we have $\eta^*=\eta^2$ and  $\eta(-1)=1$, so \eref{eq:GaussSumConect} and
\pref{pro:GaussSum} imply that
\begin{gather}
G^*(\eta,a)=G(\eta^*,a)=G(\eta^2,a),\quad 
g(3,a)=2\Re G(\eta,a) =2\Re G(\eta^2,a)=2\Re G(\eta^*,a). \label{eq:gGaussSumCubic}
\end{gather}	
Therefore, $\Re G(\eta,a)$ is independent of the cubic character $\eta$, in contrast with $G(\eta,a)$.  In addition, the function  $\tg(d)=g(3,1)g(3,\nu)g(3,\nu^2)/d$  defined in \eref{eq:tgd} is of special interest to us; note that it is  independent of the choice of  $\nu$. The  value of $\tg(d)^2$ for each odd prime $d$ satisfying $d=1\mmod 3$ and $d\leq 1000$ is shown in \tref{tab:tgd}. The following lemma offers informative bounds for the Gauss sums $g(3,1)$, $g(3,\nu)$,  $g(3,\nu^2)$ and the function $\tg(d)$; it improves over Theorem~4.1.2 in \rcite{BernEW98book}.

\begin{lemma}\label{lem:GaussSum3LUB}
	Suppose $d$ is an odd prime satisfying $d=1\mmod 3$. Then  $g(3,1)$, $g(3,\nu)$,  $g(3,\nu^2)$ are three distinct real numbers, and
	$\tg(d)$ is an integer that  satisfies $\tg(d)=1\mmod 3$.  In addition,
	\begin{gather}
	g(3,1)+g(3,\nu)+g(3,\nu^2)=0, \quad 1\leq |\tg(d)|\leq \sqrt{4d-27},	\quad \frac{1}{3}< |g(3,\nu^j)|<\sqrt{4d-3}\quad \forall j=0,1,2.  \label{eq:GaussSumProdLUB}
	\end{gather}
	Moreover, each of  the following three open intervals 
	\begin{align}\label{eq:GaussInterval}
	\bigl(-\sqrt{4d-3}, -\sqrt{d}\lsp\bigr),\quad \bigl(-\sqrt{d},\sqrt{d}\lsp\bigr),\quad \bigl(\sqrt{d},\sqrt{4d-3}\lsp\bigr)
	\end{align}
	contains exactly one of the three numbers
	$g(3,1)$, $g(3,\nu)$, $g(3,\nu^2)$.
	Each of  the following three open intervals 
	\begin{align}\label{eq:GaussInterval2}
	(0, d),\quad(d,3d),\quad (3d,4d-3)
	\end{align}
	contains exactly one of the three numbers
	$g^2(3,1)$, $g^2(3,\nu)$, and $g^2(3,\nu^2)$.		
\end{lemma}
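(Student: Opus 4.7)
The plan is to parametrize the three cubic Gauss sums trigonometrically, invoke the classical Gauss–Kummer representation $4d = L^2 + 27M^2$ to pin down $\tg(d)$, and then locate the three Gauss sums by applying the intermediate value theorem to the cubic whose zeros are $g(3,\nu^j)^2$. First I would write $G(\eta)=\sqrt{d}\,e^{\rmi\phi}$ for a cubic character $\eta = \eta_3$ and use $G(\eta,a)=\eta^*(a)G(\eta)$ together with \eref{eq:gGaussSumCubic} to obtain the closed form $g(3,\nu^j) = 2\sqrt{d}\cos(\phi - 2\pi j/3)$. Reality of $g(3,a)$ follows from the substitution $b\mapsto -b$ (legal since $d$ is odd), and $\sum_j g(3,\nu^j)=0$ is the specialization of \eref{eq:GaussSumSum}. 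The triple-angle identity $\prod_{j=0}^2 \cos(\alpha - 2\pi j/3) = \tfrac14\cos(3\alpha)$ then gives $\tg(d) = 2\sqrt{d}\cos(3\phi)$ and $\tg(d)^2 = 2d[1+\cos(6\phi)]$, while Vieta identifies the three Gauss sums as the roots of the real cubic $f(t) = t^3 - 3d\,t - d\,\tg(d)$.

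The integrality of $\tg(d)$ together with the bound $|\tg(d)|\le\sqrt{4d-27}$ would follow from the classical theorem on cubic Jacobi sums (Berndt–Evans–Williams \S 4.2): for each prime $d\equiv 1 \pmod 3$ there is a unique pair $(L,M)\in\bbZ^2$ with $L\equiv 1 \pmod 3$ satisfying $4d = L^2+27M^2$, and—with $\eta$ and $\nu$ fixed consistently—$\tg(d)=L$. The congruence forces $\tg(d)\ne 0$ and hence $|\tg(d)|\ge 1$, and the primality of $d\ge 7$ precludes $d$ from being a perfect square, so $M\ne 0$ and $L^2\le 4d-27$.

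For the placement of the three $g(3,\nu^j)^2$ I would study the cubic $h(s) = s^3 - 6d s^2 + 9d^2 s - d^2\tg(d)^2$, whose coefficients follow from Vieta applied to $f$ via the Newton identities ($\sum g = 0$, $\sum_{i<j} g_i g_j = -3d$, $\prod g = d\,\tg(d)$). Sign evaluations give $h(0),\,h(3d)<0$, $h(d)>0$, $h(1/9)<0$ (using $\tg(d)^2\ge 1$), and $h(4d-3)>0$ (using $\tg(d)^2\le 4d-27$). The intermediate value theorem then produces exactly one root in each of $(1/9,d)$, $(d,3d)$, and $(3d,4d-3)$, which simultaneously gives \eref{eq:GaussInterval2}, the bounds $1/3<|g(3,\nu^j)|<\sqrt{4d-3}$, and the distinctness of the three Gauss sums (distinct squares rule out the alternative $g_i=\pm g_j$).

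Finally, to upgrade from the squared placement to the signed placement \eref{eq:GaussInterval}, I would argue by sign. Let $g_+$ denote the root with $g_+^2\in(3d,4d-3)$, so $|g_+|>\sqrt{3d}$, and let $g_1,g_2$ be the other two roots. Vieta on $f$ gives $g_1+g_2 = -g_+$ and $g_1 g_2 = g_+^2 - 3d > 0$, so $g_1$ and $g_2$ share a common sign, necessarily opposite to that of $g_+$. Combined with $\{g_1^2,g_2^2\}\subset(0,d)\cup(d,3d)$, this places $g_+$ in $(\sqrt d,\sqrt{4d-3})$ or $(-\sqrt{4d-3},-\sqrt d)$ according to its sign, places the root with the smaller square in $(-\sqrt d,\sqrt d)$, and places the remaining root in the large signed interval on the side opposite $g_+$; in either case exactly one $g(3,\nu^j)$ lands in each of the three signed intervals. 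The only nontrivial input is the invocation of the Gauss–Kummer decomposition—everything else reduces to a Vieta computation and polynomial sign bookkeeping—so I expect that classical number-theoretic fact to carry the essential weight of the proof.
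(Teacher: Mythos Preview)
Your proposal is correct and rests on the same two pillars as the paper's proof: the classical Gauss--Kummer representation $4d=L^2+27M^2$ with $L\equiv 1\pmod 3$ (cited from Berndt--Evans--Williams) to establish that $\tg(d)=L$ is an integer with $1\le\tg(d)^2\le 4d-27$, and an intermediate-value argument on a cubic to locate the three Gauss sums. The execution differs in order. The paper works directly with $f(t)=t^3-3d\,t-d\,\tg(d)$, evaluates it at the four points $\pm\sqrt{d}$ and $\pm\sqrt{4d-3}$ to obtain the signed placement \eref{eq:GaussInterval} first, then reads off the squared placement \eref{eq:GaussInterval2} from the observation that a root has the sign of $\tg(d)$ iff its square exceeds $3d$; the lower bound $|g(3,\nu^j)|>1/3$ is obtained separately via the identity $|g|=|d\,\tg(d)|/|g^2-3d|<3d/|g^2-3d|^{-1}\cdot|\tg(d)|$. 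You instead pass to the cubic $h(s)$ in the squares, evaluate it at $0,\,1/9,\,d,\,3d,\,4d-3$, and then recover the signed placement via a Vieta sign argument. Your route has the pleasant feature that the evaluation $h(1/9)<0$ delivers the lower bound $|g|>1/3$ in the same stroke as the interval placement, whereas the paper's route gets the signed intervals more directly without the extra Vieta step at the end. Both are short and entirely valid.
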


Note that the first two inequalities in \eref{eq:GaussSumProdLUB}  are saturated simultaneously when $d=7$. Let $\eta$ be any cubic character of the field $\bbF_d$.  By virtue of \eref{eq:gmaGUB} and \lref{lem:GaussSum3LUB} we can deduce that $g(3,a)=g(3,b)$ iff $\eta(a)=\eta(b)$;  the same result still holds if $g(3,a)$ and $g(3,b)$ are replaced by $G(\eta,a)$ and $G(\eta,b)$, respectively. Notably, $g(3,a)$  for each $a\in \bbF_d$ is equal to $g(3,\nu^j)$ for some $j=0,1,2$.

\begin{proof}[Proof of \lref{lem:GaussSum3LUB}]
	According to Theorems 3.1.3 and 4.1.2 in \rcite{BernEW98book},  $g(3,1)$, $g(3,\nu)$,  $g(3,\nu^2)$ are three distinct roots of the following cubic polynomial with real coefficients,
	\begin{align}\label{eq:GaussSumfx}
	f(x)=x^3-3dx-ds,
	\end{align}
	where $s$ is an integer that satisfies the following condition with $j$ being another integer,
	\begin{align}\label{eq:rscondition}
	s^2+3j^2=4d, \quad s=1\mmod 3,\quad j=0\mmod 3.
	\end{align}
	Since the polynomial $f(x)$ cannot have two roots with the same absolute value, $|g(3,1)|$, $|g(3,\nu)|$, $|g(3,\nu^2)|$ are also distinct from each other. Now, the equality in \eref{eq:GaussSumProdLUB} is a simple corollary of the above analysis; alternatively, it follows from \eref{eq:GaussSumSum}.	Furthermore, we have
	$\tg(d)=s$ and $1\leq |s|\leq \sqrt{4d-27}<2\sqrt{d}$, which means $\tg(d)$ is an integer that satisfies $\tg(d)=1\mmod 3$ and  the first two inequalities in \eref{eq:GaussSumProdLUB}.

	Next, direct calculation shows that	
	\begin{align}\label{eq:fxvalues}
	\begin{gathered}
	f\bigl(-\sqrt{4d-3}\lsp\bigr)=-(d-3)\sqrt{4d-3}-ds<0,\quad 
	f\bigl(-\sqrt{d}\lsp\bigr)=d\bigl(2\sqrt{d}-s\bigr)>0,\\
	f\bigl(\sqrt{d}\lsp\bigr)=-d\bigl(2\sqrt{d}+s\bigr)<0,\quad 
	f\bigl(\sqrt{4d-3}\lsp\bigr)=(d-3)\sqrt{4d-3}-ds>0,
	\end{gathered}
	\end{align}	
	where the four inequalities hold because
	\begin{align}
	\bigl[(d-3)\sqrt{4d-3}\lsp\bigr]^2-(ds)^2\geq (d-3)^2(4d-3)-d^2(4d-27)
	=54d-27>0,\quad |s|<2\sqrt{d}.
	\end{align}
	\Eref{eq:fxvalues} confirms that the polynomial $f(x)$ has three distinct real roots and implies that each of the  three open intervals in \eref{eq:GaussInterval} contains exactly one root, that is,  one of the three numbers
	$g(3,1)$, $g(3,\nu)$, $g(3,\nu^2)$. This observation also implies the last inequality  in \eref{eq:GaussSumProdLUB}. 
	
	According to the above analysis, the interval $(0,d)$ contains one of the three numbers $g^2(3,1)$, $g^2(3,\nu)$, $g^2(3,\nu^2)$, while the interval $(d,4d-3)$ contains the other two numbers. Suppose $x$ is a root of $f(x)$ in \eref{eq:GaussSumfx}. Then $x^2>3d$  if $x$ has the same sign as $s$, while  $x^2<3d$  if $x$ has the opposite sign. Therefore, each of the three open intervals in \eref{eq:GaussInterval2}
	contains exactly one of the three numbers
	$g^2(3,1)$, $g^2(3,\nu)$, $g^2(3,\nu^2)$.

	Finally, according to the above analysis we can deduce that $0< g^2(3,\nu^j)\leq 4d-3$ and $g^2(3,\nu^j)\neq 3d$ [cf. \eref{eq:GaussSumfx}], which imply that $	0<|g^2(3,\nu^j)-3d|<3d$.
	As a corollary, 
	\begin{align}
	|g(3,\nu^j)|=\frac{|ds|}{|g^2(3,\nu^j)-3d|}>\frac{|s|}{3} \geq \frac{1}{3},
	\end{align}
	which  confirms the third inequality  in \eref{eq:GaussSumProdLUB} and completes the proof of \lref{lem:GaussSum3LUB}. 
\end{proof}

\subsection{Jacobi sums}
Let $\theta_1, \theta_2, \ldots, \theta_k$ be $k$ multiplicative characters of $\bbF_d$; the Jacobi sum associated with these characters \cite{LidlN97book,BernEW98book} is defined as 
\begin{align}
J(\theta_1, \theta_2, \ldots, \theta_k)=\sum_{b_1 +b_2 +\cdots +b_k=1}\theta_1(b_1)\theta_2(b_2)\cdots \theta_k(b_k),
\end{align}
where the summation is taken over all $k$-tuples $(b_1, b_2, \ldots, b_k)$ of elements in $\bbF_d$ that sum up to 1. Note that $J(\theta)=1$ for any multiplicative character $\theta$ of $\bbF_d$, so the special case $k=1$ is trivial.

\Lsref{lem:JacobiGauss} and \ref{lem:JacobiAbs} below follow from Theorems 5.21 and 5.22 in \rcite{LidlN97book}, respectively. 
\begin{lemma}\label{lem:JacobiGauss}
	Suppose $d$ is a prime, $a\in \bbF_d^{\times}$, and  $\theta_1, \theta_2, \ldots, \theta_k$ are nontrivial multiplicative characters of $\bbF_d$. If the product character $\theta_1 \theta_2 \cdots \theta_k$ is nontrivial, then 
	\begin{align}
	J(\theta_1, \theta_2, \ldots, \theta_k)=\frac{G(\theta_1, a)G(\theta_2, a)\cdots G(\theta_k,a)}{G(\theta_1\theta_2\cdots \theta_k,a)}; 
	\end{align}
	if instead  $\theta_1 \theta_2 \cdots \theta_k$ is trivial, then 
	\begin{align}
	&J(\theta_1, \theta_2, \ldots, \theta_k)=-\theta_k(-1)J(\theta_1, \theta_2,\ldots, \theta_{k-1})=-\frac{1}{d}G(\theta_1, a)G(\theta_2, a)\cdots G(\theta_k,a). 
	\end{align}
\end{lemma}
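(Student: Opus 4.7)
The plan is to expand the product of Gauss sums, group terms by the total $s := b_1 + \cdots + b_k$, and exploit the natural scaling action of $\bbF_d^\times$ on the constraint hyperplanes. First I would write
$G(\theta_1, a) \cdots G(\theta_k, a) = \sum_{b_1, \ldots, b_k \in \bbF_d} \theta_1(b_1) \cdots \theta_k(b_k) \omega_d^{a s}$
and split off the contribution from $s = 0$ from that of $s \neq 0$. On each slice $\{\sum_i b_i = s\}$ with $s \neq 0$, the substitution $b_i = s c_i$ gives $\sum_i c_i = 1$, so the inner sum contracts to $\theta(s) J(\theta_1, \ldots, \theta_k)$, where $\theta := \theta_1 \cdots \theta_k$. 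The dichotomy in the lemma is then controlled by whether $\theta$ is trivial.

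In Case 1 ($\theta$ nontrivial), the $s = 0$ contribution $T := \sum_{b_1+\cdots+b_k=0} \prod_i \theta_i(b_i)$ is invariant under the diagonal scaling $b_i \mapsto t b_i$ up to the factor $\theta(t)$; picking $t \in \bbF_d^\times$ with $\theta(t) \neq 1$ forces $T = 0$. The remaining sum then reassembles into $J(\theta_1, \ldots, \theta_k) \sum_s \theta(s) \omega_d^{as} = J(\theta_1, \ldots, \theta_k) G(\theta, a)$, and dividing by the nonzero Gauss sum $G(\theta, a)$ (nonvanishing by \pref{pro:GaussSum}) delivers the first formula.

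In Case 2 ($\theta$ trivial), $\theta(s) = 1$ for $s \neq 0$, so the $s \neq 0$ part simplifies to $J(\theta_1, \ldots, \theta_k) \sum_{s \neq 0} \omega_d^{as} = -J(\theta_1, \ldots, \theta_k)$, using $a \neq 0$. To evaluate $T$, note that $b_k = 0$ terms drop out since $\theta_k$ is nontrivial. For $b_k \neq 0$ the bijection $a_i = -b_i/b_k$ ($i < k$) sends the hyperplane $b_1 + \cdots + b_{k-1} = -b_k$ bijectively onto $a_1 + \cdots + a_{k-1} = 1$. A short character manipulation, using $\theta_k(-1)^2 = 1$ and the fact that triviality of $\theta$ collapses $(\theta_1 \cdots \theta_{k-1})(b_k)\, \theta_k(b_k) = 1$ for $b_k \neq 0$, yields $T = (d-1)\theta_k(-1) J(\theta_1, \ldots, \theta_{k-1})$, whence
$G(\theta_1, a) \cdots G(\theta_k, a) = (d-1)\theta_k(-1) J(\theta_1, \ldots, \theta_{k-1}) - J(\theta_1, \ldots, \theta_k)$.

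To finish Case 2, I would apply Case 1 to the $(k-1)$-tuple $\theta_1, \ldots, \theta_{k-1}$, whose product $\theta_k^{-1}$ is nontrivial. Combining the resulting identity with $G(\theta_k^{-1}, a) G(\theta_k, a) = d\theta_k(-1)$ (immediate from \pref{pro:GaussSum}) yields $G(\theta_1, a) \cdots G(\theta_k, a) = d\, \theta_k(-1) J(\theta_1, \ldots, \theta_{k-1})$, which is the second equality in the lemma (using $\theta_k(-1) = \pm 1$); subtracting this from the displayed Case 2 identity then recovers the recursion $J(\theta_1, \ldots, \theta_k) = -\theta_k(-1) J(\theta_1, \ldots, \theta_{k-1})$. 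The main obstacle is the bookkeeping in evaluating $T$: correctly handling the $b_k = 0$ boundary and invoking the triviality of $\theta$ at exactly the right moment. Everything else is routine character manipulation.
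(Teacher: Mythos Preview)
Your proof is correct and follows the standard textbook argument. The paper does not give its own proof of this lemma; it simply cites Theorems~5.21 and~5.22 of Lidl--Niederreiter, and your write-up is essentially that classical proof: expand the product of Gauss sums, slice by $s=\sum b_i$, use the scaling $b_i\mapsto sc_i$ to pull out the Jacobi sum on the $s\neq 0$ slices, handle the $s=0$ slice via the scaling symmetry (Case~1) or the explicit $b_k$-substitution (Case~2), and close Case~2 by invoking Case~1 for $k-1$ characters together with $G(\theta_k^{-1},a)G(\theta_k,a)=d\,\theta_k(-1)$.
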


\begin{lemma}\label{lem:JacobiAbs}
	Suppose $d$ is a prime and $\theta_1, \theta_2, \ldots, \theta_k$ are nontrivial multiplicative characters of $\bbF_d$. If the product character $\theta_1 \theta_2 \cdots \theta_k$ is nontrivial, then $|J(\theta_1, \theta_2, \ldots, \theta_k)|=d^{(k-1)/2}$; if instead $\theta_1 \theta_2 \cdots \theta_k$ is trivial, then $|J(\theta_1, \theta_2, \ldots, \theta_k)|=d^{(k-2)/2}$.
\end{lemma}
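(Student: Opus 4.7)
The plan is to deduce Lemma \ref{lem:JacobiAbs} directly from Lemma \ref{lem:JacobiGauss} together with the absolute-value formula $|G(\theta,a)|=\sqrt{d}$ for nontrivial multiplicative characters, which is the first identity in Proposition \ref{pro:GaussSum}. Since Lemma \ref{lem:JacobiGauss} already expresses every Jacobi sum explicitly as a ratio or scalar multiple of Gauss sums, no independent combinatorial argument is needed; the statement reduces to taking absolute values on both sides of those identities.

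First I would treat the case in which the product character $\theta_1\theta_2\cdots\theta_k$ is nontrivial. Choose any $a\in\bbF_d^\times$ (the value will drop out). By Lemma \ref{lem:JacobiGauss},
\begin{equation*}
J(\theta_1,\theta_2,\ldots,\theta_k)=\frac{G(\theta_1,a)G(\theta_2,a)\cdots G(\theta_k,a)}{G(\theta_1\theta_2\cdots\theta_k,a)}.
\end{equation*}
Each factor $G(\theta_j,a)$ in the numerator has absolute value $\sqrt{d}$ by Proposition \ref{pro:GaussSum}, since $\theta_j$ is nontrivial. The denominator $G(\theta_1\theta_2\cdots\theta_k,a)$ also has absolute value $\sqrt{d}$ because, by hypothesis, $\theta_1\theta_2\cdots\theta_k$ is nontrivial, so Proposition \ref{pro:GaussSum} applies to it as well. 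Taking absolute values gives $|J(\theta_1,\ldots,\theta_k)|=d^{k/2}/\sqrt{d}=d^{(k-1)/2}$, as claimed.

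Next I would handle the case in which $\theta_1\theta_2\cdots\theta_k$ is trivial. The second identity of Lemma \ref{lem:JacobiGauss} gives
\begin{equation*}
J(\theta_1,\theta_2,\ldots,\theta_k)=-\frac{1}{d}\,G(\theta_1,a)G(\theta_2,a)\cdots G(\theta_k,a).
\end{equation*}
Again each $\theta_j$ is nontrivial by assumption, so each $|G(\theta_j,a)|=\sqrt{d}$ by Proposition \ref{pro:GaussSum}. Taking absolute values yields $|J(\theta_1,\ldots,\theta_k)|=d^{k/2}/d=d^{(k-2)/2}$, completing the proof.

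Since everything reduces to plugging Proposition \ref{pro:GaussSum} into Lemma \ref{lem:JacobiGauss}, there is no real obstacle here; the only point that requires attention is verifying that the hypothesis of Proposition \ref{pro:GaussSum} (nontriviality of the character) is satisfied at every step. In the first case this is covered by the assumption on the product character; in the second case it is automatic because only the individual $\theta_j$ appear as Gauss-sum arguments.
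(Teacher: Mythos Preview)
Your argument is correct. The paper itself does not supply a proof of this lemma but simply cites Theorem~5.22 in \rcite{LidlN97book}; your derivation---taking absolute values in the Gauss-sum identities of Lemma~\ref{lem:JacobiGauss} and applying $|G(\theta,a)|=\sqrt{d}$ from Proposition~\ref{pro:GaussSum}---is exactly the standard route and is essentially how the cited theorem is proved.
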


Here we are particularly interested in Jacobi sums associated with the quadratic character $\eta_2$ and any given cubic character $\eta_3$ of $\bbF_d$.  
By definition we can deduce the following relations
\begin{align}\label{eq:Jacobi23Conjugate}
J^*(\eta_2, \eta_3)&=J(\eta_2, \eta_3^*)=J(\eta_2, \eta_3^2), \quad  J^*(\eta_3, \eta_3)=J(\eta_3^*, \eta_3^*)=J(\eta_3^2, \eta_3^2).
\end{align}
In addition, by virtue of \pref{pro:GaussSum}, \eref{eq:gGaussSumCubic},  and \lsref{lem:JacobiGauss}, \ref{lem:JacobiAbs} we can deduce that
\begin{align}
J(\eta_3, \eta_3)&=\frac{G^2(\eta_3,a)}{G(\eta_3^2,a)}=\frac{G^3(\eta_3,a)}{d}=\frac{G^3(\eta_3)}{d}\quad \forall a \in \bbF_d^{\times},  \quad 
J(\eta_3, \eta_3^2)=-1,\label{eq:JacobiGaussCubic} 
\end{align}
and the same results still hold if $\eta_3$ and $\eta_3^2$ are exchanged. The relation between $J(\eta_2, \eta_3)$ and $J(\eta_3, \eta_3)$ is clarified in 
\lref{lem:Jacobi2333} below, which  follows from Eq.~(3.1.6) and Table~3.1.2 in \rcite{BernEW98book}.
\begin{lemma}\label{lem:Jacobi2333}
	Suppose $d$ is an odd prime satisfying $d=1\mmod 3$. Then $\eta_3(2)J(\eta_2, \eta_3)=J(\eta_3, \eta_3)$,
	and the same result still holds if $\eta_3$ is replaced by $\eta_3^2$. 
\end{lemma}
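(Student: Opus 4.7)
The plan is to reduce the identity $\eta_3(2)J(\eta_2,\eta_3)=J(\eta_3,\eta_3)$ to a Gauss-sum identity via Lemma~\ref{lem:JacobiGauss}, and then establish the latter using the classical duplication formula for Gauss sums (a special case of the Hasse–Davenport product relation). All characters involved—$\eta_2$, $\eta_3$, $\eta_3^2$, $\eta_2\eta_3$—are nontrivial, and the product characters $\eta_2\cdot\eta_3$ and $\eta_3\cdot\eta_3=\eta_3^2$ are also nontrivial (recall $\eta_2$ has order $2$ and $\eta_3$ has order $3$, so $\eta_2\eta_3$ has order $6$). Hence Lemma~\ref{lem:JacobiGauss} applies in both cases and yields
\begin{equation*}
J(\eta_2,\eta_3)=\frac{G(\eta_2)G(\eta_3)}{G(\eta_2\eta_3)},\qquad J(\eta_3,\eta_3)=\frac{G(\eta_3)^2}{G(\eta_3^2)}.
\end{equation*}

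Next, I would observe that $\eta_3(-1)=1$ (because $-1=(-1)^3$ is always a cube), so by the standard relation $G(\chi)G(\chi^*)=\chi(-1)d$ we have $G(\eta_3)G(\eta_3^2)=d$. Substituting this into the expression for $J(\eta_3,\eta_3)$ and cross-multiplying, the claimed identity $\eta_3(2)J(\eta_2,\eta_3)=J(\eta_3,\eta_3)$ becomes equivalent to the single Gauss-sum identity
\begin{equation*}
\eta_3(2)\,G(\eta_2)\,G(\eta_3^2)=G(\eta_3)\,G(\eta_2\eta_3).
\end{equation*}
This is exactly the Hasse–Davenport duplication formula $G(\chi)G(\chi\eta_2)=\chi(4)^{-1}G(\chi^2)G(\eta_2)$ with $\chi=\eta_3$, using that $\chi(4)^{-1}=\eta_3(2)^{-2}=\eta_3(2)$ since $\eta_3^3=1$.

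The main obstacle is justifying this duplication formula cleanly; since the paper treats Gauss and Jacobi sums only as background (Appendix~\ref{app:GaussJacobi}) and cites \rcite{BernEW98book} throughout, the cleanest route is to invoke Eq.~(3.1.6) and Table~3.1.2 of \rcite{BernEW98book} exactly as already indicated in the statement of Lemma~\ref{lem:Jacobi2333}, rather than reproving the Hasse–Davenport relation from scratch. If a self-contained argument were desired, one could alternatively verify the duplication formula directly by expanding $G(\eta_3)G(\eta_2\eta_3)=\sum_{a,b}\eta_3(a)\eta_2(b)\eta_3(b)\omega_d^{a+b}$, substituting $b=ct$ with $t$ ranging over $\bbF_d^\times$ and reorganizing via the change of variables $a=bu$, which factors the sum as a product of a Gauss sum over the squaring map (producing $G(\eta_2)$) and a Gauss sum in $\eta_3^2$, with the extra factor $\eta_3(2)$ emerging from the Jacobian-type substitution $1+u=2v$. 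Either route completes the proof, and the identity for $\eta_3$ replaced by $\eta_3^2$ follows immediately by complex conjugation together with \eref{eq:Jacobi23Conjugate}.
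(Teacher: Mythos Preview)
Your proposal is correct and aligns with the paper's own treatment: the paper does not give a proof of this lemma but simply cites Eq.~(3.1.6) and Table~3.1.2 of \rcite{BernEW98book}, exactly as you recommend. Your additional content---the explicit reduction to the Gauss-sum identity $\eta_3(2)\,G(\eta_2)\,G(\eta_3^2)=G(\eta_3)\,G(\eta_2\eta_3)$ via Lemma~\ref{lem:JacobiGauss}, the identification with the Hasse--Davenport duplication formula, and the observation that the $\eta_3^2$ case follows by conjugation via \eref{eq:Jacobi23Conjugate}---goes beyond what the paper provides and is all sound.
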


In addition, the Jacobi sum $J(\eta_3, \eta_3)$ can be expressed as a sum of cubic characters of quadratic functions as shown in the following lemma.
\begin{lemma}\label{lem:eta3}
	Suppose $d$ is an odd prime satisfying $d=1\mmod 3$. Then 
	\begin{align}\label{eq:eta3}
	\sum_{x\in \bbF_d} \eta_3(x^2+x)=J(\eta_3, \eta_3),\quad   \Biggl|\sum_{x\in \bbF_d} \eta_3(x^2+x)\Biggr|=\sqrt{d}. 
	\end{align}
\end{lemma}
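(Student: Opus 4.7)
The plan is to prove the two claims in \eqref{eq:eta3} by unpacking both sides as sums of characters of products and matching them through a linear substitution, then invoking the Jacobi sum estimate already recorded in \lref{lem:JacobiAbs}.

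First I would use the multiplicativity of $\eta_3$, extended by the convention $\eta_3(0)=0$, to write $\eta_3(x^2+x)=\eta_3(x)\eta_3(x+1)$ for every $x\in\bbF_d$; the two possibly singular cases $x=0$ and $x=-1$ cause no trouble since both sides vanish there. Next I would observe that $\eta_3(-1)=1$: because $\eta_3$ has order exactly $3$ its image lies in $\{1,\omega_3,\omega_3^2\}$, while $\eta_3(-1)^2=\eta_3(1)=1$ forces $\eta_3(-1)\in\{\pm 1\}$, and the intersection is $\{1\}$. Armed with this, I would reparametrize the Jacobi sum $J(\eta_3,\eta_3)=\sum_{a+b=1}\eta_3(a)\eta_3(b)$ by setting $a=x+1$, $b=-x$, which indeed satisfies $a+b=1$ and bijects $\bbF_d$ onto the set of pairs summing to $1$. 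Then
$$
\eta_3(a)\eta_3(b)=\eta_3(x+1)\eta_3(-1)\eta_3(x)=\eta_3(x)\eta_3(x+1)=\eta_3(x^2+x),
$$
so summing over $x\in\bbF_d$ yields the first equality in \eqref{eq:eta3}.

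For the absolute-value estimate, the product character $\eta_3\cdot\eta_3=\eta_3^2$ is nontrivial because $\eta_3$ has order exactly $3$ (so $\eta_3^2\neq\eta_3^3=1$). Hence the first case of \lref{lem:JacobiAbs} applies with $k=2$, giving $|J(\eta_3,\eta_3)|=d^{(k-1)/2}=\sqrt{d}$, which is the second equality.

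There is no real obstacle here; the only delicate point is the verification $\eta_3(-1)=1$, which is the single place where the hypothesis $d=1\bmod 3$ (equivalently, the existence of an order-$3$ multiplicative character) does substantive work beyond supplying the Jacobi sum bound. Everything else is a mechanical reindexing plus a citation to \lref{lem:JacobiAbs}.
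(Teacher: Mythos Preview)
Your proposal is correct and takes essentially the same approach as the paper: factor $x^2+x$, use $\eta_3(-1)=1$, and match to the Jacobi sum via a linear change of variable, then invoke \lref{lem:JacobiAbs}. The only cosmetic difference is that the paper shifts $x\mapsto x-1$ first and then factors $x-x^2=x(1-x)$, whereas you factor $x(x+1)$ directly and reindex the Jacobi sum instead; the two arguments are interchangeable.
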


\begin{proof}[Proof of \lref{lem:eta3}]
	According to the definition of $\eta_3$ we can deduce that
	\begin{align}
	\sum_{x\in \bbF_d} \eta_3(x^2+x)=\sum_{x\in \bbF_d} \eta_3((x-1)^2+x-1)=\sum_{x\in \bbF_d} \eta_3(x-x^2)
	=\sum_{x\in \bbF_d} \eta_3(x)\eta_3(1-x)=
	J(\eta_3, \eta_3),
	\end{align}
	which confirms the first equality in \eref{eq:eta3}. Here the second equality holds because $\eta_3(x^2-x)=\eta_3(x-x^2)$ given that $\eta_3(-1)=1$; the last equality follows from the definition of $J(\eta_3,\eta_3)$. The second equality in \eref{eq:eta3} follows from  the first equality and the fact that $|J(\eta_3,\eta_3)|=\sqrt{d}$ by  \lref{lem:JacobiAbs}. 
\end{proof}

\section{\label{sec:CubicEqFF}Cubic equations over finite fields}
Here we clarify the numbers of solutions of several cubic equations over the finite field $\bbF_d$ that are relevant to the current study. Our analysis is based on Theorems 6.33 and 6.34 in \rcite{LidlN97book}. Suppose $f$ is a polynomial over $\bbF_d$ and $c\in \bbF_d$. The number of solutions to the equation $f=c$ is denoted by $N_c(f)$ or $N(f=c)$; when $c=0$, $N_c(f)$ is abbreviated as $N(f)$. In the following discussion, $\eta_2$ denotes the quadratic character of $\bbF_d$, and $\eta_3$ denotes  a cubic character of $\bbF_d$.

\begin{lemma}\label{lem:CubicSolNum}
	Suppose $d$ is an odd prime, $a,b\in \bbF_d^{\times}$, $c\in \bbF_d$, and 
	$f(x,y)=ax^3+by^2$  is a polynomial in $\bbF_d[x,y]$.  If  $c=0$ or   $d\neq 1\mmod 3$, then $N_c(f)=d$. 	
	If instead $c\neq 0$ and $d=1\mmod 3$, then 	
	\begin{gather}
	\!\!	N_c(f)=d+2\Re[\eta_2(c/b)\eta_3(c/(2a))J(\eta_3,\eta_3)]=d+\frac{2}{d}\Re\bigl[\eta_2(c/b)\eta_3(c/(2a))G^3(\eta_3)\bigr],\;\;  |N_c(f)-d|\leq 2\sqrt{d}. \label{eq:CubicSolNum}
	\end{gather}	
\end{lemma}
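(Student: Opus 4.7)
The plan is to count solutions to $ax^3+by^2=c$ by fixing $y\in\bbF_d$ and counting the $x$'s with $x^3=(c-by^2)/a$. When $d\neq 1\mmod 3$ we have $\gcd(3,d-1)=1$, so $x\mapsto x^3$ is a bijection of $\bbF_d$; hence each $y$ contributes exactly one $x$ and $N_c(f)=d$ irrespective of $c$. This disposes of the first clause.

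From now on I assume $d=1\mmod 3$ and fix a cubic character $\eta_3$ with $\eta_3(0)=0$. The number of cube roots of any $u\in\bbF_d$ equals $\sum_{j=0}^{2}\eta_3^j(u)$ (verify by cases $u=0$, nonzero cube, non-cube). Summing over $y$ and isolating the $j=0$ term,
\[
N_c(f)=d+\sum_{j=1}^{2}\eta_3^j(1/a)\sum_{y\in\bbF_d}\eta_3^j(c-by^2).
\]
When $c=0$ the inner sum is $\eta_3^j(-b)\sum_y\eta_3^{2j}(y)=0$, since $\eta_3^{2j}$ is a nontrivial character for $j=1,2$; this settles $N_0(f)=d$.

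For $c\neq 0$, the final step is to recognize the inner sum as a Jacobi sum. The fiber sizes of $y\mapsto by^2$ are $|\{y:by^2=u\}|=1+\eta_2(u/b)$ (with $\eta_2(0)=0$), so
\[
\sum_{y}\eta_3^j(c-by^2)=\sum_u\bigl(1+\eta_2(u/b)\bigr)\eta_3^j(c-u).
\]
The $1$-part vanishes because $\sum_v\eta_3^j(v)=0$, while substituting $u=cv$ into the $\eta_2$-part yields $\eta_2(c/b)\eta_3^j(c)J(\eta_2,\eta_3^j)$ directly from the definition of the Jacobi sum, giving
\[
N_c(f)=d+\eta_2(c/b)\sum_{j=1}^{2}\eta_3^j(c/a)J(\eta_2,\eta_3^j).
\]
To reach the stated form I would apply \lref{lem:Jacobi2333} as $J(\eta_2,\eta_3^j)=\eta_3^j(1/2)J(\eta_3^j,\eta_3^j)$, which absorbs the factor of $2$ into $\eta_3^j(c/(2a))$; the relations $\eta_3^2=\overline{\eta_3}$ and $J(\eta_3^2,\eta_3^2)=\overline{J(\eta_3,\eta_3)}$ from \eref{eq:Jacobi23Conjugate} then make the $j=1$ and $j=2$ terms complex conjugates, producing $2\Re[\eta_3(c/(2a))J(\eta_3,\eta_3)]$. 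Since $\eta_2(c/b)\in\{\pm 1\}$ is real, it slides inside the $\Re$, yielding the first equality of \eref{eq:CubicSolNum}; the second equality follows from $J(\eta_3,\eta_3)=G^3(\eta_3)/d$ (cf.\ \eref{eq:JacobiGaussCubic}); and the bound $|N_c(f)-d|\leq 2\sqrt{d}$ is immediate from $|J(\eta_3,\eta_3)|=\sqrt{d}$ (\lref{lem:JacobiAbs}). There is no deep obstacle here; the one bookkeeping subtlety is propagating the factor of $2$ from $J(\eta_2,\eta_3^j)$ to $J(\eta_3^j,\eta_3^j)$, which is cleanly handled by \lref{lem:Jacobi2333}.
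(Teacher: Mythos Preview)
Your proof is correct and follows essentially the same route as the paper's: arrive at $N_c(f)=d+\eta_2(c/b)\sum_{j=1}^{2}\eta_3^j(c/a)J(\eta_2,\eta_3^j)$, then use \lref{lem:Jacobi2333} to convert to $J(\eta_3,\eta_3)$ and \eref{eq:JacobiGaussCubic} to reach $G^3(\eta_3)$. The only difference is cosmetic: the paper obtains the formula with $J(\eta_2,\eta_3^j)$ by invoking Theorem~6.34 of \rcite{LidlN97book} directly, whereas you derive it by hand via the cube-root count $\sum_{j=0}^{2}\eta_3^j(u)$ and the fiber count $1+\eta_2(u/b)$---a more self-contained but equivalent argument.
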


When  $a,b,c\neq 0$ and $d=1\mmod 3$, the number $N_c(f)$ in \lref{lem:CubicSolNum} only depends on the cubic character of $c/a$ and  quadratic character of $c/b$. \Eref{eq:CubicSolNum} can be simplified when $c/a$ is a cubic residue, which means $\eta_3(c/a)=1$, or if  $c/b$ is a quadratic residue, which means $\eta_2(c/b)=1$. 

\begin{proof}[Proof of \lref{lem:CubicSolNum}]
	If  $d\neq1\mmod 3$ (which means $d=3$ or $d=2\mmod 3$), then every element in $\bbF_d$ is a cubic residue, and  the equation
	$a x^3 +b y^2=c$ has exactly one solution for each $y\in \bbF_d$, so $N_c(f)=d$.	Alternatively, this result follows from Theorems~6.33 and 6.34 in \rcite{LidlN97book}.

	Next, suppose $d=1\mmod 3$. Then $N_0(f)=d$ according to  Theorem~6.33 in \rcite{LidlN97book}. If $c\neq 0$, then	
	\begin{align}
	N_c(f)&=
	d+\sum_{j=1}^2
	\eta_2(c/b)\eta_3^j(c/a)J(\eta_2, \eta_3^j)=d+2\Re[\eta_2(c/b)\eta_3(c/a)J(\eta_2, \eta_3)]	\nonumber\\
	&=d+2\Re[\eta_2(c/b)\eta_3(c/(2a))J(\eta_3, \eta_3)]=d+\frac{2}{d}\Re\bigl[\eta_2(c/b)\eta_3(c/(2a))G^3(\eta_3)\bigr],
	\end{align}
	which confirms the equalities in \eref{eq:CubicSolNum}. 
	Here the first equality follows from 
	Theorem~6.34 in \rcite{LidlN97book}, the second equality follows from \eref{eq:Jacobi23Conjugate} given that $\eta_3^2=\eta_3^*$,  the third equality follows from \lref{lem:Jacobi2333},
	and the fourth equality follows from \eref{eq:JacobiGaussCubic}. The inequality $|N_c(f)-d|\leq 2\sqrt{d}$ follows from the fact  that $|\eta_2(c/b)|=|\eta_3(c/(2a))|=1$ and $|J(\eta_3,\eta_3)|=|G(\eta_3)|=\sqrt{d}$
	(see \pref{pro:GaussSum} and \lref{lem:JacobiAbs}).
\end{proof}

\begin{lemma}\label{lem:Cubic312}
	Suppose $d$ is an odd prime, $a,b\in \bbF_d^{\times}$, $c\in \bbF_d$, $a/b$ is a quadratic residue,	
	and $f(x,y)=ax^3-bxy^2$  is a polynomial in $\bbF_d[x,y]$. Then $N_0(f)=3d-2$.	
	If in addition $c\neq 0$ and  $d\neq 1 \mmod 3$, then $N_c(f)=d-2$.
	If instead $c\neq 0$ and $d=1 \mmod 3$, then
	\begin{gather}	
	N_c(f)=d-2+2\Re[\eta_3(a/(2c))J(\eta_3,\eta_3)]=d-2+\frac{2}{d}\Re\bigl[\eta_3(a/(2c))G^3(\eta_3)\bigr],  \label{eq:Cubic312Ncfd1}\\
	|N_c(f)-(d-2)|\leq 2\sqrt{d}.  \label{eq:Cubic312NcfLUB}
	\end{gather}
\end{lemma}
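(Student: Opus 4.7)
The plan is to exploit the factorization $f(x,y)=x(ax^2-by^2)$ and reduce the counting problem to standard character/Jacobi sum evaluations. For $N_0(f)$, I would split according to whether $x=0$: the case $x=0$ contributes $d$ solutions (any $y$ works); the case $x\in\bbF_d^{\times}$ requires $(y/x)^2=a/b$, which by the quadratic residue assumption on $a/b$ has exactly two roots in $y/x$, yielding $2(d-1)$ solutions. The total is $d+2(d-1)=3d-2$ as claimed.

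For $N_c(f)$ with $c\neq 0$, note that $x=0$ is excluded, so I would perform the bijection $(x,y)\mapsto(x,t)$ with $y=xt$ on $\bbF_d^{\times}\times\bbF_d$. The equation becomes $x^3(a-bt^2)=c$. The two values of $t$ with $bt^2=a$ contribute no solutions (as $c\neq 0$), so
\[
N_c(f)=\sum_{t\in\bbF_d,\,bt^2\neq a}N\!\left(x^3=\tfrac{c}{a-bt^2}\right).
\]
When $d\neq 1\mmod 3$, every nonzero element is a unique cube, giving $d-2$ summands each contributing 1, hence $N_c(f)=d-2$.

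When $d=1\mmod 3$, I would use $N(x^3=s)=1+\eta_3(s)+\eta_3^2(s)$ for $s\in\bbF_d^{\times}$ to obtain
\[
N_c(f)=(d-2)+2\Re\Biggl[\eta_3(c)\sum_{t\in\bbF_d}\eta_3^2(a-bt^2)\Biggr],
\]
where I silently extended the sum over $t$ to all of $\bbF_d$ since $\eta_3^2(0)=0$. The inner sum is the main obstacle: I would expand the fiber counts $\#\{t:bt^2=v\}=1+\eta_2(v/b)$ for $v\in\bbF_d^{\times}$ and $1$ for $v=0$. The part with $1+$ gives $\sum_{v}\eta_3^2(a-v)=0$ by nontriviality of $\eta_3^2$; the remaining piece, after the substitution $v=aw$ and using $\eta_2(a/b)=1$ (the quadratic residue hypothesis), collapses to
\[
\sum_{t\in\bbF_d}\eta_3^2(a-bt^2)=\eta_2(ab)\eta_3^2(a)J(\eta_2,\eta_3^2)=\eta_3^2(a)J(\eta_2,\eta_3^2).
\]

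Finally I would convert this to the target form. Combining $\eta_3(c)\eta_3^2(a)=\eta_3(c/a)$ (as $\eta_3(a^3)=1$) with the conjugation identity $J(\eta_2,\eta_3^2)=J^*(\eta_2,\eta_3)$ from \eref{eq:Jacobi23Conjugate} and \lref{lem:Jacobi2333} (which gives $J(\eta_2,\eta_3^2)=\eta_3^{-2}(2)J(\eta_3^2,\eta_3^2)=\eta_3(1/4)J(\eta_3,\eta_3)^*$), I would obtain
\[
2\Re\bigl[\eta_3(c/a)J(\eta_2,\eta_3^2)\bigr]=2\Re\bigl[\eta_3(c/(4a))J(\eta_3,\eta_3)^*\bigr]=2\Re\bigl[\eta_3(4a/c)J(\eta_3,\eta_3)\bigr]=2\Re\bigl[\eta_3(a/(2c))J(\eta_3,\eta_3)\bigr],
\]
where the last equality uses $\eta_3(8)=\eta_3(2^3)=1$. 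This yields the first equality in \eref{eq:Cubic312Ncfd1}; the second follows from $J(\eta_3,\eta_3)=G^3(\eta_3)/d$ in \eref{eq:JacobiGaussCubic}, and the bound \eref{eq:Cubic312NcfLUB} is immediate from $|J(\eta_3,\eta_3)|=\sqrt{d}$ (\lref{lem:JacobiAbs}). The main technical hurdle is the character bookkeeping in the last step: ensuring the correct power of $\eta_3(2)$ appears so that $J(\eta_2,\eta_3^2)$ converts cleanly to $J(\eta_3,\eta_3)$ with the precise argument $a/(2c)$.
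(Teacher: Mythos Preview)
Your proof is correct. The computation of $N_0(f)$ matches the paper exactly, and your treatment of $N_c(f)$ for $c\neq 0$ is sound: the substitution $y=xt$, the cube-root count $N(x^3=s)=1+\eta_3(s)+\eta_3^2(s)$, the fiber-count expansion of $\sum_t\eta_3^2(a-bt^2)$, and the final bookkeeping with $\eta_3(8)=1$ all check out.

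That said, the paper's argument is considerably shorter because it reuses \lref{lem:CubicSolNum} instead of re-deriving its content. The paper makes the full change of variables $u=1/x$, $v=y/x$ (you effectively make only the second half, $t=y/x$), which transforms $ax^3-bxy^2=c$ into $cu^3+bv^2=a$ with $u\neq 0$; then $N_c(f)=N(cu^3+bv^2=a)-2$ and \lref{lem:CubicSolNum} immediately gives the stated formulas and bound. Your approach is more self-contained and shows explicitly where the quadratic-residue hypothesis enters the character sum, but it duplicates work already packaged in \lref{lem:CubicSolNum}.
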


\begin{proof}[Proof of \lref{lem:Cubic312}]
	Note that $f$ has $d$ zeros with $x=0$ and two zeros for each $x\in \bbF_d^{\times}$ given that $a/b$ is a quadratic residue by assumption. Therefore, $N_0(f)=3d-2$.

	Next, suppose $c\neq 0$;  then the equation $f(x,y)=c$ means $x\neq 0$. Let $u=1/x$ and $v=u y$; then the equation $f(x,y)=c$ amounts to the following equation in $u$ and $v$, 
	\begin{align}
	c u^3+b v^2=a,\quad u\neq 0.
	\end{align} 
	Therefore,
	\begin{align}
	N_c(f)=N(c u^3+b v^2=a,u\neq 0)=N(c u^3+b v^2=a)-2,
	\end{align}
	which completes the proof of \lref{lem:Cubic312}	thanks to \lref{lem:CubicSolNum}. 	
\end{proof}

\begin{lemma}\label{lem:Cubic33SolNum}
	Suppose $d$ is an odd prime,  $c\in \bbF_d$, and  $f(x,y)=x^3-y^3$  is a polynomial in $\bbF_d[x,y]$. If  $d\neq 1\mmod 3$, then $N_c(f)=d$. If instead  $d=1\mmod 3$, then 	
	\begin{gather}\label{eq:Cubic33SolNumd1}
	N_c(f)=\begin{cases}
	3d-2 & \mbox{if} \; c=0, \\
	d-2+
	2\Re \eta_3^2(c)J(\eta_3, \eta_3) & \mbox{if} \; c\neq 0,
	\end{cases}\\
	|N_c(f)-(d-2)|\leq 2\sqrt{d}\quad \forall c\in \bbF_d^{\times}. \label{eq:Cubic33SolNumLUB}
	\end{gather}
\end{lemma}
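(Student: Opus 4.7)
The plan is to split the argument by the congruence class of $d$ modulo $3$, and within the harder case $d=1\mmod 3$, to split again according to whether $c=0$ or $c\neq 0$.

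When $d\neq 1\mmod 3$, the cubing map $x\mapsto x^3$ is a bijection on $\bbF_d$, so for each fixed $y$ there is exactly one $x$ with $x^3=y^3+c$, giving $N_c(f)=d$. When $d=1\mmod 3$ and $c=0$, the equation $x^3=y^3$ is counted directly: for $y=0$ only $x=0$ works, while for each of the $d-1$ nonzero $y$ there are three cube roots of $y^3$, which yields $N_0(f)=1+3(d-1)=3d-2$.

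The main step is the case $d=1\mmod 3$ and $c\neq 0$. I would use the orthogonality identity
\begin{equation*}
N(x^3=a)=1+\eta_3(a)+\eta_3^2(a)\quad \forall a\in\bbF_d,
\end{equation*}
valid with the convention $\eta_3(0)=\eta_3^2(0)=0$, and write
\begin{equation*}
N_c(f)=\sum_{a\in\bbF_d}N(x^3=a)\,N(y^3=a-c)=\sum_{a\in\bbF_d}\bigl[1+\eta_3(a)+\eta_3^2(a)\bigr]\bigl[1+\eta_3(a-c)+\eta_3^2(a-c)\bigr].
\end{equation*}
Expanding this product, five of the nine terms vanish because $\sum_{a\in\bbF_d}\eta_3^j(a)=0$ for $j=1,2$ and the sum $\sum_a 1=d$ survives. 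The remaining four terms have the form $S(i,j):=\sum_{a\in\bbF_d}\eta_3^i(a)\eta_3^j(a-c)$ for $i,j\in\{1,2\}$. The substitution $a=ct$ and the identity $\eta_3(-1)=1$ (since $-1$ is a cube in $\bbF_d$) convert each $S(i,j)$ into $\eta_3^{i+j}(c)\,J(\eta_3^i,\eta_3^j)$. Using $J(\eta_3,\eta_3^2)=J(\eta_3^2,\eta_3)=-1$ from \eref{eq:JacobiGaussCubic} and the conjugation relation $J(\eta_3^2,\eta_3^2)=\overline{J(\eta_3,\eta_3)}$ from \eref{eq:Jacobi23Conjugate}, the four contributions collapse into
\begin{equation*}
-2+\eta_3^2(c)J(\eta_3,\eta_3)+\overline{\eta_3^2(c)J(\eta_3,\eta_3)}=-2+2\Re\bigl[\eta_3^2(c)J(\eta_3,\eta_3)\bigr],
\end{equation*}
which yields the claimed formula $N_c(f)=d-2+2\Re\eta_3^2(c)J(\eta_3,\eta_3)$.

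The bound \eref{eq:Cubic33SolNumLUB} is then immediate from $|\eta_3^2(c)|=1$ and $|J(\eta_3,\eta_3)|=\sqrt{d}$ (the latter by \lref{lem:JacobiAbs}, since the product character $\eta_3\cdot\eta_3=\eta_3^2$ is nontrivial). I expect no real obstacle here; the only bookkeeping subtlety is being careful with the convention $\eta_3(0)=0$ when manipulating the character sums, and correctly identifying which Jacobi sum arises after the change of variables $a=ct$. All the necessary number-theoretic ingredients are already packaged in \pref{pro:GaussSum} and \lsref{lem:JacobiGauss}--\ref{lem:Jacobi2333}.
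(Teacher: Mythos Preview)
Your proposal is correct. The cases $d\neq 1\mmod 3$ and $d=1\mmod 3$, $c=0$ are handled identically to the paper, and your bound argument for \eref{eq:Cubic33SolNumLUB} matches as well.

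For the main case $d=1\mmod 3$, $c\neq 0$, the paper takes a slightly different route: it first observes that $N_c(f)=N(x^3+y^3=c)$ (since $y\mapsto -y$ is a bijection and $(-y)^3=-y^3$) and then invokes the general diagonal-equation formula (Theorem~6.34 in \rcite{LidlN97book}), which directly expresses $N(a_1x_1^{k_1}+a_2x_2^{k_2}=c)$ as $d$ plus a double sum of Jacobi sums. Your approach instead unpacks that theorem by hand: you use the character identity $N(x^3=a)=1+\eta_3(a)+\eta_3^2(a)$, expand the product, and evaluate the four surviving cross terms as Jacobi sums via the substitution $a=ct$. The two computations are the same at the level of character sums; yours is more self-contained (not requiring the external reference), while the paper's is shorter by outsourcing the expansion to the cited theorem.
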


\begin{proof}[Proof of \lref{lem:Cubic33SolNum}]
	If  $d\neq 1\mmod 3$, then every element in $\bbF_d$ is a cubic residue, and the equation $x^3-y^3=c$ has exactly one solution for each $x\in \bbF_d$, so $N_c(f)=d$. Alternatively,  this result follows from Theorems~6.33 and 6.34  in \rcite{LidlN97book}.
	
	Next, suppose  $d=1\mmod 3$. Then the equation $x^3-y^3=0$ has one solution with $x=0$ and three solutions for each $x\in \bbF_d^{\times}$. Therefore,  $N_0(f)=3d-2$, which confirms \eref{eq:Cubic33SolNumd1} in the case $c=0$. When $c\in \bbF_d^{\times}$, by virtue of Theorem~6.34 in \rcite{LidlN97book} we can deduce that 
	\begin{align}
	N_c(f)&=N(x^3+y^3=c)=d+\sum_{j_1, j_2=1}^2 \eta_3^{j_1}(c)\eta_3^{j_2}(c)J\bigl(\eta_3^{j_1}, \eta_3^{j_2}\bigr)=d+2J\bigl(\eta_3, \eta_3^2\bigr)+
	\sum_{j=1}^2 \eta_3^{2j}(c)J\bigl(\eta_3^{j}, \eta_3^{j}\bigr)\nonumber\\
	&=d-2+
	2\Re \eta_3^2(c)J(\eta_3, \eta_3), 
	\end{align}
	where the last equality follows from \eqsref{eq:Jacobi23Conjugate}{eq:JacobiGaussCubic}. This equation  confirms \eref{eq:Cubic33SolNumd1} 
	and implies
	\eref{eq:Cubic33SolNumLUB} given that $|\eta_3^2(c)J(\eta_3, \eta_3)|=|J(\eta_3, \eta_3)|=\sqrt{d}$ by \lref{lem:JacobiAbs}. 
\end{proof}

\begin{lemma}\label{lem:Cubic213SolNum}
	Suppose $d$ is an odd prime satisfying $d=1\mmod 3$, $c\in\bbF_d^{\times}$, and $f(x,y)=x^2+x-cy^3$  is a polynomial in $\bbF_d[x,y]$. Then 
	\begin{align}\label{eq:Cubic213SolNum}
	N(f)=d+2 \Re \bigl[\eta_3^2(c)J(\eta_3,\eta_3)\bigr]=d+\frac{2}{d} \Re \bigl[\eta_3^2(c)G^3(\eta_3)\bigr].
	\end{align}
\end{lemma}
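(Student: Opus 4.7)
The plan is to reduce the counting problem to an instance of \lref{lem:CubicSolNum} via a completing-the-square substitution, and then simplify the resulting cubic character.

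The first step is to multiply the defining equation $x^2+x=cy^3$ by $4$ and rewrite it as $(2x+1)^2-4cy^3=1$. Since $d$ is odd, the map $x\mapsto u:=2x+1$ is a bijection of $\bbF_d$, so solutions $(x,y)$ to $f(x,y)=0$ are in one-to-one correspondence with solutions $(u,y)$ to the equation $u^2-4cy^3=1$. Writing this as $ay^3+bu^2=c'$ with $a=-4c$, $b=1$, $c'=1$ (all nonzero in $\bbF_d^\times$ because $d=1\mmod 3$ forces $d\geq 7$), I can invoke \lref{lem:CubicSolNum} directly.

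The second step is to evaluate the formula from \lref{lem:CubicSolNum}. That formula yields
\begin{equation*}
N(f)=d+\frac{2}{d}\Re\bigl[\eta_2(1)\,\eta_3\bigl(-\tfrac{1}{8c}\bigr)\,G^3(\eta_3)\bigr].
\end{equation*}
The third step is to simplify the cubic character factor. Since $d=1\mmod 3$ implies $d\equiv 1\mmod 6$, in particular $d$ is $1\mmod 3$ so $-1=(-1)^3$ gives $\eta_3(-1)=1$; moreover $1/8=(1/2)^3$ is a cubic residue, so $\eta_3(1/8)=1$. Using multiplicativity of $\eta_3$ on $\bbF_d^\times$ and the identity $\eta_3(1/c)=\eta_3(c)^{-1}=\eta_3^2(c)$ (since $\eta_3^3$ is trivial), I obtain
\begin{equation*}
\eta_3\bigl(-\tfrac{1}{8c}\bigr)=\eta_3(-1)\,\eta_3(1/8)\,\eta_3(1/c)=\eta_3^2(c).
\end{equation*}
Substituting this back gives the second equality in \eref{eq:Cubic213SolNum}. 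The first equality then follows from the Gauss--Jacobi relation $G^3(\eta_3)=dJ(\eta_3,\eta_3)$ recorded in \eref{eq:JacobiGaussCubic}.

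There is essentially no hard step here: the only mild subtlety is bookkeeping the sign and the factor of $8$ when verifying $\eta_3(-1/(8c))=\eta_3^2(c)$, and checking that the substitution $u=2x+1$ is indeed a bijection (which requires only $d$ odd). All the analytic work—namely, the character-sum evaluation of the number of points on $u^2=1+4cy^3$—is already packaged in \lref{lem:CubicSolNum}, which rests in turn on Theorems~6.33 and 6.34 of \rcite{LidlN97book} together with \lref{lem:Jacobi2333} and \eref{eq:JacobiGaussCubic}. Thus the proof is a short bijection-plus-substitution argument.
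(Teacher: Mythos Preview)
Your proof is correct and follows essentially the same approach as the paper: complete the square to reduce to a diagonal equation of the form $au^2+bv^3=c'$, then evaluate the point count via character sums. The only difference is cosmetic: the paper writes $x^2+x=(x+1/2)^2-1/4$ and applies Theorem~6.34 of \rcite{LidlN97book} directly (obtaining $J(\eta_2,\eta_3)$, then converting via \lref{lem:Jacobi2333}), whereas you multiply by $4$, substitute $u=2x+1$, and invoke \lref{lem:CubicSolNum}, which has already absorbed both of those steps. Your route is slightly more economical given the lemmas already in hand.
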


\begin{proof}[Proof of \lref{lem:Cubic213SolNum}]
	Denote by $1/2$ and $1/4$ the inverses of $2$ and $4$ in $\bbF_d$, then we have	 $x^2+x=(x+1/2)^2-1/4$. Therefore,
	\begin{align}
	N(f)&=N\bigl(cy^3-x^2=-1/4\bigr)=d+\sum_{j=1}^2\eta_2(1/4)\eta_3^j(-1/(4c))J\bigl(\eta_2, \eta_3^j\bigr)
	=d+2\Re[\eta_3(2/c)J(\eta_2, \eta_3)]\nonumber\\
	&=d+2\Re[\eta_3(1/c)\eta_3(2)J(\eta_2, \eta_3)]=d+2\Re\bigl[\eta_3^2(c)J(\eta_3, \eta_3)\bigr],
	\end{align}
	where the second equality follows from Theorem~6.34 in \rcite{LidlN97book}, the third equality holds because 
	$\eta_2(1/4)=1$ and $\eta_3(-1/(4c))=\eta_3(-2/c)=\eta_3(2/c)$,
	and the last equality follows from \lref{lem:Jacobi2333}. The above equation  confirms the first equality in \eref{eq:Cubic213SolNum}; the second equality in \eref{eq:Cubic213SolNum} holds because  $J(\eta_3, \eta_3)=G^3(\eta_3)/d$ by \eref{eq:JacobiGaussCubic}. 
\end{proof}

\section{\label{app:CubicEqSLS}Cubic equations on stochastic Lagrangian subspaces}
To better understand the properties of magic orbits and to prove several key lemmas (including \lsref{lem:kappaMagicOned23} and \ref{lem:kappaMagicOned1}), in this appendix we clarify the numbers of solutions of certain cubic equations on stochastic Lagrangian subspaces. 

\subsection{Main results}
Suppose $d$ is an odd prime and let $f$ be a function from  $\bbF_d$ to $\bbF_d$, that is, $f\in \bbF_d[x]$; then $f$ can be extended to a function from $\bbF_d^3$ to $\bbF_d$ as follows,
\begin{align}\label{eq:fbfx}
f(\bfx):=f(x_1)+f(x_2)+f(x_3), \quad \bfx=(x_1, x_2, x_3)^\top\in \bbF_d^3.
\end{align}
Given $f\in \bbF_d[x]$, $\caT\in \Sigma(d)$, and $\alpha\in \bbF_d$, define $N_\alpha(f,\caT)$ as the number of solutions to the equation $f(\bfy)-f(\bfx)=\alpha$ for $(\bfx;\bfy)\in \caT$, that is,
\begin{align}\label{eq:NalphafcaT}
N_\alpha(f,\caT):=|\{ (\bfx;\bfy)\in \caT \ | \  f(\bfy)-f(\bfx)=\alpha\}|. 
\end{align}
By definition it is easy to verify the following results,
\begin{equation}\label{eq:NfTsymSum}
\begin{gathered}
N_\alpha(f,O_1\caT O_2)=N_\alpha(f,\caT)\quad \forall \caT\in \Sigma(d),\; O_1, O_2 \in S_3,   \\
\sum_{\alpha\in \bbF_d} N_\alpha(f,\caT)=|\caT|=d^3\quad \forall \caT\in \Sigma(d). 
\end{gathered}
\end{equation}

If in addition $\caT\in \scrT_\sym$, then $f(\bfx)=f(\bfy)$ whenever $(\bfx;\bfy)\in \caT$. Therefore,
\begin{align}\label{eq:NfTsym3}
N_\alpha(f,\caT)=\begin{cases}
d^3 & \alpha=0,\\
0 & \alpha \neq 0.
\end{cases}
\end{align}
This equation also holds for all $\caT\in \Sigma(d)$ when $d=3$ and $f$ is a cubic polynomial, which can be verified by virtue of \lref{lem:defectT}. In general, it is not easy to determine $N_\alpha(f,\caT)$ when $\caT\in \scrT_\ns$, where 
\begin{align}\label{eq:Sigma33scrTsym}
\scrT_\ns=\Sigma(d)\setminus \scrT_\sym=(\scrT_1\setminus \scrT_\sym)\sqcup [(\scrT_1\setminus \scrT_\sym)\tau_{12}].
\end{align}
Here our main goal is to determine $N_\alpha(f,\caT)$   when $f$ is a cubic polynomial. For the convenience of the following discussion, the special cubic polynomial $f(x)=x^3$ is referred to as the  the canonical cubic polynomial. When $d\geq 5$, this definition agrees with the definition in \sref{sec:MagicState}; when $d=3$, however, the canonical cubic polynomial is different from the canonical cubic function defined in  \sref{sec:MagicState}. When $f$ is the canonical cubic polynomial, $N_\alpha(f,\caT)$ is abbreviated as $N_\alpha(\caT)$ for simplicity.

\begin{lemma}\label{lem:NfTbasic}
	Suppose $d$ is an odd prime, $\caT\in \Sigma(d)$, and $f\in \bbF_d[x]$ is any cubic polynomial with cubic coefficient $c\neq 0$. Then 
	\begin{align}\label{eq:NfTNT}
	N_{\alpha\beta^3}(f,\caT)=N_{-\alpha}(f,\caT)=N_\alpha(f,\caT)&=N_{\alpha/c}(\caT) \quad  \forall \alpha\in \bbF_d, \; \beta\in \bbF_d^{\times}. 
	\end{align}
	If in addition $d\neq 1\mmod 3$, then 
	\begin{align}\label{eq:NfTNT2}
	N_\alpha(f,\caT)&=
	N_1(\caT)=\frac{d^3-N_0(\caT)}{d-1}\quad \alpha\in \bbF_d^{\times}. 
	\end{align}	
\end{lemma}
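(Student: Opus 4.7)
The plan is to reduce everything to the canonical cubic $x^3$ and then exploit that $\caT$ is a linear subspace. First I would write a general cubic as $f(x)=cx^3+bx^2+ax+e$ and compute, for any $(\bfx;\bfy)\in\bbF_d^6$ with $\bfx^{(k)}:=x_1^k+x_2^k+x_3^k$,
\begin{equation*}
f(\bfy)-f(\bfx)=c\bigl(\bfy^{(3)}-\bfx^{(3)}\bigr)+b\bigl(\bfy^{(2)}-\bfx^{(2)}\bigr)+a\bigl(\bfy^{(1)}-\bfx^{(1)}\bigr).
\end{equation*}
For $(\bfx;\bfy)\in\caT$, the isotropy condition 1 in the definition of $\Sigma(d)$ gives $\bfx\cdot\bfx=\bfy\cdot\bfy$, i.e.\ $\bfx^{(2)}=\bfy^{(2)}$, and the already-recorded identity \eqref{eq:bfxdotbfy} (obtained by polarizing the isotropic form against $\mathbf{1}_6\in\caT$) gives $\bfx^{(1)}=\bfy^{(1)}$. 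Therefore the quadratic, linear, and constant contributions cancel and $f(\bfy)-f(\bfx)=c[\bfy^{(3)}-\bfx^{(3)}]$ for every $(\bfx;\bfy)\in\caT$. Since $c\neq0$, this immediately yields $N_\alpha(f,\caT)=N_{\alpha/c}(\caT)$ for all $\alpha\in\bbF_d$, establishing the rightmost equality of \eqref{eq:NfTNT}.

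Next I would exploit the dilation symmetry of $\caT$. Because $\caT$ is a linear subspace of $\bbF_d^6$, the map $(\bfx;\bfy)\mapsto(\beta\bfx;\beta\bfy)$ is a bijection of $\caT$ for every $\beta\in\bbF_d^\times$, and it multiplies $\bfy^{(3)}-\bfx^{(3)}$ by $\beta^3$. Hence $N_{\beta^3\gamma}(\caT)=N_\gamma(\caT)$ for all $\gamma\in\bbF_d$, and specializing to $\beta=-1$ gives $N_{-\gamma}(\caT)=N_\gamma(\caT)$. Combined with the reduction step $N_\alpha(f,\caT)=N_{\alpha/c}(\caT)$, these two symmetries furnish the full chain of equalities $N_{\alpha\beta^3}(f,\caT)=N_{-\alpha}(f,\caT)=N_\alpha(f,\caT)=N_{\alpha/c}(\caT)$ asserted in \eqref{eq:NfTNT}.

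Finally, for \eqref{eq:NfTNT2}, when $d\neq1\mmod 3$ the cubing map $x\mapsto x^3$ is a bijection of $\bbF_d^\times$, so every nonzero element of $\bbF_d$ is a cube. Picking $\beta$ with $\beta^3=\alpha$ in the dilation identity of the previous paragraph gives $N_\alpha(\caT)=N_1(\caT)$ for all $\alpha\in\bbF_d^\times$, and hence $N_\alpha(f,\caT)=N_1(\caT)=N_1(f,\caT)$. Since $N_0(f,\caT)=N_0(\caT)$ (as $0/c=0$), the total-count identity $\sum_{\alpha\in\bbF_d}N_\alpha(f,\caT)=d^3$ from \eqref{eq:NfTsymSum} yields $(d-1)N_1(\caT)=d^3-N_0(\caT)$, which is the claimed value.

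I do not anticipate a real obstacle: the only substantive input is recognizing that Conditions 1 and 3 of the definition of stochastic Lagrangian subspaces collapse the quadratic and linear parts of $f$ on $\caT$; everything else is the two scaling symmetries inherent in $\caT$ being a subspace and a counting identity. The case $d=3$ is trivial since $x^3=x$ in $\bbF_3$ forces $N_\alpha(f,\caT)=d^3\delta_{\alpha,0}$ consistently with the formula.
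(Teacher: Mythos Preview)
Your proof is correct and follows essentially the same route as the paper: both reduce to the canonical cubic by using the isotropy condition $\bfx\cdot\bfx=\bfy\cdot\bfy$ and the linear identity $\bfx\cdot\mathbf{1}_3=\bfy\cdot\mathbf{1}_3$ to kill the quadratic and linear parts of $f$, then invoke the dilation symmetry $(\bfx;\bfy)\mapsto(\beta\bfx;\beta\bfy)$ of the subspace $\caT$ to obtain the $\beta^3$- and sign-invariance, and finally use that every element of $\bbF_d^\times$ is a cube when $d\neq1\mmod3$ together with the total-count identity \eqref{eq:NfTsymSum}. Your presentation is slightly more explicit about why the dilation is a bijection of $\caT$, but the argument is the same.
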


\Eref{eq:NfTNT} in particular implies that $N_0(f,\caT)=N_0(\caT)$ whenever $d$ is an odd prime, $\caT\in \Sigma(d)$, and $f$ is a polynomial of degree 3. If in addition $d=1\mmod 3$, then $N_\alpha(f,\caT)$ is completely determined by $\caT$ and $\eta_3(\alpha/c)$, where $\eta_3$ is a given cubic character.

\begin{proof}[Proof of \lref{lem:NfTbasic}]
	By assumption $f(x)$ can be expressed as follows,
	\begin{align}
	f(x)=cx^3+c_2x^2+c_1x+c_0, \quad c_0, c_1, c_2\in \bbF_d. 
	\end{align}
	Let $(\bfx;\bfy)\in \caT$; then $\bfx\cdot \mathbf{1}_3=\bfy\cdot \mathbf{1}_3$ and $\bfx\cdot \bfx=\bfy\cdot\bfy$. Therefore,
	\begin{align}
	f(\bfy)-f(\bfx)&=c\bigl(y_1^3+y_2^3+y_3^3-x_1^3-x_2^3-x_3^3\bigr),\quad 
	f(\beta\bfy)-f(\beta\bfx)=\beta^3[f(\bfy)-f(\bfx)]\quad \forall \beta\in \bbF_d,
	\end{align}
	which implies  \eref{eq:NfTNT}. 
	
	If in addition $d\neq 1\mmod 3$ (which means $d=3$ or $d=2\mmod 3$), then every element in $\bbF_d$ is a cubic residue, so \eref{eq:NfTNT2} is a simple corollary of  \eqsref{eq:NfTsymSum}{eq:NfTNT}. 
\end{proof}

\begin{lemma}\label{lem:NfT}
	Suppose $d\geq 5$ is an odd prime, $\alpha\in \bbF_d^{\times}$, $\caT\in \scrT_\ns$, 
	and $f$ is a cubic polynomial in $\bbF_d[x]$ with cubic coefficient $c\neq0$; let $\bfv=(v_1,v_2,v_3)^\top$ be a characteristic vector of $\caT$ and $a=v_1^3+v_2^3+v_3^3$. Then
	\begin{gather}
	N_0(f,\caT)=3d^2-2d.  \label{eq:N0fT}
	\end{gather}
	If in addition $d=2 \mmod 3$, then 
	\begin{align}
	N_\alpha(f,\caT)=d^2-2d.  \label{eq:NfTd2} 
	\end{align}
	If instead  $d=1 \mmod 3$, then 
	\begin{gather}	
	N_\alpha(f,\caT)=d^2-2d+2d\Re[\eta_3(ac/\alpha)J(\eta_3,\eta_3)]=d^2-2d+2\Re\bigl[\eta_3(ac/\alpha)G^3(\eta_3)\bigr], \label{eq:NfTd1}\\
	|N_\alpha(f,\caT)-(d^2-2d)|\leq 2d\sqrt{d}. \label{eq:NfTd1LUB}
	\end{gather}	
\end{lemma}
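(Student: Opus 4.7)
The plan is to reduce to \lref{lem:Cubic312} after an explicit parametrization of $\caT$. First, \lref{lem:NfTbasic} gives $N_\alpha(f,\caT)=N_{\alpha/c}(\caT)$, so I may assume $f(x)=x^3$. Next, by the $S_3$-symmetry in \eref{eq:NfTsymSum} together with the fact from \lref{lem:SemigroupscrT} that $\tau_{12}(\scrT_0\setminus\scrT_\sym)=\scrT_1\setminus\scrT_\sym$ and the definition of the characteristic vector in \sref{sec:CharIndex}, I may further assume $\caT=\caT_\bfv\in\scrT_1\setminus\scrT_\sym$ with the same $\bfv$; by \lref{lem:Tv} this forces $v_1v_2v_3\neq 0$, so $a=v_1^3+v_2^3+v_3^3=3v_1v_2v_3\neq 0$.

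Using the parametrization $\caT_\bfv=\{(\bfz+t\bfv;\bfz-t\bfv):\bfz\in\bfv^\perp,\ t\in\bbF_d\}$ from \eref{eq:Tv}, the identity $(z-tv)^3-(z+tv)^3=-6z^2tv-2t^3v^3$ applied termwise gives
\begin{equation}
f(\bfy)-f(\bfx)=-6t\sum_{i=1}^{3}z_i^2v_i-2t^3a.
\end{equation}
Since $\bfv\neq 0$ cannot be proportional to $\mathbf{1}_3$ when $d\geq 5$ (else $\bfv\cdot\mathbf{1}_3=3\lambda\neq 0$), the cross-product vector $\bfu:=(v_2-v_3,v_3-v_1,v_1-v_2)^\top$ is nonzero, orthogonal to both $\bfv$ and $\mathbf{1}_3$, and independent from $\mathbf{1}_3$; hence $\{\mathbf{1}_3,\bfu\}$ is a basis of the two-dimensional subspace $\bfv^\perp$. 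Writing $\bfz=r\mathbf{1}_3+q\bfu$ and using $\bfv\cdot\mathbf{1}_3=\bfv\cdot\bfu=0$ collapses the sum to $\sum_i z_i^2 v_i=q^2\gamma$ with $\gamma:=\sum_i u_i^2 v_i$, so $r\in\bbF_d$ is free and contributes a factor of~$d$. Direct expansion combined with $v_1+v_2+v_3=0$ and the elementary identity $\sum_{i\neq j}v_iv_j^2=(v_1+v_2+v_3)(v_1v_2+v_2v_3+v_3v_1)-3v_1v_2v_3$ evaluates the key quantity to
\begin{equation}
\gamma=-9v_1v_2v_3=-3a.
\end{equation}

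Thus $f(\bfy)-f(\bfx)=\alpha'$ on $\caT_\bfv$ becomes $a(t^3-9tq^2)=-\alpha'/2$, and $N_{\alpha'}(\caT)=d\cdot N_{c'}(\tilde f)$ with $\tilde f(t,q)=t^3-9tq^2$ and $c'=-\alpha'/(2a)$. Since $1/9=(1/3)^2$ is a quadratic residue, \lref{lem:Cubic312} applies directly. The case $c'=0$ gives $N_0(\tilde f)=3d-2$, yielding \eref{eq:N0fT}. For $c'\neq 0$ and $d=2\mmod 3$, the lemma gives $N_{c'}(\tilde f)=d-2$, yielding \eref{eq:NfTd2}. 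For $c'\neq 0$ and $d=1\mmod 3$, it gives $N_{c'}(\tilde f)=d-2+2\Re[\eta_3(1/(2c'))J(\eta_3,\eta_3)]$, which simplifies to $d-2+2\Re[\eta_3(a/\alpha')J(\eta_3,\eta_3)]$ via $\eta_3(-1)=1$; translating back through $\alpha'=\alpha/c$ yields \eref{eq:NfTd1}, and the bound \eref{eq:NfTd1LUB} follows from $|J(\eta_3,\eta_3)|=\sqrt d$ (\lref{lem:JacobiAbs}). The main technical hurdle is the symmetric-function identity $\gamma=-3a$; once this is in hand, the counting collapses to the already-solved cubic diagonal equation in \lref{lem:Cubic312}, and the factor $1/9$ being a square is exactly what allows the uniform treatment across both congruence classes of $d$ modulo $3$.
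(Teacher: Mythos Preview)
Your proof is correct and follows the same overall strategy as the paper: reduce to $f(x)=x^3$ and $\caT\in\scrT_1\cap\scrT_\ns$, parametrize $\caT_\bfv$ via a basis $\{\mathbf{1}_3,\bfu\}$ of $\bfv^\perp$, observe that the free $\mathbf{1}_3$-coordinate contributes a factor $d$, and feed the remaining two-variable cubic into \lref{lem:Cubic312}. The one substantive difference is in how the quadratic-residue hypothesis of \lref{lem:Cubic312} is verified. The paper takes a generic $\bfu\in\bfv^\perp\setminus\spa(\mathbf{1}_3)$, obtains $2az^3-2by^2z$ with $b=-3\sum_i u_i^2v_i$, and then appeals to the separate auxiliary \lref{lem:QudraticResidueT} to show $a/b$ is a square. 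You instead pick the explicit vector $\bfu=(v_2-v_3,\,v_3-v_1,\,v_1-v_2)^\top$ and compute $\gamma=\sum_i u_i^2v_i=-9v_1v_2v_3=-3a$ directly, so the ratio becomes $1/9=(1/3)^2$, manifestly a square. This buys you a cleaner, self-contained argument that sidesteps \lref{lem:QudraticResidueT} entirely; the paper's route has the mild advantage of showing the quadratic-residue property holds for \emph{every} choice of $\bfu$, but that generality is not needed here.
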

Note that $N_\alpha(f,\caT)$ is determined by \eref{eq:NfTsym3} when $d=3$, in which case the situation is very different.

\begin{proof}[Proof of \lref{lem:NfT}]
	Thanks to \eqsref{eq:NfTsymSum}{eq:Sigma33scrTsym},
	to prove \lref{lem:NfT}, we can assume that $\caT\in \scrT_1\cap \scrT_\ns$ without loss of generality. Thanks to \lref{lem:NfTbasic}, we can further assume that $f$ is the canonical cubic polynomial, that is, $f(x)=x^3$ and $c=1$.

	Let $\bfu$ be a vector in $\bfv^\perp\setminus \spa(\mathbf{1}_3)$ and $b=-3(u_1^2v_1+u_2^2v_2+u_3^2v_3)$; then $\bfu\cdot \bfv=0$ and $b\neq 0$ according to \lref{lem:QudraticResidueT} below. In addition, 
	$\mathbf{1}_3, \bfu, \bfv$ form a basis for $\bbF_d^3$, so any vector  in $\bbF_d^3$ can be expressed in the form $x \mathbf{1}_3+ y \bfu+z\bfv$ with $x, y, z \in \bbF_d$ in a unique way. Furthermore,
	\begin{gather}
	\caT=\{(x \mathbf{1}_3+ y \bfu-z\bfv; x \mathbf{1}_3+ y \bfu+z\bfv)\ |\ x, y, z\in \bbF_d\},\\
	f( x \mathbf{1}_3+ y \bfu+z\bfv)-f(x \mathbf{1}_3+ y \bfu-z\bfv)=2a z^3-2b y^2z. \label{eq:fyfxProof}
	\end{gather}
	Note that the final result  in \eref{eq:fyfxProof} is independent of the parameter $x$, so $N_\alpha(\caT)$ is $d$ times the number of solutions to the equation $2a z^3 -2b y^2z=\alpha$,
	that is,
	\begin{align}
	N_\alpha(\caT)=d N(2a z^3 -2b   y^2z=\alpha). 
	\end{align}
	Now \lref{lem:NfT} is a simple corollary of this equation and \lref{lem:Cubic312} given that $a/b$ is a quadratic residue by \lref{lem:QudraticResidueT}. 	
\end{proof}

\begin{lemma}\label{lem:NfTsum}
	Suppose $d$ is an odd prime satisfying $d\geq 5$, $\alpha\in \bbF_d^{\times}$, and  $f$ is a cubic polynomial in $\bbF_d[x]$ with cubic coefficient $c\neq0$. Then
	\begin{align}
	\sum_{\caT\in \scrT_\ns}	N_\alpha(f,\caT)&=2\sum_{\caT\in \scrT_1\cap\scrT_\ns}	N_\alpha(\caT), \label{eq:NfTsumSigmaT1}  \\
	\sum_{\caT\in \scrT_\ns}	N_0(f,\caT)&=2d(d-2)(3d-2). \label{eq:N0fTsum}
	\end{align}
	If in addition $d=2 \mmod 3$, then 
	\begin{align}
	\sum_{\caT\in \scrT_\ns}	N_\alpha(f,\caT)=2d(d-2)^2.  \label{eq:NfTsumd2} 
	\end{align}
	If in addition $d=1 \mmod 3$, then 
	\begin{gather}
	\sum_{\caT\in \scrT_\ns}	N_\alpha(f,\caT)=2d(d-2)^2+4d\Re\bigl[\eta_3(3c/\alpha)J^2(\eta_3,\eta_3)\bigr]=2d(d-2)^2+\frac{4}{d}\Re\bigl[\eta_3(3c/\alpha)G^6(\eta_3)\bigr], \label{eq:NfTsumd1} \\
	\Biggl|\sum_{\caT\in \scrT_\ns}	N_\alpha(f,\caT)-2d(d-2)^2\biggr|\leq 4d.  \label{eq:NfTsumLUB}
	\end{gather}	
\end{lemma}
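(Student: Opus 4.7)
\textbf{Proof proposal for Lemma~\ref{lem:NfTsum}.} The plan is to reduce the sum over $\scrT_\ns$ to a character sum over the parameterized family $\{\caT_{\bfv_y}\}_{y=1}^{d-2}$ from Eq.~\eqref{eq:TvNS} and then recognize the resulting partial sum as a complete Jacobi sum via Lemma~\ref{lem:eta3}.

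First, I would use Eq.~\eqref{eq:Sigma33scrTsym}, which decomposes $\scrT_\ns$ as $(\scrT_1 \cap \scrT_\ns) \sqcup (\scrT_1 \cap \scrT_\ns)\tau_{12}$, together with the right-$S_3$-invariance $N_\alpha(f, \caT O) = N_\alpha(f, \caT)$ from Eq.~\eqref{eq:NfTsymSum}, to establish Eq.~\eqref{eq:NfTsumSigmaT1} and reduce the task to evaluating $\sum_{\caT \in \scrT_1 \cap \scrT_\ns} N_\alpha(f, \caT)$. Next, Lemma~\ref{lem:NfTbasic} lets me replace $f$ by the canonical cubic $x^3$ at the cost of substituting $\alpha \mapsto \alpha/c$. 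Parametrizing $\scrT_1 \cap \scrT_\ns = \{\caT_{\bfv_y}\}_{y=1}^{d-2}$ with $\bfv_y = (1, y, -1-y)^\top$, a direct expansion gives the characteristic-vector invariant $a_y := v_{y,1}^3 + v_{y,2}^3 + v_{y,3}^3 = -3 y(y+1)$.

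Now I would apply Lemma~\ref{lem:NfT} termwise. In the case $\alpha = 0$, each summand equals $3d^2 - 2d$, yielding $(d-2)(3d^2-2d) = d(d-2)(3d-2)$, and doubling gives Eq.~\eqref{eq:N0fTsum}. In the case $d = 2 \mmod 3$, each summand equals $d^2 - 2d$, yielding $(d-2)(d^2-2d) = d(d-2)^2$, and doubling gives Eq.~\eqref{eq:NfTsumd2}. The main substantive case is $d = 1 \mmod 3$ with $\alpha \in \bbF_d^\times$. Using $\eta_3(-1) = 1$ (which holds since $\eta_3(-1)$ is a cube root of unity that is also a square root of unity), I rewrite
\begin{equation}
\eta_3(a_y c/\alpha) = \eta_3(-3c/\alpha)\,\eta_3(y(y+1)) = \eta_3(3c/\alpha)\,\eta_3(y^2+y).
\end{equation}
The boundary terms $y = 0$ and $y = d-1$ both give $\eta_3(y^2+y) = \eta_3(0) = 0$, so the partial sum coincides with the full sum, and Lemma~\ref{lem:eta3} yields
\begin{equation}
\sum_{y=1}^{d-2} \eta_3(y^2+y) = \sum_{y \in \bbF_d} \eta_3(y^2+y) = J(\eta_3, \eta_3).
\end{equation}
Combining with Eq.~\eqref{eq:JacobiGaussCubic}, which gives $J(\eta_3,\eta_3) = G^3(\eta_3)/d$, converts the sum of $2\Re[\eta_3(a_y c/\alpha) G^3(\eta_3)]$ over $y$ into the two equivalent expressions in Eq.~\eqref{eq:NfTsumd1}, after multiplying by the factor $2$ from Eq.~\eqref{eq:NfTsumSigmaT1}. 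The norm bound Eq.~\eqref{eq:NfTsumLUB} then follows directly from $|J(\eta_3, \eta_3)|^2 = d$ via Lemma~\ref{lem:JacobiAbs}.

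The main obstacle I expect is not computational but conceptual: recognizing that the partial character sum over $y = 1, \ldots, d-2$ silently extends to a complete sum over $\bbF_d$ because of the zeros at the two excluded points. Once this observation is in hand, the Jacobi-Gauss identity in Eq.~\eqref{eq:JacobiGaussCubic} handles the rest, and the proof becomes a straightforward bookkeeping exercise in the parameterization from Section~\ref{sec:SLSconstruct2}.
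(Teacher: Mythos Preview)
Your proposal is correct and follows essentially the same route as the paper: reduce to $\scrT_1\cap\scrT_\ns$ via the $S_3$-invariance in Eq.~\eqref{eq:NfTsymSum}, parametrize by $\{\caT_{\bfv_y}\}_{y=1}^{d-2}$, apply Lemma~\ref{lem:NfT} termwise, extend the character sum over $y$ to all of $\bbF_d$ using $\eta_3(0)=0$, and invoke Lemma~\ref{lem:eta3} together with Eq.~\eqref{eq:JacobiGaussCubic}. The paper's proof is organized identically, so there is nothing to add.
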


\begin{proof}[Proof of \lref{lem:NfTsum}]
	\Eref{eq:NfTsumSigmaT1} follows from \eqsref{eq:NfTsymSum}{eq:Sigma33scrTsym}. 
	\Eqsref{eq:N0fTsum}{eq:NfTsumd2} are simple corollaries of \eqsref{eq:N0fT}{eq:NfTd2}, respectively, given that $|\scrT_\ns|=2d-4$.

	\Eref{eq:NfTsumd1} can be derived as follows,
	\begin{align}	
	\sum_{\caT\in \scrT_\ns}	N_\alpha(f,\caT)&=2\sum_{\caT\in \scrT_1\cap \scrT_\ns}	N_\alpha(\caT)=2\sum_{y=1}^{d-2}		N_\alpha\bigl(f,\caT_{(1;y;-1-y)}\bigr)\nonumber\\
	&=2\sum_{y=1}^{d-2} \bigl\{d^2-2d+2d\Re\bigl[\eta_3\bigl(-3c(y^2+y)/\alpha\bigr)J(\eta_3,\eta_3)\bigr]\bigr\}\nonumber\\
	&=2d(d-2)^2+4d\sum_{y\in \bbF_d} \Re\bigl[\eta_3\bigl(y^2+y\bigr)\eta_3(3c/\alpha)J(\eta_3,\eta_3)\bigr]\nonumber\\
	&=
	2d(d-2)^2+4d\Re\bigl[\eta_3(3c/\alpha)J^2(\eta_3,\eta_3)\bigr]=2d(d-2)^2+\frac{4}{d}\Re\bigl[\eta_3(3c/\alpha)G^6(\eta_3)\bigr]. 
	\end{align}
	Here the first equality follows from \eref{eq:NfTsumSigmaT1}; the second equality follows from \eref{eq:TvNS}; the third equality follows from \lref{lem:NfT}; the fourth equality holds because  $\eta_3$ is a nontrivial multiplicative character, which means $\eta_3(0)=0$ and $\eta_3(-1)=1$; the fifth equality follows from \lref{lem:eta3}; the last equality follows from \eref{eq:JacobiGaussCubic}.

	\Eref{eq:NfTsumLUB} is a simple corollary of \eref{eq:NfTsumd1}, given  that $|\eta_3(3c/\alpha)|=1$ and $|J(\eta_3,\eta_3)|=|G(\eta_3)|=\sqrt{d}$ by \pref{pro:GaussSum} and \lref{lem:JacobiAbs}.
\end{proof}

\subsection{Two auxiliary lemmas}
\begin{lemma}\label{lem:abcxyz}
	Suppose $d$ is a prime; $a,b,c\in  \bbF_d^{\times}$ and $x,y,z\in \bbF_d$ satisfy the following conditions,
	\begin{align}\label{eq:abcxyzCondition}
	a+b+c=0=xa+yb+zc=x^2a+y^2b+z^2c=0.
	\end{align}
	Then $x=y=z$. 
\end{lemma}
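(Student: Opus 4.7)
The three hypotheses say exactly that the vector $(a,b,c)^\top \in \bbF_d^3$ lies in the kernel of the Vandermonde-type matrix
\[
V := \begin{pmatrix} 1 & 1 & 1 \\ x & y & z \\ x^2 & y^2 & z^2 \end{pmatrix}.
\]
My plan is to exploit this observation. Since $a,b,c$ are all nonzero by assumption, the vector $(a,b,c)^\top$ is nonzero, so $V$ must be singular, which means $\det V = (y-x)(z-x)(z-y)=0$. Hence at least two of $x,y,z$ coincide.

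Next, the system \eqref{eq:abcxyzCondition} is symmetric under the simultaneous permutation action of $S_3$ on $(a,b,c)$ and $(x,y,z)$, so without loss of generality I may assume $x=y$. Then the first equation gives $a+b = -c$, and substituting into the second yields $x(a+b)+zc = (z-x)c = 0$. Since $c\in\bbF_d^\times$, this forces $z=x$, so $x=y=z$ as claimed.

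There is essentially no hard step here: the key insight is simply that the three hypotheses encode singularity of a Vandermonde matrix, which already pins down a coincidence among $x,y,z$, and the remaining case analysis is immediate. The only thing to double-check is that the Vandermonde determinant formula $(y-x)(z-x)(z-y)$ is valid in arbitrary characteristic (including $d=2$), which it is, so the argument works for every prime $d$.
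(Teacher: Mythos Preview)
Your proof is correct and follows essentially the same approach as the paper: both recognize the hypotheses as $V(a,b,c)^\top=0$ for the Vandermonde matrix, deduce singularity from $(a,b,c)\neq 0$, and then handle the case $x=y$ by showing $c\neq 0$ forces $z=x$. The paper phrases the last step as ``$\ker V=\spa(1,-1,0)$ contains no vector with three nonzero entries,'' while you do the equivalent direct substitution; there is no real difference.
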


\begin{proof}[Proof of \lref{lem:abcxyz}]
	Let 
	\begin{align}
	A=\begin{pmatrix}
	1 & 1 &1 \\
	x & y & z\\
	x^2& y^2 &z^2
	\end{pmatrix},\quad \bfv=\begin{pmatrix}
	a \\
	b\\
	c
	\end{pmatrix}.
	\end{align}	
	Then the  assumption in \eref{eq:abcxyzCondition} can be expressed as $A\bfv=0$, 
	which implies that
	\begin{align}
	\det(A)=(x-y)(y-z)(z-x)=0,
	\end{align}
	given that $a,b,c\in \bbF_d^{\times}$. So at least two of the three numbers $x,y,z$ must be equal, say $x=y$. If  $x=y\neq z$, then $\ker(A)=\spa(1;-1;0)$ cannot contain any vector that is composed of three nonzero entries. Similarly, $\ker(A)$ cannot contain any vector that is composed of three nonzero entries if $x=z\neq y$ or $y=z\neq x$. 
	Therefore, $x=y=z$. 
\end{proof}

\begin{lemma}\label{lem:QudraticResidueT}
	Suppose $d\geq 5$ is an odd prime, $\caT\in \scrT_1\cap \scrT_\ns$,  $\bfv=(v_1,v_2,v_3)^\top$ is a characteristic vector of $\caT$, and $\bfu=(u_1,u_2,u_3)^\top\in \bfv^\perp$. Let
	\begin{align}\label{eq:abQR}
	a=v_1^3+v_2^3+v_3^3,\quad  b=-3\bigl(u_1^2v_1+u_2^2v_2+u_3^2v_3\bigr);
	\end{align}	
	then $a\neq 0$ and $b/a$ is a quadratic residue. 	In addition, $b=0$ iff $\bfu\in \spa(\mathbf{1}_3)$. 
\end{lemma}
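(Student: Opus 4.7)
The plan is to combine Lemma~\ref{lem:Tv} (to locate $a$) with an explicit parametrization of $\bfv^\perp$ (to force a square structure on $b/a$).

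First I would show that $a \neq 0$ as follows. Since $\caT \in \scrT_1 \cap \scrT_\ns$ and $\caT$ has characteristic vector $\bfv$, the equivalence in \eref{eq:TvsymEqui} of \lref{lem:Tv} (applied in its contrapositive form) tells us that $v_1 v_2 v_3 \neq 0$. The same lemma gives the cubic identity $v_1^3 + v_2^3 + v_3^3 = 3 v_1 v_2 v_3$, so $a = 3 v_1 v_2 v_3$, which is nonzero because $d \geq 5$. This also gives the concrete formula $a = 3 v_1 v_2 v_3$ that will be used to simplify the ratio $b/a$.

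Next I would parametrize $\bfv^\perp$. This subspace is two-dimensional and contains $\mathbf{1}_3$ since $\bfv \in \mathbf{1}_3^\perp$. I will exhibit a second basis vector by taking the "cross product"
\[
\bfw := (v_2 - v_3,\; v_3 - v_1,\; v_1 - v_2)^\top,
\]
which satisfies $\bfw \cdot \mathbf{1}_3 = 0$ and $\bfw \cdot \bfv = 0$ by direct computation, and satisfies $\bfw \notin \spa(\mathbf{1}_3)$ because equality of all three entries would force some $v_i = 0$, contradicting $v_1 v_2 v_3 \neq 0$. Hence every $\bfu \in \bfv^\perp$ has a unique expression $\bfu = \alpha \mathbf{1}_3 + \beta \bfw$ with $\alpha,\beta \in \bbF_d$.

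Then I would substitute this parametrization into $b$. Expanding $b = -3 \sum_i u_i^2 v_i$, the cross terms $\alpha^2 \sum_i v_i$ and $2 \alpha \beta \sum_i w_i v_i$ vanish by $\sum v_i = 0$ and $\bfw \cdot \bfv = 0$, leaving $b = -3 \beta^2 \sum_i w_i^2 v_i$. The key identity to compute $\sum_i w_i^2 v_i$ comes from cubing $v_1 + v_2 + v_3 = 0$: the multinomial expansion gives $\sum v_i^3 + 3 \sum_{i \neq j} v_i^2 v_j + 6 v_1 v_2 v_3 = 0$, which together with $\sum v_i^3 = 3 v_1 v_2 v_3$ yields $\sum_{i \neq j} v_i^2 v_j = -3 v_1 v_2 v_3$. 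A short expansion of $\sum_i (v_j - v_k)^2 v_i$ using this identity gives $\sum_i w_i^2 v_i = -9 v_1 v_2 v_3$, and therefore
\[
b = 27 \beta^2 v_1 v_2 v_3 = 9 \beta^2 a, \qquad \frac{b}{a} = (3\beta)^2.
\]
This immediately gives both remaining claims: $b/a$ is a square (hence a quadratic residue whenever it is nonzero), and $b = 0$ iff $\beta = 0$ iff $\bfu \in \spa(\mathbf{1}_3)$. As a cross-check for the equivalence $b = 0 \Leftrightarrow \bfu \in \spa(\mathbf{1}_3)$, one can alternatively invoke \lref{lem:abcxyz} with $(a,b,c) \leftarrow (v_1, v_2, v_3)$ and $(x,y,z) \leftarrow (u_1, u_2, u_3)$, whose three hypotheses correspond exactly to $\mathbf{1}_3 \in \bfv^\perp$, $\bfu \in \bfv^\perp$, and $b = 0$.

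The only mild obstacle is choosing the right basis vector $\bfw$: with an arbitrary basis the expression $\sum_i w_i^2 v_i$ is not manifestly proportional to $v_1 v_2 v_3$, and it is the symmetric "cross product" choice that makes the Newton-type identity $\sum_{i \neq j} v_i^2 v_j = -3 v_1 v_2 v_3$ collapse the computation cleanly; the rest of the proof is then essentially symbolic bookkeeping.
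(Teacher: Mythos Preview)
Your proof is correct and takes a genuinely different route from the paper's. The paper argues by invariance: it defines $h(\bfu,\bfv)=b/a$, records the homogeneity properties $h(c\bfu,\bfv)=c^2h(\bfu,\bfv)$, $h(\bfu+c\mathbf{1}_3,\bfv)=h(\bfu,\bfv)$, $h(\bfu,c\bfv)=h(\bfu,\bfv)/c^2$, and uses them to reduce to the single canonical case $\bfv=(1,y,-1-y)^\top$, $\bfu=(-y,1,0)^\top$, where one computes $a=b=-3(y^2+y)$ and hence $h=1$. The case $b=0\Leftrightarrow\bfu\in\spa(\mathbf{1}_3)$ is handled separately via \lref{lem:abcxyz}, exactly as in your cross-check. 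Your approach instead fixes the explicit basis $\{\mathbf{1}_3,\bfw\}$ of $\bfv^\perp$ with $\bfw=(v_2-v_3,v_3-v_1,v_1-v_2)^\top$ and, using the Newton identity $\sum_{i\neq j}v_i^2v_j=-3v_1v_2v_3$, obtains the closed form $b/a=(3\beta)^2$. This is more direct and actually yields a sharper statement (an explicit square, not merely a quadratic residue), while the paper's invariance argument avoids the symmetric-function computation at the cost of verifying three homogeneity identities. Both are short; yours has the advantage that the equivalence $b=0\Leftrightarrow\bfu\in\spa(\mathbf{1}_3)$ falls out of the same formula rather than requiring a separate lemma.
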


\begin{proof}[Proof of \lref{lem:QudraticResidueT}] By assumption and \eref{eq:TvDeltaDefect} we have $\mathbf{1}_3, \bfu\in \caT_\tDelta=\bfv^\perp$, which implies that
	\begin{align}\label{eq:uvQRproof}
	\mathbf{1}_3\cdot\bfv=	v_1+v_2+v_3=0, \quad \bfu\cdot\bfv=u_1v_1+u_2v_2+u_3v_3=0.
	\end{align}	
	In addition,
	\begin{align}\label{eq:uvQRproof2}
	a=v_1^3+v_2^3+v_3^3=3v_1v_2v_3\neq 0
	\end{align}	
	according to \lref{lem:Tv}. If $\bfu\in \spa(\mathbf{1}_3)$, then  $b=0$ by \eqsref{eq:abQR}{eq:uvQRproof}, so $b/a$ is a quadratic residue.

	Next, suppose $\bfu\in\bfv^\perp\setminus \spa(\mathbf{1}_3)$. Then \eqsref{eq:uvQRproof}{eq:uvQRproof2}  means $b\neq 0$ according to \lref{lem:abcxyz}.
	Let $h(\bfu,\bfv)=b/a$ with $a,b$ defined in \eref{eq:abQR};
	then it is easy to verify that 
	\begin{align}\label{eq:huv}
	h(c\bfu,\bfv)=c^2 h(\bfu,\bfv), \quad h(\bfu+c\mathbf{1}_3,\bfv)=h(\bfu,\bfv) \quad \forall c\in \bbF_d; \quad h(\bfu,c\bfv)=\frac{h(\bfu,\bfv)}{c^2}\quad \forall c\in \bbF_d^{\times}. 
	\end{align}	
	To prove that $h(\bfu,\bfv)$ is a quadratic residue, we can assume that	$\bfv$ has the form $\bfv=(1;y;-1-y)$ with $y\in \bbF_d$ without loss of generality. In addition, we can assume that $\bfu$  has the form $\bfu=(-y;1;0)$  given that $\bfu\cdot\bfv=0$. Then $a=b=-3(y^2+y)$ and $ h(\bfu,\bfv)=1$,
	so $b/a=h(\bfu,\bfv)$ is a quadratic residue. 
\end{proof}

\section{Proofs of results on magic orbits when $d\neq 1\mmod 3$}

In this appendix we prove \lsref{lem:kappaMagicOned23}, \ref{lem:kahkaMagicd23} and  \thsref{thm:Phi3Magicd23}-\ref{thm:ShNormMagicd23}, which are tied to magic orbits in the case $d\neq 1\mmod 3$.

\subsection{\label{app:lem:Qud23MagicProof}Proofs of \lsref{lem:kappaMagicOned23} and \ref{lem:kahkaMagicd23}}

\begin{proof}[Proof of \lref{lem:kappaMagicOned23}]
	Thanks to \pref{pro:MagicGate},  $|\psi\>$ has the form $|\psi\>=\sum_{u\in \bbF_d}\tomega^{f(u)}|u\>/\sqrt{d}$,
	where $f\in \tscrP_3(\bbF_d)$. Note that the function $f$ can be extended from $\bbF_d$ to $\bbF_d^3$ following the recipe in  \eref{eq:fbfx}. 
	By virtue of  \eqsref{eq:rRT}{eq:kappapsiT} with $n=1$ we can deduce that
	\begin{align}
	\kappa(\psi,\caT)&=\tr\bigl[r(T)(|\psi\>\<\psi|)^{\otimes 3}\bigr]=
	\frac{1}{d^3}\sum_{(\bfx;\bfy)\in \caT} \tomega^{f(\bfy)-f(\bfx)}.
	\end{align}
	
	If 	$d=2\mmod 3$, then $\tomega=\omega$ and $f$ is a cubic polynomial over $\bbF_d$ with nonzero cubic coefficient by \pref{pro:MagicGate}.	Now, \eref{eq:kappaMagicOned23} can be proved as follows,
	\begin{align}
	\kappa(\psi,\caT)=\frac{1}{d^3}\sum_{\alpha\in \bbF_d} N_\alpha(f,\caT)\omega^{\alpha}=\frac{3d-2}{d^2}+
	\frac{d-2}{d^2}\sum_{\alpha\in \bbF_d^{\times}} \omega^{\alpha}=\frac{2d}{d^2}+
	\frac{d-2}{d^2}\sum_{\alpha\in \bbF_d} \omega^{\alpha}=\frac{2}{d}, 
	\end{align}	
	where $N_\alpha(f,\caT)=|\{ (\bfx;\bfy)\in \caT \ | \  f(\bfy)-f(\bfx)=\alpha\}|$ is defined in \eref{eq:NalphafcaT}, and the second equality follows from 	\lref{lem:NfT} in \aref{app:CubicEqSLS} given that $\caT\in \scrT_\ns$ by assumption.

	Next, suppose $d=3$. Then $\tomega=\omega_9$ 
	and $f$ is a function from  $\bbF_3$ to $\bbZ_9$ that has the form in \eref{eq:f39} by \pref{pro:MagicGate}. In addition,
	$\scrT_\ns=\scrT_\defe$ and we can assume that $\caT\in \scrT_\defe\cap\scrT_1$ without loss of generality thanks to \pref{pro:kappaSym}. According to \lref{lem:defectT} and \eref{eq:scrT01}, $\caT$ is equal to $\tcaT_0$ defined in \eref{eq:SLStwo} and thus has the form
	\begin{align}
	\caT=\{(a\bfz+b\mathbf{1}_3; a\bfz+c\mathbf{1}_3)\ |\ a,b,c\in \bbF_d\},
	\end{align}
	where $\bfz=(0,1,2)^\top$.  Based on these observations  we can deduce that
	\begin{align}
	\kappa(\psi, \caT)=\tr[r(\caT)(|\psi\>\<\psi|)^{\otimes 3}]=\frac{1}{27}\sum_{a,b,c\in \bbF_d} \omega_9^{f(a\bfz+c\mathbf{1}_3)-f(a\bfz+b\mathbf{1}_3)}.
	\end{align}
	If $a\neq 0$, then the three entries of $a\bfz+b\mathbf{1}_3$ are a permutation of $0,1,2$, and so are the  three entries of $a\bfz+c\mathbf{1}_3$, which means $f(a\bfz+c\mathbf{1}_3)-f(a\bfz+b\mathbf{1}_3)=0\mmod 9$. If $a=0$, then 
	\begin{equation}
	f(a\bfz+c\mathbf{1}_3)-f(a\bfz+b\mathbf{1}_3)=
	3f(c)-3f(b).
	\end{equation}	
	Therefore,	
	\begin{equation}	
	\kappa(\psi, \caT)=\frac{2}{3}+
	\frac{1}{27}\sum_{b,c\in \bbF_d} \omega_3^{f(c)-f(b)} =\frac{2}{3}=\frac{2}{d}.
	\end{equation}
	Here the second equality follows from \eref{eq:f39}, which implies that
	\begin{align}
	\sum_{c\in \bbF_d} \omega_3^{f(c)}=\sum_{b\in \bbF_d} \omega_3^{-f(b)}=0.
	\end{align}
	This observation completes the proof of \lref{lem:kappaMagicOned23}. 	
\end{proof}

\begin{proof}[Proof of \lref{lem:kahkaMagicd23}]
	The first equality in \eref{eq:kappaSigd23} follows from \lref{lem:kappaMagicOned23} and the product relation in  \eref{eq:kappaPsiProd}. The last two equalities in \eref{eq:kappaSigd23} are simple corollaries of the first equality together with \eqsref{eq:kappascrT}{eq:kappaSigNSscrT} given that $|\scrT_\ns|=2(d-2)$. 
	
	The equality in \eref{eq:hkaAbsUBd23} follows from \eqsref{eq:hkakaTLUB}{eq:kappaSigd23} (see also \lref{lem:kahkaBalance}). Meanwhile,	it is easy to verify that $\hka(\caT)<0$ if $n=k=1$ and $d\geq 5$, while $\hka(\caT)>0$ otherwise. In the later case we have
	\begin{align}\label{eq:hkad23Proof}
	|\hka(\caT)|=\hka(\caT)=\frac{ (D+2) \bigl[\frac{2^k}{d^k}(D+2)-3\bigr]}{(D-1)(D+d)}\leq \frac{ (D+2) \frac{2^k}{d^k}(D+2-3)}{(D-1)(D+d)}=
	\frac{2^k(D+2)}{d^k(D+d)}< \frac{2^k}{d^k},
	\end{align}
	which confirms the  inequalities in \eref{eq:hkaAbsUBd23}.  In addition, $|\hka|(\Sigma(d))=\hka(\Sigma(d))=(D+2)\kappa( \Sigma(d))/(D+d)$ by \pref{pro:hkaka}, which confirms \eref{eq:hkaAbsSigd23}.  In the former case ($n=k=1$ and $d\geq 5$) we have $D=d$  and
	\begin{align}
	|\hka(\caT)|=-	\hka(\caT)=\frac{(d-4)(d+2)}{2d^2(d-1)}\leq \frac{d+2}{2d^2}=\frac{2^k(D+2)}{2d^k(D+d)}< \frac{2^k}{d^k},
	\end{align}
	which confirms the inequalities in \eref{eq:hkaAbsUBd23}. The first inequality in \eref{eq:hkaAbsSigd23} still follows from \pref{pro:hkaka}. 	In addition, by virtue of \lref{lem:hkaDeltaNS} we can deduce that
	\begin{align}
	|\hka|(\scrT_\ns)&=-	\hka(\scrT_\ns)=2(d-2)|\hka(  \caT)|=\frac{(d-2)(d-4)(d+2)}{d^2(d-1)},\\
	|\hka(\Delta)|&=\hka(\Delta)=1-\frac{\hka(\scrT_\ns)}{2(d+2)}=	1+\frac{(d-2)(d-4)}{2d^2(d-1)}, \\
	|\hka|(\Sigma(d))&=6|\hka(\Delta)|+|\hka|(\scrT_\ns)=
	\frac{(d+2)(7d^2-21d+20)}{d^2(d-1)}, 
	\end{align}
	which confirm the second equality in \eref{eq:hkaAbsSigd23} and complete the proof of \lref{lem:kahkaMagicd23}. 
\end{proof}

\subsection{\label{app:thm:Phi3Shadow32Proof}Proofs of \thsref{thm:Phi3Magicd23}-\ref{thm:ShNormMagicd23}}

\begin{proof}[Proof of \thref{thm:Phi3Magicd23}]
	The equality in \eref{eq:PHi332}
	follows from \thref{thm:Phi3LUB} and  \eref{eq:kappaSigd23}; alternatively, it follows from \thref{thm:Phi3Balance}. The first inequality in \eref{eq:PHi332} follows from the inequality below,
	\begin{align}
	\left(\frac{2^k}{d^k}-\frac{3}{D+2}\right)^2\leq \frac{2^{2k}(D-1)^2}{d^{2k}(D+2)^2},
	\end{align}	
	and the second inequality in \eref{eq:PHi332} is trivial. If in addition $k\geq 1$, then $\bar{\Phi}_3(\orb(\Psi))\leq 29/25$ by \eref{eq:PHi332}. 
\end{proof}

\begin{proof}[Proof of \thref{thm:StabShNormMagicd23}]According to \lref{lem:kahkaMagicd23} we have
	\begin{align}
	\kappa(\scrT_\ns)=\frac{2^{k+1}(d-2)}{d^k}\geq \frac{2(d-2)}{D}.
	\end{align}
	In conjunction with \lref{lem:QPsiProj} and \thref{thm:StabShNormGen} we can deduce that
	\begin{align}
	\|\Ob_0\|^2_{\orb(\Psi)}=\frac{D+1}{D+2}\upsilon_1=\frac{(D+1)(D-K)(K+1)}{2(D-1)(D+2)}\hka(\Psi,\Sigma(d))-\frac{(D+1)(D-K)(2DK+D-K)}{D^2(D-1)},
	\end{align}
	where $\upsilon_1$ is defined in \lref{lem:QPsiProj}. 	This equation and  \lref{lem:kahkaMagicd23} together imply the two equalities in \eref{eq:StabShNormMagicd23}. 
	In addition, it is straightforward to verify that $\|\Ob_0\|^2_{\orb(\Psi)}/\|\Ob_0\|_2^2$ decreases monotonically with $k$ and $K$ given that $\|\Ob_0\|_2^2=K(D-K)/D$. 
	
	The first inequality in \eref{eq:StabShNormMagicd23} is equivalent to the following inequality, 
	\begin{align}
	&\frac{2D^2-dD+(D^2-2dD+D+d)K}{D(D-1)}+\frac{2^k(d-2)(K+1)D}{d^k(D-1)}\leq K+2+\frac{2^k(d-2)(K+1)}{d^k}. 
	\end{align}
	To prove this inequality, it suffices to consider the case $k=0$, that is, the following inequality,
	\begin{align}
	&\frac{2D^2-dD+(D^2-2dD+D+d)K}{D(D-1)}+\frac{(d-2)(K+1)D}{(D-1)}\leq K+2+(d-2)(K+1),
	\end{align}
	which amounts to the inequality $d(D-1)K\geq 0$. The second inequality in \eref{eq:StabShNormMagicd23} is obvious given that $K\geq 1$.   This  observation completes the proof of \thref{thm:StabShNormMagicd23}.
\end{proof}

\begin{proof}[Proof of \thref{thm:ShNormMagicd23}]
	According to \lref{lem:kahkaMagicd23} we have 
	\begin{align}
	|\hka|(\scrT_\ns)\leq \kappa(\scrT_\ns)=\frac{2^{k+1}(d-2)}{d^k}. 
	\end{align}
	Now \eqsref{eq:OShNormMagicd23}{eq:ShNormMagicd23} in \thref{thm:ShNormMagicd23} follow from  \thref{thm:ShNormGen} (see also \thref{thm:ShNormBalance}). 
	If  $n\geq 2$, $k=0$, or $d=3$, then $\hka(\caT)>0$ for all $\caT\in \Sigma(d)$ thanks to \eref{eq:hkakaTLUB} and \lref{lem:kahkaMagicd23}. By virtue of \coref{cor:ShNormGenPos} and  \lref{lem:kahkaMagicd23} we can derive slightly better results than \eqsref{eq:OShNormMagicd23}{eq:ShNormMagicd23},
	\begin{gather}
	\frac{D+d}{D+1}	\|\Ob\|^2_{\orb(\Psi)}\leq	
	\left[1+\frac{2^{k+1}(d-2)}{d^k}\right]	\|\Ob\|_2^2+2\|\Ob\|^2\leq \left[3+\frac{2^{k+1}(d-2)}{d^k}\right]	\|\Ob\|_2^2,  \\
	\hka(\Sigma(d))-3-\frac{5}{D}\leq	\|\orb(\Psi)\|_\sh\leq 	\frac{D+1}{D+d}\left[3+\frac{2^{k+1}(d-2)}{d^k}\right]\leq 3+ \frac{2^{k+1}}{d^{k-1}}.
	\end{gather}
\end{proof}

\section{\label{app:MagicOrbitd1Proof}Proofs of results on magic orbits when $d= 1\mmod 3$}

In this appendix we prove \lsref{lem:kappaMagicOned1}-\ref{lem:BalMagicExactd1} and \thsref{thm:Phi3MagicUBd1}-\ref{thm:StabShNormMagicd1}, \ref{thm:AccurateDesignMagic}  presented in \sref{sec:MagicOrbitd1}, assuming that  $d\geq 7$ is an odd prime satisfying $d=1\mmod 3$, $n\in \bbN$, $k\in \bbN_0$, and $k\leq n$. To this end, some auxiliary results (\lsref{lem:CubicStateSum}-\ref{lem:hkaAuxMagicId}) are established in \aref{app:MagicOrbitd1Aux}.

\subsection{Proofs of \lsref{lem:kappaMagicOned1}-\ref{lem:kappaMagicLUBd1}}

\begin{proof}[Proof of \lref{lem:kappaMagicOned1}]
	By virtue  of \lref{lem:NfT} in \aref{app:CubicEqSLS} 
	we can deduce that (cf. the proof of \lref{lem:kappaMagicOned23})

	\begin{align}
	\kappa(f,\caT)&=\kappa(\psi_f,\caT)=\frac{1}{d^3}\sum_{\alpha\in \bbF_d} N_\alpha(f,\caT)\omega^{\alpha}=\frac{3d-2}{d^2}+
	\frac{1}{d^3}\sum_{\alpha \in \bbF_d^{\times}} \bigl\{d^2-2d+2\Re\bigl[\eta_3(ac/\alpha )G^3(\eta_3)\bigr]\bigr\}\omega^\alpha 	\nonumber\\	&=\frac{2d+2\Re\bigl[\eta_3(ac)G^2(\eta_3)\bigr]	}{d^2}
	=\frac{2d+2\Re\bigl[G^2(\eta_3,ac)\bigr]}{d^2}=\frac{g^2(3,ac)}{d^2}, 
	\end{align}	
	which confirms the equalities in \eref{eq:kappaMagicOned1}.
	Here  the last three  equalities can be derived as follows,
	\begin{align}
	&\frac{3d-2}{d^2}+
	\frac{d-2}{d^2}\sum_{\alpha \in \bbF_d^{\times}} \omega^{\alpha }=\frac{2d}{d^2}+
	\frac{d-2}{d^2}\sum_{\alpha \in \bbF_d} \omega^{\alpha }=\frac{2}{d}, \\
	&2\sum_{\alpha \in \bbF_d^{\times}} \omega^\alpha \Re\bigl[\eta_3(ac/\alpha )G^3(\eta_3)\bigr]=
	\sum_{\alpha \in \bbF_d^{\times}} \bigl[\eta_3(ac)\omega^\alpha  \eta_3^2(\alpha )G^3(\eta_3)+\eta_3^*(ac)\omega^\alpha  \eta_3(\alpha )G^{3*}(\eta_3)\bigr]\nonumber\\
	&=\eta_3(ac)G(\eta_3^2)G^3(\eta_3)+\eta_3^*(ac)G(\eta_3)G^{3*}(\eta_3)=2d\Re\bigl[\eta_3(ac)G^2(\eta_3)\bigr]\nonumber\\
	&=2d\Re\bigl[G^2(\eta_3,ac)\bigr]=d[g^2(3,ac)-2d].  \label{eq:kappafTproof2}
	\end{align}	
	The second equality in \eref{eq:kappafTproof2} follows from the definitions of Gauss sums in \aref{app:GaussJacobi}, and the last three equalities follow from \pref{pro:GaussSum}  and \eref{eq:gGaussSumCubic} in \aref{app:GaussJacobi}.

	The  inequalities in \eref{eq:kappaMagicOned1} follow from \lref{lem:GaussSum3LUB} in \aref{app:GaussJacobi}. 
\end{proof}

\begin{proof}[Proof of \lref{lem:kappaOrder}]
	The first two equalities in \eref{eq:kappaSumProd} follow from \lref{lem:kappaMagicOned1} and the fact that $|G^2(\eta_3)|=d$; the third equality  follows from \lref{lem:kappaMagicOned1} and the definition of $\tg(d)$ in \eref{eq:tgd}; 
	the two inequalities  follow from \lref{lem:GaussSum3LUB} in \aref{app:GaussJacobi}. \Lsref{lem:kappaMagicOned1}
	and \ref{lem:GaussSum3LUB}	also imply the first six inequalities in \eref{eq:kappaOrder}. In conjunction with \eref{eq:kappaSumProd} we can deduce that $d\kappa_0^\uparrow+d\kappa_1^\uparrow>2$, which implies the last inequality in \eref{eq:kappaOrder} and completes the proof of \lref{lem:kappaOrder}.
\end{proof}

\begin{proof}[Proof of \lref{lem:kappaMagicLUBd1}]
	\Eref{eq:kappaMagicLUBd1} is a simple corollary of \lref{lem:kappaMagicOned1}, \eref{eq:kappaMagicDecomd1}, and \pref{pro:kappaSym}. 
	\Lref{lem:kappaMagicOned1} also implies that $\kappa(\caT')=\kappa(\caT)$ whenever $\caT,\caT'\in \scrT_\ns$ have the same cubic character.

	The equality in \eref{eq:kaNsSigMagicd1} follows from \eref{eq:kappaSigNSscrT}, and 
	the two inequalities  follow from \eref{eq:kappaMagicLUBd1} and the fact that $|\scrT_\ns|=2(d-2)$.

	The first inequality in \eref{eq:hkaAbsNsSigMagicd1} follows from \eqsref{eq:hkakaSig}{eq:kaNsSigMagicd1}, and the second inequality is trivial, so it remains to prove the last two inequalities. 
	If $k=0$, then $|\Psi\>$ is a stabilizer state, $|\hka|(\Sigma(d))=\hka(\Sigma(d))\leq 2(d+1)$, and 
	$|\hka|(\scrT_\ns)=\hka(\scrT_\ns)\leq 2(d-2)$,  so the  last two inequalities in \eref{eq:hkaAbsNsSigMagicd1} hold.  If $k\geq1$, then $|\kappa|(\Sigma(d))= \kappa(\Sigma(d))\leq 14$ by \eref{eq:kaNsSigMagicd1}, which implies  the third inequality in \eref{eq:hkaAbsNsSigMagicd1} thanks to
	\eref{eq:hkaSigAbsLUB}. In addition, by virtue of \pref{pro:hkaka} we can deduce that	
	\begin{align}
	|\hka|(\scrT_\ns)&=\frac{D+2}{D-1}
	\sum_{\caT\in\scrT_\ns}\left|\kappa(\caT) -\frac{1}{2D+2d}\kappa(\Sigma(d))\right|\nonumber\leq \frac{D+2}{D-1}\kappa(\scrT_\ns)+\frac{(d-2)(D+2)}{(D-1)(D+d)}\kappa(\Sigma(d))\nonumber\\
	&\leq \frac{2^{2k+1}(d-2)(D+2)}{d^k(D-1)} +\frac{14(d-2)(D+2)}{(D-1)(D+d)}\leq \frac{2^{2k+1}}{d^{k-1}}+\frac{14d}{D},
	\end{align}
	which implies the last inequality in \eref{eq:hkaAbsNsSigMagicd1} and completes the proof of \lref{lem:kappaMagicLUBd1}. 
\end{proof}

\subsection{\label{app:MagicOrbitd1Aux}Auxiliary results on magic orbits  when $d= 1\mmod 3$}

Here we establish some auxiliary results (\lsref{lem:CubicStateSum}-\ref{lem:hkaAuxMagicId}) on magic orbits in the case  $d= 1\mmod 3$. Most of these auxiliary results are tied to the proofs of \lsref{lem:hkaAuxMagicd1} and \ref{lem:kahkaNsSigMagicSLUBd1}.

\subsubsection{Auxiliary results \lsref{lem:CubicStateSum}-\ref{lem:hkaAuxMagicId}}

The following lemma complements \pref{pro:kahkaMagicd1} in the main text.
\begin{lemma}\label{lem:CubicStateSum}
	Suppose $d$ is an odd prime, $d=1\mmod 3$, and $f\in \tscrP_3(\bbF_d)$ has cubic coefficient $c$. Then 	
	\begin{align} 
	\kappa(f,\scrT_\ns)&=\frac{4(d-2)}{d}+\frac{4}{d^3}
	\Re\bigl[\eta_3(3c)G^5(\eta_3)\bigr] \label{eq:kappafTsum},\\
	\kappa(f,\scrT_\ns,2)&=\frac{12(d-2)}{d^2}+\frac{16}{d^4}
	\Re\bigl[\eta_3(3c)G^5(\eta_3)\bigr]+\frac{4}{d^2}
	\Re\bigl[\eta_3^2(3c)G(\eta_3)\bigr]
	\label{eq:kappaf2Tsum}, \\
	\kappa(f,\scrT_\ns,3)&=\frac{40(d-2)}{d^3}+\frac{60}{d^5}
	\Re\bigl[\eta_3(3c)G^5(\eta_3)\bigr]+\frac{24}{d^3}
	\Re\bigl[\eta_3^2(3c)G(\eta_3)\bigr]+\frac{4(d-2)}{d^6}\Re[G^6(\eta_3)]
	\label{eq:kappaf3Tsum},\\
	\frac{9}{5}\leq \kappa(f,\scrT_\ns)&\leq \frac{9}{2}, \qquad 
	\frac{5}{d}\leq \kappa(f,\scrT_\ns,2)\leq \min\left\{2,\frac{15}{d}\right\}, \qquad 
	\frac{18}{d^2}\leq \kappa(f,\scrT_\ns,3)\leq \min\left\{1, \frac{52}{d^2}\right\}.  \label{eq:kappafTsumLUB1}
	\end{align}
	If in addition $d\geq 100$, then 
	\begin{equation}\label{eq:kappafTsumLUB2}
	\frac{7}{2}\leq \kappa(f,\scrT_\ns)\leq \frac{22}{5},\qquad 
	\frac{39}{4d}\leq \kappa(f,\scrT_\ns,2)\leq \frac{14}{d},\qquad 
	\frac{26}{d^2}\leq \kappa(f,\scrT_\ns,3)\leq \frac{52}{d^2}.
	\end{equation}
\end{lemma}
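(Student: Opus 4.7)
}
My plan is to reduce the three identities to character-orthogonality sums via \pref{pro:kahkaMagicd1}, and then derive the bounds from the resulting Gauss-sum identities. By \pref{pro:kahkaMagicd1} applied with $k=1$ and $|\Psi\>=|\psi_f\>$, I can write
\begin{equation*}
\kappa(f,\scrT_\ns,m)=\sum_{j=0}^{2}\mu_j\kappa_j^m(f),\qquad m=1,2,3,
\end{equation*}
with $\mu_j$ given by \eref{eq:mujFormula} and $\kappa_j(f)$ by \eref{eq:kappaMagid1j}. Introducing the shorthands $A_j=\eta_3(3\nu^j c)G^2(\eta_3)$ and $B_j=\eta_3^2(\nu^j)G^3(\eta_3)$, I will rewrite
\begin{equation*}
\mu_j=\frac{2(d-2)}{3}+\frac{2}{3d}(B_j+B_j^*),\qquad \kappa_j(f)=\frac{2}{d}+\frac{1}{d^2}(A_j+A_j^*),
\end{equation*}
so that each $\mu_j\kappa_j^m(f)$ becomes a polynomial in $A_j,A_j^*,B_j,B_j^*$. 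The $j$-sum of any monomial $A_j^{p_1}A_j^{*p_2}B_j^{q_1}B_j^{*q_2}$ factors as $\eta_3^{p_1-p_2}(3c)G^{2(p_1+p_2)+3(q_1+q_2)}(\eta_3)\bigl[G(\eta_3)/G^*(\eta_3)\bigr]^{\text{sign}}\cdot\sum_{j}\omega_3^{(p_1-p_2+2q_1-2q_2)j}$, so that only the monomials with $p_1-p_2+2q_1-2q_2\equiv 0\pmod 3$ survive, each contributing a factor of $3$.

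Next, I will simplify the surviving Gauss-sum factors using $A_jA_j^*=|G(\eta_3)|^{4}=d^2$, $G(\eta_3)G^*(\eta_3)=d$, and $\eta_3^*=\eta_3^2$ together with $\eta_3^3=1$. For $m=1$, the only nonconstant survivor is $B_jA_j+B_j^*A_j^*$, which yields the term $\tfrac{4}{d^3}\Re[\eta_3(3c)G^5(\eta_3)]$ of \eref{eq:kappafTsum}. For $m=2$, the pair $(p_1,p_2,q_1,q_2)=(0,2,1,0)$ (and its conjugate) survives and, using $G^3(G^*)^4=d^3G^*$, contributes the new $\Re[\eta_3^2(3c)G(\eta_3)]$ term of \eref{eq:kappaf2Tsum}, on top of an additional $\Re[\eta_3(3c)G^5(\eta_3)]$ contribution from $(2,1,1,0)$. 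For $m=3$, the triple $A_j^3+A_j^{*3}$ produces a \emph{constant} ($j$-independent) piece $2\Re G^6(\eta_3)$ via $\eta_3^3=1$, and multiplication by the constant part of $\mu_j$ yields the $\Re[G^6(\eta_3)]$ term in \eref{eq:kappaf3Tsum}; the other two coefficients come from $(2,1,1,0)$ and $(0,2,1,0)$ with appropriately enlarged multinomial factors. Careful bookkeeping of these multinomial coefficients is what pins down the numerical prefactors $\tfrac{60}{d^5}$, $\tfrac{24}{d^3}$, and $\tfrac{4(d-2)}{d^6}$.

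For the bounds in \eqsref{eq:kappafTsumLUB1}{eq:kappafTsumLUB2}, I plug $|G(\eta_3)|=\sqrt d$ and $|\eta_3(\cdot)|=1$ into the explicit identities to get
\begin{equation*}
|\Re[\eta_3(3c)G^5(\eta_3)]|\leq d^{5/2},\qquad |\Re[\eta_3^2(3c)G(\eta_3)]|\leq\sqrt d,\qquad |\Re G^6(\eta_3)|\leq d^3,
\end{equation*}
which immediately gives the asymptotic forms. For $d\geq 100$ these suffice to produce the refined bounds of \eref{eq:kappafTsumLUB2}. The main obstacle, and the delicate part, is tightening the small-$d$ cases (notably $d=7,13,19,31$) so that the uniform bounds $\tfrac{9}{5}$, $\tfrac{5}{d}$, $\tfrac{18}{d^2}$ and the upper bounds $\tfrac{9}{2}$, $\min\{2,15/d\}$, $\min\{1,52/d^2\}$ actually hold: the naive $|G|=\sqrt d$ estimate is too loose on the lower side. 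To close this gap I will exploit the algebraic constraints of \lref{lem:kappaOrder}, namely $\sum_j\kappa_j^\uparrow=6/d$, $\sum_{i<j}\kappa_i^\uparrow\kappa_j^\uparrow=9/d^2$ and $\prod_j\kappa_j^\uparrow=\tg(d)^2/d^4$ with $1\leq\tg(d)^2\leq 4d-27$, combined with the analogous constraint $\sum_j\mu_j=2(d-2)$ and $|\mu_j-\tfrac{2(d-2)}{3}|\leq\tfrac{4\sqrt d}{3}$ from \lref{lem:muj} (and the sharper bound $\mu_j\leq d-1$ for $d=7,13$ in \eref{eq:mujLUB}). Treating $\kappa(f,\scrT_\ns,m)=\sum_j\mu_j\kappa_j^m$ as a constrained linear/multilinear extremum problem in the $\mu_j$'s then reduces the residual small-$d$ verification to a finite check, which I will carry out case-by-case using \tref{tab:tgd}.
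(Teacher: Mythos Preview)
Your proposal is correct and follows essentially the same route as the paper: write $\kappa(f,\scrT_\ns,m)=\sum_j\mu_j\kappa_j^m(f)$ via \pref{pro:kahkaMagicd1}, expand using the decompositions of $\mu_j$ and $\kappa_j$ in terms of cubic characters times powers of $G(\eta_3)$, and collapse the $j$-sum by character orthogonality (the paper phrases this as sums over $\omega_3^j$ with $j$-independent $\alpha=G^3(\eta_3)$ and $\beta=\eta_3(3c)G^2(\eta_3)$, but this is the same computation as yours with $A_j,B_j$). For the bounds, the paper likewise uses $|G(\eta_3)|=\sqrt d$ to handle $d\geq 100$ and then simply states that the remaining finitely many primes are verified by direct calculation; your constrained-extremum framing via \lref{lem:kappaOrder} and \lref{lem:muj} is more structured than necessary but would also work.
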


The upper bounds in  \eqsref{eq:kappafTsumLUB1}{eq:kappafTsumLUB2} are still applicable if  $\kappa(f,\scrT_\ns,m)$ is replaced by $\kappa(\Psi,\scrT_\ns)$ with $|\Psi\>\in \scrM_{n,k}^\id(d)$ and $k\geq m$  thanks to \pref{pro:kappaSym} and \lref{lem:kappaMagicOned1}, which means
\begin{align}\label{eq:kappaPsiProd2}
\kappa(\Psi,\caT)\leq \min\{\kappa(\Psi_1,\caT),\; \kappa(\Psi_2,\caT)\}\;\; \forall  \caT\in \Sigma(d),\quad   \kappa(\Psi,\scrT)\leq \min\{\kappa(\Psi_1,\scrT),\; \kappa(\Psi_2,\scrT)\}\;\; \forall \scrT\subseteq \Sigma(d)
\end{align}
whenever $|\Psi\>=|\Psi_1\>\otimes |\Psi_2\>$.

\begin{lemma}\label{lem:CubicState12Sum}
	Suppose $d$ is an odd prime, $d=1\mmod 3$,  $f_1, f_2\in \tscrP_3(\bbF_d)$ have cubic coefficients $c_1$ and $c_2$, respectively, and $|\Psi\>\in \scrM_{n,2}(d)$ has the form $|\Psi\>=|\psi_{f_1}\>\otimes|\psi_{f_2}\>\otimes |0\>^{\otimes (n-2)}$. Then 
	\begin{gather} 
	\kappa(\Psi,\scrT_\ns)=\frac{6(d-2)}{d^2}+\frac{8}{d^4}\left\{
	\Re\bigl[\eta_3(3c_1)G^5(\eta_3)\bigr]+\Re\bigl[\eta_3(3c_2)G^5(\eta_3)\bigr]\right\}+\frac{4}{d^2}
	\Re\bigl[\eta_3(9c_1c_2)G(\eta_3)\bigr]
	\label{eq:kappaf12Tsum}. 
	\end{gather}
	If in addition $|\Psi\>\in \scrM_{n,2}(d)\setminus \scrM_{n,2}^\id(d)$, which means $\eta_3(c_1)\neq \eta_3(c_2)$,	
	then 
	\begin{gather}
	\frac{6(d-2)}{d^2}-\frac{12\sqrt{d}}{d^2}\leq 
	\kappa(\Psi,\scrT_\ns)\leq 
	\frac{6(d-2)}{d^2}+\frac{12\sqrt{d}}{d^2}
	,\quad 
	\kappa(\Psi,\scrT_\ns)\leq 
	\frac{23}{3d}.  \label{eq:kappaf12TsumLUB}
	\end{gather}
\end{lemma}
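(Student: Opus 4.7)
The plan is to reduce everything to the single-qudit analysis of \lref{lem:kappaMagicOned1} and the counting lemma \lref{lem:muj}. First, since $|\Psi\>$ is a product state and $\caT\in\scrT_\ns$, the multiplicativity in \eref{eq:kappaMagicDecomd1} gives
\begin{equation*}
\kappa(\Psi,\caT)=\kappa(f_1,\caT)\kappa(f_2,\caT)=\left[\frac{2}{d}+\frac{2}{d^2}\Re\bigl(\eta_3(ac_1)G^2(\eta_3)\bigr)\right]\left[\frac{2}{d}+\frac{2}{d^2}\Re\bigl(\eta_3(ac_2)G^2(\eta_3)\bigr)\right],
\end{equation*}
where $a=v_1^3+v_2^3+v_3^3$ for any characteristic vector $\bfv$ of $\caT$, so that $\eta_3(a)=\eta_3(\caT)$. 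The idea is then to sum over $\caT\in\scrT_\ns$ by partitioning according to $\eta_3(\caT)=\eta_3(3\nu^j)$ for $j=0,1,2$, using the enumeration $|\{\caT\in\scrT_\ns:\eta_3(\caT)=\eta_3(3\nu^j)\}|=\mu_j$ from \eref{eq:mujFormula}.

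After expansion one needs to evaluate three basic sums: $\sum_j\mu_j$, $\sum_j\mu_j\Re[\eta_3(\nu^j)z]$ for fixed complex $z$, and $\sum_j\mu_j\Re[u\eta_3^2(\nu^j)]$ for fixed $u$ (plus the pure cross term that is independent of $j$). Substituting the explicit form of $\mu_j$ and using the identity $\Re(\alpha)\Re(\beta)=\tfrac12\Re(\alpha\beta)+\tfrac12\Re(\alpha\bar\beta)$, the sums collapse via two orthogonality relations: $\sum_{j=0}^{2}\eta_3(\nu^j)=0$ and $\sum_{j=0}^{2}\eta_3^2(\nu^j)=0$, together with $G(\eta_3)G^*(\eta_3)=d$ (and so $G^k(\eta_3)\bar G^\ell(\eta_3)=d^{\min(k,\ell)}G^{k-\ell}(\eta_3)$ for $k\ge \ell$). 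The linear term in $\Re[\eta_3(ac_i)G^2(\eta_3)]$ collapses to $\frac{2}{d}\Re[\eta_3(3c_i)G^5(\eta_3)]$, the cross term $\Re[w_1]\Re[w_2]$ produces one $j$-independent part proportional to $d^2\Re[\eta_3(c_1/c_2)]$ and one part proportional to $d^2\Re[\eta_3(9c_1c_2)G(\eta_3)]$, and in the non-uniform hypothesis $\Re[\eta_3(c_1/c_2)]=-\tfrac12$. Assembling everything gives exactly the formula \eref{eq:kappaf12Tsum}.

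For the interval bound in \eref{eq:kappaf12TsumLUB} the key observation is that in the non-uniform case $\eta_3(3c_1)$ and $\eta_3(3c_2)$ are distinct third roots of unity multiplied by the common factor $\eta_3(3)$, so $|\eta_3(3c_1)+\eta_3(3c_2)|=1$. Combined with $|G^5(\eta_3)|=d^{5/2}$ and $|G(\eta_3)|=\sqrt{d}$, this yields
\begin{equation*}
\left|\kappa(\Psi,\scrT_\ns)-\frac{6(d-2)}{d^2}\right|\leq \frac{8}{d^4}\cdot d^{5/2}+\frac{4}{d^2}\cdot\sqrt{d}=\frac{12\sqrt{d}}{d^2},
\end{equation*}
which is precisely the two-sided bound claimed.

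The main obstacle is the sharper bound $\kappa(\Psi,\scrT_\ns)\leq 23/(3d)$. A direct comparison shows that the first bound implies this only for $d$ sufficiently large (roughly $d\geq 49$), while for the small primes $d\in\{7,13,19,31,37,43\}$ it is strictly weaker. My approach is to handle the two regimes separately: for $d\geq 49$ the inequality $\frac{6(d-2)+12\sqrt{d}}{d^2}\leq \frac{23}{3d}$ reduces to $36\sqrt{d}-5d\leq 36$, which I verify by monotonicity of the left-hand side past its maximum near $d=13$. For the finitely many remaining primes, the claim is verified by direct substitution: writing $\kappa(\Psi,\scrT_\ns)=\frac{1}{d^2}\sum_{j=0}^{2}\mu_j A_jA_{j+1}$ with $A_j:=d\kappa_j(f_1)$ (so $A_j\in\{dg^2(3,\nu^j)/d^2\}$ by \lref{lem:kappaMagicOned1}) and using the explicit values of $\tg(d)$ tabulated in \tref{tab:tgd} together with \lref{lem:GaussSum3LUB} to pin down $A_0,A_1,A_2$ and hence the three products $A_jA_{j+1}$, the maximum over the two non-uniform shifts and over the three permutations of $(\mu_0,\mu_1,\mu_2)$ is bounded below $23/(3d)$ in each case. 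This case-check is the only non-automated step in the argument.
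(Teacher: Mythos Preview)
Your approach is correct and essentially the same as the paper's: expand $\kappa(f_1,\caT)\kappa(f_2,\caT)$ via \lref{lem:kappaMagicOned1}, sum over $\scrT_\ns$, and collapse the character sums using Gauss/Jacobi identities. The paper organizes the sum via the explicit parametrization $\{\caT_{\bfv_y}\}_{y=1}^{d-2}$ and invokes $\sum_y\eta_3(y^2+y)=J(\eta_3,\eta_3)$ from \lref{lem:eta3}, whereas you partition by $\eta_3(\caT)$ and use the $\mu_j$ from \lref{lem:muj}; since the $\mu_j$ formula itself encodes the same Jacobi-sum identity, the two routes are equivalent. Your observation that $\Re[\eta_3(c_1/c_2)]=-\tfrac12$ is needed to reach \eref{eq:kappaf12Tsum} is in fact sharper than the paper: the general expansion has constant term $8(d-2)/d^2$ and an extra $\tfrac{4(d-2)}{d^2}\Re[\eta_3(c_1/c_2)]$, which collapses to $6(d-2)/d^2$ precisely when $\eta_3(c_1)\neq\eta_3(c_2)$ (compare with \eref{eq:kappaf2Tsum} in the uniform case). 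For the bounds the two proofs coincide: the $\pm 12\sqrt d/d^2$ estimate via $|G(\eta_3)|=\sqrt d$ and $|\eta_3(3c_1)+\eta_3(3c_2)|=1$, and the $23/(3d)$ inequality by a large-$d$ estimate plus a finite check (the paper uses cutoff $d\geq 100$, you use $d\geq 49$; either works since the reduction $36\sqrt d-5d\leq 36$ already holds for $d\geq 37$).
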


\begin{lemma}\label{lem:kappaSigUBmagic}
	Suppose $d$ is an odd prime, $d=1\mmod 3$, and $|\Psi\>\in \scrM_{n,k}(d)$.  Then 
	\begin{align}\label{eq:kappaSigUBmagic}
	\frac{d^k\kappa(\Psi,\Sigma(d))}{d^k+d}\leq \begin{cases}
	2  & k=0, \\
	21/4  & k=1, \\
	80/13 & k\geq 2 \mbox{ and }  d\geq 59,\\
	7 & \mbox{otherwise}.
	\end{cases} 
	\end{align}
\end{lemma}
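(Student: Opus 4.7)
The plan is to write $\kappa(\Psi,\Sigma(d))=6+\kappa(\Psi,\scrT_\ns)$ via \eref{eq:kappaSigNSscrT} and bound $\kappa(\Psi,\scrT_\ns)$ in each regime. The cases $k=0,1$ are immediate from earlier results: for $k=0$ the state is a stabilizer state and \pref{pro:kappaSym} gives $\kappa(\Psi,\caT)=1$ for all $\caT$, so $\kappa(\Psi,\Sigma(d))=|\Sigma(d)|=2d+2$ and the ratio equals exactly $2$; for $k=1$ we have $|\Psi\>=|\psi_f\>\otimes |0\>^{\otimes(n-1)}$ for some $f\in\tscrP_3(\bbF_d)$, and \lref{lem:CubicStateSum} yields $\kappa(\Psi,\scrT_\ns)=\kappa(f,\scrT_\ns)\leq 9/2$, giving a ratio of at most $d(6+9/2)/(d+d)=21/4$.

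The heart of the argument is the case $k\geq 2$. Writing $|\Psi\>$ in the form \eref{eq:MagicStatenk}, the product formula \eref{eq:kappaMagicDecomd1} gives $\kappa(\Psi,\caT)=\prod_{i=1}^k \kappa(f_i,\caT)$ for $\caT\in\scrT_\ns$, so \lref{lem:kappaMagicOned1}'s bound $\kappa(f_i,\caT)\leq (4d-3)/d^2$ lets me peel off $k-2$ factors to obtain
\begin{equation}
\kappa(\Psi,\scrT_\ns)\leq \left(\frac{4d-3}{d^2}\right)^{k-2}\sum_{\caT\in\scrT_\ns}\kappa(f_1,\caT)\kappa(f_2,\caT).
\end{equation}
The residual two-factor sum equals $\kappa(|\psi_{f_1}\>\otimes|\psi_{f_2}\>\otimes|0\>^{\otimes(n-2)},\scrT_\ns)$. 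When $\eta_3(f_1)=\eta_3(f_2)$, \lref{lem:kappaMagicOned1} makes the two factors pointwise equal and \lref{lem:CubicStateSum} bounds the sum by $\min\{2,15/d\}$; when $\eta_3(f_1)\neq \eta_3(f_2)$, \lref{lem:CubicState12Sum} bounds it by $23/(3d)\leq 15/d$. Either way the sum is at most $\min\{2,15/d\}$, so $\kappa(\Psi,\scrT_\ns)\leq B$ with $B:=((4d-3)/d^2)^{k-2}\min\{2,15/d\}$.

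Plugging this into $d^k\kappa(\Psi,\Sigma(d))/(d^k+d)=(6+\kappa(\Psi,\scrT_\ns))/(1+d^{1-k})$ reduces the lemma to two numerical checks. For $d\geq 59$ and $k\geq 2$, the bound $80/13$ is equivalent to $(B-2/13)d^{k-1}\leq 80/13$. For $k=2$ and $d\geq 8$ one has $d^{k-1}B=15$, reducing this to $195-2d\leq 80$, i.e., $d\geq 57.5$; for $k\geq 3$ the peeling factor makes $d^{k-1}B\leq 15\cdot 4^{k-2}$, while the negative term $-2d^{k-1}/13$ grows without bound in $d$, leaving ample margin. For $d\leq 53$ and $k\geq 2$, the bound $7$ is equivalent to $B\leq 1+7d^{1-k}$: the binding instance is $d=7,k=2$, where $B=\min\{2,15/7\}=2=1+1$ saturates exactly; for $d\geq 13$ and $k=2$ the inequality $B=15/d\leq 1+7/d$ reduces to $d\geq 8$; for $k\geq 3$ the shrinking peeling factor drives $B$ below $1$ rapidly.

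The main obstacle is that the simple bound $\kappa(\Psi,\scrT_\ns)\leq \min\{2,15/d\}$ (from the two-factor sum alone) is insufficient at $d=7$ once $k\geq 3$: it would yield a ratio of $343\cdot 8/350=7.84>7$ already at $k=3$. One must exploit the peeling factor $(4d-3)/d^2=25/49$ at $d=7$ to bring the ratio down to $343(6+50/49)/350\approx 6.88<7$. Consequently the full expression $B=((4d-3)/d^2)^{k-2}\min\{2,15/d\}$, carrying both the two-factor sum bound and the per-factor peeling bound, must be propagated through the verification rather than using either piece in isolation; keeping track of this interplay, especially at the $d=7$ boundary, is the delicate part of the argument.
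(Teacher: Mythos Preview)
Your proof is correct and follows essentially the same strategy as the paper: decompose $\kappa(\Psi,\Sigma(d))=6+\kappa(\Psi,\scrT_\ns)$, then bound $\kappa(\Psi,\scrT_\ns)$ by peeling off per-factor bounds from \lref{lem:kappaMagicOned1} and controlling the residual sum via \lsref{lem:CubicStateSum} and~\ref{lem:CubicState12Sum}. The tactical differences are minor but worth noting. The paper splits $k\geq 2$ into two subcases: for $k=2$ it uses $\kappa(\scrT_\ns)\leq 15/d$ (valid for $d\geq 8$) and invokes direct numerical verification at $d=7$; for $k\geq 3$ it peels down to a three-factor residual and uses $\kappa(f,\scrT_\ns,3)\leq 52/d^2$. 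You instead peel uniformly to a two-factor residual for all $k\geq 2$, retain the sharper peeling factor $(4d-3)/d^2$ rather than $4/d$, and exploit the $\min\{2,15/d\}$ form of the bound in \eref{eq:kappafTsumLUB1}. This lets you handle $d=7,k=2$ analytically (where your bound chain saturates exactly at $7$) rather than by computation, and gives a uniform treatment across $k\geq 2$. Both routes are equally valid; yours is slightly more self-contained.
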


\begin{lemma}\label{lem:kappaNsSigLBd1}
	Suppose $d$ is an odd prime, $d=1\mmod 3$, and $|\Psi\>\in \scrM^\id_{n,k}(d)$; let $j=3$ if $n=k=1$ and $d=7$ and let $j=2$ otherwise. Then  
	\begin{align}\label{eq:kappaNsSigLBAux}
	\kappa^2(\Psi,\Sigma(d))\geq \frac{6(D+d)}{(D+j)}\kappa(\Psi,\Sigma(d),2),\quad 	d^k\kappa(\Psi,\scrT_\ns)\geq \begin{cases}
	2(d-2) & k=0,1,\\
	3\times 2^{k-1}(d-2) &k\geq 2.
	\end{cases}
	\end{align}	 
\end{lemma}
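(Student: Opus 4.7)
The plan is to establish the two inequalities in order, since the lower bound on $d^k\kappa(\Psi,\scrT_\ns)$ serves as a crucial input into the variance-type bound. The structural foundation is multiplicativity: by \eref{eq:kappaMagicDecomd1} and the fact (from \lref{lem:kappaMagicOned1}) that $\kappa_j(f)$ depends only on the cubic character of the coefficient of $f$, for any $|\Psi\>\in\scrM^\id_{n,k}(d)$ and any single-qudit magic state $|\psi_f\>$ sharing that common character, one has $\kappa(\Psi,\caT)=[\kappa(f,\caT)]^k$. Consequently
\[
\kappa(\Psi,\scrT_\ns)=\sum_{m=0}^{2}\mu_m\,\kappa_m(f)^k,\qquad \kappa(\Psi,\Sigma(d),2)=6+\sum_{m=0}^{2}\mu_m\,\kappa_m(f)^{2k},
\]
reducing both claims to weighted power-sum estimates over the three Gauss-sum values $\kappa_m(f)\in(0,(4d-3)/d^2)$ with multiplicities $\mu_m$ given by \eref{eq:mujFormula}.

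For the lower-bound inequality the case $k=0$ is immediate. For $k=1$, I would apply \lref{lem:CubicStateSum}: the universal bound $\kappa(f,\scrT_\ns)\geq 9/5$ handles small $d$, and the Gauss-sum formula \eref{eq:kappafTsum} together with $|G^5(\eta_3)|=d^{5/2}$ yields $\kappa(f,\scrT_\ns)\geq 4(d-2)/d-4/\sqrt d$, which handles the tail. For $k=2,3$, the identity $\kappa(\Psi,\scrT_\ns)=\kappa(f,\scrT_\ns,k)$ together with the explicit lower bounds $\kappa(f,\scrT_\ns,2)\geq 5/d$ and $\kappa(f,\scrT_\ns,3)\geq 18/d^2$ of \lref{lem:CubicStateSum} suffices. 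For $k\geq 4$, I would use the power-mean inequality $\sum_m(d\kappa_m)^k\geq 3\cdot 2^k$ (a consequence of $\sum_m d\kappa_m=6$) together with the near-equidistribution of the $\mu_m$ around $2(d-2)/3$ (within $4\sqrt d/3$, by \lref{lem:muj}); the small primes $d=7,13,19$ where this general bound would leave a gap can be checked directly from the Gauss-sum values in \tref{tab:tgd}.

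For the variance-type inequality, the key ingredient is the pointwise bound $\kappa(\Psi,\caT)\leq((4d-3)/d^2)^k=:M_k$ for $\caT\in\scrT_\ns$ (from \lref{lem:kappaMagicOned1} and multiplicativity), which yields $\kappa(\Psi,\caT)^2\leq M_k\kappa(\Psi,\caT)$ pointwise and hence $\kappa(\Psi,\Sigma(d),2)\leq 6+M_k[\kappa(\Psi,\Sigma(d))-6]$. Substituting this into the claim reduces it to the quadratic condition
\[
(D+j)(6+x)^2\;\geq\;6(D+d)(6+M_k x),\qquad x:=\kappa(\Psi,\scrT_\ns),
\]
whose larger root in $x$ is explicit and, thanks to the lower bound on $x$ proved in the previous step, is dominated by $x$. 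The hard part will be the boundary case $n=k=1$, $d=7$: there $D=d=7$ and $M_1=25/49$ leave almost no slack, and the quadratic with $j=2$ fails to hold. Weakening to $j=3$ exactly compensates, and the revised quadratic can be verified using $x\geq 9/5$ from \lref{lem:CubicStateSum}; every other $(n,k,d)$ tolerates $j=2$ because either $D\geq d^2$ supplies the needed slack or the lower bound on $x$ grows with $d$ faster than the threshold.
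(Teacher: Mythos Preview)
Your proposal has a concrete gap at $k=2$. The universal bound $\kappa(f,\scrT_\ns,2)\geq 5/d$ from \eref{eq:kappafTsumLUB1} gives $d^2\kappa(\Psi,\scrT_\ns)\geq 5d$, but the target is $6(d-2)$, and $5d\geq 6(d-2)$ fails for every prime $d\geq 13$. The paper fixes this by splitting: for $d\geq 100$ it uses the sharper bound $\kappa(f,\scrT_\ns,2)\geq 39/(4d)$ from \eref{eq:kappafTsumLUB2}, which gives $39d/4\geq 6(d-2)$ for all $d$; for $d<100$ it verifies the inequality directly. Your $k\geq 4$ route also needs repair: bounding $\sum_m\mu_m\kappa_m^k$ via $(\min_m\mu_m)\sum_m\kappa_m^k$ and the equidistribution estimate $\mu_m\geq \tfrac{2(d-2)}{3}-\tfrac{4\sqrt d}{3}$ yields the target only once $d-2\geq 8\sqrt d$, i.e.\ $d\gtrsim 68$, so the direct checks must cover $d\in\{7,13,19,31,37,43,61,67\}$, not just $\{7,13,19\}$. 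The paper avoids this entirely by applying the power-mean inequality between exponents $k$ and $3$ directly on the $2(d-2)$-term sum $\sum_{\caT\in\scrT_\ns}\kappa(f,\caT)^k$, using $\kappa(f,\scrT_\ns,3)\geq 18/d^2$; since $9d/(d-2)>8$, this works uniformly for all $k\geq 3$ and all $d$.

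For the first inequality your quadratic reduction via $M_k$ is correct but unnecessarily elaborate. The paper's argument is a one-liner in the main case $k\geq 2$ or $n\geq k+1$: since $0\leq\kappa(\Psi,\caT)\leq 1$ one has $\kappa(\Psi,\Sigma(d),2)\leq\kappa(\Psi,\Sigma(d))$, and the second inequality just proved yields $\kappa(\Psi,\Sigma(d))\geq 6(D+d)/(D+2)$, whence $\kappa^2(\Psi,\Sigma(d))\geq\tfrac{6(D+d)}{D+2}\kappa(\Psi,\Sigma(d),2)$ immediately. The pointwise bound $\kappa(\caT)\leq M_k$ is never invoked. Only the residual case $n=k=1$ requires separate treatment, and there the paper uses the explicit bounds from \lref{lem:CubicStateSum} for $d\geq 100$ together with direct calculation for $7\leq d<100$; your observation that $d=7$ forces $j=3$ is consistent with what that direct calculation reveals.
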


Next, we consider $\hka(\Psi,\caT)$ with $|\Psi\>\in \scrM_{n,k}(d)$ and related functions, which may depend on $n$ and $D$ in addition to $d$ and $k$. 
\begin{lemma}\label{lem:hkaTLUBd1}
	Suppose $d$ is an odd prime, $d=1\mmod 3$, and	$|\Psi\>\in \scrM_{n,k}(d)$. Then
	\begin{gather}
	\hka(\Psi,\caT)\geq \frac{D+2}{D-1} \left[\frac{1}{9^kd^{2k}}-\frac{\kappa(\Psi,\Sigma(d))}{2(D+d)}\right]\quad \forall \caT\in \Sigma(d). \label{eq:hkaTLBd1}
	\end{gather}
	If in addition $n\geq [2+(\log_9d)^{-1}]k+(\log_6d)^{-1}$, then $0\leq \hka(\Psi,\caT)\leq \kappa(\Psi,\caT)$ for all $\caT\in \Sigma(d)$ and
	\begin{align}\label{eq:hkad1}
	|\hka|(\Psi,\Sigma(d))= \hka(\Psi,\Sigma(d))=\frac{D+2}{D+d}\kappa(\Psi,\Sigma(d)), \quad |\hka|(\Psi,\scrT_\ns)=\hka(\Psi,\scrT_\ns)\leq \frac{D+2}{D+d}\kappa(\scrT_\ns).
	\end{align} 
\end{lemma}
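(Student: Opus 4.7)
The proof plan is to exploit the explicit expansion of $\hka$ in terms of $\kappa$ provided by \pref{pro:hkaka}, namely $\hka(\Psi,\caT)=\frac{D+2}{D-1}\bigl[\kappa(\Psi,\caT)-\frac{\kappa(\Psi,\Sigma(d))}{2(D+d)}\bigr]$, and then to control each term using the quantitative bounds established for magic states earlier in Sec.~\ref{sec:MagicOrbitd1}. The first inequality \eqref{eq:hkaTLBd1} is essentially immediate: by \lref{lem:kappaMagicLUBd1} we have $\kappa(\Psi,\caT)\geq 1/(9^k d^{2k})$ for every $\caT\in \scrT_\ns$, and this bound extends trivially to $\scrT_\sym$ since $\kappa(\Psi,\caT)=1 \geq 1/(9^k d^{2k})$ there by \pref{pro:kappaSym}. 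Plugging this uniform lower bound into the formula from \pref{pro:hkaka} yields \eqref{eq:hkaTLBd1} for all $\caT\in \Sigma(d)$.

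For the second part, the heart of the matter is to upgrade \eqref{eq:hkaTLBd1} to the nonnegativity statement $\hka(\Psi,\caT)\geq 0$. This reduces to showing $\frac{1}{9^k d^{2k}}\geq \frac{\kappa(\Psi,\Sigma(d))}{2(D+d)}$, and here I would invoke \lref{lem:kappaSigUBmagic}, which bounds $\kappa(\Psi,\Sigma(d))$ by at most $7(d^k+d)/d^k$ (the coefficient is smaller in all sub-cases: $2$ for $k=0$, $21/4$ for $k=1$, and $80/13$ for $k\geq 2$, $d\geq 59$). After clearing denominators and using $d^k+d\leq 2 d^k$ for $k\geq 1$, the condition reduces to an inequality of the form $D\geq C\cdot 9^k d^{2k}$ for a small absolute constant $C$. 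Taking logarithms base $d$ and verifying that $C\leq 6$ suffices in all cases of \lref{lem:kappaSigUBmagic}, one obtains precisely the hypothesized bound $n\geq 2k+k(\log_9 d)^{-1}+(\log_6 d)^{-1}$. The case $k=0$ requires no hypothesis at all, since then $\kappa(\Psi,\caT)=1$ and $\kappa(\Psi,\Sigma(d))=2(d+1)$ give $\hka(\Psi,\caT)=(D+2)/(D+d)\geq 0$ directly.

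Once $\hka(\Psi,\caT)\geq 0$ is established for all $\caT\in \Sigma(d)$, the conclusions \eqref{eq:hkad1} are essentially a bookkeeping exercise. Summing over $\Sigma(d)$ gives $|\hka|(\Psi,\Sigma(d))=\hka(\Psi,\Sigma(d))$, and the identity $\hka(\Psi,\Sigma(d))=\frac{D+2}{D+d}\kappa(\Psi,\Sigma(d))$ is simply the first equality of \eref{eq:hkakaSig} in \pref{pro:hkaka}. Nonnegativity of $\hka$ on $\scrT_\ns$ yields $|\hka|(\Psi,\scrT_\ns)=\hka(\Psi,\scrT_\ns)$, and the bound $\hka(\Psi,\scrT_\ns)\leq \frac{D+2}{D+d}\kappa(\Psi,\scrT_\ns)$ is then exactly the second inequality of \eref{eq:hkakascrT} in \lref{lem:hkakaNsIneq}. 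Finally, $\hka(\Psi,\caT)\leq \kappa(\Psi,\caT)$ is the equivalence of statements 1 and 5 in \lref{lem:kahakRelation}, and statement 1 ($\hka(\Psi,\scrT_\ns)\geq 0$) follows from the pointwise bound we have just established.

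The main obstacle will be the last step of the second paragraph: carefully matching the four-way case split in \lref{lem:kappaSigUBmagic} against the hypothesis on $n$ to make sure the chosen absolute constant $C=6$ absorbs the worst case ($k\geq 2$, $d<59$ with constant $7$) after accounting for the factor $(d^k+d)/d^k\leq 2$. A slight numerical margin should exist (for instance, the sharp threshold at $k=2$, $d=7$ is $n\geq 7$ whereas the hypothesis gives $n\geq 8$), confirming that the stated growth condition is sufficient although not tight; checking this margin in the smallest cases is the only delicate computation.
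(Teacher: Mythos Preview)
Your proposal is correct and follows essentially the same route as the paper's proof. The paper derives \eqref{eq:hkaTLBd1} from \pref{pro:hkaka} and \eqref{eq:kappaMagicLUBd1} exactly as you do, handles $k=0$ separately, and for $k\geq 1$ bounds $\kappa(\Psi,\Sigma(d))$ to get $\hka\geq 0$, then closes with \lref{lem:kahakRelation} and the identities in \pref{pro:hkaka}/\lref{lem:hkaLUB}. The only difference is that the paper bounds $\kappa(\Psi,\Sigma(d))\leq 11$ directly via \lref{lem:CubicStateSum} (since $\kappa(\Psi,\scrT_\ns)\leq \kappa(f,\scrT_\ns)\leq 9/2$), which makes the inequality $\frac{9^kd^{2k}\kappa(\Sigma(d))}{2(D+d)}\leq \frac{6\cdot 9^kd^{2k}}{d^n}\leq 1$ immediate without any case-checking; your route through \lref{lem:kappaSigUBmagic} together with $d^k+d\leq 2d^k$ gives a constant $7$ rather than $6$, so you would need the tighter estimate $(d^k+d)/d^k\leq 8/7$ for $k\geq 2$ (or simply quote \lref{lem:CubicStateSum}) to avoid the residual numerical check you flag at the end.
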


\begin{lemma}\label{lem:hkaAuxAux}
	Suppose $d$ is an odd prime, $d=1\mmod 3$, and $|\Psi\>\in \scrM_{n,k}^\id(d)$. Then 
	\begin{gather}\label{eq:hkaNsAbsUBaux}  \frac{D-1}{D+2}|\hka|(\Psi,\scrT_\ns)\leq \kappa(\Psi,\scrT_\ns) -\frac{h_d\kappa(\Psi,\Sigma(d))}{D+d},\quad \frac{D-1}{D+2}|\hka|(\Psi,\Sigma(d))\leq \kappa(\Psi,\Sigma(d)) -\frac{(h_d+3)\kappa(\Psi,\Sigma(d))}{D+d},	
	\end{gather}	
	where
	\begin{gather}\label{eq:cd-hkaAuxAux}
	h_d=\begin{cases}
	-1 & d=7, 13,\\
	1 & d=19, \\
	2 & d\geq 37,\\
	3 & d= 31  \mbox{ or }  d\geq 43.  
	\end{cases}
	\end{gather}
\end{lemma}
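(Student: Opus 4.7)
The plan is to reduce both inequalities to a single structural bound on the ``negative deviation'' of $\kappa(\Psi,\caT)$ from the centering constant $c_* := \kappa(\Psi,\Sigma(d))/(2(D+d))$. By Proposition~\ref{pro:hkaka}, $\hka(\Psi,\caT) = \tfrac{D+2}{D-1}[\kappa(\Psi,\caT) - c_*]$, so writing $|\hka(\Psi,\caT)| = \hka(\Psi,\caT) + 2\tfrac{D+2}{D-1}[c_* - \kappa(\Psi,\caT)]_+$ and summing over $\scrT_\ns$ using equation~\eqref{eq:hkaNskappaNs}, the first inequality becomes equivalent to the compact form
\begin{equation}
\sum_{\caT\in \scrT_\ns}\bigl[c_* - \kappa(\Psi,\caT)\bigr]_+ \;\leq\; (d-2-h_d)\,c_*. \notag
\end{equation}
Once this is proved, the second inequality follows at once by combining $|\hka|(\Psi,\Sigma(d)) = 6\hka(\Psi,\Delta) + |\hka|(\Psi,\scrT_\ns)$ with the identity $\tfrac{D-1}{D+2}\cdot 6\hka(\Psi,\Delta) = 6 - 3\kappa(\Psi,\Sigma(d))/(D+d)$ coming from Lemma~\ref{lem:hkaDeltaNS}, which accounts for the extra ``$+3$'' in $h_d+3$.

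For $|\Psi\>\in \scrM^\id_{n,k}(d)$, all single-qudit constituents share one cubic character, so by equation~\eqref{eq:kappaMagicDecomd1} and Lemma~\ref{lem:kappaMagicOned1}, $\kappa(\Psi,\caT)$ depends on $\caT\in \scrT_\ns$ only through the cubic character $\eta_3(\caT)$ and takes exactly three values $\beta_j := (\kappa_{\sigma(j)}^\uparrow)^k$ for $j=0,1,2$, where $\sigma$ is a permutation determined by the common cubic character of $\Psi$. The corresponding multiplicities are the quantities $\mu_0,\mu_1,\mu_2$ of \eqref{eq:mujFormula}, whose upper bounds $\mu_j\le d-1$ (for $d\in\{7,13\}$), $\mu_j\le d-3$ (for $d=19$), $\mu_j\le d-7$ (for $d\ge 31$) from Lemma~\ref{lem:muj} match the prescribed values of $h_d$.

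The proof then proceeds by a case analysis on $J_- := \{j:\beta_j < c_*\}$. When $|J_-|\le 1$ the sum is bounded by $\mu_{j_0}c_*$, and the relevant upper bound on $\mu_{j_0}$ from Lemma~\ref{lem:muj} directly yields $h_d$; the boundary case $k=0$ (stabilizer state) falls into $J_-=\emptyset$ and is trivial. When $|J_-|=3$ we rewrite the sum as $2(d-2)c_* - \kappa(\Psi,\scrT_\ns) = 2(d-2)c_* - [2(D+d)c_* - 6]$ and use $c_*\geq 3/(D+d)$ (since $\kappa(\Psi,\Sigma(d))\geq 6$) to verify the inequality directly for $d\ge 7$. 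The delicate case is $|J_-|=2$, where the residual bound reads
\begin{equation}
\mu_{(2)}\beta_{(2)} \;\geq\; (\mu_{(0)}+\mu_{(1)} - d + 2 + h_d)\,c_*, \notag
\end{equation}
which I would attack using the identity $\sum_j\mu_j\beta_j = \kappa(\Psi,\scrT_\ns) = 2(D+d)c_* - 6$ to trade a surplus on the dominant class for the deficit on the others, combined with the relations in Lemma~\ref{lem:kappaOrder} (notably $\kappa_0^\uparrow+\kappa_1^\uparrow+\kappa_2^\uparrow=6/d$ and $(d\kappa_0^\uparrow)^k+(d\kappa_1^\uparrow)^k>2$), which lower-bound the mass on the large class $\beta_{(2)}$ once the two smaller classes are forced below $c_*$.

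The main obstacle will be the refined constant $h_d=3$ for $d=31$ and $d\ge 43$ (versus the weaker $h_d=2$ universally valid for $d\ge 37$): this requires extracting a slightly sharper lower bound on $\mu_{(2)}$ than $2(d-2) - 2\max_j \mu_j$, using the explicit expression \eqref{eq:mujFormula} and the constraint $\tg(d)^2\in\{1\}\cup[4d-27\,?]$; the exclusion of $d=37$ reflects the exceptional phase of the cubic Gauss sum at that prime, and I expect this step to reduce to verifying a small, finite set of numerical inequalities on Gauss-sum values via the table in Appendix~\ref{app:GaussJacobi}, where most of the technical bookkeeping of the proof will live.
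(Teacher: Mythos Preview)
Your overall architecture matches the paper: center by $c_*=\kappa(\Psi,\Sigma(d))/(2(D+d))$, reduce the first inequality to $\sum_{\caT\in\scrT_\ns}[c_*-\kappa(\Psi,\caT)]_+\le(d-2-h_d)c_*$, derive the second from the first via Lemma~\ref{lem:hkaDeltaNS}, and exploit that $\kappa(\Psi,\caT)$ takes only the three values $(\kappa_j^\uparrow)^k$ with multiplicities $\mu_j$. The $|J_-|\le 1$ case is handled exactly as you say, and the connection to the bounds on $\mu_{\max}$ in Lemma~\ref{lem:muj} is correct. However, two points are off.

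First, your $|J_-|=3$ argument does not work: at $c_*=3/(D+d)$ the inequality $6-(2D+4)c_*\le(d-2-h_d)c_*$ becomes $6(d-2)/(D+d)\le 3(d-2-h_d)/(D+d)$, i.e.\ $d\le 2-h_d$, which is false. The right move is to show this case is vacuous: since $\kappa_2^\uparrow>3/d$ and $c_*\le \kappa(\Psi,\Sigma(d))/(2D)\le 21/(4D)$, one always has $(\kappa_2^\uparrow)^k>c_*$, so at most two classes fall below the center.

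Second, your $|J_-|=2$ residual bound is misstated (it should lower-bound the mass on the two \emph{small} classes, $\mu_{(0)}\beta_{(0)}+\mu_{(1)}\beta_{(1)}\ge(\mu_{(0)}+\mu_{(1)}-d+2+h_d)c_*$, not $\mu_{(2)}\beta_{(2)}$), and more importantly you underestimate its difficulty. The paper's analytic argument---using $(d\kappa_0^\uparrow)^k+(d\kappa_1^\uparrow)^k>2$ together with $d^kc_*\le 40/13$ and the explicit $\mu_j$ formula~\eqref{eq:mujFormula}---only closes for $d>200$. For $7\le d\le 200$ with $n=k$, the paper shows $|J_-|\le 1$ when $k\ge 6$ (by checking $(d\kappa_1^\uparrow)^6\ge 7/2$) and falls back on direct numerical verification for $1\le k\le 5$. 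The exceptional $h_d=2$ at $d=37$ emerges from that finite check, not from a sharper structural bound on the $\mu_j$; your plan to extract it from Gauss-sum phases alone will not succeed without the computation.
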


\begin{lemma}\label{lem:hkaAuxMagicId}
	Suppose $d$ is an odd prime, $d=1\mmod 3$, and $|\Psi\>\in \scrM_{n,k}^\id(d)$. Then 
	\begin{align}\label{eq:hkaAbsd1UB1}
	|\hka|(\Psi,\Sigma(d))\leq \kappa(\Psi,\Sigma(d)) +\frac{16}{D},\quad |\hka|(\Psi,\scrT_\ns)\leq \kappa(\Psi,\scrT_\ns) +\frac{16}{D}.
	\end{align}
	If $d\geq 31$ or  if $d\geq 19$ and $n\geq 2$, then 
	\begin{align}
	|\hka|(\Psi,\Sigma(d))&\leq \kappa(\Psi,\Sigma(d)),\quad |\hka|(\Psi,\scrT_\ns)\leq \kappa(\Psi,\scrT_\ns). \label{eq:hkaAbsd1UB2}
	\end{align}
\end{lemma}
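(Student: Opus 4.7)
\textbf{Proof plan for \lref{lem:hkaAuxMagicId}.} The plan is to deduce the result from the structural inequalities \eref{eq:hkaNsAbsUBaux} in \lref{lem:hkaAuxAux}, combined with the uniform upper bounds on $\kappa(\Psi,\Sigma(d))$ collected in \lref{lem:kappaSigUBmagic}. First I would dispense with the stabilizer case $k=0$: here $\hka(\Psi,\caT)=(D+2)/(D+d)>0$ for every $\caT\in\Sigma(d)$ by \pref{pro:hkaka}, so $|\hka|(\Psi,\Sigma(d))=\hka(\Psi,\Sigma(d))=(D+2)\kappa(\Psi,\Sigma(d))/(D+d)\leq\kappa(\Psi,\Sigma(d))$, and the analogous statement holds on $\scrT_\ns$. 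This already yields the sharp bound \eref{eq:hkaAbsd1UB2} for $k=0$.

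For $k\geq 1$, I would start from the second inequality of \eref{eq:hkaNsAbsUBaux} and rearrange algebraically to obtain
\begin{equation*}
|\hka|(\Psi,\Sigma(d))-\kappa(\Psi,\Sigma(d))\leq \kappa(\Psi,\Sigma(d))\cdot\frac{3(d-2)-h_d(D+2)}{(D-1)(D+d)},
\end{equation*}
with $h_d$ given by \eref{eq:cd-hkaAuxAux}; a parallel manipulation of the first inequality in \eref{eq:hkaNsAbsUBaux} gives the analogous bound for $\scrT_\ns$, since $\kappa(\Psi,\Sigma(d))=\kappa(\Psi,\scrT_\ns)+6$ and $|\hka|(\Psi,\Sigma(d))=|\hka|(\Psi,\scrT_\sym)+|\hka|(\Psi,\scrT_\ns)$ (cf.\ \lref{lem:hkaDeltaNS}). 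For the sharp bound \eref{eq:hkaAbsd1UB2}, I would verify that the numerator $3(d-2)-h_d(D+2)$ is non-positive in the prescribed regime: when $h_d=3$ (which covers $d=31$ and $d\geq 43$), the inequality $h_d(D+2)\geq 3(d-2)$ holds already in the worst case $n=1$, and for $d\geq 19$ with $n\geq 2$ the enlarged denominator factor $D+2\geq d^2+2$ more than compensates for $h_d$ being as small as $1$. The narrow remaining windows $d=37,41$ with $n=1$ require a refinement based on the explicit formulas of \lref{lem:CubicStateSum} and the product rule \eref{eq:kappaPsiProd2} to bound $\kappa(\Psi,\Sigma(d))$ away from the generic ceiling $21/4\cdot(d^k+d)/d^k$, so that the slightly positive coefficient is absorbed.

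For the weaker bound \eref{eq:hkaAbsd1UB1} in the complementary cases $d\in\{7,13\}$ (any $n$) and $d=19$ with $n=1$, I would substitute the appropriate upper bound on $\kappa(\Psi,\Sigma(d))$ from \lref{lem:kappaSigUBmagic}, namely $\kappa(\Psi,\Sigma(d))\leq C_k(d^k+d)/d^k$ with $C_1=21/4$ and $C_k\leq 7$ for $k\geq 2$. Since the factor $[3(d-2)-h_d(D+2)]/[(D-1)(D+d)]$ is of order $1/D$ when $h_d=-1$ or $h_d=1$, and $\kappa(\Psi,\Sigma(d))$ is bounded by an absolute constant once $k\geq 1$ (indeed $\kappa(\Psi,\Sigma(d))\leq 14$), the excess $|\hka|(\Psi,\Sigma(d))-\kappa(\Psi,\Sigma(d))$ is $\mathcal{O}(1/D)$. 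A finite case check over the handful of $(d,n,k)$ triples not already covered by the sharp bound then fixes the absolute constant at $16$.

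The main obstacle will be the delicate borderline cases, especially $d=37,41$ with $n=1$ where the generic consequence of \lref{lem:hkaAuxAux} just barely fails to give $|\hka|\leq\kappa$; treating these will require exploiting the Gauss-sum identities in \lref{lem:CubicStateSum} together with the distribution estimates \eref{eq:mujFormula} for the multiplicities $\mu_j$, so that the three possible values of $\kappa(\Psi,\caT)$ on $\scrT_\ns$ and their relative weights can be balanced. The $\scrT_\ns$ statements then follow by the same reasoning applied to the first inequality of \eref{eq:hkaNsAbsUBaux} and the splitting $\Sigma(d)=\scrT_\sym\sqcup\scrT_\ns$.
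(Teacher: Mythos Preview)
Your overall strategy---reduce to \lref{lem:hkaAuxAux}, handle $k=0$ separately, then split by the value of $h_d$---is exactly the paper's. However, two points in your plan need correction.

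First, $d=41\equiv 2\pmod 3$, so it is not in scope. The only genuinely delicate case for the sharp bound at $n=1$ is $d=37$, and the paper simply verifies it by direct calculation rather than through Gauss-sum refinements.

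Second, the $\scrT_\ns$ inequality does \emph{not} follow from ``the same reasoning'' as the $\Sigma(d)$ inequality. Your rearranged numerator $3(d-2)-h_d(D+2)$ governs only the $\Sigma(d)$ bound; the parallel manipulation of the first inequality in \eref{eq:hkaNsAbsUBaux} yields the condition $3(D+d)\kappa(\scrT_\ns)\leq h_d(D+2)\kappa(\Sigma(d))$, which depends on the ratio $\kappa(\Sigma(d))/\kappa(\scrT_\ns)$ rather than on $3(d-2)-h_d(D+2)$ alone. The paper supplies the missing ingredient: for $k\geq 1$ one has $\kappa(\Sigma(d))\geq 2\kappa(\scrT_\ns)$, and $\geq 4\kappa(\scrT_\ns)$ for $k\geq 2$ (from the bounds $\kappa(\scrT_\ns)\leq 9/2$ and $\leq 2$ in \lref{lem:CubicStateSum}). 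This reduces the $\scrT_\ns$ requirement to $3(D+d)\leq 2h_d(D+2)$, resp.\ $3(D+d)\leq 4h_d(D+2)$. Even so, the sub-case $d=19$, $k=1$, $n\geq 2$ (where $h_d=1$) is not covered by the generic estimate: the paper handles it via \lref{lem:hkaTLUBd1} for $n\geq 4$ and by direct calculation for $n\in\{2,3\}$. Your list of borderline cases omits this entirely.
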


\subsubsection{Proofs of \lsref{lem:CubicStateSum} and \ref{lem:CubicState12Sum}}

\begin{proof}[Proof of \lref{lem:CubicStateSum}]
	Thanks to \pref{pro:kahkaMagicd1} we have
	\begin{align}
	\kappa(f,\scrT_\ns,m)=2\sum_{j=0}^2
	\frac{\left\{d(d-2)+2\Re\bigl(\alpha\omega_3^{2j}\bigr)\right\} \left\{2d+2\Re\bigl(\beta \omega_3^j\bigr)\right\}^m}{3d^{2m+1}}\quad \forall m\in \bbN,
	\end{align}
	where $\alpha=G^3(\eta_3)$ and $\beta=\eta_3(3c)G^2(\eta_3)$.
	Note that $|\eta_3(3c)|=1$ and $|G(\eta_3)|=\sqrt{d}$ by \pref{pro:GaussSum} in \aref{app:GaussJacobi}, which implies the following results,
	\begin{align}
	|\alpha|=d^{3/2},\quad |\beta|=d, \quad \alpha\beta=\eta_3(3c)G^5(\eta_3),\quad  \alpha^*\beta^2=d^3\eta_3^2(3c)G(\eta_3),\quad \beta^3=G^6(\eta^3).
	\end{align}
	In addition, simple calculation yields
	\begin{equation}
	\begin{aligned}
	\sum_{j=0}^2 \bigl[2\Re\bigl(\beta\omega_3^j\bigr)\bigr]^m&=3\sum_{l\in \scrA_m}\binom{m}{l}\beta^l\beta^{*(m-l)}=\begin{cases}
	0 & m=1,\\
	6|\beta|^2 & m=2,\\
	3\beta^3+3\beta^{*3} & m=3,
	\end{cases}\\
	\sum_{j=0}^2 \left\{2\Re\bigl(\alpha\omega_3^{2j}\bigr)\bigl[2\Re\bigl(\beta\omega_3^j\bigr)\bigr]^m\right\}&=6\!\sum_{l\in \scrA_m'}\!\!\binom{m}{l}\Re\bigl(\alpha\beta^l\beta^{*(m-l)}\bigl)
	=\begin{cases}
	6\Re(\alpha\beta) & m=1,\\
	6\Re(\alpha^*\beta^2) & m=2,\\
	18|\beta|^2\Re(\alpha\beta)& m=3,
	\end{cases}
	\end{aligned}
	\end{equation}
	where $\scrA_m=\{l: 0\leq l\leq m,  3|(2l-m)\}$ and $\scrA_m'=\{l: 0\leq l\leq m,  3|(2l-m+2)\}$. 	
	Now Eqs.~\eqref{eq:kappafTsum}-\eqref{eq:kappaf3Tsum} can be derived by combining the above results,
	\begin{align}
	\kappa(f,\scrT_\ns)&=\frac{4d^2(d-2)+4\Re(\alpha\beta)}{d^3}=\frac{4(d-2)}{d}+\frac{4}{d^3}
	\Re\bigl[\eta_3(3c)G^5(\eta_3)\bigr],\\
	\kappa(f,\scrT_\ns,2)&=\frac{12d^3(d-2)+16d\Re(\alpha\beta)+4\Re(\alpha^*\beta^2)}{d^5}\nonumber\\
	&=\frac{12(d-2)}{d^2}+\frac{16}{d^4}
	\Re\bigl[\eta_3(3c)G^5(\eta_3)\bigr]+\frac{4}{d^2}
	\Re\bigl[\eta_3^2(3c)G(\eta_3)\bigr],\\
	\kappa(f,\scrT_\ns,3)&=\frac{8d^2(d-2)(2d^2+3|\beta|^2) +12(4d^2+|\beta|^2)\Re(\alpha\beta)+24d\Re(\alpha^*\beta^2) +4d(d-2)\Re(\beta^3)}{d^7}\nonumber\\
	&=\frac{40(d-2)}{d^3}+\frac{60}{d^5}
	\Re\bigl[\eta_3(3c)G^5(\eta_3)\bigr]+\frac{24}{d^3}
	\Re\bigl[\eta_3^2(3c)G(\eta_3)\bigr]+\frac{4(d-2)}{d^6}\Re\bigl[G^6(\eta_3)\bigr].
	\end{align}

	Next, by virtue of Eqs.~\eqref{eq:kappafTsum}-\eqref{eq:kappaf3Tsum} 
	and the facts  that $|\eta_3(3c)|=1$ and $|G(\eta_3)|=\sqrt{d}$ 	we can derive the following inequalities,
	\begin{equation}
	\begin{gathered} 
	\frac{4(d-2)}{d}-\frac{4\sqrt{d}}{d}
	\leq \kappa(f,\scrT_\ns)\leq \frac{4(d-2)}{d}+\frac{4\sqrt{d}}{d},\\
	\frac{12(d-2)}{d^2}+\frac{20\sqrt{d}}{d^2}\leq 
	\kappa(f,\scrT_\ns,2)\leq 
	\frac{12(d-2)}{d^2}+\frac{20\sqrt{d}}{d^2},\\
	\frac{36(d-2)}{d^3}-\frac{84\sqrt{d}}{d^3}\leq 
	\kappa(f,\scrT_\ns,3)\leq \frac{44(d-2)}{d^3}+\frac{84\sqrt{d}}{d^3}. 
	\end{gathered}
	\end{equation}
	If in addition $d\geq 100$, then these inequalities imply \eref{eq:kappafTsumLUB2}, which in turn implies \eref{eq:kappafTsumLUB1}. If $d<100$, then \eref{eq:kappafTsumLUB1} can be verified by direct calculation. 
\end{proof}

\begin{proof}[Alternative proof of Eqs.~\eqref{eq:kappafTsum}-\eqref{eq:kappaf3Tsum} in \lref{lem:CubicStateSum}]
	By virtue of  \pref{pro:kappaSym}, \eref{eq:TvNS}, and \lref{lem:kappaMagicOned1} we can deduce that
	\begin{align}
	\kappa(f,\scrT_\ns,m)&=2\kappa(f,\scrT_1\cap \scrT_\ns,m)
	=2\sum_{y=1}^{d-2}\kappa^m(f,\caT_{\bfv_y})
	=2\sum_{y=1}^{d-2}
	\Bigl\{\frac{2}{d}+\frac{2}{d^2}\Re\bigl[\eta_3\bigl(-3c\bigl(y+y^2\bigr)\bigr)G^2(\eta_3)\bigr]\Bigr\}^m\nonumber\\
	&=2\sum_{y=1}^{d-2}
	\Bigl\{\frac{2}{d}+\frac{2}{d^2}\Re\bigl[\eta_3(3c)\eta_3\bigl(y+y^2\bigr)G^2(\eta_3)\bigr]
	\Bigr\}^m \quad \forall m\in \bbN.  \label{eq:kappafTsumProof}
	\end{align}
	In addition, thanks to \lref{lem:eta3} in \aref{app:GaussJacobi} and the  identities $G^*(\eta_3)=d/G(\eta_3)$ and  $J(\eta_3, \eta_3)=G^3(\eta_3)/d$ by \pref{pro:GaussSum} and \eref{eq:JacobiGaussCubic},
	we have
	\begin{align}
	\begin{aligned}
	\sum_{y=1}^{d-2}\eta_3\bigl(y+y^2\bigr)G^j(\eta_3)&=\sum_{y\in \bbF_d}\eta_3\bigl(y+y^2\bigr)G^j(\eta_3)=J(\eta_3, \eta_3)G^j(\eta_3)=\frac{G^{j+3}(\eta_3)}{d},\\
	\sum_{y=1}^{d-2}\eta_3^2\bigl(y+y^2\bigr)G^j(\eta_3)&=J^*(\eta_3, \eta_3)G^j(\eta_3)=d^2G^{j-3}(\eta_3).
	\end{aligned}
	\end{align}	
	The above equations together imply Eqs.~\eqref{eq:kappafTsum}-\eqref{eq:kappaf3Tsum} given that $\eta_3$ is a cubic character and $|G(\eta_3)|=\sqrt{d}$.

	Alternatively, \eref{eq:kappafTsum} can  be proved by virtue of  \lref{lem:NfTsum} in \aref{app:CubicEqSLS} as follows  (cf. the proofs of \lsref{lem:kappaMagicOned23} and \ref{lem:kappaMagicOned1}),
	\begin{align}
	\kappa(f,\scrT_\ns)&=\sum_{\caT\in \scrT_\ns}\kappa(f,\caT)=
	\frac{1}{d^3}\sum_{\alpha\in \bbF_d}\omega^{\alpha}\sum_{\caT\in \scrT_\ns} N_\alpha(f,\caT)\nonumber\\
	&=\frac{2(d-2)(3d-2)}{d^2}+
	\frac{1}{d^2}\sum_{\alpha\in \bbF_d^{\times}} \bigl\{2(d-2)^2+4\Re\bigl[\eta_3(3c/\alpha)J^2(\eta_3,\eta_3)\bigr]\bigr\}\omega^\alpha&\nonumber\\
	&=\frac{4(d-2)}{d}+\frac{4}{d^3}\Re\bigl[\eta_3(3c)G^5(\eta_3)\bigr].
	\end{align}	 
	Here the last equality holds because [cf. \eref{eq:kappafTproof2}]
	\begin{align}
	\sum_{\alpha\in \bbF_d^{\times}}\Re\bigl[\eta_3(3c/\alpha)J^2(\eta_3,\eta_3)\bigr]
	&=\Re\bigl[\eta_3(3c)G(\eta_3^2)J^2(\eta_3,\eta_3)\bigr]=\frac{1}{d}\Re\bigl[\eta_3(3c)G^5(\eta_3)\bigr]. 
	\end{align}	
\end{proof}

\begin{proof}[Proof of \lref{lem:CubicState12Sum}]
	Thanks to  \pref{pro:kappaSym} and \eref{eq:TvNS}, we can deduce the following result	[cf. \eref{eq:kappafTsumProof}],
	\begin{align}
	\kappa(\Psi,\scrT_\ns)
	&=2	\kappa(\Psi,\scrT_1\cap \scrT_\ns)=
	2\sum_{y=1}^{d-2}\kappa(\Psi,\caT_{\bfv_y})=2\sum_{y=1}^{d-2}\kappa(f_1,\caT_{\bfv_y})\kappa(f_2,\caT_{\bfv_y}),
	\end{align}	
	where $\bfv_y$ is specified in \eref{eq:vy}, and  $\caT_{\bfv_y}$ is the stochastic Lagrangian subspace determined by the characteristic vector $\bfv_y$ as defined in  \eref{eq:Tv}. 
	In conjunction with \lref{lem:kappaMagicOned1} 	we can deduce that
	\begin{align}	
	\kappa(\Psi,\scrT_\ns)&=2\sum_{y=1}^{d-2}
	\Bigl\{\frac{2}{d}+\frac{2}{d^2}\Re\bigl[\eta_3(3c_1)\eta_3\bigl(y+y^2\bigr)G^2(\eta_3)\bigr]
	\Bigr\} \Bigl\{\frac{2}{d}+\frac{2}{d^2}\Re\bigl[\eta_3(3c_2)\eta_3\bigl(y+y^2\bigr)G^2(\eta_3)\bigr]
	\Bigr\}\nonumber  \\
	&=\frac{6(d-2)}{d^2}+\frac{8}{d^4}\left\{
	\Re\bigl[\eta_3(3c_1)G^5(\eta_3)\bigr]+\Re\bigl[\eta_3(3c_2)G^5(\eta_3)\bigr]\right\}+\frac{4}{d^2}
	\Re\bigl[\eta_3(9c_1c_2)G(\eta_3)\bigr],
	\end{align}	
	which confirms \eref{eq:kappaf12Tsum}. 
	
	Next, suppose $\eta_3(c_1)\neq \eta_3(c_2)$, then the first two inequalities in 	\eref{eq:kappaf12TsumLUB} are a simple corollary of \eref{eq:kappaf12Tsum} given that $\eta_3$ is a cubic character and $|G(\eta_3)|=\sqrt{d}$. If in addition $d\geq 100$, then the third inequality follows from the second  inequality;  if instead $d< 100$, then this inequality can be verified by direct calculation. 	
\end{proof}

\subsubsection{Proofs of \lsref{lem:kappaSigUBmagic} and \ref{lem:kappaNsSigLBd1}}

\begin{proof}[Proof of \lref{lem:kappaSigUBmagic}]
	If $k=0$, then $|\Psi\>$ is a stabilizer state and $\kappa(\Sigma(d))=2(d+1)$, so \eref{eq:kappaSigUBmagic} holds. If $k=1$, then
	\eref{eq:kappaSigUBmagic} follows from the fact that $\kappa(\Sigma(d))=6+\kappa(\scrT_\ns)\leq 21/2$ by  \lref{lem:CubicStateSum}.

	If  $k= 2$, then $\kappa(\Sigma(d))=6+\kappa(\scrT_\ns)\leq 6+15/d$  by \lsref{lem:CubicStateSum} and \ref{lem:CubicState12Sum}, which implies \eref{eq:kappaSigUBmagic}  when $d\geq 8$. When $k=2$ and $d=7$, \eref{eq:kappaSigUBmagic} can be verified by direct calculation.

	If $k\geq 3$, then  $\kappa(\Sigma(d))=6+\kappa(\scrT_\ns)\leq 6+52\times 4^{k-3}/d^{k-1}$ by \lsref{lem:CubicStateSum} and \ref{lem:CubicState12Sum}, which implies \eref{eq:kappaSigUBmagic}.
\end{proof}

\begin{proof}[Proof of \lref{lem:kappaNsSigLBd1}]
	If $k=0$, then $|\Psi\>$ is a stabilizer state, which means	$\kappa(\Psi,\Sigma(d),2)=\kappa(\Psi,\Sigma(d))=2(d+1)$ and $\kappa(\Psi,\scrT_\ns)=2(d-2)$, so \eref{eq:kappaNsSigLBAux} holds. In the rest of this proof we assume that $k\geq 1$.

	Without loss of generality we can assume that $|\Psi\>$ has the form $|\Psi\>=|\psi_f\>^{\otimes k}\otimes |0\>^{\otimes (n-k)}$, where $f\in \tscrP_3(\bbF_d)$; then $\kappa(\Psi,\caT)=\kappa^k(f, \caT)$. 
	If in addition $k\geq 3$, then  
	\begin{align}
	\!\!\kappa(\Psi,\scrT_\ns)=\sum_{\caT\in \scrT_\ns} \kappa^k(f,\caT)\geq 2(d-2)\left[\frac{\sum_{\caT\in \scrT_\ns} \kappa^3(f,\caT)}{2(d-2)}\right]^{k/3}\geq 2(d-2)\left[\frac{18}{2d^2(d-2)}\right]^{k/3}\geq \frac{2\times 9^{k/3}}{d^{k-1}}, 
	\end{align}
	which implies the second inequality in \eref{eq:kappaNsSigLBAux}. Here the first two inequalities hold because $|\scrT_\ns|=2(d-2)$ and $\kappa(f,\scrT_\ns, 3)\geq 18/d^2$ by \lref{lem:CubicStateSum}. If $k=1,2$ and $d\geq 100$, then the second inequality in \eref{eq:kappaNsSigLBAux} holds because $\kappa(f,\scrT_\ns)\geq 7/2$ when $k=1$ and
	$\kappa(f,\scrT_\ns, 2)\geq 39/(4d)$ when $k=2$ by \lref{lem:CubicStateSum}.
	If $k=1,2$ and $d< 100$, then this inequality can be verified by direct calculation.

	If $k\geq 2$ or $n\geq k+1$, then the first  inequality in  \eref{eq:kappaNsSigLBAux}
	holds because $\kappa(\Psi,\Sigma(d))\geq \kappa(\Psi,\Sigma(d),2)$ and
	$\kappa(\Psi,\Sigma(d))\geq 6(D+d)/(D+2)$ thanks to the second inequality proved above.   If $n=k=1$ and $d\geq 100$, then the first inequality in  \eref{eq:kappaNsSigLBAux} can be proved  by virtue of \lref{lem:CubicStateSum} as follows,
	\begin{align}
	\frac{6(D+d)}{D+2}\kappa(\Psi,\Sigma(d),2)\leq \frac{12d}{d+2}\kappa(f,\Sigma(d),2)\leq \frac{12d}{d+2}\left(6+\frac{15}{d}\right)\leq 74\leq \kappa^2(\Psi,\Sigma(d)). 
	\end{align} 
	If $n=k=1$ and $7\leq d< 100$, then the first  inequality in  \eref{eq:kappaNsSigLBAux} holds by direct calculation.
\end{proof}

\subsubsection{Proofs of \lsref{lem:hkaTLUBd1}-\ref{lem:hkaAuxMagicId}}

\begin{proof}[Proof of \lref{lem:hkaTLUBd1}]
	\Eref{eq:hkaTLBd1} is a simple corollary of \pref{pro:hkaka} and \eref{eq:kappaMagicLUBd1}. 
	
	Next, suppose  $n\geq [2+(\log_9d)^{-1}]k+(\log_6d)^{-1}$. 
	If $k=0$, then $|\Psi\>$ is a stabilizer state, which means  $\kappa(\caT)=1$ and $0\leq \hka(\caT)=(D+2)/(D+d)\leq \kappa(\caT)$ for all $\caT\in \Sigma(d)$, so \eref{eq:hkad1} holds. 
	If $k\geq 1$, then $\kappa(\Sigma(d))\leq 11$ by \lref{lem:CubicStateSum}, which implies that
	\begin{align}
	\frac{9^kd^{2k}\kappa(\Sigma(d))}{2(D+d)}\leq \frac{6\times 9^kd^{2k}}{d^n}\leq  1.
	\end{align}
	Therefore, $\hka(\caT)\geq 0$ for all $\caT\in \Sigma(d)$ thanks to \eref{eq:hkaTLBd1}. In conjunction with  \lref{lem:kahakRelation} we can further deduce that $0\leq \hka(\caT)\leq \kappa(\caT)$ for all $\caT\in \Sigma(d)$, which implies \eref{eq:hkad1} thanks to \lref{lem:hkaLUB} in \aref{app:ThirdMomentAux}. 
\end{proof}

\begin{proof}[Proof of \lref{lem:hkaAuxAux}]
	According to \psref{pro:kappaSym} and \ref{pro:hkaka}  we have
	\begin{align}
	\kappa(\Psi,\Sigma(d))=6+\kappa(\Psi,\scrT_\ns),\quad  |\hka|(\Psi,\Sigma(d))=\frac{D+2}{D-1}\left[6-\frac{3\kappa(\Psi,\Sigma(d))}{D+d}\right] +|\hka|(\Psi,\scrT_\ns).
	\end{align}
	So the two inequalities in \eref{eq:hkaNsAbsUBaux} are equivalent, and we can focus on the first inequality henceforth. If $k=0$, then $\kappa(\Psi,\caT)=1$ and $\hka(\Psi,\caT)=(D+2)/(D+d)$ for all $\caT\in \Sigma(d)$, which  implies \eref{eq:hkaNsAbsUBaux}  given that $|\scrT_\ns|=2(d-2)$ and $\Sigma(d)=2(d+1)$.

	In the rest of this proof we assume that $n\geq k\geq 1$. Without loss of generality, we can further assume that  $|\Psi\>$ has the form $|\Psi\>=|\psi_f\>^{\otimes k}\otimes |0\>^{\otimes (n-k)}$ with $f\in \tscrP_3(\bbF_d)$. 	
	Let $x=\kappa(\Psi,\Sigma(d))/(2D+2d)$. Let $\mu_j$ and $\kappa_j$ be shorthands for  $\mu_j(d)$ and $\kappa_j(f,d)$ for $j=0,1,2$ as defined in \eqsref{eq:muj}{eq:kappaMagid1j}; let $(\kappa_0^\uparrow, \kappa_1^\uparrow, \kappa_2^\uparrow)$ be the vector obtained from  $(\kappa_0, \kappa_1, \kappa_2)$ by arranging its entries in nondecreasing order. 
	Then 
	\begin{align}
	\hka(\Psi,\scrT_\ns)
	&=\sum_{j=0}^2 \mu_j \kappa_j^k,\quad 
	\frac{D-1}{D+2}|\hka|(\Psi,\scrT_\ns)
	=\sum_{j=0}^2 \mu_j \left|\kappa_j^k -x\right|, \quad 
	d\kappa_1^\uparrow\geq 1,
	\quad d^kx\leq \frac{7(d^k+d)}{2D+2d}\leq \frac{7}{2}, 
	\end{align}
	where the first inequality follows from  \lref{lem:kappaOrder}, and the last two inequalities follow from \lref{lem:kappaSigUBmagic}.

	If in addition $n\geq k+1$, then $(d\kappa_1^\uparrow\bigr)^k\geq 1\geq 7/d \geq  d^kx$, and \eref{eq:hkaNsAbsUBaux}  can be proved as follows,
	\begin{align}
	\frac{D-1}{D+2}|\hka|(\Psi,\scrT_\ns)
	& =\sum_{j=0}^2 \mu_j \left|\kappa_j^k -x\right|\leq \sum_{j=0}^2 \mu_j \kappa_j^k-(\mu_0+\mu_1+\mu_2-2\mu_{\max})x=\kappa(\Psi,\scrT_\ns)-2(d-2-\mu_{\max})x \nonumber\\
	&\leq \kappa(\Psi,\scrT_\ns) -\frac{h_d\kappa(\Psi,\Sigma(d))}{D+d},
	\label{eq:kappaNsProof}
	\end{align}
	where the last  equality and last inequality follow from  \lref{lem:muj}.

	Next, suppose $n=k\geq 1$. If $7\leq d\leq 200$ and $1\leq k\leq 5$, then \eref{eq:hkaNsAbsUBaux} can be verified by direct calculation. If  $7\leq d\leq 200$ and $k\geq 6$, then direct calculation shows that $\bigl(d\kappa_1^\uparrow\bigr)^k\geq \bigl(d\kappa_1^\uparrow\bigr)^6 \geq 7/2\geq d^k x$. Therefore,
	\eref{eq:kappaNsProof} still holds,  and \eref{eq:hkaNsAbsUBaux} holds accordingly.

	It remains to consider the case with $n=k\geq 1$ and $d> 200$. If 	$(\kappa_1^\uparrow\bigr)^k\geq  x$,  then \eqsref{eq:kappaNsProof}{eq:hkaNsAbsUBaux} hold as before. Otherwise,
	let $\{i_0,i_1, i_2\}$ be a permutation of $\{0,1,2\}$ such that $\kappa_{i_0}\geq \kappa_{i_1}\geq \kappa_{i_2}$; let $\tka_j=\kappa_{i_j}$ and $\tmu_j =\mu_{i_j}$ for $j=0,1,2$.
	Then 
	\begin{align}
	\frac{D-1}{D+2}|\hka|(\Psi,\scrT_\ns)
	&=\sum_{j=0}^2 \mu_j \left|\kappa_j^k -x\right|=
	\tmu_0(\tka_0^k-x)+\tmu_1(x-\tka_1^k)+\tmu_2(x-\tka_2^k)
	\nonumber\\
	&=\tmu_0 \tka_0^k+\tmu_1 \tka_1^k+\tmu_2 \tka_2^k +(\tmu_1+\tmu_2-\tmu_0)x-2\tmu_1\tka_1^k-2\tmu_2 \tka_2^k
	\nonumber\\
	&=\kappa(\Psi,\scrT_\ns)+2\left(d-2-\tmu_0-\frac{\tmu_1\tka_1^k+\tmu_2 \tka_2^k}{x}\right)x,
	\end{align}
	where the last equality holds because $\tmu_0+\tmu_1+\tmu_2=\mu_0+\mu_1+\mu_2=2(d-2)$ by \lref{lem:muj}. In addition,
	\begin{gather}
	\tmu_0+\frac{\tmu_1\tka_1^k+\tmu_2 \tka_2^k}{x}\geq \tmu_0+\frac{2\tmu_{12}}{d^kx} 
	\geq \tmu_0+\frac{13}{20}\tmu_{12}\geq \frac{1}{30}\bigl(33d-2\sqrt{309d}-66\bigr)\geq d+1,
	\end{gather}	
	where $\tmu_{12}:=\min\{\tmu_1,\tmu_2\}$. 
	Here the first inequality holds because $d^k(\tka_1^k+\tka_2^k)\geq 2$ by   \lref{lem:kappaOrder}; the second inequality holds because $d^kx=d^k\kappa(\Psi,\Sigma(d))/(2D+2d)\leq 40/13$ by \lref{lem:kappaSigUBmagic} and $d>200$ by  assumption;  the third inequality follows from   \lref{lem:muj}; the last inequality is easy to verify. The above two equations together confirm \eref{eq:hkaNsAbsUBaux} and complete the proof of \lref{lem:hkaAuxAux}. 
\end{proof}

\begin{proof}[Proof of \lref{lem:hkaAuxMagicId}]
	If $k=0$, then $\kappa(\caT)=1$ and $\hka(\caT)=(D+2)/(D+d)$ for all $\caT\in \Sigma(d)$ given that $|\Psi\>$ is a stabilizer state, so \eqsref{eq:hkaAbsd1UB1}{eq:hkaAbsd1UB2} hold.

	Next, suppose $k\geq 1$.  According to \eref{eq:kappaSigNSscrT} and \lsref{lem:CubicStateSum}, \ref{lem:CubicState12Sum} we have
	\begin{align}\label{eq:hkaAuxProof1} 
	\kappa(\Sigma(d))= 6+\kappa(\scrT_\ns),\quad \kappa(\scrT_\ns)\leq \begin{cases}
	9/2 & k=1,\\
	2 & k\geq 2, 
	\end{cases}
	\end{align}
	which means $\kappa(\Sigma(d))\geq 2 \kappa(\scrT_\ns)$ for $k\geq 1$ and  $\kappa(\Sigma(d))\geq 4 \kappa(\scrT_\ns)$ for $k\geq 2$.
	In addition, \lref{lem:hkaAuxAux} means
	\begin{gather}\label{eq:hkaAuxProof2}  \frac{D-1}{D+2}|\hka|(\scrT_\ns)\leq \kappa(\scrT_\ns) -\frac{h_d\kappa(\Sigma(d))}{D+d},\quad \frac{D-1}{D+2}|\hka|(\Sigma(d))\leq \kappa(\Sigma(d)) -\frac{(h_d+3)\kappa(\Sigma(d))}{D+d},		
	\end{gather}
	where $h_d$ is defined in \eref{eq:cd-hkaAuxAux}. 
	If $d=31$ or  $d\geq 43$, then $h_d=3$, so 
	\eref{eq:hkaAbsd1UB2} holds. If $d=37$, then $h_d=2$, so \eref{eq:hkaAbsd1UB2} holds as long as $n\geq 2$. If  $d=37$ and $n=1$,  then \eref{eq:hkaAbsd1UB2} can be verified by direct calculation.

	Next, suppose $d=19$ and $k\geq 1$; then $h_d=1$ by \eref{eq:cd-hkaAuxAux}.  If in addition $k\geq 2$, then $\kappa(\Sigma(d))\geq 4 \kappa(\scrT_\ns)$, so \eref{eq:hkaAbsd1UB2} holds.  If $k=1$ and $n\geq 4$, then \eref{eq:hkaAbsd1UB2} follows from \lref{lem:hkaTLUBd1}. If $k=1$ and $n=2,3$,  then 
	\eref{eq:hkaAbsd1UB2} can be verified by direct calculation.  If  $n=k=1$,  then 
	\eref{eq:hkaAbsd1UB1} can be verified by direct calculation.

	Now suppose $d=7,13$, which means  $h_d=-1$ by \eref{eq:cd-hkaAuxAux}. If $k\geq 2$, then  \eqsref{eq:hkaAuxProof1}{eq:hkaAuxProof2} yield
	\begin{equation}
	\begin{aligned}
	|\hka|(\scrT_\ns) &\leq  \frac{D+2}{D-1}\left[\kappa(\scrT_\ns) +\frac{\kappa(\Sigma(d)}{D+d}\right]\leq \kappa(\scrT_\ns)+\frac{3}{D-1}\kappa(\scrT_\ns) +\frac{\kappa(\Sigma(d))}{D}\leq \kappa(\scrT_\ns) +\frac{16}{D}, \\
	|\hka|(\Sigma(d)) &\leq  \frac{D+2}{D-1}\left[\kappa(\Sigma(d)) -\frac{2\kappa(\Sigma(d)}{D+d}\right]\leq \kappa(\Sigma(d)) +\frac{2\kappa(\Sigma(d))}{D}\leq \kappa(\Sigma(d)) +\frac{16}{D},
	\end{aligned}
	\end{equation}
	which confirms \eref{eq:hkaAbsd1UB1}.  If $k=1$ and  $n\geq 5$, then \eref{eq:hkaAbsd1UB1} follows from \lref{lem:hkaTLUBd1}. If $k=1$ and   $n\leq 4$, then \eref{eq:hkaAbsd1UB1} can be verified by direct calculation, which completes the proof of \lref{lem:hkaAuxMagicId}. 	
\end{proof}

\subsection{\label{app:kahkaUBd1}Proofs of \lsref{lem:hkaAuxMagicd1} and \ref{lem:kahkaNsSigMagicSLUBd1}}

\begin{proof}[Proof of \lref{lem:hkaAuxMagicd1}]
	If $|\Psi\>\in \scrM_{n,k}^\id(d)$, then \eref{eq:hkaSigNsAbsd1A} follows from \lref{lem:hkaAuxMagicId}.

	Next, suppose $|\Psi\>\in \scrM_{n,k}(d)\setminus\scrM_{n,k}^\id(d)$, which means   $n\geq k\geq 2$. Then  $\kappa(\scrT_\ns)\leq 8/d$ according to \lref{lem:CubicState12Sum}.
	By virtue of \pref{pro:hkaka} we can deduce that
	\begin{align}
	\frac{D-1}{D+2}|\hka|(\scrT_\ns)&=\sum_{\caT\in \scrT_\ns}\left|\kappa(\caT) -\frac{1}{2(D+d)}\kappa(\Sigma(d))\right|\leq  \kappa(\scrT_\ns) +\frac{(d-2)[\kappa(\scrT_\ns)+6]}{D+d}\nonumber\\
	& \leq \frac{(D+\frac{d}{2}-2)\kappa(\scrT_\ns)+6d}{D+d}\leq \frac{D-1}{D+2}\left[\kappa(\scrT_\ns)+\frac{6d}{D}\right],
	\end{align}	
	which implies the second inequality in \eref{eq:hkaSigNsAbsd1A}. If $\kappa(\Sigma(d))\geq 6(D+d)/(D+2)$, then $|\hka|(\scrT_\sym)=6\hka(\Delta)\leq 6=\kappa(\scrT_\sym)$ by \lref{lem:kahakRelation}, 
	which implies the first inequality in \eref{eq:hkaSigNsAbsd1A} given the second inequality proved above. If instead $\kappa(\Sigma(d))\leq 6(D+d)/(D+2)$, then by  virtue of \lref{lem:hkaLUB} in \aref{app:ThirdMomentAux} we can deduce 
	\begin{align}
	|\hka|(\Sigma(d))\leq \frac{D+d-1}{D}\kappa(\Sigma(d))\leq
	\kappa(\Sigma(d))+ \frac{d-1}{D}\times \frac{6(D+d)}{D+2} \leq\kappa(\Sigma(d))+\frac{6d}{D},
	\end{align}
	which confirms the first inequality in \eref{eq:hkaSigNsAbsd1A}. Here the last inequality holds because  $n\geq k\geq 2$.

	If $n\geq [2+(\log_9d)^{-1}]k+(\log_6d)^{-1}$,  then  
	$0\leq \hka(\Psi,\caT)\leq \kappa(\Psi,\caT)$ for all $\caT\in \Sigma(d)$	and
	\eref{eq:hkaSigNsAbsd1B} holds  by \lref{lem:hkaTLUBd1}.  
\end{proof}

\begin{proof}[Proof of \lref{lem:kahkaNsSigMagicSLUBd1}]
	The equality in \eref{eq:kaNsSigUBd1} follows from \eref{eq:kappaSigNSscrT}, and the second inequality follows from the definition of $\gamma_{d,k}$ in \eref{eq:gammadk}. So we can focus on  the first inequality in \eref{eq:kaNsSigUBd1} in the following discussion.	
	
	If  $k=1$, then $|\Psi\>\in \scrM_{n,k}^\id(d)$ and $\kappa(\Psi,\scrT_\ns)\leq 9/2$ by \eref{eq:kappafTsumLUB1},   which confirms \eref{eq:kaNsSigUBd1}.

	Next, suppose $n\geq k\geq 2$ and
	$|\Psi\>\in \scrM_{n,k}^\id(d)$. Let $|\varphi\>$ be the tensor product of two single-qudit magic states in the tensor decomposition of $|\Psi\>$. Then 
	\begin{align}\label{eq:kappaNsd1Proof1}
	\kappa(\Psi,\scrT_\ns)&\leq \frac{4^{k-2}}{d^{k-2}}\kappa(\varphi,\scrT_\ns)\leq  \frac{4^{k-2}}{d^{k-2}} \times \frac{15}{d}=\frac{15}{16}\times\frac{4^k}{d^{k-1}},
	\end{align}
	where the two inequalities follow from \lref{lem:kappaMagicLUBd1}  and the inequality $\kappa(\varphi,\scrT_\ns)\leq 15/d$ by  \eref{eq:kappafTsumLUB1}. If in addition $k\geq 3$, then a similar reasoning yields $\kappa(\Psi,\scrT_\ns)\leq (13/16)\times 4^k/d^{k-1}$. These results confirm  \eref{eq:kaNsSigUBd1}.

	Next, suppose $k\geq 2$ and $|\Psi\>\in \scrM_{n,k}(d)\setminus \scrM_{n,k}^\id(d)$. Let $|\varphi\>$ be a tensor factor of $|\Psi\>$ that belongs to $\scrM_{2,2}(d)\setminus \scrM_{2,2}^\id(d)$.	
	Then \eref{eq:kaNsSigUBd1} can be proved as follows,
	\begin{align}\label{eq:kappaNsd1Proof2}
	\kappa(\Psi,\scrT_\ns)&\leq \frac{4^{k-2}}{d^{k-2}}\kappa(\varphi,\scrT_\ns,2)\leq  \frac{4^{k-2}}{d^{k-2}} \times \frac{8}{d}=\frac{4^k}{2d^{k-1}}\leq \gamma_{d,k},
	\end{align}
	where the first two inequalities follow from \lref{lem:kappaMagicLUBd1}  and the inequality $\kappa(\varphi,\scrT_\ns)\leq 8/d$ by
	\lref{lem:CubicState12Sum}.

	Now we are ready to  prove the following inequality,
	\begin{equation}\label{eq:hkaNsSigAbsd1Proof}
	\max\{|\hka|(\Psi,\scrT_\ns), |\hka|(\Psi,\Sigma(d))-6\}\leq \gamma_{d,k},
	\end{equation}
	which means the bounds in \eref{eq:kaNsSigUBd1} also apply to $|\hka|(\Psi,\scrT_\ns)$ and $|\hka|(\Psi,\Sigma(d))-6$. If $|\Psi\>\in \scrM_{n,k}(d)\setminus \scrM_{n,k}^\id(d)$, which means $k\geq 2$, then by virtue of  \lref{lem:hkaAuxMagicd1} and \eref{eq:kappaNsd1Proof2} we can deduce that
	\begin{align}
	\max\{|\hka|(\Psi,\scrT_\ns), |\hka|(\Psi,\Sigma(d))-6\}&\leq 	\kappa(\Psi,\scrT_\ns)+\frac{6d}{D}\leq
	\frac{4^k}{2d^{k-1}}+\frac{6}{d^{n-1}}\leq  \left(\frac{1}{2}+\frac{6d^k}{4^kd^n}\right)\frac{4^k}{d^{k-1}}\leq \gamma_{d,k},
	\end{align}	
	which confirms   \eref{eq:hkaNsSigAbsd1Proof}.
	
	Next, suppose $|\Psi\>\in  \scrM_{n,k}^\id(d)$. 	If   $d\geq 31$ or  if $d\geq 19$ and $n\geq 2$, then \eref{eq:hkaNsSigAbsd1Proof} follows from \lref{lem:hkaAuxMagicId} and \eref{eq:kaNsSigUBd1}. If $d=19$ and $n=1$, then \eref{eq:hkaNsSigAbsd1Proof} can be verified by direct calculation. 
	If $d=7,13$ and  $n\geq [2+(\log_9d)^{-1}]k+(\log_6d)^{-1}$, then \eref{eq:hkaNsSigAbsd1Proof} follows from  \lref{lem:hkaTLUBd1} and \eref{eq:kaNsSigUBd1}.
	It remains to consider the case with  $d=7,13$  and  $n< [2+(\log_9d)^{-1}]k+(\log_6d)^{-1}$. 	If  in addition $k\leq 3$, then \eref{eq:hkaNsSigAbsd1Proof} can be verified by direct calculation. 
	If instead $k\geq 4$, then $\kappa(\Psi,\scrT_\ns)\leq (3/4)\times 4^k/d^{k-1}$ by  direct calculation and similar reasoning that leads to \eref{eq:kappaNsd1Proof1}. In conjunction with \lref{lem:hkaAuxMagicId} we can deduce that
	\begin{align}
	\max\{|\hka|(\Psi,\scrT_\ns), |\hka|(\Psi,\Sigma(d))-6\}\leq \kappa(\Psi,\scrT_\ns)+\frac{16}{D}\leq 
	\frac{3}{4}\times\frac{4^k}{d^{k-1}}
	+\frac{16}{D}\leq \frac{13}{16}\times\frac{4^k}{d^{k-1}}\leq \gamma_{d,k},
	\end{align}
	which confirms \eref{eq:hkaNsSigAbsd1Proof}.
	
	Given the assumption  $|\Psi\>\in \scrM^\id_{n,k}(d)$, 
	\eref{eq:kappaNsSigLB} follows from \lref{lem:kappaNsSigLBd1}, which completes the proof of \lref{lem:kahkaNsSigMagicSLUBd1}. 
\end{proof}

\subsection{\label{app:MagicKeyd1Proofs}Proofs of \thsref{thm:Phi3MagicUBd1}-\ref{thm:StabShNormMagicd1}}

\begin{proof}[Proof of \thref{thm:Phi3MagicUBd1}]
	Let $|\Psi_2\>=|\Psi\>^{\otimes 2}$; then $|\Psi_2\>\in \scrM_{2n,2k}(d)$ and 
	\begin{align}
	\kappa(\Psi,\Sigma(d),2)=6+\kappa(\Psi,\scrT_\ns,2)
	=6+\kappa(\Psi_2,\scrT_\ns)=\kappa(\Psi_2,\Sigma(d)).
	\end{align}	
	Since  $\kappa(\Psi,\caT)\geq 0$ for all $\caT\in \Sigma(d)$ according to  \lref{lem:kappaMagicLUBd1}, by virtue of  \thref{thm:Phi3LUB} and \coref{cor:Phi3UB} we can deduce that	
	\begin{align}
	\bar{\Phi}_3(\orb(\Psi))
	&=\frac{D+2}{6(D-1)}\left[\kappa(\Psi,\Sigma(d),2) -\frac{\kappa^2(\Psi,\Sigma(d))}{2(D+d)}\right]\leq \frac{1}{6}\left[1+\frac{3(d-2)}{D^2}\right][6+\kappa(\Psi_2,\scrT_\ns)].  \label{eq:Phi3d1UBproof}
	\end{align}
	
	Now, suppose
	$|\Psi\>\in \scrM_{n,k}^\id(d)$; then 	$|\Psi_2\>\in \scrM_{2n,2k}^\id(d)$. 
	If in addition $n=k=1$	and $d=7,13$, then \eref{eq:Phi3MagicUBd1} can be verified by direct calculation. If instead $d\geq 19$ or $n\geq k\geq 2$, then by virtue of  \eref{eq:Phi3d1UBproof} and \lref{lem:kahkaNsSigMagicSLUBd1}   we can deduce that
	\begin{align}
	\bar{\Phi}_3(\orb(\Psi))\leq \frac{\kappa(\Psi,\Sigma(d),2)}{6}=1+ \frac{\kappa(\Psi_2,\scrT_\ns)}{6}\leq 1+ \frac{\gamma_{d,2k}}{6}\leq 1+\frac{5}{32}\times\frac{16^k}{d^{2k-1}}\leq \frac{13}{11},
	\end{align}	
	which confirms \eref{eq:Phi3MagicUBd1}. 
	
	Next, suppose $|\Psi\>\in \scrM_{n,k}(d)\setminus\scrM_{n,k}^\id(d)$, which means $n\geq k\geq 2$ and $|\Psi_2\>\in \scrM_{2n,2k}(d)\setminus\scrM_{2n,2k}^\id(d)$. Then $\kappa(\Psi_2,\scrT_\ns)=\kappa(\Psi,\scrT_\ns,2)\leq 16^k/(2d^{2k-1})$ by \lref{lem:CubicState12Sum}. 	
	In conjunction with \eqsref{eq:Phi3d1UBproof}{eq:gammadk} we can deduce that
	\begin{equation}
	\begin{aligned}
	\bar{\Phi}_3(\orb(\Psi))&\leq  1+\frac{\kappa(\Psi_2,\scrT_\ns)}{6}+\frac{3(d-2)}{D^2}\left(1+\frac{16^k}{12d^{2k-1}}\right)\leq 1+\frac{\kappa(\Psi_2,\scrT_\ns)}{6}+\frac{3d}{D^2}	, \\
	\bar{\Phi}_3(\orb(\Psi))&\leq \left[1+\frac{3(d-2)}{D^2}\right]\left(1+\frac{16^k}{12d^{2k-1}}\right)
	\leq  1+\frac{16^k}{10d^{2k-1}}\leq 1+ \frac{\gamma_{d,2k}}{6}\leq 1+\frac{5}{32}\times\frac{16^k}{d^{2k-1}}\leq \frac{13}{11},
	\end{aligned}
	\end{equation}
	given that $D^2=d^{2n}\geq d^{2k}$. This result  confirms \eref{eq:Phi3MagicUBd1} and completes the proof of \thref{thm:Phi3MagicUBd1}. 		
\end{proof}

\begin{proof}[Proof of \thref{thm:Moment3MagicLUBd1}]
	The first inequality in \eref{eq:Moment3MagicLUBd1} is obvious given that $D=d^n$. The second inequality  follows from \thref{thm:MomentNormd13} and \lref{lem:kappaMagicLUBd1}. The last inequality  can be proved as follows,
	\begin{align}
	\|\bQ(\orb(\Psi))\| \leq \frac{\max\{|\hka|(\scrT_\iso), 2(D+2)\kappa(\caT_\defe)\}}{6}\leq \max\biggl\{\frac{7}{4}, \frac{D+2}{3}\frac{(4d-3)^k}{d^{2k}}\biggr\}\leq \max\biggl\{\frac{7}{4},\frac{4^kd^n}{3d^k}\biggr\}.  \label{eq:MomentNormd1Proof}
	\end{align}				
	Here the first inequality follows from \thref{thm:MomentNormd13}, 	 the second inequality follows from  \lref{lem:kappaMagicLUBd1} and the fact that $|\hka|(\scrT_\iso)\leq |\hka|(\Sigma(d))\leq 21/2$ by \lref{lem:kahkaNsSigMagicSLUBd1}, and the last inequality can be proved as follows,
	\begin{equation}
	\begin{aligned}
	\frac{D+2}{3}\left(\frac{4d-3}{d^2}\right)^k&\leq \frac{d+2}{3}\times \frac{4}{d}\leq \frac{12}{7}\leq \frac{7}{4},\quad n=k=1,\\
	\frac{D+2}{3}\left(\frac{4d-3}{d^2}\right)^k&\leq
	\frac{d^n(d^2+2)}{3d^2}\times\frac{4d-3}{d^2}\times \frac{4^{k-1}}{d^{k-1}}\leq \frac{4^kd^{n-k}}{3}, \quad n\geq 2.
	\end{aligned}
	\end{equation}
	
	Next, suppose $\hka(\caT_\defe)\leq 0$ for some stochastic Lagrangian subspace $\caT_\defe$  in $\scrT_\defe$. Note that $\hka(\caT_\defe)\leq 0$ iff  $\kappa(\caT_\defe)	\leq \kappa(\Sigma(d))/[2(D+d)]$ by \pref{pro:hkaka}. 
	The lower bound in \eref{eq:Moment3MagicDefNegLUBd1} follows from \eref{eq:Moment3MagicLUBd1}. 
	By virtue of \eref{eq:MomentNormd13c} in \thref{thm:MomentNormd13} and \lref{lem:kahkaNsSigMagicSLUBd1} we can deduce that
	\begin{align}
	\|\bQ(\orb(\Psi))\|\leq \frac{1}{6}|\hka|(\scrT_\iso)\leq \frac{1}{6}|\hka|(\Sigma(d))
	\leq  1+\frac{\gamma_{d,k}}{6}\leq 1+\frac{3\times 4^{k-2}}{d^{k-1}},
	\end{align}
	which confirms the upper bounds in \eref{eq:Moment3MagicDefNegLUBd1} and completes the proof of \thref{thm:Moment3MagicLUBd1}.
\end{proof}

\begin{proof}[Proof of \thref{thm:StabShNormMagicd1}]
	If $n\geq 2$, then  $\kappa(\scrT_\ns)\geq0\geq -2(D^2-2dD+D+d)/D^2$ thanks to \lref{lem:kappaMagicLUBd1}. If $n=1$, then  $|\Psi\>\in \scrM_{n,k}^\id(d)$ and $\kappa(\scrT_\ns)\geq 2(d-2)/D=-2(D^2-2dD+D+d)/D^2$ thanks to \lref{lem:kahkaNsSigMagicSLUBd1}. In both cases, 
	the first equality in \eref{eq:StabShNormMagicd1} holds by \coref{cor:Stab1ShNormGen}. The second equality in \eref{eq:StabShNormMagicd1} holds because $\hka(\Sigma(d))=(D+2)\kappa(\Sigma(d))/(D+d)$ by \pref{pro:hkaka} and  $\kappa(\Sigma(d))=6+\kappa(\scrT_\ns)$ by \eref{eq:kappaSigNSscrT}.

	The first two inequalities in \eref{eq:StabShNormMagicd1UB} follow from \eref{eq:StabShNormMagicd1} and the fact that $\|\Ob_0\|_2^2=(D-1)/D$, and the last  two inequalities  follow from \lref{lem:kahkaNsSigMagicSLUBd1}.
\end{proof}

\subsection{\label{app:thm:AccurateDesignMagicProof}Proof of \thref{thm:AccurateDesignMagic}}

\begin{proof}[Proof of \thref{thm:AccurateDesignMagic}]
	By assumption $\scrE$ is constructed from $\scrM_{n,k}(d)$ with $k\geq 1$ and satisfies the conditions in   \eref{eq:DesignFiducialCon}. So  $\kappa_{\min}(\scrE)\geq 0$ 
	by \lref{lem:kappaMagicLUBd1}, which means 
	\begin{align}\label{eq:AccurateDesignMagicProof}
	\kappa(\scrE,\Sigma(d))\geq 6, \quad 
	-\frac{(D+2)\kappa(\scrE,\Sigma(d))}{2(D-1)(D+d)}\leq \hka(\scrE, \caT_\defe)\leq 0, \quad \kappa'(\scrE,\scrT_\ns)\leq \kappa(\scrE,\scrT_\ns),
	\end{align}
	where $\kappa'(\scrE,\scrT_\ns)$ is defined in \eref{eq:kappaNsP}. 
	In conjunction with the condition in \eref{eq:designProb} we can deduce that 
	\begin{align}
	p&=\frac{-\hka(\scrE, \caT_\defe)}{\frac{D+2}{D+d}-\hka(\scrE, \caT_\defe)}
	\leq \frac{\frac{(D+2)\kappa(\scrE,\Sigma(d))}{2(D-1)(D+d)}}{\frac{D+2}{D+d}+\frac{(D+2)\kappa(\scrE,\Sigma(d))}{2(D-1)(D+d)}}\leq\frac{\frac{\kappa(\scrE,\Sigma(d))}{2(D-1)(D+d)}}{\frac{1}{D+d}+\frac{3}{(D-1)(D+d)}}=\frac{\kappa(\scrE,\Sigma(d))}{2(D+2)}\leq \frac{21}{4D},\label{eq:DesignProbUB}
	\end{align}
	where the last inequality  follows from the inequality $\kappa(\scrE,\Sigma(d)) \leq 21/2$ by \lref{lem:kahkaNsSigMagicSLUBd1},  which  is still applicable if  the  state $|\Psi\>\in \scrM_{n,k}(d)$ is replaced by the ensemble $\scrE$ constructed from $\scrM_{n,k}(d)$.

	\Eqsref{eq:QscrEpMagic1}{eq:QscrEpMagic2}  follow from  \pref{pro:AccurateDesign}, \eref{eq:AccurateDesignMagicProof}, and 	\lref{lem:kahkaNsSigMagicSLUBd1}.
\end{proof}

\subsection{\label{app:BalanceMagicProof}Proofs of \lsref{lem:kahkaBalMagicd1} and \ref{lem:kahkaQBalMagicd1}}

\begin{proof}[Proof of \lref{lem:kahkaBalMagicd1}]
	Suppose $|\Psi\>\in \scrM_{n,k}(d)$; then $\kappa(\Psi)$ is completely determined by $\scrC_l(\Psi)$ for $l=0,1,2$. To prove \lref{lem:kahkaBalMagicd1}, therefore, it suffices to consider any given $k$-balanced ensemble, say the ensemble defined in \eref{eq:kbalanceEx}. If $k=0\mmod 3$, then 
	\begin{align}
	\kappa(\scrE,\caT)=\kappa(\tPsi,\caT)=[\kappa(f_0, \caT)\kappa(f_1,\caT)\kappa(f_2,\caT)]^{\lfloor k/3\rfloor}=\frac{\tg(d)^{2\lfloor k/3\rfloor}}{d^{k+\lfloor k/3\rfloor}},
	\end{align}
	where the last equality follows from \lref{lem:kappaOrder} given that $\{\kappa(f_j,\caT)\}_{j=0,1,2}$ is identical to $\{\kappa_j^\uparrow\}_{j=0,1,2}$
	up to a permutation. If $k=1\mmod 3$, then 
	\begin{align}
	\kappa(\scrE,\caT)=\frac{\kappa(\Psi_0,\caT)+\kappa(\Psi_1,\caT)+\kappa(\Psi_2,\caT)}{3}=\frac{\kappa(\tPsi,\caT)[\kappa(f_0, \caT)+\kappa(f_1,\caT)+\kappa(f_2,\caT)]}{3}=\frac{2\tg(d)^{2\lfloor k/3\rfloor}}{d^{k+\lfloor k/3\rfloor}},
	\end{align}
	given that $\kappa(f_0, \caT)+\kappa(f_1,\caT)+\kappa(f_2,\caT)=6/d$ by  \lref{lem:kappaOrder}.  If $k=2\mmod 3$, then 
	\begin{align}
	\kappa(\scrE,\caT)&
	=\frac{\kappa(\tPsi,\caT)[\kappa(f_0, \caT)\kappa(f_1, \caT)+\kappa(f_1,\caT)\kappa(f_2, \caT)+\kappa(f_2,\caT)\kappa(f_0, \caT)]}{3}
	=\frac{3\tg(d)^{2\lfloor k/3\rfloor}}{d^{k+\lfloor k/3\rfloor}}, 
	\end{align}
	given that $\kappa(f_0, \caT)\kappa(f_1, \caT)+\kappa(f_1,\caT)\kappa(f_2, \caT)+\kappa(f_2,\caT)\kappa(f_0, \caT)=9/d^2$ by  \lref{lem:kappaOrder}. The above three equations confirm the equality in \eref{eq:kaBalance} in all three cases. The first two inequalities in  \eref{eq:kaBalance} follow from the fact that $1\leq \tg(d)^2<4d-27$ by \eref{eq:tgdsq} (see also \lref{lem:GaussSum3LUB} in \aref{app:GaussJacobi}), and the third inequality is easy to verify.

	The equality in \eref{eq:hkaBalMagic} follows from \lref{lem:kahkaBalance}. Since $0\leq \kappa(\scrE,\caT)\leq 1$, this equality means
	\begin{align}\label{eq:hkaBalUBd1Proof1}
	\frac{D+d}{D+2}|\hka(\scrE,\caT)|\leq \begin{cases}
	\kappa(\scrE,\caT) &\kappa(\scrE,\caT) \geq 3/(D+2),\\
	\frac{3}{D-1} & \kappa(\scrE,\caT)\leq  3/(D+2). 
	\end{cases} 
	\end{align}
	Note that $\kappa(\scrE,\caT)=2/d$ when $k=1$ and $\kappa(\scrE,\caT)=3/d^2$ when $k=2$.  If $k=1,2$ or $\kappa(\scrE,\caT)\geq  3/(D+2)$, then 
	\begin{align}\label{eq:hkaBalUBd1Proof2}
	|\hka|(\scrE,\caT)\leq \frac{(D+2)\kappa(\scrE,\caT)}{(D+d)}\leq \frac{4^{(2k+1)/6}(D+2)}{d^k(D+d)}, 
	\end{align}	
	which confirms the inequality in  \eref{eq:hkaBalMagic}. 	
	If $k\geq 3$ and $\kappa(\scrE,\caT)\leq  3/(D+2)$, then 
	\begin{align}
	\frac{D+d}{D+2}|\hka(\scrE,\caT)|\leq \frac{3}{D-1}\leq \frac{4^{(2k+1)/6}}{d^k},
	\end{align}
	which confirms the inequality in  \eref{eq:hkaBalMagic} again.	\Eref{eq:kahkaNsAbsBalMagic} is a simple corollary of Eqs.~\eqref{eq:kappatgammadk}-\eqref{eq:hkaBalMagic}
	given that $|\scrT_\ns|=2(d-2)$.

	If $k=1,2$, then 	\eref{eq:hkaNsAbsBalMagicUB} holds by \eref{eq:hkaBalUBd1Proof2}. 
	If $k<3n/4$, then \eref{eq:hkaNsAbsBalMagicUB} follows from \eref{eq:hkaBalUBd1Proof1} and the equation below
	\begin{gather}
	\kappa(\scrE,\caT)=\kappa_{d,k}=\frac{(s_k+1)\tg(d)^{2\lfloor k/3\rfloor}}{d^{k+\lfloor k/3\rfloor}}\geq \frac{1}{d^{n-1}}=\frac{d}{D}>\frac{3}{D+2}, 
	\end{gather}
	given that  $0\leq s_k\leq 2$ and $1\leq \tg(d)^2<4d$ by \lref{lem:kappaOrder}. This observation completes the proof of \lref{lem:kahkaBalMagicd1}.
\end{proof}

\begin{proof}[Proof of \lref{lem:kahkaQBalMagicd1}]
	\Eref{eq:kahkaNsAbsQBalMagicA} is a simple corollary of \lsref{lem:hkaAuxMagicd1}	and \ref{lem:kahkaBalMagicd1}.

	Next, we consider \eref{eq:kahkaNsAbsQBalMagicB}, assuming that  $n\geq(4k+8)/3$. If $k=0\mmod 3$, then $|\Psi\>$ is $k$-balanced, so \eref{eq:kahkaNsAbsQBalMagicB} holds by \lref{lem:kahkaBalMagicd1}. 
	If $k=1\mmod 3$, then $|\Psi\>$ can be expressed as a tensor product $|\Psi\>=|\psi_f\>\otimes |\tPsi\>$, where $f\in \tscrP_3(\bbF_d)$, $|\psi_f\>$   is a single-qudit magic state, and $|\tPsi\>\in \scrM_{n-1,k-1}(d)$ is $(k-1)$-balanced. Therefore,
	\begin{align}
	\kappa(\Psi,\caT)=\kappa(f,\caT)\kappa(\tPsi,\caT)\geq \frac{1}{9d^2}\times \frac{\tg(d)^{2(k-1)/3}}{d^{4(k-1)/3}}\geq \frac{1}{9d^{(4k+2)/3}}\geq \frac{21}{4d^n}\geq \frac{\kappa(\Psi,\Sigma(d))}{2D+2d}\quad \forall \caT\in \Sigma(d),
	\end{align} 
	where the first two inequalities follow from \lsref{lem:kappaMagicOned1} and \ref{lem:kahkaBalMagicd1} given that  $\Sigma(d)=\scrT_\sym\sqcup \scrT_\ns$, the third inequality follows from the assumption that $d\geq 7$ and $n\geq (4k+8)/3$, and the last inequality follows from the fact that $D=d^n$ and $\kappa(\Psi,\Sigma(d))\leq 21/2$ by \lref{lem:kahkaNsSigMagicSLUBd1}. In conjunction with \pref{pro:hkaka} the above equation means $\hka(\Psi,\caT)\geq 0$ for all $\caT\in \scrT_\ns$, which implies the equality in  \eref{eq:kahkaNsAbsQBalMagicB}.  The inequalities in \eref{eq:kahkaNsAbsQBalMagicB}  follow from \lsref{lem:hkaLUB}  and \ref{lem:kahkaBalMagicd1}. 
	
	If $k=2\mmod 3$, then $|\Psi\>$ can be expressed as a tensor product of the form $|\Psi\>=|\psi_{f_1}\>\otimes |\psi_{f_2}\>\otimes |\tPsi\>$, where $|\tPsi\>\in \scrM_{n-2,k-2}(d)$ is $(k-2)$-balanced, and $|\psi_{f_1}\>, |\psi_{f_2}\>$ are two single-qudit magic states associated with  cubic polynomials $f_1, f_2$ that have different cubic characters, which means $\kappa(f_1,\caT)\kappa(f_2,\caT)\geq 1/(4d^3)$ for $\caT\in \Sigma(d)$ thanks to \pref{pro:kappaSym} and \lref{lem:kappaOrder}. By virtue of a similar reasoning employed above we can deduce that
	\begin{align}
	\kappa(\Psi,\caT)=\kappa(f_1,\caT)\kappa(f_2,\caT)\kappa(\tPsi,\caT)\geq \frac{1}{4d^3}\times \frac{\tg(d)^{2(k-2)/3}}{d^{4(k-2)/3}}\geq \frac{1}{4d^{(4k+1)/3}}\geq  \frac{\kappa(\Psi,\Sigma(d))}{2D+2d}\quad \forall \caT\in \Sigma(d).
	\end{align} 
	So \eref{eq:kahkaNsAbsQBalMagicB} holds as before. 	
\end{proof}

\subsection{\label{app:BalMagicExactd1Proof}Proof of \lref{lem:BalMagicExactd1}}

\begin{proof}[Proof of \lref{lem:BalMagicExactd1}]
	According to the definition in  \eref{eq:kappatgammadk}, if $k<3n/4$, then 
	\begin{gather}
	\kappa_{d,k}=\frac{(s_k+1)\tg(d)^{2\lfloor k/3\rfloor}}{d^{k+\lfloor k/3\rfloor}}\geq \frac{1}{d^{n-1}}=\frac{d}{D}>\frac{3}{D+2}, 
	\end{gather}
	given that  $0\leq s_k\leq 2$ and $1\leq \tg(d)^2<4d$ by \eref{eq:tgdsq} (see also  \lref{lem:GaussSum3LUB} in \aref{app:GaussJacobi}). Therefore, $k_*(d,n)\geq 3n/4$ according to  the definition in \eref{eq:kstardn}, which confirms the first inequality in \eref{eq:kstardnLimLB}. The equality in \eref{eq:kstardnLimLB} follows from \eqsref{eq:kappatgammadk}{eq:kstardn}. The second and third inequalities  in \eref{eq:kstardnLimLB}  follow from the fact that  $1\leq \tg(d)^2<4d$; the second inequality also follows from the first inequality.

	If  $\tg(d)^2\geq 3d$ and $n= 2$, then $k_*(d,n)= n+1$ by \eref{eq:kstardn123}.  If  $\tg(d)^2\geq 3d$, $n\geq 3$, and $k\leq n$, then  
	\begin{gather}
	\kappa_{d,k}=\frac{(s_k+1)\tg(d)^{2\lfloor k/3\rfloor}}{d^{k+\lfloor k/3\rfloor}}\geq \frac{3(s_k+1)}{d^k}\geq \frac{3}{D}>\frac{3}{D+2}, 
	\end{gather}
	which implies that $k_*(d,n)\geq  n+1$.

	To proceed, note that $d=1 \mmod 3$ by assumption and $\tg(d)=1\mmod 3$ by \lref{lem:GaussSum3LUB}, which means 	
	\begin{align}\label{eq:tgdsqMod}
	d-\tg(d)^2=0 \mmod 3,\quad  |d-\tg(d)^2|\geq 3, \quad 3d-\tg(d)^2=2 \mmod 3.
	\end{align}	
	Now, suppose $\tg(d)^2\leq 3d$;  then $\tg(d)^2\leq 3d-2\leq 3d^4/(d^3+2)$, which means  $k_*(d,3)= 3$ by \eref{eq:kstardn123}.

	Next, suppose $\tg(d)^2\geq d$;  then $\tg(d)^2\geq d+3$ by \eref{eq:tgdsqMod} and 
	\begin{gather}
	\kappa_{d,k}=\frac{(s_k+1)\tg(d)^{2\lfloor k/3\rfloor}}{d^{k+\lfloor k/3\rfloor}}\geq \frac{(s_k+1)(d+3)^{\lfloor k/3\rfloor}}{d^{k+\lfloor k/3\rfloor}}. 
	\end{gather}
	If  $k\leq n-1$, then $d^k\leq D/d$ and 
	\begin{gather}
	\kappa_{d,k}\geq \frac{d(s_k+1)}{D}\geq \frac{7}{D}> \frac{3}{D+2},
	\end{gather}
	which means $k_*(d,n)\geq n$. 
	If $n=2\mmod 3$, then $\kappa_{d,n}\geq 3/D>3/(D+2)$,
	which means $k_*(d,n)\geq n+1$. 
	If $n\geq d\ln 3+5$, then
	\begin{gather}
	D\kappa_{d,n}\geq \frac{(d+3)^{\lfloor n/3\rfloor}}{d^{\lfloor n/3\rfloor}} \geq
	\left(1+\frac{3}{d}\right)^{(d\ln 3+2)/3}\geq 3, \label{eq:kdnLargen}
	\end{gather} 
	which means $\kappa_{d,n}>3/(D+2)$ and $k_*(d,n)\geq n+1$. The last inequality in \eref{eq:kdnLargen} holds because the function $[1+(3/d)]^{(d\ln 3+2)/3}$ is monotonically decreasing in $d$ and
	\begin{align}
	\lim_{d\to \infty}\left(1+\frac{3}{d}\right)^{(d\ln 3+2)/3}=\rme^{\ln 3}=3.
	\end{align}

	Finally, suppose $\tg(d)^2\leq d$; then $\tg(d)^2\leq d-3$ given that $|\tg(d)^2-d|\geq 3$ by \eref{eq:tgdsqMod}. If  $n= 2$, then $k_*(d,n)= n+1$ by \eref{eq:kstardn123}. If $n\neq 2$,
	then 
	\begin{align}
	\kappa_{d,n}=\frac{(s_n+1)\tg(d)^{2\lfloor n/3\rfloor}}{d^{n+\lfloor n/3\rfloor}}\leq
	\frac{(s_n+1)(d-3))^{\lfloor n/3\rfloor}}{d^{\lfloor n/3\rfloor}D}< \frac{3}{D+2},
	\end{align}
	which means $k_*(d,n)\leq n$. If in addition $\tg(d)^2=1$, then $\kappa_{d,k}=(s_k+1)/d^{k+\lfloor k/3\rfloor}$ and \eref{eq:kstartgd1} can be verified by direct calculation based on the definition in \eref{eq:kstardn}, which  completes the proof of \lref{lem:BalMagicExactd1}. 
\end{proof}

\bibliography{all_references}

\end{document}